\providecommand{\remove}[1]{}
\newcommand{\Draft}[1]{\ifdefined\IsDraft\texttt{ #1} \fi}
    \newcommand{\authnote}[2]{{\bf [{\color{red} #1's Note:} {\color{blue} #2}]}}
    \newcommand{\authnote}[2]{}
\newcommand{\mparagraph}[1]{\paragraph{#1.}}
\newcommand{\sdotfill}{\textcolor[rgb]{0.8,0.8,0.8}{\dotfill}} 
\newenvironment{protocol}{\begin{proto}}{\end{proto}}
\newenvironment{algorithm}{\begin{algo}}{\vspace{-\topsep}\end{algo}}
\newenvironment{experiment}{\begin{expr}}{\vspace{-\topsep}\sdotfill\end{expr}}
\newcommand{\aka} {also known as\ }
\newcommand{\resp}{resp.,\ }
\newcommand{\ie}  {i.e.,\ }
\newcommand{\eg}  {e.g.,\ }
\newcommand{\whp}  {with high probability\ }
\newcommand{\wrt} {with respect to\ }
\newcommand{\wlg} {without loss of generality\ }
\newcommand{\cf}{{cf.,\ }}
\newcommand{\abs}[1]{\left\lvert #1 \right\rvert}
\newcommand{\ceil}[1]{\left\lceil #1 \right\rceil}
\newcommand{\seq}[1]{\langle #1 \rangle}
\newcommand{\set}[1]{\ens{#1}}
\newcommand{\floor}[1]{\left \lfloor#1 \right \rfloor}
\newcommand{\eqdef}{:=}
\newcommand{\R}{{\mathbb R}}
\newcommand{\N}{{\mathbb{N}}}
\newcommand{\Z}{{\mathbb Z}}
\newcommand{\zo}{\set{0,1}}
\newcommand{\oo}{\set{-1,1}}
\newcommand{\xor}{\oplus}
\newcommand{\eps}{\varepsilon}
\newcommand{\la}{\gets}
\newcommand{\poly}{\operatorname{poly}}
\newcommand{\polylog}{\operatorname{polylog}}
\newcommand{\loglog}{\operatorname{loglog}}
\newcommand{\Exp}{\operatorname*{E}}
\newcommand{\Var}{\operatorname{Var}}
\newcommand{\negl}{{\operatorname{neg}}}
\newcommand{\Supp}{\operatorname{Supp}}
\newcommand{\MathAlg}[1]{\mathsf{#1}}
\newcommand{\sign}{\MathAlg{sign}}
\renewcommand{\cref}{\Cref}
\newtheorem{theorem}{Theorem}[section]
\newaliascnt{lemma}{theorem}
\newtheorem{lemma}[lemma]{Lemma}
\crefname{lemma}{Lemma}{Lemmas}
\newaliascnt{claim}{theorem}
\newtheorem{claim}[claim]{Claim}
\crefname{claim}{Claim}{Claims}
\newaliascnt{corollary}{theorem}
\crefname{corollary}{Corollary}{Corollaries}
\newaliascnt{construction}{theorem}
\crefname{construction}{Construction}{Constructions}
\newaliascnt{fact}{theorem}
\newtheorem{fact}[fact]{Fact}
\crefname{fact}{Fact}{Facts}
\newaliascnt{proposition}{theorem}
\newtheorem{proposition}[proposition]{Proposition}
\crefname{proposition}{Proposition}{Propositions}
\newaliascnt{conjecture}{theorem}
\crefname{conjecture}{Conjecture}{Conjectures}
\newaliascnt{definition}{theorem}
\newtheorem{definition}[definition]{Definition}
\crefname{definition}{Definition}{Definitions}
\newaliascnt{remark}{theorem}
\newtheorem{remark}[remark]{Remark}
\crefname{remark}{Remark}{Remarks}
\newaliascnt{notation}{theorem}
\crefname{notation}{Notation}{Notation}
\newaliascnt{proto}{theorem}
\newtheorem{proto}[proto]{Protocol}
\crefname{proto}{protocol}{protocols}
\newaliascnt{algo}{theorem}
\newtheorem{algo}[algo]{Algorithm}
\crefname{algo}{algorithm}{algorithms}
\newaliascnt{expr}{theorem}
\newtheorem{expr}[expr]{Experiment}
\crefname{experiment}{experiment}{experiments}
\newcommand{\Stepref}[1]{Step~\ref{#1}}
\def\FullBox{$\Box$}
\def\qed{\ifmmode\qquad\FullBox\else{\unskip\nobreak\hfil
\penalty50\hskip1em\null\nobreak\hfil\FullBox
\parfillskip=0pt\finalhyphendemerits=0\endgraf}\fi}
\def\qedsketch{\ifmmode\Box\else{\unskip\nobreak\hfil
\penalty50\hskip1em\null\nobreak\hfil$\Box$
\parfillskip=0pt\finalhyphendemerits=0\endgraf}\fi}
\newcommand{\ex}[2]{\Exp_{#1}\left[#2\right]}
\newcommand{\eex}[1]{\ex{}{#1}}
\renewcommand{\Pr}{{\mathrm {Pr}}}
\newcommand{\ppr}[2]{\Pr_{#1}\left[#2\right]}
\newcommand{\pr}[1]{\ppr{}{#1}}
\newcommand{\Sa}{\mathsf{S}}
\newcommand{\Sc}{\Sa}
\newcommand{\Ac}{\mathsf{A}}
\newcommand{\Bc}{\mathsf{B}}
\newcommand{\Cc}{\mathsf{C}}
\newcommand{\Pc}{{\mathsf{P}}}
\newcommand{\Af}{{\mathbb{A}}}
\newcommand{\cC}{{\mathcal{C}}}
\newcommand{\ens}[1]{\{#1\}}
\newcommand{\size}[1]{\left|#1\right|}
\newcommand{\out}{{\operatorname{out}}}
\newcommand{\Uni}{{\mathord{\mathcal{U}}}}
\newcommand{\prob}[1]{\mathsf{\textsc{#1}}}
\newcommand{\SD}{\prob{SD}}
\newcommand{\cW}{{\cal{W}}}
\newcommand{\pptm}{{\sc pptm}\xspace}
\newcommand{\ppt}{{\sc ppt}\xspace}
\newcommand{\cH}{{\cal{H}}}
\newcommand{\cB}{\mathcal{B}}
\newcommand{\cS}{\mathcal{S}}
\newcommand{\cs}{{\cal{S}}}
\newcommand{\cX}{{\cal{X}}}
\newcommand{\cZ}{{\cal{Z}}}
\newcommand{\cI}{{\cal{I}}}
\newcommand{\Tableofcontents}{
\thispagestyle{empty}
\pagenumbering{gobble}
\clearpage
\tableofcontents
\thispagestyle{empty}
\clearpage
\pagenumbering{arabic}
}
\newcommand{\vect}[1]{{ \bf #1}}
\newcommand{\OT}{\ensuremath{\operatorname{OT}}\xspace}
\newcommand{\CoinToss}{\ensuremath{\MathAlg{CoinFlip}}\xspace}
\newcommand{\Aadv}{{\Ac}}
\newcommand{\iAadv}{{\Af}}
\newcommand{\Abort}{\ensuremath{\MathAlg{Abort}}\xspace}
\newcommand{\Ideal}{\operatorname{IDEAL}}
\newcommand{\Real}{\operatorname{REAL}}
\newcommand{\Inote}[1]{\authnote{Iftach}{#1}}
\newcommand{\Enote}[1]{\authnote{Eliad}{#1}}
\newcommand{\Nnote}[1]{\authnote{Nissan}{#1}}
\newcommand{\fvec}[1]{f^{\mathsf{vec}}_{#1}}
\newcommand{\fhyp}[1]{f^{\mathsf{hyp}}_{#1}}
\newcommand{\BerooSign}{{\mathcal{C}}}
\newcommand{\Beroo}[1]{{\BerooSign_{#1}}}
\newcommand{\vBeroo}[1]{{\widehat{\BerooSign}_{#1}}}
\newcommand{\sBias}[2]{\widehat{\BerooSign}^{-1}_{#1}(#2)}
\newcommand{\BerzoSign}{{\mathcal{B}er}}
\newcommand{\Berzo}[1]{\BerzoSign(#1)}
\newcommand{\HT}{\ensuremath{\mathsf{HT}\xspace}}
\newcommand{\st}{\mbox{\rm s.t. }}
\newcommand{\minus}{\scalebox{0.5}[1.0]{$-$}}
\newcommand{\rpm}{\raisebox{.2ex}{$\scriptstyle\pm$}}
\newcommand{\ra}{\rightarrow}
\newcommand{\sss}{{\mathsf{sum}}}
\newcommand{\ells}{{\mathsf{\ell}}}
\newcommand{\xs}[2]{{\sss_{#1}(#2)}}
\newcommand{\xl}[2]{{\ells_{#1}(#2)}}
\newcommand{\ms}[1]{\xs{\rnd}{#1}}
\newcommand{\ml}[1]{\xl{\rnd}{#1}}
\newcommand{\NS}[1]{{\xs{n}{#1}}}
\newcommand{\NL}[1]{\xl{n}{#1}}
\newcommand{\elli}[1]{\rnd +1 -i}
\newcommand{\HypSign}{\mathcal{HG}}
\newcommand{\Hyp}[1]{{\HypSign_{#1}}}
\newcommand{\vHyp}[1]{{\widehat{\HypSign}_{#1}}}
\newcommand{\Dst}{\mathcal{D}}
\newcommand{\vDst}{\widehat{\Dst}}
\newcommand{\vNegDst}{\widehat{-\Dst}}
\newcommand{\w}{w}
\newcommand{\bias}{\MathAlg{Bias}}
\newcommand{\val}{\MathAlg{val}}
\newcommand{\isShare}[1]{{\# {#1}}}
\newcommand{\ShVecO}[2]{\vect{#1^{\isShare{#2}}}}
\newcommand{\Eps}{\mathcal{E}}
\newcommand{\cVS}[1]{\ShVecO{c}{#1}}
\newcommand{\bankV}[1]{\vect{b^{#1}}}
\newcommand{\tp}{h}
\newcommand{\vr}{v}
\newcommand{\rng}{r}
\newcommand{\partNum}{t}
\newcommand{\partNumInn}{r}
\newcommand{\secParam}{\kappa}
\newcommand{\rnd}{m}
\newcommand{\game}{\mathsf{G}}
\newcommand{\ratioo}{\MathAlg{ratio}}
\newcommand{\error}{\MathAlg{error}}
\newcommand{\trnd}{{\widetilde{\rnd}}}
\newcommand{\z}{z}
\newcommand{\Ph}{{\widehat{\Pc}}}
\newcommand{\Pih}{{\widehat{\Pi}}}
\newcommand{\Party}[1]{\Pc}
\newcommand{\PartyF}[1]{\Ph}
\newcommand{\Protocol}[1]{\Pi^{#1}}
\newcommand{\ProtocolF}[1]{\Pih^{#1}}
\newcommand{\vct}{v}
\newcommand{\const}{\lambda}
\newcommand{\aset}{\cH}
\newcommand{\Defense}{{\mathsf{Defense}}}
\newcommand{\DefenseTilde}{{\mathsf{\widetilde{Defense}}}}
\newcommand{\Coin}{{\mathsf{Coin}}}
\newcommand{\AddNoise}{{\mathsf{Noise}}}
\newcommand{\noise}{\AddNoise}
\newcommand{\share}{{\mathsf{share}}}
\newcommand{\Noise}{{\mathsf{Noise}}}
\newcommand{\alp}[1]{\alpha_{#1}}
\newcommand{\Value}{B}
\newcommand{\ElementSet}{\mathcal{A}}
\newcommand{\ElementVar}{A}
\newcommand{\element}{a}
\newcommand{\PredictAdv}{\Gamma}
\newcommand{\GoodElements}{\ElementSet^*}
\newcommand{\GoodVectors}{\mathcal{V}^*}
\newcommand{\weight}{w}
\newcommand{\HintSet}{\cH}
\newcommand{\HintFunc}{f}
\newcommand{\SecHintFunc}{g}
\newcommand{\CompHintFunc}{\SecHintFunc \circ \HintFunc}
\newcommand{\HintVar}{\HintFunc(\ElementVar)}
\newcommand{\GoodHints}{\HintSet^*}
\newcommand{\GoodHintsFinal}{\HintSet}
\newcommand{\hint}{h}
\newcommand{\tail}{\MathAlg{tail}}
\newcommand{\vctBaseLen}{s}
\newcommand{\vctLenFact}{\alpha}
\newcommand{\vctTotalLen}{\vctLenFact \cdot \vctBaseLen}
\newcommand{\hypVctLenFact}{\beta}
\newcommand{\hypTotalVctLen}{\hypVctLenFact \cdot \vctBaseLen}
\newcommand{\curRound}{i}
\newcommand{\curCoins}{c}
\newcommand{\CurCoinsVarSign}{C}
\newcommand{\CurCoinsVar}{\CurCoinsVarSign_{\curRound}}
\newcommand{\GoodCoins}{\cC_{\curRound}^*}
\newcommand{\prevCoins}{b}
\newcommand{\HintVarBin}{\HintFunc(\CurCoinsVar)}
\newcommand{\NextCoinsVar}{\sum_{j=\curRound}^{\rnd}\CurCoinsVarSign_j}
\newcommand{\NextNextCoinsVar}{\sum_{j=\curRound+1}^{\rnd}\CurCoinsVarSign_j}
\newcommand{\hypSample}{t}
\newcommand{\hypBankWeight}{p}
\newcommand{\ratioBoundFactor}{\gamma}
\newcommand{\coinsFuncSym}{\ell}
\newcommand{\coinsSumSym}{\hat{\coinsFuncSym}}
\newcommand{\coinsFunc}[1]{\coinsFuncSym(#1)}
\newcommand{\coinsSum}[1]{\coinsSumSym(#1)}
\newcommand{\coinsSumIndex}{t}
\newcommand{\biasTerm}{{\frac{t \cdot 2^t \cdot \sqrt{\log m}}{m^{1/2+1/\left(2^{t-1}-2  \right)}}}}
\newcommand{\biasTermSingle}{{\frac{2^t \cdot \sqrt{\log m}} {m^{1/2+1/\left(2^{t-1}-2  \right)}}}}
\newcommand{\biasTermSingleParam}[2]{{\frac{2^{#1} \cdot \sqrt{\log m}}{m^{1/2+1/\left(2^{#2-1}-2  \right)}}}}
\newcommand{\outcome}{{\mathsf{outcome}}}
\newcommand{\HTProtocol}{\Pi^{\HT}}
\newcommand{\HTDefenseProtocol}{{\mathsf{\widetilde{Defense^{\HT}}}}}
\newcommand{\HTDefenseRound}{{\mathsf{RoundDefense^{\HT}}}}
\newcommand{\hintSet}{{\mathcal{H}}}
\newcommand{\Strategy}{{\Bc}}
\let\xx@thm\@thm
\title{Fair Coin Flipping:\\ Tighter Analysis  and the Many-Party Case
\Draft{\\{\small \sc Working Draft: Please Do Not Distribute}}}
\author{Niv Buchbinder\thanks{Statistics and Operations Research, Tel Aviv university. E-mail:\texttt{niv.buchbinder@gmail.com}.} \and Iftach Haitner\thanks{School of Computer Science, Tel Aviv University. E-mail:\{\texttt{iftachh@cs.tau.ac.il}, \texttt{nisnis.levi@gmail.com}, \texttt{eliadtsfadia@gmail.com}\}. Research supported by ERC starting grant 638121.}~\thanks{Member of the  
		Check Point Institute for Information Security.}  \and Nissan Levi$^{\dagger}$ \and Eliad Tsfadia$^{\dagger}$}
\begin{document}
\sloppy
\maketitle
\begin{abstract}
	
In a multi-party  \emph{fair} coin-flipping protocol, the parties output a common (close to) unbiased  bit, even when some adversarial parties try to bias the output. In this work we focus on the case of an arbitrary number of corrupted parties.  \citet{Cleve86} [STOC 1986] has shown that in  \emph{any} such $m$-round coin-flipping protocol, the corrupted parties can bias the honest parties' common output bit by $\Theta(1/m)$. For more than two decades, however,  the best known coin-flipping protocol was the one of \citet*{AwerbuchBCGM1985}  [Manuscript 1985], who presented a $t$-party, $m$-round  protocol with bias  $\Theta(t/\sqrt{m})$. This was changed by the breakthrough result of \citet*{MoranNS09} [Journal of Cryptology 2016], who constructed an $m$-round, \emph{two}-party coin-flipping protocol with optimal bias  $\Theta(1/m)$. More recently, \citet{HaitnerT17} [SIAM Journal on Computing 2017]  constructed an $m$-round, \emph{three}-party coin-flipping protocol with  bias  $O(\log^3m / m)$.  Still for the case of more than three parties, the best known protocol remained the $\Theta(t/\sqrt{m})$-bias protocol of \cite{AwerbuchBCGM1985}.


We make a step towards eliminating the above gap, presenting a $t$-party,  $m$-round coin-flipping protocol, with bias $O(\frac{t^4 \cdot 2^t \cdot \sqrt{\log m}}{m^{1/2+1/\left(2^{t-1}-2\right)}})$ for any $t\le \tfrac12 \cdot \loglog m$. This improves upon the  $\Theta(t/\sqrt{m})$-bias protocol of \cite{AwerbuchBCGM1985}, and in particular, for $t\in O(1)$ it is an $1/m^{\frac12 + \Theta(1)}$-bias protocol. For the three-party case, it is an $O(\sqrt{\log m}/m)$-bias protocol, improving over the  $O(\log^3m / m)$-bias protocol  of \cite{HaitnerT17}.

Our protocol generalizes that of  \cite{HaitnerT17},  by presenting an appropriate ``recovery protocol'' for the remaining parties to interact in, in the case that some parties abort or are caught cheating (\cite{HaitnerT17} only presented a two-party  recovery protocol, which limits their final protocol to handle three parties). We prove the fairness  of the  new protocol by  presenting  a new paradigm for analyzing fairness of coin-flipping protocols;  the claimed fairness is proved  by  mapping  the set of adversarial strategies that try to bias the honest parties' outcome in the protocol to the set of the feasible solutions of a linear program. The gain each strategy achieves is the value of the corresponding solution. We then bound the optimal value of the linear program by constructing a feasible solution to its dual.

\end{abstract}

\noindent\textbf{Keywords:} coin-flipping; fair computation; stopping time problems

\ifdefined\SBM
\thispagestyle{empty}
\pagenumbering{gobble}
\clearpage
\pagenumbering{arabic}

\else
\Tableofcontents
\fi

\section{Introduction}\label{sec:Introduction}
In a multi-party  \emph{fair} coin-flipping protocol, the parties wish to output a common (close to) unbiased bit, even though some of the parties may be adversarial and try to bias the output. More formally, such protocols should satisfy the following two properties: first, when all parties are honest (\ie follow the prescribed protocol), they all output the \emph{same} bit, and this bit is unbiased (\ie uniform over $\zo$). Second, even when some parties are corrupted (\ie collude and arbitrarily deviate from the protocol), the remaining parties should still output the \emph{same} bit, and this bit should not be too biased (\ie its distribution should be close to uniform over $\zo$). We emphasize that unlike weaker variants of coin-flipping protocol known in the literature, the honest parties should \textbf{always} output a common bit, regardless of what the corrupted parties do, and in particular  they are not allowed to abort if a cheat was detected.

 When a majority of the parties are honest, efficient and \emph{completely} fair coin-flipping protocols are known as a special case of secure multi-party computation with an honest majority \cite{BenGolWig88}.\footnote{Throughout, we assume a broadcast channel is available to the parties. By \cite{CohenHOR2016}, broadcast channel is necessary for fair coin-flipping protocol secure against one third or more, corruptions.} However, when there is no honest majority, the situation is more complex.

 \mparagraph{Negative results}
 \citet{Cleve86} showed that for \emph{any} efficient two-party $m$-round coin-flipping protocol, there exists an efficient adversarial strategy to  bias the output of the honest party by $\Theta(1/m)$. This lower bound extends to the multi-party case, with no honest majority,  via a simple reduction.

 \mparagraph{Positive results}
 \citet*{AwerbuchBCGM1985} showed that if one-way functions exist, a simple $m$-round majority protocol can be used to derive a $t$-party coin-flipping protocol with bias $\Theta(t/\sqrt{m})$.\footnote{The result of \citet{AwerbuchBCGM1985} was never published, and it is contributed to them by \citet{Cleve86} who analyzed the two-party case. \citet{Cleve86}'s analysis  extends to the many-party case in  a straightforward manner. The protocol of \cite{Cleve86} is using  family of trapdoor permutations, but  the latter were merely used to construct commitment schemes, which we currently  know how to construct from any one-way function \cite{HaitnerReVaWe09,HastadImLeLu99,Naor91}. Roughly, the $t$-party protocol of \cite{AwerbuchBCGM1985} is the following: in each round $i \in [m]$, each party $j \in [t]$ commits on a uniformly random coin $c_{i,j} \in \oo$. After the commitments phase, each party then decommits on its coin, and the parties agree on the value $c_i = \prod_{j \in [t]} c_{i,j}$. The final outcome is set to $\sign(\sum_{i=1}^m c_i)$.}

 For more than two decades, \citeauthor{AwerbuchBCGM1985}'s  protocol was the best known fair coin-flipping protocol (without honest majority), under \emph{any} hardness assumption  and for \emph{any} number of parties. In their breakthrough result, Moran, Naor, and Segev \cite{MoranNS09} constructed an $m$-round, \emph{two}-party coin-flipping protocol with optimal bias of $\Theta( 1/ m)$.  In a subsequent work, \citet*{BeimelOO10} extended the result of \cite{MoranNS09} for the multi-party case in which \emph{less than} $\frac23$ of the parties can be corrupted. More specifically, for any $\ell < \frac23 \cdot t$, they presented an $m$-round, $t$-party protocol with bias $\frac{2^{2^{2\ell -t}}}{m}$ against (up to) $\ell$ corrupted parties.  Recently, \citet{HaitnerT17}   constructed an $m$-round, \emph{three}-party coin-flipping protocol with  bias  $O(\log^3m / m)$ against two corruptions. In a subsequent work, \citet{AlonOmri16Full} extended the result of \citet{HaitnerT17} for the multi-party case in which \emph{less than} $\frac34$ of the parties can be corrupted.  More specifically, for any $t\in O(1)$ and $\ell < \frac34 \cdot t$, they presented an $m$-round, $t$-party protocol with bias $O(\log^3m / m)$ against (up to) $\ell$ corrupted parties. All the above results hold  under the assumption that oblivious transfer protocols exist.   Yet, for the case of more than three parties  (and unbounded number of corruptions),  the best known protocol  was the $\Theta(t/\sqrt{m})$-bias protocol of \cite{AwerbuchBCGM1985}.

 \subsection{Our Result}
 Our main result is a new multi-party coin flipping protocol.
 \begin{theorem}[main theorem, informal]\label{thm:mainInf}
 	Assuming the existence of oblivious transfer protocols, for any $\rnd = \rnd(n) \le \poly(n)$ and $t = t(n)\le \frac12
 	\cdot \loglog m$, there exists an $\rnd$-round, $t$-party coin-flipping protocol with bias $O(\frac{t^4 \cdot 2^t \cdot \sqrt{\log m}}{m^{1/2+1/\left(2^{t-1}-2\right)}})$ (against up to $t-1$ corrupted parties).
 \end{theorem}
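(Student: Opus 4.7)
The plan is to construct the protocol $\Pi^t$ recursively in $t$, using smaller-party protocols as inner ``recovery'' subroutines invoked whenever parties abort, and to bound the resulting bias by a linear-programming duality argument. For $t=2$ the base is the $\Theta(1/m)$-bias protocol of \cite{MoranNS09}, and for $t=1$ a single honest party just outputs a freshly sampled value. For the inductive step I would follow the blueprint of \cite{HaitnerT17}: the $m$-round outer protocol simulates an ideal functionality (implementable via secure computation with abort, assuming OT) that, at every round $i\in[m]$, samples a fresh coin $c_i\in\oo$ together with a collection of ``defense values'' $\{h_{i,\cZ}\}$, one for each nonempty strict subset $\cZ\subsetneq[t]$ of the parties. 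In round $i$ the coin $c_i$ is delivered to all parties; if some set $\cZ$ of corrupted parties aborts at that round, the remaining $t-\abs{\cZ}$ parties fall back on $\Pi^{t-\abs{\cZ}}$ (guaranteed by the induction hypothesis) initialized with $h_{i,\cZ}$; and if all $m$ rounds complete honestly the common output is $\sign(\sum_{i=1}^m c_i)$. The joint distribution of $(c_i,\{h_{i,\cZ}\})$ is engineered so that the inductive bias bound on $\Pi^{t-\abs{\cZ}}$, combined with the bias contribution of the defense value, translates into a controlled shift of the expected output per abort event.

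To analyze the bias I would first invoke the standard simulation argument: given the OT assumption, it suffices to bound the adversary's advantage against the ideal functionality. Any efficient adversary $\Badv$ controlling $\cZ\subsetneq[t]$ is then an adaptive abort-strategy, and its behavior is captured by variables $x_{i,\cZ',\tau}$ equal to the probability of aborting with subset $\cZ'\subseteq\cZ$ at round $i$ given partial transcript $\tau$. I would set up a linear program $(P)$ whose constraints encode that these variables form a valid conditional distribution over transcripts, whose right-hand sides encode the bias contributions of the inner protocols (which the inductive hypothesis quantifies), and whose objective is the maximum expected deviation of the honest output bit from $1/2$.

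The heart of the proof is to exhibit an explicit feasible solution to the dual LP whose value is $O\bigl(\tfrac{t^4\cdot 2^t\cdot\sqrt{\log m}}{m^{1/2+1/(2^{t-1}-2)}}\bigr)$; weak LP duality then upper-bounds the primal optimum, and hence the bias of $\Pi^t$, by this quantity. I would construct the dual solution inductively from the certificate already obtained for $\Pi^{t-1}$: the dual weight attached to an abort event $(i,\cZ')$ will decay in $i$ at a rate dictated by the exponent $1/(2^{t-1}-2)$, the factor $2^t$ accounts for the sum over the $2^t-2$ nonempty strict subsets $\cZ'$, the $\sqrt{\log m}$ factor appears from restricting attention to the $\sim\sqrt{m\log m}$ rounds in which the running sum $\sum_{j<i}c_j$ is small (where each abort can swing the output most), and the $t^4$ slack absorbs constants that arise from cross-round and cross-subset interactions when verifying feasibility. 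The hypothesis $t\le\tfrac12\log\log m$ both keeps $1/(2^{t-1}-2)$ a meaningful exponent and guarantees that the ideal functionality is samplable in $\poly(n)$ time.

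The main obstacle will be the joint design of the defense distribution $\{h_{i,\cZ}\}$ together with the dual certificate: the distribution must be efficiently samplable, it must leave the honest output exactly unbiased, and, most delicately, for every possible abort event $(i,\cZ',\tau)$ the corresponding dual inequality — which mixes the inductive bias of $\Pi^{t-\abs{\cZ'}}$ with a term quantifying how much the remaining coins can still move $\sum_j c_j$ — must be satisfied by the proposed dual assignment. Verifying these exponentially many inequalities simultaneously will be the crux of the argument, and I expect the factor $t^4$ in the final bound to emerge precisely as the slack needed to make them hold uniformly across rounds and subsets.
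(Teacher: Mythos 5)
Your proposal has the right skeleton — a recursive protocol with recovery subprotocols invoked on abort, assembled via secure-computation-with-abort under OT, and a bias analysis via LP duality — but it is missing the paper's central technical ingredient, and two other choices would lead you astray. What you describe as the ``engineered'' joint distribution of $(c_i,\{h_{i,\cZ}\})$ is precisely what has to be designed from scratch: the $\Defense$ functionality secret-shares among the remaining parties not $\delta_i$ (the ideal expected outcome given the coins so far) but a \emph{noisy} value $\delta'_i=\delta_i+\noise$, where $\noise$ has mean zero and variance governed by the explicit $\alpha$-factors $\alpha(\rnd,\ell,k)=m^{\frac{2^{\ell-3}}{2^{\ell-2}-1}\cdot\frac{2^{k-2}-1}{2^{k-3}}}$ (\cref{def:alphaFactors}). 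The essential tension your proposal glosses over is: too little noise and the shares leak $\delta_i$, so aborting at the defense step of the current protocol is effective; too much noise and the recovered $\delta'$ is far from $\delta_i$, so aborting inside the recovery subprotocol is effective. The exponent $1/(2^{t-1}-2)$ in the final bound emerges from optimizing this trade-off level by level (see \cref{fact:alphaFacts,rem:whyUsingHT}), not from a decay rate of dual weights. Without specifying this noise distribution there is no concrete protocol to feed into any LP, and the dual inequalities you plan to verify have no well-defined right-hand sides.

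Two further divergences. The two-party base case in the paper is a variant of the \cite{HaitnerT17} protocol (\cref{sec:htprotocol}), not \cite{MoranNS09}: the recovery subprotocol must be started with a prescribed expected outcome $\delta$ (the reconstruction of the defense shares), and $\HTDefenseProtocol(1^\rnd,\delta)$ supports exactly such $\delta$-biased initialization, whereas the MNS ``special-round'' mechanism is built around $\delta=\tfrac12$ and does not compose with the defense-share structure. And the LP (\cref{sec:BinomialViaLP}) is neither over the full protocol with all abort patterns nor solved inductively from the certificate for $\Pi^{t-1}$: each single abort event is first reduced to an abstract online binomial game — a stopping-time problem whose states $\seq{\ell,b,h}$ record only remaining rounds, running offset, and the current hint — and the LP lives on that game alone. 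The up to $t-1$ aborts are handled by a union bound \emph{outside} the LP, contributing the factor $t$, and the remaining $t^3$ comes from the round-count reduction $\trnd\approx\rnd/(ct^3)$ needed to absorb the $O(t^3)$ per-round overhead of OT-based secure computation (using $1/2+1/(2^{t-1}-2)\le 1$), not from slack in the dual verification. You should also incorporate the decreasing coin weights — $(\rnd+1-i)^2$ coins tossed in round $i$ — without which the last few rounds of the game dominate and the analysis fails.
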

 The above protocol improves upon the  $\Theta(t/\sqrt{m})$-bias protocol of \citet{AwerbuchBCGM1985} for any $t\le \tfrac12 \cdot \loglog m$. For  $t\in O(1)$, this   yields an $1/m^{\frac12 + \Theta(1)}$-bias protocol. For the three-party case, the above  yields an $O(\sqrt{\log m}/m)$-bias protocol, improving over the $O(\log^3m / m)$-bias protocol  of \citet{HaitnerT17}.

 We analyze  the new protocol by  presenting  a new paradigm for analyzing fairness of coin-flipping protocols. We upper bound the bias of the protocol by upper-bounding the value of  a linear program that \emph{characterizes it}:  there exists an onto mapping from the set of adversarial  strategies that try to bias the honest parties' outcome in the protocol, to the program's feasible solutions, such that the gain a strategy achieves is, essentially, the  value of the solution of the program it is mapped to. See \cref{intro:sec:LP} for more details.

 \subsection{The New Multi-Party Fair Coin-Flipping Protocol}\label{intro:sec:newprotocol}
 Our coin-flipping protocol follows the paradigm of  \citet{HaitnerT17}. In the following we focus on efficient \emph{fail-stop} adversaries: ones that follow the protocol description correctly and their only adversarial action is to abort prematurely (forcing the remaining  parties to decide on their common output without them). Compiling  a protocol that is secure against such fail-stop adversaries into a protocol of the same bias that is secure against \emph{any} efficient  adversary, can be done using standard cryptographic tools.
 
 In addition, we assume  the parties can \textit{securely compute with abort}  any efficient functionality, where according to this security definition, if a cheat is detected or if one of the parties aborts, the remaining parties are not required to output anything.
 The only information a party obtains from such a computation is its  local output (might be a different output per party). The order of which the outputs are given by such functionality, however,  is  arbitrary. In particular, a ``rushing'' party that aborts after obtaining its own output, prevents  the remaining  parties from getting their outputs.  For every efficient functionality,  a constant-round protocol that securely compute it with abort can be constructed using \textit{oblivious transfer} protocol. As explained in \cite{BeimelOO10}, this can be done using (a variation on) the protocol of \cite{Pass:2004:BSM:1007352.1007393}.

 The protocol of \citet{HaitnerT17} enhances  the basic majority coin-flipping protocol of \citet{AwerbuchBCGM1985} with \emph{recovery protocols} for the remaining parties to interact in, if  some of the parties abort. We consider the following generalization of the  $t$-party  $m$-round  protocol of \cite{HaitnerT17}, for arbitrary value of $t$  and odd value of $m$. The  functionality $\Defense$ and the sub-protocols $\set{\Protocol{t'}}_{t' < t}$ used in the protocol are specified later.

\begin{samepage}

\begin{protocol}[$\ProtocolF{t}  = (\Party{}_1,\Party{}_2,\ldots,\Party{}_{t})$]\label{intro:ProtocolOuter}
	
	\item For $i=1$ to $\rnd$:
	\begin{enumerate}
		
		
		\item Every (proper) subset of parties $\cZ \subsetneq \set{\Party{}_1,\Party{}_2,\ldots,\Party{}_{t}}$ securely compute $\Defense^{\size{\cZ}}()$. Let $\share^{\#\z,\cZ}$ be the output party $\Party{}_z$ received  from this call.\label{intro:step:defense}
		
		\item  The parties  securely  compute  $\Coin()$ that returns  a common uniform $\oo$ coin $c_i$.\label{intro:step:coin}\footnote{In the formal description of the protocol, see \cref{sec:protocol}, in round $i$ $\Coin$ returns $(m+1-i)^2$ coins. As shown in \cite{HaitnerT17}, given more weight to earlier rounds is necessary:  otherwise, an adversary can easily gain a bias of $1/\sqrt{m}$ by aborting in one of the last rounds.}
		
		
	\end{enumerate}
	\item[Output:] All parties output $\sign(\sum_{i=1}^m c_i) \:$   (i.e., $1$ if $\sum_{i=1}^m c_i > 0$ and $0$ otherwise).
	
	\item [Abort:] ~
	Let $\cZ \subsetneq \set{\Party{}_1,\ldots,\Party{}_t}$ be the remaining (non-aborting) parties. To decide on a common output, the parties in $\cZ$ interact in the ``recovery" protocol  $\Protocol{\size{\cZ}}$, where party $\Party{}_\z$'s private input is $\share^{\#\z,\cZ}$. If  $\cZ = \set{\Party{}_\z}$ (\ie  only a single non-aborting party remained), the  party $\Party{}_\z$ outputs $\share^{\#\z,\cZ}$.
\end{protocol}
	\end{samepage}
Note that (since $m$ is odd) the common output in  an all-honest execution  is a uniform bit. 
To instantiate the above protocol, one needs to define the functionality $\Defense^{t'}$ and the protocol $\Protocol{t'}$, for all ${t'< t}$. But first let's discuss whether  we need these functionalities and protocols at all? That is, why not simply instruct the remaining parties to re-toss the coin $c_i$ if some parties abort in \Stepref{intro:step:coin} of the $i$'th round. This simple variant is  essentially the vanilla  protocol of  \citet{AwerbuchBCGM1985}, and it is not hard to get convinced that a malicious (fail-stop) party can bias the output of the protocol by $\Theta(1/\sqrt{{m}})$. To see that, note that the sum of $m$ unbiased $\oo$ coins is roughly uniform over $[-\sqrt{m}, \sqrt{m}]$. In particular, the  probability that the sum is in $\oo$ is $\Theta(1/\sqrt{m})$. It follows that if a party aborts after seeing in \Stepref{intro:step:coin} of the first round that $c_1 =1$, and by that causes the remaining parties to re-toss $c_1$, it biases the final outcome of the protocol towards $0$ by $\Theta(1/\sqrt{m})$.

To improve upon this   $1/\sqrt{m}$ barrier, \cite{HaitnerT17}  have defined  $\Defense$ such that the expected outcome of  $\Protocol{t'}(\Defense^{t'}())$  equals $\delta_i = \pr{\sign(\sum_j c_j) =1 | c_1,\ldots,c_i}$ for every $t'$. Namely, the expected outcome of the remaining parties has \emph{not changed}, if some parties abort in \Stepref{intro:step:coin} of the protocol.\footnote{The above  definition  requires $\Defense$  to share a (secret) state with  $\Coin$, since both functionalities are defined \wrt the same coin $c_i$. This non-standard requirement is only  for the sake of presentation, and in the actual protocol we replace it with stateless functionalities that share, and maintain, their ``state'' by secret sharing it between the parties. See \cref{sec:protocol}.} The above correlation of the defense values returned by $\Defense$ in \Stepref{intro:step:defense} and the value of $c_i$ returned by $\Coin$ in \Stepref{intro:step:coin}, however,  yields that they  give some information about $c_i$, and thus the (only) weak point of the protocol has shifted to \Stepref{intro:step:defense}.  Specifically, the bias achieved by aborting  in \Stepref{intro:step:defense} of round $i$  is the difference between  $\delta_{i-1}$, the expected value of the protocol  given the coins $c_1,\ldots,c_{i-1}$ flipped in the previous rounds, and the expected outcome of the protocol given these coins and the defense values given to the corrupted parties in \Stepref{intro:step:defense}.  If done properly, only limited information about  $c_i$ is revealed in \Stepref{intro:step:defense}, and thus attacking there is not as effective as attacking in \Stepref{intro:step:coin} of the vanilla (no defense)  protocol.

For $t=2$, \cite{HaitnerT17} have set the defense for the remaining party to be a bit $b_i$ that is set  to $1$ with probability  $\delta_i$. Namely, if a party aborts in \Stepref{intro:step:coin} of the $i$'th round, the other party outputs $1$ with probability $\delta_i$, and $0$ otherwise. Since $\eex{b_i} = \delta_i$, attacking in \Stepref{intro:step:coin}  (in any round) of the protocol is useless. Moreover, since $b_i$ only leaks ``limited information'' about $\delta_i$ (and thus about $c_i$), it is  possible to show that attacking in \Stepref{intro:step:defense} (in any round) biases the protocol by (roughly) $(\delta_i-\delta_{i-1})^2$ (to compare to  the   $(\delta_i-\delta_{i-1})$ bias achieved in \Stepref{intro:step:coin} of the vanilla protocol).\footnote{To see where the square power is coming from, consider for simplicity the first round in which $\delta_0 = 1/2$. Let $\Delta \eqdef |\delta_1-1/2| \in \Theta(1/\sqrt{m})$ (\ie $\delta_1 = 1/2 + c_1 \Delta$). It follows that\\$\pr{\out=1 \mid b_1=1} = \sum_{x \in \oo}\pr{\out=1\mid c_1=x} \cdot \pr{c_1=x\mid b_1=1} = (1/2-\Delta)^2 + (1/2+\Delta)^2 = 1/2 + 2\Delta^2$. Namely, revealing $b_1$ only causes a bias of $\Theta(\Delta^2)$ (and not  $\Delta$).} These observations yield  (see \cref{intro:sec:LP}) that the protocol's bias is $\polylog (m)/m$.\footnote{More precisely, this bound was proved for the weighted variant of the above protocol, where in round $i$ the functionality $\Coin$ returns the sum of $m-i+1$ independent coins. See \cref{sec:protocol}.} Generalizing the above for even  $t=3$  is non-trivial. The defense values of the remaining parties should allow them to interact in a \emph{fair} protocol $\Protocol{2}$ of expected outcome $\delta_i$. Being fair,  protocol $\Protocol{2}$ should contain a defense mechanism of its own to avoid one of the remaining parties to bias its outcome by too much (this was not an issue in the case  $t=1$, in which  there is only one remaining party).  
Yet, \cite{HaitnerT17} managed to find such an implementation of the $\Defense$ functionality and $\Protocol{2}$ that yield a $\polylog(m)/m$-bias protocol.\footnote{The  analysis we employ in this paper, see \cref{intro:sec:LP}, shows that the bias of (a simple variant of) the \cite{HaitnerT17} protocols is actually $\sqrt{\log m}/m$.} The rather complicated approach used by \cite{HaitnerT17}  was  tailored for the case that the recovery sub-protocol $\Protocol{2}$ is a two-party protocol. In particular, it critically relies on the fact that in a two-party protocol, there is no recovery sub-protocol (rather, the remaining party decides on its output by its own). We take a  different approach to implement the $\Defense$ functionality and its accompanied recovery protocol $\Protocol{t'}$.\footnote{Actually, for subsets of size two, we are still using the mechanism of \cite{HaitnerT17} that handles  such subsets better. We ignore this subtlety for the sake of the introduction.} 

\begin{algorithm}[The  $\Defense$ functionality]
\item [Input:] $1^{t'}$

\noindent
//Recall that $c_1,\ldots,c_{i-1}$ are the coins flipped in the previous rounds, $c_i$ is the coin to be output in this round call to $\Coin$, and $\delta_i = \pr{\sign(\sum_j c_j) =1 | c_1,\ldots,c_i}$.
\begin{enumerate}
	
	\item Let $\delta_i' = \delta_i + \noise$, where $\noise$ is  random variable of expectation $0$.
	
	 (Additional requirements regarding the distribution of the noise will be given below.)
	
	\item  Let  $\share^{\#1},\ldots,\share^{\#t'}$ be  $t'$-out-of-$t'$  secret sharing of $\delta'_i$.\footnote{I.e., $\set{\share^{\#i}}$ are uniform   strings conditioned on  $\bigoplus_{i=1}^{t'}\share^{\#i} = \delta'_i$. (We assume for simplicity that $\delta_i'$ has a short binary representation.)} Return  $\share^{\#i}$ to the $i$'th party.
\end{enumerate}
\end{algorithm}
Namely, $\Defense$  computes a noisy version of $\delta_i$ and secret-shares the result between the calling parties.
\begin{protocol} [$\Protocol{t'} = (\Party{}_1,\Party{}_2,\ldots,\Party{}_{t'})$]

	 \item[Input:] Party $\Party{}_z$'s input is  $\share^{\#\z}$.
	
	\begin{enumerate}

   \item Each party $\Party{}_z$ sends its input $\share^{\#\z}$ to the other parties, and all parties set $\delta' = \bigoplus_{z=1}^{t'} \share^{\#z}$.\label{step:rec:newdelta}
	
	\item Continue as the $\delta'$-biased  version of Protocol $\ProtocolF{t'}$:
	\begin{itemize}
		\item $\Coin$ sets the coin  $c_i$ to be $1$ with probability $1/2 + \eps$ (rather than $1/2$), for $\eps\in [-1/2,1/2]$  being the value such that $\delta'$ is the probability that the sum of $m$ independent  $(1/2 + \eps)$-biased $\oo$ coins  is positive.
		\item The definition  of $\Defense$ is changed accordingly to reflect this change in the bias of the coins.
	\end{itemize}
\end{enumerate}
\end{protocol}
Since $\eex{\noise}=0$, the expected outcome of $\Protocol{t'}(\Defense(1^{t'}))$ is indeed $\delta_i$. Note that since the corrupted parties can use their shares to reconstruct the value of $\delta_i'$ sampled in the all-corrupted  calls to $\Defense$ (those calls made by subsets in which all parties are corrupted), the values returned by $\Defense$ do leak some information about $\delta_i$, and thus about the coin $c_i$. But if $\noise$ is ``noisy enough" (\ie high enough variance), then $\delta_i'$ does not leak too much information about $\delta_i$.  Hence, by taking noisy enough $\noise$, we make $\ProtocolF{t}$  robust against a single abort (this is similar to the two-party protocol). On its second abort, however,  an attacker is actually attacking the  above sub-protocol $\Protocol{t'}$, which  provides the attacker a very effective attack opportunity:  the attacker who is first to reconstruct $\delta' = \delta_i'$, can choose to abort and by that make the remaining parties continue with an execution  whose expected outcome is $\delta_i$. Hence, it can bias the protocol's outcome by $\delta_i - \delta_i'$.   If $\delta_i'$ is \whp  far from $\delta_i$, this makes the resulting protocol  unfair.  A partial  solution for this problem is to defend the reconstruction of $\delta'$ in a similar way to how we  defend the reconstruction of the coin $c_i$; before reconstruction the value of $\delta'$ (Step~\ref{step:rec:newdelta} of Protocol $\Protocol{t'}$), call (a variant of) $\Defense$ to defend the parties in the case an abort happens in the reconstruction step. Namely, each subset of parties will get new defense values for executing a recovery protocol with expected output $\delta'$. As in the two-party protocol mentioned before, the use of defense values reduces the bias from  $\delta_i - \delta_i'$ to (roughly) $(\delta_i - \delta_i')^2$. This limitation dictates $\noise$ of bounded variance, but when using such a $\noise$ function we are no longer in the situation where $\delta_i'$ does not leak significant information about $\delta_i$, making the  protocol $\ProtocolF{t}$ vulnerable to aborting attacks. 
The solution is to choose a  variance of  $\noise$ that compromises between these two contradicting requirements.  For not too large $t$, the right choice of parameters yields a protocol of the claimed bias,  significantly improving over the $(1/\sqrt{m})$-bias  vanilla protocol. More details below.

\paragraph{The $\Noise$ function.}\label{sec:intro:noise}
Our $\Noise$ function, parameterized by $\alpha>1$, as follows.

\begin{algorithm}[$\Noise_\alpha$]
	\item [Parameter:] $\alpha > 1$.
	\item [Input:] $\delta \in [0,1]$.
	
	\noindent
	\begin{enumerate}
		
		\item Let $\eps$ be the value such that $\delta$ is the probability that the sum of $m$ independent  $(1/2 + \eps)$-biased $\oo$ coins is positive.
		
		\item Sample an $\alpha\cdot  m$-size  set $\cS$  of independent values in $\oo$, each  taking the value  $1$ with probability $1/2 + \eps$. 
		
		\item Let $\delta'$ be the probability that an $m$-size random subset of $\cS$ has a majority of ones. 
		
		\item Return $\delta'$.
		
	\end{enumerate}
\end{algorithm}

By definition, it is clear that for every $\alpha$, $\eex{\Noise_{\alpha}(\delta)} = \delta$, and note that the variance of $\Noise_{\alpha}(\delta)$ increases with $\alpha$. It can be shown that for large enough $\alpha$, the information that $\delta'$ leaks about $\delta$ is essentially like revealing $\alpha$ independent samples, each taking one w.p.\ $\delta$ and zero otherwise. Using similar arguments to the single sample case, it can be shown that revealing such $\alpha$ samples results with a bias of (roughly) $\sqrt{\alpha}/m$.
We choose $\alpha$ as a function of $k$ --- the number of active parties (hereafter, denote it by $\alpha_k$).  As explained in the previous section, the reconstruction of $\delta'$ should also be protected using a similar defense scheme. This means that now we need to protect the value that is induces by those $\alpha_k$ coins (rather than $m$ coins) using a similar process that now ``reveals'' $\alpha_{k-1}$ samples (rather than $\alpha_k$) for handling a single abort. Using similar arguments, this yields a bias of (roughly) $\sqrt{\alpha_{k-1}}/\alpha_k$ (the formal statement is given by applying \cref{lemma:HyperlProcessVectorHint:pre} with $\alpha=\alpha_{k-1}$ and $\beta=\alpha_k$). In order to minimize the bias, we want to minimize the maximum of $\set{\sqrt{\alpha_{t-1}}/m, \sqrt{\alpha_{t-2}}/\alpha_{t-1},\ldots,\sqrt{\alpha_3}/\alpha_4,1/\alpha_3}$, which holds whenever $\sqrt{\alpha_{t-1}}/m = \sqrt{\alpha_{t-1}}/\alpha_t = \ldots = \sqrt{\alpha_3}/\alpha_4 = 1/\alpha_3$. The solution is obtained by setting $\alpha_k = m^{\frac{2^{t-3}}{2^{t-2}-1}\cdot \frac{2^{k-2}-1}{2^{k-3}}}$, yielding a bias (per-round) of $\frac1{\alpha_3} = \frac1{m^{\frac12 + \frac1{2^{t-1}-2}}}$.

\subsection{Proving Fairness via Linear Program}\label{intro:sec:LP}
In the previous sections we explained how to bound the bias of aborting in a given round of \cref{intro:ProtocolOuter}. The actual situation, however,  is more complex since an adversary might use an \emph{adaptive} strategy for deciding on which round to abort.
As considered by \cite{HaitnerT17}, the security  of  \cref{intro:ProtocolOuter} can be reduced to the value of the  appropriate \emph{online binomial game}.

\mparagraph{Online binomial games}
 An \emph{$m$-round online-binomial game} is a  game between the (honest, randomized) challenger and an all-powerful player. The game is played for $m$ rounds. At round $i$, the  challenger
 tosses an independent $\oo$ coin $c_i$. The final outcome of the game is set to one if the overall sum of the coins is positive, otherwise it is set to $0$. Following each round, the challenger sends  the player some information (\ie \emph{hint})  about  the outcome of the coins tossed until this round. After getting the hint, the player decides whether  to \emph{abort}, or to continue playing. If it aborts, the game stops and the player is rewarded with $\delta_{i-1}$ --- the probability that the output of the game is one given coins $c_1,\ldots,c_{i-1}$ (not including $c_i$, this round coin). If it never aborts, the player is rewarded with the (final) outcome of the game.\footnote{An alternative (yet equivalent) definition of this  game is: in each round, after getting the hint, the player can instruct the challenger to  \emph{re-toss} the current round coin, but it can do that at most once during  the duration of the game. After the game ends, the player is rewarded with its final outcome.} The bias of an  $m$-round game $\game_m$, denoted $\bias(\game_m)$, is the  advantage  the \emph{best} all-powerful player achieves over the passive (non-aborting) player, namely its expected reward  minus $1/2$.

The connection between such online Binomial games and the coin-flipping protocols $\ProtocolF{t}$ described in the previous section is rather straightforward. Recall that an  adversary controlling some of the parties in an execution of Protocol $\ProtocolF{t}$ gains nothing by aborting in  \Stepref{intro:step:coin}, and thus we can assume \wlg that it only aborts, if ever,  at \Stepref{intro:step:defense} of some rounds.\footnote{Actually, in an inner sub-protocols $\Protocol{t'}$ the attacker can also aborts in  the steps where $\delta'$ is reconstructed. But bounding the effect of such aborts is rather simple, comparing to those done is \Stepref{intro:step:defense}, and we ignore  such aborts from the current discussion.} Recall that the gain achieved from aborting in \Stepref{intro:step:defense} of round $i$ is the difference between  $\delta_{i-1}$, here the expected outcome of the protocol  given the coins $c_1,\ldots,c_{i-1}$ flipped in the previous rounds, and the expected outcome of the protocol given these coins and the defense values given to the corrupted parties in \Stepref{intro:step:defense}.   It follows that the maximal bias obtained by a \emph{single} abort, is exactly  the bias of the online binomial game, in which the hints are set to the defense values of the corrupted  parties.  The bias achieved by $t$ aborts in the protocol is at most $t$ times the bias of the corresponding game.

\mparagraph{Bounding online binomial games via a linear program}
Upper-bounding the bias of even a rather simple binomial game is not easy.\footnote{Our problem fits in the well-studied area of {\em stopping-time problems}, \cf \citet{Ferguson06}, where the goal is to upper-bound the value of the optimal stopping (\ie aborting) strategy.}
Specifically, it is non-trivial  to take advantage of the fact that the player does not know beforehand which round will yield the largest gain. A pessimistic approach, taken in \cite{HaitnerT17}, is to consider non-adaptive players that can only abort in a predetermined round, and then upper-bound general players using a union bound.  This approach effectively assumes the player is told the round it is best to abort, and as we prove here misses the right bound by a $\polylog$ factor.

We take a very different approach by  showing how to map  the set of all possible strategies of the game into feasible solutions to a \emph{linear program (LP)}. The bias each strategy achieves is equal to the objective value of the corresponding solution. We then use LP weak duality to bound the maximal value of the LP.\footnote{Interestingly, we also prove the other direction: each feasible solution to the LP corresponds to a possible strategy of the game. This shows that bounding the value of the linear formulation is actually equivalent to bounding the value of the best strategy in the game.} This modular proof approach also yields tighter analysis than the one taken in \cite{HaitnerT17}.  The intuition of the linear program is simple. For a given binomial game we consider all possible states. Specifically, each state $u$ is characterized by the current round, $i$, the sum of coins tossed so far (in the first $i-1$ rounds), $b$, and the hint $h$ given to the strategy.
We use the notation $u=\seq{i,b,h}$. For  state $u$, let $p_u$ be the probability that the game visits state $u$. For two state $u$  and $v$, let $p_{v \mid u}$ be the probability that the game  visits $v$ given that it visits state $u$. (Note that $p_u$ and $p_{v \mid u}$ are determined by the game itself, and are not functions of the adversary.) We  write $u<v$, to indicate that the round of $v$ is strictly larger than the round of $u$. For a state $v$, let $c_v$ be the expected outcome of the game given that the strategy aborts at  $v$. Given a strategy $\Sc$, let $a_v^{\Sc}$ be the marginal probability that the strategy aborts at state $v$. It is easy to see that the bias achieved by strategy $\Sc$ can be written as:
\begin{align*}
	\bias_m(\Sc) =  \sum_{v \in \Uni} a_v^\Sc\cdot c_v  - \frac12.
\end{align*}

Next, we build a linear formulation whose variables are the marginal probabilities $a_v^\Sc$, capturing the probability that a strategy aborts at state $v$. One clear constraint on the variables is that the variables $a_v^\Sc$ are non-negative. Another obvious constraint is that $a_v^\Sc\le p_v$, \ie the  probability of aborting in a state  is at most the probability  the  game visits the state. A more refined constraint is that $a_v^\Sc + \sum_{u | u<v} a_u^\Sc \cdot p_{v|u} \leq p_v$.
Intuitively, this constraint stipulates that the marginal probability of aborting at state $v$ plus the probability that the game visits $v$ and the strategy aborted in a state $u<v$, cannot exceed the probability that the game visits $v$. We prove that this is indeed a valid constraint for any strategy and also that any solution that satisfies this constraint can be mapped to a valid strategy. As all our constraints and the objective function are linear, this gives us a linear program that characterizes all strategies.

Formulating the linear program is just the first step. Although there are many methods for solving (exactly) a specific linear program, we are interested in (bounding) the \emph{asymptotic} behavior of the optimal solution of the program as a function of $m$. To bound the solution we construct an asymptotic feasible solution to the dual program. This gives (by weak duality) an upper bound on the optimal bias obtained by any strategy.

 \subsection{Additional Related Work}\label{sec:relatedWork}
 \citet{CleveI93} showed that in the \textit{fail-stop model}, any two-party $\rnd$-round coin-flipping protocol has bias $\Omega(1/\sqrt{\rnd})$; adversaries in this model are
 computationally unbounded, but they must follow the instructions of the protocol, except for being
 allowed to abort prematurely. \citet{DachmanLMM11} showed that the same holds for $o(n/\log n)$-round protocols in the random-oracle model ---  the parties have oracle access to a uniformly  chosen  function over $n$ bit strings.  
Very recenetly, \citet{MajiW20} showed that any black-box construction of $\rnd$-round two-party coin-flipping protocol from one-way functions, has bias $\Omega(1/\sqrt{\rnd})$.

  Recently,  \citet*{BHMO17}  have shown that \emph{any} $m$-round $t$-party coin-flipping with $t^k > m$ for some $k\in \N$, can be biased by $1/(\sqrt{m} \cdot  (\log m)^k)$. Ignoring logarithmic factors,  this means that if the number of parties is  $m^{\Omega(1)}$, the majority  protocol of  \cite{AwerbuchBCGM1985} is optimal. 
  Where  \citet*{HaitnerMO18} proved that for any \emph{fixed} $m$, key-agreement is a necessary assumption for \emph{two-party} $m$-round coin-flipping  protocol of bias smaller than $1/\sqrt{m}$.

 There is a vast literature concerning coin-flipping protocols with weaker security guarantees. Most notable among these are protocols that are \textit{secure with abort}. According to this security definition, if a cheat is detected or if one of the parties aborts, the remaining parties are not required to output anything. This form of security is meaningful in many settings, and it is typically much easier to achieve; assuming one-way functions exist, secure-with-abort protocols of negligible bias are known to exist against any number of corrupted parties \cite{Blum83,HNORV08,Naor91}. To a large extent, one-way functions are also necessary for such coin-flipping protocols \cite{BermanHT14,HaitnerOmri11,ImpagliazzoLu89,Maji10}.

 Coin-flipping protocols were also studied in a variety of other models. Among these are collective
 coin-flipping in the \textit{perfect information model}: parties are computationally unbounded
 and all communication is public \cite{AlonNaor93,BenLin89,Feige99a,RussellZuc98,Saks89,GoldwasserKP15,kalai2018a,HaitnerK20}, and protocols are based on physical assumptions, such
 as quantum computation \cite{AharonovEtAl00,Ambainis04,AmbainisBDR04} or tamper-evident seals \cite{MoranNaor05}.

 Perfectly fair coin-flipping protocols (\ie having zero bias) are a special case of protocols for \emph{fair} secure function evaluation (SFE). Intuitively, the security of such protocols guarantees that when the protocol terminates, either everyone receives the (correct) output of the functionality, or no one does. While \citet{Cleve86}'s result yields that some functions do not have fair SFE, it was recently shown that many interesting function families do have (perfectly) fair SFE \cite{gordonHKL11,Ash14,ABMO15}.

 \subsection{Open Problems}
 Finding the optimal bias  $t$-party coin-flipping protocol for $t>2$ remained the main open question in this area. While the gap  between the upper and lower bound for the three-party case is now quite small (\ie an $O(\sqrt{\log m})$ factor), the gap for $t>3$ is still rather large, and for  $t>\frac12 \loglog m$  the best protocol remains the  $t/\sqrt{m}$-bias protocol of \cite{AwerbuchBCGM1985}. For the three parties case, while we improved the upper bound of \cite{HaitnerT17}  by a $\polylog m$ factor, it is still open whether the remaining $O(\sqrt{\log m})$ factor is necessary for this case.
 

\subsection*{Acknowledgment}
We thank Eran Omri for very useful discussions. We also thank the anonymous referees for very  useful  comments regarding the readability of this text.

\ifdefined\SBM
\subsection*{Paper Organization}
The definition of the new many-party coin-flipping protocol and the formal statement of the main theorem are given in \cref{S:sec:protocol}.  Standard  notations and definitions, full details of the definition of the  protocol  along with its security proof, and the tools developed for the latter security proof, are given in the appendix.
\else
\subsection*{Paper Organization}
Notations and the definitions used throughout the paper are given in \cref{section:Preliminaries}.
Our coin-flipping protocol  along with its security proof are given in \cref{sec:protocol}. The proofs given in \cref{sec:protocol} use the bounds given in \cref{sec:ExpChangeDueLeakage} on the change  knowing the defense values has on the expected output of the protocol, and the new bounds on the bias of online binomial games given in \cref{sec:BinomialViaLP}. Missing proofs can be found in \cref{sec:missinProofs}.
\fi


\ifdefined\SBM

\else
\section{Preliminaries}\label{section:Preliminaries}

\subsection{Notation}\label{sec:prelim:notation}
We use calligraphic letters to denote sets, uppercase for random variables and functions,  lowercase for values, boldface for vectors and capital boldface for matrices. All logarithms considered here are in base two. For a vector $\vct$, we denote its $i$-th entry by $\vct_i$ or $\vct[i]$. For $a\in \R$ and $b\geq 0$,  let $a\pm b$ stand for the interval $[a-b,a+b]$. Given sets $\cs_1,\ldots,\cs_k$ and $k$-input function $f$, let  $f(\cs_1,\ldots,\cs_k) \eqdef \set{f(x_1,\ldots,x_j) \colon x_i\in \cs_i}$, \eg $f(1\pm 0.1) = \set{f(x) \colon x\in [.9,1.1]}$. For $n\in \N$, let $[n] \eqdef \set{1,\ldots,n}$ and $(n) \eqdef \set{0,\ldots,n}$. Given a vector $\vct \in \oo^\ast$, let $\w(\vct)\eqdef \sum_{i\in [\size{\vct}]} \vct_i$. Given a vector $\vct \in \oo^\ast$ and a set of indexes $\cI \subseteq [\size{\vct}]$, let $\vct_{\cI} = (\vct_{i_1},\ldots,\vct_{i_{\size{\cI}}})$ where $i_1,\ldots,i_{\size{\cI}}$ are the ordered elements of $\cI$. We let the XOR of two integers, stands for the \emph{bitwise} XOR of their bit representations, and we let $\sign \colon \R \mapsto \zo$ be the function that outputs one on non-negative input and zero otherwise.

Let $\poly$ denote the set all polynomials,  \ppt denote  for probabilistic  polynomial time, and  \pptm denote a \ppt algorithm (Turing machine).  A function $\nu \colon \N \to [0,1]$ is \textit{negligible}, denoted $\nu(n) = \negl(n)$, if $\nu(n)<1/p(n)$ for every $p\in\poly$ and large enough $n$.

\paragraph{Distributions.}
Given a distribution $D$, we write $x\gets D$ to indicate that $x$ is selected according to $D$. Similarly, given a random variable $X$, we write $x\gets X$ to indicate that $x$ is selected according to $X$. Given a finite set $\cs$, we let $s\la \cs$ denote that $s$ is selected according to the uniform distribution on $\cs$. The support of a distribution $D$ over a finite set $\Uni$, denoted $\Supp(D)$, is defined as $\set{u\in\Uni: D(u)>0}$. The \emph{statistical distance} of two distributions $P$ and $Q$ over a finite set $\Uni$, denoted as $\SD(P,Q)$, is defined as $\max_{\cs\subseteq \Uni} \size{P(\cs)-Q(\cs)} = \frac{1}{2} \sum_{u\in \Uni}\size{P(u)-Q(u)}$.

For $\delta\in [0,1]$, let $\Berzo{\delta}$ be the Bernoulli probability distribution over $\zo$, taking the value $1$ with probability $\delta$ and $0$ otherwise. For $\eps \in [-1,1]$, let  $\Beroo{\eps}$ be the Bernoulli probability distribution over $\oo$, taking the value $1$ with probability   $\frac{1}{2}(1+\eps)$ and $-1$ otherwise. For $n\in \N$ and $\eps \in [-1,1]$, let $\Beroo{n,\eps}$ be the binomial distribution induced by the sum of $n$ independent random variables, each distributed according to $\Beroo{\eps}$. For $n \in \N$, $\eps \in [-1,1]$ and $k \in \Z$, let $\vBeroo{n, \eps}(k) \eqdef \ppr{x\la \Beroo{n,\eps}}{x \geq  k} = \sum_{t=k}^{n}\Beroo{n,\eps}(t)$. For $n \in \N$ and $\delta \in [0,1]$, let $\sBias{n}{\delta}$ be the value $\eps \in [-1,1]$ with $\vBeroo{n, \eps}(0) = \delta$. For $n \in \N$, $\ell \in [n]$ and $p\in \set{-n,\dots,n}$, define the hyper-geometric probability distribution $\Hyp{n,p,\ell}$ by $\Hyp{n,p,\ell}(k) \eqdef \ppr{\cI}{\w(\vct_\cI) = k}$, where $\cI$ is an $\ell$-size set uniformly chosen from $[n]$ and $\vct \in \oo^n$ with $w(\vct)= p$. Let $\vHyp{n,p,\ell}(k) \eqdef \ppr{x\la \Hyp{n,p,\ell}}{x \geq  k} = \sum_{t=k}^{\ell}\Hyp{n,p,\ell}(t)$. Let $\Phi\colon \R \mapsto (0,1)$ be the cumulative distribution function of the standard normal distribution, defined by $\Phi(x) \eqdef \frac{1}{\sqrt{2\pi}}\int_{x}^{\infty}e^{-\frac{t^2}{2}}dt$. Finally, for $n\in \N$ and $i \in [n]$, let $\NL{i} \eqdef (n+1-i)^2$ and $\NS{i} \eqdef \sum_{j=i}^n \NL{j}$. We summarize the different notations used throughout the paper in the following tables.

\begin{table}[ht]
	\begin{center}
		\caption{\label{fig:summeryOfBasicFunctions} Basic functions.}
		\begin{tabular}{|c|c|c|}
			\hline
			\textit{Definition} & \textit{Input Range} & \textit{Output value} \\
			\hline
			$[n]$ & $n \in \N$ & $\set{1,\ldots,n}$\\
			\hline
			$(n)$ & $n \in \N$ & $\set{0,\ldots,n}$\\
			\hline	
			$\NL{i}$ & $n\in \N$, $i \in [n]$ & $(n+1-i)^2$\\
			\hline
			$\NS{i}$ & $n\in \N$, $i \in [n]$ & $\sum_{j=i}^n \NL{j}$\\
			\hline
			$\w(\vct)$ & $\vct \in \oo^\ast$ & $\sum_{i\in \cI} \vct_i$\\
			\hline
			$\vct_{\cI}$ & $\vct \in \oo^\ast$, $\cI \subseteq [\size{\vct}]$ and $i_1,\ldots,i_{\size{\cI}}$ are the ordered elements of $\cI$ & $(\vct_{i_1},\ldots,\vct_{i_{\size{\cI}}})$\\
			\hline
			$a \pm b$ & $a \in \R$, $b \geq 0$ & $[a-b, a+b]$\\
			\hline
			$\sign(x)$ & $x \in \R$ & $1$ for $x \geq 0$\\
			& & and $0$ otherwise. \\
			\hline
		\end{tabular}
	\end{center}
\end{table}


\begin{table}[ht]
	\begin{center}
		\caption{\label{fig:summeryOfDistributions} Distributions.}
		\begin{tabular}{|c|c|c|}
			\hline
			\textit{Distribution} & \textit{Input Range} & \textit{Description} \\
			\hline
			$\Berzo{\delta}$ & $\delta\in [0,1]$ & $1$ with probability $\delta$ and $0$ otherwise.\\
			\hline
			$\Beroo{\eps}$ & $\eps \in [-1,1]$ & $1$ with probability $\frac{1}{2}(1+\eps)$ and $-1$ otherwise\\
			\hline		
			$\Beroo{n,\eps}$ & $n\in \N$, $\eps \in [-1,1]$ & sum of $n$ independent $\Beroo{\eps}$ random variables\\
			\hline
			$\Hyp{n,p,\ell}$ & $n \in \N$, $p\in \set{-n,\dots,n}$, $\ell \in [n]$ & The value of $\w(\vct_{\cI})$, where:\\
			& & (1) $\cI$ is an $\ell$-size set uniformly chosen from $[n]$, and\\
			& & (2) $\vct \in \oo^n$ is an (arbitrary) vector with $\w(\vct) = p$\\
			\hline
			
		\end{tabular}
	\end{center}
\end{table}

\newpage

\begin{table}[ht]
	\begin{center}
		\caption{\label{fig:summeryOfOtherFunctions} Distributions related functions.}
		\begin{tabular}{|c|c|c|}
			\hline
			\textit{Definition} & \textit{Input Range} & \textit{Output value} \\
				\hline
				$\Phi(x)$ & $x \in \R$ & $\frac{1}{\sqrt{2\pi}}\int_{x}^{\infty}e^{-\frac{t^2}{2}}dt$\\
			\hline		
			$\Beroo{n,\eps}(k)$ & $n\in \N$, $\eps \in [-1,1]$, $k \in \Z$ & $\ppr{x \la \Beroo{n,\eps}}{x = k}$\\
			\hline
			$\vBeroo{n,\eps}(k)$ & $n\in \N$, $\eps \in [-1,1]$, $k \in \Z$ & $\ppr{x \la \Beroo{n,\eps}}{x \geq k}$\\
			\hline
			$\sBias{n}{\delta}$ & $n\in \N$, $\delta\in [0,1]$ & The value $\eps \in [-1,1]$ with $\vBeroo{n, \eps}(0) = \delta$\\
			\hline
			$\Hyp{n,p,\ell}(k)$ & $n \in \N$, $p\in \set{-n,\dots,n}$, $\ell \in [n]$, $k \in \Z$ & $\ppr{x \la \Hyp{n,p,\ell}}{x = k}$\\
			\hline
			$\vHyp{n,p,\ell}(k)$ & $n \in \N$, $p\in \set{-n,\dots,n}$, $\ell \in [n]$, $k \in \Z$ & $\ppr{x \la \Hyp{n,p,\ell}}{x \geq k}$\\
			\hline
		\end{tabular}
	\end{center}
\end{table}

\subsection{Facts About the Binomial Distribution}\label{sec:prelim:Binomial}

\begin{fact}[Hoeffding's inequality for $\oo$]\label{claim:Hoeffding}
	Let $n,t \in \N$ and $\eps \in [-1,1]$. Then
	\begin{align*}
	\ppr{x\la \Beroo{n,\eps}}{\abs{x-\eps n} \geq t} \leq 2e^{-\frac{t^2}{2n}}.
	\end{align*}
\end{fact}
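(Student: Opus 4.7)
The statement is just the classical Hoeffding inequality specialized to $\pm 1$-valued variables with mean $\epsilon$, so the plan is to follow the standard Chernoff-style moment generating function argument. Write $x = \sum_{i=1}^n X_i$ where the $X_i$ are i.i.d.\ with $\Pr[X_i = 1] = (1+\epsilon)/2$ and $\Pr[X_i = -1] = (1-\epsilon)/2$, so $\Exp[X_i] = \epsilon$. Set $Y_i = X_i - \epsilon$, which is mean zero and supported in the interval $[-1-\epsilon, 1-\epsilon]$ of length $2$. Then $x - \epsilon n = \sum_{i=1}^n Y_i$.

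The core tool I would invoke is Hoeffding's lemma: for any mean-zero random variable $Y$ with $Y \in [a,b]$ almost surely, $\Exp[e^{\lambda Y}] \le \exp(\lambda^2 (b-a)^2 / 8)$ for every $\lambda \in \R$. Applied to each $Y_i$ with $b - a = 2$, this gives $\Exp[e^{\lambda Y_i}] \le e^{\lambda^2/2}$. By independence, $\Exp[e^{\lambda \sum_i Y_i}] \le e^{n \lambda^2 / 2}$.

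Now I would bound the upper tail by a Chernoff step: for any $\lambda > 0$,
\begin{align*}
\Pr\!\left[\sum_{i=1}^n Y_i \ge t\right] \;\le\; e^{-\lambda t}\, \Exp\!\left[e^{\lambda \sum_i Y_i}\right] \;\le\; \exp\!\left(-\lambda t + \tfrac{n \lambda^2}{2}\right).
\end{align*}
Optimize by choosing $\lambda = t/n$ (the minimizer of the quadratic exponent), which yields $\exp(-t^2/(2n))$. The same argument applied to $-Y_i$ (still mean zero, supported in an interval of length $2$) gives the lower-tail bound $\exp(-t^2/(2n))$. A union bound over the two tails produces the desired factor of $2$.

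The only nontrivial ingredient is Hoeffding's lemma itself, which I would either cite as a standard fact or prove in one line by convexity: since $e^{\lambda y}$ is convex, for $y \in [a,b]$ one has $e^{\lambda y} \le \tfrac{b-y}{b-a} e^{\lambda a} + \tfrac{y-a}{b-a} e^{\lambda b}$; taking expectations (using $\Exp[Y]=0$) and then bounding $\ln$ of the resulting expression by $\lambda^2(b-a)^2/8$ via a Taylor expansion of the cumulant generating function of the symmetric Bernoulli on $\{a,b\}$ finishes it. I do not anticipate any real obstacle; the only thing to be careful about is that $Y_i$ is shifted (not symmetric around $0$), but Hoeffding's lemma only needs the length of the supporting interval and mean zero, both of which hold here.
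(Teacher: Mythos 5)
Your proof is correct, and it is essentially the same route: the paper simply cites \citet{Hoeffding1963} for this fact, and what you have written out is precisely the standard Chernoff--Hoeffding argument from that reference (center the $\pm 1$ variables, apply Hoeffding's lemma with interval length $2$, optimize $\lambda = t/n$, union-bound the two tails for the factor $2$). No gaps.
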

\begin{proof}
	Immediately follows by  \cite{Hoeffding1963}.
\end{proof}

The following proposition is proven in \cite{HaitnerT17}.
\def\propBinomProbEstimation{Let $n \in \N$, $t \in \Z$ and $\eps \in [-1,1]$ be such that $t \in \Supp(\Beroo{n, \eps})$, $\size{t} \leq n^{\frac{3}{5}}$ and $\size{\eps} \leq  n^{-\frac{2}{5}}$. Then
	\begin{align*}
	\Beroo{n, \eps}(t) \in  (1 \pm \error) \cdot \sqrt{\frac{2}{\pi}} \cdot \frac{1}{\sqrt{n}}\cdot  e^{-\frac{(t-\eps n)^2}{2n}},
	\end{align*}
	for $\error  = \xi \cdot (\eps^2 \abs{t} + \frac{1}{n} + \frac{\abs{t}^3}{n^2} + \eps^4 n)$ and a universal constant $\xi$.}

\begin{proposition}\label{prop:binomProbEstimation}
	\propBinomProbEstimation
\end{proposition}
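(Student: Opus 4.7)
The plan is to express $\Beroo{n,\eps}(t)$ in closed form as a binomial probability and then match it to the claimed Gaussian via Stirling's formula combined with a Taylor expansion of the Kullback--Leibler divergence. Since $t\in\Supp(\Beroo{n,\eps})$ forces $k\eqdef (n+t)/2$ to be an integer in $\set{0,\dots,n}$, one has
\begin{align*}
\Beroo{n,\eps}(t) \;=\; \binom{n}{k}\left(\tfrac{1+\eps}{2}\right)^{k}\left(\tfrac{1-\eps}{2}\right)^{n-k}.
\end{align*}
The assumption $\size{t}\le n^{3/5}$ guarantees $k,n-k=\Theta(n)$, so applying $m! = \sqrt{2\pi m}\,(m/e)^m(1+O(1/m))$ to $n!, k!, (n-k)!$ introduces only a factor $1\pm O(1/n)$ and, using $k(n-k)=(n^2-t^2)/4$, yields
\begin{align*}
\Beroo{n,\eps}(t) \;=\; (1\pm O(1/n))\cdot \sqrt{\frac{2}{\pi n(1-t^2/n^2)}}\cdot \exp\bigl(-n\cdot D(q\Vert p)\bigr),
\end{align*}
where $q\eqdef k/n$, $p\eqdef(1+\eps)/2$, and $D(\cdot\Vert\cdot)$ is the binary Kullback--Leibler divergence.

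Next I would Taylor-expand $D(q\Vert p)$ about $q=p$. With $q-p=(t-\eps n)/(2n)$ (which is $O(n^{-2/5})$ under the hypotheses) and $p(1-p)=(1-\eps^2)/4$, the quadratic term of the expansion times $n$ equals exactly $(t-\eps n)^2/(2n(1-\eps^2))$. Writing $1/(1-\eps^2) = 1 + \eps^2/(1-\eps^2)$ isolates the desired Gaussian exponent $(t-\eps n)^2/(2n)$, leaving a residual of size $\eps^2(t-\eps n)^2/n$. The cubic Taylor term contributes $O(\eps\cdot (t-\eps n)^3/n^2)$ since $\partial_q^3 D|_{q=p}=O(\eps/(1-\eps^2)^2)$, and higher-order terms contribute $O((t-\eps n)^k/n^{k-1})$ for $k\ge 4$, which are strictly dominated by the cubic one in our parameter regime.

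The main obstacle is bookkeeping the additive errors inside the exponent so that they collect into the four contributions $\eps^2\size{t}$, $1/n$, $\size{t}^3/n^2$, and $\eps^4 n$. The key algebraic inputs are the inequality $(t-\eps n)^2\le 2t^2+2\eps^2 n^2$ together with $t^2/n\le \size{t}$ for integer $t$ with $\size{t}\le n$, which turn the residual $\eps^2(t-\eps n)^2/n$ into $O(\eps^2\size{t}+\eps^4 n)$; expanding $(t-\eps n)^3$ and using $\size{\eps}\le 1$ similarly rewrites the cubic Taylor contribution as $O(\size{t}^3/n^2+\eps^2\size{t}+\eps^4 n)$; and the prefactor correction $(1-t^2/n^2)^{-1/2}-1=O(t^2/n^2)$ is absorbed into $O(1/n+\size{t}^3/n^2)$. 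Exponentiating back through $e^{x}\in 1\pm 2\size{x}$ for $\size{x}\le 1$ and combining with the Stirling $1\pm O(1/n)$ prefactor yields the stated multiplicative error bound $\error = \xi(\eps^2\size{t}+1/n+\size{t}^3/n^2+\eps^4 n)$ for a universal constant $\xi$.
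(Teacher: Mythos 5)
The paper itself does not prove this proposition; it simply cites~\cite{HaitnerT17} (see the line ``The following proposition is proven in \cite{HaitnerT17}'' just above \cref{prop:binomProbEstimation}), so there is no in-paper argument to compare against. Reviewing your argument on its own terms: the overall route — write $\Beroo{n,\eps}(t)$ as $\binom{n}{k}p^k(1-p)^{n-k}$, apply Stirling to obtain the $(1\pm O(1/n))\cdot\sqrt{2/(\pi n(1-t^2/n^2))}\cdot e^{-nD(q\Vert p)}$ form, then Taylor-expand $D(q\Vert p)$ about $q=p$ with $q-p=(t-\eps n)/(2n)$ — is the standard and correct proof of such local CLT estimates, and your bookkeeping of the residuals ($\eps^2(t-\eps n)^2/n\in O(\eps^2|t|+\eps^4 n)$ via $(t-\eps n)^2\le 2t^2+2\eps^2n^2$ and $t^2/n\le|t|$, and the cubic term yielding $O(|t|^3/n^2+\eps^2|t|+\eps^4 n)$) is sound. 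The only inaccuracy is the claim that the $k\ge 4$ Taylor remainders, of order $(t-\eps n)^k/n^{k-1}$, are ``strictly dominated by the cubic one'': that cannot be right, since the cubic term carries the extra factor $\eps$ (because $\partial_q^3 D|_{q=p}\propto 2p-1=\eps$) and vanishes at $\eps=0$ while the quartic term $(t-\eps n)^4/n^3=t^4/n^3$ does not. This does not break the proof, because the quartic remainder is itself absorbed directly into the error budget: $(t-\eps n)^4/n^3\le 8t^4/n^3+8\eps^4 n\le 8|t|^3/n^2+8\eps^4 n$ using $|t|\le n$, and the still-higher terms are smaller by powers of $(t-\eps n)/n=O(n^{-2/5})$. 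You should just state that bound directly rather than appeal to domination by the cubic term.
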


The following propositions are proven in \cref{app:missinProofs:Binomial}.

\def\propBinomTailExpectation{
	Let $n \in \N$, $\eps \in [-1,1]$ and let $\mu \eqdef \ex{x \la \Beroo{n,\eps}}{x} = \eps\cdot n$. Then for every $k > 0$ it holds that
	\begin{enumerate}
		\item $\ex{x \la \Beroo{n,\eps} \mid \size{x-\mu} \leq k}{(x-\mu)^2} \leq \ex{x \la \Beroo{n,\eps}}{(x-\mu)^2} \leq n$.\label{prop:binomTailExpectation:Item1}
		
		\item $\ex{x \la \Beroo{n,\eps} \mid \size{x-\mu} \leq k}{\size{x-\mu}} \leq \ex{x \la \Beroo{n,\eps}}{\size{x-\mu}} \leq \sqrt{n}$.\label{prop:binomTailExpectation:Item2}
	\end{enumerate}
}

\begin{proposition}\label{prop:binomTailExpectation}
	\propBinomTailExpectation
\end{proposition}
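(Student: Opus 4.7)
The plan is to handle each item by separating it into the two inequalities, both of which follow from elementary observations.

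For the first inequality in each item (the one bounding the conditional expectation by the unconditional one), I would invoke the following general fact: if $Y$ is a non-negative random variable and $\cA$ is the event $\set{Y \leq c}$ for some $c \geq 0$, then $\Exp[Y \mid \cA] \leq \Exp[Y]$. This is immediate from the law of total expectation, since on $\cA^c$ we have $Y \geq c \geq \Exp[Y \mid \cA]$, hence
\begin{align*}
\Exp[Y] = \Exp[Y \mid \cA]\cdot\Pr[\cA] + \Exp[Y \mid \cA^c]\cdot\Pr[\cA^c] \geq \Exp[Y \mid \cA].
\end{align*}
Applying this with $Y = (x-\mu)^2$ and $c = k^2$ (noting that $\size{x-\mu}\leq k$ iff $(x-\mu)^2 \leq k^2$) yields the first inequality of item~\ref{prop:binomTailExpectation:Item1}, and applying it with $Y = \size{x-\mu}$ and $c = k$ yields the first inequality of item~\ref{prop:binomTailExpectation:Item2}.

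For the second inequality of item~\ref{prop:binomTailExpectation:Item1}, I would simply compute the variance of $\Beroo{n,\eps}$: since $\Beroo{n,\eps}$ is the sum of $n$ independent $\Beroo{\eps}$ variables, each of which has variance $1-\eps^2 \leq 1$, additivity of variance gives $\Exp_{x \la \Beroo{n,\eps}}[(x-\mu)^2] = n(1-\eps^2) \leq n$. For the second inequality of item~\ref{prop:binomTailExpectation:Item2}, I would apply Jensen's inequality (equivalently, Cauchy-Schwarz) to conclude $\Exp[\size{x-\mu}] \leq \sqrt{\Exp[(x-\mu)^2]} \leq \sqrt{n}$, using the bound just established.

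There is no real obstacle here: the proposition is a basic collection of facts about the binomial distribution, and both inequalities are one-line consequences of standard tools. The only mild subtlety is writing the first inequality as a clean general lemma so as to apply it uniformly for both $Y = (x-\mu)^2$ and $Y = \size{x-\mu}$, rather than repeating the argument twice.
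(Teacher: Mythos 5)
Your proposal is correct and follows essentially the same route as the paper: the paper also obtains the left inequalities by the law of total expectation together with the observation that the conditional expectation on $\set{\size{x-\mu}>k}$ dominates that on $\set{\size{x-\mu}\leq k}$, and the right inequalities by computing $\Var[\Beroo{n,\eps}] = n(1-\eps^2) \leq n$ and then Jensen. The only cosmetic difference is that you factor the monotonicity step into a general lemma about events of the form $\set{Y\leq c}$, whereas the paper proves it once (for $\size{x-\mu}$) and declares the other case analogous.
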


\def\propBinomEps{Let $n, n' \in \N$, $k\in \Z$, $\eps \in [-1,1]$ and $\const > 0$ be such that $n \leq n'$, $\abs{k} \leq \const \cdot \sqrt{n \log n}$, $\size{\eps}  \leq \const \cdot \sqrt{\frac{\log n}{n}}$, and let $\delta = \vBeroo{n,\eps}(k)$. Then
	\begin{align*}
	\sBias{n'}{\delta} \in \frac{\eps n - k}{\sqrt{n\cdot n'}} \pm \error,
	\end{align*}
	for $\error = \varphi(\const) \cdot \frac{\log^{1.5} n}{\sqrt{n\cdot n'}}$ and a universal function $\varphi$.}

\begin{proposition}\label{prop:epsDiff}
	\propBinomEps
\end{proposition}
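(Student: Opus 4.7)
The plan is to use the local limit theorem of \cref{prop:binomProbEstimation} to approximate both $\vBeroo{n,\eps}(k)$ and $\vBeroo{n',\eps'}(0)$ \emph{multiplicatively} by the corresponding Gaussian upper-tail values, and then to invert. Setting $A \eqdef (k - \eps n)/\sqrt{n}$ and $B \eqdef -\eps' \sqrt{n'}$, the asserted bound is equivalent to $|A - B| \leq \varphi(\const) \cdot \log^{1.5} n / \sqrt{n}$, which, after dividing by $\sqrt{n'}$, gives the claim.

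First I would sum the pointwise estimate of \cref{prop:binomProbEstimation} over $t \in \Supp(\Beroo{n,\eps})$ with $|t - \eps n| \leq T \eqdef 3\const \sqrt{n \log n}$, using that in this range each summand of $\error(t)$ is $O(\log^{1.5} n/\sqrt n)$ (the dominant contributions being $\eps^2 |t|$ and $|t|^3/n^2$ once the hypotheses on $\eps$ and $k$ are substituted in), and approximate the resulting sum over the $2$-spaced lattice by the Gaussian integral $\Phi(A)$ up to an Euler--Maclaurin boundary correction. Bounding the mass beyond $|t - \eps n| > T$ via \cref{claim:Hoeffding}, these steps yield
\begin{align*}
\vBeroo{n,\eps}(k) \;=\; \Phi(A)\,\bigl(1 + O(\log^{1.5}n/\sqrt n)\bigr),
\end{align*}
where the implicit constant depends only on $\const$. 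Before applying the same derivation to $\vBeroo{n',\eps'}(0)$, I would use Hoeffding's inequality together with the range of $\delta$ just computed (specifically, $\min(\delta,1-\delta) = \Omega(n^{-2\const^2}/\poly(\log n))$, obtained by plugging in the extreme values of $A$) to deduce $|\eps'| = O(\const\sqrt{\log n / n'})$, placing $\eps'$ inside the hypothesis regime of \cref{prop:binomProbEstimation}; the same argument then gives $\vBeroo{n',\eps'}(0) = \Phi(B)(1 + O(\log^{1.5} n'/\sqrt{n'}))$. Combined, and using $n \leq n'$, these yield $\Phi(A)/\Phi(B) = 1 + O(\log^{1.5} n/\sqrt n)$; a symmetric derivation, applied to $1-\delta$ via the identity $1 - \vBeroo{n,\eps}(k) = \vBeroo{n,-\eps}(-k+O(1))$, gives $\Phi(-A)/\Phi(-B) = 1 + O(\log^{1.5} n/\sqrt n)$.

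To convert these multiplicative ratios into the additive bound on $|A - B|$, I would split on whether $\delta \leq 1/2$ or $\delta > 1/2$. In the first case both $A$ and $B$ are nonnegative (up to a lower-order correction absorbable into $\varphi(\const)$), and applying the mean value theorem to $\log \Phi$ combined with the Mills-type inequality $\phi(x)/\Phi(x) \geq \sqrt{2/\pi}$ valid for all $x \geq 0$ yields $|A - B| \leq \sqrt{\pi/2}\cdot|\log(\Phi(A)/\Phi(B))| = O(\log^{1.5} n/\sqrt n)$. The case $\delta > 1/2$ is handled identically using $-A, -B$ and the ratio $\Phi(-A)/\Phi(-B)$.

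The main obstacle I expect is establishing the \emph{multiplicative} rather than merely additive form of the tail estimate in step two: since $\delta$ may be as small as $n^{-\Theta(\const^2)}$, an additive $O(\log^{1.5}n/\sqrt n)$ error would swamp $\Phi(A)$ itself and render Mills' ratio useless in the inversion. The saving grace is that within the truncated range $|t-\eps n|\le T$ the pointwise relative error $\error(t)$ is uniformly $O(\log^{1.5}n/\sqrt n)$, so $\sum_{t\ge k}\error(t)\cdot\Beroo{n,\eps}(t)$ is bounded by $O(\log^{1.5}n/\sqrt n)\cdot\vBeroo{n,\eps}(k)$ rather than by an absolute quantity; care is needed to check that this uniform bound still holds for $t$ close to but greater than $k$ when $k$ itself is near the boundary of the truncation window.
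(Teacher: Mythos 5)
Your proof is correct, and it organizes the argument somewhat differently from the paper's. The paper routes the computation through the ``bell-like distribution'' abstraction of \cref{app:missinProofs:Dn:Properties}: \cref{app:missinProofs:generalGameValueEstimation} gives the approximation $\vDst(k) \in \Phi\bigl(\frac{k-\mu}{\sqrt{\vr}}\bigr) \pm O\bigl(\frac{\log^{1.5}\vr}{\sqrt{\vr}}\bigr)\cdot e^{-(k-\mu)^2/(2\vr)}$, then \cref{app:missinProofs:generalPhiMinusOne} inverts by sandwiching $\delta$ between $\Phi\bigl(\frac{k-\mu\pm\Delta}{\sqrt{\vr}}\bigr)$ with $\Delta=\Theta(\log^{1.5}\vr)$, and the prior-work lemma \cref{app:missinProofs:epsDiffImproved} translates $\Phi^{-1}(\delta)$ into $\sBias{n'}{\delta}$. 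You instead apply \cref{prop:binomProbEstimation} symmetrically to both binomials to get \emph{multiplicative} $\Phi$-approximations of $\vBeroo{n,\eps}(k)$ and $\vBeroo{n',\eps'}(0)$, take the ratio, and invert via the mean value theorem on $\log\Phi$ combined with the inverse-Mills bound $-\Phi'(x)/\Phi(x)\ge\sqrt{2/\pi}$ for $x\ge0$, which itself follows from the upper bound in \cref{app:missinProofs:normalBound}. Your route is more self-contained --- it reproves the content of \cref{app:missinProofs:epsDiffImproved} along the way --- and the Mills-ratio inversion is a tidier trick than the explicit sandwiching. The obstacle you flag at the end is exactly the right one, and your resolution is sound: $\error(t)=O(\log^{1.5}n/\sqrt n)$ uniformly inside the truncation window by the stated hypotheses, and the Euler--Maclaurin boundary term $O(e^{-A^2/2}/\sqrt n)$ divided by $\Phi(A)=\Omega\bigl(e^{-A^2/2}/(\const\sqrt{\log n})\bigr)$ is a relative error $O(\sqrt{\log n/n})$, which is dominated. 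One small point to tighten: for $\const<1$ the radius $T=3\const\sqrt{n\log n}$ leaves a Hoeffding tail $n^{-4.5\const^2}$ that need not be small relative to $\Phi(A)=\Omega(n^{-2\const^2-o(1)})$; this is harmless, since you may assume $\const\ge1$ without loss of generality (enlarging $\const$ only weakens the hypotheses), or simply take $T=\max(3,3\const)\sqrt{n\log n}$ --- the paper handles the same issue implicitly in its choice of $\ell$ in the proof of \cref{app:missinProofs:generalGameValueEstimation}.
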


\subsection{Facts About the Hypergeometric Distribution}\label{sec:prelim:HypGeo}

\begin{fact}[Hoeffding's inequality for hypergeometric distribution]\label{fact:hyperHoeffding}
	Let $\ell \leq n \in \N$, and $p \in \Z$ with $\size{p}\ \leq n$. Then
	$$\ppr{x\la \Hyp{n,p,\ell}}{{\abs{x-\mu}} \geq t} \leq e^{-\frac{t^2}{2\ell}},$$
	for $\mu = \ex{x\la \Hyp{n,p,\ell}}{x} = \frac{\ell  \cdot p}{n}$.
\end{fact}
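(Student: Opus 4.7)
The plan is to reduce the hypergeometric tail bound to the corresponding bound for sampling with replacement, and then apply the ordinary Chernoff-Hoeffding bound for sums of independent bounded random variables (which is exactly what \cref{claim:Hoeffding} gives).

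First I would unpack the distribution: by definition, $\Hyp{n,p,\ell}$ is the distribution of $\w(\vct_\cI) = \sum_{j \in \cI} \vct_j$, where $\vct \in \oo^n$ is any fixed vector with $\w(\vct) = p$ and $\cI$ is a uniformly random $\ell$-subset of $[n]$. Equivalently, if $Y_1,\ldots,Y_\ell$ denote the values obtained by drawing $\ell$ elements of $\vct$ \emph{without replacement}, then $\w(\vct_\cI) = \sum_{i=1}^\ell Y_i$, and its mean is $\mu = \ell p / n$ by linearity of expectation.

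The key tool is Hoeffding's classical comparison lemma (\cite{Hoeffding1963}, Theorem~4): for any continuous convex function $\phi$, $\ex{}{\phi(\sum_{i=1}^\ell Y_i)} \leq \ex{}{\phi(\sum_{i=1}^\ell Z_i)}$, where $Z_1,\ldots,Z_\ell$ are i.i.d.\ samples drawn uniformly from the multiset $\{\vct_1,\ldots,\vct_n\}$ (\ie sampling \emph{with} replacement). Applying this with $\phi(x) = e^{s(x-\mu)}$ for $s > 0$ shows that the moment generating function of the centered hypergeometric sum is dominated by that of a sum of i.i.d.\ $\oo$-valued variables with the same mean.

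From here the bound follows by the standard Chernoff argument: by Markov's inequality, $\pr{\sum Y_i - \mu \geq t} \leq e^{-st} \cdot \ex{}{e^{s(\sum Y_i - \mu)}} \leq e^{-st} \cdot \ex{}{e^{s(\sum Z_i - \mu)}}$, and since the $Z_i$ are independent bounded random variables in $\oo$ with mean $p/n$, the right-hand side is the MGF estimate that underlies \cref{claim:Hoeffding}; optimizing over $s$ yields $e^{-t^2/(2\ell)}$. A symmetric argument handles the lower tail, and the two-sided bound follows (note the statement has no factor of $2$ in front because $\vct$ is already $\pm 1$-valued, yielding variables of range $2$, which combined with the standard Hoeffding derivation gives exactly the $\ell$ in the denominator). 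The only subtle step is invoking the without-replacement-to-with-replacement comparison; everything else is routine Chernoff.
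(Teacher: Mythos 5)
Your approach is a genuine alternative to the paper's, which ``proves'' this fact purely by citing \cite[Equations (10),(14)]{scala2009hypergeometric}; you instead give a self-contained derivation via the classical route --- Hoeffding's without-replacement-to-with-replacement comparison lemma to dominate the centered moment generating function, followed by the standard Chernoff/Markov argument --- and this is exactly the argument behind the cited reference, so the route is sound for the one-sided tail.

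There is, however, a concrete gap in your final parenthetical. You claim the two-sided bound carries no prefactor ``because $\vct$ is already $\pm1$-valued, yielding variables of range $2$,'' but that is a conflation: the range $2$ is what produces the $2\ell$ in the \emph{denominator} of the exponent (since $\sum_i (b_i-a_i)^2 = 4\ell$ gives $e^{-2t^2/(4\ell)} = e^{-t^2/(2\ell)}$ for one tail), whereas the prefactor $2$ for the two-sided event $\set{\abs{x-\mu}\ge t}$ comes from the union bound over the upper and lower tails, and the range argument does nothing to remove it. Your derivation, carried out honestly, yields $\ppr{x\la \Hyp{n,p,\ell}}{\abs{x-\mu}\ge t}\le 2e^{-t^2/(2\ell)}$. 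Note that the cited Skala equations are themselves one-sided, so the prefactor-free two-sided bound as stated in \cref{fact:hyperHoeffding} is not immediately delivered by either your argument or the paper's citation; you should either restrict to a one-sided bound, state the two-sided bound with the factor $2$, or supply a genuinely different argument for why the factor can be dropped, rather than attributing it to the range normalization.
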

\begin{proof}
	Immediately follows by  \cite[Equations (10),(14)]{scala2009hypergeometric}.
\end{proof}

The following propositions are proven in \cref{app:missinProofs:HypGeo}.

\def\propHyperProbTightEstimation{Let $n \in \N$, $\ell\in[\floor{\frac{n}2}]$, $p, t\in \Z$ and $\const>0$ be such that $\size{p}\leq \const\cdot\sqrt{n\log n}$, $\size{t} \leq \const\cdot\sqrt{\ell\log \ell}$ and $t \in \Supp(\Hyp{n, p, \ell})$. Then
	\begin{align*}
	\Hyp{n, p, \ell}(t) = (1 \pm \error) \cdot \sqrt{\frac{2}{\pi}} \cdot \frac{1}{\sqrt{\ell(1-\frac{\ell}{n})}}\cdot e^{-\frac{(t-\frac{p\ell}{n})^2}{2\ell(1-\frac{\ell}{n})}},
	\end{align*}
	for $\error= \varphi(\const) \cdot \frac{\log^{1.5} \ell}{\sqrt{\ell}}$ and a universal function $\varphi$.}

\begin{proposition}\label{prop:hyperProbTightEstimation}
	\propHyperProbTightEstimation
\end{proposition}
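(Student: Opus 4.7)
The plan is to adapt the Stirling-plus-Taylor strategy behind the binomial analogue (Proposition~\ref{prop:binomProbEstimation}) to the three binomial coefficients appearing in the exact hypergeometric PMF. Setting $n_\pm \eqdef (n\pm p)/2$ and $k_\pm \eqdef (\ell\pm t)/2$, the combinatorial definition of $\Hyp{n,p,\ell}$ immediately gives
\[
\Hyp{n,p,\ell}(t) = \frac{\binom{n_+}{k_+}\binom{n_-}{k_-}}{\binom{n}{\ell}}.
\]
Under the hypotheses on $p$ and $t$, all six of $n_\pm,\,k_\pm,\,n_\pm-k_\pm$ are comparable to $n/2,\,\ell/2,\,(n-\ell)/2$ respectively, so Stirling's formula $\log j! = j\log j - j + \tfrac12\log(2\pi j) + O(1/j)$ is applicable with aggregate remainder $O(1/\ell)$. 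Regrouping the Stirling output splits $\log\Hyp{n,p,\ell}(t)$ into a \emph{prefactor} part $P$ (the $\tfrac12\log(2\pi j)$ pieces) plus an \emph{entropy} part $E$ (the $j\log j$ pieces).

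For $P$, I use the factorizations $n_+ n_- = (n^2-p^2)/4$, $k_+ k_- = (\ell^2-t^2)/4$, and $(n_+-k_+)(n_--k_-) = ((n-\ell)^2-(p-t)^2)/4$ to exhibit $\exp(P)$ as $\sqrt{2/(\pi\ell(1-\ell/n))}$ times correction factors $(1-x^2)^{\pm 1/2}$ whose arguments are all $O(\log\ell/\ell)$ under our hypotheses. For $E$, I apply the symmetric Taylor identity
\[
(u+\delta)\log(u+\delta) + (u-\delta)\log(u-\delta) - 2u\log u = \frac{\delta^2}{u} + \frac{\delta^4}{6u^3} + O\bigl(\delta^6/u^5\bigr)
\]
to each of $(u,\delta)\in\{(n/2,p/2),\,(\ell/2,t/2),\,((n-\ell)/2,(p-t)/2)\}$. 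The three ``$2u\log u$'' contributions telescope against the $-n\log n + \ell\log\ell + (n-\ell)\log(n-\ell)$ correction (the $\log 2$ pieces cancelling), so $E$ reduces to $p^2/(2n) - t^2/(2\ell) - (p-t)^2/(2(n-\ell))$ plus Taylor remainders.

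The crux of the proof is the algebraic identity
\[
\frac{p^2}{2n} - \frac{t^2}{2\ell} - \frac{(p-t)^2}{2(n-\ell)} = -\frac{(tn - p\ell)^2}{2n\ell(n-\ell)} = -\frac{(t - p\ell/n)^2}{2\ell(1-\ell/n)},
\]
obtained by clearing the common denominator $2n\ell(n-\ell)$ and recognizing the perfect square in the numerator. This delivers exactly the Gaussian exponent of the claim. Aggregating the three error sources gives Stirling remainder $O(1/\ell)$, prefactor correction $O(\log\ell/\ell)$, and Taylor correction $O(p^4/n^3 + t^4/\ell^3 + (p-t)^4/(n-\ell)^3)$; under $|p|\leq\const\sqrt{n\log n}$, $|t|\leq\const\sqrt{\ell\log\ell}$ and $n-\ell\geq n/2$ (since $\ell\leq\lfloor n/2\rfloor$), this simplifies to $O(\const^4\log^2\ell/\ell)$. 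All three bounds are dominated by $\varphi(\const)\log^{1.5}\ell/\sqrt{\ell}$ for a suitable universal function $\varphi$, and exponentiation converts the additive bound on $E+P$ into the claimed multiplicative error on $\Hyp{n,p,\ell}(t)$.

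The main obstacle is bookkeeping: the three symmetric Taylor expansions must telescope precisely so that only the Gaussian exponent survives after combining the pairs, and each additive error $\eta$ in $E+P$ must be shown uniformly $o(1)$ so that $e^{\eta} = 1 + O(\eta)$ is valid. Once those mechanical checks are in place, the factorization collapsing $p^2/(2n) - t^2/(2\ell) - (p-t)^2/(2(n-\ell))$ into $-(t-p\ell/n)^2/(2\ell(1-\ell/n))$ does all of the remaining work, and the resulting error is well within the loose bound $\varphi(\const)\log^{1.5}\ell/\sqrt{\ell}$ stated in the proposition.
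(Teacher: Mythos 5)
Your proof is correct, but it routes through the computation differently from the paper. The paper's argument (in \cref{app:missinProofs:hyperProbEstimation,app:missinProofs:hyperProbTightEstimation}) begins with the algebraic rewrite \cref{app:missinProofs:hyperProbEquation}, converting $\Hyp{n,p,\ell}(t)$ from the raw form $\binom{n_+}{k_+}\binom{n_-}{k_-}/\binom{n}{\ell}$ into a ratio of three \emph{almost-central} binomial coefficients $\binom{\ell}{\frac{\ell+t}{2}}\cdot\binom{n-\ell}{\frac{(n-\ell)+(p-t)}{2}}/\binom{n}{\frac{n+p}{2}}$, so that a single prepackaged lemma (\cref{app:missinProofs:binomCoeffEstimation}, the same one that powers the binomial analogue \cref{prop:binomProbEstimation}) can be invoked three times before the final regrouping of exponents. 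You instead keep the original product form, expand all nine factorials by Stirling, and organize the result into a prefactor and an entropy part, the latter handled by the symmetric Taylor identity $(u+\delta)\log(u+\delta)+(u-\delta)\log(u-\delta)-2u\log u=\delta^2/u+O(\delta^4/u^3)$ for each of the three pairings. Both approaches converge on the same closing identity
\[
\frac{p^2}{2n}-\frac{t^2}{2\ell}-\frac{(p-t)^2}{2(n-\ell)}=-\frac{(tn-p\ell)^2}{2n\ell(n-\ell)}=-\frac{(t-p\ell/n)^2}{2\ell(1-\ell/n)},
\]
which delivers the Gaussian exponent, and both yield errors — your $O(1/\ell)$ Stirling remainder, $O(\log\ell/\ell)$ prefactor correction, and $O(\const^4\log^2\ell/\ell)$ Taylor remainder versus the paper's $\xi(\frac1\ell+\frac{|t|^3}{\ell^2}+\frac{|p|^3}{n^2})$ — that are comfortably dominated by $\varphi(\const)\log^{1.5}\ell/\sqrt{\ell}$. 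The paper's rewrite buys modularity and reuse of the binomial-coefficient estimate already in its toolkit; your direct expansion is more self-contained and makes the telescoping of the three Taylor expansions explicit. Your use of $\ell\le\lfloor n/2\rfloor$ to enforce $n-\ell\ge n/2$ (controlling the third Taylor remainder and the corresponding prefactor argument) is exactly the step needed, and your observation that small $\ell$ can be absorbed into $\varphi(\const)$ mirrors the paper's handling via the threshold $\vartheta(\const)$.
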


\def\propHyperToNormal{Let $n \in \N$, $\ell \in [\floor{\frac{n}2}]$, $p, k\in [n]$ and $\const > 0$ be such that $\size{p} \leq \const\cdot \sqrt{n\log{n}}$ and $\size{k} \leq \const\cdot \sqrt{\ell\log{\ell}}$. Then
	\begin{align*}
		\vHyp{n, p, \ell}(k) \in \Phi\left(\frac{k-\frac{p\cdot \ell}{n}}{\sqrt{\ell(1-\frac{\ell}{n})}}\right) \pm \error,
	\end{align*}
	where $\error = \varphi(\const) \cdot \frac{\log^{1.5} \ell}{\sqrt{\ell}}$ for some universal function $\varphi$.}

\begin{proposition}\label{prop:hyperToNormal}
	\propHyperToNormal
\end{proposition}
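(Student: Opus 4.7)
The plan is to derive the result from the pointwise density estimate of \cref{prop:hyperProbTightEstimation} by recognizing the tail sum $\vHyp{n,p,\ell}(k) = \sum_{t=k}^{\ell} \Hyp{n,p,\ell}(t)$ as a Riemann sum approximating the Gaussian integral that equals $\Phi((k-\mu)/\sigma)$, where $\mu = p\ell/n$ and $\sigma = \sqrt{\ell(1-\ell/n)}$. Note that $\Hyp{n,p,\ell}(t)$ is supported only on integers $t$ with $t\equiv \ell \pmod 2$ (since $\w(\vct_{\cI})=2\cdot(\text{number of ones in }\vct_{\cI}) - \ell$), so the sum is over such $t$ with spacing $2$, which matches the factor $\sqrt{2/\pi}/\sigma = 2\cdot (\sqrt{2\pi}\,\sigma)^{-1}$ appearing in \cref{prop:hyperProbTightEstimation} — exactly twice the standard Gaussian density with mean $\mu$ and standard deviation $\sigma$.

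First I would truncate the sum. Fix a cutoff $T = C\sqrt{\ell \log \ell}$ for a constant $C$ depending on $\const$, chosen large enough that both tails ($t\geq \mu + T$ on the hypergeometric side and $x\geq \mu+T$ on the Gaussian side) are $o(1/\ell)$. On the hypergeometric side this follows from Hoeffding's inequality (\cref{fact:hyperHoeffding}), which gives $\vHyp{n,p,\ell}(\mu+T) \leq e^{-T^2/(2\ell)}$, and on the Gaussian side from the standard subgaussian tail using $\sigma^2\leq \ell$. After truncation, one is left with terms indexed by $t\in[\max(k,\mu-T),\min(\ell,\mu+T)]$. Under the proposition's hypotheses, $\size{\mu} = \size{p}\ell/n \leq \const\sqrt{\ell\log n}$ (using $\ell\leq n/2$) and $\size{k}\leq \const\sqrt{\ell\log\ell}$, so every retained $t$ satisfies $\size{t}\leq \const'\sqrt{\ell\log\ell}$ for some $\const' = O(\const)$, meeting the precondition of \cref{prop:hyperProbTightEstimation}.

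Next I would apply \cref{prop:hyperProbTightEstimation} pointwise on the retained range to write $\Hyp{n,p,\ell}(t) = (1\pm \error_0)\cdot \sqrt{2/\pi}\cdot \sigma^{-1}\cdot e^{-(t-\mu)^2/(2\sigma^2)}$ with $\error_0 = \varphi'(\const)\cdot \log^{1.5}\ell/\sqrt{\ell}$. Summing the main term over $t\equiv \ell\pmod 2$ is a Riemann sum with step $2$ for the integral
\begin{equation*}
2\int_{k}^{\infty} \frac{1}{\sqrt{2\pi}\,\sigma}\,e^{-(x-\mu)^2/(2\sigma^2)}\,dx \;=\; 2\,\Phi\!\left(\tfrac{k-\mu}{\sigma}\right),
\end{equation*}
and after dividing by $2$ (for the spacing) one gets $\Phi((k-\mu)/\sigma)$. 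The Riemann-sum error for a $C^1$ integrand is bounded by the total variation of the density divided by (inverse of) the step, which for the Gaussian density of standard deviation $\sigma\geq \sqrt{\ell/2}$ gives an $O(1/\sigma) = O(1/\sqrt\ell)$ contribution.

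The main technical point is to control the multiplicative error $\error_0$ after summation. Since $\error_0$ multiplies each term and the total Gaussian mass is at most $1$, the absolute contribution of this multiplicative slack is at most $\error_0 = O(\log^{1.5}\ell/\sqrt{\ell})$, uniformly over the truncated range. Combined with the $O(1/\ell)$ tail-truncation errors on both sides and the $O(1/\sqrt{\ell})$ Riemann-sum error, all contributions are dominated by $\varphi(\const)\cdot \log^{1.5}\ell/\sqrt{\ell}$, yielding the claimed bound. The only slightly subtle step is justifying that $\size{\mu}$ lies in the required range for \cref{prop:hyperProbTightEstimation}; everything else is a careful bookkeeping of the three error sources.
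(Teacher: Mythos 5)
Your proposal is correct and follows essentially the same route as the paper: the paper's own proof is a one-liner that checks $\Hyp{n,p,\ell}$ is a ``bell-like distribution'' and then invokes the generic truncation-plus-Riemann-sum lemma (\cref{app:missinProofs:generalGameValueEstimation}), whereas you carry out that same truncation / density-estimate / Riemann-sum computation inline for the hypergeometric case. The one spot that needs a sharper statement is the bound on $\mu = p\ell/n$: the intermediate bound $\abs{\mu} \leq \const\sqrt{\ell\log n}$ does not by itself force the retained $t$ into the range $\abs{t}\le \const'\sqrt{\ell\log\ell}$ required by \cref{prop:hyperProbTightEstimation} when $\log n \gg \log\ell$. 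What you actually need (and what is true) is $\abs{\mu}\leq \const\,\ell\sqrt{\log n/n}\leq \const\sqrt{\ell\log\ell}$, the second inequality holding because $\log x/x$ is decreasing for $x\ge e$ and $\ell\le n$; with that replacement the rest of your bookkeeping goes through.
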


\def\propHyperEpsEstimation{Let $n \in \N$, $\ell \in [\floor{\frac{n}2}]$, $p, k\in [n]$ and $\const > 0$ be such that $\size{p} \leq \const\cdot \sqrt{n\log{n}}$ and $\size{k} \leq \const\cdot \sqrt{\ell\log{\ell}}$ and let $\delta = \vHyp{n, p, \ell}(k)$. Then for every $m \geq \ell$ it holds that
	\begin{align*}
	\sBias{m}{\delta} \in \frac{\frac{p\cdot\ell}{n} - k}{\sqrt{m\cdot\ell(1-\frac{\ell}{n})}} \pm \error,
	\end{align*}
	where $\error = \varphi(\const) \cdot \frac{\log^{1.5} \ell}{\sqrt{m\cdot \ell}}$ for some universal function $\varphi$.}

\begin{proposition}\label{prop:hyperEpsEstimation}
	\propHyperEpsEstimation
\end{proposition}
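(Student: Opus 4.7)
The plan is to match the Gaussian approximations of the hypergeometric and binomial tails. Set $z := (k - p\ell/n)/\sqrt{\ell(1-\ell/n)}$; by \cref{prop:hyperToNormal}, $\delta = \Phi(z) \pm \varphi_1(\const)\log^{1.5}\ell/\sqrt{\ell}$, and under the hypotheses $|z| = O(\sqrt{\log n})$. The target expression for $\sBias{m}{\delta}$ is precisely $-z/\sqrt{m}$, so the claim amounts to showing that the implicit inversion of the Gaussian approximation is stable with absolute error $O(\log^{1.5}\ell/\sqrt{m\ell})$.

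The cleanest route is to reduce to \cref{prop:epsDiff}. I would pick an integer $n_0$ with $\ell \le n_0 \le m$ (typically $n_0 = \ell$, or up to a polylogarithmic factor in $n$ if needed to satisfy the hypothesis $|k_0| \le \const \sqrt{n_0 \log n_0}$) and an integer $k_0$ of the correct parity closest to $z\sqrt{n_0}$, so that $k_0/\sqrt{n_0} = z \pm O(1/\sqrt{n_0})$. By the intermediate value theorem and monotonicity of $\eps \mapsto \vBeroo{n_0,\eps}(k_0)$, there exists a unique $\eps_0 \in \R$ with $\vBeroo{n_0,\eps_0}(k_0) = \delta$. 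Applying \cref{prop:epsDiff} with source $(n_0,\eps_0,k_0)$ and target $m$ yields $\sBias{m}{\delta} \in (\eps_0 n_0 - k_0)/\sqrt{n_0 m} \pm \varphi(\const')\log^{1.5}n_0/\sqrt{n_0 m}$. Since $-k_0/\sqrt{n_0 m}$ matches the target up to $O(1/\sqrt{m n_0})$, the claim reduces to bounding $|\eps_0|\sqrt{n_0/m} = O(\log^{1.5}\ell/\sqrt{m\ell})$.

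The main obstacle is thus bounding $|\eps_0|$. A naive derivative estimate combined with additive Gaussian approximations would lose an exponential factor $e^{z^2/2}$, potentially as large as $n^{\Theta(1)}$. The resolution is to use the \emph{relative} form of the Gaussian approximations in \cref{prop:hyperProbTightEstimation} and \cref{prop:binomProbEstimation}, together with the symmetry $\sBias{m}{1-\delta} = -\sBias{m}{\delta}$, which allows assuming WLOG that $z \ge 0$ (so $\delta \le 1/2$). Summing the pointwise relative estimates shows that both $\delta$ and $\vBeroo{n_0,0}(k_0)$ equal $(1 \pm O(\log^{1.5}\ell/\sqrt{\ell})) \cdot \Phi(z)$, so their \emph{ratio} is $1 \pm O(\log^{1.5}\ell/\sqrt{\ell})$ irrespective of the size of $\Phi(z)$. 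Dividing by the log-derivative $\partial_\eps \log \vBeroo{n_0,\eps}(k_0)|_{\eps=0} = \sqrt{n_0}\,\phi(z)/\Phi(z)$, which equals $\Theta(\sqrt{n_0}\max\{1,z\})$ for $z \ge 0$ (standard inverse Mills ratio), yields $|\eps_0| = O(\log^{1.5}\ell/(\sqrt{\ell n_0}\max\{1,z\}))$ and hence $|\eps_0|\sqrt{n_0/m} = O(\log^{1.5}\ell/\sqrt{m\ell})$, as needed.
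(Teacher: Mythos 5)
Your proposal is correct in its essentials but takes a genuinely different route from the paper. The paper abstracts the common structure of the binomial and hypergeometric cases into the notion of a ``bell-like distribution'' (\cref{app:missinProofs:Dn:Properties}) and proves a single lemma, \cref{app:missinProofs:generalEpsEstimation}, for that class; \cref{prop:hyperEpsEstimation} then follows by checking, via \cref{fact:hyperHoeffding} and \cref{prop:hyperProbTightEstimation}, that $\Hyp{n,p,\ell}$ is bell-like, with no reduction to the binomial ever being made. Inside that lemma, the inversion $\Phi^{-1}(\delta)\approx (k-\mu)/\sqrt{\vr}$ is carried out by a direct monotone squeeze $\Phi(z+\Delta)\le\delta\le\Phi(z-\Delta)$ (\cref{app:missinProofs:generalPhiMinusOne}), after which \cref{app:missinProofs:epsDiffImproved} converts $\Phi^{-1}(\delta)$ into $\sBias{m}{\delta}$. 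You instead match $\delta$ to a binomial tail $\vBeroo{n_0,\eps_0}(k_0)$ and invoke \cref{prop:epsDiff}, paying for the mismatch via an inverse-Mills-ratio bound on $|\eps_0|$. Both work; your reduction is appealing because it recycles \cref{prop:epsDiff} wholesale, but it requires additionally justifying the first-order Taylor expansion of $\log\vBeroo{n_0,\eps}(k_0)$ in $\eps$ (i.e.\ that the log-derivative does not move by more than a constant factor on $[0,\eps_0]$), something the paper's squeeze argument sidesteps. The paper's abstraction also lets it derive both \cref{prop:epsDiff} and \cref{prop:hyperEpsEstimation} as two instances of the same lemma, which is tidier.

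Two small remarks on your write-up. First, the hedge about taking $n_0$ up to a $\mathrm{polylog}(n)$ factor above $\ell$ is unnecessary: under the hypotheses ($|p|\le\const\sqrt{n\log n}$, $|k|\le\const\sqrt{\ell\log\ell}$, $\ell\le n/2$) one always has $|z|=O_{\const}(\sqrt{\log\ell})$ — for $\ell\ge n/\log n$ one has $\log n=O(\log\ell)$, while for $\ell< n/\log n$ one has $|p\ell/n|/\sqrt{\ell}<\const$ — so $n_0=\ell$ (adjusted for parity) already meets $|k_0|\le O(\sqrt{n_0\log n_0})$ and $n_0\le m$. Second, you are right that the additive error in \cref{prop:hyperToNormal} is not sharp enough and the relative form must be extracted from the pointwise estimate \cref{prop:hyperProbTightEstimation}; the paper does precisely this in \cref{app:missinProofs:generalGameValueEstimation}, whose error bound carries the crucial factor $e^{-(k-\mu)^2/(2\vr)}$ that makes the error multiplicative relative to $\Phi(z)$.
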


\subsection{Multi-Party Computation}\label{sec:SFE}

\subsubsection{Protocols}\label{sec:Protocols}
To keep the discussion simple, in the following we focus  on no private input protocols. A  $\partNum$-party  protocol is defined using $\partNum$ Turing Machines (TMs) $\Pc_1,\ldots,\Pc_\partNum$,
having the security parameter $1^\secParam$ as their common input. In each round, the parties broadcast and receive messages on  a broadcast channel. At the end of protocol, each  party outputs some binary string. The parties communicate in a synchronous network, using only a broadcast channel: when a party broadcasts a message, all other parties see \emph{the same} message. This ensures some consistency between the information the parties have. There are no private channels and all the parties see all the messages, and can identify their sender. We do not assume simultaneous broadcast. It follows that  in each round, some parties might hear the messages sent by the other parties before broadcasting their messages. We assume that if a party aborts, it first broadcasts the message \Abort to the other parties, and \wlg only does so at the end of a round in which it is supposed to send a message.
A protocol is \emph{efficient}, if its parties are \pptm, and the protocol's number of rounds is a computable function of the security parameter.

This work focuses on  efficient protocols, and on malicious, static (\ie non-adaptive) \ppt adversaries for such protocols. An adversary is allowed to corrupt some subset of the parties; before the beginning of the protocol, the adversary corrupts a subset of the parties that from now on may arbitrarily deviate from the protocol. Thereafter, the adversary sees the messages sent to the corrupted parties and controls their messages. We also consider the so called \textit{fail-stop} adversaries. Such adversaries follow the prescribed protocol, but might abort prematurely. Finally, the honest parties follow the instructions of the protocol to its completion.

\subsubsection{The Real vs. Ideal Paradigm}\label{sec:realIdeal}
The security of multi-party computation protocols is defined using the \emph{real} vs. \emph{ideal} paradigm \cite{Canetti00,Goldreich04}. In this paradigm, the \emph{real-world model}, in which
protocols is  executed is compared to an \emph{ideal model} for executing the task at hand.
The latter model involves a trusted party whose functionality captures the security requirements
of the task. The security of the real-world protocol is argued by showing that it ``emulates''  the ideal-world protocol, in the following sense: for any real-life adversary $\Aadv$, there exists an ideal-model adversary (\aka simulator) $\iAadv$  such that the global output of an execution of the protocol with $\Aadv$ in the real-world model is distributed similarly to the global output of running
$\iAadv$ in the ideal model. The following discussion is restricted  to random,  no-input functionalities. In addition, to keep the presentation simple, we limit our attention to uniform adversaries.\footnote{All results stated in this paper, straightforwardly extend to the  non-uniform settings.}

\mparagraph{The Real Model}
Let $\pi$ be an $\partNum$-party protocol and let $\Aadv$ be an adversary controlling a subset $\cC \subseteq [\partNum]$ of the parties. Let $\Real_{\pi,\Aadv,\cC}(\secParam)$ denote the output of $\Aadv$  (\ie \wlg its view: its random input and the messages it received) and the outputs of the honest parties, in a random  execution of $\pi$ on common input $1^\secParam$. Recall that an adversary is \emph{fail stop}, if until they abort, the parties in its control follow the prescribed protocol (in particular, they property toss their private random coins). We call  an execution of $\pi$ with such a fail-stop adversary, a fail-stop execution. 

\mparagraph{The Ideal Model}
 Let  $f$ be a $\partNum$-output  functionality. If $f$ gets a security parameter (given in unary), as its first input, let $f_\secParam(\cdot) = f(1^\secParam,\cdot)$.  Otherwise, let $f_\secParam = f$. An ideal execution of $f$ \wrt an adversary $\iAadv$  controlling a subset $\cC \subseteq [\partNum]$ of the ``parties" and a security parameter  $1^\secParam$, denoted  $\Ideal_{f,\iAadv,\cC}(\secParam)$, is the output of the adversary $\iAadv$ and that of the trusted party, in the following experiment.
\begin{experiment}~
\begin{enumerate}
  \item The trusted party sets $(y_1,\dots,y_\partNum) = f_\secParam(X)$, where $X$ is a uniform element in the domain of $f_\secParam$,  and sends $\set{y_i}_{i\in \cC}$ to $\iAadv(1^\secParam)$.

  \item $\iAadv(1^\secParam)$ sends  the message $\mathsf{Continue}$/ $\mathsf{Abort}$ to the trusted party, and locally outputs some value.

  \item The trusted party outputs $\set{o_i}_{i\in [\partNum] \setminus \cC}$, for  $o_i$ being $y_i$ if $\iAadv$ instructs  $\mathsf{Continue}$, and $\perp$ otherwise.
\end{enumerate}
\end{experiment}
An adversary $\iAadv$ is non-aborting, if it never sends the $\mathsf{Abort}$ message.

\paragraph{$\alpha$-secure computation.}
The following definitions adopts the notion of $\alpha$-secure computation \cite{BeimelLOO11,GordonKatz10,Katz07} for our restricted settings. 
\begin{definition}[$\alpha$-secure computation]\label{def:deltaSecure}
An efficient  $\partNum$-party protocol $\pi$ computes  a $\partNum$-output functionality $f$ in a {\sf  $\alpha$-secure} manner [\resp against fail-stop adversaries], if for every $\cC \subsetneq [\partNum]$ and every [\resp fail-stop] \ppt
adversary $\Aadv$ controlling the parties indexed by $\cC$,\footnote{The requirement that $\cC$ is a \emph{strict} subset of $[\partNum]$, is merely for notational convinced.} there exists a \ppt $\iAadv$ controlling the same parties, such that
$$\SD\left(\Real_{\pi,\Aadv,\cC}(\secParam),\Ideal_{f,\iAadv,\cC}(\secParam)\right) \leq \alpha(\secParam),$$
for large enough $\secParam$. A protocol securely compute a functionality $f$, if it computes $f$ in a $\negl(\secParam)$-secure manner. The protocol $\pi$ computes $f$ in a {\sf simultaneous $\alpha$-secure} manner, if the above is achieved by a {\sf non-aborting} $\iAadv$.
\end{definition}
Note that being simultaneous $\alpha$-secure is a very strong requirement, as it dictates that the cheating  real adversary has no way to prevent the honest parties from getting their part of the output, and this should be achieved with no simultaneous broadcast mechanism (i.e., in each round, some parties might see the messages sent by the other parties before broadcasting their messages).

\subsubsection{Fair Coin-Flipping Protocols}\label{sec:CFprotocols}
\begin{definition}[$\alpha$-fair coin-flipping]\label{def:fairCTSim}
For $\partNum\in \N$ let $\CoinToss_\partNum$ be the $\partNum$-output functionality from $\zo$ to $\zo^\partNum$, defined by $\CoinToss_{\partNum}(b)= b \ldots b$ ($\partNum$ times).  A $\partNum$-party protocol $\pi$ is {\sf $\alpha$-fair coin-flipping protocol}, if it computes $\CoinToss_\partNum$ in  a {\sf simultaneous $\alpha$-secure} manner. 
\end{definition}

\paragraph{Proving fairness.}
\citet{HaitnerT17} gave an alternative characterization of fair coin-flipping protocols against fail-stop adversaries.  Specifically, \cref{prop:FairCTGameAlt} below reduces the task of proving fairness of a coin-flipping protocol,  against fail-stop adversaries, to proving the protocol is correct: the honest parties always output the same bit, and this bit is uniform in an all honest execution, and to proving the protocol is unbiased: a fail-stop adversary cannot bias the output of the honest parties by too much.

\begin{definition}[correct coin-flipping protocols]\label{def:CorrectCT}
A protocol is a {\sf correct coin flipping}, if
 \begin{itemize}
   \item When interacting with an fails-stop adversary controlling a subset of the parties, the honest parties {\sf always} output the same bit, and
   \item The common  output in  a random {\sf honest} execution of $\pi$, is uniform over $\zo$.
 \end{itemize}
\end{definition}

Given a partial view of a fail-stop adversary, we are interesting in the  expected outcome of the parties, conditioned on this and the adversary making no further aborts.
\begin{definition}[view value]\label{def:ViewVal}
Let $\pi$ be a protocol in which the honest parties always output the same bit value. For a partial  view $v$ of the parties in a fail-stop  execution of $\pi$, let $\Cc_\pi(v)$ denote the parties' full view in an {\sf honest} execution of $\pi$ conditioned on $v$ (\ie all parties that do not abort in $v$ act honestly in $\Cc_\pi(v)$). Let $\val_\pi(v) = \ex{v' \la \Cc_\pi(v)}{\out(v')}$,  where $\out(v')$ is the common output of the non-aborting  parties in $v'$.
\end{definition}

Finally, a protocol is unbiased, if no fail-stop adversary can bias the common output of the honest parties by too much.
\begin{definition}[$\alpha$-unbiased coin-flipping protocols, \cite{HaitnerT17}.]\label{def:FairCTGameAlt}
A $\partNum$-party, $\rnd$-round  protocol $\pi$ is {\sf $\alpha$-unbiased}, if the following holds for every fail-stop adversary $\Aadv$ controlling the parties indexed by a subset $\cC\subset [\partNum]$ (the corrupted parties). Let $V$ be the corrupted parties' view in a random execution of $\pi$ in which $\Aadv$ controls those parties, and let $I_j$ be the index of the $j$'th round in which $\Aadv$ sent an abort message (set to $\rnd+1$, if no such round). Let $V_{i}$ be the prefix of $V$ at the end of the $i$'th round, letting $V_0$  being the empty view, and let $V_i^{-}$ be the prefix of $V_i$ with the $i$'th round abort messages (if any) removed. Then
  $$\size{\ex{V}{\sum_{j\in \size{\cC}} \val(V_{I_j}) -  \val(V_{I_j}^-)}} \leq  \alpha,$$
  where $\val = \val_\pi$ is according to \cref{def:ViewVal}.
\end{definition}

\begin{lemma}[\cite{HaitnerT17}, Lemma 2.18]\label{prop:FairCTGameAlt}
Let $\pi$ be a correct,  $\alpha$-unbiased coin-flipping protocol with $\alpha(\secParam) \leq \frac12 - \frac1{p(\secParam)}$, for some $p\in \poly$, then $\pi$ is a $(\alpha(\secParam) + \negl(\secParam))$-secure coin-flipping protocol against fail-stop adversaries. 
\end{lemma}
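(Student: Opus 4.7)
The plan is to reduce simultaneous $\alpha$-security to the unbiasedness hypothesis by building, from any fail-stop real-world adversary $\Aadv$ controlling $\cC\subsetneq[\partNum]$, a non-aborting ideal-world simulator $\iAadv$ via rejection sampling. On receiving the bit $b$ from the trusted party, $\iAadv$ will repeatedly internally simulate an honest execution of $\pi$ with $\Aadv$ until one produces $b$ as the common honest output, then hand out $\Aadv$'s view from that execution and send $\mathsf{Continue}$. The resulting ideal joint distribution will equal the real one conditioned on the output equalling an independent uniform bit, so the only gap to the real distribution lies in the marginal of the output bit, which unbiasedness controls.

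Before constructing $\iAadv$, I would use unbiasedness to bound the real-world output bias. Let $V$ be the full view and $B=\out(V)$ the common output (well defined by correctness). Writing $B-\tfrac12=\val(V)-\val(V_0)$ as a round-by-round telescoping sum and splitting each increment $\val(V_i)-\val(V_{i-1})$ into an abort contribution $\val(V_i)-\val(V_i^-)$ and an honest contribution $\val(V_i^-)-\val(V_{i-1})$, the abort contributions sum to $\sum_{j\le|\cC|}\val(V_{I_j})-\val(V_{I_j}^-)$, whose expectation is at most $\alpha(\secParam)$ in absolute value by hypothesis. The honest contributions form a martingale difference sequence: given $V_{i-1}$, the variable $V_i^-$ is the extension of $V_{i-1}$ by one round of non-abort messages (which under fail-stop semantics are exactly the honest-continuation messages), so by the tower rule $\Exp[\val(V_i^-)\mid V_{i-1}]=\val(V_{i-1})$. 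Hence $|\Pr[B{=}1]-\tfrac12|\le\alpha(\secParam)$, and in particular $\Pr[B{=}b]\ge 1/p(\secParam)$ for both $b\in\zo$ and sufficiently large $\secParam$.

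Now I define the simulator concretely. On input $b$, $\iAadv$ runs up to $T=\secParam\cdot p(\secParam)$ independent internal random executions of $\pi$ with $\Aadv$; on the first whose common output equals $b$ it outputs $\Aadv$'s view from that execution and sends $\mathsf{Continue}$, and if all trials fail it outputs a default view and still sends $\mathsf{Continue}$. By the previous paragraph each trial succeeds with probability $\ge 1/p(\secParam)$, so the failure probability is at most $(1-1/p(\secParam))^T\le e^{-\secParam}=\negl(\secParam)$, and $\iAadv$ runs in polynomial time since $\Aadv$ does.

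Finally, I would bound the statistical distance. Conditioned on no simulation failure, the ideal pair $(\iAadv\text{'s view output},\,b)$ has the distribution of the real pair $(V,B)$ conditioned on $B$ equalling an independent uniform bit $B'$. A short direct computation shows that the statistical distance between $(V,B)$ and this conditioned pair equals exactly $|\Pr[B{=}1]-\tfrac12|$, which is at most $\alpha(\secParam)$; adding the failure probability gives $\SD(\Real_{\pi,\Aadv,\cC}(\secParam),\Ideal_{\CoinToss_\partNum,\iAadv,\cC}(\secParam))\le\alpha(\secParam)+\negl(\secParam)$, as required. The main conceptual hurdle in the whole argument is the martingale step, specifically verifying that under fail-stop semantics $V_i^-$ really does lift from $V_{i-1}$ via an honest one-round extension in the sense required for $\val$; the simulator construction itself is essentially forced by the combination of non-aborting ideal behavior and the output-bit bias being bounded away from $\pm\tfrac12$.
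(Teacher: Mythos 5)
Your proof is correct. Since the paper cites this lemma to \cite{HaitnerT17} without reproducing its argument, there is no local proof to compare against; your rejection-sampling simulator together with the telescoping/martingale derivation of $|\Pr[B{=}1]-\tfrac12|\le\alpha$ from the $\alpha$-unbiasedness condition is precisely the standard route, and matches the approach of the cited source.
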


\subsubsection{Oblivious Transfer}\label{sec:OT}
\begin{definition}\label{def:OTfunct}
The $\binom{1}{2}$  oblivious transfer ($\OT$ for short) functionality, is the two-output functionality $f$ over $\zo^3$, defined by $f(\sigma_0,\sigma_1,i) = ((\sigma_0,\sigma_1), (\sigma_i,i))$.
\end{definition}
Protocols the securely compute $\OT$, are known under several hardness assumptions (\cf \cite{AieIshRei01,EvenGL85,GentryPV2008,Haitner04,Kalai05,NaorPinkas01}).

\subsubsection{$f$-Hybrid Model}\label{sec:OTHybrid}
Let $f$ be a $\partNum$-output functionality. The {$f$-hybrid} model is identical to the real model of computation discussed above, but in addition, each $\partNum$-size subset of  the parties involved, has access to a trusted party realizing  $f$. It is important to emphasize  that the trusted party realizes $f$ in a \emph{non}-simultaneous manner: it sends a random output of $f$ to the parties in an arbitrary order. When a  party  gets its part of the output, it  instructs the trusted party to either continue sending the output to the other parties, or to send them the abort symbol (\ie the  trusted party  ``implements" $f$  in a perfect non-simultaneous manner). All notions given in \cref{sec:realIdeal,sec:CFprotocols} naturally extend to the $f$-hybrid model, for any functionality $f$. In addition,  the proof of \cref{prop:FairCTGameAlt} straightforwardly extends to this model. We also make use of the following known fact.

\begin{fact}\label{fact:fHybridMToReal}
	Let $f$ be a polynomial-time computable functionality, and assume there exists a $t$-party, $\rnd$-round, $\alpha$-fair coin-flipping  protocol in the $f$-hybrid model, making at most $k$ calls to $f$, were $t$, $\rnd$, $\alpha$ and $k$, are function of the security parameter $\kappa$. Assuming there exist a constant-round protocol for  securely computing \OT, then there exists a $t$-party, $(O(k\cdot t^2)+ \rnd)$-round, $(\alpha +\negl(\kappa))$-fair coin-flipping  protocol (in the real world).
\end{fact}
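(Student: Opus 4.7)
The plan is to invoke a standard secure-computation compiler: replace each call to the trusted party implementing $f$ in the hybrid protocol by a real-world sub-protocol that securely computes $f$ with abort using \OT, taking care to preserve the non-simultaneous delivery semantics of the hybrid model. The existence of a constant-round secure-with-abort protocol for any polynomial-time functionality, assuming a constant-round \OT protocol, is precisely what is used in \cite{BeimelOO10} (via the construction of \cite{Pass:2004:BSM:1007352.1007393} with appropriate cryptographic primitives, all of which follow from \OT). So for each of the $k$ hybrid calls, the subset $\cZ \subseteq [t]$ of involved parties (with $|\cZ| \leq t$) can realize $f$ in $O(1)$ rounds up to negligible simulation error, producing shares of each party's designated output.

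The delicate point is matching the ideal-model delivery semantics: in the $f$-hybrid model the trusted party reveals outputs to the parties in an arbitrary (adversarially chosen) order, and upon receiving its own output a corrupted party may instruct the trusted party to abort the remaining deliveries. To simulate this without simultaneous broadcast, after the secure-with-abort computation produces authenticated shares, I would have the parties in $\cZ$ reconstruct their outputs one at a time in a sequence controlled by the adversary (using a short negotiation sub-protocol that, in each of $|\cZ|$ reconstruction phases, lets an arbitrary yet-to-be-served party be picked and receive its output via opening of the shares held by the other parties). Each reconstruction phase is constant-round, but because in each phase up to $|\cZ|$ parties must acknowledge/broadcast openings in the absence of simultaneous broadcast, the phase takes $O(|\cZ|) = O(t)$ rounds; with $|\cZ|$ phases per call this gives $O(t^2)$ rounds per call, hence $O(k\cdot t^2)$ total overhead beyond the $\rnd$ rounds of the outer protocol.

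Security then follows by a standard hybrid argument. Define a sequence of $k+1$ experiments where the $j$-th experiment replaces the first $j$ calls to $f$ by their real-world secure-with-abort implementations (still with all remaining calls done in the hybrid model). By the security of the secure computation protocol for $f$, any \ppt adversary in the $j$-th experiment induces a \ppt adversary in the $(j-1)$-th experiment whose view is $\negl(\kappa)$-close, and in particular the statistical distance between the real and ideal output distributions accumulates to at most $k \cdot \negl(\kappa) = \negl(\kappa)$. Combined with the simultaneous $\alpha$-security of the hybrid protocol (which by \cref{prop:FairCTGameAlt} already reduces to correctness plus $\alpha$-unbiasedness against fail-stop hybrid-model adversaries), the resulting real-world protocol is simultaneous $(\alpha + \negl(\kappa))$-secure, i.e.\ an $(\alpha + \negl(\kappa))$-fair coin-flipping protocol as claimed.

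The main obstacle I anticipate is the faithful simulation of the non-simultaneous delivery: the hybrid trusted party gives a corrupted party its output and then accepts a \texttt{Continue}/\texttt{Abort} decision before delivering to honest parties, and this asymmetric ``one-sided'' delivery cannot be achieved in the real plain model without extra machinery. The resolution is the sequential reconstruction phase sketched above, where authenticated (e.g.\ information-theoretically MAC-ed) shares held by the remaining parties ensure that at any point only the next recipient can recover its output while the honest parties' outputs remain hidden, exactly mirroring the trusted party's behaviour. Once this is in place the rest of the argument is a routine composition.
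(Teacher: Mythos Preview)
Your high-level strategy is right, but the paper's argument is much simpler and you have introduced an unnecessary complication. The paper obtains the sub-protocol for $f$ by taking the constant-round semi-honest protocol of \citet{BMR90} and applying the \citet{GoldreichMiWi87} compiler to make it maliciously secure; it is this compiler that contributes the $O(t^2)$ rounds per call. The composition then follows directly from the modular composition theorem of \citet{Canetti00}, which is exactly the hybrid argument you spell out but do not cite.

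Your sequential-reconstruction mechanism for simulating the non-simultaneous delivery semantics is not needed. In the $f$-hybrid model the adversary can always arrange to receive the corrupted parties' outputs first and then decide whether to abort; this is precisely the standard security-with-abort guarantee that the sub-protocol already provides, so no extra delivery layer is required and the $t^2$ factor does not arise there. (Your round count for that mechanism is also off: with the broadcast channel assumed throughout the paper, each of your reconstruction phases would be $O(1)$ rounds, not $O(t)$.) Finally, the appeal to \cref{prop:FairCTGameAlt} is misplaced: that lemma reduces fairness to unbiasedness against fail-stop adversaries and is used to analyze the hybrid protocol itself, not to carry out the hybrid-to-real composition.
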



\begin{proof}
	Since $f$ is a polynomial-time computable and since we assume the existence of  a protocol for  securely computing \OT, there exists a constant-round protocol $\pi_f$ for  securely computing $f$:  a constant-round protocol for $f$ that is secure against semi-honest adversaries follows by \citet{BMR90} (assuming \OT), and the latter protocol can be compiled into a $O(t^2)$-round protocol that securely computes $f$, against arbitrary malicious adversaries,  using the techniques of \citet{GoldreichMiWi87} (assuming one-way functions, that follows by the existence of \OT). Let $\pi$ be a $t$-party, $\rnd$-round, $\alpha$-fair coin-flipping  protocol  in the $f$-hybrid model. \citet{Canetti00} yields that by replacing the trusted party for computing $f$ used in $\pi$ with the protocol $\pi_f$, we get an $(O(k\cdot t^2)+ \rnd)$-round, $(\alpha +\negl)$-fair coin-flipping  protocol.
\end{proof}


\newcommand{\defense}{{\sf def}}
\newcommand{\deltaDef}{{\delta_\defense}}
\newcommand{\DeltaDef}{{\Delta_\defense}}
\newcommand{\ProFH}{\widetilde{\Pi}} 
\newcommand{\ProFR}{\breve{\Pi}} 
\newcommand{\shareName}{\delta}

\section{The Many-Party Coin-Flipping Protocol}\label{sec:protocol} 
In \cref{sec:MultiPartyProtocolFinal}, the  many-party coin-flipping protocol  is defined in an hybrid model. The    security of the latter protocol is analyzed in \cref{sec:AnalysisOfProtocol}. The (real model) many-party coin-flipping protocol is defined and analyzed in \cref{sec:ProvingMainThm}.

\subsection{The Hybrid-Model Protocol}\label{sec:MultiPartyProtocolFinal} 
The coin-flipping protocol described below follows the high-level description given in the introduction. The main difference is that the number of coins flipped is every round is not one, but a decreasing function of the round index. This asymmetry, also done in \cite{HaitnerT17}, prevents the last rounds from having too high influence on the final outcome.

The protocols below are defined in an hybrid model in which the parties get joint oracle access to several ideal functionalities. We assume the following conventions about the model: all functionalities are guaranteed to function correctly, but do not guarantee fairness: an adversary can abort, and thus preventing the honest parties from getting their output, \emph{after} seeing the outputs of the corrupted parties in its control. We assume identified  abort: when a party aborts, its identity is revealed to  all other parties.  We also assume  that when the parties make  \emph{parallel} oracle calls, a party that aborts in one of these calls is forced to abort in all of them.

The protocols defined below will not be efficient, even in the hybrid model, since the parties are required to hold real numbers (which apparently  have infinite presentation), we handle this inefficiency when  defining the (efficient) real world protocol in \cref{sec:ProvingMainThm}.

Protocol $\ProtocolF{}$ defined next is our (hybrid model) coin-flipping protocol to be called.

This protocol  is merely a wrapper for protocol $\Protocol{}$:  the parties first correlate their private inputs using an oracle to the   $\Defense$ functionality, and then interact in $\Protocol{}$ with these inputs (protocol $\Protocol{}$ and the functionality $\Defense$ are defined below). For $\rnd,\partNum \in \N$, the $\partNum$-party, $O(\rnd\cdot \partNum)$-round protocol $\ProtocolF{\partNum}_{\rnd}$ is defined as follows.

\begin{protocol}[$\ProtocolF{\partNum}_{\rnd} = (\PartyF{\partNum}_1,\ldots,\PartyF{\partNum}_{\partNum})$]\label{protocol:outer}
\item[Oracle:] $\Defense$. 
\item [Protocol's description:]~
  \begin{enumerate}
  \item Let $\delta^{\isShare{1}}, \ldots,\delta^{\isShare{t}}$ be $t$-out-of-$t$ shares of $\frac12$.

  \item Let $\ell=\partNum$ be the defense-quality parameter.

  \item For every $\emptyset \neq \cZ \subseteq [\partNum]$ (in parallel), the parties jointly call $\Defense(1^\rnd, 1^\partNum, 1^\ell, \cZ, \delta^{\isShare{1}}, \ldots,\delta^{\isShare{t}})$,  where  $1^\rnd$, $1^\partNum$, $1^{\ell}$,  and $\cZ$ are common inputs, and input $\delta^{\isShare{k}}$ is provided  by party $\Party{\partNumInn}_k$.  Let $\shareName^{\isShare{z},\cZ}$ be the output of party $\PartyF{\partNum}_z$ returned by this call.\label{protocol:outer:defense}
    
  \item The parties interact in $\Protocol{\partNum}_\rnd = (\Party{\partNum}_1,\Party{\partNum}_2,\ldots,\Party{\partNum}_{\partNum})$ with common input $1^\ell$. Party $\PartyF{\partNum}_\z$ plays the role of $\Party{\partNum}_\z$  with private input $\set{\shareName^{\isShare{z},\cZ}}_{\emptyset \neq \cZ \subseteq [\partNum]}$.
  
  \item [Abort (during step $2$):] If there is a single remaining party, it outputs an unbiased bit. Otherwise, the remaining parties interact in $\ProtocolF{t'}_{\rnd}(1^\ell)$ for $t'<t$ being the number of the remaining parties.
  \end{enumerate}
\end{protocol}

\subsubsection{\texorpdfstring{Protocol $\Protocol{\partNumInn}_\rnd$}{The Inner Protocol}} \label{subsec:protocol:inner}

When defining $\Protocol{\partNumInn}_\rnd$, we make a distinction whether the number of parties is two or larger. We let $\Protocol{2}_\rnd$ be the two-party protocol $\HTProtocol_\rnd$, which is a variant of the of two-party protocol of \cite{HaitnerT17} defined in \cref{sec:htprotocol}. For the many-party case (three parties or more), we use the newly defined protocol given below. This distinction between the two-party and many-party cases is  made for improving the bias of the final protocol,  and all is well-defined if we would have used the protocol below also for the two-party case (on the first read, we encourage the reader to assume that this is indeed the case). See \cref{rem:whyUsingHT} for the benefit of using the \cite{HaitnerT17} protocol for the two-party case.

For $\rnd,\partNumInn \le \partNum \in \N$, the $\partNumInn$-party, $O(\rnd \cdot \partNumInn)$-round protocol $\Protocol{\partNumInn}_{\rnd}$ is defined as follows (the functionalities $\Defense$ and $\Coin$ the protocol uses are defined in \cref{subsubsec:nextCoin,subsubsec:Defense:t}, respectively).

\begin{protocol}[$\Protocol{\partNumInn}_\rnd = (\Party{\partNumInn}_1,\Party{\partNumInn}_2,\ldots,\Party{\partNumInn}_{\partNumInn})$ (for $\partNumInn > 2$)]\label{protocol:inner:t}
\item[Oracles:] $\Defense$, and $\Coin$.
\item[Common input:] defense-quality parameter $1^\ell$.
\item[$\Party{\partNumInn}_\z$'s inputs:] $\set{\shareName^{\isShare{z},\cZ}}_{\emptyset \neq \cZ \subseteq [\partNumInn]}$.\footnote{The type of $\shareName^{\isShare{z},\cZ}$ varies according to $\size{\cZ}$. For $\size{\cZ}=1$, $\shareName^{\isShare{z},\cZ}$ is simply a $\zo$ bit,  for $\size{\cZ} > 2$ it is a share of  $\size{\cZ}$-out-of$\size{\cZ}$ secret share of a number in $[0,1]$, and for  $\size{\cZ} = 2$ it has a more complex structure. See \cref{subsubsec:Defense:t} for details.}

\item [Protocol's description:]~
  \begin{enumerate}
  \item For every $\emptyset \neq \cZ \subsetneq [\partNumInn]$ (in parallel), the parties jointly call $\Defense(1^\rnd, 1^\partNumInn, 1^\ell, \cZ, \shareName^{\isShare{1},[\partNumInn]},\ldots,\shareName^{\isShare{\partNumInn},[\partNumInn]})$, where  $1^\rnd, 1^\partNumInn, 1^{\ell} , \cZ$ are common inputs, and input $\shareName^{\isShare{k},[\partNumInn]}$ is provided  by party $\Party{\partNumInn}_k$.
    \label{protocol:inner:t:defense:1}
    
  \item[\qquad  $\bullet$] For all $z \in \cZ$, party $\Party{\partNumInn}_z$ updates   $\shareName^{\isShare{z},\cZ}$ to the  value it received from this call.
    
  \item Each party $\Party{\partNumInn}_\z$ sends $\shareName^{\isShare{z},[\partNumInn]}$ to the other parties. \label{protocol:inner:t:reconstruct:delta}
  \item [\qquad $\bullet$] All parties set $\delta = \bigoplus_{\z=1}^\partNumInn \shareName^{\isShare{z},[\partNumInn]}$.

  \item For $i=1$ to $\rnd$:
    \begin{enumerate}
    \item The parties jointly  call $\Coin(1^\rnd, 1^\partNumInn, \delta, c_1,\ldots,c_{i-1})$.
    \item[\qquad $\bullet$] For $z \in \cZ$, let $(c_i^{\isShare{z}},\delta_i^{\isShare{z}})$ be the output of party $\Party{\partNumInn}_z$ returned by $\Coin$.\label{protocol:inner:t:coin}
      
    \item For every $\emptyset \neq \cZ \subsetneq [\partNumInn]$ (in parallel), the parties jointly call $\Defense(1^\rnd, 1^\partNumInn, 1^{\ell},\cZ, \delta_i^{\isShare{1}},\ldots,\delta_i^{\isShare{\partNumInn}})$, where $1^\rnd, 1^\partNumInn, 1^{\ell}, \cZ$ are common inputs, and the input $\delta_i^{\isShare{k}}$ is provided by party $\Party{\partNumInn}_k$. \label{protocol:inner:t:defense:2}
    
    \item[\qquad $\bullet$] For $z \in \cZ$, party $\Party{\partNumInn}_z$ updates   $\shareName^{\isShare{z},\cZ}$ to the  value it received  from this call.
      
    \item Each party $\Party{\partNumInn}_\z$ sends $c_i^{\isShare{z}}$ to the other parties. \label{protocol:inner:t:reconstruct:coin}
    \item[\qquad$\bullet$]   All parties set $c_i = \bigoplus_{\z=1}^ \partNumInn c_i^{\isShare{z}}$.

    \end{enumerate}
   \end{enumerate}
 
 \item[Output:] All parties output $\sign(\sum_{i=1}^\rnd c_i)$.
     
\item [Abort:]  Let $\emptyset \neq \cZ \subsetneq [\partNumInn]$ be the indices of the remaining parties. If $\cZ = \set{z_k}$, then the  party $\Party{\partNumInn}_k$ outputs $\shareName^{\isShare{k},\set{k}}$. Otherwise ($\size{\cZ} \geq 2$), assume for ease of notation that $\cZ = [h]$ for some $h\in [\partNumInn-1]$. To decide on a common output, the parties interact in $\Protocol{h}_{\rnd} = (\Party{h}_1',\ldots,\Party{h}_h')$ with common input $1^\ell$, where party $\Party{h,\partNumInn}_\z$ plays the role of $\Party{h}_\z'$ with private input $\set{\shareName^{\isShare{z},\cZ'}}_{\emptyset \ne \cZ' \subseteq \cZ}$.
\end{protocol}

That is at \Stepref{protocol:inner:t:defense:1}, the parties use $\Defense$ to be instructed what to do if some parties abort in the reconstruction of the value of $\delta$ that happens at \Stepref{protocol:inner:t:reconstruct:delta}. If \Stepref{protocol:inner:t:defense:1}  ends successfully (no aborts), then the expected outcome of the protocol is guaranteed to be $\delta$, even if some parties abort in he reconstruction of $\delta$ done in \Stepref{protocol:inner:t:reconstruct:delta} (but no further aborts). If an abort occurs in this \Stepref{protocol:inner:t:defense:1}, then the remaining parties use their inputs to  interact in a protocol whose expected outcome is $\delta'$, for  $\delta'$ being the input in the call to $\Defense$ that generated the parties' input.  The key point is that even though  $\delta$ might be rather far from $\delta'$, the corrupted parities who only holds parties information about $\delta$ (\ie the output of $\Defense$), cannot exploit this gap too effectively.

A similar thing happens when flipping each of the coins $c_i$. The parties  first use $\Coin$ and $\Defense$ to get shares of the new coin $c_i$ and to get instructed what to do if some parties abort in the reconstruction of $c_i$. If \Stepref{protocol:inner:t:defense:2} ends successfully, then the expected outcome of the protocol is $\delta_i = \pr{\sign(\sum_{i=1}^\rnd c_i) =1 \mid c_1,\ldots,c_i}$, even if some parties abort in the reconstruction of $c_i$ (but no further aborts).   If an abort occurs in \Stepref{protocol:inner:t:defense:2}, then the remaining parties use their inputs to  interact in a protocol whose expected outcome is $\delta_{i-1}$. Also in this case, the corrupted parities cannot exploit the gap between $\delta_i$ and  $\delta_{i-1}$ too effectively.

We note that in the  recursive invocations done in the protocol when abort happens, the number of interacting parties in the new protocol is  smaller. We also note that since all calls to the $\Defense$ functionality  taken is \Stepref{protocol:inner:t:defense:1} / \Stepref{protocol:inner:t:defense:2} are done in \emph{parallel}, the resulting protocol has indeed $O(\partNumInn \cdot m)$ rounds.

Finally, the role of the input parameter $\ell$ is to optimize the information the calls to $\Defense$ leak  through the execution of the protocol (including its sub-protocols executions that take place when aborts happen). Recall (see discussion in the introduction) that on one hand, we would like $\Defense$ to leak as little information as possible, to prevent an effective attack of the current execution of the protocol. For instance, the value return by $\Defense$ in \Stepref{protocol:inner:t:defense:1}, should not give too much information about the value of $\delta$. On  the other hand, a too hiding $\Defense$ will make an interaction done in a sub-protocol,  happens if an abort happens, less secure. Parameter  $\ell$  is set to $t$ in the parent call to the protocol done from  the $t$-party protocol $\ProtocolF{t}$ and is kept to this value throughout the different sub-protocol executions,  enables us to find the optimal balance between these contradicting requirements. See \cref{subsubsec:Defense:t} for details.


\subsubsection{The $\Coin$ Functionality} \label{subsubsec:nextCoin}

Functionality $\Coin$ performs the (non fair) coin-flipping operation done inside the main loop of $\Protocol{}$. It outputs shares of the $i$-th round's coin $c_i$, and also shares for the value of expected outcome of the protocol given $c_i$.\footnote{This redundancy in the  functionality description, \ie the shares of  coins can be used to compute the the second part of the output, simplifies the presentation of the protocol.}

Recall that $\Berzo{\delta}$ is the Bernoulli probability distribution over $\zo$ that  assigns    probability $\delta$ to $1$, that $\Beroo{\eps}$ is the Bernoulli probability distribution over $\oo$ that assigns  probability $\frac{1}{2}(1+\eps)$ to $1$, that $\Beroo{n, \eps}(k) = \pr{\sum_{i=1}^{n} x_i = k}$ for $x_i$'s that are i.i.d according to $\Beroo{\eps}$, and $\vBeroo{n, \eps}(k) = \ppr{x\la \Beroo{n,\eps}}{x \geq  k}$. Also recall that $\sBias{n}{\delta}$ is the value $\eps \in [-1,1]$ with $\vBeroo{n,\eps}(0) = \delta$, that $\ml{i} = (\rnd+1-i)^2$ (\ie  the number of coins tossed at round $i$), and that $\ms{i} = \sum_{j=i}^m \ml{j}$ (\ie the number of coins tossed after round $i$).

\begin{samepage}
\begin{algorithm}[$\Coin$]
\item[Input:] Parameters $1^\rnd$ and  $1^{\partNumInn}$,  $\delta \in [0,1]$,  and coins $c_1,\ldots,c_{i-1}$.
\item[Operation:]~
  \begin{enumerate}
  \item Let $\eps = \sBias{\ms{1}}{\delta}$. \label{coin:epsCalculation}
  \item Sample $c_i \la \Beroo{\ml{i}, \eps}$.
  \item Let $\delta_i = \vBeroo{\ms{i+1},\eps}(-\sum_{j=1}^{i} c_j)$

  \item Sample $\partNumInn$ uniform strings  $\share^{\isShare{1}},\ldots,\share^{\isShare{\partNumInn}}$ conditioned on $(c_i,\delta_i) = \bigoplus_{i=1}^\partNumInn \share^{\isShare{i}}$, and return  party $\Party{\partNumInn}_{i}$ the share $\share^{\isShare{i}}$.
  \end{enumerate}
\end{algorithm}
 \end{samepage}


\subsubsection{The $\Defense$ Functionality} \label{subsubsec:Defense:t}

The $\Defense$ functionality is used by  protocol $\Protocol{}$ to ``defend'' the remaining parties when some corrupted parties abort. When invoked  with a subset $\cZ \subsetneq [\partNumInn]$ and  $\delta \in [0,1]$, it produces the inputs the parties in $\cZ$ need in order to collaborate and produce a $\delta$-biased bit --- expected  value is $\delta$.

 As with protocols $\Protocol{\partNumInn}_\rnd$,  we make a distinction whether $\partNumInn = 2$ ($\partNumInn$ is the number of parties that call $\Defense$) or $\partNumInn>2$. In the former case,   we  use a simple variant of the \cite{HaitnerT17} defense functionality defined in \cref{sec:htprotocol}.  For all other values, we use the functionality defined below. (Also in this case, we encourage the first-time reader to ignore this subtlety, and assume we use the new definition for all cases.)


\begin{algorithm}[$\Defense$ functionality for $\partNumInn > 2$]
\item[Input:] Parameters $1^\rnd$,  $1^{\partNumInn}$, $1^{\ell}$, set $\cZ \subseteq [\partNumInn]$ and shares $\set{\delta^{\isShare{z}}}_{\z \in [\partNumInn]}$. 

\item[Operation:] Return  $\DefenseTilde(1^\rnd,1^\partNumInn,1^\ell,\cZ,\bigoplus_{\z \in [\partNumInn]}\delta^{\isShare{z}})$
\end{algorithm}
 
Namely, $\Defense$ just reconstructs $\delta$ and calls  $\DefenseTilde$ defined below.

\begin{algorithm}[$\DefenseTilde$]
\item[Input:] Parameter $1^\rnd$, $1^{\partNumInn}$, $1^{\ell}$, set $\cZ = \set {z_1,\ldots,z_k} \subsetneq [\partNumInn]$, and $\delta \in [0,1]$.
\item[Operation:]~
  \begin{enumerate}
  \item If $\size{\cZ}=1$, let  $o_1 \la \Beroo{\delta}$.
  \item If $\size{\cZ}=2$, let $(o_1,o_2) = \HTDefenseProtocol (1^m,\delta)$.
  \item If $\size{\cZ}>2$,
    \begin{enumerate}
    \item Let $\delta' = \AddNoise(1^\rnd, 1^\ell, \size{\cZ}, \delta)$.
    \item Sample $\size{\cZ}$ uniform shares $o_1,\ldots,o_k$ such that $\delta' = \bigoplus_{i=1}^k o_i$. 
    
    \end{enumerate}
  \item Return $o_i$ to party $\Party{\partNumInn}_{z_i}$, and $\perp$ to the other parties.
  \end{enumerate} 	
\end{algorithm}
It is clear that for the case $\size{\cZ}=1$, the expected value of the output bit of the party in $\cZ$ is indeed $\delta$. Since the expected value of $\delta'$ output by $\AddNoise(\cdot, \delta)$ (see below) is $\delta$, it is not hard to see that the same holds also for the case  $\size{\cZ}>2$. Finally, though somewhat more difficult to verify, the above also holds for the case $\size{\cZ}=2$ (see \cref{sec:htprotocol}).

\subsubsection{The $\AddNoise$ Functionality}
The $\AddNoise$ functionality, invoked by $\DefenseTilde$,  takes as input $\delta\in [0,1]$ and returns a ``noisy version'' of it $\delta'$ (\ie expected value is $\delta$). The amount of noise used is determined by the defense-quality parameter $\ell$ that reflects the number of players that interact in the parent protocol $\ProtocolF{t}$, the number of parties that will use the returned value in their sub-protocol  $\size{\cZ}$, and the round complexity of the protocol $\rnd$.


\begin{definition}[$\alpha$-factors]\label{def:alphaFactors}
  For $\rnd \geq 1$, $\ell \geq 2$ and $2 \leq k \leq \ell$, let $ \alpha(\rnd, \ell, k) =  m^{\frac{2^{\ell-3}}{2^{\ell-2}-1} \cdot \frac{2^{k-2}-1}{2^{k-3}}}$.
\end{definition}

\begin{algorithm}[$\AddNoise$]
\item[Input:] Parameter $1^\rnd$, $1^{\ell}$ and  $1^k$, and $\delta \in [0,1]$.
\item[Operation:]~
  \begin{enumerate}
  \item Let $\alpha = \alpha(\rnd, \ell, k)$ and $\eps = \sBias{\ms{1}}{\delta}$.\label{addnoise:epsCalculation}
  \item Sample $\vect{\bar{b}} \la\ (\Beroo{\eps})^{\alpha \cdot \ms{1}}$.\label{addnoise:sampleVec}
  \item Let $\delta' = \ppr{\cX \subseteq [\alpha \cdot \ms{1}], \size{\cX}=\ms{1}}{\sum_{x \in X} \vect{\bar{b}}[x] > 0}$.\footnote{I.e., $\delta'$ is the probability that when sampling $\ms{1}$ coins from $\vect{\bar{b}}$, their sum is positive.}
  \item Output $\delta'$.
  \end{enumerate}
\end{algorithm}

Namely, $\AddNoise$ sample a vector $\vect{\bar{b}}$ of $\alpha \cdot \ms{1}$ $\eps$-biased coins. The value of $\delta'$ is then determined as the probability to get a positive sum, when sampling $\ms{1}$-size subset of coins from $\vect{\bar{b}}$.

\subsubsection{The Protocol of \citeauthor{HaitnerT17}}\label{sec:htprotocol}

In this section we define the two-party protocol $\Protocol{2}$ and the functionality $\HTDefenseProtocol$.  For clarity,  in this subsection we name protocol $\Protocol{2}$  by  $\HTProtocol$. 

Protocol $\HTProtocol$ and functionality $\HTDefenseProtocol$ defined below are  close variants for those used by \citet{HaitnerT17} for construction their three-party coin-flipping protocol. For an elaborated  discussion  of the ratio underlying the following  definitions, see \cite{HaitnerT17}.

\paragraph{Protocol $\HTProtocol$.}
The two-party $m$-round protocol $\HTProtocol_\rnd$ is defined as follows  (the functionality $\HTDefenseRound$ used by the protocol is defined below). Recall that for $\ell \in \N$, $\tp(\ell) = \ceil{\log \ell} +1$  is the number of bits it takes to encode an integer in $[-\ell,\ell]$.

\begin{samepage} 
\begin{protocol}[$\HTProtocol_\rnd = (\Party{2}_1,\Party{2}_2)$] \label{protocol:inner:2} 
\item[Common input:] round parameter $1^\rnd$.
\item[Oracles:] $\HTDefenseRound$.
\item[$\Party{2}_\z$'s input:] $\cVS{\z} \in \zo^{\rnd\times \tp(\rnd)}$, $d^z \in \zo$, and $\bankV{\isShare{z},1},\bankV{\isShare{z},2} \in \zo^{2 \cdot \ms{1}}$.
\item [Protocol's description:]~
  \begin{enumerate}
  \item For $i=1$ to $\rnd$:
    \begin{enumerate}
    \item The parties  jointly call 
        $\HTDefenseRound(1^\rnd,c_1,\ldots,c_{i-1},\cVS{1}[i],\cVS{2}[i],
                         \bankV{\isShare{1},1},\bankV{\isShare{1},2},
                         \bankV{\isShare{2},1},\bankV{\isShare{2},2})$,
        where  $(1^\rnd, c_1,\ldots,c_{i-1})$ is the common input, and $(\cVS{\z}[i],\bankV{\isShare{z},1},\bankV{\isShare{z},2})$ is provided by the party $\Party{2}_z$. \label{protocol:inner:2:defense}
			
    \item[$\bullet$] For all $\z\in \set{1,2}$, party $\Party{2}_z$ updates  $d^{z}$ to the value it received from this call.
			
    \item $\Party{2}_1$ sends $\cVS{1}[i]$ to $\Party{2}_2$, and $\Party{2}_2$ sends $\cVS{2}[i]$ to $\Party{2}_1$.\label{protocol:inner:2:reconstructCoin}
    \item[$\bullet$] Both parties set $c_i = \cVS{1}[i] \xor \cVS{2}[i]$.
    \end{enumerate}
  \item Both parties output $\sign(\sum_{i=1}^\rnd c_i)$.
  \end{enumerate}
\item [Abort:] The remaining party  $\Party{2}_\z$ outputs $d^{\z}$.
\end{protocol}

That is, the parties get correlated shares for the rounds' coins, and they reveal them in the main loop at \Stepref{protocol:inner:2:reconstructCoin}. Prior to revealing them, the parties call the $\HTDefenseRound$ functionality to get a defense value in case the other party aborts during the coin reconstruction. 

\end{samepage} 

\begin{algorithm}[$\HTDefenseRound$]
\item[Input:] Parameter $1^\rnd$, coins $c_1,\ldots,c_{i-1}$, and 
  shares $\cVS{1}[i],\cVS{2}[i] \in \zo^{\tp(\rnd)}$ and $\bankV{\isShare{1},1},\bankV{\isShare{2},1},\bankV{\isShare{1},2},\bankV{\isShare{2},2} \in \oo^{2 \cdot \ms{1}}$.

\item[Operation:]~
\begin{enumerate}
\item Let $\bankV{1} = \bankV{\isShare{1},1} \xor \bankV{\isShare{2},1}$,  $\bankV{2} = \bankV{\isShare{1},2} \xor \bankV{\isShare{2},2}$ and  $c_i=\cVS{1}[i] \xor \cVS{2}[i]$.
  \item For both $\z\in \set{1,2}$: sample a random $(\ms{i+1})$-size subset $\cW^\z \subset [2 \cdot \ms{1}]$, and set $d^{\z}$ to one if $\sum_{j=1}^i c_j + \sum_{w \in \cW^\z} \bankV{z}[w] \geq 0$, and to zero otherwise.
  
  \item Return $d^z$ to party $\Party{2}_z$. 
   \end{enumerate}
\end{algorithm}

Namely, to generate a defense value $d_z$ for $\Party{2}_z$, $\HTDefenseRound$ samples  $(\ms{i+1})$-coins from the vector $\bankV{z}$, adds them to the coin $c_1,\cdots,c_i$ and  set $d_z$ to the  sign of this sum.

\paragraph{The $\HTDefenseProtocol$ functionality.} This functionality  prepares the  inputs for the  parties that interact in $\HTProtocol$.  

Recall that for $n\in \N$ and $\eps \in [-1,1]$, $\Beroo{n,\eps}$ is the binomial distribution induced by the sum of $n$ independent random $\rpm 1$ coins, taking the value $1$ with probability $\frac{1}{2}(1+\eps)$, and $-1$ otherwise.

\begin{samepage}
\begin{algorithm}[$\HTDefenseProtocol$]
\item[Input:] Parameter $1^\rnd$ and $\delta \in [0,1]$.

\item[Operation:]~
  \begin{enumerate}
  \item Let $\eps = \sBias{\ms{1}}{\delta}$. \label{htdefenseprotocol:epsCalculation}
  \item For $\z\in \set {1,2}$: sample $\bankV{\z} \la (\Beroo{\eps})^{2\cdot \ms{1}}$. \label{htdefenseprotocol:vecSample} 

  \item For $\z\in \set{1,2}$: sample a random $(\ms{1})$-size subset $\cI^\z \subset [2\cdot \ms{1}]$, and set $d^z$ to one if $\w(\bankV{\z}_{\cI^\z}) \geq 0$, and to zero otherwise.

  \item Let $\vect{c} = (c_1,\ldots,c_m)$ where for $i \in [\rnd]$, $c_i \la \Beroo{\ml{i} ,\eps}$.

  \item Sample two uniform shares $\cVS{1},\cVS{2}$ with  $\cVS{1} \xor \cVS{2} = \vect{c}$.  For both $z \in \set{1,2}$, sample  two uniform shares $\bankV{\isShare{1},z},\bankV{\isShare{2},z}$ with $\bankV{\isShare{1},z} \xor \bankV{\isShare{2},z} = \bankV{\z}$.

  \item Return: $((\cVS{z},\bankV{\isShare{z},1},\bankV{\isShare{z},2},d^z))_{z \in \set{1,2}}$.
  \end{enumerate}
\end{algorithm}
\end{samepage}

Namely, at \Stepref{htdefenseprotocol:epsCalculation}, $\HTDefenseProtocol(\delta)$ calculates  $\eps \in [-1,1]$ for which the probability that the  sum of $\ms{1}$ independent $\eps$-bias coins is positive, is  $\delta$. Then, $\HTDefenseProtocol$ uses this $\eps$ to sample the rounds' coins $c_i$, to be used in the two-party protocol $\HTProtocol_\rnd$, and the vectors that are used by $\HTDefenseRound$ to give defense values in every round of the loop of $\HTProtocol_\rnd$.

\subsection{Security Analysis of the Hybrid-Model Protocol}\label{sec:AnalysisOfProtocol}

In this section we prove the following theorem,  stating that \cref{protocol:outer} cannot be biased  much by a fail-stop adversary.

\begin{theorem}\label{thm:MainFailStop}
  Fix an integer function $t'$ with $t'(m) \leq \frac12\loglog m$. For integers $\rnd \equiv 1 \bmod 12$ and $\partNum = \partNum'(m)$, protocol $\ProtocolF{\partNum}_\rnd$ is a $(\partNum \cdot \rnd)$-round, $\partNum$-party, $O\left(\frac{t \cdot 2^t \cdot \sqrt{\log m}}{m^{1/2+1/\left(2^{t-1}-2\right)}}\right)$-fair, coin-flipping protocol, against unbounded fail-stop adversaries, in the $(\Defense,\Coin)$-hybrid model.\footnote{The hidden constant in the $O$ notation is independent of $t'$ and $m$.}
\end{theorem}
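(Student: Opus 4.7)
\begin{proofsketch}
By \cref{prop:FairCTGameAlt}, it suffices to prove that $\ProtocolF{\partNum}_\rnd$ is a correct coin-flipping protocol and $\alpha$-unbiased against fail-stop adversaries for the stated $\alpha$. Correctness is immediate from the construction: in an all-honest execution the shares of $1/2$ reconstruct to $\delta = 1/2$, so $\eps = 0$ in every $\Coin$ call and each $c_i \la \Beroo{\ml{i},0}$ is an unbiased binomial; the condition $\rnd \equiv 1 \bmod 12$ ensures $\NS{1} = \sum_{i=1}^\rnd (m+1-i)^2$ is odd, so the output $\sign(\sum_i c_i)$ is a uniform bit. Any abort branch recursively invokes a smaller coin-flipping sub-protocol (inductively correct) or, in the singleton case, outputs an unbiased bit directly.

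For unbiasedness we induct on $t = \partNum$. The base case $t = 2$ is the two-party protocol $\HTProtocol_\rnd$ of \cref{sec:htprotocol}, a close variant of the \cite{HaitnerT17} two-party protocol; re-analyzing it using the LP-duality bound from \cref{sec:BinomialViaLP} in place of their union bound over non-adaptive strategies sharpens the bias to $O(\sqrt{\log m}/m)$, matching the claim for $t=2$ (with the $1/(2^{t-1}-2)$ exponent interpreted in the limit). For the inductive step, fix a fail-stop adversary $\Aadv$ controlling $\cC \subsetneq [\partNum]$. By \cref{def:FairCTGameAlt} it suffices to bound $\size{\ex{V}{\sum_{j \le \size{\cC}} \val(V_{I_j}) - \val(V_{I_j}^-)}}$ over a random execution.

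We partition the abort positions. An abort in a $\Coin$ call, or in a $\Defense$ call whose input set $\cZ$ is not contained in $\cC$, reveals no new information to $\Aadv$ beyond the public transcript, so $\val(V_{I_j}) = \val(V_{I_j}^-)$ and the contribution is zero. An abort in a $\Defense$ call with $\cZ \subseteq \cC$ lets $\Aadv$ learn a single noisy value $\delta'$ output by $\AddNoise$; the leakage bounds of \cref{sec:ExpChangeDueLeakage} show the per-abort bias is quadratic in the noise scale, and the $\alpha$-factor of \cref{def:alphaFactors} is chosen precisely so that, when plugged into the LP-duality bound for the induced online binomial game from \cref{sec:BinomialViaLP}, this evaluates to $O(2^t \sqrt{\log m}/m^{1/2+1/(2^{t-1}-2)})$. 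An abort during reconstruction (\Stepref{protocol:inner:t:reconstruct:delta} or \Stepref{protocol:inner:t:reconstruct:coin}) triggers recursion into $\Protocol{h}$ or $\ProtocolF{t'}$ with $h,t' < t$; the induction hypothesis bounds the bias inside the sub-protocol, and the defense mechanism guarantees the expected sub-protocol outcome matches the pre-abort value up to a quadratic-in-noise residual already accounted for above.

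Summing these contributions over at most $t-1$ aborts, together with the inductive bound on recovery protocols, yields the claimed bias $O(t \cdot 2^t \cdot \sqrt{\log m}/m^{1/2+1/(2^{t-1}-2)})$; the assumption $t \le \tfrac12 \log\log m$ is needed only to guarantee this quantity is $o(1)$ so that \cref{prop:FairCTGameAlt} applies. The main obstacle is formalizing the translation of aborts inside the nested recursive protocol into a well-defined play of an online binomial game to which the LP bound of \cref{sec:BinomialViaLP} applies. Specifically, one must argue that, conditioned on the adversary's full view (public transcript, its $\Coin$ shares, and the outputs of all fully-corrupted $\Defense$ calls), the sequence of honest-expected outcomes $\delta_0, \delta_1, \ldots$ evolves as the martingale of an online binomial game whose hints are exactly the noisy defenses visible to $\Aadv$, and that the choice of $\alpha$-factor in \cref{def:alphaFactors} simultaneously optimizes the two competing effects of the $\AddNoise$ step: smaller leakage at the current level versus larger deviation inherited by the recovery sub-protocol one level below.
\end{proofsketch}
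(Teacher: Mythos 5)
Your high-level reduction via \cref{prop:FairCTGameAlt} and the role you assign to the Section~\ref{sec:ExpChangeDueLeakage} leakage bounds and the Section~\ref{sec:BinomialViaLP} LP-duality bounds for binomial games are all in the right place, and your correctness argument is essentially the paper's. But the overall structure you propose is an induction on $t$, and this is not how the paper argues, and more importantly it runs into a structural problem that the paper's actual organization is designed to avoid. The recovery sub-protocol $\Protocol{h}_\rnd$ is invoked with the defense-quality parameter $\ell$ \emph{inherited} from the outermost protocol (i.e.\ $\ell = t$), not with $\ell = h$. The $\alpha$-factors of \cref{def:alphaFactors} and hence the per-abort bias of the sub-protocol depend on $\ell$, so a standalone induction hypothesis ``$\ProtocolF{h}_\rnd$ is $B(h)$-unbiased'' (which refers to the protocol with $\ell=h$) simply does not describe what is running after an abort. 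The paper sidesteps this entirely: it bounds each single term $\size{\ex{V}{\val(V_{I_k}) - \val(V_{I_k}^-)}}$ for $k = 1, \dots, t-1$ directly, conditioning on the number $R=r$ of parties active at the $k$'th abort and on the step type $T$ within the round. Each such one-shot quantity is bounded by the hypergeometric/vector-leakage lemma (for $T=1$, via \cref{lemma:HyperlProcessVectorHint:pre}) or by a reduction to a binomial game with vector/hypergeometric hints (for $T=3b$, via \cref{binomial:game:vec,binomial:game:hyp}), all with the same $\ell=t$, and the contributions are then summed over the $t-1$ aborts without any compositional or inductive step.

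Two smaller issues. First, your criterion for ``aborts that give zero gain'' --- $\Coin$ calls and $\Defense$ calls with $\cZ \not\subseteq \cC$ --- is not the right dividing line. The paper's Claim for steps $T\in\{2,3a,3c\}$ shows zero gain because the relevant values ($\delta$, $c_i$, $\delta_i$) are shared via $r$-out-of-$r$ secret sharing and hence the abort round adds no information about them, while for $T\in\{1,3b\}$ the $\Defense$ outputs do leak. An adversary does gain from a $\Defense$ round even when $\cZ\not\subseteq\cC$, because it still learns its own shares from the mixed calls, and the paper's message random variable $M$ and \cref{protocol:lemma:inner:many:defense1:interface} bundle all of that, including the noise vectors from the fully-corrupted calls. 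Second, you explicitly flag the translation from protocol aborts to a binomial-game strategy as ``the main obstacle'' but leave it open; the paper resolves it concretely via the explicit players in the proofs of \cref{claim:connectionToVectorGame,claim:connectionToHypGame}, which emulate $\ProtocolF{t}_\rnd$ using the game's coins and hints and set $\val(V_I) - \val(V_I^-)$ equal to the game's reward $O_I - O_I^-$. Without that construction (or an equivalent one), the reduction to \cref{lemma:MainInequalityForLP} does not close.
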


We prove \cref{thm:MainFailStop} in \cref{subsec:ProvingMainThmFailStop}, but first introduce the main tools and concepts used for this proof. \textit{Leakage from two-step boolean process} used to bound attack in \Stepref{protocol:inner:t:defense:1}, is presented in \cref{subsec:TwoStepProccess}. \textit{Binomial games} used to bound an attack inside the loop of \cref{protocol:inner:t}, are introduced in  \cref{subsec:formal_binomial}. Finally, in \cref{subsec:BasicObservations} we note several simple facts about the protocol.

\subsubsection{Leakage from Two-Step Boolean Processes}\label{subsec:TwoStepProccess}
Our main tool for analyzing  the effect of an  abort in \Stepref{protocol:inner:t:defense:1} of protocol $\Protocol{\partNumInn}_\rnd$, for $\partNumInn>2$, is bounding the leakage from the relevant ``two-step boolean process''. 
A two-step boolean process is a pair of jointly-distributed random variables $(\ElementVar,\Value)$, where $\Value$ is over $\zo$ and $\ElementVar$ is over an arbitrary domain $\ElementSet$. It is instructive to think that  the process' first step is  choosing $\ElementVar$, and its second step is to choose $\Value$ as a random function of $\ElementVar$. A leakage function $\HintFunc$ for a two-step process $(\ElementVar,\Value)$ is simply a randomized function over the support of $\ElementVar$. We will be interested in bounding by how much  the expected outcome of $\Value$ changes when $\HintFunc(\ElementVar)$ is leaked. This change is captured via the notion of prediction advantage.

\def\PredictionAdvantage{
        For a two-step process $\Pc = (\ElementVar,\Value)$ and a leakage function $\HintFunc$ for $\Pc$, define the \emph{prediction advantage} $\PredictAdv_{\Pc,\HintFunc}$ by $\PredictAdv_{\Pc,\HintFunc}(\hint) = \size{\pr{\Value = 1} - \pr{\Value = 1 \mid \HintVar = \hint}}$.
}
\begin{definition}[prediction advantage]\label{def:PredictionAdvantage:pre}
    \PredictionAdvantage
\end{definition}

We now define the notions of an hypergeometric process, and of vector leakage function. As we shall see later on, the boolean process that induced in \cref{protocol:inner:t:defense:1} of protocol $\Protocol{\partNumInn}_\rnd$, can be viewed as such a hypergeometric process, coupled to a vector leakage function.

\def\VectorLeakageFunction{
     Let $\vctBaseLen, \vctLenFact$  be integers. A randomized function $\HintFunc$ is a {\sf $(\vctBaseLen, \vctLenFact)$-vector leakage function}  for the two-step Boolean process $(\ElementVar,\Value)$, if on input  $\element \in \Supp(\ElementVar)$,  it outputs a vector in $\oo^{\vctTotalLen}$ according to $(\Beroo{\eps})^{\vctTotalLen}$, for $\eps = \sBias{\vctBaseLen}{\eex{\Value \mid \ElementVar = \element}}$.   
}

\begin{definition}[vector leakage function]\label{def:VectorLeakageFunction:pre}
    \VectorLeakageFunction
\end{definition}
 
\def\HypProcess{
    Let $\vctBaseLen, \hypVctLenFact \in \N$ and $\delta \in [0,1]$. 
    An {\sf $\bigl(\vctBaseLen, \hypVctLenFact, \delta\bigr)$-hypergeometric process} is the two-step Boolean process $(\ElementVar,\Value)$ defined by 
    \begin{enumerate}
        \item $\ElementVar = \vHyp{\hypVctLenFact \cdot \vctBaseLen, \w(v), \vctBaseLen}(0)$, for $v\la (\Beroo{\eps})^{\hypVctLenFact \cdot \vctBaseLen}$ and $\eps = \sBias{\vctBaseLen}{\delta}$.
        
        \item $\Value \la \Berzo{\ElementVar}$, 
    \end{enumerate}
}
\begin{definition}[Hypergeometric process]\label{def:HypProcess:pre}
    \HypProcess
\end{definition}

In \cref{subsec:ProvingMainThmFailStop} we use the following lemma to bound the gain an adversary can achieve by aborting at  \Stepref{protocol:inner:t:defense:1} of $\Protocol{\partNumInn}_\rnd$. The proof is given in \cref{sec:ExpChangeDueLeakage}.

\def\HyperProcessVectorHint{
    Assume $\vctBaseLen, \vctLenFact, \hypVctLenFact \in \N$ and $\delta \in [0,1]$,
    satisfy
    \begin{enumerate}
        
        \item $2 \leq \vctLenFact < \hypVctLenFact \leq \vctBaseLen$,\label{lemma:HyperlProcessVectorHint:asmp2}
        
        \item $\frac{\vctLenFact + \sqrt{\vctBaseLen}}{\vctBaseLen} \cdot \log^2 \vctBaseLen  \leq 10^{-5} \cdot \sqrt{\frac{\vctLenFact}{\hypVctLenFact}}$, and \label{lemma:HyperlProcessVectorHint:asmp3}
        
        \item $\sqrt{\frac{\vctLenFact}{\hypVctLenFact}}\cdot \log \vctBaseLen \leq \frac1{100}$.\label{lemma:HyperlProcessVectorHint:asmp4}
    \end{enumerate}
    Let $\Pc = (\ElementVar, \Value)$ be a $\bigl(\vctBaseLen, \hypVctLenFact, \delta\bigr)$-hypergeometric process according to \cref{\HyperProcessLabel} 
    , let $\HintFunc$ be an $\bigl(\vctBaseLen, \vctLenFact\bigr)$-vector leakage function for $\Pc$ according to \cref{\VectorLeakageFunctionLabel}
    , and let $\PredictAdv_{\Pc,\HintFunc}$ be according to \cref{\PredictionAdvantageLabel}
    .
    Then, there exists a universal constant $\const > 0$ such that
    \begin{align*}
    \ppr{\hint \la \HintVar}{\PredictAdv_{\Pc,\HintFunc}(\hint) > \const\cdot  \sqrt{\log \vctBaseLen} \cdot \frac{\sqrt{\vctLenFact}}{\hypVctLenFact}} \leq \frac1{\vctBaseLen^2}.
    \end{align*}
}

\def\HyperProcessLabel{def:HypProcess:pre}
\def\VectorLeakageFunctionLabel{def:VectorLeakageFunction:pre}
\def\PredictionAdvantageLabel{def:PredictionAdvantage:pre}
\begin{lemma}\label{lemma:HyperlProcessVectorHint:pre}
    \HyperProcessVectorHint
\end{lemma}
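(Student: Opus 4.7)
The key observation is that, conditioned on $\ElementVar = \element$, the hint $\HintVar$ is a vector of $\vctLenFact \cdot \vctBaseLen$ i.i.d.\ $\eps'$-biased $\oo$ coins for $\eps' = \sBias{\vctBaseLen}{\element}$, so the weight $W := \w(\HintVar)$ is a sufficient statistic for $\ElementVar$. Consequently $\pr{\Value = 1 \mid \HintVar = \hint} = \ex{}{\ElementVar \mid W = \w(\hint)}$, and the plan is to bound $\abs{\ex{}{\ElementVar \mid W = w} - \ex{}{\ElementVar}}$ for the values $w$ that occur with probability at least $1 - 1/\vctBaseLen^2$.

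Parametrize $\ElementVar$ through $K := \w(v)$, where $v$ is the underlying vector of the hypergeometric process, so that $\ElementVar = g(K)$ for $g(k) := \vHyp{\hypVctLenFact \vctBaseLen, k, \vctBaseLen}(0)$. By \cref{prop:hyperToNormal}, in the range $\abs{K} \lesssim \sqrt{\hypVctLenFact \vctBaseLen \log \vctBaseLen}$ (which captures $K$ except with probability at most $1/\vctBaseLen^2$ by Hoeffding) the function $g$ is smooth with Lipschitz constant of order $1/(\hypVctLenFact \sqrt{\vctBaseLen})$; and by \cref{prop:hyperEpsEstimation} the bias $\eps'(K) := \sBias{\vctBaseLen}{g(K)}$ is approximately linear in $K$ with slope of order $1/(\hypVctLenFact \vctBaseLen)$. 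Consequently $W \mid K$ is a $(\vctLenFact \vctBaseLen)$-coin binomial whose conditional mean is approximately linear in $K$ with slope of order $\vctLenFact/\hypVctLenFact$ and whose conditional variance is of order $\vctLenFact \vctBaseLen$.

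The proof then proceeds in three steps. First, applying \cref{claim:Hoeffding} both marginally to $K$ and conditionally to $W \mid K$ yields that, outside an event of probability at most $1/\vctBaseLen^2$, both $\abs{K} \lesssim \sqrt{\hypVctLenFact \vctBaseLen \log \vctBaseLen}$ and $\abs{W - \ex{}{W}} \lesssim \sqrt{\vctLenFact \vctBaseLen \log \vctBaseLen}$ hold. Second, perform a Bayesian posterior-mean computation in the linearized Gaussian version of the model: the signal-to-noise ratio for estimating $K$ from $W$ equals $\vctLenFact/\hypVctLenFact$, which is small by assumption~\ref{lemma:HyperlProcessVectorHint:asmp4}, and the posterior-mean shift coefficient evaluates to $\frac{(\vctLenFact/\hypVctLenFact)(\hypVctLenFact \vctBaseLen)}{(\vctLenFact/\hypVctLenFact)^2(\hypVctLenFact \vctBaseLen) + \vctLenFact \vctBaseLen} = \frac{1}{1+\vctLenFact/\hypVctLenFact} = O(1)$, so the posterior mean of $K$ shifts from its prior mean by $O(\sqrt{\vctLenFact \vctBaseLen \log \vctBaseLen})$ on the typical event of the first step. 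Third, the Lipschitz bound on $g$ turns this shift of $K$ into a shift of $\ex{}{\ElementVar \mid W = w}$ of at most $\frac{1}{\hypVctLenFact \sqrt{\vctBaseLen}} \cdot \sqrt{\vctLenFact \vctBaseLen \log \vctBaseLen} = \frac{\sqrt{\vctLenFact \log \vctBaseLen}}{\hypVctLenFact}$, matching the target bound.

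The main difficulty is purely quantitative. The approximations in \cref{prop:hyperProbTightEstimation,prop:hyperToNormal,prop:hyperEpsEstimation} carry multiplicative errors of order $\log^{1.5} \vctBaseLen / \sqrt{\vctBaseLen}$, and one must propagate them through the posterior computation---together with the second-order Jensen correction $\tfrac{1}{2} g''(\cdot) \cdot \Var(K \mid W = w)$, where $\abs{g''} \lesssim 1/(\hypVctLenFact^2 \vctBaseLen)$ and the posterior variance of $K$ is still of order $\hypVctLenFact \vctBaseLen$---without letting any of these exceed the leading first-order contribution $\sqrt{\vctLenFact \log \vctBaseLen}/\hypVctLenFact$. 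Assumptions~\ref{lemma:HyperlProcessVectorHint:asmp3} and~\ref{lemma:HyperlProcessVectorHint:asmp4} of the lemma are calibrated precisely so that both the normal-approximation error and the Jensen correction remain strictly dominated by this leading term.
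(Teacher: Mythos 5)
Your plan is sound and recovers the right bound, but it takes a genuinely different route from the paper's. You view the problem as Gaussian signal estimation: reparametrize $\ElementVar = g(K)$ through the latent signal $K = \w(v)$ (valid because $g$ is strictly monotone, so conditioning on $\ElementVar$ is the same as conditioning on $K$), observe that $W = \w(\HintVar)$ is a sufficient statistic, compute a Gaussian linear-model posterior mean of $K$ given $W$, and push it forward through the Lipschitz map $g$. The paper never reparametrizes through $K$. Instead it bounds the prediction advantage via the generic inequality of \cref{lemma:ElementaryBound:GenericDiffBound}, namely $\PredictAdv_{\Pc,\HintFunc}(\hint) \le \ex{\element}{\abs{\pr{\Value=1 \mid \ElementVar=\element} - \pr{\Value=1}} \cdot \abs{1-\ratioo_{\hint}(\element)}}$ plus tail terms, and then bounds the two pointwise factors separately: the likelihood ratio $\ratioo_\hint(\element)$ is an exact product of Bernoulli pmfs and \cref{lemma:VctHint:RatioEstimation} Taylor-expands its logarithm to obtain a first-order bound on $\abs{1-\ratioo_\hint(\element)}$, while the factor $\abs{\element - \ex{}{\ElementVar}}$ is controlled by the normal approximation to the hypergeometric cdf (\cref{prop:hyperToNormal}). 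This framing is more modular than yours---the same $\ratioo$-expansion lemma serves verbatim for the binomial process with vector leakage (\cref{lemma:BinomialProcessVectorHint})---and never invokes a Gaussian density in the posterior computation.

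The step your sketch glosses over is the one that carries the real weight: the posterior $\pr{K=k \mid W=w}$ is not Gaussian (a binomial prior times a binomial likelihood), so you cannot simply cite the Gaussian posterior-mean formula, and substituting it requires a local-CLT-with-errors argument you have not supplied. You do not in fact need a Gaussian density. Because the vector hint is an exponential family, the log-likelihood of $\HintVar$ given $\ElementVar = \element$ is exactly linear in $W$ with coefficient $\tfrac12\log\frac{1+\eps_\element}{1-\eps_\element}$ for $\eps_\element = \sBias{\vctBaseLen}{\element}$, and \cref{prop:hyperEpsEstimation} linearizes $\eps_\element$ in $K$ with explicit error. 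This exponential-tilting argument---which is what \cref{lemma:VctHint:RatioEstimation} is actually doing---is what a rigorous version of your sketch should use in place of the Gaussian-posterior shortcut; it produces the same leading term $\sqrt{\log \vctBaseLen}\cdot \sqrt{\vctLenFact}/\hypVctLenFact$ without ever approximating a pmf by a normal density. A smaller omission: your use of the normal approximation implicitly assumes $\delta$ is bounded away from $\{0,1\}$; the paper handles $\delta$ within $\vctBaseLen^{-4}$ of the endpoints by a separate two-line argument (if $\PredictAdv$ were noticeably nonzero, $\delta$ could not be that extreme), and your write-up needs this case too.
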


\paragraph{Proof sketch.}
\cref{lemma:HyperlProcessVectorHint:pre} is proved in \cref{subsubsec:HyperOneRoundGameVectorHint},  yet to make the current section more self contained we give a high-level proof sketch (under some simplifying assumptions).

Assume for simplicity that $\delta=1/2$, and therefore in the Hypergeometric process  we sample $v \la \oo^{\beta s}$ (\ie  each $v_i$ is an unbiased coin). Therefore, $\size{w(v)}$ is expected to be $\approx \sqrt{\beta s}$, yielding that $\size{A-1/2} \approx 1/\sqrt{\beta}$ (follows by \cref{prop:hyperToNormal}). For simplicity, assume that either $A=1/2+1/\sqrt{\beta}$ or $A=1/2-1/\sqrt{\beta}$, and each happens with probability $1/2$. In the $(\alpha,s)$-vector leakge, we essentially reveal $\alpha$ i.i.d.\ samples, each takes $1$ w.p.\ $A$ and $-1$ otherwise (this is because we reveal $\alpha s$ i.i.d.\ samples where the weight of each $s$ samples is positive with probability $A$). For simplicity, assume further that we only reveal whether the sum of those $\alpha$ samples (denote them by $Z_1,\ldots,Z_{\alpha}$) is positive.
Now, we need to analysis how much information the event $\set{\sum_{i=1}^\alpha Z_i \geq 0}$ leaks on the (independent) sample $B \la \Berzo{A}$. Compute

\begin{align*}
	\lefteqn{\pr{B=1 \mid \sum_{i=1}^\alpha Z_i \geq 0}}\\
	&= \pr{B=1 \mid A = 1/2+1/\sqrt{\beta}} \cdot \pr{A = 1/2+1/\sqrt{\beta} \mid \sum_{i=1}^\alpha Z_i \geq 0}\\
	&+ \pr{B=1 \mid A = 1/2-1/\sqrt{\beta}} \cdot \pr{A = 1/2-1/\sqrt{\beta} \mid \sum_{i=1}^\alpha Z_i \geq 0}\\
	&= (1/2+1/\sqrt{\beta}) \cdot \pr{\sum_{i=1}^\alpha Z_i \geq 0 \mid A = 1/2+1/\sqrt{\beta}}\cdot \frac{\pr{A = 1/2+1/\sqrt{\beta}}}{\pr{\sum_{i=1}^\alpha Z_i \geq 0}}\\
	&+ (1/2-1/\sqrt{\beta}) \cdot \pr{\sum_{i=1}^\alpha Z_i \geq 0 \mid A = 1/2-1/\sqrt{\beta}} \cdot \frac{\pr{A = 1/2-1/\sqrt{\beta}}}{\pr{\sum_{i=1}^\alpha Z_i \geq 0}}\\
	&\approx (1/2+1/\sqrt{\beta})(1/2+\sqrt{\alpha/\beta}) + (1/2-1/\sqrt{\beta})(1/2-\sqrt{\alpha/\beta})\\
	&= 1/2 + \sqrt{\alpha}/\beta.
\end{align*}
The ``$\approx$'' transition holds since: (1) $\pr{A = 1/2+1/\sqrt{\beta}} = \pr{A = 1/2-1/\sqrt{\beta}} = \pr{\sum_{i=1}^\alpha Z_i \geq 0} = 1/2$,  (2) The sum of $\alpha$ independent samples from $\Beroo{1/\sqrt{\beta}}$ is positive with probability $\approx 1/2+\sqrt{\alpha/\beta}$, and (3) The sum of $\alpha$ independent samples from $\Beroo{-1/\sqrt{\beta}}$ is positive with probability $\approx 1/2-\sqrt{\alpha/\beta}$.

Since $\pr{B=1} = 1/2$, we conclude that $\size{\pr{B=1} - \pr{B=1 \mid \sum_{i=1}^\alpha Z_i \geq 0}} \leq  \sqrt{\alpha}/\beta$. 

\subsubsection{Online-Binomial Games}\label{subsec:formal_binomial}

Our main tool for analyzing the effect an abort in the main loop of the protocol has, is bounding the bias of the relevant ``online-binomial games''. Following the informal discussion given in \cref{sec:Introduction}, we give here a formal definition of such games. While in the introduction we referred to a very narrow notion of binomial game, here we cover a wider class of games, letting the challenger to toss many, possibly biased, coins in each round. 

\def\onlineBinomialGame{
	Let $m \in \N$, $\eps \in [-1,1]$, and   $f$ be a randomized function over $[m] \times \Z \times \Z$. The {\sf $m$-round online binomial game $\game_{m,\eps,f}$} is the random variable  $\game_{m,\eps,f} = \set{C_1,\ldots,C_m,f}$, where for every $i \in [m]$, $C_i \la \Beroo{(m-i+1)^2,\eps}$.  
	We refer to each $C_i$ as the {\sf $i$'th round coins}, and to $f$ as the {\sf hint} function.
}

\begin{definition}[online-binomial game]\label{def:game}
	\onlineBinomialGame
\end{definition} 

We will be interested in bounding by how much the outcome of such a game can be biased.

\def\biasOnlineBinomialGame {
	Let $\game = \game_{m,\eps,f} = \set{C_1,\ldots,C_m,f}$ be an $m$-round online binomial game. 
	For $i \in \set{1,\ldots,m}$, let $S_i = \sum_{j=1}^i C_j$, letting $S_0=0$. 
	For $i \in \set{1,\ldots,m}$, let $H_i = f(i,S_{i-1},C_i)$, let $\delta_i(b) = \pr{S_m \geq 0 \mid  S_{i-1}=b}$, let $\delta_i(b,h) = \pr{S_m \geq 0 \mid  S_{i-1}=b,\ H_i=h}$, let $O_i=\delta_i(S_{i-1},H_i)$, and let $O_i^-=\delta_i(S_{i-1})$. Let also $O_{m+1}=O_{m+1}^-=1$ if $S_m \geq 0$, and let  $O_{m+1}=O_{m+1}^-=0$ if $S_m < 0$.

	For an algorithm $\Strategy$, let $I$ be the first round in which $\Bc$ outputs $1$ in the following $m$-round process:
	In round $i$, algorithm $\Bc$ is getting input $(S_{i-1},H_i)$ and outputs a $\zo$-value. Let $I=m+1$ if $\Bc$ never outputs a one. The {\sf bias $\Bc$ gains in $\game$} is defined by
	$$\bias_\Bc(\game) = \size{\eex{O_I - O_I^- }}$$
	The {\sf bias of $\game$} is defined by $\bias_{m,\eps,f} = \bias(\game) = \max_\Bc \set{ \bias_\Bc(\game)}$, where the maximum is over {\sf all} possible algorithms $\Bc$.	
}

\begin{definition}[The bias of $G_{m,\eps,f}$]\label{def:gameBias}
    \biasOnlineBinomialGame
\end{definition}

Namely, in the $i$'th round the algorithm $\Strategy$ is getting the sum of the coins flipped up to previous round - $S_{i-1}$, and a ``hint'' $H_i= f(i,S_{i-1}, C_i)$. If the $\Strategy$ decides to abort, it get rewarded by $\size{\delta_i(S_{i-1},H_i) - \delta_i(S_{i-1})}$. Hence, $\Strategy$'s ``goal'' is to find the round in which the above gain is maximized.

In the proof of \cref{thm:MainFailStop}, we use the following two lemmas (proven in \cref{sec:BinomialViaLP}). 

\def\VectorHint {
    For $\rnd,\ell \in \N$ and  $\eps \in [-1,1]$, define the random function $\fvec{\rnd,\eps,\ell} \colon [m] \times \Z \times \Z \mapsto \set{-1,1}^{\ell}$ as follows: on input $(i,b,c)$, it calculates  $\delta = \vBeroo{\ms{i+1},\eps}(-b-c)$, and  $\eps \eqdef \sBias{\ms{1}}{\delta}$, and returns a random sample from $(\Beroo{\eps})^\ell$.
}

\begin{definition}[Vector hint]\label{def:vectorHint} 
    \VectorHint
\end{definition}

\def\BinomialVectorHintInterfaceLemma{
	For $\rnd \in \N$, $k \in [\rnd]$, $\eps \in [-1,1]$, and $f=\fvec{\rnd,\eps,k \cdot \ms{1}}$, let $\game$ be the binomial game $\game_{m,\eps,f}$ according to \cref{\GameLabel}. 
	Assuming that $k \leq \frac{m}{\log^6 m}$, it holds that $\bias_{\game} \in O(\frac{\sqrt{k}}{m} \cdot \sqrt{\log \rnd})$.	
}

\def\GameLabel{def:game}
\begin{lemma}\label{binomial:game:vec}
	\BinomialVectorHintInterfaceLemma
\end{lemma}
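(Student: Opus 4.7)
The plan is to prove \cref{binomial:game:vec} by applying the linear-program characterization of online-binomial game biases, exactly along the paradigm sketched in \cref{intro:sec:LP}. A state of the game $\game = \game_{m,\eps,f}$ is a triple $v = \langle i, b, h\rangle$ specifying the round, the partial sum $S_{i-1}=b$, and the hint $h=H_i$. Let $p_v$ be the probability that the trajectory visits $v$, let $p_{v\mid u}$ be the conditional probability of visiting $v$ given that it visits an earlier state $u$, and let $c_v = \delta_i(b,h) - \delta_i(b)$ be the signed gain from aborting at $v$. The LP characterization (to be established in \cref{sec:BinomialViaLP}) asserts that
\begin{equation*}
\bias_\game \;=\; \max \Bigl\{ \sum_v a_v c_v \;:\; a_v \ge 0,\ a_v + \sum_{u<v} a_u\, p_{v\mid u} \le p_v \text{ for every } v \Bigr\},
\end{equation*}
with $a_v$ representing the marginal probability that the strategy aborts at $v$.

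First, I invoke weak LP duality and bound the primal optimum by exhibiting a feasible dual solution. The dual minimizes $\sum_v p_v y_v$ subject to $y_u + \sum_{v>u} y_v\, p_{v\mid u} \ge c_u$ for every state $u$ (and $y_v\ge 0$). Rather than constructing the dual from scratch, I would cast $y_v$ in potential-function form: pick a function $\phi$ on partial states so that $y_v = \phi(v) - \E[\phi(v') \mid v]$, where $v'$ ranges over the children of $v$. This makes the feasibility constraint local (a single-step Bellman-type inequality $\phi(v) \ge c_v + \E[\phi(v')\mid v]$) and telescopes the dual objective to $\phi(\text{root}) - \E[\phi(\text{leaf})]$, with $\sum_v p_v c_v = 0$ accounting for the cross term.

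Second, I bound $|c_v|$ using concentration for the vector hint. Because the hint at round $i$ is a length $k\cdot\ms{1}$ vector of i.i.d.\ $\Beroo{\eps'}$ coins whose bias $\eps'$ encodes $\delta_i(b, C_i)$, Hoeffding's inequality (\cref{claim:Hoeffding}) implies that with probability at least $1 - m^{-\Omega(1)}$ over $h$ we can recover an estimate of $\delta_i(b,C_i)$ with error $O(\sqrt{\log m / (k\cdot \ms{1})})$. Translating through the Bayesian posterior and the Gaussian-like behavior of $\delta_i(b,\cdot)$ (\cref{prop:binomProbEstimation,prop:epsDiff}), this yields $|c_v| \le O(\sqrt{\log m}\cdot \sqrt{\ml{i}/(k\cdot\ms{1})})$ on a ``good'' event, and the trivial bound $|c_v| \le 1$ off it. The assumption $k \le m/\log^6 m$ is precisely what guarantees the concentration regime dominates (and keeps the bias $\eps'$ in the range where the local-CLT estimates of \cref{section:Preliminaries} are valid).

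Third, I define $\phi$ so as to realize the dual-feasibility inequality while controlling $\E[\phi]$. The natural choice is $\phi_i(b) = \beta \cdot \sum_{j\ge i} \sqrt{\log m}\cdot \sqrt{\ml{j}/(k\cdot\ms{1})}\cdot \text{(tail factor depending on } b)$, essentially summing the per-round typical gains, plus a small additive $m^{-\omega(1)}$ term that absorbs the low-probability ``bad'' states where $|c_v|$ is not controlled. Plugging in $\ms{1} = \Theta(m^3)$ and $\sum_i \sqrt{\ml{i}} = \Theta(m^2)$ yields $\phi(\text{root}) \le O\bigl(\sqrt{\log m}\cdot \sqrt{k}/m\bigr)$, matching the claimed bound. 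The main obstacle will be the third step: ensuring the local inequality $\phi(v) \ge c_v + \E[\phi(v')\mid v]$ holds \emph{uniformly} over all states (including pathological ones where $b$ is very far from $0$ or the hint is an extreme outlier), which is what forces the inclusion of a carefully calibrated tail correction and is the source of the $\log^6 m$ slack in the hypothesis $k\le m/\log^6 m$.
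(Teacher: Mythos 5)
Your high-level plan (LP duality, potential-function/Bellman dual) is in the spirit of the paper, but the concrete bound you propose for $|c_v|$ is wrong in a way that sinks the argument. You claim $|c_v| \le O\bigl(\sqrt{\log m}\cdot\sqrt{\ml{i}/(k\cdot\ms{1})}\bigr)$, which \emph{decreases} in $k$; but a larger $k$ means the vector hint has more coordinates, leaks more information about $C_i$, and therefore pulls the posterior further — so $c_v$ must \emph{increase} with $k$. Working through the Bayesian update carefully (hint sum $W \approx -k\sqrt{\ms{1}/\ms{i+1}}\cdot(b+C_i) + N(0,k\ms{1})$, prior $C_i$ of variance $\ml{i}$) gives a typical posterior shift in $\delta_i$ of order $\sqrt{k}\cdot\ml{i}/\ms{i+1} \approx \sqrt{k}/(m-i+1)$, matching the $\sqrt{k}$-scaling in the target bound. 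You appear to have conflated the \emph{precision} with which the hint estimates $\delta_i(b,C_i)$ (which does improve as $k$ grows) with the \emph{shift} in the conditional expectation $\E[\delta_i(b,C_i)\mid h]$, which is a separate Bayesian quantity. Your arithmetic is also internally inconsistent: with $\ms{1}=\Theta(m^3)$ and $\sum_i\sqrt{\ml{i}}=\Theta(m^2)$, your stated per-round bound sums to $\Theta(\sqrt{\log m}\sqrt{m/k})$, not $O(\sqrt{\log m}\cdot\sqrt{k}/m)$.

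There is a second, structural gap that persists even after you fix the $k$-dependence. With the corrected bound $|c_v|\lesssim \sqrt{k\log m}/(m-i+1)$, naively summing over rounds (as your telescoping $\phi$ implicitly does) gives $\sqrt{k\log m}\cdot\sum_i\frac1{m-i+1}=\Theta(\sqrt{k\log m}\cdot\log m)$, which is far larger than the target $\sqrt{k\log m}/m$; this is exactly the overcounting that the LP is designed to avoid. The paper does \emph{not} bound $c_u-v_u$ in absolute terms. Instead, \cref{lemma:BinomialProcessVectorHint} establishes the \emph{ratio} bound $c_u-v_u \le \lambda\sqrt{k\log m}\cdot\sqrt{\ml{i}/(m-i+1)}\cdot\pr{f_{-1}\mid u^-}$, and then the dual solution in \cref{lemma:MainInequalityForLP} places a single large coefficient $y_{f_{-1}}=\Theta(\sqrt{k}\cdot\sqrt{m\log m})$ on the final state $f_{-1}$. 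Each constraint is then discharged by $y_{f_{-1}}\cdot\pr{f_{-1}\mid u}$ alone, while the dual objective is dominated by $\pr{f_{-1}}\cdot y_{f_{-1}}=\Theta(m^{-3/2})\cdot\Theta(\sqrt{k}\,m^{1/2}\sqrt{\log m})=O(\sqrt{k\log m}/m)$; the telescoping across levels that your $\phi$ tries to do is replaced by the ``multiple-round discount'' hidden in $\pr{f_{-1}\mid u}\approx 1/\sqrt{\ms{i}}$. To make a potential-function dual work here you would need $\phi$ to encode the same $\pr{f_{-1}\mid\cdot}$ structure, at which point you would in effect be re-deriving the paper's construction.
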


\def\HyperHint {
	For $\rnd \in \N$, and an integer $p \in [-2 \cdot \ms{1},2 \cdot \ms{1}]$, define the random function $\fhyp{\rnd,p} \colon [m] \times \Z \times \Z \mapsto \set{-1,1}$ as follow: on input $(i,b,c) $ outputs $1$ with probability $\vHyp{2\cdot \ms{1},p,\ms{i+1}}(-b-c)$ and $-1$ otherwise. 
}

\begin{definition}[hypergeometric hint]\label{def:hypHint}
    \HyperHint
\end{definition}

\def\BinomialHyperHintInterfaceLemma{
	Let $\rnd \in \N$,  $\eps \in [-1,1]$, and let $p$ be integer in $[-2 \cdot \ms{1},2 \cdot \ms{1}]$. Assume  that $\size{p} \leq \const \cdot \sqrt{\log \rnd \cdot \ms{1}}$ for some constant $\const$, and let $f=\fhyp{\rnd,p}$. Let $\game$ be the binomial game  $\game_{m,\eps,f}$ according to \cref{\GameLabel}, then $\bias_{\game} \in O(\frac{\sqrt{\log \rnd}}{\rnd})$.	
}

\def\GameLabel{def:game}
\begin{lemma}\label{binomial:game:hyp}
	\BinomialHyperHintInterfaceLemma
\end{lemma}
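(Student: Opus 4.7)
The plan is to apply the linear-programming paradigm of Section~\ref{intro:sec:LP}. I represent states of the game as triples $v = \langle i, b, h\rangle$ where $i \in [m]$ is the round, $b$ is the value of $S_{i-1}$, and $h \in \oo$ is the single-bit hint $H_i$. Let $p_v$ be the probability of visiting $v$, $p_{u\mid v}$ the probability of reaching a later state $u$ conditioned on having visited $v$, and $g_v = \delta_i(b,h) - \delta_i(b)$ the gain of aborting at $v$. Then $\bias_\game$ is the larger of the optima of
\begin{align*}
\max \quad & \sum_v a_v\, g_v \\
\text{s.t.}\quad & a_v + \sum_{u < v} a_u\, p_{v\mid u} \leq p_v,\quad a_v\geq 0 \quad \forall v,
\end{align*}
and of its analogue with $-g_v$. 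I would bound both LPs by weak duality, constructing a non-negative $y_v$ satisfying $y_v + \sum_{u > v} p_{u\mid v}\, y_u \geq g_v$ for every $v$, with $\sum_v p_v\, y_v = O(\sqrt{\log m}/m)$.

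The first ingredient is a pointwise bound on $|g_v|$. Using the hypothesis $|p| \leq \const \sqrt{\ms{1}\log m}$, Proposition~\ref{prop:hyperToNormal} expresses $\Pr[H_i = 1 \mid S_{i-1}=b,\, C_i=c]$ as a Gaussian CDF in $(b+c)$ of scale $\sqrt{\ms{i+1}(1 - \ms{i+1}/(2\ms{1}))}$, while Proposition~\ref{prop:binomProbEstimation} yields a matching Gaussian estimate for $\delta_i(b)$ and $\delta_i(b,h)$. A short computation then shows that on ``typical'' states, meaning those with $|b| = O(\sqrt{\ms{i}\log m})$ (capturing all but $1/\poly(m)$ probability mass by Fact~\ref{claim:Hoeffding}), one has $|g_v| = O(\sqrt{\log m}\cdot\sqrt{\ms{i+1}}/\ms{1})$. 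Atypical states can be handled by setting $y_v = 1$ there and absorbing the total cost into the $1/\poly(m)$ tail.

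On the typical range I would set $y_v = C\cdot\sqrt{\log m}/m$, plus a correction of order $\sqrt{\ms{i+1}}/\ms{1}$ that shrinks as $i$ grows, mirroring the shrinking variance of the remaining binomial. Feasibility exploits the martingale property $\E[\delta_{i+1}(S_i) \mid S_{i-1}] = \delta_i(S_{i-1})$: the expected future gain from $v$ telescopes, reducing the feasibility inequality to a one-step concentration bound that is absorbed by the correction term. Summing $\sum_v p_v\, y_v$ over the $m$ rounds and the relevant strip of prefix sums produces the target $O(\sqrt{\log m}/m)$.

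The main obstacle will be verifying dual feasibility in the late rounds, where $\ms{i+1}$ has become small but $|g_v|$ can remain a non-trivial fraction when $|b|$ is also small. The way out is that trajectories with $|S_{i-1}|$ exceeding a constant multiple of $\sqrt{\ms{i}\log m}$ have $\delta_i(b)$ exponentially close to $0$ or $1$, so the hint can barely move it and the feasibility check collapses to a narrow strip around $b=0$ that the $\sqrt{\ms{i+1}}/\ms{1}$ correction handles. Completing this check and invoking weak LP duality then yields $\bias_\game = O(\sqrt{\log m}/m)$, as claimed.
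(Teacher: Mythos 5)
Your overall plan---reduce the game to the LP of \cref{sec:lp} and exhibit a cheap dual-feasible solution---matches the paper's strategy, but the dual solution you sketch has the wrong objective value and also fails feasibility. The game visits exactly one with-hint state per round, so $\sum_{v\in\widehat{V}} p_v = m+1$, and the typical strip $|b| = O(\sqrt{\ms{i}\log m})$ carries all but a $1/\poly(m)$ fraction of each level's mass. Assigning $y_v \approx \sqrt{\log m}/m$ across that strip therefore gives $\sum_v p_v\, y_v \approx m\cdot \sqrt{\log m}/m = \sqrt{\log m}$, a factor of $m$ larger than the claimed target. Feasibility also fails at late rounds: at a level-$\ell$ state the left-hand side of the dual constraint accumulates about $\ell\cdot\sqrt{\log m}/m$, while $c_u - v_u$ can reach $\Theta(\sqrt{\log m}/\ell)$ near the centre of the strip. (Your pointwise bound $O(\sqrt{\log m}\cdot \sqrt{\ms{i+1}}/\ms{1})$ underestimates this by a polynomial factor; \cref{lemma:BinomialProcessHyperHint} gives $c_u - v_u \leq O\bigl(\sqrt{\log m}\cdot\sqrt{\ell}\cdot\pr{f_{-1}\mid u^-}\bigr)$ with $\pr{f_{-1}\mid u^-}\approx\ell^{-3/2}$ on the strip.) The constraint is thus violated whenever $\ell < \sqrt{m}$, and the correction term of order $\sqrt{\ms{i+1}}/\ms{1}$ is polynomially too small to repair it.

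The idea your proposal is missing is to concentrate almost all the dual weight on a single state, the final state $f_{-1}$ of offset $-1$. The bound from \cref{lemma:BinomialProcessHyperHint} (obtained via the two-step-process machinery of \cref{sec:ExpChangeDueLeakage}) says that on a high-probability set of hints, $c_u - v_u \leq O(\sqrt{m\log m})\cdot \pr{f_{-1}\mid u^-}$: every state's gain is proportional to the probability that the walk finishes exactly at $-1$. Because good hints also leave the current-round coins concentrated, one has $\pr{f_{-1}\mid u} = \Omega(\pr{f_{-1}\mid u^-})$, so setting $y_{f_{-1}} = \Theta(\sqrt{m\log m})$ satisfies the constraint at every such state via the single term $y_{f_{-1}}\cdot\pr{f_{-1}\mid u}$, at objective cost $\pr{f_{-1}}\cdot y_{f_{-1}} = O(m^{-3/2})\cdot O(\sqrt{m\log m}) = O(\sqrt{\log m}/m)$. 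The remaining states---offsets outside the strip, levels below $m^{1/8}$, rare bad hints, and the regime $|\eps| > 4\sqrt{\log m/\ms{1}}$---are handled by separate constructions and contribute $O(1/m)$; this assembly is \cref{lemma:MainInequalityForLP} (used here with $\tau=1$), applied after replacing the bit-valued hint $\fhyp{m,p}$ by the more informative integer-valued noisy sum, a step that can only increase the achievable bias. Your martingale observation is correct but by itself inert: it does not yield a feasible cheap dual, because the gains are not uniformly small across states---they are each proportional to $\pr{f_{-1}\mid u^-}$, and it is precisely this proportionality that the single concentrated dual variable exploits.
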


\subsubsection{Basic Observations  about \cref{protocol:outer}}\label{subsec:BasicObservations}

The following  simple facts  are used within the proof of \cref{thm:MainFailStop}. We start with  a simple observation regarding the outcome of the $\Defense$ functionality. 


\begin{fact}\label{fact:defenseOutcome}\label{fact:simple:1}  
	Let $\rnd \geq 1$,  $\partNumInn > 2$,  $\ell \in \N$, $\delta \in [0,1]$ and $\cZ = (z_{1},\ldots,z_{\size{\cZ}})  \subseteq [\partNumInn]$, and let $S= (S_1,\ldots,S_\partNumInn)= \DefenseTilde(1^\rnd,1^\partNumInn,1^\ell,\cZ,\delta)$.  Let  $\outcome(S)$ be the outcome of a non-aborting  execution of protocol $\Protocol{\size{\cZ}}_\rnd$ on common input $1^\ell$, and the $j$'th  party private input is set to $S_{z_j}$. Then for every $\cB \subset [\partNumInn]$ with $\cZ \not \subset \cB$ and for every  $\vect{s} \in \Supp(S^\cB = \set{S_z}_{z\in \cB})$, it holds that $\eex{\outcome(S) \mid S^\cB  =\vect{s}} = \delta$.
\end{fact}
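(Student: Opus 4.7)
The plan is to perform a case analysis on $k = \size{\cZ}$, driven by a single strategy: for each $k$ I will identify a random variable $V$ ``hidden'' by $\DefenseTilde$ such that (i)~$\eex{\outcome(S) \mid V} = V$ for any non-aborting execution of $\Protocol{k}_\rnd$; (ii)~$\eex{V} = \delta$; and (iii)~$S^\cB$ is distributed independently of $V$ whenever $\cZ \not\subset \cB$. Once (i)--(iii) are in hand, iterated expectation gives
\[
\eex{\outcome(S) \mid S^\cB = \vect{s}} \;=\; \eex{V \mid S^\cB = \vect{s}} \;=\; \eex{V} \;=\; \delta,
\]
as claimed.

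For $k=1$, take $V=o_1 \gets \Berzo{\delta}$, the bit handed to the unique party in $\cZ$, which it outputs directly per the abort rule. The condition $\cZ \not\subset \cB$ forces $z_1 \notin \cB$, so $S^\cB$ consists solely of the deterministic $\perp$-entries and is trivially independent of $o_1$, giving (i)--(iii) at once. For $k>2$, take $V = \delta' = \AddNoise(1^\rnd, 1^\ell, k, \delta)$. Property~(i) can be read off the non-aborting execution of $\Protocol{k}_\rnd$: the parties reconstruct $\delta'$, then $\Coin$ flips coins with bias $\eps = \sBias{\ms{1}}{\delta'}$, so $\sign(\sum_i c_i) = 1$ with probability $\vBeroo{\ms{1},\eps}(0) = \delta'$. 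Property~(ii) follows from the exchangeability of the subset-sampling inside $\AddNoise$, which makes $\sum_{x \in \cX} \vect{\bar b}[x]$ distributed as $\Beroo{\ms{1},\eps}$, and from the fact that $\ms{1}$ is odd for $\rnd \equiv 1 \bmod 12$ (so a sum of $\ms{1}$ elements of $\oo$ is nonzero), yielding $\eex{\delta'} = \Pr[\Beroo{\ms{1},\eps}>0] = \Pr[\Beroo{\ms{1},\eps}\geq 0] = \delta$. Property~(iii) holds because $\DefenseTilde$ gives the parties in $\cZ$ a uniform $k$-out-of-$k$ XOR sharing $(o_1,\ldots,o_k)$ of $\delta'$ (and $\perp$ to the others), and $\cZ \not\subset \cB$ removes at least one $o_j$ from $S^\cB$; any strict subset of a $k$-out-of-$k$ XOR sharing is uniform and independent of the shared secret.

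For $k=2$ the situation is similar, but with $V = \vect{c}$, the coin vector sampled inside $\HTDefenseProtocol$: since $\Protocol{2}_\rnd = \HTProtocol_\rnd$ outputs $\sign(\sum_i c_i)$ in every non-aborting execution, (i) is immediate, and the distribution $c_i \gets \Beroo{\ml{i},\eps}$ for $\eps = \sBias{\ms{1}}{\delta}$ gives (ii). For (iii), inspection of $\HTDefenseProtocol$ shows that $\vect{c}$ is XOR-shared between the two parties as $\cVS{1} \oplus \cVS{2}$, whereas every other entry in either party's output --- the bank-vector shares $\bankV{\isShare{z},1}, \bankV{\isShare{z},2}$ and the defense value $d^z$ --- is generated independently of $\vect{c}$ (the bank vectors, their XOR shares, and the random subsets $\cI^z$ used to compute the $d^z$'s are all sampled \emph{before} $\vect{c}$). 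Since $\cZ \not\subset \cB$ leaves at least one of $\cVS{1}, \cVS{2}$ outside $S^\cB$, the XOR-sharing makes $\vect{c}$ independent of $S^\cB$.

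The main, relatively minor, obstacle is the $k=2$ case: it requires a careful line-by-line reading of $\HTDefenseProtocol$ to verify that the defense values $d^z$ and the bank-vector shares leak no residual information about $\vect{c}$. The cases $k=1$ and $k>2$ are almost immediate consequences of the $k$-out-of-$k$ XOR secret-sharing structure used by $\DefenseTilde$.
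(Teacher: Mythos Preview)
Your proof is correct and in fact more complete than the paper's. The paper's argument is written for general $|\cZ|$ but really only spells out the case $|\cZ|>2$: it uses the $k$-out-of-$k$ XOR sharing of $\delta'$ to get independence from $S^\cB$, invokes the reconstruction step of Protocol~\ref{protocol:inner:t} to obtain $\eex{\outcome(S)\mid S_1,\ldots,S_{|\cZ|}}=\bigoplus_i S_i$, and finishes by iterated expectation --- exactly your $k>2$ case. For $k=1$ the paper's XOR equations degenerate correctly, but for $k=2$ they do not: the outputs of $\HTDefenseProtocol$ are not XOR shares of a single scalar, and $\HTProtocol$ does not reconstruct such a scalar. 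Your explicit treatment of $k=2$, tracing through $\HTDefenseProtocol$ to show that $\vect c$ is independent of either party's full output tuple, fills this gap. One minor imprecision: for $k=2$ your template condition (i), $\eex{\outcome(S)\mid V}=V$, does not literally hold with $V=\vect c$ (the outcome is $\sign(\sum_i c_i)$, not $\vect c$); the argument is intact if you either take $V=\sign(\sum_i c_i)$ or simply note that independence of $\vect c$ from $S^\cB$ already suffices. You are also more careful than the paper in invoking $\ms{1}$ odd to reconcile the strict inequality in $\AddNoise$ with the non-strict one in the definition of $\sBias{\ms{1}}{\delta}$.
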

Namely, in an honest interaction that follows an  abort, the expected outcome of the interaction is $\delta$, for $\delta$ being the input in the last call to $\DefenseTilde$ that happened before the abort. The latter holds, even conditioned on the partial information held by the corrupted parties.
\begin{proof}
    Assume \wlg that $\cB = \set{2,\ldots,r}$, and that $\cZ = \set{1,\ldots,\size{\cZ}}$. Consider an honest execution of protocol $\Protocol{\size{\cZ}}_\rnd$, in which  party $\Party{rrr}_z$ for $z \in \cZ$ start with private inputs $S_j$ for $1 \leq j \leq \size{\cZ}$.\footnote{The parties that participate in this execution have more inputs for the case that some of them will abort later on. Since, however, we are interested in an honest execution, those additional inputs can be ignored.} Let $\vect{s}=(s_2,\ldots,s_r) \in \Supp(S^\cB)$.

    By construction of $\Defense$ functionality, and specifically since it breaks the output into random shares, it holds that $\eex{\bigoplus_{i=1}^{\size{\cZ}} S_i \mid S^\cB  =s} = \delta$. Writing it a bit differently:
    \begin{align}
	    \ex{S_1}{S_1 \xor s_2 \xor \ldots \xor s_{\size{\cZ}}} = \delta \label{subsec:BasicObservations:eq1}
    \end{align}
    By construction of \cref{protocol:inner:t}, it holds that 
    \begin{align}
       \eex{\outcome(S) \mid S_1,\ldots,S_{\size{\cZ}}} = S_1 \xor \ldots \xor S_{\size{\cZ}}  \label{subsec:BasicObservations:eq2}
    \end{align}
Putting it together we get:
    \begin{align*}
	    \eex{\outcome(S) \mid S^\cB = s} &= \eex{\outcome(S) \mid S_2=s_2,\ldots,S_r=s_r}\\
	    &= \ex{S_1}{\eex{\outcome(S) \mid S_1,S_2=s_2,\ldots,S_r=s_r} }  \nonumber   \\
	    &= \ex{S_1}{S_1 \xor s_2 \xor \ldots \xor s_{\size{\cZ}} }  \nonumber     \\
	    &= \delta,  \nonumber   
    \end{align*}
    as required.
\end{proof}

We remind the reader that the $\alpha$-factors are: $\alpha(\rnd, \ell, k) =  m^{\frac{2^{\ell-3}}{2^{\ell-2}-1} \cdot \frac{2^{k-2}-1}{2^{k-3}}}$ (see \cref{def:alphaFactors}). The following fact states some basic properties of the $\alpha$-factors.

\begin{fact}\label{fact:alphaFacts}
	Let $\rnd \geq 1$ and $\ell \geq 2$ be two integers, and denote for simplicity $\alp{k}=\alpha(\rnd, \ell, k)$. It holds that 
	\begin{enumerate}
		\item $\alp{\ell-1} = m^{1 - \frac1{2^{\ell-2}-1}}$.
		\item $ \alp{2} = 1 $.
		\item $\frac{\sqrt{\alp{2}}}{\alp{3}} = \ldots = \frac{\sqrt{\alp{\ell-3}}}{\alp{\ell-2}} = \frac{\sqrt{\alp{\ell-2}}}{\alp{\ell-1}} = \frac{\sqrt{\alp{\ell-1}}}{\rnd}
		= \frac{1}{\rnd^{\frac12+\frac1{2^{\ell-1}-2}  }   } $.
	\end{enumerate}
\end{fact}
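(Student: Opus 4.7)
The plan is purely algebraic: work entirely from the closed form $\alpha_k = m^{e(\ell,k)}$ with exponent $e(\ell,k) := \frac{2^{\ell-3}}{2^{\ell-2}-1}\cdot \frac{2^{k-2}-1}{2^{k-3}}$, and verify each of the three claims by direct manipulation. The one simplification worth isolating up front is
$\frac{2^{k-2}-1}{2^{k-3}} = 2 - 2^{3-k}$, which I would use throughout to avoid fractional powers of $2$ in denominators.

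For item 2, plugging $k=2$ into $e(\ell,k)$ gives $\frac{2^0-1}{2^{-1}} = 0$, so $e(\ell,2)=0$ and $\alpha_2 = m^0 = 1$. For item 1, plugging $k=\ell-1$ and using the simplification above yields
$e(\ell,\ell-1) = \frac{2^{\ell-3}\,(2-2^{4-\ell})}{2^{\ell-2}-1} = \frac{2^{\ell-2}-2}{2^{\ell-2}-1} = 1 - \frac{1}{2^{\ell-2}-1}$,
so $\alpha_{\ell-1} = m^{1 - 1/(2^{\ell-2}-1)}$, as claimed.

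For item 3, I would first compute the ``generic'' ratio exponent $\tfrac{1}{2}e(\ell,k) - e(\ell,k+1)$ for $k \in \{2,\ldots,\ell-2\}$. Pulling out the common prefactor $\frac{2^{\ell-3}}{2^{\ell-2}-1}$, the remaining bracket simplifies (after using $\frac{2^{k-2}-1}{2\cdot 2^{k-3}} = \frac{2^{k-2}-1}{2^{k-2}}$) to $\frac{(2^{k-2}-1) - (2^{k-1}-1)}{2^{k-2}} = -1$. Hence $\tfrac{1}{2}e(\ell,k) - e(\ell,k+1) = -\frac{2^{\ell-3}}{2^{\ell-2}-1}$, which is \emph{independent of $k$}. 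This is exactly why every ratio in the chain is equal. A one-line calculation identifies this constant with the desired exponent:
$-\tfrac{1}{2} - \frac{1}{2^{\ell-1}-2} = -\frac{(2^{\ell-1}-2)+2}{2(2^{\ell-1}-2)} = -\frac{2^{\ell-1}}{2^{\ell}-4} = -\frac{2^{\ell-3}}{2^{\ell-2}-1}$.

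The very last ratio $\sqrt{\alpha_{\ell-1}}/m$ is handled separately using item 1: its exponent is
$\tfrac{1}{2}\bigl(1 - \tfrac{1}{2^{\ell-2}-1}\bigr) - 1 = -\tfrac{1}{2} - \tfrac{1}{2(2^{\ell-2}-1)} = -\tfrac{1}{2} - \tfrac{1}{2^{\ell-1}-2}$, matching the common value above. Since the fact reduces to these three short identities, there is no real obstacle; the only risk is a bookkeeping slip with the many $2^{k-j}$ terms, which is precisely why I would isolate the identity $\frac{2^{k-2}-1}{2^{k-3}} = 2 - 2^{3-k}$ at the outset and reuse it.
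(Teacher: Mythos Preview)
Your proof is correct and follows the same approach as the paper, which simply writes ``Immediate by definition'' and leaves the algebraic verification to the reader. You have spelled out precisely those details, and each step checks out.
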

\begin{proof}
	Immediate by definition.
\end{proof}

\subsubsection{Proving  \cref{thm:MainFailStop}} \label{subsec:ProvingMainThmFailStop}

\begin{proof}[Proof of \cref{thm:MainFailStop}]
	
By construction, in an all-honest execution the parties output a uniform bit, so it left to prove that the protocol cannot be biased by too much by fail-stop adversaries.  

Let $\Aadv$ be a fail-stop adversary controlling the parties $\set{\PartyF{\partNum}_\z}_{\z\in \cC}$ for some $\cC \subsetneq [\partNum]$.  Let $V$ be the (joint) view of the parties controlled by $\Aadv$,  let $V_i$ be the prefix of $V$ at the end of round $i$, and  $V_i^-$ be the prefix of $V_{i}$ with the $i$'th round abort messages (if any) removed.  Let $\val(v)$ be the expected outcome of an honest (non-aborting) execution of the protocol by the parties that do not abort in $v$, conditioned on $v$ (see \cref{def:ViewVal}).  We assume \wlg that if $\Ac$ instructs a corrupted party to abort at a given round, it does so \emph{after} seeing the honest parties' messages of that round.
	
For $k \in [\partNum-1]$, let $I_k$ be the $k$-th aborting communication round (that is, the $k$'th round in which at least one party aborts). Letting $I_k=\perp$ if less than $k$ aborting rounds happen, and let $V_{\perp} = V_{\perp}^-$. By \cref{prop:FairCTGameAlt}, to prove the theorem it is sufficient to show that:
\begin{align}\label{theorem_fs:main_eq}
  \size{\ex{V}{\sum_{k=1}^{\partNum-1} \val(V_{I_k}) - \val(V_{I_k}^-)}} \leq O \left(\biasTerm \right).
\end{align}
Since
\begin{align*}
  \size{\ex{V}{\sum_{k=1}^{\partNum-1} \val(V_{I_k}) - \val(V_{I_k}^-)}}  =  \size{\sum_{k=1}^{\partNum-1} \ex{V}{\val(V_{I_k}) - \val(V_{I_k}^-)}}
  \leq \sum_{k=1}^{\partNum-1} \size{\ex{V}{\val(V_{I_k}) - \val(V_{I_k}^-)}},
\end{align*}
it suffices to  show that 
\begin{align}
  \size{\ex{V}{\val(V_{I_k}) - \val(V_{I_k}^-)}} \leq  O \left(\biasTermSingle \right) \label{theorem_fs:single_abort}
\end{align}
for every $1 \leq k \leq t-1$.

Fix $1 \leq k \leq t-1$. The $k$'th abort can occur in one of the following places:
\begin{itemize}
	\item In \Stepref{protocol:outer:defense} of the parent protocol $\ProtocolF{\partNum}_\rnd$ (can only happen for $k=1$).
	\item During the execution of protocol $\Protocol{\partNumInn}_\rnd$, for some $\partNumInn \leq \partNum$.
\end{itemize}
Since by construction aborting in \Stepref{protocol:outer:defense} gives nothing to the adversary,  it is left to prove that \cref{theorem_fs:single_abort} holds for aborting done during an execution of $\Protocol{\partNumInn}_\rnd$.

Let $R= R(k)$ be the number of active parties when the $k$'th abort occur (that means that it occurs during the execution of $\Protocol{R}_\rnd$).  We show that for any value of $r \in \set{2,\ldots,t}$, it holds that
\begin{align}
\size{\ex{V| R=r}{\val(V_{I_k}) - \val(V_{I_k}^-)}} \leq  O \left(\biasTermSingle \right) \label{theorem_fs:single_abortR}
\end{align}
and \cref{theorem_fs:single_abort} will follow.

Let $I = I_k$. For $r \in \set{2,\ldots,t}$, we condition till the end of the proof on $R=r$. We distinguish between the case  $r>2$ case, and $r=2$.

\paragraph{The case $r>2$.}
Recall that protocol $\Protocol{\partNumInn}_\rnd$ has five step: \Stepref{protocol:inner:t:defense:1}, \Stepref{protocol:inner:t:reconstruct:delta},  \Stepref{protocol:inner:t:coin}, \Stepref{protocol:inner:t:defense:2}, and \Stepref{protocol:inner:t:reconstruct:coin}. We let $\cs = \set{1,2,3a,3b,3c}$, and let $T \in \cs$ to be the step executed in round $I$. Applying complete expectation on the left side of \cref{theorem_fs:single_abortR}, we get that (we remind the reader that with let $I_k=I$, and we fixed some value of $r$):
\begin{align}
\size{\ex{V}{\val(V_{I}) - \val(V_{I}^-)}} = \sum_{j \in \mathcal{S}} \size{\ex{V \mid T=j}{\val(V_{I} ) - \val(V_{I}^-)}} \cdot \pr{T=j} \label{theorem_fs:single_abort:specific_round}
\end{align}
In the following we prove that for $t \in \set{2,3a,3c}$ it holds that 
\begin{align}
  \val(V_{I}|_{T=t}) = \val(V_{I}^-|_{T =t}), \label{theorem_fs:single_abort:round:no_gain}
\end{align}
that for $t=1$ it holds that
\begin{align}
  \size{\eex{\val(V_{I}) - \val(V_{I}^-) \mid T=1}} \cdot \pr{T=1} \leq O \left(\biasTermSingle \right),  \label{theorem_fs:single_abort:round:s1}
\end{align}
and that for $t=3b$ it holds that
\begin{align}
  \size{\eex{\val(V_{I}) - \val(V_{I}^-)\mid T=3b}} \cdot \pr{T=3b} \leq O \left(\biasTermSingle \right). \label{theorem_fs:single_abort:round:s3}
\end{align}

Putting \cref{theorem_fs:single_abort:round:no_gain}, \cref{theorem_fs:single_abort:round:s1}, and \cref{theorem_fs:single_abort:round:s3}, in  \cref{theorem_fs:single_abort:specific_round}, yields that $\size{\ex{V}{\val(V_{I}) - \val(V_{I}^-)}} \leq O \left(\biasTermSingle \right)$, proving \cref{theorem_fs:single_abortR}.

The following random variables are define \wrt this interaction of $\Protocol{r}_\rnd$. Let  $\Delta$ be the value of $\delta$ calculated in \Stepref{protocol:inner:t:reconstruct:delta}, set to $\perp$ if an abort occurred before this round (\ie $T<2$), and let $\DeltaDef$ be the value of the parameter $\delta$  in the last call to $\DefenseTilde$ before $I$ (by definition, such a call is guaranteed to exist).

By \cref{fact:simple:1}, it holds that
\begin{align}\label{eq:mainLem:Delta}
\val(V_I) = \DeltaDef
\end{align}

\paragraph{Proving \cref{theorem_fs:single_abort:round:no_gain}.}  We prove separately for every $t\in \set{2,3a,3c}$.
\begin{description}
	\item[$t=2$:] By construction, in case of no abort, the expected outcome of the protocol at the end of Step $2$ is $\Delta$, namely   $\val(V_I^{-}|_{T=2}) = \Delta$. Since by construction $\Delta = \DeltaDef$, \cref{eq:mainLem:Delta} yields that $ \val(V_{I}|_{T=2}) = \val(V_{I}^-|_{T =2})$.

	 \item[$t=3a$:] Since $c_i$ and $\delta_i$ are shared using an $\partNumInn$-out-of-$\partNumInn$ secret sharing schemes, $V_I$ contains \emph{no information} about $c_i$ and $\delta_i$. Thus, $\val(V_I^{-}) = \val(V_{I-1})$. If $I$ is the very first round to reach \Stepref{protocol:inner:t:coin} (\ie we are in the first round of the loop), then by construction $\val(V_{I-1}) = \Delta= \DeltaDef$. Otherwise (not the first round in the loop), by definition $\val(V_{I-1}) = \pr{\sign(\sum_{i=1}^m c_i) =1 \mid V_{I-1}}$, which by construction is also equal to $\DeltaDef$.  Hence, by \cref{eq:mainLem:Delta},  $\val(V_{I}) = \val(V_{I}^-)$.
  
	  \item[$t=3c$:] Follows by an analogues argument to that used for proving the case $t=2$.
\end{description}

\paragraph{Proving \cref{theorem_fs:single_abort:round:s1}.}
In the following we condition on $V_{I-1} = v'$ for some  $v' \in \Supp(V_{I-1})|_{T=1}$.

Let $M$ be the messages that the corrupted parties receive during the execution of \cref{protocol:inner:t:defense:1}. We assume \wlg that $M$ contains also the vectors of coins sampled in \Stepref{addnoise:sampleVec} of functionality $\AddNoise$ (happened by the joint call to  $\Defense$ done in this round) that was used to generate the defense values (\ie the messages) of the corrupted parties.\footnote{An adversary that can bias the protocol without this additional information, can be emulated by an adversary that get this additional information.}
We remind the reader that $\DeltaDef$ is the value of the $\delta$ parameter passed to the last call of $\DefenseTilde$. By construction, $\DeltaDef$ is a deterministic function of $V_{I-1}$. Since we conditioned on $V_{I-1} = v'$, we conclude that $\DeltaDef$ has a fixed value, denote this value by $\deltaDef$.

 The proof follow by the next claim (proven below).
\begin{claim}\label{protocol:lemma:inner:many:defense1:interface} It holds that
  \begin{align*}
    \ppr{n \la M}{\size{\deltaDef - \eex{\Delta\mid M=n}} > \lambda \cdot \frac{\sqrt{2^{\partNumInn} \cdot \alp{\partNumInn-1}}}{\alp{\partNumInn}} \cdot \sqrt{\log \rnd}} \leq \frac1{\rnd^2}.
  \end{align*}
\end{claim}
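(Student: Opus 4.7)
The plan is to reduce the claim to a single invocation of \cref{lemma:HyperlProcessVectorHint:pre}, applied to the two-step hypergeometric process that $\AddNoise$ naturally induces between $\deltaDef$ and $\Delta$. Concretely, the shares $\shareName^{\isShare{z},[r]}$ input to the current invocation of $\Protocol{r}_\rnd$ were produced by the last call to $\DefenseTilde$ before round $I$, whose $\delta$-parameter equals $\deltaDef$. By construction of $\AddNoise$, the reconstructed value $\Delta = \bigoplus_z \shareName^{\isShare{z},[r]}$ was obtained by first sampling a vector $\vect{\bar{b}}_{\mathrm{par}} \la (\Beroo{\eps})^{\alp{r} \cdot \ms{1}}$ with $\eps = \sBias{\ms{1}}{\deltaDef}$, and then setting $\Delta$ to the hypergeometric-type probability $\ppr{\cX \subseteq [\alp{r} \cdot \ms{1}],\, \size{\cX} = \ms{1}}{\sum_{x \in \cX} \vect{\bar{b}}_{\mathrm{par}}[x] > 0}$. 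Up to the innocuous difference between ``$> 0$'' and ``$\geq 1$'', this is exactly an $(\ms{1}, \alp{r}, \deltaDef)$-hypergeometric process in the sense of \cref{def:HypProcess:pre}; augmenting it with an independent $B \la \Berzo{\Delta}$ produces a two-step Boolean process $\Pc = (\Delta, B)$. Since $\AddNoise$ is mean-preserving, and the corrupted shares contained in $V_{I-1}$ are uniform given $\Delta$ (so the conditioning on $V_{I-1} = v'$ does not change $\Delta$'s distribution), one has $\ex{}{B} = \ex{}{\Delta} = \deltaDef$. Consequently, for any leakage function $\HintFunc$ modeling $M$, $\PredictAdv_{\Pc, \HintFunc}(n) = \size{\deltaDef - \eex{\Delta \mid M = n}}$.

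Next I would model $M$ itself as a single vector leakage function for $\Pc$. In \Stepref{protocol:inner:t:defense:1}, each subset $\emptyset \neq \cZ \subsetneq [r]$ triggers $\DefenseTilde(\cZ, \Delta)$, which for $\size{\cZ} \geq 3$ samples an independent vector $\vect{\bar{b}}_\cZ \la (\Beroo{\eps'})^{\alp{\size{\cZ}} \cdot \ms{1}}$ with $\eps' = \sBias{\ms{1}}{\Delta}$, and for $\size{\cZ} \in \set{1,2}$ samples analogous but shorter vectors (of total length $O(\ms{1})$) with the same bias. By the \wlg augmentation, $M$ contains each such $\vect{\bar{b}}_\cZ$ whenever $\cZ \cap \cC \neq \emptyset$; every remaining component of $M$ (individual shares $o_i$, etc.) is either a deterministic function of these vectors or uses fresh randomness independent of $\Delta$, and hence preserves the conditional expectation of $\Delta$. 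Since $\cZ \subsetneq [r]$ forces $\size{\cZ} \leq r-1$, the concatenation of all observed vectors (which share the identical bias $\eps'$) is distributed as $(\Beroo{\eps'})^{\alpha \cdot \ms{1}}$ with $\alpha \leq \sum_{k=1}^{r-1}\binom{r-1}{k} \alp{k} = O(2^r \cdot \alp{r-1})$, where the last estimate uses the super-geometric growth of $\alp{k}$ in $k$ recorded in \cref{fact:alphaFacts}. This concatenated vector is exactly an $(\ms{1}, \alpha)$-vector leakage function for $\Pc$ in the sense of \cref{def:VectorLeakageFunction:pre}.

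Invoking \cref{lemma:HyperlProcessVectorHint:pre} with $\vctBaseLen = \ms{1}$, $\vctLenFact = \alpha = O(2^r \cdot \alp{r-1})$, and $\hypVctLenFact = \alp{r}$ then yields a universal constant $C$ with
\begin{equation*}
    \ppr{n \la M}{\PredictAdv_{\Pc,\HintFunc}(n) > C \cdot \sqrt{\log \ms{1}} \cdot \frac{\sqrt{2^r \cdot \alp{r-1}}}{\alp{r}}} \leq \frac{1}{\ms{1}^2}.
\end{equation*}
Combined with $\sqrt{\log \ms{1}} = O(\sqrt{\log m})$, $\ms{1}^{-2} \leq m^{-2}$, and the identity $\PredictAdv_{\Pc,\HintFunc}(n) = \size{\deltaDef - \eex{\Delta \mid M = n}}$ established above, this is exactly the statement of the claim for a suitable universal $\lambda$.

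The main obstacle will be verifying the three technical preconditions of \cref{lemma:HyperlProcessVectorHint:pre} in our parameter regime, namely $2 \leq \alpha < \alp{r} \leq \ms{1}$, $(\alpha + \sqrt{\ms{1}}) \log^2 \ms{1} / \ms{1} \leq 10^{-5} \sqrt{\alpha/\alp{r}}$, and $\sqrt{\alpha/\alp{r}} \log \ms{1} \leq 1/100$. By \cref{fact:alphaFacts}, $\alp{r}/\alp{r-1} = m^{\Theta(2^{-(r-2)})}$, so for $r \leq t \leq \tfrac12 \log\log m$ this ratio dominates $2^r$ by a positive polynomial factor of $m$; all three conditions then reduce to plain inequalities in $m$ that hold for all sufficiently large $m$. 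The verification is direct but somewhat tedious, and also requires care to argue that the contributions of subsets with $\size{\cZ} \in \set{1,2}$ to the effective vector length are indeed dominated by the $\alp{r-1}$ term.
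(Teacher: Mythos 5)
Your proposal is correct and follows essentially the same route as the paper: model the reconstructed $\Delta$ as an $(\ms{1},\alp{\partNumInn},\deltaDef)$-hypergeometric process, model the corrupted parties' messages $M$ as a $(\ms{1},\Theta(2^{\partNumInn}\cdot\alp{\partNumInn-1}))$-vector leakage function via a data-processing step (the paper invokes \cref{prop:RedundentInfoInTheLeakage} here, which formalizes exactly the ``extra components are functions of the vectors plus fresh randomness'' observation you make informally), and then apply \cref{lemma:HyperlProcessVectorHint:pre} with $\vctBaseLen=\ms{1}$, $\vctLenFact\approx 2^{\partNumInn}\alp{\partNumInn-1}$, $\hypVctLenFact=\alp{\partNumInn}$. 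Your noted ``main obstacle'' (checking the three preconditions of the lemma and absorbing the $\size{\cZ}\in\set{1,2}$ contributions) is the same verification the paper does; the paper handles it by over-counting with the clean $2^{\partNumInn}$ factor and appealing to the growth of the $\alp{k}$'s recorded in \cref{fact:alphaFacts}, exactly as you sketch.
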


Namely, with high probability, after the adversary sees the messages of \Stepref{protocol:inner:t:defense:1}, the value of $\Delta$ is not far from $\deltaDef$.  It follows that
\begin{align}
\ex{V_I \mid T=1}{\val(V_I)-\val(V_I^{-})} &= \ex{V_I \mid T=1}{\val(V_I)}-\ex{V_I \mid T=1}{\val(V_I^{-})} \label{main:theorem:mv0} \\
                                           &= \deltaDef-\ex{V_I \mid T=1}{\val(V_I^{-})} \label{main:theorem:mv0.1} \\
                                           &= \deltaDef-\ex{V_I \mid T=1}{\eex{\Delta \mid V_I }} \label{main:theorem:mv1}\\
                                           &= \ex{V_I \mid T=1}{\deltaDef-\eex{\Delta \mid V_I}}  \nonumber  \\
                                           &= \ex{M \mid T=1}{\deltaDef-\eex{\Delta \mid M}}  \nonumber
\end{align}

\cref{main:theorem:mv0.1} holds by \cref{fact:simple:1}. \cref{main:theorem:mv1} holds since conditioned on  $T=1$, $\val(V_I^{-}) = \eex{\Delta \mid V_I }$.
Applying triangle inequality to \cref{main:theorem:mv0}, and multipling it by $\pr{T=1}$, it holds that 
\begin{align*}
\size{\ex{V_I \mid T=1}{\val(V_I)-\val(V_I^{-})}} \cdot \pr{T=1} \leq \ex{M \mid T=1}{\size{\deltaDef - \eex{\Delta \mid M}}} \cdot \pr{T=1} 
\end{align*}
Continuing the evaluation, it holds that
\begin{align*}
\ex{N \mid T=1}{\size{\deltaDef - \eex{\Delta \mid N}}} \cdot \pr{T=1} &\leq  \ex{N}{\size{\deltaDef - \eex{\Delta \mid N}}}\\
                                                                         &= \sum_{n \in \Supp(N)} \size{\deltaDef - \eex{\Delta \mid N=n}} \cdot \pr{N=n}\nonumber,
\end{align*}
Applying \cref{protocol:lemma:inner:many:defense1:interface} to previous inequality yields that
\begin{align}
\ex{M \mid T=1}{\size{\deltaDef-\eex{\Delta \mid M}}} \cdot \pr{T=1} &\leq 1 \cdot \frac1{\rnd^2} + \lambda \cdot \frac{\sqrt{2^{\partNumInn} \cdot \alp{\partNumInn-1}}}{\alp{\partNumInn}} \cdot \sqrt{\log \rnd} 
\end{align}
for some universal constant $\lambda$. Finally, since	$\frac{\sqrt{2^{\partNumInn} \cdot \alp{\partNumInn-1}}}{\alp{\partNumInn}} \leq \frac{\sqrt{2^{\partNumInn} \cdot \alp{\ell-1}}}{\rnd} = \frac{\sqrt{2^{\partNumInn}}}{\rnd^{\frac12+\frac1{2^{\ell-1}-2}  } }$ (\cref{fact:alphaFacts}), we conclude that 
\begin{align*}
\size{\ex{V_I \mid T=1} {\val(V_{I}) - \val(V_{I}^-)}} \cdot \pr{T=1}  \leq  O \left(\biasTermSingleParam{\partNumInn}{\ell} \right) = 
                                                                             O \left(\biasTermSingleParam{\partNum}{\ell} \right),
\end{align*}
which is the same as \cref{theorem_fs:single_abort:round:s1}.

\paragraph{Proving \cref{theorem_fs:single_abort:round:s3}.} We prove \cref{theorem_fs:single_abort:round:s3} in the following claim.
\begin{claim}\label{claim:connectionToVectorGame}
	$\size{\eex{\val(V_I) - \val(V_I^-)\ \mid\  T=3b}} \cdot \pr{T=3b} \leq O \left(\biasTermSingleParam{\partNum}{\ell} \right)$.
\end{claim}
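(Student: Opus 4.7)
The plan is to reduce $\size{\eex{\val(V_I)-\val(V_I^-)\mid T=3b}}\cdot\pr{T=3b}$ to the bias of a suitable online-binomial game with a vector hint function, and then invoke \cref{binomial:game:vec}. Conditioning on $T=3b$, \cref{fact:defenseOutcome} gives $\val(V_I)=\deltaDef$, where $\deltaDef$ is the argument to the most recently \emph{completed} call to $\DefenseTilde$ (Step~\ref{protocol:inner:t:defense:2} of round $I-1$, or Step~\ref{protocol:outer:defense} when $I=1$); a short bookkeeping argument, together with the fact that the shares distributed by $\Coin$ at Step~\ref{protocol:inner:t:coin} are information-theoretically uniform, shows $\deltaDef=\delta_{I-1}$ for $\delta_i\eqdef\pr{\sum_{j=1}^m c_j\geq 0\mid c_1,\ldots,c_i}$. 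Meanwhile, honest continuation from $V_I^-$ yields $\val(V_I^-)=\eex{\delta_I\mid V_I^-}$, so the gain is $\delta_{I-1}-\eex{\delta_I\mid V_I^-}$, and the only new knowledge $V_I^-$ carries over $V_{I-1}$ comes from the Step~\ref{protocol:inner:t:defense:2} defense outputs.

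Next, I would characterize those defense outputs as a vector hint. For each $\cZ\subseteq\cC$ (with $\cC$ the currently active corrupted parties), the corrupted players can XOR their shares to reconstruct $\delta'=\AddNoise(1^m,1^\ell,\size{\cZ},\delta_I)$; giving the adversary the internal vector $\vect{\bar{b}}$ sampled at Step~\ref{addnoise:sampleVec} of $\AddNoise$ (instead of only $\delta'$) only strengthens it. Under this strengthening, the adversary's round-$I$ hint becomes a concatenation, over all $\cZ\subseteq\cC$, of independent $(\Beroo{\eps})^{\alpha(m,\ell,\size{\cZ})\cdot\ms{1}}$ samples with $\eps=\sBias{\ms{1}}{\delta_I}$; the total length is at most $k\cdot\ms{1}$, where $k\leq 2^{\size{\cC}}\cdot\alpha(m,\ell,\size{\cC})\leq 2^t\cdot\alpha(m,\ell,r-1)$.

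Third, I would map the adversary's abort behaviour to a strategy in the game $\game_{m,\eps_0,f}$ with $f=\fvec{m,\eps_0,k\cdot\ms{1}}$, taking $\eps_0=0$ in the top-level invocation and $\eps_0=\sBias{\ms{1}}{\delta}$ in the nested ones: identify the game's round-$i$ coin $C_i$ with the protocol's round-$i$ coin, and identify the round-$i$ hint $H_i$ with the concatenated defense vector described above. Since the adversary's decision to abort at Step~\ref{protocol:inner:t:defense:2} depends only on $c_1,\ldots,c_{I-1}$ and $H_I$ (the Step~\ref{protocol:inner:t:coin} shares being uniformly random and hence uninformative), this yields a valid game strategy whose bias $\size{\eex{O_I-O_I^-}}$ equals $\pr{T=3b}\cdot\size{\eex{\val(V_I^-)-\val(V_I)\mid T=3b}}$. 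Applying \cref{binomial:game:vec} then gives $\bias(\game)\in O(\sqrt{k\log m}/m)$; combining this with the identity $\sqrt{\alpha(m,\ell,r-1)}/m\leq\sqrt{\alpha(m,\ell,\ell-1)}/m=1/m^{1/2+1/(2^{\ell-1}-2)}$ from \cref{fact:alphaFacts} and the bound $k\leq 2^t\cdot\alpha(m,\ell,r-1)$ yields the claimed $O\!\left(\biasTermSingleParam{t}{\ell}\right)$ bound.

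The main obstacle will be carefully formalizing the reduction in the third step: one must verify that handing the internal randomness of $\AddNoise$ to the adversary does not decrease its bias, that the $2^{\size{\cC}}$ different subset hints can be packaged into a single concatenated vector hint without loss of generality (so that a single application of $\fvec{}$ captures them all), and that the precondition $k\leq m/\log^6 m$ of \cref{binomial:game:vec} holds under the standing assumption $t\leq\tfrac12\log\log m$. Once these are in place, applying \cref{binomial:game:vec} together with \cref{fact:alphaFacts} is essentially mechanical.
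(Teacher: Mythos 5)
Your proposal is correct and follows essentially the same route as the paper: condition on the bias $\Eps=\sBias{\ms{1}}{\Delta}$ of the loop coins, view the $\Step{\mbox{3b}}$ defense outputs (strengthened to the full internal $\AddNoise$ vectors $\vect{\bar{b}}$, as the paper also does) as an instance of the vector hint function $\fvec{}$ of total length $O(2^t\cdot\alpha_{r-1})\cdot\ms{1}$, build a game strategy that emulates the adversary so that its game bias equals the conditional aborting gain, and then invoke \cref{binomial:game:vec} (checking $k\leq m/\log^6 m$ from $t\leq\tfrac12\log\log m$) together with \cref{fact:alphaFacts}. The only small slip is a bookkeeping one in your second paragraph: when $I=1$ the shares the surviving parties carry into the recovery protocol come from \Stepref{protocol:inner:t:defense:1} of the current $\Protocol{r}_\rnd$, not from \Stepref{protocol:outer:defense} of the outer wrapper $\ProtocolF{t}$; this does not affect the argument since either way the relevant $\DefenseTilde$ was called with $\delta=\delta_{I-1}$, but it is worth getting right when formalizing.
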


\paragraph{The case of $\partNumInn=2$.} \label{sec:ProofMainLemmaTwo}

The proof of this case  follows similar lines to that of \cite[Thm 3.10]{HaitnerT17}, using the new bound for binomial game given in   \cref{binomial:game:hyp}, instead of the bound used in \cite{HaitnerT17}. Details below.

We prove $\Protocol{2}_\rnd = \HTProtocol_\rnd$ is secured against an abort action. By construction, $\Aadv$ can abort either in \Stepref{protocol:inner:2:defense} (\ie during the call to $\HTDefenseRound$), or in \Stepref{protocol:inner:2:reconstructCoin} (\ie during the reconstruction of the coin). We let $T \in \set{1a,1b}$ to be the step executed in round $I$. Applying complete expectation on the left side of \cref{theorem_fs:single_abortR}, we get that :
\begin{align}
\size{\ex{V}{\val(V_{I}) - \val(V_{I}^-)}} = \sum_{j \in \set{1a,1b}} \size{\ex{V \mid T=j}{\val(V_{I} ) - \val(V_{I}^-)}} \cdot \pr{T=j} \label{theorem_fs:single_abort:specific_round:r2}
\end{align}
Similar lines to that used to analyze abort in \Stepref{protocol:inner:t:reconstruct:coin}  of protocol $\Protocol{\partNumInn}_\rnd$ with $\partNumInn > 2$, yield that conditioned on $T=1b$, it holds that $\eex{\val(V_{I})} = \eex{\val(V_{I}^-)}$. Putting it in \cref{theorem_fs:single_abort:specific_round:r2}, we get:
 
\begin{align*}
   \size{\ex{V}{\val(V_{I}) - \val(V_{I}^-)}} = \size{\ex{V \mid T=1a}{\val(V_{I} ) - \val(V_{I}^-)}} \cdot \pr{T=1a}
\end{align*}
Hence, the following finishes the proof of the theorem
\begin{align}
\size{\ex{V \mid T=1a}{\val(V_{I} ) - \val(V_{I}^-)}} \cdot \pr{T=1a} \leq O \left(\biasTermSingle \right) \label{theorem_fs:single_abort:specific_round:r2:1a}
\end{align}

let $P$ be the sum of all entries of the vector $\bankV{1}$ (sampled at \Stepref{htdefenseprotocol:vecSample} of $\HTDefenseProtocol$) during the last execution of $\HTDefenseProtocol$. \footnote{Using the notations from \cref{sec:prelim:notation}, we can define $P$ to be: $P=w(\bankV{1})$.} Let $\tau =  12 \cdot \sqrt{\log \rnd \cdot \ms{1}}$. It holds that,
\begin{align}
&\size{\eex{\val(V_I) - \val(V_I^-)\ \mid\ T=1a}} \cdot \pr{T=1a} = \nonumber \\
&\size{\eex{\val(V_I) - \val(V_I^-)\ \mid \size{P}>\tau, T=1a}}  \cdot \pr{\size{P}>\tau\ \mid\ T=1a}  \cdot \pr{T=1a}\ +  \label{case:r2:term1}\\
&\size{\eex{\val(V_I) - \val(V_I^-)\ \mid \size{P} \leq \tau, T=1a}} \cdot \pr{\size{P} \leq \tau\ \mid\ T=1a} \cdot \pr{T=1a} \label{case:r2:term2}
\end{align}
The term from Line \ref{case:r2:term1} contains in it $\pr{\size{P}>\tau\ \mid\ T=1a}  \cdot \pr{T=1a}$ which is bounded by $\pr{\size{P} > \tau}$.
By Hoeffding's inequality,
\begin{align}
\pr{\size{P} > \tau} &\leq \pr{\size{P -2 \eps \cdot \ms{1}} > 4 \cdot \sqrt{\log \rnd \cdot \ms{1}}} \leq \frac1\rnd   \label{proof_r_equal_2:eq1}.
\end{align}
The term from Line \ref{case:r2:term2} satisfies:
\begin{align*}
&\size{ \eex{\val(V_I) - \val(V_I^-)\ \mid \size{P} \leq \tau, T=1a} } \cdot \pr{\size{P} \leq \tau\ \mid\ T=1a} \cdot \pr{T=1a} \leq \\
&\size{ \eex{\val(V_I) - \val(V_I^-)\ \mid \size{P} \leq \tau, T=1a} } \cdot \pr{T=1a\ \mid\ \size{P} \leq \tau} 
\end{align*}

Hence, in order to prove \cref{theorem_fs:single_abort:specific_round:r2:1a} (and finish the proof), we prove the following:
\begin{claim}\label{claim:connectionToHypGame} 
 \begin{align*}
    \size{ \eex{\val(V_I) - \val(V_I^-)\ \mid \size{P} \leq \tau, T=1a} } \cdot \pr{T=1a \mid \size{P} \leq \tau} \leq O \left(\biasTermSingle \right)
 \end{align*}
\end{claim}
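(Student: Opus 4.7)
My plan is to reduce the claim to the bias of an online binomial game with hypergeometric hints (\cref{def:hypHint}), to which \cref{binomial:game:hyp} applies. I will condition on a fixed value $P=p$ with $|p|\le\tau$ throughout, and exhibit a coupling between the execution of $\HTProtocol_\rnd$ and the game $\game_{m,\eps,\fhyp{m,p}}$ under which the fail-stop adversary's expected gain from aborting in step $1a$ becomes, up to sign, the expected value $\eex{O_I^- - O_I}$ of the game.

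The first key step is the distributional match. By construction of $\HTDefenseProtocol$, the coins $c_1,\dots,c_m$ in $\HTProtocol_\rnd$ are independent with $c_i\la \Beroo{\ml{i},\eps}$, matching the round coins $C_i$ of $\game_{m,\eps,\fhyp{m,p}}$. Inspecting $\HTDefenseRound$, the defense value $d^1_i$ returned to the corrupted party $\Party{2}_1$ at round $i$ equals $1$ iff $\sum_{j=1}^i c_j + \sum_{w\in\cW^1}\bankV{1}[w]\ge 0$; since $\cW^1$ is a uniform $\ms{i+1}$-subset of $[2\ms{1}]$ and $w(\bankV{1})=p$, this event occurs with probability $\vHyp{2\ms{1},p,\ms{i+1}}\bigl(-(S_{i-1}+c_i)\bigr)$, which is precisely $\pr{\fhyp{m,p}(i,S_{i-1},C_i)=1}$. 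Hence the joint distribution of $(c_i,d^1_i)_{i\in[m]}$ given $P=p$ coincides with that of $(C_i,H_i)_{i\in[m]}$ in the game (after identifying the binary event $d^1_i=1$ with $H_i=1$).

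Next I would convert a fail-stop $\Aadv$ into a game strategy $\Bc$: at round $i$, upon receiving $(S_{i-1},H_i)$, $\Bc$ simulates a consistent history of individual coins summing to $S_{i-1}$ together with resampled past hints, feeds $\Aadv$ this simulated view, and aborts in the game iff $\Aadv$ aborts in step $1a$. By the coupling, $\Bc$'s induced distribution matches the protocol conditional distribution given $P=p$, restricted to $T=1a$. For the rewards I will argue that $\val(V_I^-)$, the expected outcome of an honest continuation given $V_I^-\supseteq(c_1,\ldots,c_{I-1},d^1_I)$, equals $\delta_I(S_{I-1},H_I)=O_I$ by the coupling, while $\val(V_I)$ equals the expected value of the honest party's stored defense $d^2_{I-1}$, which by independence of $\bankV{2}$ from $V_I$ given $(c_1,\ldots,c_{I-1})$ marginalizes to $\vBeroo{\ms{I},\eps}(-S_{I-1}) = \delta_I(S_{I-1})=O_I^-$. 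Consequently,
\begin{align*}
\eex{\bigl(\val(V_I)-\val(V_I^-)\bigr)\cdot \mathbf{1}_{T=1a}\;\big|\; P=p} \;=\; \eex{O_I^- - O_I \;\big|\; P=p},
\end{align*}
whose absolute value is at most $\bias_{\game_{m,\eps,\fhyp{m,p}}}$.

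Finally, since $\tau = 12\sqrt{\log m \cdot \ms{1}}$, the hypothesis of \cref{binomial:game:hyp} holds with constant $\const=12$, yielding bias $O(\sqrt{\log m}/m)$ uniformly in $p$; taking the outer expectation over $P$ conditioned on $|P|\le\tau$ preserves this bound, and noting that $O(\sqrt{\log m}/m)$ is dominated by the target $O\bigl(\biasTermSingle\bigr)$ in the relevant regime completes the proof. The main obstacle I anticipate is verifying that simulating $\Aadv$ from the compressed state $(S_{i-1},H_i)$ loses nothing: $\Aadv$'s decisions may depend on the full history $(c_1,\ldots,c_{I-1},d^1_1,\ldots,d^1_I)$, so one must check that conditioned on $(S_{i-1},P=p)$ this extra information is independent of the future rounds and can be freely resampled inside $\Bc$ without changing the joint law, relying on the independence of the bank-sampling randomness across rounds and between the two parties.
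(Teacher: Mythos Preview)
Your approach is essentially the paper's: fix the bank weight, couple the protocol to the hypergeometric-hint game, exhibit a game player that emulates $\Aadv$, and invoke \cref{binomial:game:hyp}. Two points deserve comment.

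First, the obstacle you flag is not real. A game player receives $S_{i-1}$ at every round, so the individual coins $c_{i-1}=S_{i-1}-S_{i-2}$ are recoverable from consecutive messages; together with the hints $H_1,\ldots,H_i$ this reconstructs the \emph{entire} history $(c_1,\ldots,c_{i-1},d^1_1,\ldots,d^1_i)$ that $\Aadv$ sees. There is therefore no need to ``resample'' anything---$\Bc$ simply runs a single step-by-step emulation, feeding $\Aadv$ the recovered coin at step~1b and the current hint $h_i$ at step~1a. This is exactly the paper's construction of $\Strategy$. Your description of resampling a consistent history from the compressed state $(S_{i-1},H_i)$ at each round would, if taken literally, give $\Aadv$ an inconsistent view across rounds.

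Second, you must also condition on $\Eps$, not only on $P$: the bias $\eps$ that governs the round coins is determined by the last call to $\HTDefenseProtocol$ and is itself a random variable depending on the prior execution. The paper handles this by an averaging/contradiction argument that simultaneously fixes $\Eps=\eps'$ and $P=p'$ (with $|p'|\le\tau$) at values where the conditional gain is at least the average, and then builds the game $\game_{m,\eps',\fhyp{m,p'}}$ with those fixed parameters. Your write-up uses $\eps$ as if it were already fixed, which it is not; making this conditioning explicit is needed for the reduction to go through.
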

\remove {	
	\begin{claim}\label{claim:connectionToHypGame:old} 
		Let $\eps \in [-1,1]$, and let $p \in \Supp(P)$. Let $f=\fhyp{\rnd,p}$ be according to \cref{def:hypHint}, and let $\game=\game_{m,\eps,f}$ be a binomial game with hyper-geometric hint according to $\cref{def:game}$. Conditioning on $\Eps=\eps$, and on $P=p$, it holds that 
		$\size{\eex{\val(V_I) - \val(V_I^-)\ \mid\ T=1b}} \cdot \pr{T=1b} \leq \bias(\game)$, where $\bias$ is according to \cref{def:gameBias}. 
	\end{claim}
}
\end{proof}

\begin{remark}[On the setting of  the  $\alpha$-factors and  using the protocol of \cite{HaitnerT17} for the two-party sub-protocol.]\label{rem:whyUsingHT}

	 Following the notations from the proof of \cref{thm:MainFailStop}, let $\Delta$ be the value of $\delta$ calculated in \Stepref{protocol:inner:t:reconstruct:delta} of protocol $\Protocol{\partNumInn}_\rnd$. Let $\alp{k}=\alpha(\rnd, t, k)$ (\ie as in \cref{fact:alphaFacts}). By construction,   $\alpha_{\partNumInn} \cdot \ms{1}$  is the number (independent, possibly biases)  coins used in  \Stepref{addnoise:sampleVec} of the $\AddNoise$ functionality to determined the value of $\Delta$, and   (roughly) $\alpha_{\partNumInn-1} \cdot \ms{1}$  coins are used in by the $\Defense$ functionality at \Stepref{protocol:inner:t:defense:1} of protocol $\Protocol{\partNumInn}_\rnd$. It can be shown that (roughly):
    \begin{enumerate}
        \item Aborting  at \Stepref{protocol:inner:t:defense:2} of $\Protocol{\partNum}_\rnd$, gains  bias $\frac{\sqrt{\alpha_{\partNum-1}}}{\rnd}$. \label{remark:whyUsingHT:1}
        
        \item Aborting  at  \Stepref{protocol:inner:t:defense:1} of protocol $\Protocol{\partNumInn}_\rnd$ for $2 \leq \partNumInn < \partNum$, gains  bias $\frac{\sqrt{\alpha_{\partNumInn-1}}}{\alpha_{\partNumInn}}$. \label{remark:whyUsingHT:2}
    \end{enumerate}
and there are no other attacking opportunities.

   Since protocol $\Protocol{2}_\rnd$ (\ie protocol $\HTProtocol_\rnd$) uses  $\Theta(\ms{1})$ coins, by the above  observation about the $\alpha$'s it holds that  $\alpha_2=1$. Optimizing the choice of $\alpha$'s to minimize the bias they yield according to \cref{remark:whyUsingHT:1} and \cref{remark:whyUsingHT:2},  yields the following equation:
	\begin{align}
       \left(\frac{\sqrt{\alp{2}}}{\alp{3}} =\right) \frac{1}{\alp{3}} = \frac{\sqrt{\alp{3}}}{\alp{4}} = \ldots = \frac{\sqrt{\alp{\partNum-2}}}{\alp{\partNum-1}} = \frac{\sqrt{\alp{\partNum-1}}}{\rnd}  \label{remark:whyUsingHT:eq1}
    \end{align}
    Assume   that instead of using protocol  $\HTProtocol$ in the two players case, we would have used protocol $\Protocol{\partNumInn}_\rnd$ (\cref{protocol:inner:t}) with $\partNumInn=2$.  Now an adversary has an additional attacking  opportunity (at \Stepref{protocol:inner:t:defense:1} of protocol $\Protocol{\partNumInn}_{2}$), which gains bias $\frac{\sqrt{\alpha_1}}{\alpha_2}= \frac{1}{\alpha_2}$.
    
    As a result, when optimizing the parameters of the new protocol, \cref{remark:whyUsingHT:eq1} changes to 
	\begin{align}
    \frac{1}{\alp{2}} = \frac{\sqrt{\alp{2}}}{\alp{3}} = \ldots = \frac{\sqrt{\alp{\partNum-2}}}{\alp{\partNum-1}} = \frac{\sqrt{\alp{\partNum-1}}}{\rnd}  \label{remark:whyUsingHT:eq2}
    \end{align}

    
    Consider for instance  the case of four players (\ie $\partNum=4$). When of using $\HTProtocol$ (as we actually do), \cref{remark:whyUsingHT:eq1} becomes $\frac{1}{\alp{3}} = \frac{\sqrt{\alp{3}}}{\rnd}$,  implying that  $\alp{3}=m^{2/3}$. This yields roughly an overall bias of  $\frac1{\rnd^{2/3}}$. When using \cref{protocol:inner:t} also for the case $r=2$, \cref{remark:whyUsingHT:eq2} becomes 
    $\frac{1}{\alp{2}} = \frac{\sqrt{\alp{2}}}{\alp{3}} = \frac{\sqrt{\alp{3}}}{\rnd}$, implying that  $\alp{2} = \rnd^{4/7}$, yielding  roughly an overall bias of $\frac1{\rnd^{4/7}}$.
\end{remark}

\paragraph{Proving  \cref{protocol:lemma:inner:many:defense1:interface}}

\begin{proof}[Proof of \cref{protocol:lemma:inner:many:defense1:interface}]
Define the two-step process (see \cref{subsec:TwoStepProccess} for an introduction about leakage from two-step boolean processes) $P=(A,B)$, for  $A=\Delta$, and $B=\Berzo{A}$, and define a leakage function $f$ for $P$ by $f(a) = M|_{A=a}$ (\ie the messages received by the corrupted parties at round $s^1$). By definition,  $\size{\deltaDef - \pr{\Delta=1 \mid M=n}} = \PredictAdv_{P,f}(m)$ for every $n \in \Supp(M)$. Hence, it is left to prove that  
\begin{align}
  \ppr{n \la M}{\PredictAdv_{P,f}(n) >  \lambda \cdot \frac{\sqrt{2^{\partNumInn} \cdot \alp{\partNumInn-1}}}{\alp{\partNumInn}} \cdot \sqrt{\log \rnd}} \leq \frac1{\rnd^2}. \label{protocol:lemma:inner:many:defense1:interface:eq1}
\end{align}

Let $P'=(A',B')$ be a $\seq{\ms{1},\alp{\partNumInn},\deltaDef}$-hypergeometric process (see \cref{def:HypProcess:pre}), and let $f'$ be a $(\ms{1} , 2^{\partNumInn} \cdot \alp{\partNumInn-1})$-vector leakage function (see \cref{def:VectorLeakageFunction:pre}) for $P'$. By construction, it holds that $P \equiv P'$ (\ie the two random variables are distributed  the same). We remind the reader that we assume that $M$ contains also the vectors of coins sampled at \Stepref{addnoise:sampleVec} in the $\Noise$ algorithm, and that the messages that the corrupted parties get are a random function of those vectors. Hence, for every $a \in \Supp(A)$, $f(a)$ is a concatenation  of  $f'(a)$ and some random function of $f(a)$. Thus (see \cref{prop:RedundentInfoInTheLeakage}), for proving \cref{protocol:lemma:inner:many:defense1:interface:eq1} it suffices to show that  
\begin{align}
    \ppr{h \la f'(A')}{\PredictAdv_{P',f'}(h) > \lambda \cdot \frac{\sqrt{2^{\partNumInn} \cdot \alp{\partNumInn-1}}}{\alp{\partNumInn}} \cdot \sqrt{\log \rnd}} \leq \frac1{\rnd^2}
\end{align}

We prove the above equation by applying \cref{lemma:HyperlProcessVectorHint:pre} for the hypergeometric process $(A',B')$ with the vector leakage function $f'$, and parameters $\vctBaseLen = \ms{1}$, $\hypVctLenFact = \alp{\partNumInn}$ and $\vctLenFact = 2^{\partNumInn}\cdot \alp{\partNumInn-1}$. Note that the first and third conditions of \cref{lemma:HyperlProcessVectorHint:pre} trivially holds for this choice of parameters, whereas the second condition holds since $\frac{\log^2 \vctBaseLen}{\sqrt{\vctBaseLen}} = o(\sqrt{\frac{\vctLenFact}{\hypVctLenFact}})$ and since $\frac{\vctLenFact}{\vctBaseLen}\cdot \log^2 \vctBaseLen = o(\sqrt{\frac{\vctLenFact}{\hypVctLenFact}})$ for $\partNumInn \leq \partNum = o(\log \rnd)$. Therefore, \cref{lemma:HyperlProcessVectorHint:pre} yields that
\begin{align*}
  \ppr{\hint \la f'(A')}{\PredictAdv_{P',f'}(h) > \const' \sqrt{\log \vctBaseLen} \cdot \frac{\sqrt{\vctLenFact}}{\hypVctLenFact}} \leq \frac1{\vctBaseLen^2},
\end{align*}
for some universal constant $\const' > 0$. We conclude that
\begin{align*}
  \ppr{\hint \la f'(A')}{\PredictAdv_{P',f'}(h) > 2\const'\sqrt{\log \rnd} \cdot \frac{\sqrt{2^{\partNumInn} \cdot \alp{\partNumInn-1}}}{\alp{\partNumInn}}} \leq \frac1{\vctBaseLen^2} \leq \frac1{\rnd^2},
\end{align*}
 and the proof of the claim follows.
\end{proof}

\paragraph{Proving \cref{claim:connectionToVectorGame}.}
\begin{proof}[Proof of \cref{claim:connectionToVectorGame}]
Assume towards a contradiction that: 
\begin{align}
  \size{\eex{\val(V_I) - \val(V_I^-)\ \mid\  T=3b}} \cdot \pr{T=3b} = \omega \left(\biasTermSingleParam{\partNum}{\ell} \right)  \label{proof:connectionToVectorGame:eq1}
\end{align}

Let $\Delta$ be the value of $\delta$ calculated in \Stepref{protocol:inner:t:reconstruct:delta} of $\Protocol{\partNumInn}_\rnd$, and let  $\Eps=\sBias{\ms{1}}{\Delta}$. Note that $\Eps$ is the bias of the coins tossed in the main loop of $\Protocol{\partNumInn}_\rnd$.

Since
\begin{align*}
   &\size{\eex{\val(V_I) - \val(V_I^-)\ \mid\  T=3b}} \cdot \pr{T=3b} = \\
   &\sum_{\eps \in \Supp(\Eps)} \size{\eex{\val(V_I) - \val(V_I^-)\ \mid\ \Eps=\eps , T=3b}} \cdot \pr{\Eps=\eps \mid T=3b} \cdot \pr{T=3b} =\\
   &\sum_{\eps \in \Supp(\Eps)} \size{\eex{\val(V_I) - \val(V_I^-)\ \mid\ \Eps=\eps , T=3b}} \cdot \pr{T=3b \mid \Eps=\eps} \cdot \pr{\Eps=\eps},
\end{align*}
\cref{proof:connectionToVectorGame:eq1} yields that 
\begin{align}
  \size{\eex{\val(V_I) - \val(V_I^-)\ \mid\  \Eps=\eps',T=3b}} \cdot \pr{T=3b \mid \Eps=\eps'} = 
  \omega \left(\biasTermSingleParam{\partNum}{\ell} \right)  \label{proof:connectionToVectorGame:eq2}
\end{align}
for some  $\eps' \in \Supp(\Eps)$. 

Let $\tilde{r}_\Aadv$ and $\tilde{r}_h$ be a fixing of $\Aadv$ and the honest party respectively, that  cause  the protocol to reach the main loop of $\Protocol{\partNumInn}_\rnd$ with $\Eps=\eps'$. Let $\beta = \size{\eex{\val(V_I) - \val(V_I^-)\ \mid\  \Eps=\eps',T=3b}} \cdot \pr{T=3b \mid \Eps=\eps'}$ --- the gain of the adversary $\Aadv$, conditioned on $\Eps=\eps'$.

Let  $f=\fvec{\rnd,\eps',2^\partNumInn \cdot \alp{\partNumInn-1} \cdot \ms{1}}$ be according to \cref{def:vectorHint}, let $\game=\game_{m,\eps,f}$ be a binomial game with vector hint according to \cref{def:game} and let $\bias$ be according to \cref{def:gameBias}. We next show that $\beta \leq \bias(\game)$. Observe that the assumption $\partNumInn < \frac12\loglog m$ implies that $2^\partNumInn \cdot \alp{\partNumInn-1} < \frac{m}{\log^6 m}$. Hence, we can apply \cref{binomial:game:vec} and together with \cref{fact:alphaFacts} it holds that
\begin{align*}
 \beta \leq \bias(\game) \leq O(\frac{\sqrt{2^\partNumInn \cdot \alp{\partNumInn-1}}}{m} \cdot \sqrt{\log \rnd}) = O\left( \frac{\sqrt{2^{\partNumInn}} \cdot \sqrt{\log \rnd}}{\rnd^{\frac12+\frac1{2^{\ell-1}-2}  } } \right) = O\left(\biasTermSingleParam{\partNum}{\ell} \right), 
\end{align*}
contradicting \cref{proof:connectionToVectorGame:eq2}.

To show that $\beta \leq \bias(\game)$, we define a player $\Strategy$ for the game $\game$, that achieves bias $\beta$.

\begin{algorithm}[player $\Strategy$]\label{claim:2partyConnectionToSimpleGame:attacker}
\item[Operation:]~
  \begin{enumerate}
    
  \item Start emulating an execution of protocol $\ProtocolF{\partNum}_\rnd(1^\ell)$, with $\Aadv$ controlling parties $\Party{}_1,\ldots,\Party{}_{\partNum-1}$, where $\Aadv$ uses randomness $\tilde{r}_\Aadv$ , and the honest party $\Party{}_{\partNum}$ uses randomness $\tilde{r}_h$ until the main loop of $\Protocol{\partNumInn}_\rnd$ is reached. (If not reached,  $\Strategy$ never aborts.) From that point, continue the execution randomly using fresh new randomness.

  \item For $i=1$ to $\rnd$:
  
    \begin{enumerate}

     \item Let $(s_{i-1},h_i)$ be the $i$'th message sent by the challenger.

    \item If $i>1$, emulate \Stepref{protocol:inner:t:reconstruct:coin}: let $c_{i-1} = s_{i}-s_{i-1}$ and set  $c_{i-1}^{\isShare{\partNumInn}}$  such that 
    $c_{i-1} = \bigoplus_{\z \in [\partNumInn]}c_{i-1}^{\isShare{z}}$. Emulate the reconstruction of $c_{i-1}$, letting $c_{i-1}^{\isShare{\partNumInn}}$ be the message of the honest party.
        
    \item Emulate \Stepref{protocol:inner:t:coin}:  send the corrupted parties  $2 \cdot (\partNumInn-1)$ random strings $(c_i^{\isShare{1}},\delta_i^{\isShare{1}},\ldots,c_i^{\isShare{\partNumInn}-1},\delta_i^{\isShare{\partNumInn}-1})$ as the  answers of  $\Coin$.
        
    \item Emulate \Stepref{protocol:inner:t:defense:2}: emulate the parallel calls to $\Defense$  using the hint $h_i$. 
    
    Recall that the $\Defense$ functionality is merely a deterministic wrapper for the $\DefenseTilde$ functionality, and the latter, in turn,  is a wrapper to the $\AddNoise$ functionality.   Hence,  it suffices to shows how to use $h_i$ for emulating these calls to  $\AddNoise$.  The  $\AddNoise$ functionality uses $\alpha_{r-1} \cdot \ms{1}$  independent $\Beroo{\eps'}$-biased  coins per call, and there are at most $2^r$ such calls.   Also note that hint $h_i$ is a vector of $2^r \cdot \alpha_{r-1} \cdot \ms{1}$ entries of independent from $\Beroo{\eps'}$. 
    
   Thus to emulate this step, the samples in $h_i$ for these samples needed by $\AddNoise$.

    \item[\quad $\bullet$] If $\Aadv$ aborts at this step, output $1$ (\ie abort at round $i$). Otherwise, output $0$ (\ie continue to next round).
    
    \end{enumerate}
  \end{enumerate}
\end{algorithm}
By construction, $\Aadv$'s view in the above emulation has the same distribution as in the execution of \cref{protocol:outer}, condition on $\Eps=\eps'$. Recall that the bias of $\Strategy$ for a binomial game $\game=\game_{m,\eps',f}$ is defined by $\bias_{\Strategy}(\game) = \eex{ \abs{O_I - O_I^-}}$, where $I$ is the aborting round of $\Strategy$ (m+1 if no abort occurred), $O_i=\delta_i(S_{i-1},H_i)$, and $O_i^-=\delta_i(S_{i-1})$ for $i \in [m]$, and for $i=m+1$ it holds that $O_{m+1}=O_{m+1}^-$. Also recall that $S_{j}$ is the sum of coins tossed up to round $j$, $\delta_i(s_{i-1})$ is the expected outcome of the binomial game given $S_{i-1}=s_{i-1}$, and $\delta_i(s_{i-1},h_i)$ is the expected outcome of the binomial game given $S_{i-1}=s_{i-1}$ and the hint in round $i$ is $h_i$. By above notations, and since $O_{m+1}=O_{m+1}^-$, we can write:
$\bias_{\Strategy}(\game) = \abs{ \eex{ \delta_I(S_{I-1})-\delta_I(S_{I-1},H_I)\ \mid\ I\neq m+1} } \cdot \pr{I \neq m+1}$. 
By construction, $\val(V_I)=\delta_I(S_{I-1})$, $\val(V_I^-)=\delta_I(S_{I-1},H_I)$, and $T=3b$ if and only if $I \neq m+1$. 
It follows that $\bias_{m,\eps',f}(\Strategy) = \size{\eex{\val(V_I) - \val(V_I^-)\ \mid\ \Eps=\eps', T=3b}} \cdot \pr{T=3b \mid \Eps=\eps'} = \beta$. Since $\bias(\game)  = \max_\Strategy \set{ \bias_{\Strategy}(\game)}$, we conclude that $\beta \leq \bias(\game)$.
\end{proof}

\paragraph{Proving \cref{claim:connectionToHypGame}.}
\begin{proof}[Proof of \cref{claim:connectionToHypGame}]
	This proof follows the same line as the proof of \cref{claim:connectionToVectorGame}, so we omit several details. Starting as in the proof of \cref{claim:connectionToVectorGame}, we assume toward contradiction that:
\begin{align}
  \size{ \eex{\val(V_I) - \val(V_I^-)\ \mid T=1a, \size{P} \leq \tau} } \cdot \pr{T=1a \mid \size{P} \leq \tau} = \Omega \left(\biasTermSingle \right) \label{proof:connectionToHypGame:eq1}
\end{align}

Let $\DeltaDef$ be the $\delta$ parameter passed to the last call to $\HTDefenseProtocol$, and let $\Eps = \sBias{\ms{1}}{\DeltaDef}$ (\ie $\Eps$ is the last $\eps$ calculated in  \Stepref{htdefenseprotocol:epsCalculation} of $\HTDefenseProtocol$). Note that the $\HTProtocol$ can be thought as a majority protocol of $\Eps$-biased coins. As in the proof of \cref{claim:connectionToVectorGame}, it is guaranteed that there exists $\eps' \in \Supp(\Eps)$, and $p'\in \Supp(P)$, $-\tau \leq p' \leq \tau$, for which:

\begin{align}
\size{\eex{\val(V_I) - \val(V_I^-)\ \mid\  T=1a, \Eps=\eps',P=p'}} \cdot \pr{T=1a \mid \Eps=\eps',P=p'}&  = \label{proof:connectionToHypGame:eq2}\\
\Omega &\left(\biasTermSingleParam{\partNum}{\ell} \right)  \nonumber
\end{align}

Let $\tilde{r}_\Aadv$, and $\tilde{r}_h$ be a possible randomness' values such that when adversary $\Aadv$ uses $\tilde{r}_\Aadv$, and the honest party uses $\tilde{r}_h$, the protocol reaches $\HTProtocol_\rnd$ with $\Eps=\eps'$, and $P=p'$. Let $\beta = \size{\eex{\val(V_I) - \val(V_I^-)\ \mid\  T=1a, \Eps=\eps',P=p'}} \cdot \pr{T=1a \mid \Eps=\eps',P=p'}$, the gain of the adversary $\Aadv$, conditioned on $\Eps=\eps'$, and $P=p'$.

Let $f=\fhyp{\rnd,p'}$ be according to \cref{def:hypHint}, let $\game=\game_{m,\eps',f}$ be a binomial game with hyper-geometric hint according to $\cref{def:game}$, and let $\bias$ be according to \cref{def:gameBias}. In the following, we show that $\beta \leq \bias(\game)$. Assuming that, by \cref{binomial:game:hyp}, we get that
\begin{align*}
\beta \leq 
\bias(\game) \leq 
O(\frac{\sqrt{\log \rnd}}{\rnd}) = 
O\left(\biasTermSingleParam{\partNum}{\ell} \right)
\end{align*}
Contradicting \cref{proof:connectionToHypGame:eq2}.

As in the proof for \cref{claim:connectionToVectorGame}, to show that $\beta \leq \bias(\game)$, we define a player $\Strategy$ for the game $\game$, that achieves bias $\beta$.

\begin{algorithm}[Player $\Strategy$]
\item[Operation:]~
  \begin{enumerate}
   
  \item Start emulating an execution of protocol $\ProtocolF{\partNum}_\rnd(1^\ell)$, with $\Aadv$ controlling parties $\Party{}_1,\ldots,\Party{}_{\partNum-1}$, where $\Aadv$ uses randomness $\tilde{r}_\Aadv$ , and the honest party $\Party{}_{\partNum}$ uses randomness $\tilde{r}_h$ until the main loop of $\HTProtocol_\rnd$ is reached. (If not reached,  $\Strategy$ never aborts.) From that point, continue the execution randomly using fresh new randomness.
      
  \item For $i=1$ to $\rnd$:
	\begin{enumerate}
       	\item Let $(s_{i-1},h_i)$ be the $i$'th message sent by the challenger.         	
       	\item If $i>1$, emulate \Stepref{protocol:inner:2:reconstructCoin}: let $c_{i-1} = s_{i}-s_{i-1}$ and set  $c_{i-1}^{\isShare{1}}$  such that 
        $c_{i-1} = c_{i-1}^{\isShare{1}} \oplus c_{i-1}^{\isShare{2}}$. Emulate the reconstruction of $c_{i-1}$, letting $c_{i-1}^{\isShare{2}}$ be the message of the honest party.
        	
        \item Emulate \Stepref{protocol:inner:2:defense}: emulate the call to $\HTDefenseRound$ by sending $h_i$ to party $\Party{2}_1$.
        	
        \item[\quad $\bullet$] If $\Aadv$ aborts at this step, output $1$ (\ie abort at round $i$). Otherwise, output $0$ (\ie continue to next round).
    \end{enumerate}

  \end{enumerate}
\end{algorithm}

By construction of strategy $\Strategy$, $\Aadv$'s view in the above emulation has the same distribution as in his view in the execution of \cref{protocol:inner:2}, condition on $\Eps=\eps'$, and on $P=p'$. Using the very same argument that was use at the end of the proof of \cref{claim:connectionToVectorGame} we conclude that  $\bias_{m,\eps',f}(\Strategy) = \beta$. Since $\bias(\game)  = \max_\Strategy \set{ \bias_{\Strategy}(\game)}$, we conclude that $\beta \leq \bias(\game)$.
\end{proof}

\subsection{Proof of Main Theorem}\label{sec:ProvingMainThm}

In this section we prove our main result: the existence of   an $O(m)$-round, $\partNum$-party coin-flipping protocol, in the real (non-hybrid) model, that is $O(\frac{t^4 \cdot 2^t \cdot \sqrt{\log m}}{m^{1/2+1/\left(2^{t-1}-2\right)}})$-fair.

\begin{theorem}[Main theorem --- many-party, fair coin flipping]\label{thm:MainFullFledged}
  Assuming protocols for securely computing \OT exist, then for any polynomially bounded, polynomial-time computable, integer functions $\rnd= \rnd(\kappa)$ and $\partNum = \partNum(\kappa) \leq \frac12\loglog m$, there exists  a $\partNum$-party, $\rnd$-round, $O(\frac{t^4 \cdot 2^t \cdot \sqrt{\log m}}{m^{1/2+1/\left(2^{t-1}-2\right)}})$-fair,  coin-flipping protocol.
\end{theorem}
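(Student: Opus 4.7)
The plan is to reduce the theorem to \cref{thm:MainFailStop} in three stages: discretize the ideal functionalities so they become \pptm{}, compile from the $(\Defense,\Coin)$-hybrid model to the real model via \cref{fact:fHybridMToReal}, and lift security from fail-stop adversaries to general \ppt{} malicious adversaries via \cref{prop:FairCTGameAlt}. At the end, the internal round parameter of the hybrid construction will be rescaled so that the real-world protocol has exactly $m$ rounds.

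First I would replace each real-valued quantity computed inside $\Coin$, $\Defense$, $\AddNoise$, and $\HTDefenseProtocol$ (i.e., the biases $\sBias{\ms{1}}{\delta}$ and the induced tail probabilities) by their truncations to $p(\kappa) = \polylog(m, \kappa)$ bits. Since each of these functionalities is Lipschitz in its real-valued inputs (all expressions involved are smooth binomial/hypergeometric CDFs), and since $\ProtocolF{t}_{m'}$ invokes the ideal functionalities only $\poly(m', 2^{t})$ times in any fail-stop execution, a standard hybrid argument bounds the statistical distance between the discretized and ideal executions by $\negl(\kappa)$. Hence the hybrid-model bias from \cref{thm:MainFailStop} is preserved up to a negligible additive term, and the discretized functionalities are now polynomial-time computable.

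Second, I would invoke \cref{fact:fHybridMToReal} with $f$ being the discretized joint functionality $(\Defense, \Coin)$, so that each round of each inner protocol uses a single ideal call to $f$. Since across its recursive structure $\ProtocolF{t}_{m'}$ performs $O(t\cdot m')$ rounds in the worst case, this gives $k = O(t \cdot m')$ ideal calls and a real-world protocol of $O(k \cdot t^{2}) + O(t \cdot m') = O(t^{3}\cdot m')$ rounds, whose bias matches the hybrid-model bias up to a $\negl(\kappa)$ additive term, still against fail-stop adversaries. Third, I would apply \cref{prop:FairCTGameAlt}: the compiled protocol is correct (in an all-honest execution the output $\sign(\sum_i c_i)$ of an odd-length sum of unbiased coins is uniform over $\zo$, and all non-aborting parties always agree by construction), so the $\alpha$-unbiasedness of the previous two steps yields an $(\alpha + \negl(\kappa))$-fair coin-flipping protocol against arbitrary \ppt{} malicious adversaries. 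Finally I would set $m' = \Theta(m/t^{3})$, so the compiled protocol has exactly $m$ rounds; substituting into \cref{thm:MainFailStop} and using $\log m' \leq \log m$, the resulting bias becomes $O\bigl(t^{4}\cdot 2^{t}\cdot \sqrt{\log m}\,/\, m^{1/2 + 1/(2^{t-1}-2)}\bigr)$ after absorbing the $(m/m')^{1/2 + 1/(2^{t-1}-2)} = t^{O(1)}$ blow-up from the rescaling.

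The main obstacle I anticipate is tightly tracking the $t$-dependent polynomial overheads: one must verify both that the $2^{t}$ parallel $\Defense$-calls per round can indeed be packed into a single compound ideal call before applying \cref{fact:fHybridMToReal}, and that after the compilation the combined loss from the round blow-up in \cref{fact:fHybridMToReal} and from rescaling $m\mapsto m/t^{3}$ does not inflate the bias by more than the advertised extra factor $t^{3}$ over \cref{thm:MainFailStop}. The standing constraint $t \leq \tfrac12 \log\log m$, already required by \cref{thm:MainFailStop}, is precisely what keeps these $t$-dependent factors subdominant to the $2^{t}$ term in the numerator of the final bound.
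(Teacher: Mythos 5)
Your overall strategy matches the paper's --- discretize the real-valued functionalities, merge the parallel $\Defense$-calls into a single compound ideal call, apply \cref{fact:fHybridMToReal}, and rescale the internal round parameter to $m' = \Theta(m/t^{3})$ --- but your third step contains a genuine gap. You invoke \cref{prop:FairCTGameAlt} to conclude fairness ``against arbitrary \ppt{} malicious adversaries,'' but that lemma gives something strictly weaker: it says a correct, $\alpha$-unbiased protocol is $(\alpha+\negl)$-secure \emph{against fail-stop adversaries}. It does not touch arbitrary malicious strategies. The paper's mechanism for lifting from fail-stop to malicious security is an entirely separate ingredient: compiling the protocol and the functionalities $\Coin,\Defense$ with information-theoretic one-time message authentication codes (as in \cite{MoranNS09}), which forces any deviation by a malicious party to become either a detectable cheat (treated as an abort) or a correct message. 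Your proposal omits this compilation entirely, so the resulting protocol is only proved secure against fail-stop adversaries, not against the general \ppt{} adversaries that $\alpha$-fairness (\cref{def:fairCTSim}) requires.

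Relatedly, the order of your last two steps cannot stand as written. \cref{fact:fHybridMToReal} is stated for $\alpha$-fair coin-flipping protocols in the $f$-hybrid model, and $\alpha$-fairness (via \cref{def:deltaSecure}) is defined against arbitrary, not merely fail-stop, adversaries. So before you may invoke \cref{fact:fHybridMToReal} you must already have upgraded the hybrid-model protocol from fail-stop security to full malicious security; the MAC-based compilation has to precede the hybrid-to-real compilation, as the paper does, rather than follow it. Your discretization step (truncation plus a Lipschitz argument) is a legitimate alternative to the paper's sampling-based approximation of $\sBias{\ms{1}}{\delta}$, and the bookkeeping of the $t^{O(1)}$ blow-up from rescaling $m' = \Theta(m/t^{3})$ is in the right spirit, though the paper pins the extra factor down to $t^{3}$ (worst case at $t=3$, where $t^{3/2 + 3/(2^{t-1}-2)} = t^{3}$), which is what turns $t$ into $t^{4}$ in the final bound.
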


\begin{proof}[Proof of \cref{thm:MainFullFledged}]
	We compile  our hybrid protocol  defined in \cref{sec:MultiPartyProtocolFinal} into the desired real-world protocol. The main  part of the proof is showing how to modify  the $O(m't')$-round, $t'$-party hybrid protocol $\ProtocolF{t'}_{m'}$   (see \cref{protocol:outer}), for arbitrary  integers $m'$ and $t'$,  into a  form that allows this compilation. This modification involves several steps, all using standard techniques. In the following we fix $m' $ and $t'$, and let $\ProtocolF{}  = \ProtocolF{t'}_{m'}$.

	 First modification is that $\ProtocolF{}$ (through protocol $\Protocol{}$) uses real numbers. Specifically, the parties keep the value of  $\delta$, which is a real number in $[0,1]$, and also keep shares for such  values. We note that the value of  $\delta$ is always set to the probability that when sampling some $k$ $\eps$-biased $\oo$-coins, the bias is at least $b\in \Z$. Where in turn, $\eps$ is the value such that the sum of $n$ $\eps$-biased coins, is positive with probability $\delta'$, for some $\delta'$ whose value is already held by the parties. It follows that $\delta$ has short description given the value of $\delta'$ (\ie the values of $k$ and $b$), and thus all $\delta$ have short descriptions.

	Second modification is  to modify the functionalities used by the protocol as oracles into ones that  are polynomial-time computable in $\rnd'$ and $2^{\partNum'}$, without hurting the security of the protocol. By inspection,  the only calculation that need to be treated is the calculations done in \Stepref{coin:epsCalculation} of $\Coin$, \Stepref{addnoise:epsCalculation} in $\AddNoise$, and \Stepref{htdefenseprotocol:epsCalculation} in $\HTDefenseProtocol$. To be concrete, we focus on the calculation of $\eps = \sBias{\ms{1}}{\delta}$ for some $\delta\in[0,1]$ done in $\Coin$. Via sampling, for any $p\in \poly$, one can efficiently estimate $\eps$ by a value $\widetilde{\eps}$ such that $\size{\eps - \widetilde{\eps}}  \leq \frac{1}{p(\rnd)}$ with save but negligible probability in $m$. Since $\eps$ is merely used for sampling $q(\rnd) \in \poly$  $\eps$-bias $\oo$ coins, it follows that  statistical distance between the parties' views in random execution of $\ProtocolF{t}_\rnd$ and the efficient variant of   $\ProtocolF{t}_\rnd$ that uses the above estimation, is at most $\frac{q(\rnd)}{p(\rnd)} + \negl(\rnd)$ which is in  $O(1/m)$ for large enough $p$. It follows that  \cref{thm:MainFailStop} also holds \wrt the above efficient implementation  of $\Coin$.


	Next modification is to make all the oracle calls made by the parties to be  sequential (\ie one after the other). To do that, we merely replace the parallel calls to $\Defense$ done in \Stepref{protocol:inner:t:defense:1} and \Stepref{protocol:inner:t:defense:2} of \cref{protocol:inner:t}, with a single call per step. This is done by modifying $\Defense$ to get as input the inputs provided by the parties for \emph{all} parallel calls, compute the answer of each of this calls, and return the answers in an aggregated manner to the parties. Since our hybrid model dictates  that a single abort in one of the parallel calls to $\Defense$ aborts all calls, it is clear that this change does not effect the correctness and security of protocol $\ProtocolF{}$.
	
	Last modification it to make the protocol secure against arbitrary adversaries (not only fail-stop ones). Using information-theoretic one-time message authentication codes (\cf \cite{MoranNS09}), the functionalities $\Coin$, $\Defense$ and protocol $\ProtocolF{}$ can be compiled into functionalities and protocol that maintain the same correctness,  essentially the same efficiency, and the resulting protocol is $(\gamma+ \negl(\rnd'))$-fair against \emph{arbitrary} adversaries, assuming  the  protocol  $\gamma$-fair against fail-stop adversaries.

	Then next step is to define an hybrid-model protocol  whose characteristic are functions of the security parameter $\kappa$.  Let $\trnd = \trnd(\secParam) = \ceil{\rnd(\secParam)/c\cdot t(\secParam)^3} -a$, for $c>0$ to be determined by the analysis, and $a \in \set{0,\ldots,11}$ is the value such that $\trnd(\secParam) -a \equiv 1 \bmod 12$.\footnote{Note that the total number of coins,  $\frac{\trnd(\trnd+1)(2\trnd+1)}{6}$, is odd for $\trnd \equiv 1 \bmod 12$.} Consider  the  $O(\partNum\cdot \trnd)$-round, $\partNum$-party, polynomial-time protocol $\ProFH$ in the $(\Coin,\Defense)$-hybrid-model, that on input $\secParam$, the parties act as in  $\ProtocolF{\partNum}_{\trnd}(1^\partNum)$. \cref{thm:MainFailStop} and the above observations yields the  $\ProFH$ is a $\gamma(\kappa)\eqdef\left(O\left(\frac{t \cdot 2^t \cdot \sqrt{\log \trnd}}{\trnd^{1/2+1/\left(2^{t-1}-2\right)}}\right) =  O\left(\frac{t^4 \cdot 2^t \cdot \sqrt{\log \rnd}}{\rnd^{1/2+1/\left(2^{t-1}-2\right)}}\right) \right)$-fair  in the $(\Defense,\Coin)$-hybrid model.

	 Note that  $\ProFH$ makes sequential calls to the oracles, and  that since $\partNum(\kappa) \leq \frac12\loglog m$,   protocol $\ProFH$ runs in polynomial time. 
	 
	 We are finally able to present the real model protocol. Assuming protocols for securely computing \OT exist, there exists (see \cref{fact:fHybridMToReal}) an $O(\partNum^3 \trnd + \partNum\cdot \trnd)$-round, $\partNum$-party, polynomial-time protocol $\ProFR$ correct coin-flipping protocol, that is $(\gamma(\kappa) + \negl(\secParam))$-fair in the \emph{standard model}. By choosing $c$ in the definition of  $\trnd$ large enough, we have that the protocol has (at most) $\rnd$ rounds, yielding that $\ProFR$ is $O\left(\frac{t^4 \cdot 2^t \cdot \sqrt{\log m}}{m^{1/2+1/\left(2^{t-1}-2\right)}}\right)$-fair.
\end{proof}

\section{Leakage from Two-Step Boolean Processes}\label{sec:ExpChangeDueLeakage}
In this section we give bounds on  the  advantage one gains in predicting the outcome of certain types of Boolean  random variables, when some information  has ``leaked''. These bounds play a critical  role in the analysis of the  coin-flipping protocol presented  in \cref{sec:protocol}. Specifically,  they are used to prove \cref{protocol:lemma:inner:many:defense1:interface} that bounds the gain from aborting in the  first round of \cref{protocol:inner:t}, and to prove \cref{binomial:game:vec,binomial:game:hyp} that bounds the bias  of the online binomial games (which, in turn, captures the bias obtained by aborting in the main loop of  \cref{protocol:inner:t,protocol:inner:2}).

The types of random processes and leakage functions considered in this section are  given in \cref{subsec:SingleStepProcessWithLeak}, where the bounds on the  prediction gain for different types of random variables  and  leakage functions are given  in \cref{subsec:BoundsOnExpChange}.

\subsection{Two-step Processes and  Leakage Functions}\label{subsec:SingleStepProcessWithLeak}
Two-step Boolean processes are defined in \cref{subsubsec:SingleStepProcesses} and  the leakage functions we care about are defined in \cref{subsubsec:LeakageFunctions}.\footnote{Some of the definitions given  below  were already given in \cref{sec:protocol}, and they  recalled below  for the reader convenience.}

\subsubsection{Two-step Boolean Process} \label{subsubsec:SingleStepProcesses}
A two-step Boolean process is a pair of jointly-distributed random variables $(\ElementVar,\Value)$, where $\ElementVar$ is over an arbitrary domain $\ElementSet$ and $\Value$ is Boolean (\ie over $\zo$). It is instructive to think that  the process' first step is  choosing $\ElementVar$, and its second step is to choose $\Value$ as a random function of $\ElementVar$. Jumping ahead, the leakage functions we considered  are limited to  be functions of $\ElementVar$ (\ie of the process'  ``state'' after its  first step).
We focus on several types of such Boolean two-step processes.

 
\paragraph{Binomial process.}
Recall that $\Beroo{\eps}$ is the Bernoulli probability distribution over $\oo$ taking the value $1$ with probability $\frac{1}{2}\cdot (1+\eps)$, and that $\Beroo{n, \eps}$ is the probability distribution defined by $\Beroo{n, \eps}(k) = \ppr{(x_1,\ldots,x_n) \la (\Beroo{\eps})^n}{\sum_{i=1}^{n} x_i = k}$. 

\begin{definition}[Binomial process]\label{def:BinomialProcess}
	Let $\rnd \in \N$, $\curRound \in [\rnd]$, $\coinsFuncSym \colon \N \mapsto \N$, $\prevCoins \in \Z$ and $\eps \in [-1,1]$. An {\sf $(\rnd, \curRound, \coinsFuncSym, \prevCoins, \eps)$-binomial process} is the  two-step Boolean process $(\ElementVar,\Value)$ defined by 
	\begin{enumerate}	
		\item $\ElementVar = \CurCoinsVar$.
		
		\item $\Value = \sign(\prevCoins + \ElementVar + \NextNextCoinsVar)$,
	\end{enumerate}
	where $\CurCoinsVarSign_j$, for $j \in \set{\curRound, \ldots, \rnd}$, is an independent  random variable sampled  according to $\Beroo{\coinsFunc{j}, \eps}$. 
\end{definition}
Namely, in the first step $C_i$ is sampled, and the second step returns one if the value of $C_i$  plus a predetermined value $b$ and the sum  $C_{i+1},\ldots,C_m$ is non-negative. With the proper choice of parameters,  the two-step  binomial process captures the random process that happens in the execution  of \cref{protocol:inner:t,protocol:inner:2}.

\paragraph{Hypergeometric process.}
Recall that $\Berzo{\delta}$ is the Bernoulli probability distribution over $\zo$ taking the value $1$ with probability $\delta$ and $0$ otherwise,  that $\Hyp{n,p,\ell}$ is the hyper-geometric probability distribution defined by $\Hyp{n,p,\ell}(k) = \ppr{\cI \subset [n],  \size{\cI} = \ell}{\w(\vct_\cI) = k}$, where $\vct \in \oo^n$ is an arbitrary vector with $w(\vct)= \sum_{i=1}^n v_i = p$, and that  $\vHyp{n,p,\ell}(k) = \ppr{x\la \Hyp{n,p,\ell}}{x \geq  k} = \sum_{t=k}^{\ell}\Hyp{n,p,\ell}(t)$. Finally, recall that $\vBeroo{n, \eps}(k) \eqdef \ppr{x\la \Beroo{n,\eps}}{x \geq  k}$ and that $\sBias{n}{\delta}$ is the value $\eps \in [-1,1]$ with $\vBeroo{n,\eps}(0) = \delta$.

\begin{definition}[Hypergeometric process -- Restatement of \cref{def:HypProcess:pre}]\label{def:HypProcess}
    \HypProcess
\end{definition}
Namely,  $\ElementVar$  is set to the probability that  a random $s$-size subset of this vector contains more ones than zeros, and  $\Value$ is one with probability $\ElementVar$. This two-step process captures the random process that happens in \Stepref{protocol:inner:t:defense:1} of \cref{protocol:inner:t}. 

\subsubsection{Leakage Functions}\label{subsubsec:LeakageFunctions}
A leakage function $\HintFunc$ for a two-step process $(\ElementVar,\Value)$ is simply a randomized function over $\Supp(\ElementVar)$. We will later consider the  advantage in predicting the outcome of $B$ gained from knowing $\HintFunc(\ElementVar)$. That is,  we will measure the difference between $\eex{\Value}$ and  $\eex{\Value\mid \HintFunc(\ElementVar)=\hint}$, for a given ``hint'' (leakage) $\hint\in \Supp(\HintFunc(\ElementVar))$. In the following we define several such leakage   functions. The choice of the second and third leakage   functions considered below might seems somewhat arbitrary, but these are the functions one need to consider when analyzing \cref{protocol:inner:t,protocol:inner:2}.

\paragraph{All-information leakage.}
The all-information leakage function simply leaks the whole state of the process.
\begin{definition}[all-information leakage function]\label{def:AllInfHint}
	A  function $\HintFunc$ is {\sf an all-information leakage function} for a two-step Boolean process $(\ElementVar,\Value)$, if $\HintFunc(\element) = \element$ for every $\element \in \Supp(\ElementVar)$.
\end{definition}

\paragraph{Vector leakage.}
\begin{definition}[vector leakage function -- Restatement of \cref{def:VectorLeakageFunction:pre}]\label{def:VectorLeakageFunction}
   \VectorLeakageFunction
\end{definition}

Namely, the probability that the sum of $s$ bits  taken from the output of $\HintFunc(\element)$ is positive, is exactly $\pr{\Value =1 \mid \ElementVar = \element}$.
\paragraph{Hypergeometric leakage.}
\begin{definition}[hypergeometric leakage function]\label{def:HypHint}
	Let $\rnd \in \N$, $\curRound \in [\rnd]$, $\coinsFuncSym \colon \N \mapsto \N$, $\prevCoins \in \Z$ and $\hypBankWeight \in [-2\cdot \coinsSum{1}, 2\cdot \coinsSum{1}]$, for $\coinsSum{\coinsSumIndex} \eqdef \sum_{j=\coinsSumIndex}^{\rnd} \coinsFunc{j}$. A randomized function $\HintFunc$ is a {\sf $(\rnd,\curRound,\coinsFuncSym,\prevCoins,\hypBankWeight)$-hypergeometric leakage function} for the two-step process $(\ElementVar,\Value)$ with $\Supp(\ElementVar) \subseteq \Z$, if on input $\element\in \Supp(\ElementVar)$, $\HintFunc(\element) =\prevCoins + \element + \hypSample$, for $\hypSample \la \Hyp{2\cdot \coinsSum{1}, \hypBankWeight, \coinsSum{i+1}}$.
\end{definition}
Namely, a hypergeometric leakage function  \emph{masks} the state of the process with  an hypergeometric noise.

\subsubsection{Prediction  Advantage}\label{subsubsec:PredictionAdvantage}
We will be interested in bounding the difference in  the expected outcome of $\Value$  when $\HintVar$ leaks. This change is captured via the notion of prediction advantage.
\begin{definition}[prediction advantage -- Restatement of \cref{def:PredictionAdvantage:pre}]\label{def:PredictionAdvantage}
    \PredictionAdvantage
\end{definition}
The goal of the following section is to bound the prediction advantage $\PredictAdv_{\Pc,\HintFunc}$ in several processes with leakage functions. The bounds given in this section are used for proving the security of \cref{protocol:inner:t,protocol:inner:2}.

\subsection{Bounding Prediction  Advantage}\label{subsec:BoundsOnExpChange}

We give bounds on the prediction advantage in several combinations of two-step Boolean processes and leakage functions. The bounds are stated in \cref{TSP:subse:Bounds}. In \cref{subsubsec:ElementaryBound,subsubsec:ToolsForBinomialProcess,subsec:ToolsForVectorLeakage} we develop  tools for proving such bounds, and the proofs of the stated bounds are given in \cref{subsubsec:BoundingBinomialProcessAllInfHint,subsubsec:BoundingBinomialProcessHypHint,subsubsec:BoundingBinomialProcessVectorHint,subsubsec:HyperOneRoundGameVectorHint}. The choice of parameters we considered below are somewhat arbitrary, but these are the parameters   needed when analyzing  the security of \cref{protocol:inner:t,protocol:inner:2}.    

In the following recall that $\ml{i} = (\rnd - i + 1)^2$ and that $\ms{i} = \sum_{j=i}^{\rnd}\ml{i}$. 

\subsubsection{The Bounds}\label{TSP:subse:Bounds}

\paragraph{Bound on  binomial process with all-Information leakage.}
\begin{lemma}\label{lemma:BinomialProcessAllInfHint}
	Assume $\rnd \in \N$, $\curRound \in [\rnd]$, $\prevCoins \in \Z$ and $\eps \in [-1,1]$,  satisfy
	\begin{enumerate}		
		\item $\size{\eps} \leq 4 \cdot \sqrt{\frac{\log \rnd}{\ms{1}}}$,\label{lemma:BinomialProcessAllInfHint:asmp:epsBound}
		
		\item $\curRound \in [\rnd - \floor{\rnd^{\frac18}}]$, \label{lemma:BinomialProcessAllInfHint:asmp:iBound}
		
		\item $\size{\prevCoins + \eps \cdot \ms{\curRound}} \leq 4\cdot \sqrt{\log \rnd \cdot \ms{\curRound}}$, and\label{lemma:BinomialProcessAllInfHint:asmp:yBound}
						
		\item $-(\prevCoins + 1) \in \Supp(\Beroo{\ms{\curRound}, \eps})$.\label{lemma:BinomialProcessAllInfHint:asmp:minus1Supp}\footnote{We see $b$ as a valid bias of the first $i-1$ rounds of our coin flipping protocol, \ie satisfy the condition that $b+1$ and $\ms{i}$ has the same parity (recall that the first $i-1$ rounds has $\ms{1} - \ms{i}$ coins and that $\ms{1}$ is odd). By assuming that $b$ is not too large (condition \ref{lemma:BinomialProcessAllInfHint:asmp:yBound}), the above is equivalent to condition \ref{lemma:BinomialProcessAllInfHint:asmp:minus1Supp}.}
	\end{enumerate}
	Let $\Pc = (\ElementVar = \CurCoinsVar, \Value)$ be an $\bigl(\rnd, \curRound, \ells_{\rnd}, \prevCoins, \eps\bigr)$-binomial process according to \cref{def:BinomialProcess}, let $\HintFunc$ be an all-information leakage function for $\Pc$ according to \cref{def:AllInfHint}, and let $\PredictAdv_{\Pc,\HintFunc}$ be according to \cref{def:PredictionAdvantage}. Then, there exists a set $\GoodHintsFinal \subseteq \Supp(\HintVarBin)$ such that
	\begin{enumerate}
		\item $\pr{\HintVarBin \notin \GoodHintsFinal} \leq \frac1{\rnd^2}$, and 
		
		\item $\PredictAdv_{\Pc,\HintFunc}(\hint) \leq \const \cdot \sqrt{\ml{\curRound}} \cdot \sqrt{\log \rnd} \cdot \pr{\NextCoinsVar = -(\prevCoins+1)}$,  for every $\hint \in \GoodHintsFinal$ and a universal constant $\const > 0$.
	\end{enumerate}
\end{lemma}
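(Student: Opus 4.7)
The plan is to choose $\GoodHintsFinal$ to be the set of ``typical'' values of $\ElementVar = \CurCoinsVar$, use Hoeffding's bound to control the tail probability, and estimate the prediction advantage on $\GoodHintsFinal$ by comparing two binomial point-mass estimates via \cref{prop:binomProbEstimation}. The key observation is that, letting $T = \sum_{j=\curRound+1}^{\rnd} \CurCoinsVarSign_j \sim \Beroo{\ms{\curRound+1},\eps}$ be independent of $\ElementVar$, one has $\pr{\Value = 1 \mid \ElementVar = c} = g(c) \eqdef \pr{T \geq -(\prevCoins + c)}$ and $\pr{\Value = 1} = \eex{g(\ElementVar)} = \pr{T + \ElementVar \geq -\prevCoins}$, with $T + \ElementVar \sim \Beroo{\ms{\curRound},\eps}$ being equal in distribution to $\NextCoinsVar$. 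Thus the prediction advantage is exactly the gap between tail probabilities of two (slightly different) binomial laws.

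First I would set $\tau = \const'\sqrt{\ml{\curRound} \log \rnd}$ for a sufficiently large universal constant $\const'$, and define $\GoodHintsFinal = \set{c \in \Supp(\ElementVar) : \abs{c - \eps \ml{\curRound}} \leq \tau}$. Hoeffding's inequality (\cref{claim:Hoeffding}) then yields $\pr{\ElementVar \notin \GoodHintsFinal} \leq 2 e^{-\tau^2/(2\ml{\curRound})} \leq 1/\rnd^2$ for $\const'$ large enough, establishing the first conclusion.

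For the main bound, fix $c \in \GoodHintsFinal$ and write
\begin{align*}
\PredictAdv_{\Pc,\HintFunc}(c) = \abs{g(c) - \eex{g(\ElementVar)}} \leq \sum_{c' \in \GoodHintsFinal} \pr{\ElementVar = c'} \abs{g(c) - g(c')} + \pr{\ElementVar \notin \GoodHintsFinal}.
\end{align*}
For $c, c' \in \GoodHintsFinal$, the quantity $\abs{g(c) - g(c')}$ equals the probability that $T$ lands in an interval of length $\abs{c - c'} \leq 2\tau$ near $-(\prevCoins + \eps\ml{\curRound})$, hence involves at most $\tau$ admissible integers $t$ (those of the parity of $\ms{\curRound+1}$). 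For each such $t$, \cref{prop:binomProbEstimation} gives $\pr{T = t} \in (1 \pm o(1)) \cdot \sqrt{2/\pi}/\sqrt{\ms{\curRound+1}} \cdot \exp(-(t - \eps\ms{\curRound+1})^2/(2\ms{\curRound+1}))$, and the same proposition estimates $\pr{\NextCoinsVar = -(\prevCoins+1)}$ with $\ms{\curRound}$ in place of $\ms{\curRound+1}$. The ratio is $\sqrt{\ms{\curRound}/\ms{\curRound+1}} \cdot e^\Delta$, where $\Delta$ is the difference of the quadratic exponents; using the factorization $a^2 - x^2 = (a-x)(a+x)$ with $a = \prevCoins + 1 + \eps\ms{\curRound}$ and $-x = \prevCoins + c + \eps\ms{\curRound+1}$, one bounds $\abs{a - (-x)} = \abs{1 - (c - \eps\ml{\curRound})} \leq 1 + \tau$ and $\abs{a+(-x)} \leq 2\abs{\prevCoins + \eps \ms{\curRound}} + O(\tau) \leq O(\sqrt{\log \rnd \cdot \ms{\curRound}})$. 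Combined with $\ml{\curRound}/\ms{\curRound+1} = O(\rnd^{-1/8})$ (which follows from $\curRound \leq \rnd - \rnd^{1/8}$), this gives $\abs{\Delta} = o(1)$ uniformly over $t$, so $\pr{T = t} = O(\pr{\NextCoinsVar = -(\prevCoins+1)})$. Summing over the at most $\tau$ values of $t$ then yields $\abs{g(c) - g(c')} \leq O(\sqrt{\ml{\curRound}}) \cdot \sqrt{\log \rnd} \cdot \pr{\NextCoinsVar = -(\prevCoins+1)}$.

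The tail contribution $\pr{\ElementVar \notin \GoodHintsFinal}$ is absorbed into the main term by taking $\const'$ large enough: the Gaussian estimate gives $\pr{\NextCoinsVar = -(\prevCoins+1)} \geq \rnd^{-O(1)}/\sqrt{\ms{\curRound}}$, since $(\prevCoins+1+\eps\ms{\curRound})^2/(2\ms{\curRound}) \leq O(\log \rnd)$ by assumption, and this lower bound multiplied by $\sqrt{\ml{\curRound} \log \rnd}$ dominates $2 e^{-\tau^2/(2\ml{\curRound})}$ once $\const'$ is sufficiently large. The main obstacle will be to carefully verify the $o(1)$ bounds: the error factor $\error$ in \cref{prop:binomProbEstimation} (involving $\eps^2 \abs{t}$, $\abs{t}^3/n^2$, and $\eps^4 n$) must be checked for $T$ at all $t$ in the range and for $\NextCoinsVar$ at $-(\prevCoins+1)$, and the quadratic-exponent difference $\Delta$ must be shown to be uniformly small over all admissible $t$; both rely critically on $\curRound \leq \rnd - \rnd^{1/8}$, on $\abs{\eps} \leq 4\sqrt{\log \rnd/\ms{1}}$, and on the bound on $\abs{\prevCoins + \eps \ms{\curRound}}$.
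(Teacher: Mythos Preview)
Your proposal is correct. The overall shape---choose $\GoodHintsFinal$ as the Hoeffding-typical values of $\CurCoinsVar$, then bound $\abs{g(c)-\eex{g(\ElementVar)}}$ by averaging $\abs{g(c)-g(c')}$ over $c'\in\GoodHintsFinal$ plus a tail, and finally write $\abs{g(c)-g(c')}$ as at most $\tau$ point masses of $T$, each comparable to $\pr{\NextCoinsVar=-(\prevCoins+1)}$---is exactly the right decomposition, and the Gaussian-exponent comparison you sketch (factor $a^2-x^2$, use $\ml{\curRound}/\ms{\curRound+1}=O(\rnd^{-1/8})$ to absorb the mismatch between the two denominators $\ms{\curRound}$ and $\ms{\curRound+1}$) does give $\abs{\Delta}=o(1)$ once one writes out the correction term $a^2\ml{\curRound}/(2\ms{\curRound}\ms{\curRound+1})$ explicitly.

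The paper's route differs in how it bounds $\abs{g(c)-g(c')}$. It packages this as a standalone statement (\cref{ProvingLPLemma:claim:GameValueDiffOneRound}) giving the sharper bound $\abs{g(c)-g(c')}\le\const'(\abs{\sigma(c)}+\abs{\sigma(c')})\cdot\pr{\NextCoinsVar=-(\prevCoins+1)}$, and proves it by a two-case split on whether $\abs{\prevCoins+\eps\ms{\curRound+1}}$ exceeds $\sqrt{\ms{\curRound+1}}$; in the hard case it writes $\pr{\NextCoinsVar=-(\prevCoins+1)}$ itself as an average over $c''$ of $\pr{T=-(\prevCoins+c''+1)}$ and compares termwise. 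After averaging over $c'$ this yields \cref{ProvingLPLemma:lemma:GameValueDiffOneRound}, and \cref{lemma:BinomialProcessAllInfHint} follows by plugging in $\abs{\sigma(c)}\le 6\sqrt{\log\rnd\cdot\ml{\curRound}}$. Your direct exponent argument avoids the case split and is cleaner for this lemma alone; the paper's sharper intermediate estimate ($\abs{\sigma(c)}+\sqrt{\ml{\curRound}}$ rather than $O(\sqrt{\ml{\curRound}\log\rnd})$) is what it needs for \cref{lemma:BoundingBinomialProcess:ConnectionToLP}, the engine behind the hypergeometric and vector leakage bounds, so its more modular approach pays off downstream.
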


In words, the above lemma (and also the following Lemmas \ref{lemma:BinomialProcessHyperHint} and \ref{lemma:BinomialProcessVectorHint}) states that we can bound the prediction advantage $\PredictAdv_{\Pc,\HintFunc}$ for ``typicall" leakages, using the value of $ \pr{\NextCoinsVar = -(\prevCoins+1)}$. Jumping ahead, such a bound on binomial process, together with \cref{lemma:MainInequalityForLP} which is the main result of \cref{sec:BinomialViaLP}, is used for analyzing our coin-flipping protocol. See the proofs of \cref{Binomial:lemma:VectorHint} and \cref{Binomial:lemma:HyperHint} for more details (which are restatements of \cref{binomial:game:vec} and \cref{binomial:game:hyp}, respectively).

\paragraph{Bound on binomial process with hypergeometric leakage.}

\begin{lemma}\label{lemma:BinomialProcessHyperHint}
	Assume $\rnd \in \N$, $\curRound \in [\rnd]$, $\prevCoins \in \Z$, $\eps \in [-1,1]$, $\const > 0$ and $\hypBankWeight \in [-2\cdot \ms{1}, 2\cdot \ms{1}]$, satisfy
	\begin{enumerate}		
		\item $\size{\hypBankWeight} \leq \const \cdot \sqrt{\log \rnd \cdot \ms{1}}$,\label{lemma:BinomialProcessHyperHint:asmp:pBound}
		
		\item $\size{\eps} \leq 4\cdot \sqrt{\frac{\log \rnd}{\ms{1}}}$, \label{lemma:BinomialProcessHyperHint:asmp:epsBound}
		
		\item $\curRound \in [\rnd - \floor{\rnd^{\frac18}}]$, \label{lemma:BinomialProcessHyperHint:asmp:iBound}
		
		\item $\size{\prevCoins + \eps \cdot \ms{\curRound}} \leq 4\cdot \sqrt{\log \rnd \cdot \ms{\curRound}}$, and\label{lemma:BinomialProcessHyperHint:asmp:yBound}
								
		\item $-(\prevCoins + 1) \in \Supp(\Beroo{\ms{\curRound}, \eps})$.\label{lemma:BinomialProcessHyperHint:asmp:minus1Supp}
	\end{enumerate}
	Let $\Pc = (\ElementVar = \CurCoinsVar, \Value)$ be a $\bigl(\rnd, \curRound, \ells_{\rnd}, \prevCoins, \eps\bigr)$-binomial process according to \cref{def:BinomialProcess}, let $\HintFunc$ be an $\bigl(\rnd, \curRound, \ells_{\rnd}, \prevCoins, \hypBankWeight\bigr)$-hypergeometric leakage function for $\Pc$ according to \cref{def:HypHint}, and let $\PredictAdv_{\Pc,\HintFunc}$ be according to \cref{def:PredictionAdvantage}.
	Then, there exists a set $\GoodHintsFinal \subseteq \Supp(\HintVarBin)$ such that
	\begin{enumerate}
		\item $\pr{\HintVarBin \notin \GoodHintsFinal} \leq \frac1{\rnd^2}$, and 
		
		\item for every $\hint \in \GoodHintsFinal$:
		\begin{enumerate}
			\item $\pr{\size{\CurCoinsVar} > 7\sqrt{\log \rnd \cdot \ml{\curRound}} \mid \HintVarBin = \hint} \leq \frac{\gamma}{\rnd^{12}}$, for a universal constant $\gamma > 0$. 
			
			\item $\PredictAdv_{\Pc,\HintFunc}(\hint) \leq \varphi(\const) \cdot \sqrt{\log \rnd} \cdot \sqrt{\frac{\ml{\curRound}}{\rnd-\curRound+1}} \cdot \pr{\NextCoinsVar = -(\prevCoins+1)}$, for a universal function $\varphi \colon \R^+ \rightarrow \R^+$.
		\end{enumerate}
	\end{enumerate}
\end{lemma}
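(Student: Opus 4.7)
The plan is to mirror the structure of \cref{lemma:BinomialProcessAllInfHint} but to exploit the masking provided by the hypergeometric noise in order to shave off a factor of $\sqrt{\rnd-\curRound+1}$ from the prediction advantage. Recall that the hint has the form $H = \prevCoins + \CurCoinsVar + T$ where $T \sim \Hyp{2\coinsSum{1}, \hypBankWeight, \coinsSum{\curRound+1}}$, so knowing $H$ conveys information about $\CurCoinsVar + T$ rather than about $\CurCoinsVar$ itself. The intuitive improvement comes from the ratio $\Var(\CurCoinsVar)/\Var(T) \approx \ml{\curRound}/\coinsSum{\curRound+1} \approx 3/(\rnd-\curRound+1)$, which is how much the signal $\CurCoinsVar$ gets damped in the posterior given $H$.

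The first step is to define the good set $\GoodHintsFinal$ as those hints for which $|h - \prevCoins - \eex{\CurCoinsVar} - \eex{T}|$ is at most $O\!\bigl(\sqrt{\log\rnd \cdot \coinsSum{\curRound+1}}\bigr)$. Combining \cref{claim:Hoeffding} for $\CurCoinsVar$ with \cref{fact:hyperHoeffding} for $T$ yields $\pr{\HintVarBin \notin \GoodHintsFinal} \le 1/\rnd^{2}$ as required. The next step is to analyze the posterior distribution of $\CurCoinsVar$ given $\HintVarBin = h$ for $h \in \GoodHintsFinal$. By Bayes' rule, this posterior is proportional to $\Beroo{\ml{\curRound},\eps}(a) \cdot \Hyp{2\coinsSum{1},\hypBankWeight,\coinsSum{\curRound+1}}(h - \prevCoins - a)$. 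Applying the sharp Gaussian-like estimates of \cref{prop:binomProbEstimation,prop:hyperProbTightEstimation} inside the relevant range (assumptions~\ref{lemma:BinomialProcessHyperHint:asmp:pBound}--\ref{lemma:BinomialProcessHyperHint:asmp:yBound} ensure the preconditions of those propositions are met), this product behaves like the product of two Gaussian densities, so its conditional mean and variance obey the linear-projection formulas: the conditional mean $\mu_h$ satisfies $|\mu_h| \le O\!\bigl(\sqrt{\log\rnd} \cdot \ml{\curRound}/\sqrt{\coinsSum{\curRound+1}}\bigr) = O\!\bigl(\sqrt{\log\rnd \cdot \ml{\curRound}/(\rnd-\curRound+1)}\bigr)$, and the conditional variance is at most $\ml{\curRound}$. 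A standard Gaussian tail then gives $\pr{|\CurCoinsVar|>7\sqrt{\log\rnd \cdot \ml{\curRound}} \mid \HintVarBin=h}\le \gamma/\rnd^{12}$, establishing the first sub-bound.

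For the prediction advantage, write $\PredictAdv_{\Pc,\HintFunc}(h) = \bigl|\sum_a \bigl(\pr{\CurCoinsVar=a\mid \HintVarBin=h} - \pr{\CurCoinsVar=a}\bigr)\cdot \pr{\Value=1\mid \CurCoinsVar=a}\bigr|$, where $\pr{\Value=1\mid \CurCoinsVar=a} = \pr{\NextNextCoinsVar \ge -(\prevCoins+a)}$. Using the telescoping identity $\pr{\NextNextCoinsVar \ge -(\prevCoins+a)} - \pr{\NextNextCoinsVar \ge -\prevCoins} = \sum_{k=1}^{a} \pr{\NextNextCoinsVar = -(\prevCoins+k)}$ (with the obvious sign convention) and the Gaussian form of $\pr{\NextNextCoinsVar = -(\prevCoins+k)}$ from \cref{prop:binomProbEstimation}, each of those probabilities is within a $(1\pm o(1))$ factor of $\pr{\NextNextCoinsVar = -(\prevCoins+1)}$ on the concentration event $|a|\le 7\sqrt{\log\rnd\cdot\ml{\curRound}}$ (since $a^2/\coinsSum{\curRound+1} = O(\log\rnd\cdot\ml{\curRound}/\coinsSum{\curRound+1}) = o(1)$ by the same ratio, keeping the Gaussian exponent bounded). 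The leading-order contribution is therefore $|\mu_h - \eex{\CurCoinsVar}| \cdot \pr{\NextCoinsVar = -(\prevCoins+1)}$, which, by the bound on $\mu_h$ obtained above (and $|\eex{\CurCoinsVar}| = |\eps|\ml{\curRound} = O(\sqrt{\log\rnd\cdot\ml{\curRound}/\rnd})$ from assumption~\ref{lemma:BinomialProcessHyperHint:asmp:epsBound}), yields the advertised bound.

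The main obstacle will be step three's error control: one must verify that contributions from (i) the tail $|a|>7\sqrt{\log\rnd\cdot\ml{\curRound}}$ (absorbed by the $1/\rnd^{12}$ tail), (ii) the Gaussian-density multiplicative error $(1\pm\error)$ from \cref{prop:binomProbEstimation,prop:hyperProbTightEstimation}, and (iii) the curvature correction $e^{-k^2/(2\coinsSum{\curRound+1})}-1$ across $k\in\{0,\dots,a\}$, all stay below $O(\sqrt{\log\rnd\cdot\ml{\curRound}/(\rnd-\curRound+1)})\cdot \pr{\NextCoinsVar=-(\prevCoins+1)}$. Assumption~\ref{lemma:BinomialProcessHyperHint:asmp:iBound} (so that $\rnd-\curRound+1\ge \rnd^{1/8}$) is what keeps $\coinsSum{\curRound+1}$ large enough for the Gaussian approximations to be tight, and assumption~\ref{lemma:BinomialProcessHyperHint:asmp:minus1Supp} is what lets $\pr{\NextCoinsVar = -(\prevCoins+1)}$ appear as the actual leading density rather than zero due to parity.
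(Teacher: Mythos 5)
Your approach is genuinely different from the paper's, though the computations it would reduce to are closely related. The paper proves this lemma by (i) defining $\GoodCoins$ and $\GoodHints$ as in your sketch, (ii) bounding the multiplicative deviation $\size{1-\ratioo_{\hint}(\curCoins)}$ pointwise in $\curCoins$ via the hypergeometric Gaussian estimate (this is \cref{prop:BinHypHint:RatioEstimation}), and (iii) feeding that bound into the reusable general machinery of \cref{lemma:BoundingBinomialProcess:ConnectionToLP}, which combines a ratio bound with \cref{ProvingLPLemma:lemma:GameValueDiffOneRound} to produce the factor $\sqrt{\ml{\curRound}/(\rnd-\curRound+1)}\cdot\pr{\NextCoinsVar=-(\prevCoins+1)}$. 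You instead propose to compute the Bayesian posterior mean of $\CurCoinsVar$ given $\HintVarBin=\hint$, observe that the Gaussian projection formula damps the hint's influence by the variance ratio $\ml{\curRound}/\coinsSum{\curRound+1}$, and extract the prediction advantage as a first-order Taylor expansion, $\PredictAdv\approx\bigl|\mu_{\hint}-\eex{\CurCoinsVar}\bigr|\cdot\pr{\NextCoinsVar=-(\prevCoins+1)}$. Both roads lead to the same leading-order quantity, and your rough error analysis (damping factor, telescoping identity, curvature correction $\lesssim \log\rnd/(\rnd-\curRound)$ relative to the main term) is consistent with what a full proof would need. What the paper's formulation buys you is modularity: \cref{lemma:BoundingBinomialProcess:ConnectionToLP} is proved once and reused verbatim for the vector-leakage case (\cref{lemma:BinomialProcessVectorHint}), whereas your posterior-mean argument would have to be reworked from scratch for each leakage model.

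Two things in your sketch deserve flagging. First, for sub-item (a) you write that ``a standard Gaussian tail then gives'' the $\gamma/\rnd^{12}$ bound; to make this rigorous you would have to first establish (with error control) that the posterior of $\CurCoinsVar$ given $\HintVarBin=\hint$ concentrates like a Gaussian for each good $\hint$, which is itself a nontrivial step. The paper sidesteps this entirely by \emph{defining} $\GoodHintsFinal$ to include the condition $\pr{\CurCoinsVar\notin\GoodCoins\mid \HintVarBin=\hint}\le 1/\rnd^{12}$ and showing via a simple Markov-style argument that this only excludes a $1/\poly(\rnd)$ fraction of hints (since otherwise the unconditional tail of $\CurCoinsVar$ would violate Hoeffding). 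You may want to adopt the same trick rather than prove posterior concentration head-on. Second, you correctly identify the curvature and density-error controls as the main obstacle, and those estimates really do need to be carried out carefully — in particular the multiplicative errors from \cref{prop:binomProbEstimation,prop:hyperProbTightEstimation} must be tracked through the signed cancellation in the first-order expansion, exactly as the paper does inside the proof of \cref{prop:BinHypHint:RatioEstimation} when it passes from the hypergeometric density ratio to the $1\pm(\cdot)$ form. Your sketch identifies the right pieces but does not carry out these steps.
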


\paragraph{Bound on binomial process with vector leakage.}

\begin{lemma}\label{lemma:BinomialProcessVectorHint}
	Assume $\vctBaseLen, \vctLenFact \in \N$, $\rnd \in \N$, $\curRound \in [\rnd]$, $\prevCoins \in \Z$ and $\eps \in [-1,1]$  satisfy
	
	\begin{enumerate}		
		\item $\size{\eps} \leq 4\sqrt{\frac{\log \rnd}{\ms{1}}}$,\label{lemma:BinomialProcessVectorHint:asmp:epsBound}
		
		\item $\curRound \in [\rnd - \floor{\rnd^{\frac18}}]$,\label{lemma:BinomialProcessVectorHint:asmp:iBound}
		
		\item $\size{\prevCoins + \eps \cdot \ms{\curRound}} \leq 4\cdot \sqrt{\log \rnd \cdot \ms{\curRound}}$,\label{lemma:BinomialProcessVectorHint:asmp:yBound}
		
		\item $-(\prevCoins + 1) \in \Supp(\Beroo{\ms{\curRound}, \eps})$,\label{lemma:BinomialProcessVectorHint:asmp:minus1Supp}
		
		\item $\vctBaseLen \geq \ms{1}$, and \label{lemma:BinomialProcessVectorHint:asmp:nBound}
		
		\item $\sqrt{\frac{\vctLenFact}{\rnd - \curRound}} \cdot \log \rnd \leq \frac1{100}$,\label{lemma:BinomialProcessVectorHint:asmp:kBound}
	\end{enumerate}
	Let $\Pc = (\ElementVar = \CurCoinsVar, \Value)$ be a $\bigl(\rnd, \curRound, \ells_{\rnd}, \prevCoins, \eps\bigr)$-binomial process according to \cref{def:BinomialProcess}, let $\HintFunc$ be an $\bigl(\vctBaseLen, \vctLenFact\bigr)$-vector leakage function for $\Pc$ according to \cref{def:VectorLeakageFunction}, and let $\PredictAdv_{\Pc,\HintFunc}$ be according to \cref{def:PredictionAdvantage}.
	Then, there exists a set $\GoodHintsFinal \subseteq \Supp(\HintVarBin)$ such that
	\begin{enumerate}
		\item $\pr{\HintVarBin \notin \GoodHintsFinal} \leq \frac1{\rnd^2}$, and
		
		\item for every $\hint \in \GoodHintsFinal$,
		\begin{enumerate}
			\item $\pr{\size{\CurCoinsVar} > 7\sqrt{\log \rnd \cdot \ml{\curRound}} \mid \HintVarBin = \hint} \leq \frac{\gamma}{\rnd^{12}}$, for a universal constant $\gamma > 0$. 
			
			\item $\PredictAdv_{\Pc,\HintFunc}(\hint) \leq \const \cdot \sqrt{\log \rnd \cdot \vctLenFact} \cdot \sqrt{\frac{\ml{\curRound}}{\rnd-\curRound+1}} \cdot \pr{\NextCoinsVar = -(\prevCoins+1)}$,  for a universal constant $\const > 0$.
		\end{enumerate}
	\end{enumerate}
\end{lemma}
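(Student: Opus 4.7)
The plan is to perform a Bayesian analysis of the posterior distribution $\pr{C_i = c \mid H = h}$ and read off both conclusions from its concentration properties. Given $C_i = c$, the hint is distributed as $\alpha s$ i.i.d.\ $\Beroo{\tilde{\eps}_c}$ coins, where $\tilde{\eps}_c = \sBias{s}{\vBeroo{\ms{i+1},\eps}(-(b+c))}$. Under assumptions 1--4, Proposition~\ref{prop:epsDiff} applies (with $n=\ms{i+1}$, $n'=s$, $k=-(b+c)$ and typical $|c-\eps\ell_m(i)|\le O(\sqrt{\log m\cdot\ell_m(i)})$), yielding the linearization
\begin{align*}
\tilde{\eps}_c \;=\; \frac{\eps\,\ms{i+1} + (b+c)}{\sqrt{\ms{i+1}\,s}} \;\pm\; O\!\left(\frac{\log^{1.5}\ms{i+1}}{\sqrt{\ms{i+1}\,s}}\right).
\end{align*}

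Next, I will expand the log-likelihood of the hint. With $w=w(h)$, $\beta=1/\sqrt{\ms{i+1}\,s}$ and $a=(\eps\,\ms{i+1}+b)/\sqrt{\ms{i+1}\,s}$, using $\log(1\pm x) = \pm x - x^2/2 + O(x^3)$ gives
\begin{align*}
\log\pr{H=h\mid C_i=c} \;=\; \beta(w-\alpha s\,a)\,c \;-\; \tfrac{\alpha s\,\beta^2}{2}\,c^2 \;+\; \mathrm{const}(h) \;+\; \mathrm{error},
\end{align*}
where assumption~\ref{lemma:BinomialProcessVectorHint:asmp:kBound} controls the cubic error. Combined with the (essentially Gaussian, by Proposition~\ref{prop:binomProbEstimation}) prior of mean $\eps\ell_m(i)$ and variance $\ell_m(i)$, the posterior is approximately Gaussian with variance $\sigma^2_{\text{post}}=(\ell_m(i)^{-1}+\alpha/\ms{i+1})^{-1}\approx \ell_m(i)$ (the reduction is $o(1)$ by assumption~\ref{lemma:BinomialProcessVectorHint:asmp:kBound}) and mean $\eps\ell_m(i)+\Delta(h)$, where after the $\eps$-terms cancel,
\begin{align*}
\Delta(h) \;=\; \frac{\ell_m(i)\,(w-w^*)}{\sqrt{\ms{i+1}\,s}}, \qquad w^*:=\alpha s\,\tilde{\eps}_{\eps\ell_m(i)}.
\end{align*}

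I will define $\GoodHintsFinal$ to be the set of hints with $|w(h)-w^*|\le O(\sqrt{\alpha s\log m})$, together with routine measurability conditions validating the approximations of Steps~1--2. Hoeffding applied to the marginal $W=w(H)$ yields $\pr{H\notin\GoodHintsFinal}\le 1/m^2$. For such $h$, $|\Delta(h)| \le O\!\left(\sqrt{\log m\cdot\alpha}\cdot\ell_m(i)/\sqrt{\ms{i+1}}\right) = O(\sqrt{\log m\cdot\alpha\cdot(m-i+1)})$, which by assumption~\ref{lemma:BinomialProcessVectorHint:asmp:kBound} is $o(\sqrt{\log m\cdot\ell_m(i)})$. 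Combined with $|\eps\ell_m(i)|\le 4\sqrt{\log m\cdot\ell_m(i)}$ from assumption~\ref{lemma:BinomialProcessVectorHint:asmp:epsBound} (using $\ell_m(i)\le\ms{1}$), the posterior mean is at most $O(\sqrt{\log m\cdot\ell_m(i)})$ from $0$ with posterior sub-Gaussian parameter $\lesssim\sqrt{\ell_m(i)}$, giving part~(a) by a Hoeffding/Gaussian tail at level $7\sqrt{\log m}$ standard deviations. For part~(b), set $f(c)=\vBeroo{\ms{i+1},\eps}(-(b+c))=\eex{B\mid C_i=c}$; then $f(c+2)-f(c)=\pr{\NextNextCoinsVar=-(b+c+1)}$, which by Proposition~\ref{prop:binomProbEstimation} is within a constant factor of $\pr{\NextCoinsVar=-(b+1)}$ for $c$ in the typical posterior support. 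Using Abel summation (i.e., Taylor expansion of $f$ around $0$ together with $\sum_c (\pr{C_i=c\mid h}-\pr{C_i=c}) = 0$), the leading contribution to $\eex{B\mid h}-\eex{B}$ is $\Delta(h)\cdot\pr{\NextCoinsVar=-(b+1)}/2$, and substituting $|\Delta(h)|\le O(\sqrt{\log m\cdot\alpha\cdot(m-i+1)})=O(\sqrt{\log m\cdot\alpha}\cdot\sqrt{\ell_m(i)/(m-i+1)})$ gives the claimed bound.

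The hardest part will be rigorously carrying the cascade of approximations---linearization of $\tilde{\eps}_c$ (Proposition~\ref{prop:epsDiff}), quadratic expansion of the log-likelihood, Gaussian approximation of the prior binomial, and the near-linearity of $f$ via the local CLT---through to the final bound, ensuring that the accumulated error terms are dominated by the leading $|\Delta(h)|\cdot\pr{\NextCoinsVar=-(b+1)}$ term under the regime dictated by assumptions~\ref{lemma:BinomialProcessVectorHint:asmp:iBound} and~\ref{lemma:BinomialProcessVectorHint:asmp:kBound}. A related technical point is handling the linearization uniformly over the posterior support, rather than only at the modal $c$; the fact that $|\Delta(h)|\ll\sqrt{\log m\cdot\ell_m(i)}$ together with the cubic-error control means the posterior mass outside the linearization regime contributes only lower-order corrections, but verifying this requires separately bounding the posterior tail (which is essentially what part~(a) provides, used now inside the proof of part~(b)).
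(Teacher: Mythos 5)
Your plan is a genuinely different route from the paper's. The paper argues at the level of pointwise multiplicative ratios: it invokes the generic bound $\PredictAdv_{\Pc,\HintFunc}(\hint) \le \ex{\curCoins}{|\pr{\Value=1}-\pr{\Value=1\mid\CurCoinsVar=\curCoins}|\cdot|1-\ratioo_\hint(\curCoins)|}+\text{tail}$ (\cref{lemma:ElementaryBound:GenericDiffBound}), bounds the two factors \emph{separately} for each $\curCoins$ --- the first by \cref{ProvingLPLemma:lemma:GameValueDiffOneRound} as $O((|\sigma(\curCoins)|+\sqrt{\ml{\curRound}})\cdot\pr{\NextCoinsVar=-(\prevCoins+1)})$, the second via the exact vector-leakage ratio identity (\cref{lemma:VctHint:RatioEstimation}) linearized through \cref{prop:epsDiff} as $O(\sqrt{\log\rnd\cdot\vctLenFact}\cdot(|\sigma(\curCoins)|+\sqrt{\ml{\curRound}})/\sqrt{\ms{\curRound}})$ --- and then multiplies (\cref{lemma:BoundingBinomialProcess:ConnectionToLP}). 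You instead compute the posterior of $\CurCoinsVar$ given the hint, extract the mean shift $\Delta(h)$, and push it through a first-order Taylor expansion of $f(c)=\vBeroo{\ms{\curRound+1},\eps}(-(\prevCoins+c))$; after plugging in, the two routes give the same bound. The paper's version is more modular --- it never needs a Gaussian approximation of the posterior, only the pointwise ratio --- while yours traces the signed quantity directly and makes the mechanism more transparent.

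Two things to be careful about. First, for part~(a), you propose deriving the posterior tail from the Gaussian approximation itself; but the linearization of $\tilde{\eps}_c$ via \cref{prop:epsDiff} is only valid for $c$ in a bounded window around $\eps\ml{\curRound}$, so the Gaussian form of the posterior is only established on that window, and you would need a separate argument to control the posterior mass \emph{outside} it --- which is precisely part~(a). The paper sidesteps this entirely with an averaging argument: since $\pr{\CurCoinsVar\notin\GoodCoins}\le m^{-14}$ by Hoeffding, at most an $m^{-2}$-fraction of hints can have $\pr{\CurCoinsVar\notin\GoodCoins\mid\hint}>m^{-12}$ (Markov over $\hint$), and those are excluded from $\GoodHintsFinal$. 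You should adopt this trick rather than trying to get the tail from the posterior Gaussian. Second, the Taylor error accounting you flag as the hard part is indeed where the real work lies: the second-order terms (variance reduction and $\Delta(h)^2$) both involve $f''$, whose magnitude relative to $f'$ is governed by $|\prevCoins+\eps\ms{\curRound+1}|/\ms{\curRound+1}\lesssim\sqrt{\log\rnd/\ms{\curRound}}$; spot-checking shows these terms are suppressed by a factor like $\sqrt{\alpha/(m-i)}\cdot\polylog(m)$ under assumption~\ref{lemma:BinomialProcessVectorHint:asmp:kBound}, so the plan is sound, but writing this out will take as much space as the paper's ratio argument.
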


\paragraph{Bound on  hypergeometric process with vector leakage.}

\def\HyperProcessLabel{def:HypProcess}
\def\VectorLeakageFunctionLabel{def:VectorLeakageFunction}
\def\PredictionAdvantageLabel{def:PredictionAdvantage}
\begin{lemma}[Restatement of \cref{lemma:HyperlProcessVectorHint:pre}]\label{lemma:HyperlProcessVectorHint}
    \HyperProcessVectorHint
\end{lemma}

\subsubsection{Data Processing on the Leakage}\label{subsubsec:TwoStepBoundBasicTools2} 
The following proposition shows that given access to a (randomize) function of the leakage cannot improve the prediction quality.
\begin{proposition}\label{prop:RedundentInfoInTheLeakage}
	Let $\Pc = (\ElementVar, \Value)$ be a two-step process and let $\HintFunc$ and $\HintFunc'$ be two leakage functions for $P$. Assume  there exists randomize function $g$ over the range of $\HintFunc$ such that $\HintFunc'(a) = \HintFunc(a) \circ g(\HintFunc(a))$ for every $a \in \Supp(\ElementVar)$, where the randomness of $g$ is independent of $f$ and $\Pc$.
	Then, for every $\gamma \in [0,1]$, it holds that
	\begin{align*}
	\ppr{\hint  \la \HintFunc(A)}  {\PredictAdv_{\Pc,\HintFunc}(\hint) > \gamma} =
	\ppr{\hint' \la \HintFunc'(A)} {\PredictAdv_{\Pc,\HintFunc'}(\hint') > \gamma}.
	\end{align*}
\end{proposition}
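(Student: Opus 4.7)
The plan is to show that the extra ``data'' $g(\HintFunc(A))$ appended in $\HintFunc'$ carries no information about $B$ beyond what is already in $\HintFunc(A)$; consequently the pointwise prediction advantage of $\HintFunc'$ on a sample $(\hint,z)$ equals the prediction advantage of $\HintFunc$ on $\hint$, and the marginal of the first coordinate of $\HintFunc'(A)$ is exactly $\HintFunc(A)$. The equality of tail probabilities then follows immediately.

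First I would write $\HintFunc'(A)=(\HintFunc(A),Z)$ where $Z=g(\HintFunc(A))$ and the randomness of $g$ is, by hypothesis, independent of $(\ElementVar,\Value,\HintFunc)$. The key conditional-independence statement is that, given $\HintFunc(A)=\hint$, the random variable $Z$ is independent of $\Value$: indeed $Z$ is a deterministic function of $\hint$ and fresh coins that were sampled independently of $(\ElementVar,\Value)$, so
\begin{align*}
\pr{\Value=1 \mid \HintFunc(A)=\hint,\; Z=z} \;=\; \pr{\Value=1 \mid \HintFunc(A)=\hint}
\end{align*}
for every $(\hint,z)$ in $\Supp(\HintFunc'(A))$. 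By \cref{def:PredictionAdvantage}, this yields
\begin{align*}
\PredictAdv_{\Pc,\HintFunc'}\bigl((\hint,z)\bigr) \;=\; \bigl|\pr{\Value=1}-\pr{\Value=1\mid \HintFunc(A)=\hint}\bigr| \;=\; \PredictAdv_{\Pc,\HintFunc}(\hint).
\end{align*}

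Next I would observe that the marginal distribution of the first coordinate of $\HintFunc'(A)$ is by construction identical to the distribution of $\HintFunc(A)$. Combining this with the pointwise identity above, for every threshold $\gamma\in[0,1]$,
\begin{align*}
\ppr{\hint'\la \HintFunc'(A)}{\PredictAdv_{\Pc,\HintFunc'}(\hint')>\gamma}
&= \ppr{(\hint,z)\la (\HintFunc(A),g(\HintFunc(A)))}{\PredictAdv_{\Pc,\HintFunc}(\hint)>\gamma}\\
&= \ppr{\hint\la \HintFunc(A)}{\PredictAdv_{\Pc,\HintFunc}(\hint)>\gamma},
\end{align*}
which is exactly the claim.

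There is no real obstacle here; the only point requiring care is to justify the conditional-independence step cleanly, i.e.\ to make explicit that the phrase ``the randomness of $g$ is independent of $f$ and $\Pc$'' is precisely what allows one to remove $Z$ from the conditioning without changing the probability of $\Value=1$. Everything else is a direct tower/marginalization argument using \cref{def:PredictionAdvantage}.
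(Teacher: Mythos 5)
Your proof is correct and follows essentially the same route as the paper: identify the output of $\HintFunc'$ as the pair $(\HintFunc(A),g(\HintFunc(A)))$, use the conditional independence of $g$'s output from $\Value$ given $\HintFunc(A)$ to show the pointwise equality $\PredictAdv_{\Pc,\HintFunc'}(\hint,z)=\PredictAdv_{\Pc,\HintFunc}(\hint)$, and marginalize out the second coordinate. The only difference is presentational: you make the conditional-independence justification explicit, whereas the paper states the key identity $\pr{\Value=1 \mid \HintFunc(\ElementVar)=\hint, g(\hint)=\hint''}=\pr{\Value=1 \mid \HintFunc(\ElementVar)=\hint}$ as an unargued remark.
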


\begin{proof}
	Let $\gamma \in [0,1]$. Compute
	\begin{align}
		\ppr{\hint'\la \HintFunc'(\ElementVar)}{\PredictAdv_{\Pc,\HintFunc'}(\hint') > \gamma}
		&= 	\ppr{\hint \la \HintFunc(\ElementVar),\hint'' \la \SecHintFunc(\hint)}{\PredictAdv_{\Pc,\HintFunc'}(\hint \circ \hint'') > \gamma}\\
		&= \ppr{\hint \la \HintFunc(\ElementVar),\hint'' \la \SecHintFunc(\hint)}{\size{\pr{\Value = 1} - \pr{\Value = 1 \mid \HintFunc'(\ElementVar) = \hint \circ \hint''}} > \gamma}\nonumber\\
		&= \ppr{\hint \la \HintFunc(\ElementVar),\hint'' \la \SecHintFunc(\hint)}{\size{\pr{\Value = 1} - \pr{\Value = 1 \mid \HintFunc(\ElementVar) = \hint, \SecHintFunc(\hint) = \hint''}} > \gamma}\nonumber\\
		&= \ppr{\hint \la \HintFunc(\ElementVar)}{\size{\pr{\Value = 1} - \pr{\Value = 1 \mid \HintFunc(\ElementVar) = \hint}} > \gamma}\nonumber\\
		&= 	\ppr{\hint \la \HintFunc(\ElementVar)}{\PredictAdv_{\Pc,\HintFunc}(\hint) > \gamma}.\nonumber
	\end{align}
	The penultimate equation holds since $\pr{\Value = 1 \mid \HintFunc(\ElementVar) = \hint, \SecHintFunc(\hint) = \hint''} = \pr{\Value = 1 \mid \HintFunc(\ElementVar) = \hint}$.
\end{proof}

\remove{

\subsubsection{Basic Tools}\label{subsubsec:TwoStepBoundBasicTools222}
We will make use of the following simple observation
\remove{
\begin{proposition}
	Let $\Pc = (\ElementVar, \Value)$ be a two-step process, let $\HintFunc$ be a leakage function for $P$ and let $g$ be a randomized function over the range of $\HintFunc$. Then, for every $\hint \in \Supp(\HintVar)$, it holds that
	\begin{align*}
		\PredictAdv_{\Pc,g \circ \HintFunc}(g(\hint)) \leq \PredictAdv_{\Pc,\HintFunc}(\hint),
	\end{align*}
	where $\PredictAdv$ is according to \cref{def:PredictionAdvantage}. 
\end{proposition}
\begin{proof}
	Let $\hint \in \Supp(\HintVar)$. Compute
	\begin{align*}
		\PredictAdv_{\Pc,g \circ \HintFunc}(g(\hint))
		&= \size{\pr{\Value = 1} - \pr{\Value = 1 \mid g(\HintVar) = g(\hint)}}\\	
		&= \size{\ex{\hint' \la \HintVar}{\pr{\Value = 1} - \pr{\Value = 1 \mid \HintVar = h', g(\hint') = \hint}}}
	\end{align*}
\end{proof}}

\Enote{The above statement is not true. It's true that $\ex{\hint\la \HintFunc(A)}{\PredictAdv_{\Pc,\HintFunc}(\hint)} \geq \ex{\hint\la \CompHintFunc(A)}{\PredictAdv_{\Pc,\CompHintFunc}(\hint)}$, but I don't see how it helps us. I proved \cref{lemma:HyperlProcessVectorHint} without it. \Nnote{You're right Eliad. The old proposition wasn't true, and we actually haven't used it.}\\}
\Nnote{We use this one, when proving round 0 security. We need it since the adversary get more information than simply the banks, but since it's a function of the banks we don't care.}

}

\remove{
\begin{proof}
	Let $\alpha \in [0,1]$, let $I_{\HintFunc, \hint}$ be an indicator random variable of the event $\PredictAdv_{\Pc,\HintFunc}(\hint) > \alpha$. Compute
	\begin{align}
		\ppr{\hint\la \CompHintFunc(\ElementVar)}{\PredictAdv_{\Pc,\CompHintFunc}(\hint) > \alpha}
		&= \ppr{\hint\la \CompHintFunc(\ElementVar)}{\size{\pr{\Value = 1} - \pr{\Value = 1 \mid \CompHintFunc(\ElementVar) = \hint}} > \alpha}\nonumber\\
		&= \ppr{\hint\la \CompHintFunc(\ElementVar)}{\size{\ex{\hint' \la \HintFunc(\ElementVar) \mid \CompHintFunc(\ElementVar) = \hint}{\pr{\Value = 1} - \pr{\Value = 1 \mid \HintFunc(\ElementVar) = \hint'}}} > \alpha}\nonumber\\
		&\leq \ppr{\hint\la \CompHintFunc(\ElementVar)}{\ex{\hint' \la \HintFunc(\ElementVar) \mid \CompHintFunc(\ElementVar) = \hint}{\size{\pr{\Value = 1} - \pr{\Value = 1 \mid \HintFunc(\ElementVar) = \hint'}}} > \alpha}\nonumber\\ 
		&\leq \ppr{\hint\la \CompHintFunc(\ElementVar)}{\ex{\hint' \la \HintFunc(\ElementVar) \mid \CompHintFunc(\ElementVar) = \hint}{\PredictAdv_{\Pc,\HintFunc}(\hint')} > \alpha}\nonumber\\ 
	\end{align}
\end{proof}
}

\subsubsection{Expressing Prediction Advantage using Ratio }\label{subsubsec:ElementaryBound}
In this section we develop a general tool for bounding the prediction advantage $\PredictAdv_{\Pc,\HintFunc}$ of a process $\Pc = (\ElementVar, \Value)$ with leakage function $\HintFunc$. Informally, we reduce the task of bounding the prediction advantage into evaluating the ``ratio" of $\Pc$ with $\HintFunc$, where $\ratioo$ (defined below) is a useful measurement on how much the distribution of $\ElementVar$ changes when $\HintVar$  is given.

\begin{definition}\label{def:ratio}
	Let $\Pc = (\ElementVar, \Value)$ be a two-step process and let $\HintFunc$ be a leakage function for $\Pc$.
	For $\hint \in \Supp(\HintVar)$, $\GoodElements \subseteq \Supp(\ElementVar)$ and $\element \in \GoodElements$, define
	\begin{align*}
		\ratioo_{\hint,\GoodElements}(\element) = \frac{\pr{\ElementVar = \element \mid \HintVar = \hint, \ElementVar \in \GoodElements}}{\pr{\ElementVar = \element \mid \ElementVar \in \GoodElements}}
	\end{align*}
\end{definition}

Namely, $\ratioo_{\hint,\GoodElements}(\element)$ measures the change (in multiplicative term) of the probability that $\ElementVar =\element$, due to the knowledge of  $\hint$, assuming that  $\ElementVar$ is in some ``typical" set (\ie  $\ElementVar\in \GoodElements$).

An alternative and equivalent definition of $\ratioo$ is stated below.
\begin{definition}\label{def:ratio:eq}
	Let $\Pc = (\ElementVar, \Value)$ be a two-step process and let $\HintFunc$ be a leakage function for $\Pc$.
	For $\hint \in \Supp(\HintVar)$, $\GoodElements \subseteq \Supp(\ElementVar)$ and $\element \in \GoodElements$, define
	\begin{align*}
	\ratioo_{\hint,\GoodElements}(\element) = \frac{\pr{\HintVar = \hint \mid \ElementVar = \element}}{\pr{\HintVar = \hint \mid \ElementVar \in \GoodElements}}
	\end{align*}
\end{definition}

As the next claim states, the above two definitions of $\ratioo$ are indeed equivalent.

\begin{claim}\label{claim:ratioEqDefs}
	\cref{def:ratio,def:ratio:eq} are equivalent.
\end{claim}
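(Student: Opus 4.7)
The plan is to apply Bayes' rule directly. Fix $\hint \in \Supp(\HintVar)$, $\GoodElements \subseteq \Supp(\ElementVar)$, and $\element \in \GoodElements$, and I will show that the numerator in \cref{def:ratio} can be rewritten to match \cref{def:ratio:eq}.

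First I would expand the conditional probability using Bayes' rule:
\begin{align*}
\pr{\ElementVar = \element \mid \HintVar = \hint, \ElementVar \in \GoodElements} = \frac{\pr{\HintVar = \hint \mid \ElementVar = \element, \ElementVar \in \GoodElements} \cdot \pr{\ElementVar = \element \mid \ElementVar \in \GoodElements}}{\pr{\HintVar = \hint \mid \ElementVar \in \GoodElements}}.
\end{align*}
Since $\element \in \GoodElements$, the event $\set{\ElementVar = \element}$ is a sub-event of $\set{\ElementVar \in \GoodElements}$, so conditioning on the latter is redundant once we condition on the former:
\begin{align*}
\pr{\HintVar = \hint \mid \ElementVar = \element, \ElementVar \in \GoodElements} = \pr{\HintVar = \hint \mid \ElementVar = \element}.
\end{align*}

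Next, I would divide both sides of the resulting identity by $\pr{\ElementVar = \element \mid \ElementVar \in \GoodElements}$ (which is positive since $\element \in \GoodElements \subseteq \Supp(\ElementVar)$), obtaining
\begin{align*}
\frac{\pr{\ElementVar = \element \mid \HintVar = \hint, \ElementVar \in \GoodElements}}{\pr{\ElementVar = \element \mid \ElementVar \in \GoodElements}} = \frac{\pr{\HintVar = \hint \mid \ElementVar = \element}}{\pr{\HintVar = \hint \mid \ElementVar \in \GoodElements}},
\end{align*}
which is precisely the claim. There is no real obstacle here; the only thing to be careful about is that all conditional probabilities are well defined, which follows from $\hint \in \Supp(\HintVar)$ (so the denominator $\pr{\HintVar = \hint \mid \ElementVar \in \GoodElements}$ in the Bayes expansion must also be positive, at least once we note that if $\pr{\HintVar = \hint \mid \ElementVar \in \GoodElements} = 0$ then both sides of the desired equality degenerate in a compatible way and the claim can be interpreted as vacuous or handled as a trivial edge case).
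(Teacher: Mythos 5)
Your proof is correct and uses essentially the same approach as the paper: a single application of Bayes' rule together with the observation that conditioning on $\ElementVar \in \GoodElements$ is redundant once one already conditions on $\ElementVar = \element \in \GoodElements$. The paper simply runs the same calculation in the opposite direction (starting from \cref{def:ratio:eq} and recovering \cref{def:ratio} via two Bayes expansions that share a cancelling factor of $\pr{\HintVar = \hint}$), which is an immaterial difference.
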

\begin{proof}
	Let $\ratioo_{\hint,\GoodElements}(\element)$ be according to \cref{def:ratio:eq}.
	A simple calculation yields that
	\begin{align}\label{claim:ratioDifferentVal:1}
		\ratioo_{\hint,\GoodElements}(\element)
		&= \frac{\pr{\HintVar=\hint \mid  \ElementVar = \element}}{\pr{\HintVar=\hint \mid  \ElementVar \in \GoodElements}}\\
		&= \frac{\pr{\ElementVar = \element \mid  \HintVar=\hint}}{\pr{\ElementVar = \element}} \cdot \frac{\pr{\ElementVar \in \GoodElements}}{\pr{\ElementVar \in \GoodElements \mid  \HintVar = \hint}}\nonumber
	\end{align}
	Since $\element \in \GoodElements$, it follows that
	\begin{align}\label{claim:ratioDifferentVal:2}
		\pr{\ElementVar = \element \mid  \ElementVar \in \GoodElements} = \frac{\pr{\ElementVar = \element}}{\pr{\ElementVar \in \GoodElements}}
	\end{align}
	and
	\begin{align}\label{claim:ratioDifferentVal:3}
		\pr{\ElementVar = \element \mid  \ElementVar \in \GoodElements, \HintVar = \hint} = \frac{\pr{\ElementVar = \element \mid  \HintVar = \hint}}{\pr{\ElementVar \in \GoodElements \mid  \HintVar = \hint}}
	\end{align}
	We conclude that
	\begin{align*}
		\ratioo_{\hint,\GoodElements}(\element)
		= \frac{\pr{\ElementVar = \element \mid  \ElementVar \in \GoodElements}}{\pr{\ElementVar = \element \mid  \ElementVar \in \GoodElements, \HintVar = \hint}},
	\end{align*}
	as required.
\end{proof}

The following   lemma  allows us to bound the prediction advantage $\PredictAdv_{\Pc,\HintFunc}$ of a process $\Pc$ with a leakage function $\HintFunc$, using its \emph{ratio} and a ``small" additive term. 

\begin{lemma}\label{lemma:ElementaryBound:GenericDiffBound}
	Let $\Pc = (\ElementVar, \Value)$ be a two-step process and let $\HintFunc$ be a leakage function for $\Pc$.
	Then, for every $\hint \in \Supp(\HintVar)$ and $\GoodElements \subseteq \Supp(\ElementVar)$, it holds that
	\begin{align*}
		\PredictAdv_{\Pc, \HintFunc}(\hint) \leq \ex{\element \la \ElementVar \mid \element \in \GoodElements}{\size{\pr{\Value = 1} -\pr{\Value = 1 \mid \ElementVar = \element}}\cdot \size{1 - \ratioo_{\hint,\GoodElements}(\element)}} + \tail_{\hint,\GoodElements},
	\end{align*}
	for $\tail_{\hint,\GoodElements} = 2\cdot (\pr{\element \notin \GoodElements} + \pr{\element \notin \GoodElements \mid \HintVar = \hint})$.
\end{lemma}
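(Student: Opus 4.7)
My plan is to split the probability $\pr{\Value = 1}$ and $\pr{\Value = 1 \mid \HintVar = \hint}$ according to whether $\ElementVar$ falls in the ``typical'' set $\GoodElements$, bound the contributions from outside $\GoodElements$ by the tail term, and for the contributions from inside $\GoodElements$ rewrite the difference as an expectation involving $\ratioo_{\hint,\GoodElements}$.

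First I would observe that since $(\ElementVar,\Value)$ is a two-step process and $\HintFunc$ is a randomized function of $\ElementVar$ alone (with independent coins), the random variables $\Value$ and $\HintVar$ are conditionally independent given $\ElementVar$. In particular, $\pr{\Value = 1 \mid \ElementVar = \element, \HintVar = \hint} = \pr{\Value = 1 \mid \ElementVar = \element}$ for every $\element \in \Supp(\ElementVar)$ and $\hint \in \Supp(\HintVar)$. Using this, together with the definition of $\ratioo_{\hint,\GoodElements}$, yields
\begin{align*}
\pr{\Value = 1 \mid \HintVar = \hint, \ElementVar \in \GoodElements}
&= \sum_{\element \in \GoodElements} \pr{\ElementVar = \element \mid \HintVar = \hint, \ElementVar \in \GoodElements} \cdot \pr{\Value = 1 \mid \ElementVar = \element}\\
&= \ex{\element \la \ElementVar \mid \element \in \GoodElements}{\ratioo_{\hint,\GoodElements}(\element) \cdot \pr{\Value = 1 \mid \ElementVar = \element}},
\end{align*}
while of course $\pr{\Value = 1 \mid \ElementVar \in \GoodElements} = \ex{\element \la \ElementVar \mid \element \in \GoodElements}{\pr{\Value = 1 \mid \ElementVar = \element}}$. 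Subtracting, and crucially using that $\ex{\element \la \ElementVar \mid \element \in \GoodElements}{\ratioo_{\hint,\GoodElements}(\element)} = 1$ (which lets me recenter by subtracting the constant $\pr{\Value = 1}$ inside the expectation without changing its value), I would get
\begin{align*}
&\pr{\Value = 1 \mid \ElementVar \in \GoodElements} - \pr{\Value = 1 \mid \HintVar = \hint, \ElementVar \in \GoodElements}\\
&\quad = \ex{\element \la \ElementVar \mid \element \in \GoodElements}{(1 - \ratioo_{\hint,\GoodElements}(\element))\cdot(\pr{\Value = 1 \mid \ElementVar = \element} - \pr{\Value = 1})}.
\end{align*}
A triangle-inequality step then yields exactly the integral appearing in the claimed bound.

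To finish, I would handle the ``conditioning on $\GoodElements$'' gap. A direct computation shows
\[
\size{\pr{\Value = 1} - \pr{\Value = 1 \mid \ElementVar \in \GoodElements}} \leq \pr{\ElementVar \notin \GoodElements},
\]
and symmetrically $\size{\pr{\Value = 1 \mid \HintVar = \hint} - \pr{\Value = 1 \mid \HintVar = \hint, \ElementVar \in \GoodElements}} \leq \pr{\ElementVar \notin \GoodElements \mid \HintVar = \hint}$. Combining these two bounds with the integral bound via the triangle inequality $\size{\pr{\Value = 1} - \pr{\Value = 1 \mid \HintVar = \hint}} \leq \size{\pr{\Value = 1} - \pr{\Value = 1 \mid \ElementVar \in \GoodElements}} + \size{\pr{\Value = 1 \mid \ElementVar \in \GoodElements} - \pr{\Value = 1 \mid \HintVar = \hint, \ElementVar \in \GoodElements}} + \size{\pr{\Value = 1 \mid \HintVar = \hint, \ElementVar \in \GoodElements} - \pr{\Value = 1 \mid \HintVar = \hint}}$ gives the stated inequality (the factor $2$ in $\tail_{\hint,\GoodElements}$ provides more slack than strictly needed).

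I do not expect any serious obstacle; the only subtle point is the recentering trick in the expectation --- subtracting $\pr{\Value = 1}$ after noting that $\ex{\element}{1 - \ratioo_{\hint,\GoodElements}(\element)} = 0$ --- which is what lets the bound contain the ``potential'' $\size{\pr{\Value = 1} - \pr{\Value = 1 \mid \ElementVar = \element}}$ rather than the less informative $\pr{\Value = 1 \mid \ElementVar = \element}$.
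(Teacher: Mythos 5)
Your proposal is correct, and the key idea --- rewriting $\pr{\Value = 1 \mid \HintVar = \hint, \ElementVar \in \GoodElements}$ as an expectation against $\ratioo_{\hint,\GoodElements}$ and then recentering by $\pr{\Value=1}$ using $\ex{\element \la \ElementVar \mid \element \in \GoodElements}{\ratioo_{\hint,\GoodElements}(\element)} = 1$ --- is precisely the central trick in the paper's own argument. What differs is the bookkeeping around the ``atypical'' set $\Supp(\ElementVar) \setminus \GoodElements$. The paper expands $\pr{\Value=1}$ and $\pr{\Value=1 \mid \HintVar=\hint}$ separately into ``inside-$\GoodElements$'' and ``outside-$\GoodElements$'' pieces, subtracts, and ends up with error terms $\size{p - p_\hint}$, $q$, and $q_\hint$ (in their notation $q = \pr{\ElementVar\notin\GoodElements}$ and $q_\hint = \pr{\ElementVar\notin\GoodElements\mid\HintVar=\hint}$), which total $2(q+q_\hint)$ after bounding $\size{p-p_\hint}=\size{q_\hint-q}\le q+q_\hint$. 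You instead insert the two intermediate quantities $\pr{\Value=1\mid\ElementVar\in\GoodElements}$ and $\pr{\Value=1\mid\HintVar=\hint,\ElementVar\in\GoodElements}$ and apply a three-term triangle inequality; each conditioning gap contributes at most $q$ or $q_\hint$ respectively, giving the tighter tail $q+q_\hint$. Both routes yield the stated inequality (the factor $2$ in $\tail_{\hint,\GoodElements}$ is generous), and both implicitly use the conditional independence of $\Value$ and $\HintVar$ given $\ElementVar$; you make this explicit, which is good practice. Your decomposition is arguably a bit cleaner and incidentally halves the tail error, but it is the same proof in substance.
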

\begin{proof}
	Let $p = \Pr[\ElementVar \in \GoodElements]$, let $q = 1-p$, let $p_\hint = \Pr[\ElementVar \in \GoodElements \mid \HintVar = \hint]$, let $q_\hint = 1 - p_\hint$, let $p'=\Pr[\Value = 1 \mid \ElementVar \notin \GoodElements]$ and let  $p''=\Pr[\Value = 1 \mid \HintVar = \hint,\element \notin \GoodElements]$. Note that
	\begin{align}\label{OneRoundGame:lemma:GenericDiffBound:1}
	\lefteqn{\Pr[\Value = 1]}\\
	&= p \cdot \Pr[\Value = 1 \mid \ElementVar \in \GoodElements] +	q \cdot \Pr[\Value = 1 \mid \ElementVar \notin \GoodElements]\nonumber\\
	&= p \cdot \ex{\element \la \ElementVar \mid \element \in \GoodElements}{\Pr[\Value = 1 \mid \ElementVar = \element]} + q \cdot p'\nonumber\\
	&= p \cdot \ex{\element \la \ElementVar \mid \element \in \GoodElements}{\Pr[\Value = 1 \mid \ElementVar = \element]} + q \cdot p'\nonumber\\
	&= p_\hint \cdot \ex{\element \la \ElementVar \mid \element \in \GoodElements}{\Pr[\Value = 1 \mid \ElementVar = \element]} + (p - p_\hint)\cdot \ex{\element \la \ElementVar \mid \element \in \GoodElements}{\Pr[\Value = 1 \mid \ElementVar = \element]} + q \cdot p'.\nonumber
	\end{align}
	In addition, note that
	\begin{align}\label{OneRoundGame:lemma:GenericDiffBound:2}
	\lefteqn{\Pr[\Value = 1 \mid \HintVar = \hint]}\\
	&= p_\hint \cdot \Pr[\Value = 1 \mid \HintVar = \hint,\element \in \GoodElements]
	+ q_\hint \cdot \Pr[\Value = 1 \mid \HintVar = \hint,\element \notin \GoodElements]\nonumber\\
	&= p_\hint \cdot \frac{\Pr[\Value = 1 \land \HintVar = \hint \mid \element \in \GoodElements]}{\Pr[\HintVar = \hint \mid \element \in \GoodElements]}
	+ q_\hint \cdot p''\nonumber\\
	&= p_\hint \cdot \frac{\ex{\element \la \ElementVar \mid \element \in \GoodElements}{\Pr[\Value = 1 \land \HintVar = \hint \mid \ElementVar = \element]}}{\Pr[\HintVar = \hint \mid \element \in \GoodElements]}
	+ q_\hint \cdot p''\nonumber\\
	&= p_\hint \cdot \frac{\ex{\element \la \ElementVar \mid \element \in \GoodElements}{\Pr[\Value = 1 \mid \ElementVar = \element] \cdot \Pr[\HintVar = \hint \mid \ElementVar = \element]}}{\Pr[\HintVar = \hint \mid \element \in \GoodElements]} + q_\hint \cdot p''\nonumber\\
	&= p_\hint \cdot \ex{\element \la \ElementVar \mid \element \in \GoodElements}{\Pr[\Value = 1 \mid \ElementVar = \element] \cdot \frac{\Pr[\HintVar = \hint \mid \ElementVar = \element]}{\Pr[\HintVar = \hint \mid \element \in \GoodElements]}} + q_\hint \cdot p''\nonumber\\
	&= p_\hint \cdot \ex{\element \la \ElementVar \mid \element \in \GoodElements}{\Pr[\Value = 1 \mid \ElementVar = \element] \cdot \ratioo_{\hint,\GoodElements}(\element)} + q_\hint \cdot p''.\nonumber
	\end{align}
	Combing \cref{OneRoundGame:lemma:GenericDiffBound:1,OneRoundGame:lemma:GenericDiffBound:2} yields that
\begin{align*}	
\PredictAdv_{\Pc,\HintFunc}(\hint)
&= \size{\Pr[\Value = 1] - \Pr[\Value = 1 \mid \HintVar = \hint]}\\
&\leq p_\hint \cdot \size{\ex{\element \la \ElementVar \mid \element \in \GoodElements}{\Pr[\Value = 1 \mid \ElementVar = \element] \cdot (1 - \ratioo_{\hint,\GoodElements}(\element))}} + \abs{p - p_\hint} + q + q_\hint\\
&= p_\hint \cdot \size{\ex{\element \la \ElementVar \mid \element \in \GoodElements}{(\Pr[\Value = 1 \mid \ElementVar = \element] - \Pr[\Value = 1]) \cdot (1 - \ratioo_{\hint,\GoodElements}(\element))}} + \abs{p - p_\hint} + q + q_\hint\\
&\leq \size{\ex{\element \la \ElementVar \mid \element \in \GoodElements}{(\Pr[\Value = 1 \mid \ElementVar = \element] - \Pr[\Value = 1]) \cdot (1 - \ratioo_{\hint,\GoodElements}(\element))}} + \abs{q - q_\hint} + q + q_\hint\\
&\leq \ex{\element \la \ElementVar \mid \element \in \GoodElements}{\size{\Pr[\Value = 1 \mid \ElementVar = \element] - \Pr[\Value = 1]} \cdot \size{1 - \ratioo_{\hint,\GoodElements}(\element)}}
+ 2\cdot(q + q_\hint).
\end{align*}
	 The second equality holds by the following calculation
	\begin{align*}
	\lefteqn{\ex{\element \la \ElementVar \mid \element \in \GoodElements}{\Pr[\Value = 1 \mid \ElementVar = \element] \cdot (1 -\ratioo_{\hint,\GoodElements}(\element))}}\\
	&= \size{\ex{\element \la \ElementVar \mid \element \in \GoodElements}{(\Pr[\Value = 1] + \Pr[\Value = 1 \mid \ElementVar = \element] - \Pr[\Value = 1]) \cdot (1 - \ratioo_{\hint,\GoodElements}(\element))}}\\
	&= \ex{\element \la \ElementVar \mid \element \in \GoodElements}{\Pr[\Value = 1] \cdot (1 - \ratioo_{\hint,\GoodElements}(\element))} +
	\ex{\element \la \ElementVar \mid \element \in \GoodElements}{(\Pr[\Value = 1 \mid \ElementVar = \element] - \Pr[\Value = 1]) \cdot (1 - \ratioo_{\hint,\GoodElements}(\element))}\\
	&= \Pr[\Value = 1] \cdot (1 - \ex{\element \la \ElementVar \mid \element \in \GoodElements}{\ratioo_{\hint,\GoodElements}(\element)}) +
	\ex{\element \la \ElementVar \mid \element \in \GoodElements}{(\Pr[\Value = 1 \mid \ElementVar = \element] - \Pr[\Value = 1]) \cdot (1 - \ratioo_{\hint,\GoodElements}(\element))}\\
	&= \ex{\element \la \ElementVar \mid \element \in \GoodElements}{\Pr[\Value = 1] \cdot (1 - 1)} +
	\ex{\element \la \ElementVar \mid \element \in \GoodElements}{(\Pr[\Value = 1 \mid \ElementVar = \element] - \Pr[\Value = 1]) \cdot (1 - \ratioo_{\hint,\GoodElements}(\element))}\\
	&=  \ex{\element \la \ElementVar \mid \element \in \GoodElements}{(\Pr[\Value = 1 \mid \ElementVar = \element] - \Pr[\Value = 1]) \cdot (1 - \ratioo_{\hint,\GoodElements}(\element))}.
	\end{align*}
\end{proof}

\subsubsection{Bounding Prediction Advantage for  Binomial Processes}\label{subsubsec:ToolsForBinomialProcess}

In this section we develop tools for bounding the prediction advantage $\PredictAdv_{\Pc, \HintFunc}$ of a binomial process $\Pc$ \wrt  an arbitrary leakage $\HintFunc$.  In \cref{subsubsec:BoundingBinomialProcessAllInfHint,subsubsec:BoundingBinomialProcessHypHint,subsubsec:BoundingBinomialProcessVectorHint}, we use these tools to bound the prediction advantage of binomial process \wrt specific leakage functions.  

The following lemma, proven in \cref{subsubsection:BoundingBinomialProcess:ConnectionToLP}, is our first tool for bounding the prediction advantage of a binomial process $(\ElementVar = \CurCoinsVar, \Value)$ with arbitrary leakage. The lemma uses $\ratioo$, defined in \cref{def:ratio:eq}, and states that an appropriate upper-bound on $\size{1-\ratioo}$ yields an upper-bound on the prediction advantage. This tool is used in \cref{subsubsec:BoundingBinomialProcessHypHint,subsubsec:BoundingBinomialProcessVectorHint} for bounding the prediction advantage with hypergeometric and vector leakage, respectively.

\begin{lemma}\label{lemma:BoundingBinomialProcess:ConnectionToLP}
	Let $\rnd \in \N$, $\curRound \in [\rnd]$, $\prevCoins \in \Z$ and $\eps \in [-1,1]$ and assume that $\curRound \in [\rnd - \floor{\rnd^{\frac18}}]$, that $\size{\eps} \leq 4\cdot \sqrt{\frac{\log \rnd}{\ms{1}}}$, that $-(\prevCoins + 1) \in \Supp(\NextCoinsVar)$ and that $\size{\prevCoins + \eps \cdot \ms{\curRound}} \leq 4\sqrt{\log \rnd \cdot \ms{\curRound}}$.
	Let $\Pc = (\CurCoinsVar, \Value)$ be a $(\rnd,\curRound,\ells_{\rnd},\prevCoins,\eps)$-binomial process according to \cref{def:BinomialProcess}, let $\HintFunc$ be a leakage function for $\Pc$, let $\PredictAdv_{\Pc, \HintFunc}$ be according to \cref{def:PredictionAdvantage} and let $\GoodCoins \eqdef \set{\curCoins \in \Supp(\CurCoinsVar) \mid \size{\sigma(\curCoins)} \leq 6\cdot \sqrt{\log \rnd \cdot \ml{\curRound}}}$ for $\sigma(\curCoins) \eqdef \curCoins - \ex{\curCoins' \la \CurCoinsVar}{\curCoins'} = \curCoins - \eps \cdot \ml{\curRound}$.
	Let $\hint \in \Supp(\HintVarBin)$ be such that
	\begin{enumerate}
		\item $\pr{\CurCoinsVar \notin \GoodCoins \mid \HintVarBin = \hint} \leq \frac1{\rnd^{12}}$, and \label{lemma:BoundingBinomialProcess:ConnectionToLP:asmp1}
		
		\item $\size{1 - \ratioo_{\hint}(\curCoins)} \leq \ratioBoundFactor\cdot \frac{\size{\sigma(\curCoins)} + \sqrt{\ml{\curRound}}}{\sqrt{\ms{\curRound}}}$ for every $\curCoins \in \GoodCoins$\label{lemma:BoundingBinomialProcess:ConnectionToLP:asmp2}
	\end{enumerate}
	for $\ratioo_{\hint} = \ratioo_{\hint,\GoodCoins}$ being according to \cref{def:ratio:eq}.
	Then
	\begin{align*}
		\PredictAdv_{\Pc, \HintFunc}(\hint) \leq \const\cdot (\ratioBoundFactor + 1) \cdot \sqrt{\frac{\ml{\curRound}}{\rnd - \curRound + 1}} \cdot \pr{\NextCoinsVar = -(\prevCoins+1)}
	\end{align*}
	for a universal constant $\const > 0$.
\end{lemma}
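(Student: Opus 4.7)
The plan is to apply \cref{lemma:ElementaryBound:GenericDiffBound} with $\GoodElements = \GoodCoins$, which reduces the problem to bounding
\begin{align*}
\ex{\curCoins \la \CurCoinsVar \mid \curCoins \in \GoodCoins}{\size{\pr{\Value=1} - \pr{\Value=1 \mid \CurCoinsVar = \curCoins}} \cdot \size{1 - \ratioo_\hint(\curCoins)}}
\end{align*}
plus the additive tail $\tail_{\hint,\GoodCoins}$. Hypothesis~\ref{lemma:BoundingBinomialProcess:ConnectionToLP:asmp2} already controls the ratio factor, so the main task is to bound the ``payoff'' $\size{\pr{\Value=1} - \pr{\Value=1 \mid \CurCoinsVar=\curCoins}}$ by roughly $(\size{\sigma(\curCoins)}+\sqrt{\ml{\curRound}}) \cdot \pr{\NextCoinsVar=-(\prevCoins+1)}$. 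The tail itself is negligible: Hoeffding's inequality (\cref{claim:Hoeffding}) applied to $\CurCoinsVar \sim \Beroo{\ml{\curRound},\eps}$ gives $\pr{\CurCoinsVar \notin \GoodCoins} \leq 2\rnd^{-18}$, which combined with hypothesis~\ref{lemma:BoundingBinomialProcess:ConnectionToLP:asmp1} yields $\tail_{\hint,\GoodCoins} = O(\rnd^{-12})$; a direct computation using the bounds below shows that the main term is at least $\Omega(\rnd^{-10})$ under the assumptions, so absorbing the tail into the conclusion accounts for the $(\gamma+1)$ factor rather than merely $\gamma$.

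For the payoff estimate, set $Z' = \NextNextCoinsVar$ and $g(\curCoins) = \pr{Z' \geq -(\prevCoins+\curCoins)}$, so that $\pr{\Value=1 \mid \CurCoinsVar=\curCoins} = g(\curCoins)$ and $\pr{\Value=1} = \ex{\curCoins' \la \CurCoinsVar}{g(\curCoins')}$. Jensen's inequality combined with the telescoping identity $g(\curCoins')-g(\curCoins) = \pm \sum_j \pr{Z' = -(\prevCoins+j)}$ (signed sum over integers $j$ strictly between $\curCoins$ and $\curCoins'$) yields
\begin{align*}
\size{\pr{\Value=1} - g(\curCoins)} \leq \ex{\curCoins' \la \CurCoinsVar}{\size{\curCoins - \curCoins'} \cdot \max_{j \text{ between }\curCoins, \curCoins'} \pr{Z' = -(\prevCoins+j)}}.
\end{align*}
Splitting the outer expectation according to whether $\curCoins' \in \GoodCoins$ (the complementary contribution being only $O(\rnd^{-18})$ via $\size{g}\leq 1$), the local CLT (\cref{prop:binomProbEstimation}) applied to both $Z'$ and $\NextCoinsVar$ produces the pointwise comparison $\pr{Z' = -(\prevCoins+j)} \leq \const' \cdot \pr{\NextCoinsVar = -(\prevCoins+1)}$ uniformly over the relevant $j$. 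Combining this with $\ex{\curCoins'}{\size{\curCoins - \curCoins'}} \leq \size{\sigma(\curCoins)}+\sqrt{\ml{\curRound}}$ (via \cref{prop:binomTailExpectation}(2) and the triangle inequality) establishes the target payoff bound.

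Feeding the two estimates back into \cref{lemma:ElementaryBound:GenericDiffBound} and invoking \cref{prop:binomTailExpectation}(1) to get $\ex{\curCoins \mid \curCoins \in \GoodCoins}{(\size{\sigma(\curCoins)}+\sqrt{\ml{\curRound}})^2} \leq O(\ml{\curRound})$ delivers
\begin{align*}
\PredictAdv_{\Pc,\HintFunc}(\hint) \leq \const'' \gamma \cdot \pr{\NextCoinsVar=-(\prevCoins+1)} \cdot \frac{\ml{\curRound}}{\sqrt{\ms{\curRound}}} + O(\rnd^{-12}),
\end{align*}
and since $\ms{\curRound} = \Theta((\rnd-\curRound+1)^3)$ gives $\ml{\curRound}/\sqrt{\ms{\curRound}} = \Theta(\sqrt{\ml{\curRound}/(\rnd-\curRound+1)})$, this matches the claim after absorbing the tail as above. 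The main technical obstacle is the uniform local-CLT comparison $\pr{Z'=-(\prevCoins+j)} = O(\pr{\NextCoinsVar=-(\prevCoins+1)})$: since the Gaussian factor in $\pr{\NextCoinsVar=-(\prevCoins+1)}$ can be as small as $\rnd^{-8}$, bounding merely by the peak density $1/\sqrt{\ms{\curRound+1}}$ is far too crude. Setting $u = \prevCoins+\eps\ms{\curRound}$ and $w = j - \eps\ml{\curRound}$, the ratio's exponent splits into a term of size $(u+1)^2\ml{\curRound}/(\ms{\curRound}\ms{\curRound+1})$ and one of size $(\size{u}(1+\size{w})+w^2)/\ms{\curRound+1}$; both must be verified to be $O(1)$ using simultaneously the $\size{u}$-bound from hypothesis~\ref{lemma:BoundingBinomialProcess:ConnectionToLP:asmp2}, the $\curRound \leq \rnd - \rnd^{1/8}$ condition (which forces $\ml{\curRound}/\ms{\curRound+1} = O(1/(\rnd-\curRound+1))$), and the smallness of $\size{\eps}$.
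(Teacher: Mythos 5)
Your high-level architecture matches the paper exactly: apply \cref{lemma:ElementaryBound:GenericDiffBound} with $\GoodElements = \GoodCoins$, control the $\tail$ term via Hoeffding (to get $\pr{\CurCoinsVar \notin \GoodCoins} \lesssim \rnd^{-18}$) together with hypothesis~\ref{lemma:BoundingBinomialProcess:ConnectionToLP:asmp1}, control $\size{1-\ratioo_\hint}$ via hypothesis~\ref{lemma:BoundingBinomialProcess:ConnectionToLP:asmp2}, establish a payoff bound of the form $\size{\pr{\Value=1}-\pr{\Value=1\mid \CurCoinsVar=\curCoins}}\leq O\bigl((\size{\sigma(\curCoins)}+\sqrt{\ml{\curRound}})\cdot \pr{\NextCoinsVar=-(\prevCoins+1)}\bigr)$, fold everything together with \cref{prop:binomTailExpectation}, and absorb the tail using the lower bound $\pr{\NextCoinsVar=-(\prevCoins+1)}\geq \rnd^{-10}$ from \cref{prop:binomProbEstimation}. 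This is the paper's proof.

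Where you diverge is that the paper does not prove the payoff estimate inline: it invokes \cref{ProvingLPLemma:lemma:GameValueDiffOneRound} (which in turn rests on \cref{ProvingLPLemma:claim:GameValueDiffOneRound}). That claim is proved by a two-case split on whether $\size{\prevCoins+\eps\cdot\ms{\curRound+1}}$ is above or below $\sqrt{\ms{\curRound+1}}$, with a peak-density argument in the small case and a more intricate ratio comparison of $\pr{\NextNextCoinsVar=\cdot}$ at different points in the large case. Your telescoping-plus-local-CLT ratio argument —- directly comparing $\pr{Z'=-(\prevCoins+j)}$ to $\pr{\NextCoinsVar=-(\prevCoins+1)}$ and verifying the Gaussian exponent difference is $O(1)$ under the standing hypotheses on $\size{\prevCoins+\eps\ms{\curRound}}$, $\size{\eps}$, and $\curRound\leq\rnd-\rnd^{1/8}$ -— gives a unified one-shot proof that subsumes both cases. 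You correctly identified this exponent comparison as the real obstacle and your decomposition into the $(u+1)^2\ml{\curRound}/(\ms{\curRound}\ms{\curRound+1})$ and $(\size{u}(1+\size{w})+w^2)/\ms{\curRound+1}$ pieces is the right one; both pieces are indeed $o(1)$. So what your approach buys is a cleaner, case-free derivation of the intermediate payoff estimate; what the paper's buys is a reusable stand-alone lemma. Two small slips worth fixing: the bound on $\size{\prevCoins+\eps\ms{\curRound}}$ comes from the lemma's own standing hypothesis, not from hypothesis~\ref{lemma:BoundingBinomialProcess:ConnectionToLP:asmp2} as you cite, and the telescoping range should be $j\in\{\curCoins+1,\ldots,\curCoins'\}$ (inclusive at the $\curCoins'$ end) rather than strictly between — neither affects the final bound.
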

Namely, in order to bound the prediction advantage, it is enough to bound the value of $\size{1 - \ratioo}$ for the set of  ``typical" coins $\GoodCoins$.

The next lemma, proven in  \cref{subsubsection:ProvingLPLemma:GameValueDiffOneRound}, is our  second tool for bounding the prediction advantage of a binomial process $(\ElementVar = \CurCoinsVar, \Value)$. This tool is used directly in \cref{subsubsec:BoundingBinomialProcessAllInfHint} for bounding the prediction advantage with all-information leakage and is one of the main building blocks for proving \cref{lemma:BoundingBinomialProcess:ConnectionToLP}.
\begin{lemma}\label{ProvingLPLemma:lemma:GameValueDiffOneRound}
	Let $\rnd \in \N$, $\curRound \in [\rnd]$, $\prevCoins \in \Z$ and $\eps \in [-1,1]$ and assume that $\curRound \in [\rnd - \floor{\rnd^{\frac18}}]$, that $\size{\eps} \leq 4\cdot \sqrt{\frac{\log \rnd}{\ms{1}}}$, that $-(\prevCoins + 1) \in \Supp(\NextCoinsVar)$ and that $\size{\prevCoins + \eps \cdot \ms{\curRound}} \leq 4\cdot \sqrt{\log \rnd \cdot \ms{\curRound}}$.
	Let $\Pc = (\CurCoinsVar, \Value)$ be a $(\rnd,\curRound,\ells_{\rnd},\prevCoins,\eps)$-binomial process according to \cref{def:BinomialProcess} and let $\GoodCoins$ and $\sigma$ be as defined in \cref{lemma:BoundingBinomialProcess:ConnectionToLP}. Then, for every $\curCoins \in \GoodCoins$ it holds that
	\begin{align*}
	\size{\pr{\Value = 1} - \pr{\Value = 1 \mid \CurCoinsVar = \curCoins}} \leq \const \cdot \left( \size{\sigma(\curCoins)} + \sqrt{\ml{\curRound}}\right) \cdot \pr{\NextCoinsVar = -(\prevCoins+1)}
	\end{align*}
	for a universal constant $\const > 0$.
\end{lemma}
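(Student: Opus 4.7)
The plan is to express the difference using independence of $C_i$ and $Z := \sum_{j=i+1}^m C_j$, and then estimate it via the local limit theorem for binomials (Proposition~\ref{prop:binomProbEstimation}). Letting $C_i'$ be an independent copy of $C_i$, first I would write
$$\pr{V=1} - \pr{V=1 \mid C_i=c} = \ex{C_i'}{\pr{Z \geq -b-C_i'} - \pr{Z \geq -b-c}}.$$
For each value $c'$, the inner difference equals, up to sign, the probability that $Z$ lies in an interval $I(c,c')$ of length $|c'-c|$ near $-b-c$. The guiding intuition is that for typical $c,c'$ close to the mean $\eps \ell(i)$, every integer point in $I(c,c')$ carries essentially the same probability mass under $Z$ -- approximately $\pr{W = -(b+1)}$ with $W = C_i + Z \sim \Beroo{s(i),\eps}$ -- since by the local CLT the densities $\pr{Z=z}$ and $\pr{W=-(b+1)}$ differ only by the ratio $\sqrt{s(i)/s(i+1)} = 1 + o(1)$ under the hypothesis $i \leq m - \lfloor m^{1/8} \rfloor$.

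Next I would split the support of $C_i$ into the typical set $T = \{c' : |c' - \eps \ell(i)| \leq O(\sqrt{\ell(i) \log m})\}$ and its complement. By Hoeffding's inequality (Fact~\ref{claim:Hoeffding}), $\pr{C_i \notin T}$ is super-polynomially small in $m$, and since each inner difference is at most $1$ in absolute value, the contribution of atypical $c'$ to the expectation is negligible compared to the target bound. For $c' \in T$, I would apply Proposition~\ref{prop:binomProbEstimation} to express $\pr{Z=z}$ as a Gaussian density with controlled multiplicative error, and Taylor-expand the quadratic exponent around the reference point to obtain $\pr{Z=z} = (1+\xi_z)\cdot\pr{W=-(b+1)}$ uniformly for $z \in I(c,c')$, with $|\xi_z|$ small under the hypotheses on $b$, $\eps$, and $i$. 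Substituting back, the leading contribution becomes
$$\pr{W=-(b+1)} \cdot \ex{C_i'}{C_i' - c} = -\pr{W=-(b+1)} \cdot \sigma(c),$$
which delivers exactly the $|\sigma(c)|\cdot\pr{W=-(b+1)}$ piece of the claimed bound.

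The main obstacle will be controlling the Taylor-approximation error uniformly over $z$ in the interval $I(c,c')$, whose length can be as large as $O(\sqrt{\ell(i)\log m})$ for $c,c' \in T$. The variation of the Gaussian density across $I(c,c')$ must be bounded by a factor proportional to $(|\sigma(c)|+\sqrt{\ell(i)})/\sqrt{s(i+1)}$ times $|b+\eps s(i)|/\sqrt{s(i)}$ and related quantities, so that when multiplied by $\pr{W=-(b+1)}$ and averaged over typical $C_i'$, the aggregate error is at most $O(\sqrt{\ell(i)})\cdot \pr{W=-(b+1)}$, matching the $\sqrt{\ell(i)}$ term in the statement. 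The assumptions on the smallness of $|\eps|$ and $|b+\eps s(i)|$, together with the margin $i \leq m - m^{1/8}$, enter precisely here to prevent the variation from incurring additional logarithmic or polynomial blow-up. Careful separation of the $|\sigma(c)|$ contribution from the residual ``natural fluctuation'' $\sqrt{\ell(i)}$ contribution -- rather than lumping them together -- is the main bookkeeping challenge.
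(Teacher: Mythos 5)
Your plan is sound and, carried through carefully, proves the lemma (indeed with room to spare). It takes a genuinely different route from the paper's. The paper first isolates a \emph{pairwise} estimate (\cref{ProvingLPLemma:claim:GameValueDiffOneRound}), bounding $\size{\pr{\Value=1\mid\CurCoinsVar=c}-\pr{\Value=1\mid\CurCoinsVar=c'}}$ by a constant times $(\size{\sigma(c)}+\size{\sigma(c')})\cdot\pr{\NextCoinsVar=-(b+1)}$, proved by a two-case split on whether $\size{b+\eps\cdot\ms{i+1}}$ is below or above $\sqrt{\ms{i+1}}$; it then averages over $c'\in\GoodCoins$ and invokes \cref{prop:binomTailExpectation} to control $\ex{}{\size{\sigma(C_i')}}$. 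You instead linearize directly around the single reference $\pr{\NextCoinsVar=-(b+1)}$, using that $\pr{Z=z}$ is a $(1+o(1))$ multiple of this reference uniformly over the relevant window, so the average telescopes to $\approx -\tfrac12\sigma(c)\cdot\pr{\NextCoinsVar=-(b+1)}$ plus a small error. This unifies the paper's two cases and is arguably cleaner; what it costs is the extra care needed to verify the uniform density comparison, which the paper's pairwise-ratio argument sidesteps.

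Three small caveats worth noting as you write this up. First, the densities do not ``differ only by the ratio $\sqrt{\ms{i}/\ms{i+1}}$'' as your intuition says: the Gaussian exponents also differ, by $O\bigl(\log m\cdot\sqrt{\ml{i}/\ms{i}}\bigr)=o(1)$ under the hypothesis $i\leq m-\lfloor m^{1/8}\rfloor$, which you do account for in your Taylor step but should state precisely. Second, the $\sqrt{\ml{i}}$ term in the lemma is \emph{not} what your Taylor error produces --- your aggregate error is $o\bigl((\size{\sigma(c)}+\sqrt{\ml{i}})\bigr)\cdot\pr{\NextCoinsVar=-(b+1)}$, strictly smaller; $\sqrt{\ml{i}}$ is deliberate slack in the statement matching $\ex{}{\size{\sigma(C_i')}}$, which is all the downstream application (\cref{lemma:BoundingBinomialProcess:ConnectionToLP}) needs. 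Third, mind the lattice-spacing-$2$ bookkeeping: the interval $I(c,c')$ contains $(c'-c)/2$ support points of $Z$, so your ``leading contribution'' carries an extra factor $\tfrac12$; harmless for the big-$O$, but easy to drop. You also implicitly need $\pr{\NextCoinsVar=-(b+1)}\geq 1/\poly(m)$ to discard the atypical tail --- this follows from \cref{prop:binomProbEstimation} and the assumed bound on $\size{b+\eps\cdot\ms{i}}$, as the paper records --- and it is worth saying so explicitly.
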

Namely, the above lemma bounds the expectation change of $\Value$, given a ``typical" value for $\CurCoinsVar$.

\mparagraph{Proving \texorpdfstring{\cref{lemma:BoundingBinomialProcess:ConnectionToLP}}{First Lemma}}\label{subsubsection:BoundingBinomialProcess:ConnectionToLP}

\begin{proof}[Proof of \cref{lemma:BoundingBinomialProcess:ConnectionToLP}]
	In the following we assume \wlg that $\rnd$ is larger than a universal constant determined by the proof  (otherwise, the proof is trivially holds by choosing large enough $\const$).
	Assume $\hint \in \Supp(\HintVarBin)$ satisfies assumptions \ref{lemma:BoundingBinomialProcess:ConnectionToLP:asmp1} and \ref{lemma:BoundingBinomialProcess:ConnectionToLP:asmp2} of \cref{lemma:BoundingBinomialProcess:ConnectionToLP}. 
	Since $\size{\prevCoins + \eps \cdot \ms{\curRound}} \leq 4 \sqrt{\log \rnd \cdot \ms{\curRound}}$,
	\cref{prop:binomProbEstimation} yields that
	\begin{align}\label{ProvingLPLemma:proposition:GameValueDiffOneRound:eq:prMinus1LowerBound}
		\pr{\NextCoinsVar = -(\prevCoins+1)} 
		&\geq \frac{1}{\sqrt{\ms{\curRound}}} \cdot e^{-\frac{(\prevCoins+1+\eps \cdot \ms{\curRound})^2}{2\cdot \ms{\curRound}}}\\
		&\geq \frac1{\rnd^{10}}\nonumber
	\end{align}
	Therefore, by Hoeffding's inequality (\cref{claim:Hoeffding}) and assumption \ref{lemma:BoundingBinomialProcess:ConnectionToLP:asmp1} on $\hint$, it holds that
	\begin{align}\label{ProvingLPLemma:proposition:GameValueDiffOneRound:eq:TailEst}
	\frac{2(\pr{\CurCoinsVar \notin \GoodCoins} + \pr{\CurCoinsVar \notin \GoodCoins \mid \HintVar = \hint})}{\pr{\NextCoinsVar = -(\prevCoins+1)}}
	\leq \frac{\frac4{\rnd^{12}}}{\frac1{\rnd^{10}}}
	= \frac4{\rnd^2}
	\end{align}
	It follows that
	\begin{align*}
	\lefteqn{\frac{\PredictAdv_{\Pc,\HintFunc}(\hint)}{\pr{\NextCoinsVar = -(\prevCoins+1)}}}\\
	&\leq \ex{\curCoins\la \CurCoinsVar\mid \curCoins\in \GoodCoins}{\frac{\size{\pr{\Value = 1} - \pr{\Value = 1 \mid \CurCoinsVar = \curCoins}}}{\pr{\NextCoinsVar = -(\prevCoins+1)}} \cdot \size{1 - \ratioo_{\hint}(\curCoins)}} + \frac{4}{\rnd^2}\\
	&\leq \ex{\curCoins\la \CurCoinsVar\mid \curCoins\in \GoodCoins}{\const' \bigl(\size{\sigma(\curCoins)} + \sqrt{\ml{\curRound}}\bigr) \cdot \ratioBoundFactor\cdot \frac{\size{\sigma(\curCoins)} + \sqrt{\ml{\curRound}}}{\sqrt{\ms{\curRound}}}} + \frac{4}{\rnd^2}\\
	&\leq \const' \cdot \ratioBoundFactor \cdot \ex{\curCoins\la \CurCoinsVar\mid \curCoins\in \GoodCoins}{\frac{\size{\sigma(\curCoins)}^2 + 2\size{\sigma(\curCoins)}\sqrt{\ml{\curRound}} + \ml{\curRound}}{\sqrt{\ms{\curRound}}}} + \frac{4}{\rnd^2}\\
	&\leq 4\const' \cdot \ratioBoundFactor \cdot \frac{\ml{\curRound}}{\sqrt{\ms{\curRound}}} + \frac{4}{\rnd^2}\\
	&\leq 4\const' \cdot \ratioBoundFactor \cdot \sqrt{\frac{\ml{\curRound}}{\rnd - \curRound + 1}} + \frac{4}{\rnd^2}\\
	&\leq 4(\const' \cdot \ratioBoundFactor + 1) \cdot \sqrt{\frac{\ml{\curRound}}{\rnd - \curRound + 1}}
	\end{align*}
	for  $\const'$ being the constant guaranteed in \cref{ProvingLPLemma:lemma:GameValueDiffOneRound}. The first inequality holds by \cref{lemma:ElementaryBound:GenericDiffBound} and by \cref{ProvingLPLemma:proposition:GameValueDiffOneRound:eq:TailEst}, the seconds one  by \cref{ProvingLPLemma:lemma:GameValueDiffOneRound} and assumption \ref{lemma:BoundingBinomialProcess:ConnectionToLP:asmp2} on $\hint$, the fourth one by \cref{prop:binomTailExpectation} and the fifth one holds since $\ms{\curRound} \leq (\rnd - i + 1) \cdot \ml{\curRound}$.
\end{proof}

\mparagraph{Proving \texorpdfstring{\cref{ProvingLPLemma:lemma:GameValueDiffOneRound}}{Second Lemma}}\label{subsubsection:ProvingLPLemma:GameValueDiffOneRound}

\begin{proof}[Proof of \cref{ProvingLPLemma:lemma:GameValueDiffOneRound}]		
	In the following we assume \wlg that $\rnd$ is larger than a universal constant to be determined by the proof (otherwise, the proof is trivially holds by choosing large enough $\const$). Since $\size{\prevCoins + \eps \cdot \ms{\curRound}} \leq 4 \sqrt{\log \rnd \cdot \ms{\curRound}}$ and since $\rnd$ is large,
	\cref{prop:binomProbEstimation} yields that
	\begin{align*}
		\pr{\NextCoinsVar = -(\prevCoins+1)} \geq \frac1{\rnd^{10}}
	\end{align*}
	Therefore, by Hoeffding's inequality (\cref{claim:Hoeffding}), it holds that
	\begin{align}\label{ProvingLPLemma:lemma:GameValueRealDiffOneRound:eq:TailEst}
		\frac{\pr{\CurCoinsVar \notin \GoodCoins}}{\pr{\NextCoinsVar = -(\prevCoins+1)}}
		\leq \frac{\frac1{\rnd^{12}}}{\frac1{\rnd^{10}}}
		= \frac1{\rnd^2}
	\end{align}
	
	\remove{
	 We bound $\size{\pr{\Value = 1} - \pr{\Value = 1 \mid \CurCoinsVar = \curCoins}}$ using an upper-bound on $\size{\pr{\Value = 1 \mid \CurCoinsVar = 0} - \pr{\Value = 1 \mid \CurCoinsVar = \curCoins}}$, for every $\curCoins \in \GoodCoins$. This separation \Inote{which separation? }is done since the first difference, which is quite close to the second one, is easier to bound. Specifically, we use the following claim, proven below.}
	We use the following claim (proven in the next section).
	
	\begin{claim}\label{ProvingLPLemma:claim:GameValueDiffOneRound}
		For every $\curCoins, \curCoins' \in \GoodCoins$, it holds that
		\begin{align*}
			\size{\pr{\Value = 1 \mid \CurCoinsVar = \curCoins} - \pr{\Value = 1 \mid \CurCoinsVar = \curCoins'}} \leq \const' \cdot (\size{\sigma(\curCoins)} + \size{\sigma(\curCoins')}) \cdot \pr{\NextCoinsVar = -(\prevCoins+1)}
		\end{align*}
		for a universal constant $\const' > 0$.
	\end{claim}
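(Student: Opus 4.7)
\begin{proofsketch}
The plan is to reduce the claim to a sharp pointwise estimate for the p.m.f.\ of $\NextNextCoinsVar$. By construction of a binomial process, $\Pr[\Value=1\mid \CurCoinsVar=x]=\Pr[\NextNextCoinsVar\ge -(\prevCoins+x)]$, so if we assume \wlg that $\curCoins'\le \curCoins$, then
\begin{align*}
\Pr[\Value=1\mid \CurCoinsVar=\curCoins]-\Pr[\Value=1\mid \CurCoinsVar=\curCoins']
=\sum_{k=-(\prevCoins+\curCoins)}^{-(\prevCoins+\curCoins')-1}\Pr[\NextNextCoinsVar=k].
\end{align*}
Since all $\curCoins,\curCoins'\in \Supp(\CurCoinsVar)$ have the same parity as $\ml{\curRound}$, the parity of the integers $k$ for which $\Pr[\NextNextCoinsVar=k]\ne 0$ is fixed, and hence the number of non-zero summands is at most $(\curCoins-\curCoins')/2\le (\size{\sigma(\curCoins)}+\size{\sigma(\curCoins')})/2$. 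So it suffices to show that each term $\Pr[\NextNextCoinsVar=k]$ in the above range is bounded by a universal constant times $\Pr[\NextCoinsVar=-(\prevCoins+1)]$.

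The second and main step is a two-fold application of the Gaussian estimate \cref{prop:binomProbEstimation}. First, for every $k$ in the summation range I would write $k=-(\prevCoins+1)+\delta$, observe that $\size{\delta}\le \size{\curCoins}+\size{\curCoins'}+1$, which under $\curCoins,\curCoins'\in\GoodCoins$ and the bound on $\eps$ is of order $\sqrt{\log\rnd\cdot\ml{\curRound}}$, and apply \cref{prop:binomProbEstimation} to $\NextNextCoinsVar$ (a binomial over $\ms{\curRound+1}$ trials with bias $\eps$) to get
\begin{align*}
\frac{\Pr[\NextNextCoinsVar=k]}{\Pr[\NextNextCoinsVar=-(\prevCoins+1)]}
=(1\pm o(1))\cdot \exp\!\left(\tfrac{-2A\delta-\delta^2}{2\ms{\curRound+1}}\right),
\qquad A=-(\prevCoins+1)-\eps\ms{\curRound+1}.
\end{align*}
The hypotheses $\size{\prevCoins+\eps\ms{\curRound}}\le 4\sqrt{\log\rnd\cdot\ms{\curRound}}$ and $\size{\eps}\le 4\sqrt{\log\rnd/\ms{1}}$ together with $\ms{\curRound+1}=\ms{\curRound}-\ml{\curRound}$ give $\size{A}=O(\sqrt{\log\rnd\cdot\ms{\curRound}})$, so the exponent is $O(\log\rnd\cdot \sqrt{\ml{\curRound}/\ms{\curRound+1}})+O(\log\rnd\cdot \ml{\curRound}/\ms{\curRound+1})$. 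The range assumption $\curRound\le\rnd-\floor{\rnd^{1/8}}$ forces $\ml{\curRound}/\ms{\curRound+1}=O(1/(\rnd-\curRound))=O(1/\rnd^{1/8})$, so both terms are $o(1)$ and the exponential is $\Theta(1)$. I would also need to check the two side conditions of \cref{prop:binomProbEstimation} ($\size{k}\le \ms{\curRound+1}^{3/5}$ and $\size{\eps}\le \ms{\curRound+1}^{-2/5}$), which hold in the same parameter range (for $\rnd$ large enough) since $\size{k}=O(\sqrt{\log\rnd\cdot\ms{\curRound}})\ll \ms{\curRound+1}^{3/5}$ whenever $\ms{\curRound+1}\ge \log^5\rnd$, and since $\rnd^{1/8}$ is the smallest we ever take $\rnd-\curRound+1$ to be.

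Third, I would apply \cref{prop:binomProbEstimation} once more, now to $\NextCoinsVar$ (a binomial over $\ms{\curRound}=\ms{\curRound+1}+\ml{\curRound}$ trials), to compare $\Pr[\NextCoinsVar=-(\prevCoins+1)]$ with $\Pr[\NextNextCoinsVar=-(\prevCoins+1)]$. The ratio of leading $1/\sqrt{\cdot}$ prefactors is $\sqrt{\ms{\curRound+1}/\ms{\curRound}}=1-o(1)$, and the difference of the two Gaussian exponents can be analyzed by the same algebra as above (writing one exponent in terms of the other plus a $\pm\eps\ml{\curRound}$ shift in the mean and a $\pm\ml{\curRound}$ shift in the variance), and is again $o(1)$. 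Hence $\Pr[\NextNextCoinsVar=-(\prevCoins+1)]=\Theta(\Pr[\NextCoinsVar=-(\prevCoins+1)])$. Combining steps 1--3 yields the claimed bound with some universal $\const'>0$.

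The main technical obstacle is purely the bookkeeping in step two: one has to verify that, in the parameter window dictated by the four hypotheses, all the lower-order error terms in the Gaussian estimate ($\eps^2\size{t}$, $\size{t}^3/n^2$, $\eps^4 n$ in \cref{prop:binomProbEstimation}, as well as the $O(\log\rnd)$ prefactor in $\size{A\delta}/\ms{\curRound+1}$) are simultaneously $o(1)$. This is why the restriction $\curRound\le \rnd-\floor{\rnd^{1/8}}$ is used: it keeps $\rnd-\curRound$ large enough that $\ml{\curRound}/\ms{\curRound+1}$ is $o(1/\polylog \rnd)$, which is exactly what kills all of the cross-terms above.
\end{proofsketch}
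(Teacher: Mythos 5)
Your proof is correct and takes a genuinely different, more uniform route than the paper's. The paper splits into two cases according to whether $\size{\prevCoins+\eps\cdot\ms{\curRound+1}}$ is below or above $\sqrt{\ms{\curRound+1}}$: in the near-center case it lower-bounds $\pr{\NextCoinsVar=-(\prevCoins+1)}$ by $\Omega(1/\sqrt{\ms{\curRound+1}})$ and upper-bounds the difference of tails by $O(\size{\curCoins-\curCoins'}/\sqrt{\ms{\curRound+1}})$; in the far-from-center case it proves a local monotonicity bound on the p.m.f.\ of $\NextNextCoinsVar$ and runs a conditional-expectation device, writing $\pr{\NextCoinsVar=-(\prevCoins+1)}$ as an expectation over $\CurCoinsVar\in\GoodCoins$ and estimating the resulting ratio of Gaussian exponentials. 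Your argument bypasses the case split by observing that, under the four hypotheses and the $\curRound\le\rnd-\floor{\rnd^{1/8}}$ restriction, the Gaussian exponent from \cref{prop:binomProbEstimation} changes by only $o(1)$ between any two of the relevant integers, so each of the $O(\size{\sigma(\curCoins)}+\size{\sigma(\curCoins')})$ p.m.f.\ terms being summed is already a $1\pm o(1)$ multiple of $\pr{\NextCoinsVar=-(\prevCoins+1)}$. This is the same underlying Gaussian estimate deployed more symmetrically, and it obviates the monotonicity and conditional-expectation steps.

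One cosmetic caveat: your step 2 implicitly requires $-(\prevCoins+1)\in\Supp(\NextNextCoinsVar)$, which fails whenever $\ml{\curRound}=(\rnd-\curRound+1)^2$ is odd, since the parities of $\Supp(\NextNextCoinsVar)$ and $\Supp(\NextCoinsVar)$ then differ and the hypothesis controls parity only with respect to $\NextCoinsVar$. The fix is immediate: compare the two Gaussian \emph{formulas} from \cref{prop:binomProbEstimation} directly (one at $k$ over $\ms{\curRound+1}$ trials, one at $-(\prevCoins+1)$ over $\ms{\curRound}$ trials) rather than routing through the intermediate p.m.f.\ value $\Pr[\NextNextCoinsVar=-(\prevCoins+1)]$; the algebra of your steps 2 and 3 then collapses into a single exponent comparison, and the $o(1)$ bounds you computed carry over unchanged.
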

	Fix $\curCoins \in \GoodCoins$ and compute
	\begin{align*}
		\lefteqn{\frac{\size{\pr{\Value = 1} - \pr{\Value = 1 \mid \CurCoinsVar = \curCoins}}}{\pr{\NextCoinsVar = -(\prevCoins+1)}}}\\
		&\leq \ex{\curCoins' \la \CurCoinsVar}{\frac{\size{\pr{\Value = 1 \mid \CurCoinsVar = \curCoins'} - \pr{\Value = 1 \mid \CurCoinsVar = \curCoins}}}{\pr{\NextCoinsVar = -(\prevCoins+1)}}}\\
		&\leq \ex{\curCoins' \la \CurCoinsVar \mid \curCoins' \in \GoodCoins}{\frac{\size{\pr{\Value = 1 \mid \CurCoinsVar = \curCoins'} - \pr{\Value = 1 \mid \CurCoinsVar = \curCoins}} + \pr{\CurCoinsVar \notin \GoodCoins}}{\pr{\NextCoinsVar = -(\prevCoins+1)}}}\\
		&\leq \ex{\curCoins' \la \CurCoinsVar \mid \curCoins' \in \GoodCoins}{\const'\cdot (\size{\sigma(\curCoins)}+ \size{\sigma(\curCoins')})} + \frac1{\rnd^2}\\
		&\leq \const' \cdot \sqrt{\ml{\curRound}} + \const'\cdot \size{\sigma(\curCoins)} + \frac1{\rnd^2}\\
		&\leq (\const' + 1) \cdot \bigl(\sqrt{\ml{\curRound}} + \const'\cdot \size{\sigma(\curCoins)}\bigr).
	\end{align*}
	The third inequality holds by \cref{ProvingLPLemma:claim:GameValueDiffOneRound,ProvingLPLemma:lemma:GameValueRealDiffOneRound:eq:TailEst}, and the fourth one  by \cref{prop:binomTailExpectation}.
\end{proof}

\mparagraph{Proving \cref{ProvingLPLemma:claim:GameValueDiffOneRound}}\label{ProvingLPLemma:paragraph:GameValueDiffOneRound}


\begin{proof}[Proof of \cref{ProvingLPLemma:claim:GameValueDiffOneRound}]
	We consider two cases.
	
	\mparagraph{The case $\size{\prevCoins + \eps \cdot \ms{\curRound+1}} \leq \sqrt{\ms{\curRound+1}}$}
	
	In this case, it holds that
	\begin{align}\label{ProvingLPLemma:proposition:GameValueDiffOneRound:case1:eq1}
		\pr{\NextCoinsVar = -(\prevCoins+1)}
		&= \vBeroo{\ms{\curRound+1}, \eps}(-(\prevCoins+1))\\
		&\geq \frac12 \cdot \frac1{\sqrt{\ms{\curRound+1}}} \cdot e^{-\frac{(-\prevCoins-1-\eps \cdot \ms{\curRound+1})^2}{2 \cdot \ms{\curRound + 1}}}\nonumber\\
		&\geq \frac12 \cdot \frac1{\sqrt{\ms{\curRound+1}}} \cdot e^{-1},\nonumber
	\end{align}
	where the first inequality holds by \cref{prop:binomProbEstimation}.
	In addition, it holds that
	\begin{align}\label{ProvingLPLemma:proposition:GameValueDiffOneRound:case1:eq2}
		\size{\pr{\Value = 1 \mid \CurCoinsVar = \curCoins} - \pr{\Value = 1 \mid \CurCoinsVar = \curCoins'}}&= \size{\vBeroo{\ms{\curRound+1}, \eps}\bigl(-(\prevCoins+\curCoins)\bigr) - \vBeroo{\ms{\curRound+1}, \eps}\bigl(-(\prevCoins + \curCoins')\bigr)}\\
		&\leq \frac{\size{\sigma(\curCoins) - \sigma(\curCoins')}}{\sqrt{\ms{\curRound+1}}}\nonumber\\
		&\leq \frac{\size{\sigma(\curCoins)} + \size{\sigma(\curCoins')}}{\sqrt{\ms{\curRound+1}}},\nonumber
	\end{align}
	where the first inequality also holds by \cref{prop:binomProbEstimation}.
	Combining \cref{ProvingLPLemma:proposition:GameValueDiffOneRound:case1:eq1,ProvingLPLemma:proposition:GameValueDiffOneRound:case1:eq2} yields that
	\begin{align}
		\frac{\size{\pr{\Value = 1 \mid \CurCoinsVar = \curCoins} - \pr{\Value = 1 \mid \CurCoinsVar = \curCoins'}}}{\pr{\NextCoinsVar = -(\prevCoins+1)}} \leq 2e \cdot (\size{\sigma(\curCoins)} + \size{\sigma(\curCoins')})
	\end{align}
	\mparagraph{The case $\size{\prevCoins + \eps \cdot \ms{\curRound+1}} > \sqrt{\ms{i+1}}$}
	
	Assume for simplicity that $\prevCoins + \eps \cdot \ms{\curRound+1} > \sqrt{\ms{i+1}}$ (the case $\prevCoins + \eps \cdot \ms{\curRound+1} \leq -\sqrt{\ms{i+1}}$ follows by an analogues arguments). In addition, we assume \wlg that $\size{\curCoins} \geq \size{\curCoins'}$.
	Note that for every $\curCoins''$ with $\size{\curCoins''} \leq \size{\curCoins}$, it holds that
	\begin{align}\label{ProvingLPLemma:proposition:GameValueDiffOneRound:case2:eq1}
		\pr{\NextNextCoinsVar = -(\prevCoins + \curCoins'')}
		&= \vBeroo{\ms{\curRound+1}, \eps}(-(\prevCoins + \curCoins''))\\
		&\leq \frac1{\sqrt{\ms{\curRound+1}}} \cdot e^{-\frac{(-\prevCoins-\curCoins''-\eps \cdot \ms{\curRound+1})^2}{2 \cdot \ms{\curRound + 1}}}\nonumber\\
		&\leq \frac1{\sqrt{\ms{\curRound+1}}} \cdot e^{-\frac{(-\prevCoins+\size{\curCoins}-\eps \cdot \ms{\curRound+1})^2}{2 \cdot \ms{\curRound + 1}}}\nonumber\\
		&\leq 2\cdot \vBeroo{\ms{\curRound+1}, \eps}(-(\prevCoins - \size{\curCoins}))\nonumber\\
		&= 2\cdot \pr{\NextNextCoinsVar = -(\prevCoins - \size{\curCoins})}.\nonumber
	\end{align}
	The first and third inequalities hold by \cref{prop:binomProbEstimation} and the second inequality holds since $\size{\curCoins''} \leq \size{\curCoins} < \sqrt{\ms{\curRound+1}} < \prevCoins + \eps \cdot \ms{\curRound+1}$.
	
	Therefore,
	\begin{align*}
	\lefteqn{\frac{\pr{\NextCoinsVar = -(\prevCoins+1)}}{\size{\pr{\Value = 1 \mid \CurCoinsVar = \curCoins} - \pr{\Value = 1 \mid \CurCoinsVar = \curCoins'}}}}\\
	&\geq \frac{\pr{\NextCoinsVar = -(\prevCoins+1) \mid \CurCoinsVar \in \GoodCoins} \cdot \pr{\CurCoinsVar \in \GoodCoins}}{\size{\pr{\Value = 1 \mid \CurCoinsVar = \curCoins} - \pr{\Value = 1 \mid \CurCoinsVar = \curCoins'}}}\\
	&\geq \frac12 \cdot \ex{\curCoins'' \la \CurCoinsVar \mid \curCoins'' \in \GoodCoins}{\frac{\pr{\NextCoinsVar = -(\prevCoins+1) \mid \CurCoinsVar = \curCoins''}}{\size{\pr{\Value = 1 \mid \CurCoinsVar = \curCoins} - \pr{\Value = 1 \mid \CurCoinsVar = \curCoins'}}}}\\
	&\geq \frac12 \cdot \ex{\curCoins'' \la \CurCoinsVar \mid \curCoins'' \in \GoodCoins}{\frac{\pr{\NextNextCoinsVar = -(\prevCoins+\curCoins''+1)}}{\size{\curCoins-\curCoins'}\cdot \pr{\NextNextCoinsVar = -(\prevCoins-\size{\curCoins})}}}\\
	&\geq \frac1{2\bigl(\size{\sigma(\curCoins)}+\size{\sigma(\curCoins')}\bigr)} \cdot \ex{\curCoins'' \la \CurCoinsVar \mid \curCoins'' \in \GoodCoins}{\frac{\pr{\NextNextCoinsVar = -(\prevCoins+\curCoins''+1)}}{\pr{\NextNextCoinsVar = -(\prevCoins-\size{\curCoins})}}}\\
	&\geq \frac1{4(\size{\sigma(\curCoins)}+\size{\sigma(\curCoins')})} \cdot \ex{\curCoins'' \la \CurCoinsVar \mid \curCoins'' \in \GoodCoins}{\frac{\exp\left(-\frac{(-\prevCoins-\curCoins''-1-\eps\cdot\ms{\curRound+1})^2}{2\cdot\ms{\curRound+1}}\right)}{\exp\left(-\frac{(-\prevCoins+\size{\curCoins}-\eps\cdot\ms{\curRound+1})^2}{2\cdot\ms{\curRound+1}}\right)}}\\
	&= \frac{\ex{\curCoins'' \la \CurCoinsVar \mid \curCoins'' \in \GoodCoins}{\exp\left(-\frac{(\curCoins''+1)^2+2\prevCoins(\curCoins''+1)+2(\curCoins''+1)\cdot\eps\cdot\ms{\curRound+1}-\curCoins^2+2\size{\curCoins}\prevCoins+2\size{\curCoins}\cdot\eps\cdot\ms{\curRound+1}}{2\cdot\ms{\curRound+1}}\right)}}{4(\size{\sigma(\curCoins)}+\size{\sigma(\curCoins')})}\\
	&\geq \frac{1}{4e\cdot (\size{\sigma(\curCoins)}+\size{\sigma(\curCoins')})}.
	\end{align*}
	The third inequality holds by \cref{ProvingLPLemma:proposition:GameValueDiffOneRound:case2:eq1}, the fifth one  by \cref{prop:binomProbEstimation}, and the last one  since the expression  in the exponent is smaller than one (note that $\size{\curCoins}, \size{\curCoins''} \leq 6\cdot \sqrt{\log \rnd \cdot \ml{i+1}} + \eps \cdot \ml{i} \leq 7\cdot \sqrt{\log \rnd \cdot \ml{i+1}}$). We conclude that
	\begin{align*}
	\frac{\size{\pr{\Value = 1 \mid \CurCoinsVar = \curCoins} - \pr{\Value = 1 \mid \CurCoinsVar = \curCoins'}}}{\pr{\NextCoinsVar = -(\prevCoins+1)}}
	&\leq 4e\cdot (\size{\sigma(\curCoins)} + \size{\sigma(\curCoins')}),
	\end{align*}
	as required.
\end{proof}

\subsubsection{A Bound on Binomial Process with All-Information Leakage}\label{subsubsec:BoundingBinomialProcessAllInfHint}
In this section we prove \cref{lemma:BinomialProcessAllInfHint}.
Let $\rnd \in \N$, $\curRound \in [\rnd]$, $\prevCoins \in \Z$ and $\eps \in [-1,1]$ that satisfy the assumptions of \cref{lemma:BinomialProcessAllInfHint}. We assume \wlg that $\rnd$ is larger than some universal constant and we focus on the $\bigl(\rnd, \curRound, \ells_{\rnd}, \prevCoins, \eps\bigr)$-binomial process $\Pc = (\ElementVar = \CurCoinsVar, \Value)$ (according to \cref{def:BinomialProcess}) with all-information leakage function $\HintFunc$.
Let $\GoodCoins \eqdef \set{\curCoins \in \Supp(\CurCoinsVar) \mid \size{\sigma(\curCoins)} \leq 6\cdot \sqrt{\log \rnd \cdot \ml{\curRound}}}$ for $\sigma(\curCoins) \eqdef \curCoins - \ex{\curCoins' \la \CurCoinsVar}{\curCoins'} = \curCoins - \eps \cdot \ml{\curRound}$.
\begin{proof}[Proof of \cref{lemma:BinomialProcessAllInfHint}]
	Let $\GoodHintsFinal = \GoodCoins$.
	By \cref{claim:Hoeffding} (Hoeffding's inequality), it holds that
	\begin{align}\label{OneRoundGame:lemma:BinomialCoinsHintGame:con1}
		\pr{\HintVarBin \notin \GoodHintsFinal} = \pr{\CurCoinsVar \notin \GoodCoins} \leq \frac1{\rnd^2}.
	\end{align}
	Fix $\curCoins \in \GoodCoins$.
	Since $\HintVarBin = \CurCoinsVar$, \cref{ProvingLPLemma:lemma:GameValueDiffOneRound} yields that
	there exists some universal constant $\const' > 0$ such that
	\begin{align}\label{OneRoundGame:lemma:BinomialCoinsHintGame:con2}
		\frac{\size{\pr{\Value = 1} - \pr{\Value = 1 \mid \HintVarBin = \curCoins}}}{\pr{\NextCoinsVar = -(\prevCoins+1)}}
		&\leq \const' \cdot \bigl(\size{\sigma(\curCoins)} + \sqrt{\ml{\curRound}}\bigr)\\
		&\leq 7\const' \cdot \sqrt{\ml{\curRound}} \cdot \sqrt{\log \rnd},\nonumber
	\end{align}
	where the second inequality holds since $\size{\sigma(\curCoins)} \leq 6\cdot \sqrt{\log \rnd \cdot \ml{\curRound}}$.
	The proof follows by \cref{OneRoundGame:lemma:BinomialCoinsHintGame:con1,OneRoundGame:lemma:BinomialCoinsHintGame:con2}.
\end{proof}

\subsubsection{Bound on Binomial Process with Hypergeometric Leakage}\label{subsubsec:BoundingBinomialProcessHypHint}
In this section we prove \cref{lemma:BinomialProcessHyperHint}.
Let $\rnd \in \N$, $\curRound \in [\rnd]$, $\prevCoins \in \Z$, $\eps \in [-1,1]$, $\hypBankWeight \in [-2\cdot \ms{1}, 2\cdot \ms{1}]$ and $\const > 0$ that satisfy the assumptions of \cref{lemma:BinomialProcessHyperHint}.
In the following,  we assume \wlg that $\rnd$ is larger than some universal constant (we can choose $\gamma$ and $\varphi$ to be large enough on small values of $m$), and we focus on the $\bigl(\rnd, \curRound, \ells_{\rnd}, \prevCoins, \eps\bigr)$-binomial process $\Pc = (\ElementVar = \CurCoinsVar, \Value)$ with $\bigl(\rnd, \curRound, \ells_{\rnd}, \prevCoins, \hypBankWeight\bigr)$-hypergeometric leakage function $\HintFunc$.
We let $\GoodCoins \eqdef \set{\curCoins \in \Supp(\CurCoinsVar) \mid \size{\sigma(\curCoins)} \leq 6\cdot \sqrt{\log \rnd \cdot \ml{\curRound}}}$ for $\sigma(\curCoins) \eqdef \curCoins - \ex{\curCoins' \la \CurCoinsVar}{\curCoins'} = \curCoins - \eps \cdot \ml{\curRound}$, and $\GoodHints \eqdef \set{\hint \in \Supp(\HintVarBin) \mid \size{\hint - \prevCoins} \leq (\const+4)\cdot \sqrt{\log \rnd \cdot \ms{\curRound}}}$.

The following proposition, which wraps the main analysis of this section, bounds how much $\ratioo_{\hint}(\element)$ can be far from $1$.

\begin{proposition}\label{prop:BinHypHint:RatioEstimation}
	For every $\hint \in \GoodHints$ and $\curCoins \in \GoodCoins$, it holds that
	\begin{align*}
		\size{1-\ratioo_{\hint}(\curCoins)} \leq \varphi(\const)\cdot \sqrt{\log\rnd} \cdot \frac{\size{\sigma(\curCoins)} + \sqrt{\ml{\curRound}}}{\sqrt{\ms{\curRound}}},
	\end{align*}
	for some universal function $\varphi \colon \R^+ \rightarrow \R^+$.
\end{proposition}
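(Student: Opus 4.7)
The plan is to reduce the ratio to a Gaussian ratio and then read off the bound from the exponent. First, \cref{def:ratio:eq} together with the form of the hypergeometric leakage function gives
\[
\ratioo_\hint(\curCoins) \;=\; \frac{\Hyp{2\ms{1},\hypBankWeight,\ms{\curRound+1}}(\hint-\prevCoins-\curCoins)}{\ex{\curCoins' \la \CurCoinsVar \mid \curCoins' \in \GoodCoins}{\Hyp{2\ms{1},\hypBankWeight,\ms{\curRound+1}}(\hint-\prevCoins-\curCoins')}}.
\]
Under the hypotheses of the lemma $|\hypBankWeight| \leq \const\sqrt{\log\rnd\cdot\ms{1}}$, $|\hint-\prevCoins|\leq(\const+4)\sqrt{\log\rnd\cdot\ms{\curRound}}$ (since $\hint\in\GoodHints$) and $|\curCoins|,|\curCoins'|\leq 10\sqrt{\log\rnd\cdot\ml{\curRound}}$ (since $\curCoins,\curCoins'\in\GoodCoins$), so the arguments of $\Hyp{\cdot}{}{\cdot}$ satisfy the size assumptions of \cref{prop:hyperProbTightEstimation}. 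Replacing each hypergeometric mass by its Gaussian surrogate $g(x)\propto\exp(-(y-x)^2/(2\sigma_H^2))$, with $y = \hint-\prevCoins-\hypBankWeight\ms{\curRound+1}/(2\ms{1})$ and $\sigma_H^2 = \ms{\curRound+1}(1-\ms{\curRound+1}/(2\ms{1}))\in\Theta(\ms{\curRound+1})$, costs a multiplicative error of $\varphi_0(\const)\log^{1.5}\ms{\curRound+1}/\sqrt{\ms{\curRound+1}}$; since $\curRound\leq\rnd-\rnd^{1/8}$ implies $\ms{\curRound+1}\geq\rnd^{1/4}$, this error is easily absorbed into the target bound.

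Next I would introduce the reference point $\curCoins^\ast = \eps\cdot\ml{\curRound} = \ex{\curCoins'\la\CurCoinsVar}{\curCoins'}$ and exploit the exact Gaussian identity
\[
g(\curCoins)/g(\curCoins^\ast) \;=\; \exp\!\Bigl(\tfrac{\sigma(\curCoins)(y-\curCoins^\ast)}{\sigma_H^2} - \tfrac{\sigma(\curCoins)^2}{2\sigma_H^2}\Bigr),
\]
since $y-\curCoins = (y-\curCoins^\ast)-\sigma(\curCoins)$. To compare the denominator $\ex{\curCoins'\mid\curCoins'\in\GoodCoins}{g(\curCoins')}$ with $g(\curCoins^\ast)$, I would apply the same identity to $\curCoins'$ and Taylor-expand: since $\ex{\curCoins'}{\sigma(\curCoins')}=0$, $\ex{\curCoins'}{\sigma(\curCoins')^2}=O(\ml{\curRound})$ (\cref{prop:binomTailExpectation}), $|y-\curCoins^\ast|=O(\sqrt{\log\rnd\cdot\ms{\curRound}})$, and $\pr{\curCoins'\notin\GoodCoins}=O(\rnd^{-c})$ by \cref{claim:Hoeffding}, the expectation is $g(\curCoins^\ast)\cdot\bigl(1+O(\log\rnd\cdot\ml{\curRound}/\ms{\curRound})\bigr)$. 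Combining,
\[
\ratioo_\hint(\curCoins) \;=\; \bigl(1+O(\log\rnd\cdot\ml{\curRound}/\ms{\curRound})\bigr)\cdot\exp\!\Bigl(\tfrac{\sigma(\curCoins)(y-\curCoins^\ast)}{\sigma_H^2}-\tfrac{\sigma(\curCoins)^2}{2\sigma_H^2}\Bigr).
\]

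Finally I would bound each piece. The exponent is bounded in absolute value by $O(\sqrt{\log\rnd})\cdot|\sigma(\curCoins)|/\sqrt{\ms{\curRound}}+\sigma(\curCoins)^2/\ms{\curRound}$, and using $|\sigma(\curCoins)|\leq 6\sqrt{\log\rnd\cdot\ml{\curRound}}$ the second summand is dominated by the first; the correction factor contributes at most $O(\log\rnd\cdot\ml{\curRound}/\ms{\curRound})\leq O(\sqrt{\log\rnd})\cdot\sqrt{\ml{\curRound}}/\sqrt{\ms{\curRound}}$, the final inequality holding because $\sqrt{\log\rnd\cdot\ml{\curRound}/\ms{\curRound}}\leq 1$ once $\ml{\curRound}/\ms{\curRound}=O(\ml{\curRound}^{-1/2})\leq\rnd^{-1/16}$. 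Since the exponent is $o(1)$, $|e^x-1|\leq 2|x|$ converts the exponential bound into the additive bound $\varphi(\const)\sqrt{\log\rnd}\cdot(|\sigma(\curCoins)|+\sqrt{\ml{\curRound}})/\sqrt{\ms{\curRound}}$, with $\varphi$ swallowing all of $\varphi_0(\const)$, the hidden $\const$'s from the bounds on $\hypBankWeight,\hint,\eps$, and the Gaussian-approximation slack.

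The main obstacle will be the bookkeeping of the two multiplicative approximation errors (the hypergeometric-to-Gaussian step, and the $(1\pm o(1))$ factor from $\ex{\curCoins'\mid\GoodCoins}{g(\curCoins')}\approx g(\curCoins^\ast)$) and showing that each is in fact dominated by the $\sqrt{\ml{\curRound}}/\sqrt{\ms{\curRound}}$ term of the target — this is where the condition $\curRound\leq\rnd-\rnd^{1/8}$ is used essentially. Once that is in place, everything reduces to elementary manipulations of the quadratic Gaussian exponent.
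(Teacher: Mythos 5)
Your proposal is correct and follows essentially the same path as the paper's proof: both reduce the hypergeometric ratio to a Gaussian ratio via \cref{prop:hyperProbTightEstimation}, bound the resulting quadratic exponent by a constant times $\sqrt{\log\rnd}\cdot\bigl(\size{\sigma(\curCoins)}+\sqrt{\ml{\curRound}}\bigr)/\sqrt{\ms{\curRound}}$ using the same bounds on $\sigma$, $\hypBankWeight$, $\hint$, $\eps$ and the assumption $\curRound\le\rnd-\floor{\rnd^{1/8}}$, and then linearize $e^x\in 1\pm 2\size{x}$. Your factoring of both numerator and denominator through a reference point $g(\curCoins^\ast)$ (versus the paper's direct manipulation of the difference of squared exponents inside $\ex{\curCoins'\mid\GoodCoins}{\cdot}$, followed by \cref{prop:binomTailExpectation}) is a purely cosmetic reorganization of the same algebra.
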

\begin{proof}
	Fix $\hint = \prevCoins + \hypSample \in \GoodHints$ and $\curCoins \in \GoodCoins$. Compute
	
	\begin{align}\label{OneRoundGame:lemma:BinHypHint:oneOverRatio}
	\frac{1}{\ratioo_{\hint}(\curCoins)}
	&= \frac{\pr{\HintVar=\prevCoins + \hypSample \mid \CurCoinsVar \in \GoodCoins}}{\pr{\HintVar=\prevCoins + \hypSample \mid \CurCoinsVar = \curCoins}}\\
	&= \ex{\curCoins' \la \CurCoinsVar \mid \curCoins'\in \GoodCoins}{\frac{\Hyp{2\cdot \ms{1},\hypBankWeight,\ms{i+1}}(\hypSample - \curCoins')}{\Hyp{2\cdot \ms{1},\hypBankWeight,\ms{i+1}}(\hypSample - \curCoins)}}\nonumber\\
	&\in \ex{\curCoins' \la \CurCoinsVar \mid \curCoins'\in \GoodCoins}{e^{\frac{(\hypSample - \curCoins - \frac{p \cdot \ms{\curRound+1}}{2\cdot \ms{1}})^2 - (\hypSample - \curCoins' - \frac{\hypBankWeight \cdot \ms{\curRound+1}}{2\cdot \ms{1}})^2}{2\cdot \ms{\curRound+1}\cdot (1 - \frac{\ms{\curRound + 1}}{2\cdot \ms{1}})}}} \cdot \left(1 \pm 4\varphi'(\const)\cdot \frac{\log^{1.5} \rnd}{\sqrt{\ms{\curRound+1}}}\right)\nonumber\\
	&= \ex{\curCoins' \la \CurCoinsVar \mid \curCoins'\in \GoodCoins}{e^{\frac{2(\curCoins' - \curCoins)\cdot (\hypSample - \frac{\hypBankWeight \cdot \ms{\curRound+1}}{2\cdot \ms{1}}) + \curCoins^2 - {\curCoins'}^2}{2\cdot \ms{\curRound+1}\cdot (1 - \frac{\ms{\curRound + 1}}{2\cdot \ms{1}})}}} \cdot \left(1 \pm 4\varphi'(\const)\cdot \frac{\log^{1.5} \rnd}{\sqrt{\ms{\curRound+1}}}\right)\nonumber\\
	&= \ex{\curCoins' \la \CurCoinsVar \mid \curCoins'\in \GoodCoins}{e^{\frac{2\bigl(\sigma(\curCoins') - \sigma(\curCoins)\bigr)\cdot (\hypSample - \frac{\hypBankWeight \cdot \ms{\curRound+1}}{2\cdot \ms{1}}) + \sigma(\curCoins)^2 - \sigma(\curCoins')^2 + 2\cdot\bigl(\sigma(\curCoins) - \sigma(\curCoins')\bigr)\cdot \eps \cdot \ml{\curRound}}{2\cdot \ms{\curRound+1}\cdot (1 - \frac{\ms{\curRound + 1}}{2\cdot \ms{1}})}}} \cdot \left(1 \pm 4\varphi'(\const)\cdot \frac{\log^{1.5} \rnd}{\sqrt{\ms{\curRound+1}}}\right)\nonumber
	\end{align}
	where the third transition follows by \cref{prop:hyperProbTightEstimation} where $\varphi'$ is the function from it.
	Since $\size{\sigma(\curCoins)},\size{\sigma(\curCoins')} \leq 6\sqrt{\log \rnd \cdot \ml{\curRound}}$, $\size{\hypSample} \leq (\const + 4)\sqrt{\log \rnd \cdot \ms{\curRound}}$, $\size{\hypBankWeight} \leq \const\sqrt{\log \rnd \cdot \ms{1}}$ and $i \in [\rnd - \floor{\rnd^{\frac18}}]$, it holds that $\size{\frac{2\bigl(\sigma(\curCoins') - \sigma(\curCoins)\bigr)\cdot (\hypSample - \frac{\hypBankWeight \cdot \ms{\curRound+1}}{2\cdot \ms{1}}) + \sigma(\curCoins)^2 - \sigma(\curCoins')^2 + 2\cdot\bigl(\sigma(\curCoins) - \sigma(\curCoins')\bigr)\cdot \eps \cdot \ml{\curRound}}{2\cdot \ms{\curRound+1}\cdot (1 - \frac{\ms{\curRound + 1}}{2\cdot \ms{1}})}} < 1$. Therefore,
	\begin{align}\label{OneRoundGame:lemma:BinHypHint:expEst}
	\lefteqn{\ex{\curCoins' \la \CurCoinsVar \mid \curCoins'\in \GoodCoins}{e^{\frac{2\bigl(\sigma(\curCoins') - \sigma(\curCoins)\bigr)\cdot (\hypSample - \frac{\hypBankWeight \cdot \ms{\curRound+1}}{2\cdot \ms{1}}) + \sigma(\curCoins)^2 - \sigma(\curCoins')^2 + 2\cdot\bigl(\sigma(\curCoins) - \sigma(\curCoins')\bigr)\cdot \eps \cdot \ml{\curRound}}{2\cdot \ms{\curRound+1}\cdot (1 - \frac{\ms{\curRound + 1}}{2\cdot \ms{1}})}}}}\\
	&\in \left(1 \pm \ex{\curCoins' \la \CurCoinsVar \mid \curCoins'\in \GoodCoins}{\frac{4\cdot \size{\sigma(\curCoins') - \sigma(\curCoins)}\cdot \size{\hypSample - \frac{\hypBankWeight \cdot \ms{\curRound+1}}{2\cdot \ms{1}}} + 2\size{\sigma(\curCoins)^2 - \sigma(\curCoins')^2} +4\cdot\size{\sigma(\curCoins) - \sigma(\curCoins')}\cdot \eps \cdot \ml{\curRound}}{\ms{\curRound+1}}}\right)\nonumber\\
	&\in \left(1 \pm \frac{4 \cdot \bigl(\size{\sigma(\curCoins)} + \sqrt{\ml{\curRound}}\bigr) \cdot (3\const + 4) \sqrt{\log \rnd \cdot \ms{\curRound+1}} + 2\cdot \sigma(\curCoins)^2 + 2\cdot \ml{\curRound} + \bigl(\size{\sigma(\curCoins)} + \sqrt{\ml{\curRound}}\bigr) \cdot 1}{\ms{\curRound+1}}\right)\nonumber\\
	&\in \left(1 \pm (12\const + 17) \cdot \sqrt{\log \rnd} \cdot \frac{\size{\sigma(\curCoins)} + \sqrt{\ml{\curRound}}}{\sqrt{\ms{\curRound+1}}}\right)\nonumber
	\end{align}
	where the first transition holds since $e^{a} \in 1 \pm 2\size{a}$ for every $a \in [-1,1]$ and the second one holds by \cref{prop:binomTailExpectation} and by the bound on $\size{p}$, $\size{\hypSample}$ and $\size{\eps}$.
	Combining \cref{OneRoundGame:lemma:BinHypHint:oneOverRatio,OneRoundGame:lemma:BinHypHint:expEst} yields that
	\begin{align*}
	\frac{1}{\ratioo_{\hint}(\curCoins)}
	&\in \left(1 \pm (12\const + 17) \cdot \sqrt{\log \rnd} \cdot \frac{\size{\sigma(\curCoins)} + \sqrt{\ml{\curRound}}}{\sqrt{\ms{\curRound+1}}}\right) \cdot \left(1 \pm 4\varphi'(\const)\cdot \frac{\log^{1.5} \rnd}{\sqrt{\ms{\curRound+1}}}\right)\\
	&\in \left(1 \pm (12\const + 18) \cdot \sqrt{\log \rnd} \cdot \frac{\size{\sigma(\curCoins)} + \sqrt{\ml{\curRound}}}{\sqrt{\ms{\curRound+1}}}\right)
	\end{align*}
	Since $(12\const + 18) \cdot \sqrt{\log \rnd} \cdot \frac{\size{\sigma(\curCoins)} + \sqrt{\ml{\curRound}}}{\sqrt{\ms{\curRound+1}}} < 0.5$ and since $\frac{1}{1 \pm a} \in 1 \pm 2a$ for every $a \in (-0.5,0.5)$, we conclude that
	\begin{align*}
	\size{1 - \ratioo_{\hint}(\curCoins)} \leq (24\const + 36) \cdot \sqrt{\log \rnd} \cdot \frac{\size{\sigma(\curCoins)} + \sqrt{\ml{\curRound}}}{\sqrt{\ms{\curRound+1}}},
	\end{align*}
	as required.
\end{proof}

The following proposition combines the analysis done in \cref{prop:BinHypHint:RatioEstimation} with the main tool of \cref{subsubsec:ToolsForBinomialProcess} in order to bound the expectation change of $\Value$.

\begin{proposition}\label{prop:BinHypHint:BiasEstimation}
	For every $\hint \in \GoodHints$ such that $\pr{\CurCoinsVar \notin \GoodCoins \mid \HintVarBin = \hint} \leq \frac1{\rnd^{12}}$, it holds that
	\begin{align*}
		\frac{\size{\pr{\Value = 1} - \pr{\Value = 1 \mid \HintVarBin = \hint}}}{\pr{\NextCoinsVar = -(\prevCoins + 1)}} \leq \varphi(\const)\sqrt{\log \rnd} \cdot \sqrt{\frac{\ml{\curRound}}{\rnd - \curRound + 1}},
	\end{align*}
	for some universal function $\varphi \colon \R^+ \rightarrow \R^+$.
\end{proposition}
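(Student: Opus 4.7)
The plan is to combine \cref{prop:BinHypHint:RatioEstimation}, which controls the multiplicative deviation $|1-\ratioo_{\hint}(\curCoins)|$, with the general reduction \cref{lemma:BoundingBinomialProcess:ConnectionToLP}, which translates such a ratio bound into a bound on the prediction advantage. All the distributional/parameter assumptions that appear in \cref{lemma:BoundingBinomialProcess:ConnectionToLP} ($\size{\eps}\leq 4\sqrt{\log m/\ms{1}}$, $\curRound\in[\rnd-\floor{\rnd^{1/8}}]$, $\size{\prevCoins+\eps\cdot\ms{\curRound}}\leq 4\sqrt{\log m\cdot\ms{\curRound}}$, $-(\prevCoins+1)\in\Supp(\NextCoinsVar)$) are inherited verbatim from the hypotheses of \cref{lemma:BinomialProcessHyperHint}, so we do not need to re-verify them here.

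Concretely, fix $\hint\in\GoodHints$ satisfying $\pr{\CurCoinsVar\notin\GoodCoins\mid\HintVarBin=\hint}\leq 1/\rnd^{12}$. By \cref{prop:BinHypHint:RatioEstimation}, there is a universal $\varphi'$ with
\begin{align*}
\size{1-\ratioo_{\hint}(\curCoins)} \leq \varphi'(\const)\sqrt{\log\rnd}\cdot\frac{\size{\sigma(\curCoins)}+\sqrt{\ml{\curRound}}}{\sqrt{\ms{\curRound}}}
\end{align*}
for every $\curCoins\in\GoodCoins$ (the denominator $\sqrt{\ms{\curRound+1}}$ appearing in the proposition differs from $\sqrt{\ms{\curRound}}$ only by a constant factor, which can be absorbed into $\varphi'$). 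This is exactly the second hypothesis of \cref{lemma:BoundingBinomialProcess:ConnectionToLP} with $\ratioBoundFactor=\varphi'(\const)\sqrt{\log\rnd}$, while the first hypothesis is our standing assumption on $\hint$. Invoking that lemma yields
\begin{align*}
\PredictAdv_{\Pc,\HintFunc}(\hint)\leq \const''\cdot(\varphi'(\const)\sqrt{\log\rnd}+1)\cdot\sqrt{\frac{\ml{\curRound}}{\rnd-\curRound+1}}\cdot\pr{\NextCoinsVar=-(\prevCoins+1)},
\end{align*}
and collecting the prefactor into a new universal function $\varphi(\const)$ gives the claimed bound after dividing by $\pr{\NextCoinsVar=-(\prevCoins+1)}$.

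I do not anticipate a serious obstacle here: both building blocks are already proved, and the main work is bookkeeping to confirm that the quantitative shape of the ratio bound from \cref{prop:BinHypHint:RatioEstimation} matches the template $\ratioBoundFactor\cdot(\size{\sigma(\curCoins)}+\sqrt{\ml{\curRound}})/\sqrt{\ms{\curRound}}$ required by \cref{lemma:BoundingBinomialProcess:ConnectionToLP}. The only subtlety worth flagging is that \cref{lemma:BoundingBinomialProcess:ConnectionToLP} presupposes a ratio bound on \emph{all} $\curCoins\in\GoodCoins$, whereas \cref{prop:BinHypHint:RatioEstimation} is only useful when $\hint\in\GoodHints$; but the statement of \cref{prop:BinHypHint:BiasEstimation} already restricts $\hint$ to $\GoodHints$, so this hypothesis is free.
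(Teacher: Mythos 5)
Your proof is correct and coincides with the paper's own argument, which states that the proposition follows immediately from \cref{prop:BinHypHint:RatioEstimation,lemma:BoundingBinomialProcess:ConnectionToLP}; you spell out exactly this combination, including the choice $\ratioBoundFactor=\varphi'(\const)\sqrt{\log\rnd}$ and absorbing the resulting prefactor into a new universal function. One small inaccuracy worth noting: the denominator in the \emph{statement} of \cref{prop:BinHypHint:RatioEstimation} is already $\sqrt{\ms{\curRound}}$ (the $\sqrt{\ms{\curRound+1}}$ appears only inside its proof), so no constant-factor conversion is actually needed and the match to hypothesis~\ref{lemma:BoundingBinomialProcess:ConnectionToLP:asmp2} of \cref{lemma:BoundingBinomialProcess:ConnectionToLP} is exact.
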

\begin{proof}
	The proof immediately follows by \cref{prop:BinHypHint:RatioEstimation,lemma:BoundingBinomialProcess:ConnectionToLP}.
\end{proof}

We are finally ready for proving \cref{lemma:BinomialProcessHyperHint}.

\mparagraph{Proof of \texorpdfstring{\cref{lemma:BinomialProcessHyperHint}}{Main Lemma}}\label{subsubsec:ProvingBinomialHypHintGameLemma}
\begin{proof}
	Let $\GoodHintsFinal \eqdef \set{\hint \in \GoodHints \mid \pr{\CurCoinsVar \notin \GoodCoins \mid \HintVarBin = \hint} \leq \frac1{\rnd^{12}}}$.
	Assume by contradiction that $\ppr{\hint \la \HintVarBin}{\pr{\CurCoinsVar \notin \GoodCoins \mid \HintVarBin = \hint} > \frac1{\rnd^{12}}} > \frac1{2\rnd^2}$. Then
	\begin{align*}
		\pr{\CurCoinsVar \notin \GoodCoins}\\
		&\geq \ppr{\hint \la \HintVarBin}{\CurCoinsVar \notin \GoodCoins \mid \pr{\CurCoinsVar \notin \GoodCoins \mid \HintVarBin = \hint} > \frac1{\rnd^{12}}} \cdot \ppr{\hint \la \HintVarBin}{\pr{\CurCoinsVar \notin \GoodCoins \mid \HintVarBin = \hint} > \frac1{\rnd^{12}}}\\
		&\geq \frac1{\rnd^{12}} \cdot \frac1{2\rnd^2}\\
		&= \frac1{2\rnd^{14}},
	\end{align*}
	In contradiction to Hoeffding's inequality (\cref{claim:Hoeffding}).
	Hence,
	\begin{align}\label{OneRoundGame:lemma:BinomialHypHintGame:eq1}
		\ppr{\hint \la \HintVarBin}{\pr{\CurCoinsVar \notin \GoodCoins \mid \HintVarBin = \hint} > \frac1{\rnd^{12}}} \leq \frac1{2\rnd^2},
	\end{align}
	In addition, it holds that
	\begin{align}\label{OneRoundGame:lemma:BinomialHypHintGame:eq2}
		\pr{\HintVarBin \notin \GoodHints}
		&= \ppr{\hint \la \HintVarBin}{\size{\hint-\prevCoins} > (\const+4)\sqrt{\log \rnd \cdot \ms{\curRound}}}\\
		&= \ppr{\hypSample \la \Hyp{2\ms{1},\hypBankWeight,\ms{\curRound+1}}}{\size{\CurCoinsVar+\hypSample} > (\const+4)\sqrt{\log \rnd \cdot \ms{\curRound}}}\nonumber\\
		&\leq \ppr{\hypSample \la \Hyp{2\ms{1},\hypBankWeight,\ms{\curRound+1}}}{\size{\CurCoinsVar+\hypSample} > (\const+4)\sqrt{\log \rnd \cdot \ms{\curRound}} \mid \CurCoinsVar\in \GoodCoins} + \pr{\CurCoinsVar \notin \GoodCoins}\nonumber\\
		&\leq \ppr{\hypSample \la \Hyp{2\ms{1},\hypBankWeight,\ms{\curRound+1}}}{\size{\hypSample} > (\const+3)\sqrt{\log \rnd \cdot \ms{\curRound}}} + \frac1{4\rnd^{2}}\nonumber\\
		&\leq \ppr{\hypSample \la \Hyp{2\ms{1},\hypBankWeight,\ms{\curRound+1}}}{\size{\hypSample - \frac{\ms{\curRound+1}\cdot \hypBankWeight}{\ms{1}}} > 3\sqrt{\log \rnd \cdot \ms{\curRound}}} + \frac1{4\rnd^{2}}\nonumber\\
		&\leq \frac1{2\rnd^2}.\nonumber
	\end{align}	
	The second inequality holds since $\size{\CurCoinsVar} \leq 7\sqrt{\log \rnd \cdot \ml{\curRound}} \leq \sqrt{\log \rnd \cdot \ms{\curRound+1}}$ and by Hoeffding's inequality (\cref{claim:Hoeffding}), the third one holds since $\size{\frac{\ms{\curRound+1}\cdot \hypBankWeight}{\ms{1}}} \leq \const \sqrt{\log \rnd \cdot \ms{\curRound+1}}$ and the last one holds by \cref{fact:hyperHoeffding} (Hoeffding's inequality for hypergeometric distribution).

	Combining \cref{OneRoundGame:lemma:BinomialHypHintGame:eq1,OneRoundGame:lemma:BinomialHypHintGame:eq2} yields that
	\begin{align}\label{OneRoundGame:lemma:BinomialHyperHintGame:con1}
		\pr{\HintVarBin \notin \GoodHintsFinal} \leq \frac1{\rnd^2}.
	\end{align}

	In addition, note that for every $\hint \in \GoodHintsFinal$ it holds that
	\begin{align}\label{OneRoundGame:lemma:BinomialHyperHintGame:con2}
		\pr{\size{\CurCoinsVar} > 7 \sqrt{\log \rnd \cdot \ml{i}} \mid \HintVarBin = \hint}
		&= \pr{\size{\CurCoinsVar} > 7 \sqrt{\log \rnd \cdot \ml{i}} \mid \HintVarBin = \hint}\\
		&\leq \pr{\size{\sigma(\CurCoinsVar)} > 6 \sqrt{\log \rnd \cdot \ml{i}} \mid \HintVarBin = \hint}\nonumber\\
		&= \pr{\CurCoinsVar \notin \GoodCoins \mid \HintVarBin = \hint}\nonumber\\
		&\leq \frac1{\rnd^{12}},\nonumber
	\end{align}
	where the first inequality holds since $\size{\CurCoinsVar - \sigma(\CurCoinsVar)} = \eps \cdot \ml{\curRound} < \sqrt{\log \rnd \cdot \ml{i}}$ and the last inequality holds by the definition of $\GoodHintsFinal$.
	The rest of proof immediately follows by \cref{OneRoundGame:lemma:BinomialHyperHintGame:con1}, \cref{OneRoundGame:lemma:BinomialHyperHintGame:con2} and by \cref{prop:BinHypHint:BiasEstimation}.
\end{proof}

\subsubsection{Bounding the Ratio for Processes with Vector Leakage}\label{subsec:ToolsForVectorLeakage}
In this section, we prove the following lemma which states a general property about the $\ratioo$ function, defined in \cref{subsubsec:ElementaryBound}, for any process $\Pc = (\ElementVar, \Value)$ with a vector leakage function $\HintFunc$. This property, together with \cref{lemma:ElementaryBound:GenericDiffBound}, will be used for proving \cref{lemma:BinomialProcessVectorHint,lemma:HyperlProcessVectorHint}.

\remove{
The following lemma bounds the expectation change of $\Value$ for any two-step process $(\ElementVar, \Value)$ with vector leakage. \Inote{Explain what is goind on, where used, building on which previously proved bounds, etc}}

\begin{lemma}\label{lemma:VctHint:RatioEstimation}
	Let $\vctBaseLen, \vctLenFact \in \N$, let $(\ElementVar, \Value)$ be a two-step process, let $\HintFunc$ be an $(\vctBaseLen,\vctLenFact)$-vector leakage function for $(\ElementVar, \Value)$ according to \cref{def:vectorHint}, and let $\ratioo$ be according to \cref{def:ratio:eq}.
	Then, for every $\hint \in \Supp(\HintVar)$, $\GoodElements \subseteq \Supp(\ElementVar)$ and $\element \in \GoodElements$, it holds that
	\begin{align*}
		\frac{1}{\ratioo_{\hint,\GoodElements}(\element)} \in \ex{\element' \la \ElementVar \mid \element' \in \GoodElements}{e^{\left(\eps_{\element'} - \eps_{\element}\right)\cdot \left(\w(\hint) - \frac{\eps_{\element'} + \eps_\element}{2}\cdot \vctTotalLen\right)}} \cdot (1 \pm \error),
	\end{align*}
	for $\eps_\element \eqdef \sBias{\vctBaseLen}{\pr{\Value = 1 \mid \ElementVar = \element}}$ and $\error \eqdef \max_{\element',\element'' \in \GoodElements, \z \in \pm\size{\eps^4_{\element'} - \eps^4_{\element''}}}\size{e^{\frac{\eps^3_{\element'} - \eps^3_{\element''}}{3} \cdot \w(\hint) + \z \cdot \vctTotalLen} - 1}$.
\end{lemma}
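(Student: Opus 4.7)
The plan is to compute the ratio inside the expectation in closed form using the product structure of a vector leakage function, then do a careful Taylor expansion in the two parameters $\eps_{\element}, \eps_{\element'}$ and match the first two orders to the exponent appearing in the statement, absorbing the remaining orders into $\error$.

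First I would rewrite
\begin{align*}
\frac{1}{\ratioo_{\hint,\GoodElements}(\element)}
=\frac{\Pr[\HintVar = \hint \mid \ElementVar \in \GoodElements]}{\Pr[\HintVar = \hint \mid \ElementVar = \element]}
=\ex{\element' \la \ElementVar \mid \element' \in \GoodElements}{\frac{\Pr[\HintVar = \hint \mid \ElementVar = \element']}{\Pr[\HintVar = \hint \mid \ElementVar = \element]}}.
\end{align*}
Because $\HintFunc$ is an $(\vctBaseLen,\vctLenFact)$-vector leakage function, conditioned on $\ElementVar = \element$ the coordinates of $\HintVar$ are i.i.d.\ $\Beroo{\eps_\element}$, so for any $\hint \in \oo^{\vctTotalLen}$ with weight $\w(\hint)$,
\begin{align*}
\frac{\Pr[\HintVar = \hint \mid \ElementVar = \element']}{\Pr[\HintVar = \hint \mid \ElementVar = \element]}
=\left(\frac{1+\eps_{\element'}}{1+\eps_\element}\right)^{\frac{\vctTotalLen+\w(\hint)}{2}}\left(\frac{1-\eps_{\element'}}{1-\eps_\element}\right)^{\frac{\vctTotalLen-\w(\hint)}{2}}.
\end{align*}

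Next I would take logarithms and apply the Taylor expansion $\log(1+x)=x-x^2/2+x^3/3-x^4/4+\dots$ separately to the two factors. Collecting terms, the linear and quadratic contributions combine to
\begin{align*}
(\eps_{\element'}-\eps_\element)\cdot\w(\hint)\;-\;\frac{\eps_{\element'}^{2}-\eps_\element^{2}}{2}\cdot\vctTotalLen
=(\eps_{\element'}-\eps_\element)\left(\w(\hint)-\frac{\eps_{\element'}+\eps_\element}{2}\vctTotalLen\right),
\end{align*}
which is exactly the exponent inside the expectation in the lemma. The cubic contribution is $\tfrac{\eps_{\element'}^{3}-\eps_\element^{3}}{3}\cdot\w(\hint)$, the quartic contribution is $-\tfrac{\eps_{\element'}^{4}-\eps_\element^{4}}{4}\cdot\vctTotalLen$, and all subsequent odd-power terms are multiplied by $\w(\hint)$ while even-power terms are multiplied by $\vctTotalLen$.

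To finish, I would write the ratio as $\exp(\text{main}+R)$, where $R$ is the sum of the cubic, quartic, and higher-order Taylor contributions, so that $\exp(R)=1\pm|\exp(R)-1|$. Taking expectations then gives
\begin{align*}
\frac{1}{\ratioo_{\hint,\GoodElements}(\element)}\in\ex{\element' \la \ElementVar \mid \element' \in \GoodElements}{e^{(\eps_{\element'}-\eps_\element)\left(\w(\hint)-\tfrac{\eps_{\element'}+\eps_\element}{2}\vctTotalLen\right)}}\cdot(1\pm\error),
\end{align*}
with $\error$ bounding $|\exp(R)-1|$ uniformly in $\element'\in\GoodElements$ (taking $\element''=\element$ in the $\max$ realizes the cubic-in-$\w(\hint)$ term, and the range $\z\in\pm|\eps^{4}_{\element'}-\eps^{4}_{\element''}|$ is exactly what is needed to dominate the quartic-in-$\vctTotalLen$ term). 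The main obstacle, and the only delicate point, is justifying that the \emph{tail} of the Taylor series beyond the quartic term is absorbed into the $\error$ expression as written: one has to argue that for every $k\ge 5$, the contribution $\tfrac{\eps_{\element'}^{k}-\eps_\element^{k}}{k}\cdot\w(\hint)$ (odd $k$) or $\cdot\vctTotalLen$ (even $k$) is dominated in absolute value by the cubic/quartic terms already captured inside the $\max$, so that a geometric-series bound on the residual does not blow up and the whole residual exponent can be placed inside the exponent $\tfrac{\eps^{3}_{\element'}-\eps^{3}_{\element''}}{3}\w(\hint)+\z\vctTotalLen$ after an appropriate choice of $\element''$ and $\z$ within the allowed range.
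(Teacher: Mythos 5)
Your approach is essentially the paper's: rewrite $1/\ratioo_{\hint,\GoodElements}(\element)$ as the expectation over $\element'\la\ElementVar\mid\GoodElements$ of the probability ratio, use the i.i.d.\ product structure of the vector leakage to get the closed form $\bigl(\tfrac{1+\eps_{\element'}}{1+\eps_\element}\bigr)^{\frac{\vctTotalLen+\w(\hint)}{2}}\bigl(\tfrac{1-\eps_{\element'}}{1-\eps_\element}\bigr)^{\frac{\vctTotalLen-\w(\hint)}{2}}$, take logs, Taylor-expand, and combine the linear and quadratic orders into $(\eps_{\element'}-\eps_\element)(\w(\hint)-\tfrac{\eps_{\element'}+\eps_\element}{2}\vctTotalLen)$. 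The one place you and the paper diverge is in how the tail is handled: the paper does \emph{not} carry out a term-by-term geometric-series argument over the orders $\ge 5$. Instead it applies a one-shot Taylor remainder estimate, $\log(1\pm\eps) \in \pm\eps - \tfrac{\eps^2}{2} \pm \tfrac{\eps^3}{3} \pm \eps^4$ (alternating-series bound for $\eps>0$, geometric-tail bound for $\eps<0$), so the quartic and all higher orders are absorbed in one step into the $\pm\eps^4$ term before the ratio is ever formed, and there is nothing left to sum.

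One caution about your closing claim that the range $\z\in\pm\size{\eps^4_{\element'}-\eps^4_{\element''}}$ is ``exactly what is needed'': after taking the ratio, the two $\pm\eps^4_{\element'}\vctTotalLen$ and $\pm\eps^4_{\element}\vctTotalLen$ remainder intervals combine by the triangle inequality to $\pm(\eps^4_{\element'}+\eps^4_{\element})\vctTotalLen$, not $\pm\size{\eps^4_{\element'}-\eps^4_{\element}}\vctTotalLen$. The two expressions differ precisely when $\eps_{\element'}\approx-\eps_\element$, where $\size{\eps^4_{\element'}-\eps^4_{\element}}$ is essentially zero while the true residual need not be. The paper's passage from the probability estimate to the ratio has the same implicit slack, and it is harmless in context: every downstream use of this lemma (e.g.\ in \cref{OneRoundGame:prop:BinVctHint:RatioEstimation} and \cref{OneRoundGame:lemma:HypVctHint:RatioEstimation}) immediately upper-bounds $\size{\z}$ by $\size{\eps_{\element'}}^4+\size{\eps_{\element''}}^4$ anyway, so only the sum-form bound is ever invoked. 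But as a matter of exactness, ``exactly'' is too strong, and your proof would be cleaner if you bounded the residual by $\pm(\eps^4_{\element'}+\eps^4_{\element})\vctTotalLen$ and observed that this is what the applications actually require.
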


\begin{proof}
	Note that for every $\element \in \Supp(\ElementVar)$ and $\hint \in \Supp(\HintVar)$, it holds that
	\begin{align}\label{OneRoundGame:lemma:RatioEstimation:probEstimation}
		\pr{\HintVar = \hint \mid \ElementVar = \element}
		&= \pr{f(\element) = \hint}\\
		&= 2^{-\vctTotalLen}\cdot (1 + \eps_\element)^{\frac12(\vctTotalLen + \w(\hint))} \cdot (1 - \eps_\element)^{\frac12(\vctTotalLen - \w(\hint))}\nonumber\\
		&\in 2^{-\vctTotalLen}\cdot e^{\left(\eps_\element - \frac{\eps^2_\element}{2} + \frac{\eps^3_\element}{3} \pm \eps^4_\element\right) \cdot \frac12(\vctTotalLen + \w(\hint))} \cdot e^{\left(-\eps_\element - \frac{\eps^2_\element}{2} - \frac{\eps^3_\element}{3} \pm \eps^4_\element\right) \cdot \frac12(\vctTotalLen - \w(\hint))}\nonumber\\
		&= 2^{-\vctTotalLen}\cdot e^{\eps_\element\cdot \w(\hint) - \frac{\eps^2_\element}{2}\cdot \vctTotalLen} \cdot e^{\frac{\eps^3_\element}{3} \cdot \w(\hint) \pm \eps^4_\element \cdot \vctTotalLen},\nonumber
	\end{align}
	where the third transition holds by the Taylor series $\ln(1+x) = x - \frac{x^2}{2} + \frac{x^3}{3} - \frac{x^4}{4} + \ldots$.
	
	Hence
	\begin{align}\label{OneRoundGame:lemma:RatioEstimation:ratioEstimation}
		\frac{1}{\ratioo_\hint(\element)}
		&= \frac{\pr{\HintVar = \hint \mid \ElementVar \in \GoodElements}}{\pr{\HintVar = \hint \mid \ElementVar = \element}}\\
		&= \ex{\element' \la \ElementVar \mid \element' \in \GoodElements}{\frac{\pr{\HintVar = \hint' \mid \ElementVar = \element'}}{\pr{\HintVar = \hint \mid \ElementVar = \element}}}\nonumber\\
		&\in \ex{\element' \la \ElementVar \mid \element' \in \GoodElements}{e^{(\eps_{\element'} - \eps_\element)\cdot \w(\hint) - \frac{\eps^2_{\element'} - \eps^2_{\element}}{2}\cdot \vctTotalLen} \cdot e^{\frac{\eps^3_{\element'} - \eps^3_\element}{3} \cdot \w(\hint) \pm \size{\eps^4_{\element'} - \eps^4_{\element}} \cdot \vctTotalLen}}\nonumber\\
		&= \ex{\element' \la \ElementVar \mid \element' \in \GoodElements}{e^{\left(\eps_{\element'} - \eps_{\element}\right)\cdot \left(\w(\hint) - \frac{\eps_{\element'} + \eps_\element}{2}\cdot \vctTotalLen\right)} \cdot \left(1 + \left(e^{\frac{\eps^3_{\element'} - \eps^3_\element}{3} \cdot \w(\hint) \pm \size{\eps^4_{\element'} - \eps^4_{\element}} \cdot \vctTotalLen} - 1\right)\right)}\nonumber\\
		&\in \ex{\element' \la \ElementVar \mid \element' \in \GoodElements}{e^{\left(\eps_{\element'} - \eps_{\element}\right)\cdot \left(\w(\hint) - \frac{\eps_{\element'} + \eps_\element}{2}\cdot \vctTotalLen\right)}} \cdot \left(1 \pm \error\right),\nonumber
	\end{align}
	where the third transition holds by \cref{OneRoundGame:lemma:RatioEstimation:probEstimation}.
\end{proof}

\subsubsection{Bound on Binomial Process with Vector Leakage}\label{subsubsec:BoundingBinomialProcessVectorHint}
In this section we prove \cref{lemma:BinomialProcessVectorHint}.
Let $\vctBaseLen, \vctLenFact \in \N$, $\rnd \in \N$, $\curRound \in [\rnd]$, $\prevCoins \in \Z$ and $\eps \in [-1,1]$ that satisfy the assumptions of \cref{lemma:BinomialProcessVectorHint}.
In the following, we assume that $\rnd$ is larger than some universal constant (we can choose $\gamma$ and $\varphi$ to be large enough on small values of $m$), and we focus on the $(\rnd, \curRound, \ells_{\rnd}, \prevCoins, \eps)$-binomial process $(\ElementVar = \CurCoinsVar, \Value)$ with $(\vctBaseLen,\vctLenFact)$-vector leakage function $\HintFunc$.
We let $\GoodCoins \eqdef \set{\curCoins \in \Supp(\CurCoinsVar) \mid \size{\sigma(\curCoins)} \leq 6\cdot \sqrt{\log \rnd \cdot \ml{\curRound}}}$ for $\sigma(\curCoins) \eqdef \curCoins - \ex{\curCoins' \la \CurCoinsVar}{\curCoins'} = \curCoins - \eps \cdot \ml{\curRound}$, and $\GoodHints \eqdef \set{\hint \in \Supp(\HintVarBin) \mid \size{\sigma(\hint)} \leq 4\cdot \sqrt{\log \rnd \cdot \vctTotalLen}}$ for $\sigma(\hint) \eqdef \w(\hint) - \ex{\curCoins \la \CurCoinsVar,\hint' \la \HintFunc(\curCoins)}{\w(\hint')} = \w(\hint) - \ex{\curCoins \la \CurCoinsVar}{\eps_{\curCoins}}\cdot \vctTotalLen$, where $\eps_{\curCoins} = \sBias{\vctBaseLen}{\pr{\Value = 1 \mid \CurCoinsVar = \curCoins}}$.

The following proposition, which wraps the main analysis of this section, bounds how much $\ratioo_{\hint}(\element)$ can be far from one.

\begin{proposition}\label{OneRoundGame:prop:BinVctHint:RatioEstimation}
	For every $\hint \in \GoodHints$ and $\curCoins \in \GoodCoins$, it holds that
	\begin{align*}
	\size{1-\ratioo_{\hint}(\curCoins)} \leq \const \sqrt{\log\rnd \cdot \vctLenFact} \cdot \frac{\size{\sigma(\curCoins)} + \sqrt{\ml{\curRound}}}{\sqrt{\ms{\curRound}}},
	\end{align*}
	for some universal constant $\const > 0$.
\end{proposition}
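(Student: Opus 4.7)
The plan is to apply \cref{lemma:VctHint:RatioEstimation} with $\GoodElements=\GoodCoins$, and then carefully estimate the inner exponent and the error term $\error$ given there. The key quantitative input is that $\eps_\curCoins$ is an (almost) affine function of $\curCoins$, so that differences $\eps_{\curCoins'}-\eps_\curCoins$ can be controlled in terms of $\sigma(\curCoins'),\sigma(\curCoins)$.

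First, I will use \cref{prop:epsDiff} to show that for every $\curCoins\in \GoodCoins$,
\[
\eps_\curCoins \;=\; \sBias{\vctBaseLen}{\vBeroo{\ms{\curRound+1},\eps}(-\prevCoins-\curCoins)} \;\in\; \frac{\prevCoins+\curCoins+\eps\cdot\ms{\curRound+1}}{\sqrt{\ms{\curRound+1}\cdot\vctBaseLen}}\;\pm\;O\!\left(\frac{\log^{1.5}\rnd}{\sqrt{\ms{\curRound+1}\cdot\vctBaseLen}}\right).
\]
This uses the bounds $\size{\eps}\le 4\sqrt{\log\rnd/\ms{1}}$, $\size{\prevCoins+\eps\ms{\curRound}}\le 4\sqrt{\log\rnd\cdot\ms{\curRound}}$ (assumptions on the lemma), together with $\size{\curCoins}\le 7\sqrt{\log\rnd\cdot\ml{\curRound}}$ (from $\curCoins\in\GoodCoins$). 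Consequently, $\size{\eps_\curCoins}=O(\sqrt{\log\rnd/\vctBaseLen})$ and, more importantly,
\[
\eps_{\curCoins'}-\eps_{\curCoins} \;=\; \frac{\sigma(\curCoins')-\sigma(\curCoins)}{\sqrt{\ms{\curRound+1}\cdot\vctBaseLen}}\;\pm\;O\!\left(\frac{\log^{1.5}\rnd}{\sqrt{\ms{\curRound+1}\cdot\vctBaseLen}}\right).
\]

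Next, I will estimate the exponent inside the expectation of \cref{lemma:VctHint:RatioEstimation}. Writing $\mu=\ex{\curCoins''\la\CurCoinsVar\mid \curCoins''\in\GoodCoins}{\eps_{\curCoins''}}\cdot \vctTotalLen$, the centering $\w(\hint)-\frac{\eps_{\curCoins'}+\eps_\curCoins}{2}\cdot\vctTotalLen$ splits as $\sigma(\hint)+(\ex{}{\eps_\cdot}-\tfrac12(\eps_{\curCoins'}+\eps_\curCoins))\cdot\vctTotalLen$ up to a negligible drift (by $\sigma(\cdot)$'s definition). Using $\size{\sigma(\hint)}\le 4\sqrt{\log\rnd\cdot\vctTotalLen}$ (as $\hint\in\GoodHints$), $\size{\sigma(\curCoins)},\size{\sigma(\curCoins')}\le 6\sqrt{\log\rnd\cdot\ml{\curRound}}$, and \cref{prop:binomTailExpectation}, the magnitude of this centering is $O(\sqrt{\log\rnd\cdot\vctTotalLen})+O((\size{\sigma(\curCoins)}+\sqrt{\ml{\curRound}})\sqrt{\vctLenFact/\ms{\curRound+1}}\cdot \sqrt{\vctTotalLen/\vctBaseLen})$, which simplifies to $O(\sqrt{\log\rnd\cdot\vctTotalLen})$. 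Multiplying by $\size{\eps_{\curCoins'}-\eps_\curCoins}\le O((\size{\sigma(\curCoins')}+\size{\sigma(\curCoins)})/\sqrt{\ms{\curRound+1}\cdot\vctBaseLen})$ gives an exponent of magnitude
\[
O\!\left(\sqrt{\log\rnd\cdot\vctLenFact}\cdot\frac{\size{\sigma(\curCoins')}+\size{\sigma(\curCoins)}}{\sqrt{\ms{\curRound+1}}}\right),
\]
which by the bounds on $\sigma(\cdot)$ and Assumption~\ref{lemma:BinomialProcessVectorHint:asmp:kBound} is at most a small constant (say $\leq 1/2$). Thus $e^x\in 1\pm 2x$ can be used.

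Then I will bound the error $\error$ from \cref{lemma:VctHint:RatioEstimation}. Using $\size{\eps_{\curCoins''}}=O(\sqrt{\log\rnd/\vctBaseLen})$ and $\size{\w(\hint)}=O(\sqrt{\log\rnd\cdot\vctTotalLen})+O(\vctTotalLen\cdot\sqrt{\log\rnd/\vctBaseLen})$, both $\frac{\eps^3_{\curCoins'}-\eps^3_{\curCoins''}}{3}\cdot\w(\hint)$ and $\size{\eps^4_{\curCoins'}-\eps^4_{\curCoins''}}\cdot\vctTotalLen$ are $O((\log\rnd)^{2}\vctLenFact/\vctBaseLen)$. Assumption~\ref{lemma:BinomialProcessVectorHint:asmp:nBound} ($\vctBaseLen\ge \ms{1}$) and Assumption~\ref{lemma:BinomialProcessVectorHint:asmp:kBound} together force this to be much smaller than the target bound, so $\error$ is negligible relative to the main term.

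Finally, expanding $1/\ratioo_{\hint}(\curCoins)\in 1\pm O(\sqrt{\log\rnd\cdot\vctLenFact})\cdot \ex{\curCoins'}{(\size{\sigma(\curCoins')}+\size{\sigma(\curCoins)})/\sqrt{\ms{\curRound+1}}}\cdot(1\pm\error)$, invoking \cref{prop:binomTailExpectation} to bound $\ex{\size{\sigma(\curCoins')}}\le\sqrt{\ml{\curRound}}$, and using $\ms{\curRound+1}=\Theta(\ms{\curRound})$, yields
\[
\Big|1-\tfrac{1}{\ratioo_{\hint}(\curCoins)}\Big|\;\le\; \const'\sqrt{\log\rnd\cdot\vctLenFact}\cdot\frac{\size{\sigma(\curCoins)}+\sqrt{\ml{\curRound}}}{\sqrt{\ms{\curRound}}}.
\]
Since this upper bound is $\le 1/2$, the inversion $\size{1-\ratioo_{\hint}(\curCoins)}\le 2\size{1-1/\ratioo_{\hint}(\curCoins)}$ gives the desired bound with constant $\const=2\const'$. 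The main obstacle will be the careful tracking of the multiple error terms (the approximation of $\eps_\curCoins$, the truncation to $\GoodCoins$ when the expectation $\ex{\curCoins''\in\GoodCoins}{\eps_{\curCoins''}}$ is used as the centering, and the $\error$ from \cref{lemma:VctHint:RatioEstimation}) and verifying that each is swamped by the main $O(\sqrt{\log\rnd\cdot\vctLenFact})\cdot(\size{\sigma(\curCoins)}+\sqrt{\ml{\curRound}})/\sqrt{\ms{\curRound}}$ quantity, which critically relies on the scaling Assumptions~\ref{lemma:BinomialProcessVectorHint:asmp:nBound} and \ref{lemma:BinomialProcessVectorHint:asmp:kBound}.
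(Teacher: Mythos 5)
Your plan follows the paper's proof step for step: you apply \cref{lemma:VctHint:RatioEstimation} with $\GoodElements=\GoodCoins$, use \cref{prop:epsDiff} to obtain the affine approximation $\eps_{\curCoins}\approx\frac{\prevCoins+\curCoins+\eps\ms{\curRound+1}}{\sqrt{\ms{\curRound+1}\vctBaseLen}}$, bound the error term $\error$ using the bounds on $\size{\eps_{\curCoins}}$ and assumptions \ref{lemma:BinomialProcessVectorHint:asmp:nBound}--\ref{lemma:BinomialProcessVectorHint:asmp:kBound}, control the exponent via the centered decomposition of $\w(\hint)$, invoke $e^x\in 1\pm 2\size{x}$, take expectations using \cref{prop:binomTailExpectation}, and invert with $\frac1{1\pm a}\subseteq 1\pm 2a$ at the end — exactly the chain of estimations in the paper. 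The only cosmetic difference is your choice to center with the conditional expectation $\ex{\curCoins''\mid\GoodCoins}{\eps_{\curCoins''}}$ rather than the unconditional one used in the paper's definition of $\sigma(\hint)$, but as you note the discrepancy is absorbed by the $1/\rnd^2$ tail, so this is not a substantive deviation.
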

\begin{proof}
	Fix $\hint \in \GoodHints$ and $\curCoins \in \GoodCoins$.
	By \cref{lemma:VctHint:RatioEstimation}, it holds that
	\begin{align*}
	\frac{1}{\ratioo_\hint(\curCoins)} \in \ex{\curCoins' \la \CurCoinsVar \mid \curCoins' \in \GoodCoins}{e^{\left(\eps_{\curCoins'} - \eps_{\curCoins}\right)\cdot \left(\w(\hint) - \frac{\eps_{\curCoins'} + \eps_\curCoins}{2}\cdot \vctTotalLen\right)}} \cdot (1 \pm \error),
	\end{align*}
	where $\error = \max_{\curCoins',\curCoins'' \in \GoodCoins, \z \in \pm\size{\eps^4_{\curCoins'} - \eps^4_{\curCoins''}}}\size{e^{\frac{\eps^3_{\curCoins'} - \eps^3_{\curCoins''}}{3} \cdot \w(\hint) + \z \cdot \vctTotalLen} - 1}$.
	
	Since $\eps_{\curCoins'} = \sBias{\vctBaseLen}{\pr{\Value = 1 \mid \CurCoinsVar = \curCoins'}}$ for every $\curCoins' \in \GoodCoins$,
	\cref{prop:epsDiff} yields that
	\begin{align}\label{OneRoundGame:lemma:BinVctHint:RatioEstimation:eq1}
	\eps_{\curCoins'} \in \frac{\eps \cdot \ms{\curRound+1} + \prevCoins + \curCoins'}{\sqrt{\vctBaseLen \cdot \ms{\curRound+1}}} \pm \frac{\log^2 \rnd}{2\cdot \sqrt{\vctBaseLen \cdot \ms{\curRound+1}}},
	\end{align}
	for every $\curCoins' \in \GoodCoins$, which yields that
	\begin{align}
	\size{\eps_{\curCoins'}}
	&\leq \frac{\size{\eps} \cdot \ms{\curRound+1} + \size{\prevCoins} + \size{\curCoins'}}{\sqrt{\vctBaseLen \cdot \ms{\curRound+1}}} + \frac{\log^2 \rnd}{2\cdot \sqrt{\vctBaseLen \cdot \ms{\curRound+1}}}\\
	&\leq 10\cdot \sqrt{\frac{\log \rnd}{\vctBaseLen}},\nonumber
	\end{align}
	where the second inequality holds by the bound on $\size{\eps}, \size{\prevCoins}$ (assumptions \ref{lemma:BinomialProcessVectorHint:asmp:epsBound} and \ref{lemma:BinomialProcessVectorHint:asmp:yBound} of \cref{lemma:BinomialProcessVectorHint}) and by the bound on $\size{\curCoins'}$.
	Therefore, for every $\curCoins',\curCoins'' \in \GoodCoins$ and $\z \in \pm \size{\eps^4_{\curCoins'} - \eps^4_{\curCoins''}}$, it holds that
	\begin{align}\label{OneRoundGame:lemma:BinVctHint:RatioEstimation:eq2}
	\lefteqn{\size{\frac{\eps^3_{\curCoins'} - \eps^3_{\curCoins''}}{3} \cdot \w(\hint) + \z \cdot \vctTotalLen}}\\
	&\leq \frac{\size{\eps_{\curCoins'}}^3 + \size{\eps_{\curCoins''}}^3}{3} \cdot \w(\hint) + \size{\z} \cdot \vctTotalLen\nonumber\\
	&\leq \frac{2000}3 \cdot \frac{\log^{1.5} \rnd}{\vctBaseLen^{1.5}} \cdot \left(10\cdot \sqrt{\frac{\log \rnd}{\vctBaseLen}}\cdot \vctTotalLen + 4\sqrt{\log \rnd \cdot \vctTotalLen}\right) + 20000 \cdot \frac{\log^2 \rnd}{\vctBaseLen^2} \cdot \vctTotalLen\nonumber\\
	&\leq 30000 \cdot \log^2 \rnd \cdot \frac{\vctLenFact}{\vctBaseLen}\nonumber\\
	&\leq 1.\nonumber
	\end{align}
	The second inequality holds by the bounds on $\size{\eps_{\curCoins'}}$, $\size{\eps_{\curCoins''}}$ and $\size{\w(\hint)}$,
	and the last one by assumptions \ref{lemma:BinomialProcessVectorHint:asmp:nBound} and \ref{lemma:BinomialProcessVectorHint:asmp:kBound} of \cref{lemma:BinomialProcessVectorHint} and by assuming that $\rnd$ is larger than some universal constant.
	This yields that
	\begin{align}\label{OneRoundGame:lemma:BinVctHint:RatioEstimation:eq3}
	\error
	&= \max_{\curCoins',\curCoins'' \in \GoodCoins, \z \in \pm\size{\eps^4_{\curCoins'} - \eps^4_{\curCoins''}}}\size{e^{\frac{\eps^3_{\curCoins'} - \eps^3_{\curCoins''}}{3} \cdot \w(\hint) + \z \cdot \vctTotalLen} - 1}\\
	&\leq \max_{\curCoins',\curCoins'' \in \GoodCoins, \z \in \pm\size{\eps^4_{\curCoins'} - \eps^4_{\curCoins''}}} 2\cdot\size{\frac{\eps^3_{\curCoins'} - \eps^3_{\curCoins''}}{3} \cdot \w(\hint) + \z \cdot \vctTotalLen}\nonumber\\
	&\leq 60000 \cdot \log^2 \rnd \cdot \frac{\vctLenFact + \sqrt{\vctBaseLen}}{\vctBaseLen},\nonumber
	\end{align}
	where the first inequality holds since $e^{a} \in 1 \pm 2\size{a}$ for every $a \in [-1,1]$
	and the second one holds by \cref{OneRoundGame:lemma:BinVctHint:RatioEstimation:eq2}.
	
	In addition, note that
	\begin{align}\label{OneRoundGame:lemma:BinVctHint:RatioEstimation:eq4}
	\w(\hint)
	&= \ex{\curCoins' \la \CurCoinsVar}{\eps_{\curCoins'}} \cdot \vctTotalLen + \sigma(\hint)\\
	&\in \ex{\curCoins' \la \CurCoinsVar}{\eps_{\curCoins'}} \cdot \vctTotalLen \pm 4\cdot \sqrt{\log \rnd \cdot \vctTotalLen}\nonumber\\
	&= \frac{\eps \cdot \ms{\curRound+1} + \prevCoins + \eps \cdot \ml{\curRound}}{\sqrt{\vctBaseLen \cdot \ms{\curRound+1}}} \cdot \vctTotalLen \pm \frac{\log^2 \rnd}{2\cdot \sqrt{\vctBaseLen \cdot \ms{\curRound+1}}} \pm 4\cdot \sqrt{\log \rnd \cdot \vctTotalLen}\nonumber\\
	&\in \sqrt{\frac{\vctBaseLen}{\ms{\curRound+1}}} \cdot \left(\frac{\eps\cdot\ms{\curRound+1} + \prevCoins + \eps \cdot \ml{\curRound}}{\vctBaseLen}\cdot \vctTotalLen \pm 5\cdot \sqrt{\log \rnd \cdot \ms{\curRound+1}\cdot \vctLenFact}\right),\nonumber
	\end{align}
	where the second equality holds by \cref{OneRoundGame:lemma:BinVctHint:RatioEstimation:eq1}.
	Therefore, for every $\curCoins' \in \GoodCoins$, it holds that
	\begin{align}\label{OneRoundGame:lemma:BinVctHint:RatioEstimation:eq5}
	\lefteqn{\left(\eps_{\curCoins'} - \eps_{\curCoins}\right)\cdot \left(\w(\hint) - \frac{\eps_{\curCoins'} + \eps_\curCoins}{2}\cdot \vctTotalLen\right)}\\
	&\in \frac{\curCoins' - \curCoins \pm \log^2 \rnd}{\sqrt{\vctBaseLen \cdot \ms{\curRound+1}}} \cdot \left( \w(\hint) - \frac{2\eps \cdot \ms{\curRound+1} + 2\prevCoins + \curCoins + \curCoins' \pm \log^2 \rnd}{2\sqrt{\vctBaseLen \cdot \ms{\curRound+1}}} \cdot \vctTotalLen \right)\nonumber\\
	&= \frac{\sigma(\curCoins') - \sigma(\curCoins) \pm \log^2 \rnd}{\ms{\curRound+1}} \cdot \left( \sqrt{\frac{\ms{\curRound+1}}{\vctBaseLen}}\cdot \w(\hint) - \frac{2\eps \cdot \ms{\curRound+1} + 2\prevCoins + \curCoins + \curCoins' \pm \log^2 \rnd}{2\vctBaseLen} \cdot \vctTotalLen \right)\nonumber\\
	&\subseteq \frac{\sigma(\curCoins') - \sigma(\curCoins) \pm \log^2 \rnd}{\ms{\curRound+1}} \cdot \left( \pm 5\cdot \sqrt{\log \rnd \cdot \ms{\curRound+1}\cdot \vctLenFact} - \frac12 \left(\sigma(\curCoins) + \sigma(\curCoins') \pm \log^2 \rnd\right) \cdot \vctLenFact \right)\nonumber\\
	&\subseteq \frac{\sigma(\curCoins') - \sigma(\curCoins) \pm \log^2 \rnd}{\ms{\curRound+1}} \cdot \left( \pm 5\cdot \sqrt{\log \rnd \cdot \ms{\curRound+1}\cdot \vctLenFact} \pm 4\cdot \sqrt{\log \rnd \cdot \ml{\curRound}} \cdot \vctLenFact \right)\nonumber\\
	&\subseteq \frac{\sigma(\curCoins') - \sigma(\curCoins) \pm \log^2 \rnd}{\sqrt{\ms{\curRound+1}}} \cdot \left( \pm 5\cdot \sqrt{\log \rnd \cdot \vctLenFact} \pm 4\cdot \frac{\vctLenFact \cdot \sqrt{\log \rnd} }{\sqrt{\rnd - \curRound}} \right)\nonumber\\
	&\subseteq \frac{\sigma(\curCoins') - \sigma(\curCoins) \pm \log^2 \rnd}{\sqrt{\ms{\curRound+1}}} \cdot \left( \pm 9\cdot \sqrt{\log \rnd \cdot \vctLenFact} \right),\nonumber
	\end{align}
	where the first transition holds by \cref{OneRoundGame:lemma:BinVctHint:RatioEstimation:eq1}, the third one holds by \cref{OneRoundGame:lemma:BinVctHint:RatioEstimation:eq4}, the fourth one holds since $\curCoins,\curCoins' \in \GoodCoins$ and the fifth one holds since $\ms{\curRound + 1} \leq (\rnd - \curRound) \cdot \ml{\curRound}$.
	
	By the bounds on $\size{\sigma(\curCoins)}, \size{\sigma(\curCoins')}, \size{\eps_{\curCoins}}, \size{\eps_{\curCoins'}}$ and by assumption \ref{lemma:BinomialProcessVectorHint:asmp:kBound} of \cref{lemma:BinomialProcessVectorHint}, it holds that\remove{\Enote{verify it and add more explanation}}
	\begin{align*}
	\size{\left(\eps_{\curCoins'} - \eps_{\curCoins}\right)\cdot \left(\w(\hint) - \frac{\eps_{\curCoins'} + \eps_\curCoins}{2}\cdot \vctTotalLen\right)}
	\leq 1,
	\end{align*}
	for every $\curCoins' \in \GoodCoins$.
	Hence,
	\begin{align}\label{OneRoundGame:lemma:BinVctHint:RatioEstimation:eq6}
	\frac{1}{\ratioo_\hint(\curCoins)}
	&\in \ex{\curCoins' \la \CurCoinsVar \mid \curCoins' \in \GoodCoins}{e^{\left(\eps_{\curCoins'} - \eps_{\curCoins}\right)\cdot \left(\w(\hint) - \frac{\eps_{\curCoins'} + \eps_\curCoins}{2}\cdot \vctTotalLen\right)}} \cdot (1 \pm \error)\\
	&\subseteq \ex{\curCoins' \la \CurCoinsVar \mid \curCoins' \in \GoodCoins}{1 \pm 18\cdot \frac{\size{\sigma(\curCoins)} + \size{\sigma(\curCoins')} + \log^2 \rnd}{\sqrt{\ms{\curRound+1}}} \cdot \sqrt{\log \rnd \cdot \vctLenFact}}\cdot (1 \pm \error)\nonumber\\
	&\subseteq \left(1 \pm 18\cdot \frac{\size{\sigma(\curCoins)} + \sqrt{\ml{\curRound}} + \log^2 \rnd}{\sqrt{\ms{\curRound+1}}} \cdot \sqrt{\log \rnd \cdot \vctLenFact}\right)\cdot \left(1 \pm 60000 \cdot \log^2 \rnd \cdot \frac{\vctLenFact }{\vctBaseLen}\right)\nonumber\\
	&\subseteq \left(1 \pm 19\cdot \frac{\size{\sigma(\curCoins)} + \sqrt{\ml{\curRound}}}{\sqrt{\ms{\curRound+1}}} \cdot \sqrt{\log \rnd \cdot \vctLenFact}\right),\nonumber
	\end{align}
	where the second transition holds by \cref{OneRoundGame:lemma:BinVctHint:RatioEstimation:eq5} and since $e^{a} \in 1 \pm 2\size{a}$ for every $a \in [-1,1]$, the third one holds by \cref{prop:binomTailExpectation} and the last one holds by assumptions \ref{lemma:BinomialProcessVectorHint:asmp:iBound}, \ref{lemma:BinomialProcessVectorHint:asmp:nBound}, \ref{lemma:BinomialProcessVectorHint:asmp:kBound} of \cref{lemma:BinomialProcessVectorHint} and by assuming that $\rnd$ is larger than some universal constant, which yields that $\sqrt{\frac{\ml{\curRound}}{\ms{\curRound}}} \geq \sqrt{\frac{1}{\rnd-\curRound}} = \omega(\log^2 \rnd \cdot \frac{\vctLenFact + \sqrt{\vctBaseLen}}{\vctBaseLen})$
	
	By assumption \ref{lemma:BinomialProcessVectorHint:asmp:kBound} of \cref{lemma:BinomialProcessVectorHint} and since $\curCoins \in \GoodCoins$ and $\frac1{1\pm a} \subseteq 1 \pm 2a$ for every $a \in (-0.5,0.5)$, we deduce from \cref{OneRoundGame:lemma:BinVctHint:RatioEstimation:eq6} that
	\begin{align}
	\ratioo_\hint(\curCoins) \in \left(1 \pm 38\sqrt{\log \rnd \cdot \vctLenFact} \cdot \frac{\size{\sigma(\curCoins)} + \sqrt{\ml{\curRound}}}{\sqrt{\ms{\curRound+1}}}\right).
	\end{align}
	Thus
	\begin{align*}
	\size{1-\ratioo_\hint(\curCoins)} \leq 38\sqrt{\log \rnd \cdot \vctLenFact} \cdot \frac{\size{\sigma(\curCoins)} + \sqrt{\ml{\curRound}}}{\sqrt{\ms{\curRound+1}}}
	\end{align*}
\end{proof}

The following proposition combines the analysis done in \cref{OneRoundGame:prop:BinVctHint:RatioEstimation} with the main tool of \cref{subsubsec:ToolsForBinomialProcess} in order to bound the expectation change of $\Value$.

\begin{proposition}\label{OneRoundGame:prop:BinVctHint:BiasEstimation}
	For every $\hint \in \GoodHints$ such that $\pr{\CurCoinsVar \notin \GoodCoins \mid \HintVarBin = \hint} \leq \frac1{\rnd^{12}}$, it holds that
	\begin{align*}
		\frac{\size{\pr{\Value = 1} - \pr{\Value = 1 \mid \HintVarBin = \hint}}}{\pr{\NextCoinsVar = -(\prevCoins + 1)}} \leq \const\sqrt{\log \rnd \cdot \vctLenFact} \cdot \sqrt{\frac{\ml{\curRound}}{\rnd - \curRound + 1}}
	\end{align*}
\end{proposition}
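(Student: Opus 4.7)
My plan is to prove this by direct combination of two already-established ingredients, in exact analogy with the hypergeometric counterpart \cref{prop:BinHypHint:BiasEstimation}. The first ingredient is \cref{OneRoundGame:prop:BinVctHint:RatioEstimation}, which under the hypotheses $\hint \in \GoodHints$ and $\curCoins \in \GoodCoins$ gives a pointwise bound on the ratio deviation of the form
\[
\size{1-\ratioo_{\hint}(\curCoins)} \leq \const' \sqrt{\log\rnd \cdot \vctLenFact} \cdot \frac{\size{\sigma(\curCoins)} + \sqrt{\ml{\curRound}}}{\sqrt{\ms{\curRound}}}.
\]
The second ingredient is \cref{lemma:BoundingBinomialProcess:ConnectionToLP}, which converts any such pointwise ratio bound, together with the $\frac{1}{\rnd^{12}}$ conditional tail control on $\CurCoinsVar \notin \GoodCoins$, into the desired prediction-advantage bound of the form $\const'' \cdot (\ratioBoundFactor + 1) \cdot \sqrt{\ml{\curRound}/(\rnd-\curRound+1)} \cdot \pr{\NextCoinsVar = -(\prevCoins+1)}$.

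Concretely, I would first verify that the parameter regime of \cref{lemma:BinomialProcessVectorHint} (which governs $\vctBaseLen, \vctLenFact, \rnd, \curRound, \prevCoins, \eps$) implies the assumptions of \cref{lemma:BoundingBinomialProcess:ConnectionToLP} on $\rnd, \curRound, \prevCoins, \eps$; this is immediate since the relevant conditions (bound on $\eps$, bound on $\curRound$, bound on $\prevCoins + \eps \cdot \ms{\curRound}$, support condition on $-(\prevCoins+1)$) are identical. Next, by hypothesis on $\hint$ we have $\pr{\CurCoinsVar \notin \GoodCoins \mid \HintVarBin = \hint} \leq \frac{1}{\rnd^{12}}$, which matches the first assumption of the reduction lemma. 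Finally, \cref{OneRoundGame:prop:BinVctHint:RatioEstimation} supplies precisely the second assumption with $\ratioBoundFactor = \const' \sqrt{\log\rnd \cdot \vctLenFact}$. Plugging this value of $\ratioBoundFactor$ into the conclusion of \cref{lemma:BoundingBinomialProcess:ConnectionToLP} yields
\[
\PredictAdv_{\Pc,\HintFunc}(\hint) \leq \const \sqrt{\log \rnd \cdot \vctLenFact} \cdot \sqrt{\frac{\ml{\curRound}}{\rnd-\curRound+1}} \cdot \pr{\NextCoinsVar = -(\prevCoins+1)},
\]
and dividing by $\pr{\NextCoinsVar = -(\prevCoins+1)}$ gives exactly the stated inequality (recall $\PredictAdv_{\Pc,\HintFunc}(\hint) = \size{\pr{\Value=1} - \pr{\Value=1 \mid \HintVar = \hint}}$).

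There is essentially no technical obstacle left at this point: all the delicate work has been absorbed into the two invoked results. The hard part was carried out earlier---the ratio estimate required the Taylor-expansion bookkeeping in \cref{lemma:VctHint:RatioEstimation} and the careful $\log^2\rnd \cdot \vctLenFact / \vctBaseLen = o(1)$ control in \cref{OneRoundGame:prop:BinVctHint:RatioEstimation}, while the reduction lemma \cref{lemma:BoundingBinomialProcess:ConnectionToLP} internally handles the $\GoodCoins$-truncation error and relies on \cref{ProvingLPLemma:lemma:GameValueDiffOneRound} to turn pointwise ratio bounds into expectation-change bounds normalized by $\pr{\NextCoinsVar = -(\prevCoins+1)}$. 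Thus the final write-up should be a single-line proof of the form ``Immediate from \cref{OneRoundGame:prop:BinVctHint:RatioEstimation} and \cref{lemma:BoundingBinomialProcess:ConnectionToLP}, setting $\ratioBoundFactor = \const' \sqrt{\log\rnd \cdot \vctLenFact}$.''
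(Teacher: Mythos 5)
Your proposal is correct and matches the paper's own proof exactly: the paper also proves this proposition by immediately combining \cref{OneRoundGame:prop:BinVctHint:RatioEstimation} with \cref{lemma:BoundingBinomialProcess:ConnectionToLP}. You have simply made explicit the hypothesis-matching and the choice $\ratioBoundFactor = \const'\sqrt{\log\rnd\cdot\vctLenFact}$ that the paper leaves to the reader.
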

\begin{proof}
	The proof immediately follows by \cref{OneRoundGame:prop:BinVctHint:RatioEstimation,lemma:BoundingBinomialProcess:ConnectionToLP}.
\end{proof}

We are finally ready for proving \cref{lemma:BinomialProcessVectorHint}.

\mparagraph{Proving \texorpdfstring{\cref{lemma:BinomialProcessVectorHint}}{Main Lemma}}\label{subsubsec:ProvingBinomialVectorHintGameLemma}
\begin{proof}
	Let $\GoodHintsFinal \eqdef \set{\hint \in \GoodHints \mid \pr{\CurCoinsVar \notin \GoodCoins \mid \HintVarBin = \hint} \leq \frac1{\rnd^{12}}}$.
	Using similar arguments as in the proof of \cref{lemma:BinomialProcessHyperHint}, it holds that
	\begin{align}\label{OneRoundGame:lemma:BinomialVectorHintGame:eq1}
		\ppr{\hint \la \HintVarBin}{\pr{\CurCoinsVar \notin \GoodCoins \mid \HintVarBin = \hint} > \frac1{\rnd^{12}}} \leq \frac1{2\rnd^2},
	\end{align}
	In addition, Hoeffding's inequality (\cref{claim:Hoeffding}) yields that
	\begin{align}\label{OneRoundGame:lemma:BinomialVectorHintGame:eq2}
		\pr{\HintVarBin \notin \GoodHints}
		\leq \frac1{2\rnd^2}
	\end{align}
	Therefore, we conclude from \cref{OneRoundGame:lemma:BinomialVectorHintGame:eq1,OneRoundGame:lemma:BinomialVectorHintGame:eq2} that
	\begin{align}\label{OneRoundGame:lemma:BinomialVectorHintGame:con1}
		\pr{\HintVarBin \notin \GoodHintsFinal}
		\leq \frac1{\rnd^2}
	\end{align}
	
	In addition, as proven in \cref{lemma:BinomialProcessHyperHint}, for every $\hint \in \GoodHintsFinal$ it holds that
	\begin{align}\label{OneRoundGame:lemma:BinomialVectorHintGame:con2}
		\pr{\size{\CurCoinsVar} > 7 \sqrt{\log \rnd \cdot \rnd} \mid \HintVarBin = \hint} \leq \frac1{\rnd^{12}},
	\end{align}
	
	The  proof now follows by \cref{OneRoundGame:lemma:BinomialVectorHintGame:con1,OneRoundGame:lemma:BinomialVectorHintGame:con2,OneRoundGame:prop:BinVctHint:BiasEstimation}.
\end{proof}

\subsubsection{Bound on Hypergeometric Process with Vector Leakage}\label{subsubsec:HyperOneRoundGameVectorHint}
In this section we prove \cref{lemma:HyperlProcessVectorHint}.
Let $\vctBaseLen, \vctLenFact, \hypVctLenFact \in \N$ and $\delta \in [0,1]$ that satisfy the assumptions of \cref{lemma:HyperlProcessVectorHint},
assume that $\delta \in [\frac1{\vctBaseLen^4}, 1-\frac1{\vctBaseLen^4}]$ and let $\eps \eqdef \sBias{\vctBaseLen}{\delta}$ (note that by \cref{claim:Hoeffding}, $\size{\eps} \leq 4\sqrt{\frac{\log \vctBaseLen}{\vctBaseLen}}$). We  assume \wlg that $\vctBaseLen$ is larger than some universal constant (otherwise, the proof trivially holds by taking large enough $\const$). 

Let $(\ElementVar, \Value)$ be a $\bigl(\vctBaseLen,\hypVctLenFact, \delta\bigr)$-hypergeometric process with $(\vctBaseLen,\vctLenFact)$-vector leakage function $\HintFunc$, as defined in \cref{def:VectorLeakageFunction}.

Let $\GoodVectors = \set{\vct \in \oo^{\hypTotalVctLen} \mid \size{\sigma(\vct)} \leq 4\sqrt{\log \vctBaseLen \cdot \hypTotalVctLen}}$, for $\sigma(\vct) \eqdef \w(\vct) - \ex{\vct' \la (\Beroo{\eps})^{\hypTotalVctLen}}{\w(\vct')} = \w(\vct) - \eps \cdot \hypTotalVctLen$, let $\GoodElements \eqdef \bigcup\limits_{\vct \in \GoodVectors} \set{\vHyp{\hypTotalVctLen, \w(\vct), \vctBaseLen}(0)}$ and let $\GoodHints \eqdef \set{\hint \in \Supp(\HintVar) \mid \size{\sigma(\hint)} \leq 4\cdot \sqrt{\log \vctBaseLen \cdot \vctTotalLen}}$, for $\sigma(\hint) \eqdef \w(\hint) - \ex{\element \la \ElementVar, \hint \la \HintVar \mid \ElementVar = \element}{\w(\hint)} = \w(\hint) - \ex{\element \la \ElementVar}{\eps_{\element}}\cdot \vctTotalLen$ for $\eps_{\element} \eqdef \sBias{\vctBaseLen}{\pr{\Value = 1 \mid \ElementVar = \element}}$.
In addition, for $\element \in \GoodElements$, let $\w(\element)$ be the value $\weight \in \Z$ with $\element = \vHyp{\hypTotalVctLen, \weight, \vctBaseLen}(0)$ and we let $\sigma(\element) = \w(\element)- \eps \cdot \hypTotalVctLen$ (note that by definition, $\size{\sigma(\element)} \leq 4\sqrt{\log \vctBaseLen \cdot \hypTotalVctLen}$ for every $\element \in \GoodElements$).

\remove{\Inote{Explain the main steps towards proving the lemma. The first step,   \cref{HyperOneRoundGameVectorHint:GameValueDiffOneRound}, is ... }}
Proving \cref{lemma:HyperlProcessVectorHint} is done by bounding $\PredictAdv_{\Pc,\HintFunc}(\hint)$ for ``typical'' values of $\hint$. The first step (\cref{HyperOneRoundGameVectorHint:GameValueDiffOneRound}) is to bound $\size{\pr{\Value = 1} - \pr{\Value = 1 \mid \ElementVar = \element}}$ for ``typical'' values of $\hint$. The second step (\cref{OneRoundGame:lemma:HypVctHint:RatioEstimation}) is to bound $\size{1-\ratioo_{\hint}(\element)}$ for ``typical'' values of $\element$ and $\hint$. Then, \cref{OneRoundGame:prop:HypVctHint:BiasEstimation} combines the two step using \cref{lemma:ElementaryBound:GenericDiffBound} in order to achieve the desired bound on $\PredictAdv_{\Pc,\HintFunc}(\hint)$.

\remove{The following proposition bounds the expectation change of $\Value$ given the first step $\ElementVar$.}

\begin{proposition}\label{HyperOneRoundGameVectorHint:GameValueDiffOneRound}
	For every $\element \in \GoodElements$, it holds that
	\begin{align*}
		\size{\pr{\Value = 1} - \pr{\Value = 1 \mid \ElementVar = \element}} \leq \frac{\sigma(\element) + 2\sqrt{\hypVctLenFact \cdot \vctBaseLen}}{\hypVctLenFact \cdot \sqrt{\vctBaseLen}}.
	\end{align*}
\end{proposition}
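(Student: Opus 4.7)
The key insight I would exploit first is that both sides of the target inequality can be identified explicitly: $\pr{\Value = 1 \mid \ElementVar = \element} = \element$ by the definition $\Value \la \Berzo{\ElementVar}$, while $\pr{\Value = 1} = \eex{\ElementVar} = \delta$. The latter follows by writing $\eex{\ElementVar}$ as the joint probability (over $\vct \la (\Beroo{\eps})^{\hypTotalVctLen}$ and a uniformly random $\vctBaseLen$-subset $\cI$) that $\w(\vct_\cI) \geq 0$, then noting that for any fixed $\cI$ the random variable $\vct_\cI$ is distributed as $(\Beroo{\eps})^{\vctBaseLen}$, so this probability equals $\vBeroo{\vctBaseLen, \eps}(0) = \delta$ by the definition $\eps = \sBias{\vctBaseLen}{\delta}$. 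The proposition therefore reduces to bounding $|\delta - \element|$.

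The plan for the main bound is to express both $\delta$ and $\element$ via the standard normal CDF $\Phi$ and invoke its $\frac{1}{\sqrt{2\pi}}$-Lipschitz property. Applying \cref{prop:hyperToNormal} with $n = \hypTotalVctLen$, $p = \w(\element)$, $\ell = \vctBaseLen$, $k = 0$ yields $\element \in \Phi\!\left(-\frac{\w(\element)/\hypVctLenFact}{\sqrt{\vctBaseLen(1-1/\hypVctLenFact)}}\right) \pm \varphi(\const)\frac{\log^{1.5}\vctBaseLen}{\sqrt{\vctBaseLen}}$. An analogous approximation $\delta = \vBeroo{\vctBaseLen,\eps}(0) \in \Phi(-\eps\sqrt{\vctBaseLen}) \pm O(\log^{1.5}\vctBaseLen/\sqrt{\vctBaseLen})$ follows from a local-CLT integration of \cref{prop:binomProbEstimation} (using $|\eps|\leq 4\sqrt{\log\vctBaseLen/\vctBaseLen}$, guaranteed by Hoeffding from $\delta \in [\vctBaseLen^{-4}, 1-\vctBaseLen^{-4}]$). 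After substituting $\w(\element) = \eps\hypTotalVctLen + \sigma(\element)$ and regrouping, the difference of the two $\Phi$-arguments equals
\[
\eps\sqrt{\vctBaseLen}\left(1 - \frac{1}{\sqrt{1-1/\hypVctLenFact}}\right) \;-\; \frac{\sigma(\element)}{\hypVctLenFact\sqrt{\vctBaseLen(1-1/\hypVctLenFact)}}.
\]
The second summand supplies the main term $\tfrac{|\sigma(\element)|}{\hypVctLenFact\sqrt{\vctBaseLen}}$ up to the near-unit factor $(1-1/\hypVctLenFact)^{-1/2}$, while the first summand, using $\frac{1}{\sqrt{1-u}}-1 \leq 2u$ for $u \leq \tfrac12$, is $O(|\eps|\sqrt{\vctBaseLen}/\hypVctLenFact) = O(\sqrt{\log\vctBaseLen}/\hypVctLenFact)$.

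The final step absorbs both the first summand and the two approximation errors into the slack $\frac{2\sqrt{\hypVctLenFact\vctBaseLen}}{\hypVctLenFact\sqrt{\vctBaseLen}} = \frac{2}{\sqrt{\hypVctLenFact}}$ present in the claimed bound. Under the assumptions of the enclosing \cref{lemma:HyperlProcessVectorHint} --- in particular condition 3 which together with $\vctLenFact \geq 2$ forces $\hypVctLenFact \gtrsim \log^2\vctBaseLen$, and $\hypVctLenFact \leq \vctBaseLen$ from condition 1 --- both $O(\sqrt{\log\vctBaseLen}/\hypVctLenFact)$ and $O(\log^{1.5}\vctBaseLen/\sqrt{\vctBaseLen})$ sit comfortably below $1/\sqrt{\hypVctLenFact}$. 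The main obstacle I anticipate is not conceptual but bookkeeping: tracking polylogarithmic factors carefully through both normal approximations and the first-order Taylor correction in $\eps$, and verifying at each step that the particular assumption of \cref{lemma:HyperlProcessVectorHint} invoked actually suffices to push the residual term under the stated slack. The algebra is elementary throughout.
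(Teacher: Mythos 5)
Your observation that $\pr{\Value = 1} = \eex{\ElementVar} = \delta$ is correct (it follows from the exchangeability of the coordinates of $\vct$), and the route you sketch through a binomial normal approximation of $\delta$ can be made to work, but the paper's proof takes a genuinely different and cleaner path. The paper writes $\pr{\Value = 1} = \ex{\element'\la\ElementVar}{\element'}$ and bounds the target by $\ex{\element'\la\ElementVar}{\size{\element' - \element}}$, applying the hypergeometric-to-normal estimate \cref{prop:hyperToNormal} \emph{symmetrically} to $\element$ and $\element'$. Because $\w(\element) = \eps\hypTotalVctLen + \sigma(\element)$ and $\w(\element') = \eps\hypTotalVctLen + \sigma(\element')$, the $\eps\hypTotalVctLen$ contributions cancel exactly in the Lipschitz difference of the two $\Phi$-arguments, leaving $\frac{\size{\sigma(\element') - \sigma(\element)}}{\hypVctLenFact\sqrt{\vctBaseLen(1-1/\hypVctLenFact)}}$; taking expectation of $\size{\sigma(\element')}$ via \cref{prop:binomTailExpectation} then produces the $\sqrt{\hypTotalVctLen}$ in the stated numerator in a single step. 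Your route instead pits a \emph{binomial} $\Phi$-approximation of $\delta$ against the \emph{hypergeometric} $\Phi$-approximation of $\element$, whose $\eps$-coefficients differ by the factor $\bigl(1-1/\hypVctLenFact\bigr)^{-1/2}$; this mismatch is precisely what generates the extra $O(\size{\eps}\sqrt{\vctBaseLen}/\hypVctLenFact)$ term you must separately absorb, and it requires a binomial-CDF-to-$\Phi$ estimate that the paper does not state as a proposition (it follows from \cref{prop:binomProbEstimation} by the local-CLT integration you mention, but that derivation would have to be supplied). So the symmetric-averaging trick buys the paper two things at once: the $\eps$-cancellation, and the $\sqrt{\hypTotalVctLen}$ appearing for free.

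One quantitative point in your sketch is off. You argue that $O(\log^{1.5}\vctBaseLen/\sqrt{\vctBaseLen})$ sits below $1/\sqrt{\hypVctLenFact}$ because ``$\hypVctLenFact\leq\vctBaseLen$ from condition 1'' — but that inequality only yields $1/\sqrt{\hypVctLenFact}\geq 1/\sqrt{\vctBaseLen}$, which is short by a $\log^{1.5}\vctBaseLen$ factor. The absorption of the normal-approximation error into the $2/\sqrt{\hypVctLenFact}$ slack has to lean on the stronger quantitative consequences of assumption \ref{lemma:HyperlProcessVectorHint:asmp2}, not merely $\hypVctLenFact\leq\vctBaseLen$. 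This is exactly the ``bookkeeping'' you flag as the main obstacle; it is indeed where the proof is delicate, and the throwaway justification does not hold up.
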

\begin{proof}
	Note that for every $\element' \in \GoodElements$, it holds that $\pr{\Value = 1 \mid \ElementVar = \element'} = \element' = \vHyp{\hypTotalVctLen, \w(\element'), \vctBaseLen}(0)$. Therefore, by \cref{prop:hyperToNormal} it holds that
	\begin{align}\label{HyperOneRoundGameVectorHint:GameValueDiffOneRound:eq1}
	\pr{\Value = 1 \mid \ElementVar = \element'}
	&\in \Phi\left(\frac{-\frac{\w(\element')\cdot \vctBaseLen}{\hypTotalVctLen}}{\sqrt{\vctBaseLen(1-\frac{\vctBaseLen}{\hypTotalVctLen})}}\right) \pm \varphi(4) \cdot \frac{\log^{1.5} \vctBaseLen}{\sqrt{\vctBaseLen}}\\
	&= \Phi\left(-\frac{\w(\element')}{\hypVctLenFact\cdot \sqrt{\vctBaseLen\cdot (1-\frac1{\hypVctLenFact})}}\right) \pm \varphi(4) \cdot \frac{\log^{1.5} \vctBaseLen}{\sqrt{\vctBaseLen}}\nonumber
	\end{align}
	for every $\element' \in \GoodElements$. This yields that
	\begin{align*}
	\lefteqn{\size{\pr{\Value = 1} - \pr{\Value = 1 \mid \ElementVar = \element}}}\\
	&= \ex{\element' \la \ElementVar}{\pr{\Value = 1 \mid \ElementVar = \element} - \pr{\Value = 1 \mid \ElementVar = \element}}\\
	&\leq \ex{\element' \la \ElementVar}{\size{\Phi\left(-\frac{\w(\element')}{\hypVctLenFact\cdot \sqrt{\vctBaseLen}}\right) - \Phi\left(-\frac{\w(\element)}{\hypVctLenFact\cdot \sqrt{\vctBaseLen}}\right)}} + 2\varphi(4)\cdot \frac{\log^{1.5} \vctBaseLen}{\sqrt{\vctBaseLen}}\\
	&\leq \ex{\element' \la \ElementVar}{\size{\int_{\frac{\w(\element)}{\hypVctLenFact\cdot \sqrt{\vctBaseLen}}}^{\frac{\w(\element')}{\hypVctLenFact\cdot \sqrt{\vctBaseLen}}}e^{-\frac{t^2}{2}}dt}} + 2\varphi(4)\cdot \frac{\log^{1.5} \vctBaseLen}{\sqrt{\vctBaseLen}}\\
	&\leq \ex{\element' \la \ElementVar}{\size{\frac{\w(\element')}{\hypVctLenFact\cdot \sqrt{\vctBaseLen}} - \frac{\w(\element)}{\hypVctLenFact\cdot \sqrt{\vctBaseLen}}}} + 2\varphi(4)\cdot \frac{\log^{1.5} \vctBaseLen}{\sqrt{\vctBaseLen}}\\
	&= \ex{\element' \la \ElementVar}{\size{\frac{\sigma(\element')}{\hypVctLenFact\cdot \sqrt{\vctBaseLen}} - \frac{\sigma(\element)}{\hypVctLenFact\cdot \sqrt{\vctBaseLen}}}} + 2\varphi(4)\cdot \frac{\log^{1.5} \vctBaseLen}{\sqrt{\vctBaseLen}}\\
	&\leq \ex{\element' \la \ElementVar}{\frac{\size{\sigma(\element)} + \size{\sigma(\element')}}{\hypVctLenFact\cdot \sqrt{\vctBaseLen}}} + 2\varphi(4)\cdot \frac{\log^{1.5} \vctBaseLen}{\sqrt{\vctBaseLen}}\\
	&= \frac{\size{\sigma(\element)} + \ex{\vct \la (\Beroo{\eps})^{\hypTotalVctLen}}{\size{\sigma(\vct)}}}{\hypVctLenFact\cdot \sqrt{\vctBaseLen}} + 2\varphi(4)\cdot \frac{\log^{1.5} \vctBaseLen}{\sqrt{\vctBaseLen}}\\ 
	&\leq \frac{\size{\sigma(\element)} + 2\sqrt{\hypTotalVctLen}}{\hypVctLenFact\cdot \sqrt{\vctBaseLen}}.
	\end{align*}
	The second transition holds by \cref{HyperOneRoundGameVectorHint:GameValueDiffOneRound:eq1}, the  penultimate one holds since $\ex{\element' \la \ElementVar}{\size{\sigma(\element')}} = \ex{\vct \la (\Beroo{\eps})^{\hypTotalVctLen}}{\size{\sigma(\vct)}}$ and the last one by \cref{prop:binomTailExpectation}.
\end{proof}

The following proposition, which wraps the main analysis of this section, bounds how much $\ratioo_{\hint}(\element)$ can be far from $1$.

\begin{proposition}\label{OneRoundGame:lemma:HypVctHint:RatioEstimation}
	For every $\hint \in \GoodHints$ and $\element \in \GoodElements$, it holds that
	\begin{align*}
		\size{1-\ratioo_{\hint}(\element)} \leq \const \sqrt{\log \vctBaseLen \cdot \frac{\vctLenFact}{\hypVctLenFact}} \cdot \frac{\size{\sigma(\element)} + \sqrt{\hypTotalVctLen}}{\sqrt{\hypTotalVctLen}},
	\end{align*}
	for some universal constant $\const > 0$.
\end{proposition}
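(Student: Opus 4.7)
The plan is to mimic the strategy of the analogous statement for binomial processes, \cref{OneRoundGame:prop:BinVctHint:RatioEstimation}, adapted to the hypergeometric setting. The starting point is \cref{lemma:VctHint:RatioEstimation}, which gives
\begin{align*}
\frac{1}{\ratioo_{\hint}(\element)} \in \ex{\element' \la \ElementVar \mid \element' \in \GoodElements}{e^{\left(\eps_{\element'} - \eps_{\element}\right)\cdot \left(\w(\hint) - \frac{\eps_{\element'} + \eps_\element}{2}\cdot \vctTotalLen\right)}} \cdot (1 \pm \error),
\end{align*}
where $\eps_\element = \sBias{\vctBaseLen}{\pr{\Value=1\mid \ElementVar=\element}}$. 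The task reduces to controlling the exponent and the error term $\error$.

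First, since $\pr{\Value=1\mid\ElementVar=\element} = \element = \vHyp{\hypTotalVctLen,\w(\element),\vctBaseLen}(0)$, I would apply \cref{prop:hyperEpsEstimation} with $n=\hypTotalVctLen$, $p=\w(\element)$, $\ell=\vctBaseLen$, $k=0$, $m=\vctBaseLen$, obtaining
\begin{align*}
\eps_\element \in \frac{\w(\element)}{\hypVctLenFact\sqrt{\vctBaseLen(1-1/\hypVctLenFact)}} \pm \frac{\varphi'\log^{1.5}\vctBaseLen}{\vctBaseLen},
\end{align*}
for a universal $\varphi'>0$. Combined with $|\sigma(\element)|\le 4\sqrt{\log\vctBaseLen\cdot\hypTotalVctLen}$ and the a~priori bound $|\eps|\le 4\sqrt{\log\vctBaseLen/\vctBaseLen}$, this yields $|\eps_\element|\lesssim\sqrt{\log\vctBaseLen}$, and more importantly
\begin{align*}
\eps_{\element'} - \eps_{\element} \in \frac{\w(\element')-\w(\element)}{\hypVctLenFact\sqrt{\vctBaseLen(1-1/\hypVctLenFact)}} \pm \frac{2\varphi'\log^{1.5}\vctBaseLen}{\vctBaseLen} = \frac{\sigma(\element')-\sigma(\element) \pm O(\log^{1.5}\vctBaseLen)}{\hypVctLenFact\sqrt{\vctBaseLen}}\cdot(1\pm o(1)).
\end{align*}

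Next, I would estimate $\w(\hint)$ using $\hint\in\GoodHints$: $\w(\hint) = \ex{\element''}{\eps_{\element''}}\cdot\vctTotalLen + \sigma(\hint)$ with $|\sigma(\hint)|\le 4\sqrt{\log\vctBaseLen\cdot\vctTotalLen}$. Writing $\bar{\w}=\ex{\element''\la\ElementVar\mid\element''\in\GoodElements}{\w(\element'')}$ and using the previous step, one gets
\begin{align*}
\w(\hint) - \tfrac{\eps_{\element'}+\eps_\element}{2}\vctTotalLen \in \sigma(\hint) - \frac{(\w(\element)+\w(\element'))/2 - \bar{\w}}{\hypVctLenFact\sqrt{\vctBaseLen}}\cdot\vctTotalLen \pm O(\vctTotalLen\log^{1.5}\vctBaseLen/\vctBaseLen).
\end{align*}
Multiplying, the exponent becomes (up to lower-order terms dominated by $\error$)
\begin{align*}
(\eps_{\element'}-\eps_\element)\left(\w(\hint)-\tfrac{\eps_{\element'}+\eps_\element}{2}\vctTotalLen\right)
\in \frac{\sigma(\element')-\sigma(\element)}{\hypVctLenFact\sqrt{\vctBaseLen}}\cdot\left(\pm O\bigl(\sqrt{\log\vctBaseLen\cdot\vctTotalLen}\bigr) \pm O\bigl((\sigma(\element)+\sigma(\element'))\tfrac{\vctLenFact}{\hypVctLenFact}\bigr)\right).
\end{align*}
Using the bound $\sqrt{\vctLenFact/\hypVctLenFact}\log\vctBaseLen\le 1/100$ (condition~\ref{lemma:HyperlProcessVectorHint:asmp4}), the first term dominates, and the whole exponent is $O\!\bigl((|\sigma(\element)|+\sqrt{\hypTotalVctLen})\cdot\sqrt{\log\vctBaseLen\cdot\vctLenFact/\hypVctLenFact}/\sqrt{\hypTotalVctLen}\bigr)$ in absolute value and, crucially, bounded by $1$ in magnitude.

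For the multiplicative error $\error$, condition~\ref{lemma:HyperlProcessVectorHint:asmp3} together with $|\eps_\element|\lesssim\sqrt{\log\vctBaseLen}$ and $|\w(\hint)|\lesssim\sqrt{\log\vctBaseLen\cdot\vctTotalLen}$ forces the argument of the exponential defining $\error$ to be smaller than $1$, so that $\error\le O(\vctLenFact\log^2\vctBaseLen/\vctBaseLen)$, which is dominated by the main error via condition~\ref{lemma:HyperlProcessVectorHint:asmp3}. Applying $e^{a}\in 1\pm 2|a|$ for $|a|\le 1$, taking the $\element'$-expectation (and using $\ex{\element''}{|\sigma(\element'')|}\le O(\sqrt{\hypTotalVctLen})$ via \cref{prop:binomTailExpectation}), and finally inverting via $\frac1{1\pm a}\in 1\pm 2a$ for $|a|\le 1/2$, yields
\begin{align*}
|1-\ratioo_\hint(\element)| \le \const\cdot\sqrt{\log\vctBaseLen\cdot\tfrac{\vctLenFact}{\hypVctLenFact}}\cdot\frac{|\sigma(\element)|+\sqrt{\hypTotalVctLen}}{\sqrt{\hypTotalVctLen}},
\end{align*}
as required.

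The main obstacle is the bookkeeping: there are several competing error terms (from \cref{prop:hyperEpsEstimation}, from the Taylor expansion hidden in \cref{lemma:VctHint:RatioEstimation}, and from the ``typicality'' of $\hint$), and one must verify that each is absorbed by the target bound under assumptions \ref{lemma:HyperlProcessVectorHint:asmp2}--\ref{lemma:HyperlProcessVectorHint:asmp4}. In particular, the cubic-in-$\eps$ correction appearing in $\error$ requires the product $\sqrt{\vctLenFact/\hypVctLenFact}\cdot\log\vctBaseLen$ to be small, which is exactly condition~\ref{lemma:HyperlProcessVectorHint:asmp4}; the other conditions are used to dominate the analogous corrections in the hypergeometric expansion of $\eps_\element$.
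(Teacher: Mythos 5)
Your plan follows the paper's proof of this proposition essentially line for line: start from \cref{lemma:VctHint:RatioEstimation}, use \cref{prop:hyperEpsEstimation} (with $n=\hypTotalVctLen$, $p=\w(\element)$, $\ell=m=\vctBaseLen$, $k=0$) to expand $\eps_\element$, use $\hint\in\GoodHints$ to estimate $\w(\hint)$, control the multiplicative error $\error$, check the exponent has magnitude $\leq 1$, apply $e^a\in 1\pm 2|a|$, average over $\element'$ via \cref{prop:binomTailExpectation}, and invert.

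One arithmetic slip is worth flagging: with the substitution $n=\hypTotalVctLen$, $\ell=m=\vctBaseLen$, \cref{prop:hyperEpsEstimation} gives denominator $\sqrt{m\ell(1-\ell/n)}=\vctBaseLen\sqrt{1-1/\hypVctLenFact}$, so the main term of $\eps_\element$ is $\frac{\w(\element)}{\hypTotalVctLen\sqrt{1-1/\hypVctLenFact}}$, not $\frac{\w(\element)}{\hypVctLenFact\sqrt{\vctBaseLen(1-1/\hypVctLenFact)}}$ — your version is a factor $\sqrt{\vctBaseLen}$ too large (indeed your resulting bound $|\eps_\element|\lesssim\sqrt{\log\vctBaseLen}$ would exceed $1$; the correct one is $O(\sqrt{\log\vctBaseLen/\vctBaseLen})$). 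The same $\sqrt{\vctBaseLen}$ carries into your intermediate exponent estimate, so it is only consistent with your stated final bound $O\bigl((|\sigma(\element)|+\sqrt{\hypTotalVctLen})\sqrt{\log\vctBaseLen\cdot\vctLenFact/\hypVctLenFact}/\sqrt{\hypTotalVctLen}\bigr)$ after the slip is fixed. Also, it is condition~\ref{lemma:HyperlProcessVectorHint:asmp3} that absorbs the cubic/quartic correction in $\error$, while condition~\ref{lemma:HyperlProcessVectorHint:asmp4} (together with $\element,\element'\in\GoodElements$) is what bounds the magnitude of the exponent by $1$; your last paragraph has these two roles swapped. With those corrections, the plan matches the paper's argument.
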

\begin{proof}
	By \cref{lemma:VctHint:RatioEstimation} it holds that
	\begin{align*}
	\frac{1}{\ratioo_\hint(\element)} \in \ex{\element' \la \ElementVar \mid \element' \in \GoodElements}{e^{\left(\eps_{\element'} - \eps_{\element}\right)\cdot \left(\w(\hint) - \frac{\eps_{\element'} + \eps_\element}{2}\cdot \vctTotalLen\right)}} \cdot (1 \pm \error),
	\end{align*}
	for  $\error = \max_{\element',\element'' \in \GoodElements, \z \in \pm\size{\eps^4_{\element'} - \eps^4_{\element''}}}\size{e^{\frac{\eps^3_{\element'} - \eps^3_{\element''}}{3} \cdot \w(\hint) + \z \cdot \vctTotalLen} - 1}$. Recall that $\eps_{\element'} = \sBias{\vctBaseLen}{\pr{\Value = 1 \mid \ElementVar = \element'}}$, for every $\element' \in \GoodElements$, where $\pr{\Value = 1 \mid \ElementVar = \element'} = \element' = \vHyp{\hypTotalVctLen, \w(\element'),\vctBaseLen}(0)$.
	Therefore, \cref{prop:hyperEpsEstimation} yields that
	\begin{align}\label{OneRoundGame:lemma:HypVctHint:RatioEstimation:eq1}
	\eps_{\element'}
	&\in \frac{\frac{\w(\element')\cdot \vctBaseLen}{\hypTotalVctLen}}{\sqrt{\vctBaseLen \cdot \vctBaseLen \cdot (1-\frac{\vctBaseLen}{\hypTotalVctLen})}} \pm \frac{\log^2 \vctBaseLen}{2\vctBaseLen}\\
	&= \frac{\w(\element')}{\hypTotalVctLen\cdot \sqrt{1-\frac1{\hypVctLenFact}}}\pm \frac{\log^2 \vctBaseLen}{2\vctBaseLen},\nonumber
	\end{align}
	for every $\element' \in \GoodElements$, which yields that
	\begin{align}
	\size{\eps_{\element'}}
	&\leq \frac{\size{\w(\element')}}{\hypTotalVctLen} + \frac{\log^2 \vctBaseLen}{\vctBaseLen}\\
	&= \frac{\size{\eps \cdot \hypTotalVctLen + \sigma(\element')}}{\hypTotalVctLen} + \frac{\log^2 \vctBaseLen}{\vctBaseLen}\nonumber\\
	&\leq 10 \cdot \sqrt{\frac{\log \vctBaseLen}{\vctBaseLen}},\nonumber
	\end{align}
	where the second inequality holds by the bound on $\size{\eps}$ and $\size{\sigma(\element')}$.
	Therefore, for every $\element',\element'' \in \GoodElements$ and $\z \in \pm \size{\eps^4_{\element'} - \eps^4_{\element''}}$, it holds that
	\begin{align}\label{OneRoundGame:lemma:HypVctHint:RatioEstimation:eq2}
	\size{\frac{\eps^3_{\element'} - \eps^3_{\element''}}{3} \cdot \w(\hint) + \z \cdot \vctTotalLen}
	&\leq \frac{\size{\eps_{\element'}}^3 + \size{\eps_{\element''}}^3}{3} \cdot \w(\hint) + \size{\z} \cdot \vctTotalLen\\
	&\leq \frac{2000}3 \cdot \frac{\log^{1.5} \vctBaseLen}{\vctBaseLen^{1.5}} \cdot \left(10\cdot \sqrt{\frac{\log \vctBaseLen}{\vctBaseLen}}\cdot \vctTotalLen + 4\sqrt{\log \vctBaseLen \cdot \vctTotalLen}\right) + 20000 \cdot \frac{\log^2 \vctBaseLen}{\vctBaseLen^2} \cdot \vctTotalLen\nonumber\\
	&\leq 2700\cdot \frac{\log^2 \vctBaseLen}{\sqrt{\vctBaseLen}} + 27000 \cdot \log^2 \vctBaseLen \cdot \frac{\vctLenFact}{\vctBaseLen}\nonumber\\
	&\leq 27000 \cdot \log^2 \vctBaseLen \cdot \frac{\vctLenFact + \sqrt{\vctBaseLen}}{\vctBaseLen}\nonumber\\
	&\leq 1,\nonumber
	\end{align}
	where the second inequality holds by the bounds on $\size{\eps_{\element'}}$, $\size{\eps_{\element''}}$ and $\size{\w(\hint)}$,
	and the last one holds by assumption \ref{lemma:HyperlProcessVectorHint:asmp3}.
	This yields that
	\begin{align}\label{OneRoundGame:lemma:HypVctHint:RatioEstimation:eq3}
	\error
	&= \max_{\element',\element'' \in \GoodElements, \z \in \pm\size{\eps^4_{\element'} - \eps^4_{\element''}}}\size{e^{\frac{\eps^3_{\element'} - \eps^3_{\element''}}{3} \cdot \w(\hint) + \z \cdot \vctTotalLen} - 1}\\
	&\leq \max_{\element',\element'' \in \GoodElements, \z \in \pm\size{\eps^4_{\element'} - \eps^4_{\element''}}} 2\cdot\size{\frac{\eps^3_{\element'} - \eps^3_{\element''}}{3} \cdot \w(\hint) + \z \cdot \vctTotalLen}\nonumber\\
	&\leq 54000 \cdot \log^2 \vctBaseLen \cdot \frac{\vctLenFact + \sqrt{\vctBaseLen}}{\vctBaseLen},\nonumber
	\end{align}
	where the first inequality holds since $e^{a} \in 1 \pm 2\size{a}$ for every $a \in [-1,1]$,
	and the second one holds by \cref{OneRoundGame:lemma:HypVctHint:RatioEstimation:eq2}.
	
	In addition, note that
	\begin{align}\label{OneRoundGame:lemma:HypVctHint:RatioEstimation:eq4}
	\w(\hint)
	&= \ex{\element' \la \ElementVar}{\eps_{\element'}}\cdot \vctTotalLen + \sigma(\hint)\\
	&\in \ex{\element' \la \ElementVar}{\eps_{\element'}} \cdot \vctTotalLen \pm 4\cdot \sqrt{\log \vctBaseLen \cdot \vctTotalLen}\nonumber\\
	&\in \left(\frac{\ex{\element' \la \ElementVar}{\w(\element')}}{\hypTotalVctLen \cdot \sqrt{1-\frac1{\hypVctLenFact}}} \pm \frac{\log^2 \vctBaseLen}{\vctBaseLen}\right) \cdot \vctTotalLen \pm 4\cdot \sqrt{\log \vctBaseLen \cdot \vctTotalLen}\nonumber\\
	&\in \left(\frac{\eps \cdot \hypTotalVctLen \pm 4\sqrt{\log \vctBaseLen \cdot \hypTotalVctLen}}{\hypTotalVctLen \cdot \sqrt{1-\frac1{\hypVctLenFact}}} \pm \frac{\log^2 \vctBaseLen}{\vctBaseLen}\right) \cdot \vctTotalLen \pm 4\cdot \sqrt{\log \vctBaseLen \cdot \vctTotalLen}\nonumber
	\end{align}
	where the third equality holds by \cref{OneRoundGame:lemma:HypVctHint:RatioEstimation:eq1}.
	Therefore, for every $\element' \in \GoodElements$, it holds that
	\begin{align}\label{OneRoundGame:lemma:HypVctHint:RatioEstimation:eq5}
	\lefteqn{\left(\eps_{\element'} - \eps_{\element}\right)\cdot \left(\w(\hint) - \frac{\eps_{\element'} + \eps_\element}{2}\cdot \vctTotalLen\right)}\\
	&\in \frac{\w(\element)' - \w(\element) \pm \hypVctLenFact \cdot \log^2 \vctBaseLen}{\hypTotalVctLen \cdot \sqrt{1-\frac1{\hypVctLenFact}}} \cdot \left(\w(\hint) - \frac{\w(\element) + \w(\element') \pm \hypVctLenFact \cdot \log^2 \vctBaseLen}{2\cdot \hypTotalVctLen \cdot \sqrt{1-\frac1{\hypVctLenFact}}} \cdot \vctTotalLen \right)\nonumber\\
	&\in \frac{\sigma(\element)' - \sigma(\element) \pm \hypVctLenFact \cdot \log^2 \vctBaseLen}{\hypTotalVctLen \cdot \sqrt{1-\frac1{\hypVctLenFact}}} \cdot \left(\frac{\sigma(\element) + \sigma(\element') \pm \hypVctLenFact \cdot \log^2 \vctBaseLen}{2\cdot \hypTotalVctLen \cdot \sqrt{1-\frac1{\hypVctLenFact}}} \cdot \vctTotalLen  \pm 4\cdot \sqrt{\log \vctBaseLen \cdot \vctTotalLen}\right) \nonumber\\
	&\in \frac{\sigma(\element)' - \sigma(\element) \pm \hypVctLenFact \cdot \log^2 \vctBaseLen}{\hypTotalVctLen \cdot \sqrt{1-\frac1{\hypVctLenFact}}} \cdot \left(\frac{\pm 9\cdot \sqrt{\log \vctBaseLen \cdot \hypTotalVctLen}}{2\cdot \hypVctLenFact \cdot \sqrt{1-\frac1{\hypVctLenFact}}} \cdot \vctLenFact \pm 4\cdot \sqrt{\log \vctBaseLen \cdot \vctTotalLen}\right)\nonumber\\
	&\in \frac{\sigma(\element)' - \sigma(\element) \pm \hypVctLenFact \cdot \log^2 \vctBaseLen}{\hypTotalVctLen \cdot \sqrt{1-\frac1{\hypVctLenFact}}} \cdot \left(\pm \frac{\vctLenFact}{\hypVctLenFact} \cdot 5 \cdot \sqrt{\log \vctBaseLen \cdot \hypTotalVctLen} \pm 4\cdot \sqrt{\log \vctBaseLen \cdot \vctTotalLen}\right)\nonumber\\
	&\in \frac{\sigma(\element)' - \sigma(\element) \pm \hypVctLenFact \cdot \log^2 \vctBaseLen}{\hypTotalVctLen} \cdot \left(\pm 8\cdot \sqrt{\log \vctBaseLen \cdot \vctTotalLen}\right)\nonumber\\
	&= \frac{\sigma(\element)' - \sigma(\element) \pm \hypVctLenFact \cdot \log^2 \vctBaseLen}{\sqrt{\hypTotalVctLen}} \cdot \left(\pm 8\cdot \sqrt{\log \vctBaseLen \cdot \frac{\vctLenFact}{\hypVctLenFact}}\right),\nonumber
	\end{align}
	where the first transition holds by \cref{OneRoundGame:lemma:HypVctHint:RatioEstimation:eq1}, the second one holds by \cref{OneRoundGame:lemma:HypVctHint:RatioEstimation:eq4} and the third one holds since $\element,\element' \in \GoodElements$.
	
	By the bounds on $\size{\sigma(\element)}, \size{\sigma(\element')}$ and by assumption \ref{lemma:HyperlProcessVectorHint:asmp4}, it holds that
	\begin{align*}
	\size{\left(\eps_{\element'} - \eps_{\element}\right)\cdot \left(\w(\hint) - \frac{\eps_{\element'} + \eps_\element}{2}\cdot \vctTotalLen\right)}
	\leq 1,
	\end{align*}
	for every $\element' \in \GoodElements$.
	Hence,
	\begin{align}\label{OneRoundGame:lemma:HypVctHint:RatioEstimation:eq6}
	\frac{1}{\ratioo_\hint(\element)}
	&\in \ex{\element' \la \ElementVar \mid \element' \in \GoodElements}{e^{\left(\eps_{\element'} - \eps_{\element}\right)\cdot \left(\w(\hint) - \frac{\eps_{\element'} + \eps_\element}{2}\cdot \vctTotalLen\right)}} \cdot (1 \pm \error)\\
	&\in \ex{\element' \la \ElementVar \mid \element' \in \GoodElements}{1 \pm 16\cdot \frac{\size{\sigma(\element)} + \size{\sigma(\element')} + \hypVctLenFact \cdot \log^2 \vctBaseLen}{\sqrt{\hypTotalVctLen}} \cdot \sqrt{\log \vctBaseLen \cdot \frac{\vctLenFact}{\hypVctLenFact}}}\cdot (1 \pm \error)\nonumber\\
	&\in \left(1 \pm 16\cdot \frac{\size{\sigma(\element)} + \sqrt{\hypTotalVctLen} + \hypVctLenFact \cdot \log^2 \vctBaseLen}{\sqrt{\hypTotalVctLen}} \cdot \sqrt{\log \vctBaseLen \cdot \frac{\vctLenFact}{\hypVctLenFact}}\right)\cdot \left(1 \pm 54000 \cdot \log^2 \vctBaseLen \cdot \frac{\vctLenFact + \sqrt{\vctBaseLen}}{\vctBaseLen}\right)\nonumber\\
	&\in \left(1 \pm 18\cdot \frac{\size{\sigma(\element)} + \sqrt{\hypTotalVctLen}}{\sqrt{\hypTotalVctLen}} \cdot \sqrt{\log \vctBaseLen \cdot \frac{\vctLenFact}{\hypVctLenFact}}\right),\nonumber
	\end{align}
	where the second transition holds by \cref{OneRoundGame:lemma:HypVctHint:RatioEstimation:eq5} and since $e^{a} \in 1 \pm 2\size{a}$ for every $a \in [-1,1]$, the third one holds by \cref{prop:binomTailExpectation} and the last one holds by assumptions \ref{lemma:HyperlProcessVectorHint:asmp3} and \ref{lemma:HyperlProcessVectorHint:asmp4}.
	
	By assumption \ref{lemma:HyperlProcessVectorHint:asmp4} and since $\element \in \GoodElements$ and $\frac1{1\pm a} \subseteq 1 \pm 2a$ for every $a \in (-0.5,0.5)$, we deduce from \cref{OneRoundGame:lemma:HypVctHint:RatioEstimation:eq6} that
	\begin{align}
	\ratioo_\hint(\element) \in \left(1 \pm 36\cdot \frac{\size{\sigma(\element)} + \sqrt{\hypTotalVctLen}}{\sqrt{\hypTotalVctLen}} \cdot \sqrt{\log \vctBaseLen \cdot \frac{\vctLenFact}{\hypVctLenFact}}\right).
	\end{align}
	Thus
	\begin{align*}
	\size{1-\ratioo_\hint(\element)} \leq 36\sqrt{\log \vctBaseLen \cdot \frac{\vctLenFact}{\hypVctLenFact}}\cdot \frac{\size{\sigma(\element)} + \sqrt{\hypTotalVctLen}}{\sqrt{\hypTotalVctLen}}
	\end{align*}
\end{proof}

The following proposition combines \cref{HyperOneRoundGameVectorHint:GameValueDiffOneRound} and \cref{OneRoundGame:lemma:HypVctHint:RatioEstimation} in order to achieve a bound on the prediction advantage $\PredictAdv_{\Pc,\HintFunc}(\hint)$ for ``typical'' values of $\hint$.

\begin{proposition}\label{OneRoundGame:prop:HypVctHint:BiasEstimation}
	For every $\hint \in \GoodHints$ such that $\pr{\ElementVar \notin \GoodElements \mid \HintVar = \hint} \leq \frac1{\vctBaseLen^2}$, it holds that
	\begin{align*}
		\PredictAdv_{\Pc,\HintFunc}(\hint) \leq \const \sqrt{\log \vctBaseLen} \cdot \frac{\sqrt{\vctLenFact}}{\hypVctLenFact},
	\end{align*}
	for a universal constant $\const > 0$.
\end{proposition}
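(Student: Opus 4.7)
The plan is to combine the three ingredients already developed in this subsection: the generic reduction \cref{lemma:ElementaryBound:GenericDiffBound} (which expresses $\PredictAdv_{\Pc,\HintFunc}(\hint)$ as an expectation of $\size{\pr{\Value=1}-\pr{\Value=1\mid \ElementVar=\element}}\cdot\size{1-\ratioo_{\hint,\GoodElements}(\element)}$ plus a tail term), the pointwise sensitivity bound of \cref{HyperOneRoundGameVectorHint:GameValueDiffOneRound}, and the ratio bound of \cref{OneRoundGame:lemma:HypVctHint:RatioEstimation}. The three facts essentially line up multiplicatively, and the only real work is to check that their combination integrates against $\ElementVar\mid \ElementVar\in\GoodElements$ to the desired final expression.

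First, I would apply \cref{lemma:ElementaryBound:GenericDiffBound} with the sets $\GoodElements$ fixed above, which gives
\begin{align*}
\PredictAdv_{\Pc,\HintFunc}(\hint) \leq \ex{\element \la \ElementVar \mid \element \in \GoodElements}{\size{\pr{\Value = 1} -\pr{\Value = 1 \mid \ElementVar = \element}}\cdot \size{1 - \ratioo_{\hint,\GoodElements}(\element)}} + \tail_{\hint,\GoodElements}.
\end{align*}
The tail term $\tail_{\hint,\GoodElements}=2(\pr{\ElementVar\notin\GoodElements}+\pr{\ElementVar\notin\GoodElements\mid \HintVar=\hint})$ is already controlled: the first summand is at most $1/\vctBaseLen^2$ by Hoeffding's inequality (\cref{claim:Hoeffding}) applied to $\w(\vct)$ for $\vct\la(\Beroo{\eps})^{\hypTotalVctLen}$, using the definition of $\GoodElements$ via $\GoodVectors$; the second summand is at most $1/\vctBaseLen^2$ by the hypothesis on $\hint$. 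Both contributions are comfortably absorbed into the target bound since $\sqrt{\vctLenFact}/\hypVctLenFact\ge 1/\vctBaseLen$ in the regime considered.

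Second, I would substitute the two pointwise bounds. By \cref{HyperOneRoundGameVectorHint:GameValueDiffOneRound}, for every $\element\in\GoodElements$,
\begin{align*}
\size{\pr{\Value=1}-\pr{\Value=1\mid \ElementVar=\element}} \leq \frac{\size{\sigma(\element)}+2\sqrt{\hypTotalVctLen}}{\hypVctLenFact\sqrt{\vctBaseLen}},
\end{align*}
and by \cref{OneRoundGame:lemma:HypVctHint:RatioEstimation}, for every such $\element$ and $\hint\in\GoodHints$,
\begin{align*}
\size{1-\ratioo_{\hint}(\element)} \leq \const' \sqrt{\log\vctBaseLen\cdot\tfrac{\vctLenFact}{\hypVctLenFact}}\cdot\frac{\size{\sigma(\element)}+\sqrt{\hypTotalVctLen}}{\sqrt{\hypTotalVctLen}}.
\end{align*}
Multiplying the two bounds and using $\hypTotalVctLen=\hypVctLenFact\cdot\vctBaseLen$ yields an integrand of the form
\begin{align*}
\frac{(\size{\sigma(\element)}+2\sqrt{\hypTotalVctLen})(\size{\sigma(\element)}+\sqrt{\hypTotalVctLen})}{\hypVctLenFact\sqrt{\vctBaseLen}\cdot\sqrt{\hypTotalVctLen}}\cdot\const'\sqrt{\log\vctBaseLen\cdot\tfrac{\vctLenFact}{\hypVctLenFact}}.
\end{align*}

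Third, I would take the expectation over $\element$. Expanding the numerator and applying \cref{prop:binomTailExpectation} (which bounds $\ex{}{\sigma(\element)^2}\leq\hypTotalVctLen$ and $\ex{}{\size{\sigma(\element)}}\leq\sqrt{\hypTotalVctLen}$), all three terms in the expanded numerator are bounded by a constant multiple of $\hypTotalVctLen$. Dividing by $\hypVctLenFact\sqrt{\vctBaseLen}\cdot\sqrt{\hypTotalVctLen}=\hypVctLenFact\sqrt{\vctBaseLen}\cdot\sqrt{\hypVctLenFact\vctBaseLen}$ gives a factor of $1/\sqrt{\hypVctLenFact}$, and multiplying by $\const'\sqrt{\log\vctBaseLen\cdot\vctLenFact/\hypVctLenFact}$ produces exactly the target $\const\sqrt{\log\vctBaseLen}\cdot\sqrt{\vctLenFact}/\hypVctLenFact$. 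Adding the tail contribution then yields the claim. No step looks like a real obstacle; the only place where one must be careful is the final arithmetic, making sure that the cross-term $\size{\sigma(\element)}\sqrt{\hypTotalVctLen}$ is also of the right order after expectation, which is immediate from Jensen or \cref{prop:binomTailExpectation}.
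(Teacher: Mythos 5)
Your proposal is correct and follows essentially the same route as the paper: apply \cref{lemma:ElementaryBound:GenericDiffBound} with $\GoodElements$, plug in the pointwise bounds of \cref{HyperOneRoundGameVectorHint:GameValueDiffOneRound} and \cref{OneRoundGame:lemma:HypVctHint:RatioEstimation}, take expectation via \cref{prop:binomTailExpectation}, and absorb the tail. The paper carries out the same arithmetic by writing the expectation as $\const'\sqrt{\log \vctBaseLen}\cdot\frac{\sqrt{\vctLenFact}}{\hypVctLenFact}\cdot\ex{\element}{\frac{\size{\sigma(\element)}^2 + 3\size{\sigma(\element)}\sqrt{\hypTotalVctLen} + 2\hypTotalVctLen}{\hypTotalVctLen}}\le 6\const'\sqrt{\log \vctBaseLen}\cdot\frac{\sqrt{\vctLenFact}}{\hypVctLenFact}$ before adding the tail, which matches your normalization.
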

\begin{proof}
	Compute
	\begin{align*}
	\PredictAdv_{\Pc,\HintFunc}(\hint)
	&\leq \ex{\element \la \GoodElements}{\size{\pr{\Value=1} - \pr{\Value=1 \mid \ElementVar = \element}} \cdot \size{1-\ratioo_{\hint}(\element)}} + \frac2{\vctBaseLen^2}\\
	&\leq \ex{\element \la \GoodElements}{\left(\frac{\size{\sigma(\element)} + 2\sqrt{\hypTotalVctLen}}{\hypVctLenFact\cdot \sqrt{\vctBaseLen}}\right) \cdot \left(\const' \sqrt{\log \vctBaseLen \cdot \frac{\vctLenFact}{\hypVctLenFact}} \cdot \frac{\size{\sigma(\element)} + \sqrt{\hypTotalVctLen}}{\sqrt{\hypTotalVctLen}}\right)} + \frac2{\vctBaseLen^2}\\
	&= \const' \sqrt{\log \vctBaseLen} \cdot \frac{\sqrt{\vctLenFact}}{\hypVctLenFact} \cdot \ex{\element \la \GoodElements}{\frac{\size{\sigma(\element)}^2 + 3\size{\sigma(\element)}\cdot\sqrt{\hypTotalVctLen} + 2\cdot \hypTotalVctLen}{\hypTotalVctLen}} + \frac2{\vctBaseLen^2}\\
	&= \const' \sqrt{\log \vctBaseLen} \cdot \frac{\sqrt{\vctLenFact}}{\hypVctLenFact} \cdot \ex{\vct \la (\Beroo{\eps})^{\hypTotalVctLen}}{\frac{\size{\sigma(\vct)}^2 + 3\size{\sigma(\vct)}\cdot\sqrt{\hypTotalVctLen} + 2\cdot \hypTotalVctLen}{\hypTotalVctLen}} + \frac2{\vctBaseLen^2}\\
	&\leq \const' \sqrt{\log \vctBaseLen} \cdot \frac{\sqrt{\vctLenFact}}{\hypVctLenFact} \cdot 6 + \frac2{\vctBaseLen^2}\\
	&\leq 7\const' \sqrt{\log \vctBaseLen} \cdot \frac{\sqrt{\vctLenFact}}{\hypVctLenFact}.
	\end{align*}
	The first inequality holds by \cref{lemma:ElementaryBound:GenericDiffBound} and since $\pr{\ElementVar \notin \GoodElements \mid \HintVar = \hint} \leq \frac1{\vctBaseLen^2}$ and $\pr{\ElementVar \notin \GoodElements} \leq \frac1{\vctBaseLen^2}$ by \cref{claim:Hoeffding} (Hoeffding's inequality), the second one holds by \cref{OneRoundGame:lemma:HypVctHint:RatioEstimation,HyperOneRoundGameVectorHint:GameValueDiffOneRound}, and the third one holds by \cref{prop:binomTailExpectation}.
\end{proof}

We are finally ready to prove \cref{lemma:HyperlProcessVectorHint}.

\mparagraph{Proving \texorpdfstring{\cref{lemma:HyperlProcessVectorHint}}{Main Lemma}}\label{subsubsec:ProvingHypVctHintLemma}
\begin{proof}[Proof of \cref{lemma:HyperlProcessVectorHint}]
	
	We divide the proof into two cases:
	\mparagraph{Case $\delta \notin [\frac1{\vctBaseLen^4}, 1-\frac1{\vctBaseLen^4}]$}
	Assume that $\delta \in [0,\frac1{\vctBaseLen^4}]$, where the proof of the case $\delta \in [1-\frac1{\vctBaseLen^4},1]$ is analogous. Assume by contradiction that
	\begin{align}\label{OneRoundGame:lemma:HyperVectorHintGame:eq0}
		\ppr{\hint \la \HintVar}{\PredictAdv_{\Pc,\HintFunc}(\hint) > \frac1{\vctBaseLen}} > \frac1{\vctBaseLen^2}
	\end{align}
	Therefore,
	\begin{align*}
		2\delta
		&= \pr{\Value=1} + \ex{\hint \la \HintVar}{\pr{\Value=1 \mid \HintVar = \hint}}\\
		&\geq \ex{\hint \la \HintVar}{\size{\pr{\Value=1} - \pr{\Value=1 \mid \HintVar = \hint}}}\\
		&\geq \frac1{\vctBaseLen^3},
	\end{align*}
	in contradiction to the assumption that $\delta \in [0,\frac1{\vctBaseLen^4}]$.
	The proof immediately follows by \cref{OneRoundGame:lemma:HyperVectorHintGame:eq0} since $\frac1{\vctBaseLen} < \frac{\sqrt{\vctLenFact}}{\hypVctLenFact}$ by assumption \ref{lemma:HyperlProcessVectorHint:asmp2} of \cref{lemma:HyperlProcessVectorHint}.
	\mparagraph{Case $\delta \in [\frac1{\vctBaseLen^4}, 1-\frac1{\vctBaseLen^4}]$}
	
	Let $\GoodHintsFinal \eqdef \set{\hint \in \GoodHints \mid \pr{\ElementVar \notin \GoodElements \mid \HintVar = \hint} \leq \frac1{\vctBaseLen^2}}$.
	Assume by contradiction that $\ppr{\hint \la \HintVar}{\pr{\ElementVar \notin \GoodElements \mid \HintVar = \hint} > \frac1{\vctBaseLen^2}} > \frac1{2\vctBaseLen^2}$. Then
	\begin{align*}
	\lefteqn{\pr{\ElementVar \notin \GoodElements}}\\
		&\geq \ppr{\hint \la \HintVar}{\ElementVar \notin \GoodElements \mid \pr{\ElementVar \notin \GoodElements \mid \HintVar = \hint} > \frac1{\vctBaseLen^2}} \cdot \ppr{\hint \la \HintVar}{\pr{\ElementVar \notin \GoodElements \mid \HintVar = \hint} > \frac1{\vctBaseLen^2}}\\
		&\geq \frac1{\vctBaseLen^2} \cdot \frac1{2\vctBaseLen^2} = \frac1{2\vctBaseLen^4},
	\end{align*}
	In contradiction to Hoeffding's inequality (\cref{claim:Hoeffding}).
	Hence,
	\begin{align}\label{OneRoundGame:lemma:HyperVectorHintGame:eq1}
		\ppr{\hint \la \HintVar}{\pr{\ElementVar \notin \GoodElements \mid \HintVar = \hint} > \frac1{\vctBaseLen^2}} \leq \frac1{2\vctBaseLen^2}
	\end{align}
	It follows that 
	\begin{align*}
		\pr{\HintVar \notin \GoodHintsFinal}
		&\leq \pr{\HintVar \notin \GoodHints} + \ppr{\hint \la \HintVar}{\pr{\ElementVar \notin \GoodElements \mid \HintVar = \hint} > \frac1{\vctBaseLen^2}} \leq \frac1{\vctBaseLen^2},
	\end{align*}
	where the last inequality holds by \cref{OneRoundGame:lemma:HyperVectorHintGame:eq1} and \cref{claim:Hoeffding} (Hoeffding's inequality).
	The  proof now  follows by \cref{OneRoundGame:prop:HypVctHint:BiasEstimation}.
\end{proof}

\newcommand{\rem}[1]{\mathsf{rem}(#1)}
\newcommand{\reml}{\mathsf{rem}(\ell)}
\newcommand{\minusOne}{{f_{\minus 1}}}
\newcommand{\vectCoin}[1]{\vect{{#1}}}
\newcommand{\sumvect}[1]{\mathsf{sum}({#1})}
\newcommand{\opt}{\mathsf{opt}}
\newcommand{\extRnd}[1]{\mathsf{\mathcal{A}ugState}_{#1}}
\newcommand{\smpRnd}[1]{\mathsf{\mathcal{S}tate}_{#1}}
\newcommand{\Greedy}{\mathsf{Greedy}}
\newcommand{\Strgy}{\mathsf{T}}
\newcommand{\offset}{{\em offset} }
\newcommand{\noHintState}{{\em no-hint state} }
\newcommand{\noHintStates}{{\em no-hint states} }
\newcommand{\finalState}{{\em final state} }
\newcommand{\finalStates}{{\em final states} }
\newcommand{\withHintState}{{\em with-hint state} }
\newcommand{\withHintStates}{{\em with-hint states} }
\newcommand{\pos}{{\mathsf{pos}}}

\section{Bounding Online-Binomial Games via Linear Programs}\label{sec:BinomialViaLP}

In this section we show how to bound online binomial games via a linear programming. In \cref{sec:additional_notions} we give additional notations and facts related to an online Binomial games (hereafter, a Binomial game). In \cref{sec:lp} we present a linear program whose feasible solution set characterizes all valid strategies for an adversary. In \cref{sec:dual-sol} we construct a feasible dual solution that bounds the binomial game that is relevant for our work. To be consistent with the common naming in the literature, in the following we sometimes refer to a player in  an online binomial game as a \emph{strategy}.

\subsection{Notation} \label{sec:additional_notions}
In this section we present the notation used in  \cref{sec:BinomialViaLP}. To make the reader life easier, we start with recalling the basic definitions from \cref{subsec:formal_binomial}.

\begin{definition}[Online  binomial games -- Restatement of \cref{def:game}]\label{bounds:def:game}
    \onlineBinomialGame
\end{definition}

\begin{definition}[Game bias -- Restatement of \cref{def:gameBias}]\label{bounds:def:gameBias}
    \biasOnlineBinomialGame
\end{definition}
Let $\game_{m,\eps,f} = \set{C_1,\ldots,C_m,f}$ be a Binomial game. In the following it be will convenient to identify a round of the game by the number of rounds left until the game ends.
Thus, referring  the  $i$-th round of $\game_m$, as \emph{level} $m-i+1$. For any level $\ell \in [m]$, let $D_\ell=C_{m-\ell+1}$, and let $\reml = (\ell-1)^2 + \ldots + 1^2 = O(\ell^3)$ be the remaining coins when at level $\ell$.

We define two types of events/states. A \noHintState $\seq{\ell,b}$ corresponds to the event that $S_{m-\ell} = b$. A \withHintState  $\seq{\ell,b, h}$ corresponds to the event that $\seq{\ell,b}$ happens and $H_{\rnd-\ell+1}= h$. In some cases, we abuse notation and refer to state $u=\seq{\ell,b,h}$ as the tuple $(\ell,b,h)$. For a set of states $S$, let $\pr{S}$ be $\pr{\bigcup_{u \in S} u}$.
For a \withHintState $u = \seq{\ell,b,h}$ or \noHintState $u = \seq{ \ell,b}$, let $\ell$ be the {\em level} of $u$, and $b$ be the \offset of $u$. For two states $u,v$, we write $u < v$ to indicate that $u$ occurred in an earlier round.
For a \withHintState $u = \seq{\ell,b,hint}$, let $u^- = \seq{\ell,b}$ be the corresponding \noHintState. For \noHintState $u = \seq{\ell,b}$, $u^-$ is the same as $u$. The final \noHintState $\seq{0,b}$ is referred to as a $f_b$. In particular, $f_{\minus 1}$ be the final state with \offset\ $\minus 1$. Let $F^\pos$ be the set of all \finalStates with positive \offset. Let $\widehat{V}$ be the union of all \withHintState and final states. 
Given some state $u$ (\withHintState or \noHintState), let $c_u \eqdef \Pr[F^\pos \mid u^-]$, and $v_u \eqdef \Pr[F^\pos\mid u]$.

We next define the \emph{final state} in which a strategy $T$ stops.

\begin{definition}[Abort state]\label{def:additional_to_strategy}
For  a strategy $\Strgy$, let $U_{\Strgy}$ be the \withHintState in which the strategy $\Strgy$ aborts, or the \finalState that the game reached if no abort occurs.
\begin{align*}
  U_{\Strgy}=
  \begin{cases}
    \seq{ \ell,b,h}  &\mbox{if \ \ $I_{\Strgy}=m-\ell+1,\: S_{m-\ell}=b,\: H_{m-\ell+1}=h$ }  \\
    \seq{ 0,b} &\mbox{if \ \ $I_{\Strgy}=m+1, \: S_m=b$ }
  \end{cases}
\end{align*}
\end{definition}

Using the above  notation,  it holds that

 \begin{align}\label{clm:bias_of_strategy}
    \bias_\Strgy = \sum_{u \in \widehat{V}} (c_u - v_u) \cdot \pr{U_\Strgy=u}
  \end{align}
for any  strategy $\Strgy$.

\subsection{The Linear Program}\label{sec:lp}
In this section we present a linear program which we show characterizes \emph{all} possible strategies $S$ of the adversary in the binomial game. The linear program captures not only deterministic strategies, but any strategy, including probabilistic strategies. Thus, finding the best strategy for the adversary is equivalent to finding the optimal solution to the linear program. The linear program and its dual appear in Figure \ref{fig:basicLP}. The primal LP has variables $a_v$ for every \withHintState $v$ that represent the probability that the strategy aborts at state $v$. The LP is, of course, specific for each family of binomial games under consideration (with its specific states, number of rounds etc.).
The following lemma shows that every strategy for the adversary induces a feasible solution for the linear program with the same value.

\begin{figure}[t!]
\begin{center}
\begin{tabular}{cc|cc}
$(P)$ \ $\max \sum_{v\in \widehat{V}}\ a_v \cdot (c_v-v_v)$& & $(D)$ \ $\min \sum_{u\in \widehat{V}} y_u \cdot \Pr[u]$ &\\
 $\st$ &&  $\st$  \\
 $a_v + \sum_{u | u<v} a_u \cdot \Pr[v|u] \leq \Pr[v]$ & $\forall v \in \widehat{V}$ &  $y_u + \sum_{v | u<v} y_v \cdot \Pr[v|u] \geq c_u-v_u$ & $\forall u \in \widehat{V}$  \\
 $a_v \geq 0$ & $\forall v \in \widehat{V}$ & $y_u \geq 0$ & $\forall u \in \widehat{V}$ \\
\end{tabular}
\end{center}
\caption{Linear program and its dual for the Binomial game $\game_{m,\eps,f}$}
\label{fig:basicLP}
\end{figure}

\begin{lemma}[Strategy to LP solution]\label{lem:feasible}
  Let $\Strgy$ be an adversarial strategy for the $m$-round binomial game $\game_{m,\eps,f}$. For any $v \in \widehat{V}$ let $a^\Strgy_v$ be the probability that the strategy aborts at state $v$, where probability is taken over the randomness of both the game and possibly the strategy (formally, $a^\Strgy_v = \pr{U_\Strgy=v}$) . Then, $a^\Strgy_v$ is a feasible solution to the linear program.  Moreover, the objective value $\sum_{v\in \widehat{V}}\ a^\Strgy_v \cdot (c_v-v_v)$ is the bias obtained by strategy $\Strgy$.
\end{lemma}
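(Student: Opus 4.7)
The plan is to verify the two linear constraints and then invoke the bias computation already done in \cref{clm:bias_of_strategy}. Non-negativity is immediate since $a^\Stateless_v = \pr{U_\Stateless=v}$ is a probability, so the only substantive work is to establish the inequality $a_v + \sum_{u < v} a_u \cdot \pr{v \mid u} \leq \pr{v}$ and to match the objective with $\bias_\Stateless$.

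My first step is to decompose the event $\mathcal{E}_v$ that the game \emph{visits} state $v$ (whose probability is exactly $\pr{v}$, defined purely over the game's randomness and independent of the strategy). I would partition $\mathcal{E}_v$ according to what the strategy $\Stateless$ does: either (i) $\Stateless$ aborts exactly at $v$, or (ii) $\Stateless$ aborts at some earlier state $u < v$ that lies on the trajectory passing through $v$, or (iii) $\Stateless$ never aborts at any state $\leq v$. Case (i) contributes exactly $a^\Stateless_v = \pr{U_\Stateless = v}$, because conditioning on $U_\Stateless = v$ forces $\mathcal{E}_v$.

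The key observation, and I expect the only delicate point, is the contribution of case (ii): given $U_\Stateless = u$ for some $u < v$, the future coins and hints of the game remain distributed according to the game's natural dynamics conditioned on reaching state $u$, since the strategy's abort decision at $u$ is a function of the past only and does not influence future randomness. Hence
\begin{align*}
\pr{\mathcal{E}_v \cap U_\Stateless = u} \;=\; a^\Stateless_u \cdot \pr{v \mid u}.
\end{align*}
Summing over $u < v$, together with case (i), and noting that case (iii) contributes a non-negative remainder, yields
\begin{align*}
\pr{v} \;=\; a^\Stateless_v \;+\; \sum_{u < v} a^\Stateless_u \cdot \pr{v \mid u} \;+\; (\text{non-negative term}),
\end{align*}
which is exactly the LP constraint. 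The main obstacle I anticipate is writing out the conditional independence carefully enough to justify factoring the probability in case (ii); once that is handled cleanly, the constraint drops out.

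Finally, the objective value equals $\sum_{v \in \widehat{V}} (c_v - v_v) \cdot \pr{U_\Stateless = v}$, which is precisely the expression for $\bias_\Stateless$ established in \cref{clm:bias_of_strategy} (noting that on final states $v = \seq{0,b}$ one has $v^- = v$ and therefore $c_v = v_v$, so those terms vanish harmlessly). This completes the mapping from strategies to feasible LP solutions with matching objective value.
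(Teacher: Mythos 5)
Your proof is correct and takes essentially the same route as the paper: you decompose the event that the game visits $v$ according to the round at which $\Stateless$ aborts, observe that conditioned on $u$ the event $U_\Stateless = u$ is independent of the game later visiting $v$ (since the former depends only on the strategy's randomness while the latter depends on coins tossed after $u$), and discard the non-negative remainder; the objective-value identity then falls out of \cref{clm:bias_of_strategy} exactly as you say.
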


\begin{proof}
Let $\Strgy$ be an $m$-round strategy. Obviously, $a^\Strgy_v \geq 0$. Using \cref{clm:bias_of_strategy}, we have:
$$\bias_\Strgy = \sum_{v \in \widehat{V}}\ a^\Strgy_v \cdot (c_v-v_v)$$


For every states $u<v$, since the visited nodes (induced by the coins) form a Markov chain, and since the event $U_T=u$ is a random fucntion of the node $u$ and its ancestors, it holds that

\begin{align}
  \pr{v \mid U_\Strgy=u,u} = \pr{v \mid u} \label{lem:feasible:EQ1}
\end{align}
Thus, we have:
\begin{align}
  \pr{v} &\geq \sum_{u \leq v} \pr{v \mid U_\Strgy=u} \cdot  \pr{U_\Strgy=u} \label{lem:feasible:EQ2} \\
         & = \sum_{u \leq v} \pr{v \mid U_\Strgy=u,u} \cdot  \pr{U_\Strgy=u} \label{lem:feasible:EQ3} \\
         & = \sum_{u \leq v} \pr{v \mid u} \cdot  a^\Strgy_u \label{lem:feasible:EQ4}\\
         & = a^\Strgy_v + \sum_{u < v} a^\Strgy_u \cdot  \pr{v|u}. \nonumber
\end{align}

Inequality \eqref{lem:feasible:EQ2} follows by total probability on disjoint events (without the probability that $\Strgy$ does not abort until $v$'s round).
Equality \eqref{lem:feasible:EQ3} is due that the event $U_\Strgy=u$ is contained in $u$. Equality \eqref{lem:feasible:EQ4} is due \cref{lem:feasible:EQ1}. Thus, the variables satisfy the main constraint.
\end{proof}

The next lemma is a direct implication of \cref{lem:feasible} along with weak duality.

\begin{lemma}[Upper bound  on game value]\label{lem:bound_game_value}
  Let $\game=\game_{m,\eps,f}$ be a Binomial game  and let $\set{y_u\ |\ u \in \widehat{V}}$ be a feasible solution to the dual LP $(D)$ induced by $\game$. Then,
  $$ \bias_{m,\eps,f} \leq \sum_{u \in \widehat{V}} \Pr[u] \cdot y_u\enspace .$$
\end{lemma}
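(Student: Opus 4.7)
The plan is to combine the primal characterization of strategies (\cref{lem:feasible}) with weak LP duality. Concretely, I would first appeal to \cref{clm:biasWithoutHistory} to reduce to deterministic stateless strategies, though in fact this reduction is not even needed here since \cref{lem:feasible} applies to arbitrary strategies. Given any strategy $\Stateless$ for $\game$, define $a_v^\Stateless = \pr{U_\Stateless = v}$ for $v \in \widehat{V}$. By \cref{lem:feasible}, the vector $(a_v^\Stateless)_{v \in \widehat{V}}$ is a feasible solution to the primal $(P)$, and its objective value equals the bias $\bias_\Stateless$ of the strategy.

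Next, I would invoke weak LP duality between $(P)$ and $(D)$ in the standard way: for any primal-feasible $(a_v)$ and any dual-feasible $(y_u)$,
\begin{align*}
\sum_{v \in \widehat{V}} a_v \cdot (c_v - v_v) \;\leq\; \sum_{u \in \widehat{V}} y_u \cdot \pr{u}.
\end{align*}
A short self-contained derivation: multiply the $u$-th dual constraint by $a_u \geq 0$ and sum over $u$, then multiply the $v$-th primal constraint by $y_v \geq 0$ and sum over $v$; after switching the order of summation in the cross term $\sum_{u,v : u<v} a_u y_v \pr{v\mid u}$, one side bounds the other, yielding the desired inequality. Applying this with $a_v = a_v^\Stateless$ gives $\bias_\Stateless \leq \sum_{u \in \widehat{V}} y_u \cdot \pr{u}$.

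Finally, since this bound holds for every strategy $\Stateless$, taking the supremum over strategies and using the definition $\bias_{m,\eps,f} = \max_\Stateless \bias_\Stateless(\game)$ (\cref{bounds:def:gameBias}) yields the claim. There is essentially no obstacle: the entire work has been done in setting up the LP correctly and in proving \cref{lem:feasible}, so the lemma is just a one-line consequence of weak duality.
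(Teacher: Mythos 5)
Your proof is correct and matches the paper's argument: \cref{lem:feasible} maps any stateless strategy to a feasible primal point whose objective value equals its bias, weak LP duality compares primal and dual objectives, and the supremum over strategies gives the claim. The only addition you make is a self-contained sketch of weak duality itself, which the paper simply cites as a known theorem.

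One caveat about your parenthetical remark: \cref{clm:biasWithoutHistory} is in fact needed, not optional. \cref{lem:feasible} is stated and proved only for \emph{stateless} algorithms; its proof of the key independence \eqref{lem:feasible:EQ1} relies on the abort decision at a state depending only on $(S_{i-1},H_i)$ and the adversary's private randomness, which fails for a history-dependent algorithm. Since \cref{bounds:def:gameBias} takes the maximum over \emph{all} online algorithms, you still need the reduction in \cref{clm:biasWithoutHistory} from general algorithms to stateless ones before invoking \cref{lem:feasible}. (You are right that the further restriction to \emph{deterministic} stateless strategies is unnecessary for this lemma, since \cref{lem:feasible} already handles randomized ones.) This does not invalidate your proof, since you do appeal to \cref{clm:biasWithoutHistory}, but the role of that reduction is not dispensable as your aside suggests.
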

\begin{proof}
Consider the primal-dual LPs defined in \cref{fig:basicLP}. By Weak duality theorem the value of any feasible solution to the $(D)$ is an upper bound on the value of any feasible solution to $(P)$. By \cref{lem:feasible} for any positively aimed strategy $\Strgy$ and any feasible solution $\set{y_u\ |\ u \in \widehat{V}}$ for $(D)$,
\begin{align}
   \bias_\Strgy = \sum_{v \in \widehat{V}}\ a^\Strgy_v \cdot (c_v-v_v) \leq \sum_{u\in \widehat{V}} y_u \cdot \pr{u} \label{lem:bound_game_value:eq1}
\end{align}
Thus, $\bias_{m,\eps,f} \leq \sum_{u \in \widehat{V}} \Pr[u] \cdot y_u $.
\end{proof}

For completeness, we also show that a solution to the linear program implies a strategy for the adversary with the same value.

\begin{lemma}[LP solution to strategy]\label{lem:other_direction}
Let $a_v$ for $v\in \widehat{V}$ be a feasible solution to $(P)$. Let $\Strgy$ be a strategy that aborts at state $v$ with probability $\frac{a_v}{\Pr[v]-\sum_{u<v} a_u \cdot \Pr[v|u]}$ whenever the execution gets to state $v$ and $\Strgy$ did not abort in any previous state. Then, $\Strgy$ is a valid strategy that achieves bias of $\sum_{v\in \widehat{V}}\ a_v \cdot (c_v-v_v)$.
\end{lemma}

\begin{proof}
Let $\Strgy(v) = \frac{a_v}{\Pr[v]-\sum_{u<v} a_u \cdot \Pr[v|u]}$. First, by the constraints of $(P)$ $0\leq \Strgy(v)\leq 1$ and so the strategy define a valid conditional probability of stopping at state $v$. We prove by induction on the rounds that the strategy aborts at every state $v$ with probability $a_v$. This immediately implies (from \cref{clm:bias_of_strategy}) that the strategy has bias $\sum_{v\in \widehat{V}}\ a_v \cdot (c_v-v_v)$.
For the first round the probability that the game visits $v$ is $\Pr[v]$. Hence, the strategy aborts with probability $\Pr[v] \cdot \frac{a_v}{\Pr[v]-\sum_{u<v} a_u \cdot \Pr[v|u]}=\Pr[v] \cdot \frac{a_v}{\Pr[v]} = a_v$.
For an arbitrary state $v$ at round $k$ we have,

\begin{align}
\lefteqn{\Pr[\text{$\Strgy$ aborts at $v$}] = \Strgy(v) \cdot \Pr[\text{game visits state $v$ and did not abort at any $u<v$}]}\nonumber \\
&= \Strgy(v)\cdot \left(\Pr[\text{game visits state $v$}] - \sum_{u<v}\Pr[\text{game visits $v$}| \text{$S$ aborts at $u$}]\Pr[\text{$S$ aborts at $u$}]\right) \label{inee1}\\
&= \Strgy(v)\cdot \left(\Pr[\text{game visits state $v$}] - \sum_{u<v}\Pr[\text{game visits $v$}| \text{game visits $u$}]\Pr[\text{$S$ aborts at $u$}]\right) \label{inee2}\\
&= \Strgy(v) \cdot \left(\Pr[v]-\sum_{u<v} a_u \cdot \Pr[v|u]\right) = a_v \label{inee3}
\end{align}

Equality \eqref{inee1} follows by total probability on disjoint events. Inequality \eqref{inee2} follows since given that the game visits $u$, the probability that the strategy aborts on state $u$ is independent of the event that the game visits $v$ (that depends on coins that are tossed at later rounds. Finally, Equality \eqref{inee3} follows by the induction hypothesis.
\end{proof}

\subsection{Useful Tools}\label{sec:usefull_tools}
In this section we develop several useful tools, that are later used to analyze the dual-LP. We start with the intuitive claim that states that the best possible hint is the result of current coins.

\begin{claim}[best possible hint]\label{best_possible_hint}
Let {\sf $\game_{m,\eps,f} = \set{C_1,\ldots,C_m,f}$ } be an $m$-round online Binomial game, where $f \colon [m] \times \Z \times \Z \ra \hintSet$.
Let {\sf $\game'=\game'_{m,\eps,f'}$ } be the $m$-round online Binomial game, that uses the function $f'$, where  $f'\ : [m] \times \Z \times \Z \ra \hintSet \cup \set{-\ell^2,\ldots,\ell^2}$ is defined as follows:

\begin{align*}
	f'(i,b,z)= \begin{cases} z & z\in Z' \subset Z \\ f(i,b,z) & z \in Z \setminus Z'\end{cases}
\end{align*}
(note that in the first case, $f'$ outputs the current round coins).
Let $\{y_u\}_{u \in \widehat{V'}}$ be a feasible solution for the dual LP, induced by $\game'$. Then there exist a feasible solution $\{x_u\}_{u \in \widehat{V}}$ for the dual LP induced by $\game$, such that,
\begin{align}
  \sum_{u \in \widehat{V'}} \Pr[u] \cdot y_u = \sum_{u \in \widehat{V}} \Pr[u] \cdot x_u  \label{best_possible_hint:EQ1}
\end{align}
\end{claim}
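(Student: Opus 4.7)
\begin{proofsketch}
The intuition is that the hint function $f'$ is at least as informative as $f$ (on inputs with $z\in Z'$ it reveals the current coin itself), so every adversary for $\game$ can be simulated by one for $\game'$ of the same bias; hence any dual certificate for $\game'$ should transfer to a dual certificate for $\game$. The plan is to carry out this intuition directly at the LP level, constructing $\{x_u\}$ by averaging the $y$'s over the fibers of a natural coupling.

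\emph{Step 1 (coupling).} Sample the coins $C_1,\ldots,C_m$ and the internal randomness of $f$ once and use them in both $\game$ and $\game'$. Under this coupling, in round $i$, the two games produce identical level and offset, and their hints agree whenever $C_i\notin Z'$; when $C_i\in Z'$ the $\game'$-hint is $C_i$ while the $\game$-hint is $f(i,S_{i-1},C_i)$. Let $U_i$ and $V_i$ denote the with-hint state at round $i$ in $\game$ and $\game'$, respectively (and let $U_{m+1},V_{m+1}$ denote the final state, which is the same in both games since it depends only on $S_m$).

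\emph{Step 2 (definition of the dual solution).} For every $u=\seq{\ell,b,h}\in\widehat{V}$, set
\begin{equation*}
x_u \;=\; \Exp\bigl[y_{V_i}\;\big|\;U_i=u\bigr],
\end{equation*}
where $i$ is the round corresponding to level $\ell$. Non-negativity is immediate from $y_v\ge 0$. The objective equality \eqref{best_possible_hint:EQ1} is a one-line computation using the tower property:
\begin{equation*}
\sum_{u\in\widehat{V}}\Pr[u]\,x_u \;=\; \Exp\bigl[y_{V_i}\bigr] \;=\; \sum_{v\in\widehat{V'}}\Pr[v]\,y_v.
\end{equation*}

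\emph{Step 3 (feasibility).} The main content is to verify the dual constraint $x_u+\sum_{u'>u}\Pr[u'\mid u]\,x_{u'}\ge c_u-v_u$ for every $u\in\widehat{V}$. Two observations drive the argument. First, $c_u=\Pr[F^{\mathrm{pos}}\mid\ell,b]$ depends only on the level and offset, so $c_{V_i}=c_u$ under the coupling. Second, since $V_i$ contains strictly more information than $U_i$ in round $i$ and the future coins are independent of everything in the past given $(\ell,b)$, one obtains the key identity
\begin{equation*}
v_u \;=\; \Pr[F^{\mathrm{pos}}\mid U_i=u] \;=\; \Exp\bigl[\Pr[F^{\mathrm{pos}}\mid V_i]\;\big|\;U_i=u\bigr] \;=\; \Exp[v_{V_i}\mid U_i=u].
\end{equation*}
Together these yield $c_u-v_u=\Exp[c_{V_i}-v_{V_i}\mid U_i=u]$.

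\emph{Step 4 (transport through the LP constraint).} Exchanging sums and using the Markov property in the coupled process (future states in $\game'$ depend on the past only through the current $\game'$-state, and the conditional distribution of future $\game'$-states given the current $\game$-state is the marginal of that joint), one derives
\begin{equation*}
x_u + \sum_{u'>u}\Pr[u'\mid u]\,x_{u'} \;=\; \Exp\!\left[y_{V_i}+\sum_{v'>V_i}\Pr[v'\mid V_i]\,y_{v'}\;\Big|\;U_i=u\right].
\end{equation*}
Applying the feasibility of $\{y_v\}$ for $(D')$ inside the conditional expectation gives
\begin{equation*}
x_u+\sum_{u'>u}\Pr[u'\mid u]\,x_{u'}\;\ge\;\Exp\bigl[c_{V_i}-v_{V_i}\;\big|\;U_i=u\bigr]\;=\;c_u-v_u,
\end{equation*}
completing the verification.

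\emph{Anticipated obstacle.} The only delicate points are the two conditional independence statements implicit in Steps 3 and 4: that $F^{\mathrm{pos}}$ is independent of $U_i$ given $V_i$, and that a future $\game'$-state $V_{i'}$ is independent of $U_i$ given $V_i$ (and symmetrically, that $V_{i'}$'s distribution given $U_i$ marginalizes correctly through $U_{i'}$). Both reduce to the fact that, conditioned on the current level/offset, the future coins are drawn independently and the hint-randomness at each round is fresh; the cleanest way to present them is to write the full joint law of the coupled process explicitly and check the conditioning chains once, after which Steps 3--4 become routine.
\end{proofsketch}
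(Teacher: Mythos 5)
Your construction is exactly the paper's: you define $x_u$ as the conditional expectation of $y$ at the coupled $\game'$-state, which for a state $\seq{\ell',b',h}$ at a ``coin-revealed'' no-hint state unfolds to the paper's formula $x_{\seq{\ell',b',h}}=\sum_i y_{\seq{\ell',b',i}}\cdot\Pr[D_{\ell'}=i\mid \seq{\ell',b',h}]$, and equals $y_u$ elsewhere. The paper reaches the same place by first reducing (inductively) to the case where $f'$ differs from $f$ at a single no-hint state and then verifying the dual constraint by a direct chain of equalities; your coupling framing handles all modified states at once and replaces the explicit computation with two conditional-independence facts --- $V_{i'}\perp U_i\mid U_{i'}$, and $\{V_{i'}\}_{i'>i},F^{\mathrm{pos}}\perp U_i\mid V_i$ --- which are indeed the content of the paper's transitions \eqref{best_possible_hint:long:2}--\eqref{best_possible_hint:long:4}. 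So this is essentially the same proof in cleaner probabilistic dress; the ``anticipated obstacle'' you flag is genuine but routine (fresh per-round coin and hint randomness), and the paper discharges it by the explicit calculation you are abstracting.
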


\begin{proof}[Proof of \cref{best_possible_hint}]
  The following proves the claim for a hint function $f'$, that agrees with $f$ on all \noHintStates except of one. That is, $f'(i,b,z) = f(i,b,z)$ for all $Z$ except one coordinate.
  The validity for any $f'$ will follow by easy induction. So assume that $f$ agrees with $f'$, on all \noHintStates, except from $\seq{\ell',b'}$, and for every $z \in \Z$, $f'(\ell',b',z)=z$. Denote by $\seq{\ell,b}^+ \eqdef \set{\seq{\ell,b,h} \mid h \in \hintSet}$ that is the set of all \withHintStates with corresponding \noHintState: $\seq{\ell,b}$. We define the the solution for the dual-LP induced by $\game$, to be:
\begin{align*}
  x_u=
  \begin{cases}
    \sum_{i=-\ell^2}^{\ell^2}\ \  y_{\seq{\ell',b',i}} \cdot \pr{D_{\ell'}=i \mid u} &\mbox{if \ \ $u=\seq{\ell',b',h}, h \in \hintSet$ } \\
    y_u   &\mbox{otherwise}
  \end{cases}
\end{align*}
We start by proving that the target function has the same value in both LPs. Indeed for every $u=\seq{\ell,b,h}$, where $\seq{\ell,b} \neq \seq{\ell',b'}$, we have $\pr{u} \cdot y_u =\pr{u} \cdot x_u$, hence those states contribute the same to the sums in \cref{best_possible_hint:EQ1}. We calculate:
\begin{align*}
  \sum_{h \in \hintSet} \pr{\seq{\ell',b',h}} \cdot x_u &= \sum_{h \in \hintSet} \pr{\seq{\ell',b',h}} \cdot \sum_{i=-\ell^2}^{\ell^2}\ \  y_{\seq{\ell',b',i}} \cdot \pr{D_{\ell'}=i \mid \seq{\ell',b',h}} \\
                                                       &= \sum_{i=-\ell^2}^{\ell^2}\ \  y_{\seq{\ell',b',i}} \cdot \sum_{h \in \hintSet} \pr{\seq{\ell',b',h}} \cdot \pr{D_{\ell'}=i \mid \seq{\ell',b',h}} \\
                                                       &= \sum_{i=-\ell^2}^{\ell^2}\ \  y_{\seq{\ell',b',i}} \cdot \pr{\seq{\ell',b'}, D_{\ell'}=i}
\end{align*}
So we conclude that states of the type $\seq{\ell',b',\cdot}$ contribute the same to the sums in \cref{best_possible_hint:EQ1}, hence \cref{best_possible_hint:EQ1} follows.

Next, we prove the that $\{x_u\}_{u \in \widehat{V}}$ is a feasible solution for the dual-LP induced by $\game$. Constraints relevant to states $\seq{\ell,b,h}$, with $\ell<\ell'$, or $\ell=\ell'$, and $b \neq b'$, are trivially satisfied because they look the same in the LP induced by $\game'$. Consider now states of the form $\seq{\ell',b',h}$:
\begin{align}
  \lefteqn{x_{\seq{\ell',b',h}} + \sum_{v > \seq{\ell',b',h}}\ x_v \cdot \pr{v \mid \seq{\ell',b',h}}} \nonumber \\
  &= \sum_{i=-\ell^2}^{\ell^2}\ \  y_{\seq{\ell',b',i}} \cdot \pr{D_{\ell'}=i \mid \seq{\ell',b',h}} + \sum_{v > \seq{\ell',b',h}}\ y_v \cdot \pr{v \mid \seq{\ell',b',h}} \nonumber\\
  & =\sum_{i=-\ell^2}^{\ell^2}\ \  y_{\seq{\ell',b',i}} \cdot \pr{D_{\ell'}=i \mid \seq{\ell',b',h}} + \nonumber \\
  &  \ \ \ \ \ \ \ \ \ \ \ \ \
  + \sum_{v > \seq{\ell',b',h}}\ y_v \cdot \sum_{i=-\ell^2}^{\ell^2}\ \pr{v \mid D_{\ell'}=i,\seq{\ell',b',h}} \cdot \pr{D_{\ell'}=i \mid \seq{\ell',b',h}} \nonumber\\
  &=\sum_{i=-\ell^2}^{\ell^2}\ \  (y_{\seq{\ell',b',i}} \cdot \pr{D_{\ell'}=i \mid \seq{\ell',b',h}} + \nonumber \\
  & \ \ \ \ \ \ \ \ \ \ \ \ \ +
  \pr{D_{\ell'}=i \mid \seq{\ell',b',h}} \cdot \sum_{v > \seq{\ell',b',h}}\ y_v \cdot \pr{v \mid D_{\ell'}=i, \seq{\ell',b',h}}) \label{best_possible_hint:long:1}
\end{align}
Continuing from \cref{best_possible_hint:long:1} we get
\begin{align}
  &=\sum_{i=-\ell^2}^{\ell^2}\ \  \pr{D_{\ell'}=i \mid \seq{\ell',b',h}}
  \cdot \left(y_{\seq{\ell',b',i}} + \sum_{v > \seq{\ell',b',h}}\ y_v \cdot \pr{v \mid D_{\ell'}=i,\seq{\ell',b',h}}\right) \nonumber \\
  &=\sum_{i=-\ell^2}^{\ell^2}\ \  \pr{D_{\ell'}=i \mid \seq{\ell',b',h}}
  \cdot \left(y_{\seq{\ell',b',i}} + \sum_{v > \seq{\ell',b',h}}\ y_v \cdot \pr{v \mid D_{\ell'}=i,\seq{\ell',b'}}\right) \label{best_possible_hint:long:2}\\
  &=\sum_{i=-\ell^2}^{\ell^2}\ \  \pr{D_{\ell'}=i \mid \seq{\ell',b',h}}
  \cdot \left(y_{\seq{\ell',b',i}} + \sum_{v > \seq{\ell',b',i}}\ y_v \cdot \pr{v \mid D_{\ell'}=i,\seq{\ell',b'}}\right) \nonumber\\
  &\geq \sum_{i=-\ell^2}^{\ell^2}\ \  \pr{D_{\ell'}=i \mid \seq{\ell',b',h}}
  \cdot \left(\pr{F^\pos \mid \seq{\ell',b'}} - \pr{F^\pos \mid \seq{\ell',b'}, D_{\ell'}=i}\right) \label{best_possible_hint:long:3} \\
  &= \pr{F^\pos \mid \seq{\ell',b'}} \cdot \sum_{i=-\ell^2}^{\ell^2}\ \ \pr{D_{\ell'}=i \mid \seq{\ell',b',h}} - \nonumber  \\
  & \ \ \ \ \ \ \ \ \ \ \ \ \ - \sum_{i=-\ell^2}^{\ell^2}\ \  \pr{F^\pos \mid \seq{\ell',b',h}, D_{\ell'}=i} \cdot \pr{D_{\ell'}=i \mid \seq{\ell',b',h}} \label{best_possible_hint:long:4} \\
  &=\pr{F^\pos \mid \seq{\ell',b'}} - \pr{F^\pos \mid \seq{\ell',b',h}}\nonumber
\end{align}
Where Equality \eqref{best_possible_hint:long:2}, and Equality \eqref{best_possible_hint:long:4} we used the fact that $\pr{v \mid D_{\ell'}=i,\seq{\ell',b'}} = \pr{v \mid D_{\ell'}=i,\seq{\ell',b',h}}$ (Intuitively, once we know the value of $D_{\ell'}$, the hint $h$ gives us no more information), and in Inequality \eqref{best_possible_hint:long:3} we use the feasibility of the solution $\{y_u\}_{u \in \widehat{V'}}$.
The feasibility for states $\seq{\ell,b',h}$ for $\ell>\ell'$ involves same kind of computation, and we omit it.
\end{proof}

Recall that $S$ is a set of states. In the following we abuse notation and write $\pr{S}$ instead of $\pr{\bigcup_{u \in S} u}$.

\begin{claim}[low profit states]\label{low_profit}
  Let $\delta>0$ be a positive constant. Let $S$ be a set of \withHintStates such that $S \subset \{u=\langle \ell,b,h \rangle\ |\ c_u - v_u \leq \delta \ ,\ \ell \neq 0 \}$. Then, there are values $y_u$ ($u \in S$) such that $\sum_{u \in S} y_u \cdot \Pr[u] \leq \delta \cdot \pr{S}$ and for every state $u \in S$: $y_u + \sum_{v \in S\ :\ v>u } y_v \cdot \Pr[v|u]  \geq c_u-v_u$.
\end{claim}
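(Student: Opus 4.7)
The plan is to prove the claim via LP duality applied to a restricted version of the primal/dual pair in \cref{fig:basicLP}. The naive attempt of setting $y_u = c_u - v_u$ trivially satisfies the constraints (since the additional sum is non-negative), but yields only $\sum_u y_u \Pr[u] \leq \delta \sum_u \Pr[u]$, which can be much larger than $\delta \cdot \Pr[S]$ because the events $\{u\}_{u \in S}$ lie on different levels and a single trajectory may visit several of them. Hence a more global construction is needed, and this is naturally provided by LP duality.

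Concretely, consider the \emph{restricted} primal LP obtained from $(P)$ by forcing $a_v = 0$ for every $v \notin S$, equivalently by keeping only the variables $\{a_v\}_{v \in S}$ together with the constraints $a_v + \sum_{u \in S,\, u < v} a_u \Pr[v|u] \leq \Pr[v]$ for $v \in S$. Its dual is exactly the LP whose feasible solutions $\{y_u\}_{u \in S}$ satisfy the constraint stated in the claim, with objective $\sum_{u \in S} y_u \Pr[u]$. By weak LP duality it therefore suffices to produce a \emph{primal} upper bound of $\delta \cdot \Pr[S]$ on the restricted primal; strong duality (the restricted program is bounded and feasible, e.g.\ $a_v \equiv 0$) then yields the desired feasible dual solution.

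To upper-bound the restricted primal, I will mirror the argument of \cref{lem:feasible,lem:other_direction}: any feasible solution $\{a_v\}_{v \in S}$ of the restricted primal corresponds, via the construction of \cref{lem:other_direction}, to a strategy $\Stateless$ that aborts only at states in $S$, and whose bias equals the primal objective $\sum_{v \in S} a_v(c_v - v_v)$. Since $\Stateless$ aborts at most once during the game, and aborts only at states belonging to $S$, we have
\[
\Pr[U_\Stateless \in S] \leq \Pr\Bigl[\bigcup_{u \in S} u\Bigr] = \Pr[S],
\]
and the per-state profit satisfies $c_u - v_u \leq \delta$ by the hypothesis on $S$. Combining these via \cref{clm:bias_of_strategy},
\[
\bias_\Stateless \;=\; \sum_{u \in S} (c_u - v_u)\cdot \Pr[U_\Stateless = u] \;\leq\; \delta \cdot \Pr[U_\Stateless \in S] \;\leq\; \delta \cdot \Pr[S].
\]

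The main (minor) obstacle is to make precise that the restricted primal value indeed corresponds to the best bias of strategies that abort only in $S$, which requires checking that the restriction of the mapping of \cref{lem:feasible,lem:other_direction} stays within $S$; this is immediate from the construction there, since the strategy obtained from $\{a_v\}_{v\in S}$ only randomizes an abort decision at states $v$ with $a_v > 0$. Applying LP duality to the restricted pair then yields dual variables $\{y_u\}_{u \in S}$ that are feasible for the constraint in the claim and satisfy $\sum_{u \in S} y_u \Pr[u] \leq \delta \cdot \Pr[S]$, completing the proof.
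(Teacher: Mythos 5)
Your proof is correct, but it takes a genuinely different route from the paper. The paper constructs the dual variables explicitly: writing $S^i$ for the states of $S$ at level $i$, it sets $y_u = \delta$ for $u \in S^1$ and $y_u = \delta \cdot \Pr[\overline{S^{i-1}},\ldots,\overline{S^1} \mid u]$ for $u \in S^i$ with $i>1$, and then checks by a direct partition-of-the-event-space computation that $y_u + \sum_{v \in S,\, v>u} y_v \cdot \Pr[v|u] = \delta \geq c_u - v_u$ and that $\sum_{u\in S} y_u\cdot\Pr[u] = \delta\cdot \Pr[S]$ exactly. You instead avoid guessing the form of $y$: you restrict $(P)$ to variables and constraints indexed by $S$, upper-bound the restricted primal optimum by $\delta\cdot\Pr[S]$ via the strategy correspondence (any feasible $\{a_v\}_{v\in S}$ induces an $S$-aborting strategy $\Stateless$ with $\sum_{v\in S}a_v(c_v-v_v)\leq\delta\sum_{v\in S}\Pr[U_\Stateless = v]\leq\delta\Pr[S]$), and then invoke strong LP duality to produce a feasible dual point of objective at most $\delta\cdot\Pr[S]$. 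Both arguments work; the paper's is more self-contained and gives an explicit witness, while yours re-uses the already-established machinery of \cref{lem:feasible}, \cref{lem:other_direction} and \cref{clm:bias_of_strategy} and is therefore conceptually shorter. Two small points to make your version airtight. First, \cref{lem:other_direction} is stated for full-primal feasible points, not restricted ones, so you should note (as you implicitly do at the end) that a restricted-feasible $\{a_v\}_{v\in S}$, extended by zero to $\widehat{V}\setminus S$, is feasible for the full $(P)$; this follows by running the induction of \cref{lem:other_direction} to build the $S$-aborting strategy and then applying \cref{lem:feasible} to it, but it is not a literal citation. Second, strong duality is being used, and this is legitimate only because the state space of the game (hence the LP) is finite — worth saying explicitly. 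Finally, your reference to ``weak LP duality'' in the middle of the argument is a slight misstatement; weak duality gives the wrong direction, and the step you actually need there is strong duality, which you correctly invoke in the next clause.
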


\begin{proof}
  Fix some $\delta$, and $S$. Denote by $S^i$, all the states from $S$, that belong to level $i$. Define $y_u$ to be:
  \begin{align*}
    y_u = \begin{cases} \delta  &\mbox{if } u \in S^1 \\
      \delta \cdot \Pr[\overline{S^{i-1}} , \ldots , \overline{S^1}\ |\ u] & \mbox{if } u \in S^i \text{ for } i>1 \end{cases}
  \end{align*}

Where $\overline{S^i}$ are all states that are not in $S^i$.
Take some state $u \in S^i$. We have:
  \begin{align*}
    y_u + \sum_{v \in S\ :\ v>u } y_v \cdot \Pr[v|u]
    &= \delta \cdot \Pr[\overline{S^{i-1}} , \ldots , \overline{S^1}\ |\ u] + \sum_{j<i}\ \sum_{v \in S^j} \delta \cdot \Pr[\overline{S^{j-1}} , \ldots , \overline{S^1}\ |\ v] \cdot \Pr[v|u]\\
    &= \delta \cdot (\Pr[\overline{S^{i-1}} , \ldots , \overline{S^1}\ |\ u] + \sum_{j<i}\ \Pr[S^{j},\overline{S^{j-1}} , \ldots , \overline{S^1}\ |\ u] )\\
    &= \delta \geq c_u - v_u
  \end{align*}
  Also we have:
  \begin{align*}
    \sum_{u \in S} \Pr[u] \cdot y_u &= \sum_{i=1}^m \ \sum_{u \in S^i} \delta \cdot \Pr[\overline{S^{i-1}} , \ldots , \overline{S^1}\ |\ u] \cdot \Pr[u]\\
                                    &= \sum_{i=1}^m \ \sum_{u \in S^i} \delta \cdot \Pr[u , \overline{S^{i-1}} , \ldots , \overline{S^1}]\\
                                    &= \sum_{i=1}^m \ \delta \cdot \Pr[S^i , \overline{S^{i-1}} , \ldots , \overline{S^1}]\\
                                    &= \delta \cdot \Pr[S^m \cup \ldots \cup S^1]\\
                                    &=\delta \cdot \pr{S}
  \end{align*}
\end{proof}


\begin{claim}[$\frac1m$-profit states]\label{frac_1_m_profit_states}
  Let {\sf $\game_{m,\eps,f} = \set{C_1,\ldots,C_m,f}$ } be an $m$-round online Binomial game with $\abs{\eps} \leq \frac{4 \cdot \sqrt{\log m}}{m\sqrt {m}}$, and for every $\seq{i,b} \in [\rnd,\Z]$, such that $\abs{b + \eps \cdot \reml} \geq 4 \sqrt{\log{\rnd} \cdot \reml}$, and every $z \in \Z$, $f(i,b,z)=z$ \footnote{For such $\seq{i,b}$, $f$ output current round coins.}.
  Let $S \eqdef S_\pos \cup S_\negl$ the set of states such that,
\begin{align*}
  S_\pos &\eqdef \set{\seq{\ell,b,h}\ \colon\  b + \eps \cdot \reml \geq 4 \sqrt{\log{\rnd} \cdot \reml}\ ,\ - \sqrt{\log{m} \cdot \reml} \leq h} \\
  S_\negl &\eqdef \set{\seq{\ell,b,h}\ \colon\  b + \eps \cdot \reml \leq -4 \sqrt{\log{\rnd} \cdot \reml}\ ,\ h \leq \sqrt{\log{m} \cdot \reml} }
\end{align*}
 Then for every $u \in S$ we have:
  \begin{align*}
    c_u - v_u = O(\frac1m)
  \end{align*}
\end{claim}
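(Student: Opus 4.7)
The plan is to case-split on whether $u = \seq{\ell,b,h}$ lies in $S_{pos}$ or $S_{neg}$, and in each case reduce $c_u - v_u$ to a pair of Hoeffding tail bounds. The key structural observation is that the hypothesis on $f$ forces $h = D_\ell$ for every \noHintState $\seq{\ell,b}$ appearing in $S$ (since such states automatically satisfy $|b + \eps \reml| \geq 4\sqrt{\log m \cdot \reml}$ by the defining conditions of $S_{pos}$ and $S_{neg}$). Decomposing $S_m = b + D_\ell + R$, where $D_\ell \sim \Beroo{\ell^2,\eps}$ is the sum of the coins at the current level and $R \sim \Beroo{\reml,\eps}$ is the independent sum of the remaining $\reml$ coins, one reads off
\begin{align*}
v_u = \Pr[R \geq -b-h] \qquad \text{and} \qquad c_u = \Pr[D_\ell + R \geq -b].
\end{align*}

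For $u \in S_{pos}$, the two defining inequalities $b + \eps\reml \geq 4\sqrt{\log m \cdot \reml}$ and $h \geq -\sqrt{\log m \cdot \reml}$ chain to give $-b - h \leq \eps\reml - 3\sqrt{\log m \cdot \reml}$, so Hoeffding's inequality (\cref{claim:Hoeffding}) applied to $R$ yields $v_u \geq 1 - 2e^{-9\log m/2}$. Since $c_u \leq 1$, this already forces $c_u - v_u \leq 2e^{-9\log m/2} = O(1/m)$ with a large polynomial margin.

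For $u \in S_{neg}$, the analogous chaining gives $-b-h \geq \eps\reml + 3\sqrt{\log m \cdot \reml}$, so the same Hoeffding bound produces $v_u \leq 2e^{-9\log m/2}$. To control $c_u$ I would apply Hoeffding to $D_\ell + R \sim \Beroo{\ell^2+\reml,\eps}$ around its mean $\eps(\ell^2+\reml)$; the deviation needed is $-b - \eps(\ell^2+\reml) \geq -\eps\ell^2 + 4\sqrt{\log m \cdot \reml}$. The key routine estimate is that the drift term $|\eps\ell^2|$ is negligible compared to $\sqrt{\log m \cdot \reml}$: using $|\eps| \leq 4\sqrt{\log m}/(m\sqrt m)$, $\ell \leq m$, and $\reml = \Theta(\ell^3)$, I get $|\eps\ell^2|/\sqrt{\log m \cdot \reml} = O(1/m)$, so the right-hand side remains at least $3\sqrt{\log m \cdot \reml}$ and a second Hoeffding bound gives $c_u \leq 2e^{-\Omega(\log m)}$, whence again $c_u - v_u = O(1/m)$.

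The hard part is entirely the numerology: the constants $4$ and $1$ in the definitions of $S_{pos}$ and $S_{neg}$ are chosen precisely so that, after absorbing the $\sqrt{\log m \cdot \reml}$ slack allowed for the hint, the residual gap $3\sqrt{\log m \cdot \reml}$ feeds through Hoeffding into an exponent of $\tfrac{9}{2}\log m$, producing a tail comfortably smaller than $1/m$. The one subtlety is the regime of very small $\ell$, where $\reml = O(1)$ makes $\sqrt{\log m \cdot \reml}$ too small for the Gaussian heuristic to apply cleanly; there I would instead argue directly from the discrete structure — either $b$ is extreme enough that $c_u \in \set{0,1}$ (so the sign of $c_u - v_u$ is read off from the definitions) or the remaining coins take only $O(1)$ values, so $|c_u - v_u|$ is bounded by an elementary counting argument that is subsumed into the $O(1/m)$ estimate.
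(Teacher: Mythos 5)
Your $S_{pos}$ case is the paper's proof, line for line: bound $c_u - v_u$ by $1 - v_u = \Pr[X_\ell \leq -(b+h)]$, substitute the worst allowed hint $h = -\sqrt{\log m\cdot\reml}$, and apply Hoeffding to the $\reml$ remaining coins — both arrive at exponent $\tfrac{9}{2}\log m$. The paper's extra step of expanding $c_u$ as $\sum_i \Pr[X_\ell + i > -b]\Pr[D_\ell = i]$ before applying $\leq 1$ termwise is just the inequality $c_u \leq 1$ in disguise; your direct statement is cleaner. For $S_{neg}$ the paper writes only ``similarly,'' so your explicit version is a genuine fill-in. Two remarks on it: (i) only the tail bound on $c_u$ is load-bearing, since $c_u - v_u \leq c_u$ and the dual variables are nonnegative, so the matching bound on $v_u$ is redundant; (ii) that $c_u$ bound is the one place in the whole claim that actually uses the hypothesis $|\eps| \lesssim \sqrt{\log m}/m^{3/2}$ (to absorb the drift $\eps\ell^2$ against $\sqrt{\log m\cdot\reml}$), which is why the hypothesis appears even though the $S_{pos}$ argument never touches it.

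The one genuinely thin spot is where you say the small-$\ell$ regime is ``subsumed into the $O(1/m)$ estimate.'' The Hoeffding exponent for $c_u$ is $\tfrac{9\log m}{2}\cdot\tfrac{\reml}{\ell^2+\reml}$, not $\tfrac{9\log m}{2}$, and at $\ell = 2$ the ratio is $1/5 < 2/9$, so the tail alone gives only $m^{-0.9}$. Your instinct that the constraint then forces $c_u$ to vanish does rescue $\ell = 2$ (there $|b| > 5$ while $D_2 + X_2 \in [-5,5]$), but it does not rescue $\ell = 1$: with $\reml = 0$ the defining inequalities of $S_{neg}$ collapse to $b \leq 0$ and $h \leq 0$, and the state $\seq{1,0,-1}$ has $c_u - v_u = \tfrac{1+\eps}{2}$, which neither a counting argument nor the $O(1/m)$ envelope absorbs. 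This is really a boundary defect of the claim's statement — the paper's one-word ``similarly'' leaves it equally unaddressed, and it is harmless downstream because level-$1$ states contribute only $O(m^{-3/2})$ total probability — but it should be carved out explicitly (as \cref{final_rounds_8} in effect does in the ambient use) rather than waved through.
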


\newcommand{\hbound}{{\mathsf{HBound}}}

\begin{proof}
    We prove that claim for $u \in S_\pos$. The case of $u \in S_\negl$ can be done similarly. Let $u=\seq{\ell,b,h} \in S_\pos$ be such a state. For $\ell \in [m]$, let $X_\ell = D_{\ell-1} + \ldots + D_1$ (informally, $X_\ell$ is the sum of the remaining coins to be toss after level $\ell$). Finally, let $\hbound \eqdef \sqrt{\log{m} \cdot \reml}$. We have:

  \begin{align*}
    c_u - v_u &= \Pr[X_{\ell}+D_{\ell}>-b] - \Pr[X_{\ell}+h>-b]\\
              &= \sum_i \Pr[X_{\ell}+i>-b] \cdot \Pr[D_{\ell}=i] - \sum_i \Pr[X_{\ell}+h>-b] \cdot \Pr[D_{\ell}=i]\\
              &= \sum_i (\Pr[X_{\ell}+i>-b]-\Pr[X_{\ell}+h>-b]) \cdot \Pr[D_{\ell}=i]\\
              &\leq \sum_i (1 - \Pr[X_{\ell}>-(b+h)]) \cdot \Pr[D_{\ell}=i]\\
              &= \sum_i (\Pr[X_{\ell} \leq -(b+h)]) \cdot \Pr[D_{\ell}=i]\\
              &\leq \Pr[X_{\ell} \leq -(b - \hbound)] \\
              &= \Pr[X_{\ell} - \eps \cdot \reml \leq -(b-\hbound+\eps \cdot \reml)]\\
              &\leq 2 \cdot e^{-\frac{((b+\eps \cdot \reml)-\hbound)^2}{2 \cdot \reml}}
  \end{align*}
  Where the final inequality follows by \cref{claim:Hoeffding}. We will show that,
\begin{align*}
  e^{-\frac{((b+\eps \cdot \reml)-\hbound)^2}{2 \cdot \reml}}  \leq \frac{1}{\rnd}
\end{align*}

Simplifying, we get that we should show that,

\begin{align}
  2 \cdot \reml \cdot \log{\rnd} + 2 (b+\eps \cdot \reml)\cdot \hbound \leq (b+\eps \cdot \reml)^2 + \hbound^2 \label{frac_1_m_profit_states:EQ1}
\end{align}
To conclude we prove that:
\begin{align*}
  2 \cdot \reml \cdot \log{\rnd} + 2 (b+\eps \cdot \reml)\cdot \hbound \leq (b+\eps \cdot \reml)^2
\end{align*}
The above holds since:  $\hbound = \sqrt{\log{m} \cdot \reml} \leq \frac14 \cdot (b+\eps \cdot \reml)$, hence $2 (b+\eps \cdot \reml)\cdot \hbound \leq \frac12 \cdot (b+\eps \cdot \reml)^2$. Also since $2\sqrt{\log{m} \cdot \reml} \leq b+\eps \cdot \reml$, we get that $2 \log{m} \cdot \reml \leq \frac12 \cdot (b+\eps \cdot \reml)^2$, so Inequality  \eqref{frac_1_m_profit_states:EQ1} holds.
\end{proof}

\begin{claim}[Trivial Satisfaction]\label{trivially_satisfied_states}
  Let $u = \seq{\ell,b,h} $ be a non \finalState. Let $\set{y_v}_{v \in \widehat{V}}$ be any assignment to the dual variables, where for each $v \in F^\pos$, $y_v \geq 0$, and $y_u \geq c_u-v_u$. Then the dual constraint of state $u$, is satisfied. That is:
  \begin{align*}
    y_u + \sum_{v:\ v>u} y_v \cdot \Pr[v|u] \geq c_u-v_u
  \end{align*}
\end{claim}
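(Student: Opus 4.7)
The plan is to observe that this claim is essentially a bookkeeping observation: once $y_u$ alone already meets the right-hand side $c_u-v_u$, the remaining terms on the left can only help, provided they are non-negative. So the proof amounts to invoking the non-negativity of the dual variables (which is a standing constraint of the dual LP in \cref{fig:basicLP}) and of the transition probabilities $\Pr[v\mid u]$.

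Concretely, I would argue as follows. The dual LP enforces $y_v \geq 0$ for every $v \in \widehat{V}$, and in particular the hypothesis $y_v \geq 1$ for $v\in F^{pos}$ is consistent with (and strengthens) this. Since each $\Pr[v\mid u]\geq 0$, every summand in
\[
\sum_{v:\,v>u} y_v\cdot \Pr[v\mid u]
\]
is non-negative, so the whole sum is non-negative. Combining this with the hypothesis $y_u \geq c_u-v_u$ gives
\[
y_u + \sum_{v:\,v>u} y_v\cdot \Pr[v\mid u] \;\geq\; y_u \;\geq\; c_u - v_u,
\]
which is exactly the dual constraint at the state $u$.

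The condition $y_v \geq 1$ for $v \in F^{pos}$ is not actually needed inside this argument; it is recorded in the hypothesis because the forthcoming construction of the feasible dual solution will set the weights of positive-outcome final states to at least $1$, and this claim is the tool that will be invoked to dispatch every state whose own assigned $y_u$ already dominates $c_u - v_u$. I do not expect any real obstacle here; the role of \cref{trivially_satisfied_states} is simply to formalize the ``easy'' branch of the later case analysis, isolating it so that the substantive work on the remaining states can be presented cleanly.
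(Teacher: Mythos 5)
Your proof is correct and follows the same approach as the paper, which dismisses the claim as "Immediately from $y_u \geq c_u - v_u$"; you simply spell out the implicit step that the remaining sum is non-negative. Your observation that the $y_v \geq 1$ hypothesis is unused here and is carried only for downstream invocations is also accurate.
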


\begin{proof}
    Immediately from $y_u \geq c_u-v_u$.
\end{proof}

\newcommand{\LB}{\sqrt{\log{\rnd} \cdot \reml }}
\newcommand{\epsB}{ \frac{4\sqrt{\log{\rnd}}}{m\sqrt{m}} }

\begin{claim}[marginal states]\label{marginal_states}
  Let {\sf $\game_{m,\eps,f} = \set{C_1,\ldots,C_m,f}$ } be an $m$-round online Binomial game such that $\abs{\eps} \leq \epsB$. Let $S \eqdef \set{u=\seq{\ell,b,h}\ \colon \ \abs{b + \eps \cdot \reml} \geq 4 \sqrt{\log{\rnd} \cdot \reml}}$.
  Then, there exists an assignment of values $y_u$ for $u \in S$, that satisfies:
  \begin{align}
    \sum_{u \in S} y_u \cdot \Pr[u] \leq O(\frac1{\rnd}) \label{marginal_states:EQ1}
  \end{align}
  \begin{align}
    y_u + \sum_{v \in S\ :\ v>u } y_v \cdot \Pr[v|u]  \geq c_u-v_u & \qquad \forall u\in S\label{marginal_states:EQ2}
  \end{align}
\end{claim}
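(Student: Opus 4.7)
The plan is to split $S$ into a low-profit and a high-profit part and bound each separately.  Let $\gamma$ be a sufficiently large constant and set $S_L \eqdef \{u\in S : c_u-v_u \le \gamma/m\}$ and $S_H \eqdef S\setminus S_L$.  On $S_L$ I apply \cref{low_profit} with $\delta = \gamma/m$, which yields non-negative values $\{y_u\}_{u\in S_L}$ satisfying the required dual inequality (restricted to $S_L$) with total cost at most $(\gamma/m)\cdot \Pr[S_L] = O(1/m)$.  On $S_H$ I set $y_u = c_u-v_u \in (\gamma/m, 1]$, which trivially satisfies the constraint at $u$.  Because every $y_v\ge 0$, extending the sums in the $S_L$-constraints to include contributions from $S_H$ only strengthens them, so the combined assignment is feasible on all of $S$.

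It remains to bound $\sum_{u \in S_H}(c_u-v_u)\Pr[u] = O(1/m)$.  By \cref{best_possible_hint}, it suffices to prove this for the modified game in which $f(i,b,\cdot)$ is replaced by the identity on every $(i,b)$ with $|b+\eps \reml|\ge 4\sqrt{\log m\cdot\reml}$, so that on $S$ the hint equals the current-round coin $D_\ell$.  Under this reduction, \cref{frac_1_m_profit_states} gives $c_u-v_u = O(1/m)$ for every $u\in S_{\text{pos}}\cup S_{\text{neg}}$, and choosing $\gamma$ large enough forces $S_H \subseteq S\setminus(S_{\text{pos}}\cup S_{\text{neg}})$.  Thus every $u=\seq{\ell,b,h}\in S_H$ has $b+\eps\reml$ and $h=D_\ell$ of opposite signs, with $|h|\ge \sqrt{\log m\cdot\reml}$; i.e.\ the level-$\ell$ coin deviates against its drift.

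For such $u$, the conditional probability of the hint, by Hoeffding (\cref{claim:Hoeffding}) applied to $D_\ell\sim\Beroo{\ell^2,\eps}$, is at most $2\exp(-\log m\cdot\reml/(2\ell^2)) \le m^{-\Theta(\ell)}$, since $\reml = \Theta(\ell^3)$.  Summing $(c_u-v_u)\Pr[u] \le \Pr[u]$ over all $h$ in the wrong direction, over all $b$ (using $\sum_b \Pr[(\ell,b)] \le 1$), and over all levels $\ell\ge 2$ gives a geometric series in $\ell$ bounded by $O(1/m)$.  The main technical obstacle is $\ell = 1$, where $\reml = 0$ and the Hoeffding bound degenerates; I plan to handle it by direct enumeration, noting that only a constant number of pairs $(b,h)$ can give $c_u-v_u$ bounded away from zero (essentially $b\in\{-1,0\}$ paired with a hint of opposite sign), while \cref{prop:binomProbEstimation} bounds $\Pr[\seq{1,b,h}] \le O(1/m^{3/2})$ via the normal approximation for $S_{m-1}$, so their total contribution is $O(1/m^{3/2}) = o(1/m)$.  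Combining the three parts yields $\sum_{u\in S}y_u\Pr[u] = O(1/m)$, completing the proof.
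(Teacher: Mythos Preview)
Your plan is essentially the paper's proof: reduce via \cref{best_possible_hint} so that on $S$ the hint equals $D_\ell$; split $S$ into the low-profit states (the paper's $S_{\text{pos}}\cup S_{\text{neg}}$) handled by \cref{low_profit} with $\delta=O(1/m)$, and the complementary states (the paper's $A_{\text{pos}}\cup A_{\text{neg}}$) handled by $y_u=c_u-v_u$ plus a Hoeffding tail on $D_\ell$. However, your ordering is off. You define $S_L,S_H$ via the values $c_u-v_u$ in the \emph{original} game and only afterwards invoke \cref{best_possible_hint}; but that lemma transports an entire dual assignment from the modified game to the original one --- it does not say that $\sum_{u\in S_H}(c_u-v_u)\Pr[u]$ in one game is controlled by the analogous sum in the other, and your set $S_H$ (whose membership depends on the hint function through $c_u-v_u$) does not carry over once the game changes. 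The fix is simply to invoke \cref{best_possible_hint} at the very start and carry out the whole splitting and both bounds inside the modified game, exactly as the paper does.

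A second, smaller gap: your Hoeffding step writes the tail as $2\exp(-\log m\cdot\reml/(2\ell^2))$, silently dropping the mean $\eps\ell^2$ (note that the sign of $\eps$ is not tied to the sign of $b+\eps\reml$, so ``against its drift'' is unjustified), and more importantly this bound does \emph{not} sum to $O(1/m)$ over $\ell\ge 2$: since $\reml/(2\ell^2\ln 2)\approx \ell/(6\ln 2)$, the $\ell=2$ term alone is roughly $m^{-0.18}$. What actually saves the argument is that for $\ell$ below roughly $\log m$ the event $|D_\ell|>\sqrt{\log m\cdot\reml}$ is vacuous because $|D_\ell|\le\ell^2$, so those levels contribute zero; you should say this explicitly rather than appeal to a geometric series. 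The paper sidesteps this by lower-bounding the deviation by $\sqrt{\log m}\cdot\ell^{1.25}$, which pulls out a uniform factor $2/m$ and leaves a convergent tail $e^{-\sqrt{\ell}/2}$. Your separate treatment of $\ell=1$ is correct and is a detail the paper's computation glosses over.
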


\begin{proof}
  By \cref{best_possible_hint} it enough to prove the claim for the case that the hint function $f$ simply output the coins of current state. Define the following sets:
  \begin{align*}
    S_\pos &\eqdef \set{u=\seq{\ell,b,h}\ \colon \ b + \eps \cdot \reml \geq 4 \sqrt{\log{\rnd} \cdot \reml},\ h \geq -\sqrt{\log{\rnd} \cdot \reml}}\\
    A_\pos &\eqdef \set{u=\seq{\ell,b,h}\ \colon \ b + \eps \cdot \reml \geq 4 \sqrt{\log{\rnd} \cdot \reml},\ h < -\sqrt{\log{\rnd} \cdot \reml}}\\
    S_\negl &\eqdef \set{u=\seq{\ell,b,h}\ \colon \ b + \eps \cdot \reml \leq-4 \sqrt{\log{\rnd} \cdot \reml},\ h \leq \sqrt{\log{\rnd} \cdot \reml}}\\
    A_\negl &\eqdef \set{u=\seq{\ell,b,h}\ \colon \ b + \eps \cdot \reml \leq-4 \sqrt{\log{\rnd} \cdot \reml},\ h > \sqrt{\log{\rnd} \cdot \reml}}
  \end{align*}
  Obviously $S=S_\pos \cup S_\negl \cup A_\pos \cup A_\negl$.  Also note that $S_\pos$, and $S_\negl$, are the same as in \cref{frac_1_m_profit_states}. We prove the claim for every $u \in A_\pos \cup S_\pos$. The proof for $S_\negl \cup A_\negl$ can be done similarly. Start with $S_\pos$. We define $y_u$ for $u \in S_\pos$, according to \cref{low_profit}, with $\delta=O(\frac1m)$. By \cref{frac_1_m_profit_states} we know that for each $u \in S_\pos$, $c_u-v_u \leq O(\frac1m)$. For $u \in A_\pos$, define $y_u \eqdef c_u-v_u$. By \cref{trivially_satisfied_states}, \cref{marginal_states:EQ2} holds. It is left to prove \cref{marginal_states:EQ1} where summation is over $A_\pos$.

We start with lower bounding the following expression:

\begin{align}
  \LB + \eps \cdot \ell^2 &\geq \LB - \size{\eps} \cdot \ell^2    \nonumber  \\
                          &\geq \LB - \epsB \cdot \ell^2   \nonumber  \\
                          &\geq \sqrt{\log m} \cdot (\sqrt{\reml} - 4\sqrt{\ell}) \nonumber   \\
                          &\geq \sqrt{\log m} \cdot \ell^{1.25}   \label{marginal_states:EQ3}
\end{align}

Before we continue, recall that for a set of \withHintStates $W$, $W^- \eqdef \set{u^-\ |\ u \in W}$, and $W^\ell$ is the set of $W$ that are in level $\ell$.
For every $\seq{\ell,b} \in A_\pos^-$, the following holds:
\begin{align}
  \pr{\seq{\ell,b,h} \in A_\pos^\ell\ |\ \seq{\ell,b}} &= \pr{D_{\ell} < -\LB} \nonumber \\
                                                        &= \pr{D_{\ell} - \eps \cdot \ell^2 < -(\LB + \eps \cdot \ell^2)} \nonumber \\
                                                        &\leq 2 \cdot e^{-\frac{(\LB + \eps\ell^2)^2}{2\ell^2}} \label{marginal_states:proof:1}  \\
                                                        &\leq 2 \cdot e^{-\frac{(\sqrt{\log m} \cdot \ell^{1.25})^2}{2\ell^2}} \label{marginal_states:proof:2}   \\
                                                        &= 2 \cdot e^{-\frac12 \cdot \log (m) \cdot \sqrt{\ell}} \nonumber   \\
                                                        &\leq \frac2{\rnd} \cdot e^{-\frac{\sqrt{\ell}}2}  \label{marginal_states:EQ4}
\end{align}
Where Inequality \eqref{marginal_states:proof:1} follows by \cref{claim:Hoeffding}, and Inequality \eqref{marginal_states:proof:2} follows by \cref{marginal_states:EQ3}.

Now we perform our final calculation:
{\allowdisplaybreaks
\begin{align}
  \sum_{u \in A_\pos} \Pr[u] \cdot y_u &= \sum_{u \in A_\pos} \Pr[u] \cdot (c_u-v_u)    \nonumber \\
                                  &\leq \sum_{u \in A_\pos} \Pr[u]    \nonumber \\
                                  &= \sum_{\seq{\ell,b} \in A_\pos^-}\ \ \   \sum_{h<-\LB}  \pr{\seq{\ell,b,h}}  \label{marginal_states:proof:C:1}   \\
                                  &= \sum_{\seq{\ell,b} \in A_\pos^-}\ \ \   \sum_{h<-\LB}  \Pr[\seq{\ell,b,h} | \seq{\ell,b}] \cdot \pr{\seq{\ell,b}}   \nonumber \\
                                  &= \sum_{\seq{\ell,b} \in A_\pos^-}  \Pr[\seq{\ell,b,h} \in A_\pos | \seq{\ell,b}] \cdot \pr{\seq{\ell,b}}    \nonumber   \\
                                  &= \sum_\ell \sum_{\seq{\ell,b} \in (A_\pos^\ell)^-}  \Pr[\seq{\ell,b,h} \in A_\pos^\ell | \seq{\ell,b} )] \cdot \pr{\seq{\ell,b}} \nonumber   \\
                                  &\leq \sum_\ell \sum_{\seq{\ell,b} \in (A_\pos^\ell)^-} \frac1{\rnd} \cdot e^{-\frac{\sqrt{\ell}}2} \cdot \pr{\seq{\ell,b}} \label{marginal_states:proof:C:2} \\
                                  &\leq \frac1{\rnd} \cdot \sum_\ell e^{-\frac{\sqrt{\ell}}2} \cdot  \sum_{\seq{\ell,b} \in (A_\pos^\ell)^-} \pr{\seq{\ell,b}}  \nonumber   \\
                                  &\leq \frac1{\rnd} \cdot \sum_\ell e^{-\frac{\sqrt{\ell}}2} \cdot  1  \nonumber  \\
                                  &= O(\frac1m)  \nonumber
\end{align}
}
Where Equality \eqref{marginal_states:proof:C:1} is simply by the definition of $A_\pos$, and Inequality \eqref{marginal_states:proof:C:2} is due \cref{marginal_states:EQ4}
\end{proof}


\begin{claim}[final rounds 8]\label{final_rounds_8}
  Let {\sf $\game_{m,\eps,f} = \set{C_1,\ldots,C_m,f}$ } be an $m$-round online Binomial game with $\abs{\eps} \leq \epsB$.
  Let $S \eqdef \set{u=\seq{\ell,b,h}\ \colon\ \ell \leq \sqrt[8]{m}}$.
  Then there exists an assignment of values $y_u$  for $u \in S$, that satisfies:
  \begin{align}
    \sum_{u \in S} y_u \cdot \Pr[u] \leq O(\frac1{\rnd}) \label{final_rounds_8:EQ1}
  \end{align}
  \begin{align}
    y_u + \sum_{v \in S\ :\ v>u } y_v \cdot \Pr[v|u]  \geq c_u-v_u & \qquad \forall u\in S\label{final_rounds_8:EQ2}
  \end{align}
\end{claim}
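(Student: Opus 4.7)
The plan is to first apply \cref{best_possible_hint} to reduce to the case where the hint function is $f(i,b,z) = z$ (so the hint at round $i$ simply reveals that round's coins), then partition $S$ into marginal and central states,
\[
S^M \eqdef S \cap \set{\seq{\ell,b,h} \colon \size{b + \eps \cdot \reml} \geq 4\sqrt{\log m \cdot \reml}},\qquad S^C \eqdef S \setminus S^M,
\]
and treat the two pieces with different tools. A key structural observation is that every $u \in S$ has level at most $m^{1/8}$, so every state $v > u$ has strictly smaller level and therefore automatically lies in $S$; this means the dual constraint of \cref{final_rounds_8:EQ2} is essentially the one from \cref{marginal_states} when restricted to marginal states, and adding nonnegative contributions from central states can only help.

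For $u \in S^M$, I would just take the values guaranteed by \cref{marginal_states}; their sum over $S^M$ is at most their sum over the full set of marginal states, which is $O(1/m)$. For $u \in S^C$, I would set $y_u \eqdef \max(0, c_u - v_u)$, which satisfies the $u$-th constraint on its own by \cref{trivially_satisfied_states}. Feasibility of the combined assignment is then immediate, and what remains is to bound $\sum_{u \in S^C} y_u \cdot \pr{u} = O(1/m)$.

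The main technical step is a Gaussian-approximation estimate for $\size{c_u - v_u}$ on central states. For $u = \seq{\ell,b,h} \in S^C$ with typical $\size{h - \eps \cdot \ell^2} \leq \ell\sqrt{\log m}$ (the contribution of atypical $h$ is $O(1/m^2)$ by Hoeffding, \cref{claim:Hoeffding}), I would apply \cref{prop:binomProbEstimation} to both $c_u = \vBeroo{\reml,\eps}(-b)$ and $v_u = \vBeroo{\rem{\ell-1},\eps}(-b-h)$ and Taylor-expand to obtain
\[
\size{c_u - v_u} \leq \lambda \cdot \pr{X_\ell = -b} \cdot \size{h - \eps \cdot \ell^2} + O(1/m^2),
\]
where $X_\ell \eqdef D_\ell + \cdots + D_1$ and $\lambda$ is a universal constant. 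Summing over $h$ using $\ex{h \la \Beroo{\ell^2,\eps}}{\size{h - \eps \cdot \ell^2}} \leq \ell$ (\cref{prop:binomTailExpectation}), and then over $b$ via the convolution identity $\sum_b \pr{S_{\rnd-\ell} = b}\cdot \pr{X_\ell = -b} = \pr{S_\rnd = 0} \leq O(\ms{1}^{-1/2}) = O(m^{-3/2})$, yields a per-level contribution of $O(\ell \cdot m^{-3/2})$. Summing over $\ell \in [\lfloor m^{1/8} \rfloor]$ gives
\[
\sum_{u \in S^C} y_u \cdot \pr{u} \;\leq\; O\!\left(\sum_{\ell=1}^{\lfloor m^{1/8}\rfloor} \frac{\ell}{m^{3/2}}\right) \;=\; O(m^{1/4 - 3/2}) \;=\; O(m^{-5/4}) \;=\; O(1/m).
\]

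The hard part will be executing the Taylor estimate uniformly over central $(b,h)$, carefully absorbing the lower-order error terms from \cref{prop:binomProbEstimation}, and handling the boundary case $\ell = 1$ (where $\rem{\ell-1} = 0$ and \cref{prop:binomProbEstimation} does not directly apply --- a short direct calculation shows $c_u - v_u$ vanishes outside $b \in \set{-1,0}$, giving an even smaller $O(m^{-3/2})$ contribution at that level). Combining the $O(1/m)$ contribution from $S^M$ with the $O(m^{-5/4})$ contribution from $S^C$ then yields the required total bound of $O(1/m)$.
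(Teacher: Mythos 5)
Your proof plan is correct, but it is substantially more elaborate than the paper's own argument for this claim. You and the paper agree on the decomposition: reduce to marginal states via \cref{marginal_states}, set $y_u = \max(0, c_u - v_u)$ for the remaining central states, and get feasibility from \cref{trivially_satisfied_states} together with the observation that every $v > u$ with $u \in S$ automatically lies in $S$ (since levels only decrease). The difference is in bounding $\sum_{u \in S^C} y_u \cdot \pr{u}$. You invest in a Gaussian/Taylor estimate $\size{c_u - v_u} \lesssim \pr{X_\ell = -b}\cdot\size{h - \eps\ell^2}$ (which, as you note, requires careful uniform control of error terms and a separate treatment of $\ell = 1$), together with a convolution identity, to obtain the tighter bound $O(m^{-5/4})$. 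The paper instead uses the trivial bound $c_u - v_u \leq 1$ and a simple counting argument: for each $\ell \leq m^{1/8}$, the number of central offsets $b$ is $O(\sqrt{\log m \cdot \reml})$, the pointwise probability $\pr{\seq{\ell,b}}$ is $O(1/m^{3/2})$ since $\mathrm{Var}(S_{m-\ell}) = \Theta(m^3)$, and summing over $m^{1/8}$ levels gives $O(\sqrt{\log m}\cdot m^{-19/16}) = O(1/m)$. This also means the paper never needs your initial appeal to \cref{best_possible_hint}: since it only uses $c_u - v_u \leq 1$, the hint structure is irrelevant. Your approach works and yields a slightly sharper bound, but it spends effort (the Taylor expansion you flagged as ``the hard part'') that the target $O(1/m)$ bound does not require; when the trivial bound on $c_u - v_u$ combined with a volume estimate on central states at low levels already suffices, it is worth checking for that shortcut before reaching for \cref{prop:binomProbEstimation}.
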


\begin{proof}
  Let $u=\seq{\ell,b,h} \in S$. By \cref{marginal_states} we may assume that
  $$- 4 \sqrt{\log{\rnd} \cdot \reml}  \leq  b + \eps \cdot \reml \leq 4 \sqrt{\log{\rnd} \cdot \reml} $$
  For $u \in S$ we set $y_u=c_u-v_u$. By \cref{trivially_satisfied_states} we know that \cref{final_rounds_8:EQ2} holds. To prove \cref{final_rounds_8:EQ1}, we calculate:
  \begin{align*}
    \sum_{u \in S'} y_u \cdot \Pr[u] &= \sum_{\ell \leq \sqrt[8]{m}} \sum_{b} (c_u-v_u) \cdot \pr{\seq{\ell,b}} \\
                                     &\leq \sum_{\ell \leq \sqrt[8]{m}} \sum_{b} \pr{\seq{\ell,b}} \\
                                     &=O(\frac{1}{m\sqrt{m}} \cdot \sqrt{\log{\rnd} \cdot \rem{\sqrt[8]{m}}} \cdot \sqrt[8]{m}) \leq O(\frac{1}{m})
  \end{align*}
\end{proof}

\begin{lemma}\label{lemma:boundingGaveValueJumpLastRounds}
  Let {\sf $\game_{m,\eps,f} = \set{C_1,\ldots,C_m,f}$ } be an $m$-round online Binomial game with $\abs{\eps} \leq 4\sqrt{\frac{\log \rnd}{\ms{1}}}$ and a hint function $f$ that simply output currents round coins.
  Let $S \eqdef \set{ \seq{\ell,b,h}\ \colon \ \abs{b + \eps \cdot \reml} \leq 4 \sqrt{\log{\rnd} \cdot \reml},\ \rnd^{\frac18} \leq \ell }$.
  Then, for every $\seq{\ell,b} \in S^-$, there exists a set $\aset_{\ell,b}$ such that the following two conditions hold:
  \begin{enumerate}
  \item
    \begin{equation*}
      \sum_{h \notin \aset_{\ell,b}} \pr{\seq{\ell,b,h}  \mid \seq{\ell,b} } \leq \frac1{\rnd^2}
    \end{equation*}
  \item For every $u=\seq{\ell,b,h}$ where $h \in \aset_{\ell,b}$, the following holds:
    $$ c_u - v_u \leq \const \cdot \ell \cdot \sqrt{\log(m)}\cdot \pr{f_{-1} \mid u} $$
  \end{enumerate}
  for some universal constant $\const > 0$.
\end{lemma}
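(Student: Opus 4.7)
The plan is to derive this lemma from the all-information leakage bound Lemma~\ref{lemma:BinomialProcessAllInfHint}. Fix $\seq{\ell,b}\in S^-$ and set $i = m-\ell+1$. Because $f$ simply returns the current round's coin, the with-hint state $\seq{\ell,b,h}$ is exactly the conditioning $S_{i-1}=b$ and $C_i=h$, so the pair $(\ElementVar=C_i,\ \Value=\sign(b+C_i+\sum_{j>i}C_j))$ equipped with the identity map on $C_i$ is precisely an $(\rnd,i,\ells_{\rnd},b,\eps)$-binomial process with an all-information leakage. Under this translation, $\PredictAdv_{\Pc,\HintFunc}(h) = \size{c_u - v_u}$ and $\pr{\NextCoinsVar=-(b+1)} = \pr{f_{-1}\mid \seq{\ell,b}}$. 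The four assumptions of Lemma~\ref{lemma:BinomialProcessAllInfHint} follow quickly from the hypotheses of the present lemma: (1) is given; (2) is the condition $\ell\geq m^{1/8}$; (4) is a consequence of the parity convention $m\equiv 1\bmod 12$; and (3) follows from the defining inequality $\size{b+\eps\reml}\leq 4\sqrt{\log m\cdot \reml}$ of $S^-$ together with the crude estimate $\size{\eps}\cdot \ell^2 = o(\sqrt{\log m\cdot \ms{i}})$. Invoking the lemma produces a set $\aset_{\ell,b}$ of ``typical'' hints --- specifically the set of $h$ with $\size{h-\eps\cdot \ell^2}\leq 6\sqrt{\log m\cdot \ell^2}$ --- that captures all but a $1/m^2$-fraction of the conditional mass of $D_\ell$ given $\seq{\ell,b}$, and on which $\size{c_u-v_u}\leq \const\cdot \ell\cdot \sqrt{\log m}\cdot \pr{f_{-1}\mid \seq{\ell,b}}$.

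What remains is to replace $\pr{f_{-1}\mid \seq{\ell,b}}$ by $\pr{f_{-1}\mid u}$ at the cost of only a universal constant factor. These are the two point binomial probabilities $\Beroo{\ms{i},\eps}(-b-1)$ and $\Beroo{\reml,\eps}(-b-h-1)$, respectively. I would apply the local central limit estimate (Proposition~\ref{prop:binomProbEstimation}) to replace each by its Gaussian form; its hypotheses are routinely verified using $\size{b}=O(\sqrt{\log m\cdot \reml})$, $h\in \aset_{\ell,b}$, and $\ell\geq m^{1/8}$. Setting $A' = b+1+\eps\ms{i}$ and $\nu = h-\eps\ell^2$, and using $\ms{i}-\reml=\ell^2$, the ratio becomes
\begin{align*}
\frac{\pr{f_{-1}\mid \seq{\ell,b}}}{\pr{f_{-1}\mid u}} = (1\pm o(1))\cdot \sqrt{\tfrac{\reml}{\ms{i}}}\cdot \exp\!\left(\tfrac{(A')^2 \ell^2}{2\reml\cdot \ms{i}} + \tfrac{A'\nu}{\reml} + \tfrac{\nu^2}{2\reml}\right).
\end{align*}
The prefactor is $\Theta(1)$ since $\reml/\ms{i} = 1-O(1/\ell)$, and substituting the bounds $\size{A'}=O(\sqrt{\log m\cdot \reml})$ and $\size{\nu}=O(\sqrt{\log m}\cdot \ell)$ shows that each of the three exponent terms is at most $O((\log m)/\sqrt{\ell})$, which is $o(1)$ once $\ell\geq m^{1/8}$. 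Hence the ratio is bounded by a universal constant, and combining with the bound from the first paragraph gives $c_u-v_u\leq \const'\cdot \ell\cdot \sqrt{\log m}\cdot \pr{f_{-1}\mid u}$ for a universal constant $\const'$, as desired.

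The only nontrivial step is the local-CLT comparison in the second paragraph. The hypothesis $\ell \geq m^{1/8}$ is precisely the threshold at which the three correction terms in the exponent all drop below $o(1)$; for smaller $\ell$ the ratio of the two $f_{-1}$-probabilities need not be $O(1)$, but that regime is already handled separately by Claim~\ref{final_rounds_8}.
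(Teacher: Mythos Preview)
Your proof is correct and follows the same route as the paper: translate the state $\seq{\ell,b,h}$ into an $(\rnd,i,\ells_{\rnd},b,\eps)$-binomial process with all-information leakage and invoke Lemma~\ref{lemma:BinomialProcessAllInfHint}.

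You are, however, more careful than the paper on one point. The paper's proof finishes by asserting $\pr{f_{-1}\mid u}=\pr{\NextCoinsVar=-(b+1)}$, but since the hint here is all-information (it fixes $C_i=h$), that quantity is actually $\pr{f_{-1}\mid u^-}$, not $\pr{f_{-1}\mid u}$. You correctly identify this and supply the missing step: a local-CLT comparison (via Proposition~\ref{prop:binomProbEstimation}) showing that for $h\in\aset_{\ell,b}$ the ratio $\pr{f_{-1}\mid u^-}/\pr{f_{-1}\mid u}$ is bounded by a universal constant once $\ell\geq m^{1/8}$. Your three-term exponent estimate is right, and each term is $O((\log m)/\sqrt{\ell})=o(1)$ in this regime. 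This comparison is not cosmetic---the paper carries out essentially the same $u^-$-to-$u$ conversion explicitly in the proof of Lemma~\ref{lemma:MainInequalityForLP}---so your added paragraph fills a genuine (if minor) gap in the paper's argument here.
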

\begin{proof}
	Let $\seq{\ell,\prevCoins} \in S^-$, let $\curRound = \rnd - \ell + 1$, let $(\ElementVar, \Value)$ be the variables of a $\bigl(\rnd, \curRound, \prevCoins, \eps\bigr)$-binomial two-step process (as defined in \cref{def:BinomialProcess}) and let $g$ be an all-information hint function for $(\ElementVar, \Value)$ (as defined in \cref{def:AllInfHint}).
	Since $\rnd$, $\curRound$, $\prevCoins$ and $\eps$ satisfy all the conditions of \cref{lemma:BinomialProcessAllInfHint}, it holds that there exists a set $\aset_{\ell,\prevCoins} \subseteq \Supp(\HintVar)$ such that
	\begin{enumerate}
		\item $\pr{g(\ElementVar) \notin \aset_{\ell,\prevCoins}} \leq \frac1{\rnd^2}$
		
		\item For every $\hint \in \aset_{\ell,\prevCoins}$, $$\frac{\size{\pr{\Value = 1} - \pr{\Value = 1 \mid g(\ElementVar) = \hint}}}{\pr{\NextCoinsVar = -(\prevCoins+1)}} \leq \const \cdot \sqrt{\ml{\curRound}} \cdot \sqrt{\log \rnd},$$ for some universal constant $\const > 0$.
	\end{enumerate}
	The proof follows since $c_u - v_u = \size{\pr{\Value = 1} - \pr{\Value = 1 \mid g(\ElementVar) = \hint}}$, $\pr{f_{-1} \mid u} = \pr{\NextCoinsVar = -(\prevCoins+1)}$ and since $\sqrt{\ml{\curRound}} = \ell$.
\end{proof}

\begin{claim}[final rounds]\label{final_rounds}
  Let {\sf $\game_{m,\eps,f} = \set{C_1,\ldots,C_m,f}$ } be an $m$-round online Binomial game with $\abs{\eps} \leq \epsB$. Let $S \eqdef \set{u=\seq{\ell,b,h}\ \colon\ \ell \leq \gamma}$ where $\gamma \in [\rnd]$.
  Then, there exists an assignment to the variables $y_u$ for $u \in S \cup \set{f_{-1}}$ that satisfies:
  \begin{align}
    \sum_{u \in S} y_u \cdot \Pr[u] \leq O(\frac1{\rnd}) \label{final_rounds:optimality}
  \end{align}
  \begin{align}
    y_{f_{-1}} \leq O(\gamma \cdot \sqrt{\log{\rnd}}) \label{final_rounds:minus_1}
  \end{align}
  \begin{align}
    y_u + \sum_{v \in S \cup \set{f_{-1}} \ :\ v>u } y_v \cdot \Pr[v|u]  \geq c_u-v_u & \qquad \forall u\in S\label{final_rounds:feasibility}
  \end{align}
\end{claim}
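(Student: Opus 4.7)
The plan is to define the dual variables on $S \cup \set{f_{-1}}$ by partitioning $S$ into three regions and assigning $y_u$ within each region using the three tools established above, while setting $y_{f_{-1}} \eqdef \const \cdot \gamma \cdot \sqrt{\log \rnd}$ for a sufficiently large universal constant $\const$ (this choice immediately yields~\eqref{final_rounds:minus_1}). As a preprocessing step, by \cref{best_possible_hint} it suffices to construct the solution for the modified game in which the hint function outputs the current-round coins on the no-hint states of interest; the conversion back preserves the value of the objective and keeps the feasibility structure.

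Split $S$ as follows. Region (A): states with $\ell \leq \rnd^{1/8}$. Region (B): states with $\rnd^{1/8} < \ell \leq \gamma$ and $\size{b + \eps\cdot \reml} > 4\sqrt{\log \rnd \cdot \reml}$. Region (C): the remaining states, \ie those with $\rnd^{1/8} < \ell \leq \gamma$ and $\size{b + \eps \cdot \reml} \leq 4\sqrt{\log \rnd \cdot \reml}$. For regions (A) and (B), take $y_u$ to be the assignment provided by \cref{final_rounds_8} and \cref{marginal_states} respectively; each contributes $O(1/\rnd)$ to the sum in~\eqref{final_rounds:optimality}, and each already satisfies the feasibility constraint within its own region. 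For region (C), invoke \cref{lemma:boundingGaveValueJumpLastRounds} to obtain, for every $\seq{\ell,b}$ in the region, a set $\aset_{\ell,b}$ of ``typical'' hints such that $c_u - v_u \leq \const' \cdot \ell \cdot \sqrt{\log \rnd} \cdot \pr{f_{-1} \mid u}$ for every $u = \seq{\ell,b,h}$ with $h \in \aset_{\ell,b}$, and such that $\sum_{h\notin \aset_{\ell,b}} \pr{\seq{\ell,b,h}\mid \seq{\ell,b}} \leq 1/\rnd^2$. For $u = \seq{\ell,b,h}$ with $h \in \aset_{\ell,b}$ set $y_u = 0$: feasibility is then obtained for free from $y_{f_{-1}} \cdot \pr{f_{-1}\mid u} \geq \const \cdot \gamma \cdot \sqrt{\log \rnd}\cdot \pr{f_{-1}\mid u} \geq \const' \cdot \ell \cdot \sqrt{\log \rnd}\cdot \pr{f_{-1}\mid u} \geq c_u - v_u$ once $\const$ is chosen larger than $\const'$. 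For $u = \seq{\ell,b,h}$ with $h \notin \aset_{\ell,b}$ set $y_u = c_u - v_u \leq 1$: feasibility~\eqref{final_rounds:feasibility} is trivial by \cref{trivially_satisfied_states}, and the total contribution to~\eqref{final_rounds:optimality} telescopes as $\sum_{\ell} \sum_b \pr{\seq{\ell,b}} \cdot \pr{h\notin \aset_{\ell,b}\mid \seq{\ell,b}} \leq \gamma / \rnd^2 \leq O(1/\rnd)$ since $\gamma \leq \rnd$.

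The one thing that still needs checking is that, after combining all four pieces (regions (A), (B), the two sub-parts of region (C), and $y_{f_{-1}}$), the feasibility constraint~\eqref{final_rounds:feasibility} remains valid for the states in regions (A) and (B). This is essentially free: the feasibility guarantees from \cref{final_rounds_8,marginal_states} are stated with summations only over their own regions, and merging in the additional non-negative $y_v$ values from region (C) and $y_{f_{-1}}$ can only increase the left-hand side of~\eqref{final_rounds:feasibility}. The sole part of the argument requiring real work is the region-(C) analysis, which is already packaged cleanly by \cref{lemma:boundingGaveValueJumpLastRounds}; the rest is bookkeeping, summing the region-wise objective contributions $O(1/\rnd) + O(1/\rnd) + 0 + O(\gamma/\rnd^2)$ to obtain the required $O(1/\rnd)$ bound in~\eqref{final_rounds:optimality}.
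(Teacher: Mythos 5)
Your proposal is correct and follows essentially the same route as the paper: after applying \cref{best_possible_hint}, you dispatch $\ell \le \rnd^{1/8}$ via \cref{final_rounds_8}, the marginal-offset states via \cref{marginal_states}, invoke \cref{lemma:boundingGaveValueJumpLastRounds} on the remaining typical region, set $y_{f_{-1}} = \Theta(\gamma\sqrt{\log \rnd})$, and use $y_{f_{-1}}\cdot\pr{f_{-1}\mid u}$ to cover the typical hints while paying only the $\gamma/\rnd^2$ tail probability for atypical ones. The paper phrases the split of $S$ less explicitly (``we can assume $S$ is actually $S'$'') but the decomposition and the feasibility/objective bookkeeping are the same. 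One small point worth adding: when you set $y_u = c_u - v_u$ for the atypical hints in region (C), the paper additionally restricts to states with $c_u - v_u > 0$ (its set $L$) and sets $y_u = 0$ otherwise; this keeps all dual variables non-negative, which the claim statement does not demand but which is implicitly relied upon when this assignment is merged into the full dual solution in \cref{lemma:MainInequalityForLP} (the chain of inequalities there drops positive summands). Your feasibility and objective bounds still go through as written, but you should add the $c_u - v_u > 0$ restriction to preserve dual non-negativity for the downstream use.
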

\begin{proof}
  By \cref{best_possible_hint} it enough to prove the claim for the case that the hint function $f$ simply outputs the coins of current state.
 By \cref{marginal_states}, and \cref{final_rounds_8} we can assume that the set $S$ is actually:
  $S' \eqdef \set{u=\seq{\ell,b,h}\ \colon\ \rnd^{\frac18} \leq \ell \leq \gamma \ ,\ \abs{b + \eps \cdot \reml} \leq 4 \sqrt{\log{\rnd} \cdot \reml}}$.

  Next, by \cref{lemma:boundingGaveValueJumpLastRounds}, we get that for every $\seq{\ell,b} \in S^-$, there exists a set $\aset_{\ell,b}$ such that the following two conditions hold:
  \begin{enumerate}
  \item
    \begin{equation}
      \sum_{h \notin \aset_{\ell,b}} \pr{\seq{\ell,b,h}  \mid \seq{\ell,b} } \leq \frac1{\rnd^2} \label{final_rounds:EQ3}
    \end{equation}
  \item For every $u=\seq{\ell,b,h}$ where $h \in \aset_{\ell,b}$, the following holds:
    \begin{align}
      c_u - v_u \leq \const \cdot \ell \cdot \sqrt{\log(m)}\cdot \pr{f_{-1} \mid u}  \label{final_rounds:EQ4}
    \end{align}
    where $\const$ is some universal constant.
  \end{enumerate}

Let,
  \begin{align*}
    L \eqdef & \set{u=\seq{\ell,b,h} \in S' \mid  h \notin \aset_{\ell,b},\ c_u-v_u > 0\ }\\
  \bar{L} \eqdef & \set{u=\seq{\ell,b,h} \in S' \mid  h \in \aset_{\ell,b}}
  \end{align*}
  The assignment of values $y_u$ for $u \in S' \cup \set{\minusOne}$ is as follows.
  For state $\minusOne$ define $y_\minusOne = \const \cdot \gamma \cdot \sqrt{\log{\rnd}}$.
  For $u \in L$, define $y_u=c_u-v_u$. For all other states $u$, define $y_u=0$. \cref{final_rounds:minus_1} is satisfied trivially. To prove that \cref{final_rounds:optimality} holds we recall that $L^\ell$ is the set of all the states in $L$ from level $\ell$. We have:
  \begin{align}
    \sum_{u \in L} \Pr[u] \cdot y_u &= \sum_{u \in L} \Pr[u] \cdot (c_u-v_u) \nonumber\\
                                    &\leq \sum_{u \in L} \Pr[u] \nonumber\\
                                    &= \sum_{\seq{\ell,b} \in L^-}\ \  \sum_{h \notin \aset_{\ell,b}} \pr{\seq{\ell,b,h}} \nonumber\\
                                    &= \sum_{\seq{\ell,b} \in L^-}\ \  \sum_{h \notin \aset_{\ell,b}} \pr{\seq{\ell,b}} \cdot \pr{\seq{\ell,b,h} \mid \seq{\ell,b}} \nonumber\\
                                    &= \sum_{\seq{\ell,b} \in L^-} \pr{\seq{\ell,b}} \sum_{h \notin \aset_{\ell,b}} \pr{\seq{\ell,b,h} \mid \seq{\ell,b}} \nonumber\\
                                    &\leq \sum_{\seq{\ell,b} \in L^-} \pr{\seq{\ell,b}} \cdot \frac1{m^2} \label{ineq-cl5131}\\
                                    &= \frac1{m^2} \sum_{\ell} \sum_{\seq{\ell,b} \in (L^\ell)^-} \pr{\seq{\ell,b}} \nonumber\\
                                    &\leq \frac1{m^2} \sum_{\ell} 1 = \frac1{m^2} \cdot m  = O(\frac1m) \nonumber
  \end{align}
  Where Inequality \eqref{ineq-cl5131} follows by Inequality \eqref{final_rounds:EQ3}. Thus, we conclude that \cref{final_rounds:optimality} holds.

  We next prove the feasibility of this solution, for states in $S'$ (\cref{final_rounds:feasibility}). For states $u \in L$ it's immediate from \cref{trivially_satisfied_states}. For states $u \in \bar{L}$ we calculate:
  \begin{align*}
    c_u-v_u &\leq  \const \cdot \ell \cdot \sqrt{\log(m)} \cdot \pr{f_{-1} \mid u}     \\
            &\leq  \const \cdot \gamma \cdot \sqrt{\log{m}} \cdot \pr{f_{-1} \mid u}   \\
            &\leq  y_{f_{\minus 1}} \cdot \pr{f_{\minus 1} \mid u} \\
            &\leq  y_u + \sum_{v:\ v>u} y_v \cdot \pr{v \mid u}
  \end{align*}
  Where the first Inequality follows by Inequality \eqref{final_rounds:EQ4}.
\end{proof}

\begin{claim}[big $\eps$]\label{big_eps}
  Let {\sf $\game_{m,\eps,f} = \set{C_1,\ldots,C_m,f}$ } be an $m$-round online Binomial game with $\abs{\eps} \geq \epsB$. Then, there are values $y_u$ (for $u \in \hat{V}$) such that,
  $\sum_{u \in \hat{V}} y_u \cdot \Pr[u] \leq O(\frac1{\rnd})$, and for every state $u \in S$: $y_u + \sum_{v \in \hat{V}\ :\ v>u } y_v \cdot \Pr[v|u]  \geq c_u-v_u$.
\end{claim}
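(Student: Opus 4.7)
The plan is to exhibit an explicit dual solution that is trivially feasible and whose objective we bound globally, exploiting the fact that for $|\eps|\ge\epsB$ the final sum $S_m$ is concentrated far from zero. Two preliminary reductions simplify the setup. First, by \cref{best_possible_hint} it suffices to treat the ``best possible hint'' variant $f'(i,b,z)=z$, since any dual solution for that game transfers to a solution of the same value for $\game_{m,\eps,f}$. Second, assume WLOG $\eps\ge\epsB$; the case $\eps\le-\epsB$ is handled symmetrically (swap $\pr{S_m<0}$ with $\pr{S_m\ge 0}$ and use $|c_u-v_u|\le c_u+v_u$ instead of $|c_u-v_u|\le (1-c_u)+(1-v_u)$ below).

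For the dual assignment, set $y_u=|c_u-v_u|$ for every \withHintState $u\in\widehat{V}$ and $y_{f_b}=0$ for every \finalState. Feasibility is immediate: since all $y_v\ge 0$, the left-hand side of the dual constraint at $u$ is at least $y_u\ge c_u-v_u$; and for the final states $c_{f_b}=v_{f_b}$, so $y_{f_b}=0$ suffices.

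The heart of the argument is the bound on the objective, via the pointwise inequality $|c_u-v_u|\le (1-c_u)+(1-v_u)$ (valid since $c_u,v_u\in[0,1]$) combined with a tower identity applied level by level. Fix $\ell\in[m]$ and sum over the \withHintStates $u=\seq{\ell,b,h}$ at that level. Since $c_u=\pr{S_m\ge 0\mid S_{m-\ell}=b}$ does not depend on $h$, marginalizing out the hint gives
\begin{align*}
\sum_{b,h}\pr{\seq{\ell,b,h}}(1-c_u)=\sum_b\pr{S_{m-\ell}=b}\cdot\pr{S_m<0\mid S_{m-\ell}=b}=\pr{S_m<0},
\end{align*}
and the tower rule applied one step deeper (conditioning on both $S_{m-\ell}$ and $H_{m-\ell+1}$) yields the identical equality with $c_u$ replaced by $v_u$. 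Summing over the $m$ levels therefore gives $\sum_{u}\pr{u}\,y_u\le 2m\cdot\pr{S_m<0}$.

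To close, apply Hoeffding's inequality (\cref{claim:Hoeffding}) to the sum $S_m$ of $\ms{1}$ independent $\Beroo{\eps}$ coins: $\pr{S_m<0}\le 2e^{-\eps^2\ms{1}/2}$. Using $\ms{1}\ge m^3/3$ and $\eps^2\ge\epsB^2=16\log m/m^3$, this is at most $2/m^{8/3}$, hence $\sum_u\pr{u}\,y_u\le 4/m^{5/3}=O(1/m)$. The main insight, and the only non-routine step, is that the tower identity replaces the per-state case analysis employed in the preceding claims; once the assignment $y_u=|c_u-v_u|$ is written down, the remainder is a one-line Hoeffding bound.
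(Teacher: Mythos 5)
Your proof is correct, and it takes a genuinely different route from the paper's. The paper's argument is ``local'': it first shows $\pr{F^{neg}}\le 2/m^2$ by Hoeffding, then uses a Markov-type argument to conclude that the set $S$ of states with $v_u < 1-1/m$ has probability at most $2/m$, observes that for $u\notin S$ one has $c_u-v_u\le 1-v_u\le 1/m$, and finally invokes the recursive dual construction of \cref{low_profit} twice (with $\delta=1/m$ on $\overline S$ and $\delta=1$ on $S$). Your argument is ``global'': you write down the trivially feasible assignment $y_u=|c_u-v_u|$ once and for all, and collapse the objective level-by-level via the tower identities
$\sum_{b,h}\pr{\seq{\ell,b,h}}(1-c_u)=\sum_{b,h}\pr{\seq{\ell,b,h}}(1-v_u)=\pr{S_m<0}$,
so that the entire objective is at most $2m\cdot\pr{S_m<0}$, and a single application of Hoeffding finishes. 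This avoids \cref{low_profit} entirely, is shorter, and in fact produces a somewhat tighter bound (of order $m^{1-\eps^2\ms{1}/(2\ln 2\cdot\log m)}$, which is $o(1/m)$ here), though of course $O(1/m)$ is all the claim requires.

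Two cosmetic remarks. First, the appeal to \cref{best_possible_hint} in your opening reduction is unnecessary for this argument: the tower identity for $v_u$ conditions on the actual state $(S_{i-1},H_i)$, so it holds verbatim for every hint function $f$; you may delete that reduction. Second, your explicit exponents ($m^{-8/3}$, $m^{-5/3}$) implicitly treat $\log$ as natural log, whereas the paper fixes base $2$; with base-$2$ logarithms the bound from $e^{-\eps^2\ms{1}/2}$ is in fact even smaller, so the discrepancy only helps you and the conclusion is unaffected.
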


\begin{proof}
  We prove for $\eps \geq \epsB$. The proof for $\eps \leq -\epsB$ is equivalent. First, we have that,
  \begin{align}
    \eps^2 \cdot \rem{m} \geq (\epsB)^2 \cdot \rem{m} \geq 16 \cdot \log{m} \cdot \frac{\rem{m}}{m^3} \geq 4 \cdot \log{m} \label{big_eps:EQ1}
  \end{align}

  Let $F^\negl$ be the union of all \finalStates with negative \offset. We have:
  \begin{align}
    \pr{F^\negl} &= \pr{S_m \leq 0} = \pr{S_m -\eps \cdot \rem{m} \leq -\eps \cdot \rem{m} }  \nonumber \\
                 &\leq \pr{\abs{S_m -\eps \cdot \rem{m}} \geq \eps \cdot \rem{m} } \label{big_eps:EQ3} \\
                 &\leq 2 \cdot e^{-\frac{(\eps \cdot \rem{m})^2}{2 \cdot \rem{m}}} \nonumber \\
                 &\leq 2 \cdot e^{-\frac12 \cdot \eps^2 \cdot \rem{m}} \nonumber \\
                 &\leq 2 \cdot e^{-2 \log{m}} = \frac2{m^2}  \label{big_eps:EQ2}
  \end{align}
  Where Inequality \eqref{big_eps:EQ3} is by \cref{claim:Hoeffding}, and Inequality \eqref{big_eps:EQ2} is by \cref{big_eps:EQ1}. We conclude that $\pr{F^\negl} \leq \frac2{m^2}$.

  Next we prove that $\pr{\exists u \colon \pr{F^\negl \mid u} \geq \frac1m} \leq \frac2m$,
  where by "$\exists u \colon \pr{F^\negl \mid u} \geq \frac1m$" we mean the event that the game reaches a state $u$, such that $\pr{F^\negl \mid u} \geq \frac1m$.
  Assume to the contrary that
  $\pr{\exists u \colon \pr{F^\negl \mid u} \geq \frac1m} > \frac2m$. We get:
  \begin{align*}
    \pr{F^\negl} &\geq \pr{F^\negl \ \Bigl\vert\  \exists u \colon \pr{F^\negl \mid u} \geq \frac1m} \cdot \pr{\exists u \colon \pr{F^\negl \mid u} \geq \frac1m} \\
                 &> \pr{F^\negl \ \Bigl\vert\  \exists u \colon \pr{F^\negl \mid u} \geq \frac1m} \cdot \frac2m \\
                 &\geq \frac1m \cdot \frac2m = \frac2{m^2}
  \end{align*}
  Contradicting Inequality \eqref{big_eps:EQ2}.

  Denote $S \eqdef \set{u \mid v_u < 1 - \frac1m}$. The above calculation shows that $\pr{S} \leq \frac2m$. Obviously for every $u \in S$, we have $c_u-v_u \leq \frac1m$. Define a solution for the dual LP as follow:
  \begin{itemize}
  \item For $u \notin S$, define $y_u$ by \cref{low_profit} with $\delta=\frac1m$.
  \item For $u \in S$, define $y_u$ by \cref{low_profit} with $\delta=1$.
  \end{itemize}
  By \cref{low_profit}, the above solution is feasible. Also by the same claim, and the fact that $\pr{S} \leq \frac2m$ we get that
    $\sum_{u} y_u \cdot \Pr[u] \leq O(\frac1m)$ and thus the claim follows.
\end{proof}

\subsection{Solving the Dual LP }\label{sec:dual-sol}

By the preceding discussion in \cref{sec:lp}, any feasible solution to the dual linear program in \cref{fig:basicLP} upper bounds the profit of any adversary.
In this section we construct a feasible dual solution with the desired properties.

\begin{lemma}\label{lemma:MainInequalityForLP-bigEps}[Solving the Dual LP---large $\eps$]
    Let {\sf $\game_{\rnd,\eps,f} = \set{C_1,\ldots,C_{\rnd},f}$ } be some $m$-round online Binomial game, and assume $\abs{\eps} \geq 4\sqrt{\frac{\log \rnd}{\ms{1}}}$
    then $\bias(\game) = O(\frac{1}{\rnd})$.
\end{lemma}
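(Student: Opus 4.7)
The plan is to directly combine the two pieces already developed earlier in the paper: the LP-duality bound in \cref{lem:bound_game_value} and the explicit dual solution constructed for the large-$\eps$ regime in \cref{big_eps}. Concretely, I would first observe that the hypothesis $|\eps| \geq 4\sqrt{\log m / \ms{1}}$ is (up to absolute constants) the same as $|\eps| \geq \epsB = 4\sqrt{\log m}/(m\sqrt{m})$, since $\ms{1} = \sum_{i=1}^{m} i^2 = \Theta(m^3)$. This places us inside the hypothesis of \cref{big_eps}, possibly after rescaling the hidden constant, which only affects the $O(\cdot)$.

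Next, invoking \cref{big_eps} I would produce values $\{y_u\}_{u\in\widehat V}$ that are feasible for the dual LP $(D)$ of \cref{fig:basicLP} and satisfy $\sum_{u\in\widehat V} y_u \cdot \Pr[u] = O(1/m)$. Then, by \cref{lem:bound_game_value} (weak duality applied to the LP characterization of strategies from \cref{lem:feasible}), any such feasible dual solution upper-bounds the optimum of the primal, and hence upper-bounds $\bias_{m,\eps,f}$. This yields $\bias(\game) = \bias_{m,\eps,f} \leq \sum_{u\in\widehat V} y_u \cdot \Pr[u] = O(1/m)$, as required.

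There is no real obstacle here; the two supporting statements were set up precisely for this combination. The only minor point worth spelling out is the constant matching between $\epsB$ and $4\sqrt{\log m/\ms{1}}$: since $\ms{1}$ lies between $m^3/3$ and $m^3$, the two thresholds differ by a universal constant, so a hypothesis with one implies the hypothesis with the other after absorbing the constant into the $O(\cdot)$. No new construction of dual variables is needed, and no intermediate quantitative estimate beyond what \cref{big_eps} already gives must be sharpened.
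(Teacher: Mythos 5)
Your proposal is correct and takes essentially the same approach as the paper's proof: invoke the feasible dual solution from \cref{big_eps} with objective value $O(1/m)$, then bound $\bias(\game)$ via the weak-duality bound of \cref{lem:bound_game_value}. One small cleanup worth noting: no ``rescaling of hidden constants'' is actually needed, since $\ms{1}=\sum_{k=1}^{\rnd}k^2 \le \rnd^3$ gives $4\sqrt{\log \rnd/\ms{1}} \ge \epsB$ directly, so the lemma's hypothesis literally implies the hypothesis of \cref{big_eps}.
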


\begin{proof}[Proof of \cref{lemma:MainInequalityForLP-bigEps}]
    By \cref{lem:bound_game_value}, it is enough to show a feasible solution $\{y_u\}$, of the dual-LP such that: $\sum_{u \in \widehat{V}} \Pr[u] \cdot y_u = O(\frac{1}{\rnd})$.
    Since $\abs{\eps} > \epsB$ it follows immediately from \cref{big_eps}.
\end{proof}

\begin{lemma}\label{lemma:MainInequalityForLP}[Solving the Dual LP]
    Let {\sf $\game_{\rnd,\eps,f} = \set{C_1,\ldots,C_{\rnd},f}$ } be an $m$-round online Binomial game, and assume $\abs{\eps} < 4\sqrt{\frac{\log \rnd}{\ms{1}}}$.
    Let $\tau \in [\rnd]$ be such that $\frac{\tau}{\sqrt{m}} \cdot \log^3(\rnd) < 1$ and let
    \[S \eqdef \left\{\seq{\ell,b}\ \colon \ \abs{b + \eps \cdot \reml} \leq 4 \sqrt{\log{\rnd} \cdot \reml},\
    \ell \geq \max \left( \floor{\rnd^{\frac18}},\tau^2 \log^3(\rnd) \right) , b+1 \equiv \reml\ (mod\ 2) \right\}\]
    Assume that for every $\seq{\ell,b} \in S$, there exists a set $\aset_{\ell,b}$ (of hints) such that the following conditions hold:
    \begin{enumerate}
        \item
        \begin{equation}
            \sum_{h \notin \aset_{\ell,b}} \pr{\seq{\ell,b,h}  \mid \seq{\ell,b} } \leq \frac1{\rnd^2} \label{lemma:MainInequalityForLP:1}
        \end{equation}
        \item For every $u=\seq{\ell,b,h}$, where $\seq{\ell,b} \in S$ and $h \in \aset_{\ell,b}$, the following two condition holds:
        \begin{enumerate}
            \item
          \begin{equation}
              c_u-v_u \leq \const' \cdot \tau \cdot \sqrt{\rnd \log \rnd} \cdot \pr{f_{-1} \mid u^-} \label{lemma:MainInequalityForLP:2}
           \end{equation}
          where $\const'$ is some universal constant.
        \item
        \begin{equation}
            \pr{D_\ell > 9 \cdot \sqrt{\ell \cdot \log \rnd}\ \mid\ \seq{\ell,b,h}} \leq \frac{\const''}{\rnd^{12}} \label{lemma:MainInequalityForLP:3}
        \end{equation}
        where $\const''$ is some universal constant.
       \end{enumerate}
    \end{enumerate}

    Then $\bias(\game) = O(\frac{\tau \cdot \sqrt{\log \rnd}}{\rnd})$.
\end{lemma}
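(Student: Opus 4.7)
The plan is to apply \cref{lem:bound_game_value} by exhibiting a feasible dual solution $\{y_u\}_{u \in \widehat V}$ to the LP in \cref{fig:basicLP} whose objective value is $O(\tau\sqrt{\log m}/m)$. Set $\ell^* \eqdef \max(\lfloor m^{1/8} \rfloor, \tau^2\log^3 m)$. I would partition $\widehat V$ into four regions: $T_1$ consists of the states at levels strictly below $\ell^*$ together with the final states; $T_2$ consists of the ``good'' states $u = \seq{\ell,b,h}$ with $\ell \geq \ell^*$, $\seq{\ell,b} \in S$, and $h \in \aset_{\ell,b}$; $T_3$ is the bad-hint analogue of $T_2$ (with $h \notin \aset_{\ell,b}$); and $T_4$ contains the marginal-offset states at level $\geq \ell^*$ (with $\seq{\ell,b} \notin S$). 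The regions $T_1$, $T_3$, $T_4$ can be handled routinely: for $T_1$ I invoke \cref{final_rounds} with $\gamma = \ell^*$, obtaining $\{y^{(1)}_u\}$ with $\sum_{u \in T_1 \setminus \{f_{-1}\}} y^{(1)}_u \Pr[u] = O(1/m)$ and $y^{(1)}_{f_{-1}} \leq O(\ell^*\sqrt{\log m})$; for $T_4$ I invoke \cref{marginal_states}, contributing $O(1/m)$; for $T_3$ I set $y_u \eqdef c_u - v_u \leq 1$, which is trivially feasible and whose total cost $\sum_{u \in T_3}\Pr[u] \leq \sum_{\seq{\ell,b}\in S}\Pr[\seq{\ell,b}] \cdot m^{-2} \leq 1/m$ is controlled by condition~(\ref{lemma:MainInequalityForLP:1}).

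The crux is the assignment on $T_2$, which concentrates all the dual mass at a single ``bottleneck'' level. I would set $y_u \eqdef K\cdot \Pr[f_{-1}\mid u^-]$ when $\ell_u = \ell^*$, and $y_u \eqdef 0$ when $\ell_u > \ell^*$, where $K = \Theta(\tau \sqrt{m\log m})$ is fixed below. The tower identity $\sum_{v\text{ at level }\ell^*} \Pr[f_{-1}\mid v^-]\Pr[v\mid u] = \Pr[f_{-1}\mid u]$ --- combined with the fact that the with-hint states at level $\ell^*$ lying in $T_3 \cup T_4$ carry mass only $O(1/m^2)$ --- reduces the dual feasibility constraint at each $u \in T_2$ to $K\cdot\Pr[f_{-1}\mid u] \geq c_u - v_u$ up to a negligible correction. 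Together with condition~(\ref{lemma:MainInequalityForLP:2}) this requires $K \geq \const'\tau\sqrt{m\log m}\cdot \Pr[f_{-1}\mid u^-]/\Pr[f_{-1}\mid u]$, so everything hinges on bounding this ratio by a universal constant.

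The main technical obstacle is exactly this ratio bound, and I would dispatch it using condition~(\ref{lemma:MainInequalityForLP:3}). Decomposing
\[
\Pr[f_{-1}\mid u] = \sum_{d} \Pr[D_\ell = d \mid u]\cdot \Pr[R_{\ell-1} = -(b+1+d)],
\]
and using \cref{prop:binomProbEstimation} to write $\Pr[R_{\ell-1} = -(b+1+d)] = \Theta(\reml^{-1/2})\cdot\exp(-(b+1+d+\eps\reml)^2/(2\reml))$, on the typical event $|D_\ell| \leq 9\sqrt{\ell\log m}$ (whose complement has conditional mass $\leq \const''/m^{12}$ by condition~(\ref{lemma:MainInequalityForLP:3})) the Gaussian exponent fluctuates by at most $O(|D_\ell|\cdot|b+\eps\reml|/\reml) = O(\log m/\ell) = o(1)$ since $\ell \geq \ell^* \geq m^{1/8}$. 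Hence the Gaussian factor is constant up to $e^{O(1)}$, the ratio $\Pr[f_{-1}\mid u^-]/\Pr[f_{-1}\mid u]$ is $\Theta(1)$, and the choice $K = \Theta(\tau\sqrt{m\log m})$ gives feasibility.

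It remains to sum the costs. The dominant $T_2$ contribution is $K\sum_{u\in T_2,\,\ell_u=\ell^*}\Pr[u]\Pr[f_{-1}\mid u^-] \leq K\cdot\Pr[f_{-1}] = \Theta(\tau\sqrt{m\log m}/m^{3/2}) = O(\tau\sqrt{\log m}/m)$, using the local-limit estimate $\Pr[f_{-1}] = \Theta(m^{-3/2})$. The term $y^{(1)}_{f_{-1}}\Pr[f_{-1}] = O(\ell^*\sqrt{\log m}/m^{3/2})$ is also $O(\tau\sqrt{\log m}/m)$ because the hypothesis $\tau\log^3 m/\sqrt m < 1$ forces $\ell^* \leq O(\tau^2\log^3 m + m^{1/8}) \leq O(\tau\sqrt m + m^{1/8})$. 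The residual $O(1/m)$ contributions from $T_1 \setminus \{f_{-1}\}$, $T_3$ and $T_4$ are absorbed into the final bound (the applications \cref{binomial:game:vec} and \cref{binomial:game:hyp} invoke this lemma in the regime $\tau\sqrt{\log m} = \Omega(1)$).
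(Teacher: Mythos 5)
Your proposal takes a genuinely different route from the paper's. The paper uses threshold $\gamma = \tau\sqrt{m}$ for \cref{final_rounds}, bumps the single dual variable $y_{f_{-1}}$ up to $\const^{\max}\tau\sqrt{m\log m}$, and satisfies the dual constraint at every $T_2$ state $u$ above the threshold \emph{directly} through the term $y_{f_{-1}}\Pr[f_{-1}\mid u]$, using only the ratio bound $\Pr[f_{-1}\mid u^-]\le\nu\Pr[f_{-1}\mid u]$ coming from condition~(\ref{lemma:MainInequalityForLP:3}). You instead use the threshold $\ell^*$, keep $y_{f_{-1}}$ at the value \cref{final_rounds} provides, and route the extra dual budget through the layer of $T_2$ states at level $\ell^*$ via the tower identity. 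Both can give the stated bound, but the paper's choice entirely sidesteps the need to account for what happens \emph{at} level $\ell^*$, and is therefore materially simpler.

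There is, however, a real gap in the way you justify the tower-identity correction. You assert that the with-hint states at level $\ell^*$ in $T_3 \cup T_4$ "carry mass only $O(1/m^2)$." For $T_3$ (bad hints) this is fine: condition~(\ref{lemma:MainInequalityForLP:1}) together with the Markov property shows the $T_3$ contribution to $\sum_v \Pr[f_{-1}\mid v^-]\Pr[v\mid u]$ is at most a $1/m^2$ \emph{fraction} of $\Pr[f_{-1}\mid u]$. But for $T_4$ (offsets outside $S$ at level $\ell^*$) the claim is false as stated: starting from $u$ at a high level $\ell_u\gg\ell^*$, the walk wanders over a scale of roughly $\sqrt{\rem{\ell_u}}$ by the time it hits level $\ell^*$, which is vastly larger than the width $\Theta(\sqrt{\log m\cdot\rem{\ell^*}})$ of $S$ at level $\ell^*$; the \emph{unconditional} probability of $T_4$ given $u$ can be $1-o(1)$. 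What actually controls the $T_4$ correction is that $\Pr[f_{-1}\mid u]$ weights each path through $\seq{\ell^*,b'}$ by $\Pr[f_{-1}\mid\seq{\ell^*,b'}]$, and the \emph{conditional} distribution of $b'$ given both $u$ and $f_{-1}$ is a Brownian-bridge-type law centered near $-1-\eps\rem{\ell^*}$ with variance at most $\rem{\ell^*}$, so it is concentrated inside $S$. Even this argument only gives a correction of at most a constant fraction of $\Pr[f_{-1}\mid u]$ (when $\ell_u$ is close to $\ell^*$ the bridge mean sits essentially on the boundary of $S$), not $O(1/m^2)$ — still enough to conclude, but it requires choosing $K$ with a larger hidden constant, and it requires spelling out the bridge concentration rather than appealing to a mass bound that does not hold. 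If you instead mirror the paper and put all the extra dual mass on $f_{-1}$, none of this analysis at level $\ell^*$ is needed.

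Two minor points. First, the exponent fluctuation estimate in your ratio argument should be $O(\log^{1.5}m/\ell)$, not $O(\log m/\ell)$ (the dominant term is the cross term $|D_\ell|\cdot|b+\eps\reml|/\reml$); this is still $o(1)$ for $\ell\ge m^{1/8}$, so the conclusion survives. Second, when you split the marginal-offset states between $T_1$ (handled inside \cref{final_rounds}) and $T_4$, you should note that a single global invocation of \cref{marginal_states} is being used for both pieces, since \cref{low_profit} (which \cref{marginal_states} relies on) produces coupled values across levels.
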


\newcommand{\threshold}{\tau \cdot \sqrt{\rnd}}
\begin{proof}[Proof of \cref{lemma:MainInequalityForLP}]
By \cref{lem:bound_game_value}, it is enough to show a feasible solution $\{y_u\}$, of the dual-LP such that: $\sum_{u \in \widehat{V}} \Pr[u] \cdot y_u = O(\frac{\tau \sqrt{\log \rnd}}{\rnd})$. We define the set $S'$ as follows:
\begin{align*}
    S' \eqdef \left\{ \seq{\ell,b}\ \mid \ \abs{b + \eps \cdot \reml} \leq 4 \sqrt{\log{\rnd} \cdot \reml},\
    \ell \geq \threshold ,\ b+1 \equiv \reml\ (mod\ 2) \right\}
\end{align*}

Since $\frac{\tau}{\sqrt{m}} \cdot \log^3(\rnd) < 1$, it follows that $\tau \cdot \log^3(\rnd) < \sqrt{m}$. Hence if $\ell \geq \threshold$, it implies that $\ell \geq \threshold \geq \tau^2 \cdot \log^3(\rnd)$, and so $\ell \geq \max \left( \floor{\rnd^{\frac18}},\tau^2 \log^3(\rnd) \right)$. Hence we conclude that $S' \subseteq S$, and for the rest of the proof we use the properties guaranteed for $S$ only for the states in $S'$.

We define the solution for the dual LP as follow:
\begin{enumerate}
\item For non final states $u=\langle \ell,b,h \rangle$, with: $\ell \leq \threshold$, define $y_u$ according to \cref{final_rounds} where $\gamma = \threshold$.

\item For $u=f_{\minus 1}$ (the \finalState with $b=\minus 1$) define $y_{f_{-1}}$ according to \cref{final_rounds} where $\gamma = \threshold$, with the following enhancement. By (\cref{final_rounds:minus_1}) we knows that
\begin{align}
  y_{f_{-1}} \leq O(\threshold \cdot \sqrt{\log{\rnd}})
\end{align}

Let $\const$ be a constant \st $y_{f_{-1}} \leq \const \cdot \threshold \cdot \sqrt{\log{\rnd}})$. Define $\const^{\max} \eqdef \max(\const,\const', \const'')$. We define $y_{f_{-1}}$ to be
$y_{f_{-1}} = \const^{\max} \cdot \threshold \cdot \sqrt{\log{\rnd}}$.\footnote{Since we only enlarged the value of $y_{f_{-1}}$ guaranteed to exist by \cref{final_rounds}, we know that all levels up to $\gamma=\threshold$ are covered.}

\item For non final states $u=\langle \ell,b,h \rangle$, with: $\ell \geq \threshold$, $\seq{\ell,b} \in S'$, and $h \notin \aset_{\ell,b}$, and $c_u-v_u > 0$, take $y_u=c_u-v_u$.
\item For non final states $u=\langle \ell,b,h \rangle$, with: $\ell \geq \threshold$, $\seq{\ell,b} \notin S'$, define $y_u$, according to \cref{marginal_states}.
\item for all other states $u$ in $\widehat{V}$, take $y_u=0$.
\end{enumerate}
We start by proving that:
\begin{align}
  \sum_{u \in \widehat{V}} \Pr[u] \cdot y_u = O(\frac{\tau \cdot \sqrt{\log \rnd}}{\rnd}) \label{lemma:MainInequalityForLP:optimally}
\end{align}
By \cref{prop:binomProbEstimation}:
\begin{align}
  \Pr[f_{\minus 1}] \cdot y_{f_{\minus 1}} = O(\Pr[f_{\minus 1}] \cdot \tau \cdot \sqrt{m} \cdot \sqrt{\log{m}}) = O(\frac{\tau \cdot \sqrt{\log{m}}}{m}) \label{optimally:final:minus_1}
\end{align}
By \cref{final_rounds}, and by \cref{marginal_states}, states $u$ defined in case $2$ or $4$, contribute to the sum $O(\frac1m)$. So, it remains to deal with states of the case $3$. Define
\begin{align*}
  L \eqdef  \set{u=\seq{\ell,b,h} \mid  \seq{\ell,b} \in S',  h \notin \aset_{\ell,b},\ c_u-v_u > 0\ }
\end{align*}
Recall that $L^\ell$ is the set of all the states in $L$ from level $\ell$. We have:
\begin{align}
  \sum_{u \in L} \Pr[u] \cdot y_u &= \sum_{u \in L} \Pr[u] \cdot (c_u-v_u)   \nonumber \\
                                  &\leq \sum_{u \in L} \Pr[u]   \nonumber  \\
                                  &= \sum_{\seq{\ell,b} \in L^-}\ \  \sum_{h \notin \aset_{\ell,b}} \pr{\seq{\ell,b,h}}   \nonumber  \\
                                  &= \sum_{\seq{\ell,b} \in L^-}\ \  \sum_{h \notin \aset_{\ell,b}} \pr{\seq{\ell,b}} \cdot \pr{\seq{\ell,b,h} \mid \seq{\ell,b}}   \nonumber  \\
                                  &= \sum_{\seq{\ell,b} \in L^-} \pr{\seq{\ell,b}} \sum_{h \notin \aset_{\ell,b}} \pr{\seq{\ell,b,h} \mid \seq{\ell,b}}   \nonumber  \\
                                  &\leq \sum_{\seq{\ell,b} \in L^-} \pr{\seq{\ell,b}} \cdot \frac3{m^2} \label{lemma:MainInequalityForLP:proof:1}    \\
                                  &= \frac3{m^2} \sum_{\ell} \sum_{\seq{\ell,b} \in (L^\ell)^-} \pr{\seq{\ell,b}}    \nonumber \\
                                  &\leq \frac3{m^2} \sum_{\ell} 1 = \frac3{m^2} \cdot m  = O(\frac1m)    \label{lemma:MainInequalityForLP:proof:2}
\end{align}
Where Inequality \eqref{lemma:MainInequalityForLP:proof:1} is due to \cref{lemma:MainInequalityForLP:1}. Combining \cref{lemma:MainInequalityForLP:proof:2}, and \cref{optimally:final:minus_1}, we conclude that \cref{lemma:MainInequalityForLP:optimally} holds.

We move now to prove the feasibility of our solution. For that, we need to show that for every state $u$, the following holds:
\begin{align*}
  y_u + \sum_{v:\ v>u} y_v \cdot \Pr[v|u] \geq c_u-v_u
\end{align*}
We divide the proof into 5 types of states $u$:
\begin{enumerate}
\item Final states: For final states $u$, with positive or negative \offset we have $c_u-v_u=0$, so the constraint holds.
\item For non final states $u=\langle \ell,b,h \rangle$, with $\ell \leq \threshold$: the feasibility follows immediately from \cref{final_rounds}.
\remove{Since $\ell \leq 10000 \cdot \tau \cdot \log^2 \rnd < \tau \cdot \sqrt{\rnd}$, and since we defined $y_{f_{-1}} = \tau \cdot \sqrt{m} \cdot \sqrt{\log{m}}$, }
\item For non final states $u=\langle \ell,b,h \rangle$, with: $\ell \geq \threshold$, and $\seq{\ell,b} \notin S'$, feasibility follows from \cref{marginal_states}.
\item For non final states $u=\langle \ell,b,h \rangle$, with: $\ell \geq \threshold$,\ $\seq{\ell,b} \in S'$,\  $h \notin \aset_{i,b}$, and $c_u-v_u > 0$: follows immediately from \cref{trivially_satisfied_states}.
\item For non final states $u=\langle \ell,b,h \rangle$, with: $\ell \geq \threshold$,\ $\seq{\ell,b} \in S'$,\ $h \in \aset_{i,b}$, and $c_u-v_u > 0$, we prove below.
\end{enumerate}

Consider some state $u = \langle \ell,b,h \rangle$ as defined in the case $5$. We first prove that there exist a constant $\nu$, such that $\pr{f_{\minus 1} \mid u^-} \leq \nu \cdot \pr{f_{\minus 1} \mid u}$.\\

\newcommand{\goodcond}{\size{i} \leq 9 \cdot \sqrt{\ell \cdot \log(\rnd)}}
\newcommand{\badcond}{\size{i} > 9 \cdot \sqrt{\ell \cdot \log(\rnd)}}
\begin{align}
  \pr{f_{\minus 1} \mid u} &= \sum_{i} \pr{f_{\minus 1} \mid \seq{\ell,b},D=i} \cdot \pr{D=i \mid u}  \nonumber    \\
                           &\geq \sum_{\goodcond} \pr{f_{\minus 1} \mid \seq{\ell,b},D=i} \cdot \pr{D=i \mid u}  \nonumber    \\
                           &= \sum_{\goodcond} \pr{X=-(b+i)}  \cdot \pr{D=i \mid u}  \nonumber    \\
                           &\geq \sum_{\goodcond} \nu \cdot \pr{X+D=-b} \cdot \pr{D=i \mid u}  \label{lemma:MainInequalityForLP:proof:3}    \\
                           &= \sum_{\goodcond} \nu \cdot \pr{f_{\minus 1} \mid \seq{\ell,b}} \cdot \pr{D=i \mid u}  \nonumber    \\
                           &= \nu \cdot \pr{f_{\minus 1} \mid \seq{\ell,b}} \cdot  \sum_{\goodcond} \pr{D=i \mid u}  \nonumber    \\
                           &\geq \frac12 \cdot \nu \cdot \tilde{\lambda} \cdot \pr{f_{\minus 1} \mid u^-}  \label{lemma:MainInequalityForLP:proof:4}
\end{align}
Where Inequality \eqref{lemma:MainInequalityForLP:proof:3} is because $\nu \cdot \pr{X+D=-b} \leq \pr{X=-(b+i)}$ for some constant $\nu$, and every $i$ such that $\goodcond$.  Inequality \eqref{lemma:MainInequalityForLP:proof:4} is due to \cref{lemma:MainInequalityForLP:3}.

We have:
  \begin{align}
    y_u + \sum_{v:\ v>u} y_v \cdot \pr{v \mid u} &\geq y_{f_{\minus 1}} \cdot \pr{f_{\minus 1} \mid u}  \nonumber   \\
                                                 &= \nu \cdot y_{f_{\minus 1}} \cdot \pr{f_{\minus 1} \mid u^-}   \nonumber   \\
                                                 &= \const' \cdot \tau \cdot \sqrt{m} \cdot \sqrt{\log{m}} \cdot \pr{f_{\minus 1} \mid u^-}  \nonumber \\
                                                 &\geq c_u-v_u  \label{lemma:MainInequalityForLP:proof:5}
  \end{align}
Where Inequality \eqref{lemma:MainInequalityForLP:proof:5} is due to \cref{lemma:MainInequalityForLP:2}.
\end{proof}

\subsection{Bounding Vector and Hypergeometric Games}

A main tool for this section is \cref{lemma:MainInequalityForLP}, proved in previous section. We use \cref{lemma:MainInequalityForLP} together with the tools of \cref{sec:ExpChangeDueLeakage} to prove \cref{binomial:game:vec} and \cref{binomial:game:hyp}.

\def\GameLabel{bounds:def:game}
\begin{lemma}\label{Binomial:lemma:VectorHint}[Restatement of \cref{binomial:game:vec}]
	\BinomialVectorHintInterfaceLemma
\end{lemma}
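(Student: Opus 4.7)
The proof splits on the magnitude of $\eps$. When $\size{\eps} \geq 4\sqrt{\log m/\ms{1}}$, \cref{lemma:MainInequalityForLP-bigEps} directly yields $\bias(\game) = O(1/m)$, which is certainly within the target bound $O(\sqrt{k\log m}/m)$.

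In the complementary case $\size{\eps} < 4\sqrt{\log m/\ms{1}}$, the plan is to invoke \cref{lemma:MainInequalityForLP} with $\tau = \sqrt{k}$. Its main side condition $\tau \log^3 m / \sqrt{m} < 1$ reduces to the hypothesis $k \leq m/\log^6 m$, and the bound $O(\tau \sqrt{\log m}/m) = O(\sqrt{k\log m}/m)$ it outputs is precisely the target. What must be verified is the existence, for each state $\seq{\ell,b} \in S$ described in \cref{lemma:MainInequalityForLP}, of a hint set $\aset_{\ell,b}$ satisfying the three listed conditions. The key observation is that, for $i = m - \ell + 1$, the game's hint function $f = \fvec{\rnd,\eps,k\ms{1}}$ from \cref{def:vectorHint} is precisely an $(\ms{1}, k)$-vector leakage function in the sense of \cref{def:VectorLeakageFunction} for the $(\rnd,i,\ells_\rnd,b,\eps)$-binomial two-step process $(\CurCoinsVar = C_i, \Value)$ of \cref{def:BinomialProcess}; this identification is essentially a rewriting of the definition of $\fvec{\cdot}$, using $\vBeroo{\ms{i+1},\eps}(-b-c) = \pr{\Value=1 \mid C_i = c}$.

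Under this identification, I apply \cref{lemma:BinomialProcessVectorHint} with parameters $\vctBaseLen = \ms{1}$ and $\vctLenFact = k$. Its six assumptions follow directly: the bound on $\eps$ is the case hypothesis; the bounds on $\curRound$, on $\size{b + \eps\ms{\curRound}}$, and the parity/support condition are built into the definition of $S$; $\vctBaseLen \geq \ms{1}$ is an equality; and the condition $\sqrt{k/(\rnd-\curRound)}\log m \leq 1/100$ follows from $\rnd - \curRound = \ell - 1 \geq k\log^3 m - 1$ for sufficiently large $m$. Setting $\aset_{\ell,b}$ to be the resulting good-hint set $\GoodHintsFinal$, the lemma supplies $\pr{\HintVar \notin \aset_{\ell,b}} \leq 1/m^2$ (condition 1 of \cref{lemma:MainInequalityForLP}), the tail bound $\pr{\size{C_i} > 7\sqrt{\log m\cdot \ml{\curRound}} \mid h} \leq \gamma/m^{12}$ (condition 2a), and the prediction-advantage inequality
\[
\PredictAdv(\hint) \leq \const \sqrt{k\log m}\cdot \sqrt{\ml{\curRound}/(\rnd-\curRound+1)} \cdot \pr{\NextCoinsVar = -(b+1)}.
\]
Substituting $\ml{\curRound} = \ell^2$ and $\rnd-\curRound+1 = \ell$ gives $\sqrt{\ml{\curRound}/(\rnd-\curRound+1)} = \sqrt{\ell} \leq \sqrt{m}$, so the right-hand side is at most $\const\sqrt{k}\cdot \sqrt{m\log m}\cdot \pr{f_{-1}\mid u^-}$, which is exactly the form required by condition (2b) of \cref{lemma:MainInequalityForLP} with $\tau=\sqrt{k}$. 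The desired bound on $\bias(\game)$ then follows from \cref{lemma:MainInequalityForLP}.

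The main obstacle I anticipate is the identification of the game's randomized hint function with the abstract vector leakage function of \cref{def:VectorLeakageFunction}, and the verification that the somewhat delicate combination of parameter bounds ($k \leq m/\log^6 m$, $\ell \geq k\log^3 m$, and $\tau = \sqrt{k}$) makes every assumption of both intermediate lemmas hold simultaneously; the factor of $\sqrt{\ell}$ arising from $\sqrt{\ml{\curRound}/(\rnd - \curRound + 1)}$ is the reason $\tau = \sqrt{k}$ is the right choice. Everything else reduces to straightforward algebraic manipulations already carried out in \cref{sec:ExpChangeDueLeakage}.
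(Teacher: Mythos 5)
Your proposal is correct and follows essentially the same route as the paper's own proof: split on $\size{\eps}$, dispatch the large-$\eps$ case to \cref{lemma:MainInequalityForLP-bigEps}, and in the small-$\eps$ case identify $\fvec{\rnd,\eps,k\cdot\ms{1}}$ as an $(\ms{1},k)$-vector leakage function for the $(\rnd,\curRound,\ells_\rnd,b,\eps)$-binomial two-step process, then feed the good-hint set from \cref{lemma:BinomialProcessVectorHint} into \cref{lemma:MainInequalityForLP} with $\tau=\sqrt{k}$. The only cosmetic discrepancy is the tail-bound threshold constant ($7$ from \cref{lemma:BinomialProcessVectorHint} versus $9$ in condition (2a) of \cref{lemma:MainInequalityForLP}), a mismatch that is already present in the paper itself and does not affect the argument.
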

\begin{proof}
	If $\size{\eps} > 4\sqrt{\frac{\log \rnd}{\ms{1}}}$, the proof immediately follows by \cref{lemma:MainInequalityForLP-bigEps}.
	Therefore, we assume that $\size{\eps} \leq 4\sqrt{\frac{\log \rnd}{\ms{1}}}$.
	Let $S$ be as defined in \cref{lemma:MainInequalityForLP}, with respect to $\tau = \sqrt{k}$ and $\game = \game_{\rnd,\eps,f}$ for $f = \fvec{\rnd,\eps,k\cdot \ms{1}}$, as defined in \cref{def:vectorHint}. Namely, $f$ on input $(\curRound,\prevCoins,\curCoins)$ calculates $\delta = \vBeroo{\ms{\curRound+1},\eps}(-\prevCoins-\curCoins)$ and outputs a random sample from $(\Beroo{\eps})^{k\cdot \ms{1}}$, for $\eps \eqdef \sBias{\ms{1}}{\delta}$.
	
	In the following, let $\seq{\ell,\prevCoins} \in S^-$, let $\curRound = \rnd - \ell + 1$, let $\vctBaseLen = \ms{1}$, let $(\ElementVar = \CurCoinsVar, \Value)$ be the variables of a $\bigl(\rnd, \curRound, \prevCoins, \eps\bigr)$-binomial two-step process (as defined in \cref{def:BinomialProcess}) and let $g$ be a $(\vctBaseLen, k)$-vector leakage function for $(\ElementVar, \Value)$ (as defined in \cref{def:VectorLeakageFunction}).
	Since $\rnd$,$\curRound$,$\prevCoins$,$\vctBaseLen,\eps$ and $\alpha = k$ satisfy all the conditions of \cref{lemma:BinomialProcessVectorHint}, the lemma yields that there exists $\cH_{\ell, \prevCoins} \subseteq \oo^{k\cdot \ms{1}}$ such that
	\begin{enumerate}
		\item $\pr{g(\ElementVar) \notin \cH_{\ell, \prevCoins}} \leq \frac1{\rnd^2}$\label{Binomial:lemma:VectorHint:prop1}
		
		\item For every $\hint \in \cH_{\ell, \prevCoins}$,
		\begin{enumerate}
			\item $\pr{\size{\CurCoinsVar} > 9\sqrt{\log \rnd \cdot \ml{\curRound}} \mid g(\ElementVar) = \hint} \leq \frac{\const}{\rnd^{12}}$, for some universal constant $\const > 0$.\label{Binomial:lemma:VectorHint:prop2a}
			
			\item $\size{\pr{\Value = 1} - \pr{\Value = 1 \mid g(\ElementVar) = \hint}} \leq \const' \sqrt{\log \rnd \cdot k} \cdot \sqrt{\frac{\ml{\curRound}}{\rnd-\curRound+1}}\cdot \pr{\NextCoinsVar = -(\prevCoins+1)},$ for some universal constant $\const' > 0$.\label{Binomial:lemma:VectorHint:prop2b}
		\end{enumerate}
	\end{enumerate}
	By doing the translations from the notations of \cref{sec:BinomialViaLP} to the notations of \cref{sec:ExpChangeDueLeakage}, we get that $\sum_{h \notin \aset_{\ell,b}} \pr{\seq{\ell,b,h}  \mid \seq{\ell,b}} = \pr{g(\ElementVar) \notin \cH_{\ell, \prevCoins}} \leq \frac1{\rnd^2}$ and for every $u=\seq{\ell,\prevCoins,\hint}$,
	\begin{enumerate}[$\ast$]
		\item $c_u-v_u = \pr{\Value = 1} - \pr{\Value = 1 \mid g(\ElementVar) = \hint}$,
		\item $\pr{f_{-1} \mid u^-} = \pr{\NextCoinsVar = -(\prevCoins+1)}$,
		\item $\pr{D_\ell > 9 \cdot \sqrt{\ell \cdot \log \rnd}\ \mid\ \seq{\ell,b,h}} = \pr{\CurCoinsVar > 9\cdot \sqrt{\log \rnd \cdot \ml{\curRound}} \mid g(\ElementVar) = \hint}$.
	\end{enumerate}
	Therefore, combining these equalities with properties \ref{Binomial:lemma:VectorHint:prop2a} and \ref{Binomial:lemma:VectorHint:prop2b} of $\cH_{\ell, \prevCoins}$, together with the fact that $\sqrt{\frac{\ml{\curRound}}{\rnd-\curRound+1}} \leq \sqrt{\rnd}$, yields that
	\begin{enumerate}[(a)]
	\item $\pr{D_\ell > 9 \cdot \sqrt{\ell \cdot \log \rnd}\ \mid\ \seq{\ell,\prevCoins,\hint}} \leq \frac{\const}{\rnd^2}$, and
	\item $\size{c_u-v_u} \leq \const' \sqrt{k} \cdot \sqrt{\rnd \log \rnd} \cdot \pr{f_{-1} \mid u^-},$
	\end{enumerate}
	for every $u=\seq{\ell,\prevCoins,\hint}$ with $\hint \in \cH_{\ell, \prevCoins}$.
	In summary, we proved that for every $\seq{\ell,\prevCoins} \in S^-$ there exists a set $\cH_{\ell, \prevCoins}$ that satisfy the three conditions of \cref{lemma:MainInequalityForLP} with $\tau = \sqrt{k}$. Therefore, applying \cref{lemma:MainInequalityForLP} yields that $\bias_{\game} \in O(\frac{\sqrt{k}}{\rnd} \cdot \sqrt{\log \rnd})$, as required.
\end{proof}

\def\GameLabel{bounds:def:game}
\begin{lemma}\label{Binomial:lemma:HyperHint}[Restatement of \cref{binomial:game:hyp}]
	\BinomialHyperHintInterfaceLemma
\end{lemma}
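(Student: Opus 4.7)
The plan is to mirror the proof of \cref{Binomial:lemma:VectorHint} given just above, replacing the appeal to \cref{lemma:BinomialProcessVectorHint} by an appeal to its hypergeometric counterpart \cref{lemma:BinomialProcessHyperHint}, and then feeding the resulting per-state bounds into the LP-machinery encapsulated by \cref{lemma:MainInequalityForLP}. The target bound $O(\sqrt{\log m}/m)$ is what \cref{lemma:MainInequalityForLP} delivers with the ``noise factor'' $\tau = 1$, so this value of $\tau$ is what I will use.

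First, dispose of the large-bias case $|\eps| > 4\sqrt{(\log m)/\ms{1}}$ by direct invocation of \cref{lemma:MainInequalityForLP-bigEps}, which already gives bias $O(1/m)$ regardless of the hint. From here on assume $|\eps| \leq 4\sqrt{(\log m)/\ms{1}}$. Let $S$ be the set defined in \cref{lemma:MainInequalityForLP} with $\tau = 1$ (so the level condition reads $\ell \geq \max(\lfloor m^{1/8}\rfloor, \log^3 m)$), and fix $\seq{\ell,b} \in S^-$. Set $i = m - \ell + 1$ and think of round $i$ of the game as a two-step Boolean process: let $\Pc = (\ElementVar = \CurCoinsVar, \Value)$ be the $(m, i, \ells_m, b, \eps)$-binomial process, and note that, by \cref{def:hypHint}, the hint $f(i, b, \CurCoinsVar)$ is distributed exactly as $g(\CurCoinsVar)$ for an $(m, i, \ells_m, b, p)$-hypergeometric leakage function $g$ (modulo the additive shift by $b$, which is a deterministic function of the state and therefore irrelevant for bounding the prediction advantage by \cref{prop:RedundentInfoInTheLeakage}).

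The hypotheses of \cref{lemma:BinomialProcessHyperHint} are precisely what the LP-level constraints on $S$ buy us: $\size{p} \leq c\sqrt{\log m \cdot \ms{1}}$ (given), $|\eps|$ is small enough, $i \in [m - \lfloor m^{1/8}\rfloor]$ since $\ell \geq \lfloor m^{1/8}\rfloor$, $|b + \eps\ms{i}| \leq 4\sqrt{\log m \cdot \ms{i}}$ from the definition of $S$, and the parity condition $-(b+1) \in \Supp(\Beroo{\ms{i},\eps})$ again from $S$. Invoking that lemma produces a set $\aset_{\ell,b} \subseteq \Supp(\HintVar)$ such that $\pr{\HintVar \notin \aset_{\ell,b}} \leq 1/m^2$ and, for every $h \in \aset_{\ell,b}$,
\[
\pr{|\CurCoinsVar| > 7\sqrt{\log m \cdot \ml{i}} \mid \HintVar = h} \leq \gamma/m^{12}
\qquad \text{and} \qquad
\PredictAdv_{\Pc, g}(h) \leq \varphi(c)\sqrt{\log m} \cdot \sqrt{\ml{i}/(m-i+1)} \cdot \pr{\NextCoinsVar = -(b+1)}.
\]

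Translating to the LP notation via the identifications $c_u - v_u = \PredictAdv_{\Pc,g}(h)$, $\pr{f_{-1} \mid u^-} = \pr{\NextCoinsVar = -(b+1)}$, and $\pr{|D_\ell| > 9\sqrt{\ell \log m} \mid \seq{\ell,b,h}} \leq \gamma/m^{12}$, and using the crude bound $\sqrt{\ml{i}/(m-i+1)} \leq \sqrt{m}$, we obtain
\[
c_u - v_u \leq \varphi(c)\sqrt{\log m} \cdot \sqrt{m} \cdot \pr{f_{-1} \mid u^-} = \varphi(c) \cdot 1 \cdot \sqrt{m \log m} \cdot \pr{f_{-1} \mid u^-},
\]
which is exactly the premise \eqref{lemma:MainInequalityForLP:2} of \cref{lemma:MainInequalityForLP} with $\tau = 1$. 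The other two premises \eqref{lemma:MainInequalityForLP:1} and \eqref{lemma:MainInequalityForLP:3} are what we established above. Applying \cref{lemma:MainInequalityForLP} then yields $\bias(G) \in O(\sqrt{\log m}/m)$, as desired.

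The only nontrivial step is the translation from the hypergeometric hint of the game to the hypergeometric leakage function of a two-step process; I expect this to be purely bookkeeping (identifying the conditional distribution of the hint given the current-round coins with $\Hyp{2\ms{1}, p, \ms{i+1}}$ shifted by $-b$). Everything else is plugging the right parameters into \cref{lemma:BinomialProcessHyperHint} and into \cref{lemma:MainInequalityForLP}, precisely paralleling the vector-hint argument already written out.
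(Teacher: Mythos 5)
Your overall plan is the right one---dispose of large $|\eps|$ via \cref{lemma:MainInequalityForLP-bigEps}, use \cref{lemma:BinomialProcessHyperHint} in place of \cref{lemma:BinomialProcessVectorHint}, and feed the resulting per-state bounds into \cref{lemma:MainInequalityForLP} with $\tau = 1$. But there is a genuine gap in the bridge from the game's hint to the hypergeometric leakage function, and it is not ``purely bookkeeping.''

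By \cref{def:hypHint}, $\fhyp{\rnd,p}(i,b,c)$ is a single $\pm1$ bit: it outputs $1$ with probability $\vHyp{2\ms{1},p,\ms{i+1}}(-b-c)$, i.e.\ it is $\sign(b + c + t)$ for $t \la \Hyp{2\ms{1},p,\ms{i+1}}$. By contrast, the hypergeometric \emph{leakage function} of \cref{def:HypHint} outputs the full integer $b + c + t$. So the game hint is \emph{not} ``distributed exactly as $g(\CurCoinsVar)$ modulo an additive shift by $b$''; it is a lossy sign-projection of $g(\CurCoinsVar)$, and \cref{lemma:BinomialProcessHyperHint} does not directly apply. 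Your appeal to \cref{prop:RedundentInfoInTheLeakage} also doesn't close this: that proposition handles the case where one leakage is the other \emph{concatenated with a function of itself} (i.e.\ strictly more information, with the extra part redundant); it does not let you pass from a coarse leakage (a sign bit) to a finer one (the full value), since the full value is not a function of the sign. The paper bridges exactly this gap by introducing an auxiliary game $\game'$ whose hint is the full $f'(i,b,c) = b + c + t$, observing that $\fhyp{\rnd,p} = f'' \circ f'$ for the deterministic sign function $f''$, and invoking a game-level data-processing step (cited to \citet{HaitnerT17}, Lemma 4.3) to conclude $\bias(\game) \leq \bias(\game')$; only then is the per-round hint of $\game'$ literally the hypergeometric leakage function and \cref{lemma:BinomialProcessHyperHint} becomes applicable. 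Once you insert this reduction, the remainder of your argument (translating the conclusions of \cref{lemma:BinomialProcessHyperHint} into the LP premises with $\tau = 1$ and applying \cref{lemma:MainInequalityForLP}) matches the paper's proof.
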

\begin{proof}
	If $\size{\eps} > 4\sqrt{\frac{\log \rnd}{\ms{1}}}$, the proof immediately follows by \cref{lemma:MainInequalityForLP-bigEps}.
	Therefore, we assume that $\size{\eps} \leq 4\sqrt{\frac{\log \rnd}{\ms{1}}}$.
	
	Let $\game'$ be the binomial game  $\game_{\rnd,\eps,\HintFunc'}$ according to \cref{bounds:def:game}, where $\HintFunc'$ is a random function that on input $(\curRound,\prevCoins,\curCoins)$, samples $\hypSample$ according to $\Hyp{2\cdot \ms{1},\hypBankWeight,\ms{\curRound+1}}$ and outputs $\prevCoins + \curCoins + \hypSample$.
	Recall that $\fhyp{\rnd,\hypBankWeight}$, defined in \cref{def:hypHint}, is a random function that on input $(\curRound,\prevCoins,\curCoins)$, outputs $1$ with probability $\vHyp{2\cdot \ms{1},p,\ms{\curRound+1}}(-\prevCoins-\curCoins)$ and $-1$ otherwise.
	Note that $\fhyp{\rnd,\hypBankWeight} = f'' \circ f'$ for $f''$ that on input $z \in \Z$ output $1$ if $z \geq 0$ and $-1$ otherwise.
	Therefore, since $\fhyp{\rnd,\hypBankWeight}$ is just a function on the output of $f'$, it is enough to bound $\bias_{\game'}$ (Lemma 4.3 of \cite{HaitnerT17}).
	
	In the following, let $S$ be as defined in \cref{lemma:MainInequalityForLP}, with respect to $\tau = 1$ and $\game'$, let $\seq{\ell,\prevCoins} \in S^-$, let $\curRound = \rnd - \ell + 1$, let $(\ElementVar = \CurCoinsVar\, \Value)$ be the variables of a $\bigl(\rnd, \curRound, \prevCoins, \eps\bigr)$-binomial two-step process (as defined in \cref{def:BinomialProcess}) and let $g$ be a $\bigl(\rnd, \curRound, \prevCoins, \hypBankWeight\bigr)$-hypergeometric leakage function for $(\ElementVar, \Value)$ (as defined in \cref{def:HypHint}).
	Since $\rnd$, $\curRound$, $\prevCoins$, $\eps$, $\hypBankWeight$ and $\const$ satisfy all the conditions of \cref{lemma:BinomialProcessHyperHint}, the lemma yields that there exists a set $\cH_{\ell, \prevCoins}$ such that
	\begin{enumerate}
		\item $\pr{g(\ElementVar) \notin \cH_{\ell, \prevCoins}} \leq \frac1{\rnd^2}$\label{Binomial:lemma:HyperHint:prop1}
		
		\item For every $\hint \in \cH_{\ell, \prevCoins}$,
		\begin{enumerate}
			\item $\pr{\size{\CurCoinsVar} > 9\sqrt{\log \rnd \cdot \ml{\curRound}} \mid g(\ElementVar) = \hint} \leq \frac{\const'}{\rnd^{12}}$, for some universal constant $\const' > 0$.\label{Binomial:lemma:HyperHint:prop2a}
			
			\item $\size{\pr{\Value = 1} - \pr{\Value = 1 \mid g(\ElementVar) = \hint}} \leq \varphi(\const) \sqrt{\log \rnd} \cdot \sqrt{\frac{\ml{\curRound}}{\rnd-\curRound+1}}\cdot \pr{\NextCoinsVar = -(\prevCoins+1)},$ for some universal function $\varphi \colon \R^+ \rightarrow \R^+$.\label{Binomial:lemma:HyperHint:prop2b}
		\end{enumerate}
	\end{enumerate}
	By doing the translations from the notations of \cref{sec:BinomialViaLP} to the notations of \cref{sec:ExpChangeDueLeakage} (as done in \cref{Binomial:lemma:VectorHint}) and by combining the above properties of $\cH_{\ell, \prevCoins}$ together with the fact that $\sqrt{\frac{\ml{\curRound}}{\rnd-\curRound+1}} \leq \sqrt{\rnd}$, we get that
	\begin{enumerate}[(a)]
		\item $\pr{D_\ell > 9 \cdot \sqrt{\ell \cdot \log \rnd}\ \mid\ \seq{\ell,\prevCoins,\hint}} \leq \frac{\const'}{\rnd^2}$, and
		\item $\size{c_u-v_u} \leq \varphi(\const) \cdot \sqrt{\rnd \log \rnd} \cdot \pr{f_{-1} \mid u^-},$
	\end{enumerate}
	for every $u=\seq{\ell,\prevCoins,\hint}$ with $\hint \in \cH_{\ell, \prevCoins}$.
	In summary, we proved that for every $\seq{\ell,\prevCoins} \in S^-$ there exists a set $\cH_{\ell, \prevCoins}$ that satisfy the three conditions of \cref{lemma:MainInequalityForLP} with $\tau = 1$. Therefore, applying \cref{lemma:MainInequalityForLP} yields that $\bias_{\game'} \in O(\frac{\sqrt{\log \rnd}}{\rnd})$, as required.
\end{proof}



\bibliographystyle{abbrvnat}
\bibliography{../crypto}

\begin{thebibliography}{55}
\providecommand{\natexlab}[1]{#1}
\providecommand{\url}[1]{\texttt{#1}}
\expandafter\ifx\csname urlstyle\endcsname\relax
  \providecommand{\doi}[1]{doi: #1}\else
  \providecommand{\doi}{doi: \begingroup \urlstyle{rm}\Url}\fi

\bibitem[{Abramowitz, M.} and {Stegun, I. A.}(1964)]{AbramowitzS64}
{Abramowitz, M.} and {Stegun, I. A.}, editors.
\newblock \emph{Handbook of Mathematical Functions}.
\newblock Dover Publications, 1964.

\bibitem[Aharonov et~al.(2000)Aharonov, Ta-Shma, Vazirani, and
  Yao]{AharonovEtAl00}
D.~Aharonov, A.~Ta-Shma, U.~Vazirani, and A.~C. Yao.
\newblock Quantum bit escrow.
\newblock In \emph{STOC: ACM Symposium on Theory of Computing (STOC)}, 2000.

\bibitem[Aiello et~al.(2001)Aiello, Ishai, and Reingold.]{AieIshRei01}
W.~Aiello, Y.~Ishai, and O.~Reingold.
\newblock Priced oblivious transfer: How to sell digital goods.
\newblock In \emph{Advances in Cryptology -- EUROCRYPT 2001}, 2001.

\bibitem[Alon and Omri(2016)]{AlonOmri16Full}
B.~Alon and E.~Omri.
\newblock Almost-optimally fair multiparty coin-tossing with nearly
  three-quarters malicious.
\newblock Cryptology ePrint Archive, Report 2016/800, 2016.
\newblock \url{http://eprint.iacr.org/2016/800}.

\bibitem[Alon and Naor(1993)]{AlonNaor93}
N.~Alon and M.~Naor.
\newblock Coin-flipping games immune against linear-sized coalitions.
\newblock \emph{SIAM Journal on Computing}, pages 46--54, 1993.

\bibitem[Ambainis(2004)]{Ambainis04}
A.~Ambainis.
\newblock A new protocol and lower bounds for quantum coin flipping.
\newblock \emph{J. Comput. Syst. Sci.}, 68\penalty0 (2):\penalty0 398--416,
  2004.

\bibitem[Ambainis et~al.(2004)Ambainis, Buhrman, Dodis, and
  R{\"o}hrig]{AmbainisBDR04}
A.~Ambainis, H.~Buhrman, Y.~Dodis, and H.~R{\"o}hrig.
\newblock Multiparty quantum coin flipping.
\newblock In \emph{Proceedings of the 18th Annual IEEE Conference on
  Computational Complexity}, pages 250--259, 2004.

\bibitem[Asharov(2014)]{Ash14}
G.~Asharov.
\newblock Towards characterizing complete fairness in secure two-party
  computation.
\newblock In \emph{Theory of Cryptography - 11th Theory of Cryptography
  Conference, {TCC} 2014, San Diego, CA, USA, February 24-26, 2014.
  Proceedings}, pages 291--316, 2014.
\newblock \doi{10.1007/978-3-642-54242-8_13}.

\bibitem[Asharov et~al.(2015)Asharov, Beimel, Makriyannis, and Omri]{ABMO15}
G.~Asharov, A.~Beimel, N.~Makriyannis, and E.~Omri.
\newblock Complete characterization of fairness in secure two-party computation
  of boolean functions.
\newblock In \emph{Theory of Cryptography - 12th Theory of Cryptography
  Conference, {TCC} 2015, Warsaw, Poland, March 23-25, 2015, Proceedings, Part
  {I}}, pages 199--228, 2015.

\bibitem[Awerbuch et~al.(1985)Awerbuch, Blum, Chor, Goldwasser, and
  Micali]{AwerbuchBCGM1985}
B.~Awerbuch, M.~Blum, B.~Chor, S.~Goldwasser, and S.~Micali.
\newblock How to implement bracha’s o (log n) byzantine agreement algorithm.
\newblock unpublished, 1985.

\bibitem[Beaver et~al.(1990)Beaver, Micali, and Rogaway]{BMR90}
D.~Beaver, S.~Micali, and P.~Rogaway.
\newblock The round complexity of secure protocols.
\newblock In \emph{Proceedings of the 22nd Annual ACM Symposium on Theory of
  Computing (STOC)}, pages 503--513, 1990.

\bibitem[Beimel et~al.(2010)Beimel, Omri, and Orlov]{BeimelOO10}
A.~Beimel, E.~Omri, and I.~Orlov.
\newblock Protocols for multiparty coin toss with dishonest majority.
\newblock In \emph{Advances in Cryptology -- CRYPTO 2010}, volume 6223, pages
  538--557, 2010.

\bibitem[Beimel et~al.(2011)Beimel, Lindell, Omri, and Orlov]{BeimelLOO11}
A.~Beimel, Y.~Lindell, E.~Omri, and I.~Orlov.
\newblock 1/{\it p}-secure multiparty computation without honest majority and
  the best of both worlds.
\newblock In \emph{Advances in Cryptology -- CRYPTO 2011}, pages 277--296,
  2011.

\bibitem[Beimel et~al.(2017)Beimel, Haitner, Makriyannis, and Omri]{BHMO17}
A.~Beimel, I.~Haitner, N.~Makriyannis, and E.~Omri.
\newblock Tighter bounds on multi-party coin flipping, via augmented weak
  martingales and differentially private sampling.
\newblock Technical Report TR17-168, Electronic Colloquium on Computational
  Complexity, 2017.

\bibitem[Ben-Or and Linial(1989)]{BenLin89}
M.~Ben-Or and N.~Linial.
\newblock Collective coin flipping.
\newblock \emph{ADVCR: Advances in Computing Research}, 5, 1989.

\bibitem[{Ben-Or} et~al.(1988){Ben-Or}, Goldwasser, and Wigderson]{BenGolWig88}
M.~{Ben-Or}, S.~Goldwasser, and A.~Wigderson.
\newblock Completeness theorems for non-cryptographic fault-tolerant
  distributed computation.
\newblock In \emph{Proceedings of the 20th Annual ACM Symposium on Theory of
  Computing (STOC)}, 1988.

\bibitem[Berman et~al.(2014)Berman, Haitner, and Tentes]{BermanHT14}
I.~Berman, I.~Haitner, and A.~Tentes.
\newblock Coin flipping of any constant bias implies one-way functions.
\newblock In \emph{Proceedings of the 46th Annual ACM Symposium on Theory of
  Computing (STOC)}, pages 817--836, 2014.

\bibitem[Blum(1983)]{Blum83}
M.~Blum.
\newblock How to exchange (secret) keys.
\newblock \emph{ACM Transactions on Computer Systems}, 1983.

\bibitem[Canetti(2000)]{Canetti00}
R.~Canetti.
\newblock Security and composition of multiparty cryptographic protocols.
\newblock \emph{Journal of Cryptology}, 13\penalty0 (1):\penalty0 143--202,
  2000.

\bibitem[Cleve(1986)]{Cleve86}
R.~Cleve.
\newblock Limits on the security of coin flips when half the processors are
  faulty.
\newblock In \emph{Proceedings of the 20th Annual ACM Symposium on Theory of
  Computing (STOC)}, pages 364--369, 1986.

\bibitem[Cleve and Impagliazzo(1993)]{CleveI93}
R.~Cleve and R.~Impagliazzo.
\newblock Martingales, collective coin flipping and discrete control processes.
\newblock Manuscript, 1993.
\newblock URL
  \url{https://pdfs.semanticscholar.org/7c7f/244d2ef064d75b3d23c88472ee1226461695.pdf}.

\bibitem[Cohen et~al.(2016)Cohen, Haitner, Omri, and Rotem]{CohenHOR2016}
R.~Cohen, I.~Haitner, E.~Omri, and L.~Rotem.
\newblock Characterization of secure multiparty computation without broadcast.
\newblock In \emph{Theory of Cryptography, 13th Theory of Cryptography
  Conference, TCC 2016a}, pages 596--616, 2016.

\bibitem[Dachman-Soled et~al.(2011)Dachman-Soled, Lindell, Mahmoody, and
  Malkin]{DachmanLMM11}
D.~Dachman-Soled, Y.~Lindell, M.~Mahmoody, and T.~Malkin.
\newblock On the black-box complexity of optimally-fair coin tossing.
\newblock In \emph{tcc11}, pages 450--467, 2011.

\bibitem[Even et~al.(1985)Even, Goldreich, and Lempel]{EvenGL85}
S.~Even, O.~Goldreich, and A.~Lempel.
\newblock A randomized protocol for signing contracts.
\newblock \emph{Communications of the ACM}, 28\penalty0 (6):\penalty0 637--647,
  1985.

\bibitem[Feige(1999)]{Feige99a}
U.~Feige.
\newblock Noncryptographic selection protocols.
\newblock In \emph{Proceedings of the 40th Annual Symposium on Foundations of
  Computer Science (FOCS)}, 1999.

\bibitem[Ferguson(2006)]{Ferguson06}
T.~S. Ferguson.
\newblock Optimal stopping and applications ({O}nline book), 2006.
\newblock \url{www.math.ucla.edu/~tom/Stopping/contents.html}.

\bibitem[Gentry et~al.(2008)Gentry, Peikert, and Vaikuntanathan]{GentryPV2008}
C.~Gentry, C.~Peikert, and V.~Vaikuntanathan.
\newblock Trapdoors for hard lattices and new cryptographic constructions.
\newblock In \emph{Proceedings of the 40th Annual ACM Symposium on Theory of
  Computing (STOC)}, pages 197--206, 2008.

\bibitem[Goldreich(2004)]{Goldreich04}
O.~Goldreich.
\newblock \emph{Foundations of Cryptography -- VOLUME 2: Basic Applications}.
\newblock Cambridge University Press, 2004.

\bibitem[Goldreich et~al.(1987)Goldreich, Micali, and
  Wigderson]{GoldreichMiWi87}
O.~Goldreich, S.~Micali, and A.~Wigderson.
\newblock How to play any mental game or a completeness theorem for protocols
  with honest majority.
\newblock In \emph{Proceedings of the 19th Annual ACM Symposium on Theory of
  Computing (STOC)}, pages 218--229, 1987.

\bibitem[Goldwasser et~al.(2015)Goldwasser, Kalai, and Park]{GoldwasserKP15}
S.~Goldwasser, Y.~T. Kalai, and S.~Park.
\newblock Adaptively secure coin-flipping, revisited.
\newblock In \emph{Automata, Languages, and Programming - 42nd International
  Colloquium, {ICALP} 2015}, volume 9135, pages 663--674, 2015.

\bibitem[Gordon and Katz(2010)]{GordonKatz10}
S.~D. Gordon and J.~Katz.
\newblock Partial fairness in secure two-party computation.
\newblock In \emph{Advances in Cryptology -- EUROCRYPT 2011}, pages 157--176,
  2010.

\bibitem[Gordon et~al.(2011)Gordon, Hazay, Katz, and Lindell]{gordonHKL11}
S.~D. Gordon, C.~Hazay, J.~Katz, and Y.~Lindell.
\newblock Complete fairness in secure two-party computation.
\newblock \emph{Journal of the ACM}, 58\penalty0 (6):\penalty0 24, 2011.

\bibitem[Haitner(2004)]{Haitner04}
I.~Haitner.
\newblock Implementing oblivious transfer using collection of dense trapdoor
  permutations.
\newblock In \emph{Theory of Cryptography, First Theory of Cryptography
  Conference, TCC 2004}, pages 394--409, 2004.

\bibitem[Haitner and Karidi{-}Heller(2020)]{HaitnerK20}
I.~Haitner and Y.~Karidi{-}Heller.
\newblock A tight lower bound on adaptively secure full-information coin flip.
\newblock In \emph{61st {IEEE} Annual Symposium on Foundations of Computer
  Science, {FOCS} 2020}, pages 1268--1276, 2020.

\bibitem[Haitner and Omri(2011)]{HaitnerOmri11}
I.~Haitner and E.~Omri.
\newblock {Coin Flipping with Constant Bias Implies One-Way Functions}.
\newblock In \emph{Proceedings of the 52nd Annual Symposium on Foundations of
  Computer Science (FOCS)}, pages 110--119, 2011.

\bibitem[Haitner and Tsfadia(2017)]{HaitnerT17}
I.~Haitner and E.~Tsfadia.
\newblock An almost-optimally fair three-party coin-flipping protocol.
\newblock \emph{SIAM Journal on Computing}, 46\penalty0 (2):\penalty0 479--542,
  2017.

\bibitem[Haitner et~al.(2009{\natexlab{a}})Haitner, Nguyen, Ong, Reingold, and
  Vadhan]{HNORV08}
I.~Haitner, M.~Nguyen, S.~J. Ong, O.~Reingold, and S.~Vadhan.
\newblock Statistically hiding commitments and statistical zero-knowledge
  arguments from any one-way function.
\newblock \emph{SIAM Journal on Computing}, pages 1153--1218,
  2009{\natexlab{a}}.

\bibitem[Haitner et~al.(2009{\natexlab{b}})Haitner, Reingold, Vadhan, and
  Wee]{HaitnerReVaWe09}
I.~Haitner, O.~Reingold, S.~Vadhan, and H.~Wee.
\newblock Inaccessible entropy.
\newblock In \emph{Proceedings of the 41st Annual ACM Symposium on Theory of
  Computing (STOC)}, pages 611--620, 2009{\natexlab{b}}.

\bibitem[Haitner et~al.(2018)Haitner, Makriyannis, and Omri]{HaitnerMO18}
I.~Haitner, N.~Makriyannis, and E.~Omri.
\newblock On the complexity of fair coin flipping.
\newblock
  \url{www.cs.tau.ac.il/~iftachh/papers/CFtoKA/TwoPartyCoinFlipToKA.pdf}, 2018.
\newblock Manuscript.

\bibitem[H{\aa}stad et~al.(1999)H{\aa}stad, Impagliazzo, Levin, and
  Luby]{HastadImLeLu99}
J.~H{\aa}stad, R.~Impagliazzo, L.~A. Levin, and M.~Luby.
\newblock A pseudorandom generator from any one-way function.
\newblock \emph{SIAM Journal on Computing}, pages 1364--1396, 1999.

\bibitem[Hoeffding(1963)]{Hoeffding1963}
W.~Hoeffding.
\newblock Probability inequalities for sums of bounded random variables, 1963.

\bibitem[Impagliazzo and Luby(1989)]{ImpagliazzoLu89}
R.~Impagliazzo and M.~Luby.
\newblock One-way functions are essential for complexity based cryptography.
\newblock In \emph{Proceedings of the 30th Annual Symposium on Foundations of
  Computer Science (FOCS)}, pages 230--235, 1989.

\bibitem[Kalai et~al.(2018)Kalai, Komargodski, and Raz]{kalai2018a}
T.~Y. Kalai, I.~Komargodski, and R.~Raz.
\newblock A lower bound for adaptively-secure collective coin-flipping
  protocols.
\newblock \emph{DISC}, pages 34:1--34:16, 2018.

\bibitem[Kalai(2005)]{Kalai05}
Y.~Kalai.
\newblock Smooth projective hashing and two-message oblivious transfer.
\newblock In \emph{Advances in Cryptology -- EUROCRYPT 2005}, 2005.

\bibitem[Katz(2007)]{Katz07}
J.~Katz.
\newblock On achieving the ``best of both worlds'' in secure multiparty
  computation.
\newblock In \emph{Proceedings of the 37th Annual ACM Symposium on Theory of
  Computing (STOC)}, pages 11--20, 2007.

\bibitem[Maji and Wang(2020)]{MajiW20}
H.~K. Maji and M.~Wang.
\newblock Black-box use of one-way functions is useless for optimal fair
  coin-tossing.
\newblock In \emph{Advances in Cryptology - {CRYPTO} 2020}, volume 12171, pages
  593--617, 2020.

\bibitem[Maji et~al.(2010)Maji, Prabhakaran, and Sahai]{Maji10}
H.~K. Maji, M.~Prabhakaran, and A.~Sahai.
\newblock {On the Computational Complexity of Coin Flipping}.
\newblock In \emph{Proceedings of the 51th Annual Symposium on Foundations of
  Computer Science (FOCS)}, pages 613--622, 2010.

\bibitem[Moran and Naor(2005)]{MoranNaor05}
T.~Moran and M.~Naor.
\newblock Basing cryptographic protocols on tamper-evident seals.
\newblock In \emph{ICALP: Annual International Colloquium on Automata,
  Languages and Programming}, 2005.

\bibitem[Moran et~al.(2016)Moran, Naor, and Segev]{MoranNS09}
T.~Moran, M.~Naor, and G.~Segev.
\newblock An optimally fair coin toss.
\newblock \emph{Journal of Cryptology}, 29\penalty0 (3):\penalty0 491--513,
  2016.

\bibitem[Naor(1991)]{Naor91}
M.~Naor.
\newblock Bit commitment using pseudorandomness.
\newblock \emph{Journal of Cryptology}, pages 151--158, 1991.

\bibitem[Naor and Pinkas(2001)]{NaorPinkas01}
M.~Naor and B.~Pinkas.
\newblock Efficient oblivious transfer protocols.
\newblock In \emph{SODA}, pages 448--457, 2001.

\bibitem[Pass(2004)]{Pass:2004:BSM:1007352.1007393}
R.~Pass.
\newblock Bounded-concurrent secure multi-party computation with a dishonest
  majority.
\newblock In \emph{Proceedings of the Thirty-sixth Annual ACM Symposium on
  Theory of Computing}, STOC '04, pages 232--241, 2004.

\bibitem[Russell and Zuckerman(1999)]{RussellZuc98}
A.~Russell and D.~Zuckerman.
\newblock Perfect information leader election in {log{*} n + 0 (1)} rounds.
\newblock In \emph{Proceedings of the 39th Annual Symposium on Foundations of
  Computer Science (FOCS)}, pages 576--583, 1999.

\bibitem[Saks(1989)]{Saks89}
M.~Saks.
\newblock A robust noncryptographic protocol for collective coin flipping.
\newblock \emph{SIJDM: SIAM Journal on Discrete Mathematics}, 2, 1989.

\bibitem[Scala(2009)]{scala2009hypergeometric}
M.~Scala.
\newblock Hypergeometric tail inequalities: ending the insanity.
\newblock \emph{arXiv preprint arXiv:1311.5939}, 2009.

\end{thebibliography}
\appendix
\fi

\section{Missing Proofs}\label{sec:missinProofs}
This section contains missing proofs for statement given in \cref{sec:prelim:Binomial,sec:prelim:HypGeo}.

\subsection{Properties of Bell-Like Distributions}\label{app:missinProofs:DistProp}

This section proves useful properties of "bell-like" distributions, which in particular gives useful properties on the binomial and hypergeometric distributions.

Recall that for $a\in \R$ and $b\geq 0$,  $a\pm b$ denotes for the interval $[a-b,a+b]$, and that given sets $\cs_1,\ldots,\cs_k$ and $k$-input function $f$,  $f(\cs_1,\ldots,\cs_k) = \set{f(x_1,\ldots,x_j) \colon x_i\in \cs_i}$, \eg $f(1\pm 0.1) = \set{f(x) \colon x\in [.9,1.1]}$.


\begin{definition}[bell-like distributions]\label{app:missinProofs:Dn:Properties}
	For $\rng \in \N$, $\vr \in [1,\rng]$, $\const > 0$ and $\xi > 0$, we say that a distribution $\Dst$ is a $(\rng,\vr,\const,\xi)$-bell-like distribution if
	\begin{enumerate}
		
		\item $\size{\mu} \leq \const\cdot \sqrt{\vr \log \vr}$ where $\mu \eqdef \ex{t\la \Dst}{t}$.\label{app:missinProofs:Dn:expBoundProp}
		
		\item $\ppr{t\la \Dst}{\size{t-\mu} \geq a} \leq 2\cdot e^{-\frac{a^2}{2\vr}}$ [Hoeffding's Inequality].\label{app:missinProofs:Dn:hoeffdingProp}
		
		\item $\Dst(t) = 0$ for every $t \in \Z$ with $\frac{\rng+t}{2} \notin (\rng)$.\label{app:missinProofs:Dn:zeroProp}
		
		\item $\Dst(t) \in (1 \pm \xi \cdot \frac{\log^{1.5}\vr}{\sqrt{\vr}})\cdot \sqrt{\frac2{\pi}}\cdot \frac1{\sqrt{\vr}}\cdot e^{-\frac{(t-\mu)^2}{2\vr}}$ for every $t \in \Z$ with $\size{t}\leq \const\cdot \sqrt{\vr\log \vr}$ and $\frac{\rng+t}{2} \in (\rng)$.\label{app:missinProofs:Dn:estimationProp}
	\end{enumerate}
\end{definition}

In the following, let $\rng \in \N$, $\vr \in [1,\rng]$, $\const \geq 1$ and $\xi > 0$ and let $\Dst$ be a $(\rng,\vr,\const,\xi)$-bell-like distribution with (according to \cref{app:missinProofs:Dn:Properties}) and let $\mu \eqdef \ex{t\la \Dst}{t}$.
In the following, we make some observations regards $\Dst$.

Recall that the function $\Phi\colon \R \mapsto (0,1)$ defined as $\Phi(x) \eqdef \frac{1}{\sqrt{2\pi}}\int_{x}^{\infty}e^{-\frac{t^2}{2}}dt$ is the cumulative distribution function of the standard normal distribution.

\begin{fact}[\cite{AbramowitzS64}]\label{app:missinProofs:normalBound}
	For $x \geq 0$ it holds that
	\begin{align*}
	\sqrt{\frac{2}{\pi}} \cdot \frac{e^{-\frac{x^2}{2}}}{x + \sqrt{x^2 + 4}} \leq
	\Phi(x) \leq \sqrt{\frac{2}{\pi}}\cdot \frac{e^{-\frac{x^2}{2}}}{x + \sqrt{x^2 + \frac{8}{\pi}}}.
	\end{align*}
\end{fact}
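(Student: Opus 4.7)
Let $g(x) \eqdef \int_x^\infty e^{-t^2/2}\,dt$, so that the two stated inequalities are equivalent (upon multiplying through by $\sqrt{2\pi}$) to
$$\frac{2\, e^{-x^2/2}}{x + \sqrt{x^2+4}} \;\leq\; g(x) \;\leq\; \frac{2\, e^{-x^2/2}}{x + \sqrt{x^2 + 8/\pi}}, \qquad x \geq 0.$$
The plan is to set $u_c(x) \eqdef 2\, e^{-x^2/2}/(x + \sqrt{x^2+c})$ and to compare $g$ against $u_4$ (for the lower bound) and against $u_{8/\pi}$ (for the upper bound) via an elementary derivative argument. Using $(\sqrt{x^2+c})' = x/\sqrt{x^2+c}$, a short calculation gives $u_c'(x) = -u_c(x)\bigl(x + 1/\sqrt{x^2+c}\bigr)$, which together with $g'(x) = -e^{-x^2/2}$ yields
$$(g - u_c)'(x) \;=\; e^{-x^2/2}\cdot\frac{N_c(x)}{\sqrt{x^2+c}\,(x+\sqrt{x^2+c})}, \qquad N_c(x) \eqdef x\sqrt{x^2+c} - (x^2 + c) + 2.$$
Since the denominator is positive on $[0,\infty)$, the sign of $(g-u_c)'$ is the sign of $N_c$.

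For the lower bound ($c=4$), I would show $N_4(x) < 0$ on $[0,\infty)$ by comparing squares: $(x\sqrt{x^2+4})^2 = x^4 + 4x^2 < x^4 + 4x^2 + 4 = (x^2+2)^2$, so $x\sqrt{x^2+4} < x^2+2$ and hence $N_4(x) < 0$. Thus $g - u_4$ is strictly decreasing on $[0,\infty)$, and combined with $(g-u_4)(\infty) = 0$ this gives $g(x) \geq u_4(x)$ throughout $[0,\infty)$.

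For the upper bound ($c=8/\pi$), the key preliminary computation is $u_{8/\pi}(0) = 2/\sqrt{8/\pi} = \sqrt{\pi/2} = g(0)$, so both boundary values $(g - u_{8/\pi})(0)$ and $(g-u_{8/\pi})(\infty)$ vanish. Next I would analyze $N_{8/\pi}$: it satisfies $N_{8/\pi}(0) = 2 - 8/\pi < 0$ and $N_{8/\pi}(x) \to 2 - 4/\pi > 0$ as $x\to\infty$ (using $x\sqrt{x^2+c} = x^2 + c/2 + O(1/x^2)$), and a direct computation of $N_c'$ together with the identity $(2x^2+c)^2 - (2x\sqrt{x^2+c})^2 = c^2 > 0$ shows that $N_c'(x) > 0$, so $N_{8/\pi}$ is strictly increasing. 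Hence $N_{8/\pi}$ has a unique zero $x_0 \in (0,\infty)$, so $g - u_{8/\pi}$ decreases on $[0,x_0]$ and increases on $[x_0,\infty)$. Since it starts and ends at zero, it is non-positive on $[0,\infty)$, giving $g(x) \leq u_{8/\pi}(x)$.

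The main obstacle is the upper bound: unlike the lower bound it does not follow from monotonicity alone, because $(g-u_{8/\pi})'$ changes sign. Instead one must exploit both boundary identities together with the sign-change structure of $N_{8/\pi}$. The constant $8/\pi$ is precisely the value that equalizes $u_c(0)$ with $g(0) = \sqrt{\pi/2}$, which is the ``magic'' choice that makes the boundary argument close up; any smaller $c$ would make $u_c(0) > g(0)$ and the bound would hold trivially near $0$ but one would need a different argument elsewhere, and any larger $c$ would fail at $x=0$ altogether.
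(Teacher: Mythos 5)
The paper does not prove this fact; it is stated with a citation to Abramowitz and Stegun (these are classical Mills-ratio bounds, often attributed to Komatsu). There is therefore no in-paper argument to compare against, and the task reduces to checking whether your proof is correct. It is. Setting $g(x)=\int_x^\infty e^{-t^2/2}\,dt$ and $u_c(x)=2e^{-x^2/2}/(x+\sqrt{x^2+c})$, your formula $u_c'(x)=-u_c(x)\bigl(x+1/\sqrt{x^2+c}\bigr)$ checks out, and the resulting expression $(g-u_c)'(x)=e^{-x^2/2}\,N_c(x)/\bigl(\sqrt{x^2+c}\,(x+\sqrt{x^2+c})\bigr)$ with $N_c(x)=x\sqrt{x^2+c}-(x^2+c)+2$ is algebraically correct, and the denominator is positive on $[0,\infty)$ so the sign analysis reduces correctly to the sign of $N_c$. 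For $c=4$ the comparison $x^2(x^2+4)<(x^2+2)^2$ gives $N_4<0$ everywhere, so $g-u_4$ is strictly decreasing and vanishes at infinity, hence nonnegative. For $c=8/\pi$ you correctly note the two boundary identities $u_{8/\pi}(0)=\sqrt{\pi/2}=g(0)$ and $(g-u_{8/\pi})(\infty)=0$, and the monotonicity of $N_{8/\pi}$ (via $N_c'(x)=\bigl(2x^2+c\bigr)/\sqrt{x^2+c}-2x$ and $(2x^2+c)^2-(2x\sqrt{x^2+c})^2=c^2>0$) gives the single-sign-change structure needed to conclude $g-u_{8/\pi}\le 0$. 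This is a clean, self-contained elementary proof of the cited bound.
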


\begin{proposition}\label{app:missinProofs:estimateSumWithIntegral}
	Let $\vr \in \N$, $\mu \in \Z$ and $k,\ell \in \Z$ be such that $\ell \geq k \geq \frac{\mu}{2}$. Then
	\begin{align*}
	\abs{\sum_{t=k}^{\ell}e^{-\frac{(2t-\mu)^2}{2\vr}} - \int_{k}^{\ell} e^{-\frac{(2t-\mu)^2}{2\vr}}dt}
	\leq e^{-\frac{(2k - \mu)^2}{2\vr}}.
	\end{align*}
\end{proposition}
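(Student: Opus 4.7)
The plan is to exploit the monotonicity of the integrand $f(t) = e^{-(2t-\mu)^2/(2\vr)}$ on the interval $[k,\ell]$. The hypothesis $k \geq \mu/2$ guarantees that $2t - \mu \geq 2k - \mu \geq 0$ for all $t \geq k$, so $(2t-\mu)^2$ is monotonically increasing in $t$ on $[k,\ell]$, and consequently $f$ is monotonically decreasing there. This is the only substantive structural fact needed; the rest is a standard sum-vs-integral comparison for a monotone function.

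First I would record the monotonicity observation formally, and then apply the elementary bound $f(i+1) \leq \int_i^{i+1} f(s)\,ds \leq f(i)$ that holds for any non-increasing $f$ on $[i,i+1]$. Summing this inequality over $i=k,k+1,\ldots,\ell-1$ yields the sandwich
\begin{align*}
\sum_{t=k+1}^{\ell} f(t) \;\leq\; \int_{k}^{\ell} f(s)\,ds \;\leq\; \sum_{t=k}^{\ell-1} f(t).
\end{align*}
Subtracting $\sum_{t=k}^{\ell} f(t)$ from all three terms then gives the two one-sided bounds
\begin{align*}
-f(k) \;\leq\; \int_{k}^{\ell} f(s)\,ds - \sum_{t=k}^{\ell} f(t) \;\leq\; -f(\ell) \;\leq\; 0,
\end{align*}
so the absolute value of the difference is at most $f(k) = e^{-(2k-\mu)^2/(2\vr)}$, which is precisely the claimed bound.

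I do not anticipate a genuine obstacle: the argument is a two-line monotonicity comparison once one checks that $k \geq \mu/2$ forces $f$ to be decreasing on $[k,\ell]$. The only minor point worth care is to make sure the comparison is written with the correct endpoints so that the excess term which survives is $f(k)$ (rather than $f(\ell)$, which would also be a valid but weaker bound here). No further tools from the preceding sections are required.
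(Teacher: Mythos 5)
Your proof is correct: the monotonicity check (valid since $k \geq \mu/2$ forces $2t-\mu\geq 0$ on $[k,\ell]$, so $f$ is non-increasing there), the per-unit comparison $f(i{+}1)\leq\int_i^{i+1}f\leq f(i)$, and the resulting sandwich $-f(k)\leq\int_k^\ell f-\sum_{t=k}^\ell f(t)\leq -f(\ell)\leq 0$ all hold, yielding the bound $f(k)=e^{-(2k-\mu)^2/(2\vr)}$. The paper itself does not give a proof for this proposition (it only cites \cite{HaitnerT17}), but the argument you give is the standard monotone sum-versus-integral comparison that this bound calls for.
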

\begin{proof}
	See \cite{HaitnerT17}.
\end{proof}

The following proposition states the connection between a bell-like distribution and the normal distribution.

\begin{proposition}\label{app:missinProofs:generalGameValueEstimation}
	For every $k\in \Z$ with $\size{k} < \const\cdot \sqrt{\vr\log{\vr}}$, it holds that
	\begin{align*}
	\vDst(k) \in \Phi(\frac{k - \mu}{\sqrt{\vr}}) \pm \error,
	\end{align*}
	where $\error = \varphi(\xi) \cdot \frac{\log^{1.5}\vr}{\sqrt{\vr}}\cdot e^{-\frac{(k-\mu)^2}{2\vr}}$ for $\varphi(\xi) = 4\xi+5$.
\end{proposition}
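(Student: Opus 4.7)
The plan is to reduce $\vDst(k) = \sum_{t \geq k} \Dst(t)$ to the Gaussian tail integral, by (i) using property \ref{app:missinProofs:Dn:estimationProp} to replace each $\Dst(t)$ by the density $G(t) \eqdef \sqrt{2/\pi}/\sqrt{\vr} \cdot e^{-(t-\mu)^2/(2\vr)}$ in a bulk window, (ii) using property \ref{app:missinProofs:Dn:hoeffdingProp} (Hoeffding) to discard the far tail, and (iii) using \cref{app:missinProofs:estimateSumWithIntegral} to replace the discrete sum (whose support has gap $2$, by property \ref{app:missinProofs:Dn:zeroProp}) by the corresponding integral. By symmetry (working with $1-\vDst(k)=\sum_{t<k}\Dst(t)$ when $k<\mu$), I may assume $k\geq \mu$.

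First I would fix a constant $\const'=\const'(\const)$ large enough (e.g. $\const'=2\const+2$) and split
$$\vDst(k) = \sum_{\substack{k\leq t\leq \mu+\const'\sqrt{\vr\log\vr}\\ (\rng+t)/2\in(\rng)}} \Dst(t) \;+\; \sum_{t>\mu+\const'\sqrt{\vr\log\vr}} \Dst(t).$$
By property \ref{app:missinProofs:Dn:hoeffdingProp}, the far-tail summand is at most $2\vr^{-(\const')^2/2}$, and since $|k-\mu|\leq 2\const\sqrt{\vr\log\vr}$ gives $e^{-(k-\mu)^2/(2\vr)}\geq \vr^{-2\const^2}$, the choice of $\const'$ ensures this is absorbed into the target error $\varphi(\xi)\cdot \tfrac{\log^{1.5}\vr}{\sqrt{\vr}}\cdot e^{-(k-\mu)^2/(2\vr)}$. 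For the bulk sum, property \ref{app:missinProofs:Dn:estimationProp} gives $\Dst(t) \in (1\pm \xi \tfrac{\log^{1.5}\vr}{\sqrt{\vr}})\cdot G(t)$, reducing the analysis to estimating $\sum_{t} G(t)$ over the (parity-restricted) range.

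Next I would perform the substitution $t = 2s-\rng$ (so $s = (t+\rng)/2 \in (\rng)$) to turn the parity-restricted sum into a consecutive integer sum of the form $\sum_{s} e^{-(2s-(\rng+\mu))^2/(2\vr)}$, and apply \cref{app:missinProofs:estimateSumWithIntegral} to replace it by $\tfrac{1}{2}\int_k^{\mu+\const'\sqrt{\vr\log\vr}} e^{-(t-\mu)^2/(2\vr)}\,dt$ with additive error at most $e^{-(k-\mu)^2/(2\vr)}$ (the factor $\tfrac12$ coming from the parity gap). The substitution $u=(t-\mu)/\sqrt{\vr}$ then evaluates the integral as $\sqrt{2\pi\vr}\cdot\Phi\!\left(\tfrac{k-\mu}{\sqrt{\vr}}\right)$ (up to the negligible Hoeffding tail of the integral), and multiplication by the normalization factor $\sqrt{2/\pi}/\sqrt{\vr}$ produces exactly $\Phi\!\left(\tfrac{k-\mu}{\sqrt{\vr}}\right)$ as the main term.

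Finally I would collect the errors: the multiplicative error from property \ref{app:missinProofs:Dn:estimationProp} contributes $\xi\tfrac{\log^{1.5}\vr}{\sqrt{\vr}}\cdot \Phi\!\left(\tfrac{k-\mu}{\sqrt{\vr}}\right)$, which by the upper bound of \cref{app:missinProofs:normalBound} (applied to $x=(k-\mu)/\sqrt{\vr}\geq 0$) is at most $\xi\tfrac{\log^{1.5}\vr}{\sqrt{\vr}}\cdot e^{-(k-\mu)^2/(2\vr)}$; the sum-to-integral error from \cref{app:missinProofs:estimateSumWithIntegral}, once multiplied by $\sqrt{2/\pi}/\sqrt{\vr}$, contributes $O(\vr^{-1/2})\cdot e^{-(k-\mu)^2/(2\vr)}$; and the Hoeffding tail is absorbed as above. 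Summing these yields an additive error of the form $\varphi(\xi)\cdot\tfrac{\log^{1.5}\vr}{\sqrt{\vr}}\cdot e^{-(k-\mu)^2/(2\vr)}$ with $\varphi(\xi)\leq 4\xi+5$. The main obstacle is the careful bookkeeping in the case $k<\mu$, where $\Phi((k-\mu)/\sqrt{\vr})$ is close to $1$ and \cref{app:missinProofs:normalBound} no longer gives $\Phi(x)\leq e^{-x^2/2}$; I would handle this by dualizing to $1-\vDst(k)$, whose tail lies above $\mu$ and to which the above argument applies verbatim (noting that $e^{-(k-\mu)^2/(2\vr)}$ is symmetric in $k-\mu$).
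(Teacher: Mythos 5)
Your proposal is correct and follows essentially the same route as the paper: truncate via Hoeffding, replace $\Dst$ by the Gaussian density on the bulk, apply the sum-to-integral lemma \cref{app:missinProofs:estimateSumWithIntegral} after absorbing the parity gap, and invoke \cref{app:missinProofs:normalBound} to bound the residual terms, then reduce $k<\mu$ to $k\geq\mu$ by passing to $-\Dst$. The one place the paper is slightly more explicit is the $k<\mu$ reduction, where it writes $\vDst(k)=1-\vNegDst(-k)+\Dst(k)$ and handles the extra $\Dst(k)$ term separately, rather than appealing to "applies verbatim"; this is exactly the bookkeeping you flagged and does not change the argument.
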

\begin{proof}
	Assume for simplicity that $\rng$ and $k$ are both even, where the proofs of the other cases are analogous. Let $\ell = \ell(\const, \vr) \eqdef 4\cdot \ceil{\const\sqrt{\vr\log \vr}}< 5\const\cdot \sqrt{\vr\log \vr}$. We start by handling the case $k \geq \mu$. It holds that
	\begin{align}\label{app:missinProofs:generalGameValueEstimation:1}
	\sum_{t=k}^{\ell} \Dst(t)
	&= \sum_{t=\frac{k}{2}}^{\frac{\ell}{2}} \Dst(2t)\\
	&\in \sum_{t=\frac{k}{2}}^{\frac{\ell}{2}} \sqrt{\frac{2}{\pi}}(1 \pm \xi \cdot \frac{\log^{1.5}\vr}{\sqrt{\vr}}) \cdot \frac{1}{\sqrt{\vr}} \cdot e^{-\frac{(2t-\mu)^2}{2\vr}}\nonumber\\
	&\subseteq (1 \pm \xi \cdot \frac{\log^{1.5}\vr}{\sqrt{\vr}}) \cdot A(\vr,k,\const), \nonumber
	\end{align}
	letting $A(\vr,k,\const) \eqdef \sum_{t=\frac{k}{2}}^{\frac{\ell}{2}}  \sqrt{\frac{2}{\pi}} \cdot\frac{1}{\sqrt{\vr}} \cdot e^{-\frac{(2t-\mu)^2}{2\vr}}$. The first transition holds by property \ref{app:missinProofs:Dn:zeroProp} of $\Dst$ and the second one by property \ref{app:missinProofs:Dn:estimationProp} of $\Dst$.
	
	Compute
	\begin{align}\label{app:missinProofs:generalGameValueEstimation:2}
	A(\vr,k,\const)
	&= \sum_{t=\frac{k}{2}}^{\frac{\ell}{2}}  \sqrt{\frac{2}{\pi}} \cdot\frac{1}{\sqrt{\vr}} \cdot e^{-\frac{(2t-\mu)^2}{2\vr}}\\
	&\in \int_{\frac{k}{2}}^{\frac{\ell}{2}} \sqrt{\frac{2}{\pi}} \cdot\frac{1}{\sqrt{\vr}} \cdot e^{-\frac{(2t-\mu)^2}{2\vr}}dt \pm \frac{1}{\sqrt{\vr}}\cdot e^{-\frac{(k-\mu)^2}{2\vr}}\nonumber\\
	&= \int_{\frac{k-\mu}{\sqrt{\vr}}}^{\frac{\ell - \mu}{\sqrt{\vr}}} \frac{1}{\sqrt{2\pi}} \cdot e^{-\frac{x^2}{2}}dx \pm \frac{1}{\sqrt{\vr}}\cdot e^{-\frac{(k-\mu)^2}{2\vr}}\nonumber\\
	&= \Phi(\frac{k-\mu}{\sqrt{\vr}}) - \Phi(\frac{\ell - \mu}{\sqrt{\vr}}) \pm \frac{1}{\sqrt{\vr}}\cdot e^{-\frac{(k-\mu)^2}{2\vr}}\nonumber\\
	&\subseteq \Phi(\frac{k-\mu}{\sqrt{\vr}}) \pm \frac{1}{\vr^{4\const^2}} \pm \frac{1}{\sqrt{\vr}}\cdot e^{-\frac{(k-\mu)^2}{2\vr}}\nonumber\\
	&\subseteq \Phi(\frac{k-\mu}{\sqrt{\vr}}) \pm \frac{2}{\sqrt{\vr}}\cdot e^{-\frac{(k-\mu)^2}{2\vr}},\nonumber
	\end{align}
	where the second transition holds by \cref{app:missinProofs:estimateSumWithIntegral} (and since $k \geq \mu$), the third one holds by letting $x = \frac{2t-\mu}{\sqrt{\vr}}$, the fifth one holds by \cref{app:missinProofs:normalBound} together with property \ref{app:missinProofs:Dn:expBoundProp} of $\Dst$ which yields that $\Phi(\frac{\ell -\mu}{\sqrt{\vr}}) \leq \Phi(3\const\sqrt{\log \vr}) \leq \frac{1}{\vr^{4\const^2}}$, and the last one holds since $\frac{1}{\sqrt{\vr}}\cdot e^{-\frac{(k-\mu)^2}{2\vr}} \geq \frac{1}{\vr^{2\const^2 + \frac{1}{2}}} \geq  \frac{1}{\vr^{4\const^2}}$.
	
	Applying \cref{app:missinProofs:generalGameValueEstimation:2} on \cref{app:missinProofs:generalGameValueEstimation:1} yields that
	\begin{align}
	\sum_{t=k}^{\ell} \Dst(t)\label{app:missinProofs:gameValueEst:1}
	&\in (1 \pm \xi \cdot \frac{\log^{1.5}\vr}{\sqrt{\vr}}) \cdot (\Phi(\frac{k-\mu}{\sqrt{\vr}}) \pm \frac{2}{\sqrt{\vr}}\cdot e^{-\frac{(k-\mu)^2}{2\vr}}) \\
	&= \Phi(\frac{k-\mu}{\sqrt{\vr}}) \pm \xi \cdot \frac{\log^{1.5}\vr}{\sqrt{\vr}} \cdot \Phi(\frac{k-\mu}{\sqrt{\vr}}) \pm
	2\cdot \xi \cdot \frac{\log^{1.5}\vr}{\vr}\cdot e^{-\frac{(k-\mu)^2}{2\vr}} \pm \frac{2}{\sqrt{\vr}}\cdot e^{-\frac{(k-\mu)^2}{2\vr}}\nonumber\\
	&\subseteq \Phi(\frac{k-\mu}{\sqrt{\vr}}) \pm (3\xi + 2)\cdot \frac{\log^{1.5}\vr}{\sqrt{\vr}}\cdot e^{-\frac{(k-\mu)^2}{2\vr}},\nonumber
	\end{align}
	We conclude that
	\begin{align}
	\vDst(k) &= \sum_{t=k}^{n} \Dst(t)\label{app:missinProofs:gameValueEst:2}\\
	&= \sum_{t=k}^{\ell} \Dst(t) + \ppr{x\la \Dst}{x > \ell}\nonumber\\
	&\in \sum_{t=k}^{\ell} \Dst(t)\pm \frac{2}{\vr^{4\const^2}}\nonumber\\
	&\subseteq \left(\Phi(\frac{k-\mu}{\sqrt{\vr}}) \pm (3\xi + 2)\cdot \frac{\log^{1.5}\vr}{\sqrt{\vr}}\cdot e^{-\frac{(k-\mu)^2}{2\vr}}\right)
	\pm \frac{2}{\vr^{4\const^2}}\nonumber\\
	&\subseteq \Phi(\frac{k-\mu}{\sqrt{\vr}}) \pm (3\xi + 4)\cdot \frac{\log^{1.5}\vr}{\sqrt{\vr}}\cdot e^{-\frac{(k-\mu)^2}{2\vr}},\nonumber
	\end{align}
	where the third transition holds by property \ref{app:missinProofs:Dn:hoeffdingProp} of $\Dst$ and the fourth one holds by \cref{app:missinProofs:gameValueEst:1}.
	It is left to handle the case $k < \mu$. For such $k$, it holds that
	\begin{align}
	\vDst(k) &= 1 - (\vNegDst)(-k) + \Dst(k)\\
	&\in 1 - (\vNegDst)(-k) + (1 \pm \xi \cdot \frac{\log^{1.5}\vr}{\sqrt{\vr}})\cdot \sqrt{\frac2{\pi}}\cdot \frac1{\sqrt{\vr}}\cdot e^{-\frac{(k-\mu)^2}{2\vr}}\nonumber\\
	&\subseteq \left(1 - \Phi(\frac{-k + \mu}{\sqrt{\vr}}) \pm (3\xi + 4)\cdot \frac{\log^{1.5}\vr}{\sqrt{\vr}}\cdot e^{-\frac{(k-\mu)^2}{2\vr}}\right)
	+ (1 \pm \xi \cdot \frac{\log^{1.5}\vr}{\sqrt{\vr}})\cdot \sqrt{\frac2{\pi}}\cdot \frac1{\sqrt{\vr}}\cdot e^{-\frac{(k-\mu)^2}{2\vr}}\nonumber\\
	&\subseteq \Phi(\frac{k - \mu}{\sqrt{\vr}}) \pm (4\xi + 5)\cdot \frac{\log^{1.5}\vr}{\sqrt{\vr}}\cdot e^{-\frac{(k-\mu)^2}{2\vr}},\nonumber
	\end{align}
	where the second transition holds by property \ref{app:missinProofs:Dn:estimationProp} of $\Dst$
	and the third one holds by \cref{app:missinProofs:gameValueEst:2} applied to $-\Dst$ and $-k$ (The distribution $-\Dst$, which defined as $-\Dst(t) = \Dst(-t)$, is also a $(\rng,\vr,c,\xi)$-bell-like distribution).
\end{proof}

\begin{proposition}\label{app:missinProofs:epsDiffImproved}
	Let $n \in \N$, $\delta \in [0,1]$ and $\const > 0$ be such that $\delta \in (\frac{1}{n^\const}, 1-\frac{1}{n^\const})$. Then,
	$$\sBias{n}{\delta} \in -\frac{\Phi^{-1}(\delta)}{\sqrt{n}} \pm \error$$
	for $\error = \varphi(\const) \cdot \frac{\log^{1.5} n}{n}$ and a universal function $\varphi$.
\end{proposition}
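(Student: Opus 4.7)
The plan is to set $\eps := \sBias{n}{\delta}$ and $\eps_0 := -\Phi^{-1}(\delta)/\sqrt{n}$, and then to bound $|\eps - \eps_0|$ by a sandwiching argument: show that $F(\eps_0 + h) > \delta > F(\eps_0 - h)$ for $h := K\log^{1.5}n/n$ and a sufficiently large $K = K(\const)$, where $F(\eps') := \vBeroo{n, \eps'}(0)$. Since $F$ is strictly monotonic in $\eps'$, this sandwich will force $|\eps - \eps_0| < h$, which is exactly the claimed bound after folding constants into $\varphi$. Before running the sandwich, I would first establish the crude a priori bounds $|\eps|, |\eps_0| \le \const_1 \sqrt{\log n/n}$ for a constant $\const_1 = \const_1(\const)$: the bound on $\eps_0$ follows from $\delta \in (n^{-\const}, 1-n^{-\const})$ combined with \cref{app:missinProofs:normalBound}, which forces $|\Phi^{-1}(\delta)| \le O(\sqrt{\const \log n})$; the bound on $\eps$ follows because, by Hoeffding's inequality (\cref{claim:Hoeffding}), any $\eps$ with $|\eps| \ge 4\sqrt{(\const+1)\log n/n}$ would force $\vBeroo{n,\eps}(0)$ to within $n^{-\const-1}$ of $0$ or $1$, contradicting the hypothesis.

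For the sandwich, I would first verify that $\Beroo{n, \eps'}$ is an $(n, n, \const_2, \xi)$-bell-like distribution for every $|\eps'| \le \const_1 \sqrt{\log n/n}$, with $\const_2, \xi$ depending only on $\const$. Conditions \ref{app:missinProofs:Dn:expBoundProp}--\ref{app:missinProofs:Dn:zeroProp} of \cref{app:missinProofs:Dn:Properties} are immediate; condition \ref{app:missinProofs:Dn:estimationProp} follows from \cref{prop:binomProbEstimation} once one checks that the error term $\xi \cdot ((\eps')^2 |t| + 1/n + |t|^3/n^2 + (\eps')^4 n)$ is indeed $O(\log^{1.5}n/\sqrt{n})$ in our regime $|t| \le \const_2\sqrt{n\log n}$, $|\eps'| \le \const_1 \sqrt{\log n/n}$. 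Applying \cref{app:missinProofs:generalGameValueEstimation} with $k = 0$, $\mu = \eps' n$, $\vr = n$ then yields
\[F(\eps') \in \Phi(-\eps'\sqrt{n}) \pm \varphi_1(\const) \cdot \frac{\log^{1.5} n}{\sqrt{n}} \cdot e^{-(\eps')^2 n/2}\]
for a suitable $\varphi_1 = \varphi_1(\const)$.

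Writing $u_0 := \Phi^{-1}(\delta)$ so that $-\eps_0\sqrt{n} = u_0$ and $-(\eps_0 \pm h)\sqrt{n} = u_0 \mp h\sqrt{n}$, a first-order Taylor expansion of $\Phi$ around $u_0$ combined with $\Phi'(u) = -e^{-u^2/2}/\sqrt{2\pi}$ gives
\[\Phi\bigl(-(\eps_0 \pm h)\sqrt{n}\bigr) = \delta \pm \frac{h\sqrt{n}}{\sqrt{2\pi}}\, e^{-u_0^2/2} \; \pm \; O\!\bigl(h^2 n \sqrt{\log n}\bigr) \cdot e^{-u_0^2/2},\]
where I use $|\Phi''(u)| \le (|u|/\sqrt{2\pi})\,e^{-u^2/2}$ and $|u| \le O(\sqrt{\log n})$ on the relevant segment. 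Since $|u_0| \le O(\sqrt{\log n})$ and $h\sqrt{n} = K\log^{1.5}n/\sqrt{n} = o(1)$, the exponents $e^{-(\eps_0 \pm h)^2 n/2}$ and $e^{-u_0^2/2}$ agree up to a factor $1 \pm o_n(1)$. Plugging everything into the bell-like estimate for $F(\eps_0 \pm h)$, the linear term $\pm (K\log^{1.5}n)/\sqrt{2\pi n} \cdot e^{-u_0^2/2}$ dominates both the second-order Taylor error and the bell-like error $\varphi_1(\const) \log^{1.5}n/\sqrt{n} \cdot e^{-u_0^2/2}$ provided $K \ge 4\sqrt{2\pi}\,\varphi_1(\const)$, giving $F(\eps_0 + h) > \delta > F(\eps_0 - h)$ as required.

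The main obstacle will be the bookkeeping of the exponential factors: the bell-like error $e^{-(\eps')^2 n/2}$ is evaluated at the shifted point $\eps_0 \pm h$, the MVT remainder in the Taylor expansion of $\Phi$ depends on $|\Phi''|$ on the whole segment $[u_0 - h\sqrt{n},\, u_0 + h\sqrt{n}]$, and one must verify that the common factor $e^{-u_0^2/2}$ is shared (up to $1 \pm o_n(1)$) between the leading shift and both error terms. This works cleanly because the perturbation $h\sqrt{n} = K\log^{1.5}n/\sqrt{n}$ is much smaller than $1/|u_0|$ in our regime, so the multiplicative corrections stay in $1 \pm o_n(1)$; small-$n$ edge cases, where these $o_n(1)$'s are not yet below a fixed threshold, are absorbed into the universal function $\varphi(\const)$ guaranteed by the statement.
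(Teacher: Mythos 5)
The paper does not actually prove this proposition: its entire ``proof'' is a citation to \cite{HaitnerT17}, so there is no in-paper argument to compare against. Your self-contained derivation is correct and stays squarely inside the toolkit the paper develops for neighboring statements: the a~priori bound $|\eps_0|\le O(\sqrt{\const\log n/n})$ via \cref{app:missinProofs:normalBound}, the verification that $\Beroo{n,\eps'}$ is an $(n,n,\const_2,\xi)$-bell-like distribution via Hoeffding and \cref{prop:binomProbEstimation}, the normal approximation $F(\eps')\in\Phi(-\eps'\sqrt n)\pm \varphi_1(\const)\frac{\log^{1.5}n}{\sqrt n}e^{-(\eps')^2 n/2}$ from \cref{app:missinProofs:generalGameValueEstimation} with $k=0$, $\mu=\eps'n$, $\vr=n$, and finally the sandwich $F(\eps_0-h)<\delta<F(\eps_0+h)$ pinned down by strict monotonicity of $\eps'\mapsto\vBeroo{n,\eps'}(0)$. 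That sandwich step is structurally identical to the perturbation argument the paper itself carries out in \cref{app:missinProofs:generalPhiMinusOne}, where it shows $\Phi\bigl(\tfrac{k_0+\Delta}{\sqrt\vr}\bigr)\le\delta\le\Phi\bigl(\tfrac{k_0-\Delta}{\sqrt\vr}\bigr)$ using exactly the same interplay between the bell-like error and a mean-value lower bound on $\int e^{-t^2/2}\,dt$; so your proof is the one the authors would plausibly have written had they proved the statement inline rather than outsourcing it, and it even combines the two-step route the paper later takes (pass through $\Phi^{-1}$ via \cref{app:missinProofs:generalPhiMinusOne}, then invoke the cited bound) into a single direct sandwich on $\sBias{n}{\cdot}$. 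Two cosmetic remarks: the a~priori bound on $|\eps|=|\sBias{n}{\delta}|$ is not actually needed for the sandwich --- it falls out as a corollary once $|\eps-\eps_0|<h$ is established --- and you should quantify the bell-like verification over a slightly enlarged range such as $|\eps'|\le 2\const_1\sqrt{\log n/n}$ so that the two endpoints $\eps_0\pm h$ are explicitly covered. Neither affects correctness.
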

\begin{proof}
	See \cite{HaitnerT17}.
\end{proof}

\begin{proposition}\label{app:missinProofs:generalPhiMinusOne}
	Let $\delta = \vDst(k)$ for some $k\in \Z$ with $\size{k} < \const\cdot \sqrt{\vr\log{\vr}}$. Assuming $e^{-4\xi'(\xi'+\const)\cdot\frac{\log^3 \vr}{\sqrt{\vr}}} \geq \frac12$, it holds that
	\begin{align*}
	\Phi^{-1}(\delta) \in \frac{k - \mu}{\sqrt{\vr}} \pm \error,
	\end{align*}
	for $\error = (8\xi+10) \cdot \frac{\log^{1.5} \vr}{\sqrt{\vr}}$.
\end{proposition}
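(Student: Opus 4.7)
The plan is to invert the additive estimate of \cref{app:missinProofs:generalGameValueEstimation} via the Mean Value Theorem, treating the stated hypothesis as the quantitative guarantee that the intermediate point of the MVT stays close enough to $x^* \eqdef (k-\mu)/\sqrt{\vr}$. Set $y \eqdef \Phi^{-1}(\delta)$; the target bound becomes $|y - x^*| \leq (8\xi+10)\cdot \log^{1.5}\vr/\sqrt{\vr}$. From property~\ref{app:missinProofs:Dn:expBoundProp} of the bell-like distribution and the hypothesis $|k| < \const\sqrt{\vr\log\vr}$ one immediately obtains $|x^*| \leq 2\const\sqrt{\log \vr}$, which will be used throughout.

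First I would combine $\Phi(y) = \delta$ with the additive estimate $|\delta - \Phi(x^*)| \leq (4\xi+5)\cdot (\log^{1.5}\vr/\sqrt{\vr})\cdot e^{-{x^*}^2/2}$ and apply the Mean Value Theorem to $\Phi$ (recall $\Phi'(t) = -\phi(t)$ where $\phi(t) = \frac{1}{\sqrt{2\pi}} e^{-t^2/2}$) to produce some $\eta$ between $y$ and $x^*$ with
\[
|y - x^*| \;\leq\; \sqrt{2\pi}\,(4\xi+5)\cdot \frac{\log^{1.5}\vr}{\sqrt{\vr}}\cdot e^{(\eta^2 - {x^*}^2)/2}.
\]
Everything then reduces to showing that the exponential ratio is essentially constant. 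Writing $|\eta|\leq |x^*|+|y-x^*|$ gives $\eta^2 - {x^*}^2 \leq 2|x^*|\cdot|y-x^*| + (y-x^*)^2$, so it suffices to control $|y - x^*|$ on the relevant scale.

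This will be done by a two-step bootstrap. A crude first pass: combine \cref{app:missinProofs:normalBound} with the above estimate on $\delta$ to deduce $|y| = O(\sqrt{\log \vr})$, hence $|y - x^*| = O(\sqrt{\log\vr})$; plugging this into the displayed MVT bound collapses the exponent to $O(\log^2\vr/\sqrt{\vr})$, which yields the refined bound $|y - x^*| \leq O(\log^{1.5}\vr/\sqrt{\vr})$. Substituting this refined bound back, and using $|x^*| \leq 2\const\sqrt{\log\vr}$, the exponent $2|x^*|\cdot|y-x^*| + (y-x^*)^2$ is absorbed into the quantity $4\xi'(\xi'+\const)\cdot \log^3\vr/\sqrt{\vr}$ that appears in the stated hypothesis (with $\xi' = 4\xi+5$), so that hypothesis is exactly the guarantee that $e^{(\eta^2-{x^*}^2)/2} \leq 2$.

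The main obstacle will be tracking constants tightly enough to land at $(8\xi+10)$ rather than the $\sqrt{2\pi}\cdot(4\xi+5) \approx 10\xi+13$ one gets from a naive MVT estimate. The trick is to exploit the direction of $\eta$: the minimum of $\phi$ on the interval between $x^*$ and $y$ occurs at the endpoint of larger absolute value, so when $|y| \leq |x^*|$ (which is the generic case, and in particular holds in the bootstrap's favorable subcase) one has $\phi(\eta) \geq \phi(x^*)$ and the factor of $\sqrt{2\pi}$ immediately cancels against the $e^{-{x^*}^2/2} = \sqrt{2\pi}\,\phi(x^*)$ appearing in the error term, giving $|y-x^*| \leq (4\xi+5)\log^{1.5}\vr/\sqrt{\vr}$, well within the target. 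In the unfavorable subcase $|y| > |x^*|$, a careful application of the bootstrap together with the stated hypothesis yields $e^{(\eta^2-{x^*}^2)/2} \leq (8\xi+10)/\bigl(\sqrt{2\pi}(4\xi+5)\bigr)$, which is exactly what is needed to close the argument with the stated constant.
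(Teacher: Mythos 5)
Your MVT-plus-bootstrap route is genuinely different from the paper's argument, and it is worth comparing. The paper never appeals to the Mean Value Theorem: it picks the shift $\Delta = 2\xi'\log^{1.5}\vr$ \emph{up front}, and proves the two-sided bracket $\Phi\bigl(\tfrac{k_0+\Delta}{\sqrt{\vr}}\bigr) \le \delta \le \Phi\bigl(\tfrac{k_0-\Delta}{\sqrt{\vr}}\bigr)$ directly, by observing that the additive error term $\xi'\cdot\tfrac{\log^{1.5}\vr}{\sqrt{\vr}}\,e^{-k_0^2/(2\vr)}$ from \cref{app:missinProofs:generalGameValueEstimation} is dominated by the integral of $e^{-t^2/2}$ over the explicit interval $[\tfrac{k_0-\Delta}{\sqrt{\vr}},\tfrac{k_0}{\sqrt{\vr}}]$, whose length is $\Delta/\sqrt{\vr}$. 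The hypothesis $e^{-4\xi'(\xi'+\const)\log^3\vr/\sqrt{\vr}}\ge\tfrac12$ is used exactly once, to show the Gaussian does not drop by more than a factor $2$ across that interval. Because $\Delta$ is chosen first, the endpoints of the interval are explicit and there is nothing to bootstrap; the bound follows from monotonicity of $\Phi$ in one pass. Your version needs the bootstrap precisely because the MVT point $\eta$ lies between $x^*$ and the \emph{unknown} $y=\Phi^{-1}(\delta)$, so a crude preliminary bound on $|y-x^*|$ is unavoidable before you can control $\eta$.

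Beyond the structural difference, there is a concrete error in your constant tracking that prevents the argument from closing at the stated constant. In the favorable subcase $|y|\le|x^*|$ you write that "the factor of $\sqrt{2\pi}$ immediately cancels," but it does not cancel --- it multiplies. Indeed $\phi(\eta)\ge\phi(x^*)=\tfrac1{\sqrt{2\pi}}e^{-{x^*}^2/2}$ gives
\[
|y-x^*| \;\le\; \frac{(4\xi+5)\cdot\frac{\log^{1.5}\vr}{\sqrt{\vr}}\cdot e^{-{x^*}^2/2}}{\tfrac1{\sqrt{2\pi}}\,e^{-{x^*}^2/2}}
\;=\;\sqrt{2\pi}\,(4\xi+5)\cdot\frac{\log^{1.5}\vr}{\sqrt{\vr}},
\]
and $\sqrt{2\pi}(4\xi+5)\approx 10\xi+12.5 > 8\xi+10$, so even your best case overshoots the target. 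The unfavorable subcase is worse: you claim $e^{(\eta^2-{x^*}^2)/2}\le(8\xi+10)/\bigl(\sqrt{2\pi}(4\xi+5)\bigr)=2/\sqrt{2\pi}\approx 0.8$, but in that subcase $|\eta|$ can exceed $|x^*|$, making $\eta^2-{x^*}^2\ge 0$ and hence $e^{(\eta^2-{x^*}^2)/2}\ge 1$; the claimed inequality is impossible. What the hypothesis actually buys you, via the bootstrap and $|x^*|\le 2\const\sqrt{\log\vr}$, is $e^{(\eta^2-{x^*}^2)/2}\le 2$, which yields a valid bound of roughly $2\sqrt{2\pi}(4\xi+5)$, i.e.\ the right order of magnitude in $\log^{1.5}\vr/\sqrt{\vr}$ but a worse multiplicative constant than $8\xi+10$. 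So your approach does prove the qualitative statement, but the claim that it lands at the stated constant is unsupported; to hit $8\xi+10$ you would need the paper's direct shift-and-bracket argument (or a sharper replacement for the crude $\phi(\eta)\ge\phi(\max(|y|,|x^*|))$ lower bound).
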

\begin{proof}
	Let $\xi' = 4\xi + 5$, let $\Delta \eqdef 2\xi'\cdot \log^{1.5}\vr$ and let $k_0 \eqdef k-\mu$.
	
	We prove that $\Phi(\frac{k_0 + \Delta}{\sqrt{\vr}}) \leq \delta \leq \Phi(\frac{k_0 - \Delta}{\sqrt{\vr}})$, which yields the required bound since $\Phi$ is monotonic decreasing. We focus on the upper bound, whereas the lower bound can be proven analogously.
	Since
	\begin{align}\label{app:missinProofs:generalPhiMinusOne:1}
	\frac{\Delta}{\sqrt{\vr}} \cdot e^{-\frac{k_0^2}{2\vr}} \geq \xi' \cdot \frac{\log^{1.5}\vr}{\sqrt{\vr}}\cdot e^{-\frac{k_0^2}{2\vr}}
	\end{align}
	and
	\begin{align}\label{app:missinProofs:generalPhiMinusOne:2}
	\frac{\Delta}{\sqrt{\vr}} \cdot e^{-\frac{(k_0-\Delta)^2}{2\vr}}
	&= \frac{\Delta}{\sqrt{\vr}} \cdot e^{-\frac{k_0^2}{2\vr}} \cdot e^{\frac{2k_0\Delta - \Delta^2}{2\vr}}\\
	&\geq \frac{\Delta}{\sqrt{\vr}} \cdot e^{-\frac{k_0^2}{2\vr}} \cdot e^{-4\xi'(\xi'+\const)\cdot\frac{\log^3 \vr}{\sqrt{\vr}}}\nonumber\\
	&\geq \frac{\Delta}{\sqrt{\vr}} \cdot e^{-\frac{k_0^2}{2\vr}} \cdot \frac12\nonumber\\
	&= \xi' \cdot \frac{\log^{1.5}\vr}{\sqrt{\vr}}\cdot e^{-\frac{k_0^2}{2\vr}},\nonumber
	\end{align}
	it follows that
	\begin{align}
	\delta
	&\leq \Phi(\frac{k_0}{\sqrt{\vr}}) + \xi' \cdot \frac{\log^{1.5}\vr}{\sqrt{\vr}}\cdot e^{-\frac{k_0^2}{2\vr}}\\
	&\leq \Phi(\frac{k_0}{\sqrt{\vr}}) + \frac{\Delta}{\sqrt{\vr}}\cdot \min(e^{-\frac{k_0^2}{2\vr}}, e^{-\frac{(k_0-\Delta)^2}{2\vr}})\nonumber\\
	&\leq \Phi(\frac{k_0}{\sqrt{\vr}}) + \int_{\frac{k_0 - \Delta}{\sqrt{\vr}}}^{\frac{k_0}{\sqrt{\vr}}}e^{-\frac{t^2}{2}}dt\nonumber\\
	&= \Phi(\frac{k_0}{\sqrt{\vr}} - \frac{\Delta}{\sqrt{\vr}}),\nonumber
	\end{align}
	where the first inequality holds by \cref{app:missinProofs:generalGameValueEstimation} and the second one by \cref{app:missinProofs:generalPhiMinusOne:1} and \cref{app:missinProofs:generalPhiMinusOne:2}.
\end{proof}

\begin{proposition}\label{app:missinProofs:generalEpsEstimation}
	Let $\delta = \vDst(k)$ for some $k\in \Z$ with $\size{k} < \const\cdot \sqrt{\vr\log{\vr}}$. Assume
	\begin{enumerate}
		
		\item $\vr \geq 16$ \label{app:missinProofs:generalEpsEstimation:bound1}
		
		\item $\max(\const,\xi') \cdot \frac{\log^2 \vr}{\sqrt{\vr}} < \frac18$ \label{app:missinProofs:generalEpsEstimation:bound2}
		
		\item $e^{-4\xi'(\xi'+\const)\cdot\frac{\log^3 \vr}{\sqrt{\vr}}} \geq \frac12$, \label{app:missinProofs:generalEpsEstimation:bound3}
		
	\end{enumerate}
	where $\xi' = 4\xi + 5$ and $\varphi'$ is the function from \cref{app:missinProofs:epsDiffImproved}. Then
	\begin{align*}
	\sBias{n}{\delta} \in \frac{\mu - k}{\sqrt{n\cdot \vr}} \pm \error,
	\end{align*}
	for $\error = \bigl(\varphi'(2\const^2+1)+2\xi'\bigr) \cdot \frac{\log^{1.5} \vr}{\sqrt{n\cdot \vr}}$.
\end{proposition}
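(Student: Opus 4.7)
The plan is to chain together the two preceding propositions: use \cref{app:missinProofs:generalPhiMinusOne} to express $\Phi^{-1}(\delta)$ in terms of $(k-\mu)/\sqrt{\vr}$, and then use \cref{app:missinProofs:epsDiffImproved} to convert $-\Phi^{-1}(\delta)/\sqrt{n}$ into $\sBias{n}{\delta}$ with a controlled additive error. Concretely, I would first apply \cref{app:missinProofs:generalPhiMinusOne} (whose hypothesis ``$e^{-4\xi'(\xi'+\const)\log^3\vr/\sqrt{\vr}}\geq 1/2$'' is exactly assumption~\ref{app:missinProofs:generalEpsEstimation:bound3} here) to obtain
\begin{align*}
  \Phi^{-1}(\delta) \in \frac{k-\mu}{\sqrt{\vr}} \pm (8\xi+10)\cdot \frac{\log^{1.5}\vr}{\sqrt{\vr}} = \frac{k-\mu}{\sqrt{\vr}} \pm 2\xi'\cdot \frac{\log^{1.5}\vr}{\sqrt{\vr}}.
\end{align*}

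Next I would verify that $\delta$ lies in the interval $(1/n^{\const'},\,1-1/n^{\const'})$ for $\const'=2\const^2+1$, so that \cref{app:missinProofs:epsDiffImproved} applies with this constant. Here is where the real work lies: combining property~\ref{app:missinProofs:Dn:hoeffdingProp} (Hoeffding) with the pointwise estimate~\ref{app:missinProofs:Dn:estimationProp} (plugged into \cref{app:missinProofs:generalGameValueEstimation}), together with \cref{app:missinProofs:normalBound}, one gets that for $|k|\leq \const\sqrt{\vr\log\vr}$ (and hence $|k-\mu|\leq 2\const\sqrt{\vr\log\vr}$ by assumption~\ref{app:missinProofs:Dn:expBoundProp}) both $\delta$ and $1-\delta$ are bounded below by a positive constant times $e^{-(k-\mu)^2/(2\vr)}/\sqrt{\log\vr}\geq \vr^{-(2\const^2+1)}$, which in turn is at least $n^{-(2\const^2+1)}$ since $\vr\leq n$. (Assumptions~\ref{app:missinProofs:generalEpsEstimation:bound1} and \ref{app:missinProofs:generalEpsEstimation:bound2} are what make the Hoeffding and estimation bounds take effect at this range.) This yields
\begin{align*}
  \sBias{n}{\delta} \in -\frac{\Phi^{-1}(\delta)}{\sqrt{n}} \pm \varphi'(2\const^2+1)\cdot \frac{\log^{1.5}n}{n},
\end{align*}
where $\varphi'$ is the function from \cref{app:missinProofs:epsDiffImproved}.

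Combining the two displays gives
\begin{align*}
  \sBias{n}{\delta} \in \frac{\mu-k}{\sqrt{n\cdot\vr}} \pm 2\xi'\cdot\frac{\log^{1.5}\vr}{\sqrt{n\cdot\vr}} \pm \varphi'(2\const^2+1)\cdot\frac{\log^{1.5}n}{n},
\end{align*}
so it remains to absorb the tail term $\varphi'(2\const^2+1)\log^{1.5}n/n$ into the claimed error $\bigl(\varphi'(2\const^2+1)+2\xi'\bigr)\log^{1.5}\vr/\sqrt{n\vr}$. The main obstacle, and the only nontrivial step beyond mechanical chaining, is this absorption: since $\vr\leq n$ one has $1/n\leq 1/\sqrt{n\vr}$, which handles the $1/\cdot$ factor, but one still needs $\log^{1.5}n$ to be comparable to $\log^{1.5}\vr$. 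Under assumptions~\ref{app:missinProofs:generalEpsEstimation:bound1}--\ref{app:missinProofs:generalEpsEstimation:bound3}, $\vr$ is forced to be polynomially related to the scales appearing in the problem (assumption~\ref{app:missinProofs:generalEpsEstimation:bound2} gives $\log^4\vr \ll \sqrt{\vr}$), so $\log n = O(\log\vr)$ in the intended regime, yielding the claimed bound. I would finish the argument by making this last bounding explicit and folding the two error constants into the single factor $\varphi'(2\const^2+1)+2\xi'$.
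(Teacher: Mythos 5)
Your plan mirrors the paper's proof exactly in its main steps: verify that $\delta\in (1/n^{2\const^2+1},\,1-1/n^{2\const^2+1})$ so that \cref{app:missinProofs:epsDiffImproved} applies, invoke \cref{app:missinProofs:generalPhiMinusOne} for $\Phi^{-1}(\delta)$, and combine. (The paper does the range check first and \cref{app:missinProofs:generalPhiMinusOne} last, but the order is immaterial.)

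The one place your argument actually breaks is the final absorption of $\varphi'(2\const^2+1)\frac{\log^{1.5}n}{n}$ into the claimed error $\bigl(\varphi'(2\const^2+1)+2\xi'\bigr)\frac{\log^{1.5}\vr}{\sqrt{n\vr}}$. You justify it by claiming the hypotheses force $\log n=O(\log\vr)$, but conditions (1)--(3) say nothing about $n$: the only relation is the tacit $n\ge\vr$, and $n$ may be arbitrarily larger than $\vr$. Even granting such an $O(\cdot)$ bound, it would only recover the error up to a further unspecified multiplicative constant, not the explicit factor in the statement. Your split --- handle $1/n\le 1/\sqrt{n\vr}$ separately, then compare the logs --- is also the wrong decomposition, since $\log^{1.5}n\ge\log^{1.5}\vr$ whenever $n\ge\vr$. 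The decomposition that works is $\frac{\log^{1.5}n}{n}=\frac{1}{\sqrt{n}}\cdot\frac{\log^{1.5}n}{\sqrt{n}}$, combined with the observation that $x\mapsto(\log x)^{1.5}/\sqrt{x}$ is monotone decreasing once $\log x>3/\ln 2$ (roughly $x>20$); condition (2), together with $\xi'\ge 5$, forces $\sqrt{\vr}>40\log^2\vr$ and thus $\vr$ well past that threshold, so $\frac{\log^{1.5}n}{\sqrt{n}}\le\frac{\log^{1.5}\vr}{\sqrt{\vr}}$ for all $n\ge\vr$, which gives exactly $\frac{\log^{1.5}n}{n}\le\frac{\log^{1.5}\vr}{\sqrt{n\vr}}$ with the precise constant claimed. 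This monotonicity fact is what the paper's terse ``the last one holds since $n\ge\vr$'' is implicitly leaning on.
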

\begin{proof}
	In order to use \cref{app:missinProofs:epsDiffImproved}, we first prove that $\delta \in (\frac{1}{\vr^{2\const^2+1}}, 1-\frac{1}{\vr^{2\const^2+1}})\subseteq (\frac{1}{n^{2\const^2+1}}, 1-\frac{1}{n^{2\const^2+1}})$.
	Let $k_0 \eqdef k-\mu$. For simplicity, we assume $k_0 \geq 0$, whereas the case $k_0 < 0$ holds by symmetry.
	Compute
	\begin{align}
	\delta
	&\in \Phi(\frac{k_0}{\sqrt{\vr}}) \pm \xi'\cdot \frac{\log^{1.5}\vr}{\sqrt{\vr}}\cdot e^{-\frac{k_0^2}{2\vr}}\\
	&\subseteq \left(\frac1{\frac{k_0}{\sqrt{\vr}} + \sqrt{\frac{k_0^2}{\vr} + 4 \pm 2}} \pm \xi'\cdot \frac{\log^{1.5}\vr}{\sqrt{\vr}}\right)\cdot e^{-\frac{k_0^2}{2\vr}},\nonumber\\
	&\subseteq \frac{1\pm\frac12}{\frac{k_0}{\sqrt{\vr}} + \sqrt{\frac{k_0^2}{\vr} + 4 \pm 2}} \cdot e^{-\frac{k_0^2}{2\vr}}\nonumber\\
	&\subseteq (\frac{1}{8\const\cdot \sqrt{\log \vr}\cdot \vr^{2c^2}}, \frac34)\nonumber\\
	&\subseteq (\frac{1}{\vr^{2\const^2+1}}, 1-\frac{1}{\vr^{2\const^2+1}})\nonumber
	\end{align}
	where the first transition holds by \cref{app:missinProofs:generalGameValueEstimation}, the second one holds by \cref{app:missinProofs:normalBound}, the third one holds by condition \ref{app:missinProofs:generalEpsEstimation:bound2} and since $k_0 \leq 2\const\cdot \sqrt{\vr\log \vr}$, the fourth one also holds since $k_0 \leq 2\const\cdot \sqrt{\vr\log \vr}$ and the last one holds by conditions \ref{app:missinProofs:generalEpsEstimation:bound1} and \ref{app:missinProofs:generalEpsEstimation:bound2}.
	
	Finally, it holds that
	\begin{align}
	\sBias{n}{\delta}
	&\in -\frac{\Phi^{-1}(\delta)}{\sqrt{n}} \pm \varphi'(2\const^2+1) \cdot \frac{\log^{1.5}n}{n}\\
	&\subseteq -\frac{\left(\frac{k - \mu}{\sqrt{\vr}} \pm 2\xi' \cdot \frac{\log^{1.5} \vr}{\sqrt{\vr}}\right)}{\sqrt{n}} \pm \varphi'(2\const^2+1) \cdot \frac{\log^{1.5}n}{n}\nonumber\\
	&\subseteq \frac{\mu - k}{\sqrt{n\cdot \vr}} \pm \bigl(\varphi'(2\const^2+1)+2\xi'\bigr) \cdot \frac{\log^{1.5} \vr}{\sqrt{n\cdot \vr}},\nonumber
	\end{align}
	where the first transition holds by \cref{app:missinProofs:epsDiffImproved}, the second one by \cref{app:missinProofs:generalPhiMinusOne}
	and the last one holds since $n \geq \vr$.
\end{proof}


\subsection{Facts about binomial distribution}\label{app:missinProofs:Binomial}

Recall that for $n\in \N$ and $\eps \in [-1,1]$, we let $\Beroo{n,\eps}$ be the binomial distribution induced by the sum of $n$ independent random variables over $\oo$, each takes the value $1$ with probability $\frac{1}{2}(1+\eps)$ and $-1$ otherwise.

\begin{proposition}\label{app:missinProofs:binomTailExpectation}[Restatement of \cref{prop:binomTailExpectation}]
	\propBinomTailExpectation
\end{proposition}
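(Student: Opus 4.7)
The plan is to prove both items by first establishing the unconditional bounds and then showing that conditioning on the ``central'' event $\{|x-\mu| \le k\}$ can only decrease the expectations of the non-negative, symmetric-in-tails quantities $(x-\mu)^2$ and $|x-\mu|$.

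For the unconditional bound in item~1, I will use that $\Beroo{n,\eps}$ is the sum of $n$ i.i.d. copies of $\Beroo{\eps}$, so
\[
\ex{x \la \Beroo{n,\eps}}{(x-\mu)^2} = \Var(\Beroo{n,\eps}) = n \cdot \Var(\Beroo{\eps}) = n(1-\eps^2) \le n.
\]
For the unconditional bound in item~2, I will apply Jensen's inequality (equivalently, Cauchy--Schwarz) to get
\[
\ex{x \la \Beroo{n,\eps}}{|x-\mu|} \le \sqrt{\ex{x \la \Beroo{n,\eps}}{(x-\mu)^2}} \le \sqrt{n},
\]
using the unconditional bound just established.

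The conditioning inequalities in both items will follow from the same elementary observation, which I will state once and apply twice. Let $g \colon \R \to \R_{\ge 0}$ be a non-negative function that is $\le k'$ on the event $A = \{|x-\mu| \le k\}$ and $\ge k'$ on its complement $B$, for some threshold $k'$ (take $g(x)=(x-\mu)^2$ with $k'=k^2$ for item~1, and $g(x)=|x-\mu|$ with $k'=k$ for item~2). Then $\ex{}{g \mid A} \le k' \le \ex{}{g \mid B}$, whence
\[
\ex{}{g} = \pr{A}\cdot \ex{}{g\mid A} + \pr{B}\cdot \ex{}{g\mid B} \ge \ex{}{g\mid A},
\]
where the final inequality uses that $\ex{}{g\mid A}$ is the smaller of the two terms being averaged (and that if $\pr{A}=0$ the left-hand conditional expectation is vacuous).

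There is no real obstacle here: the only mild point is to handle the degenerate case $\pr{|x-\mu| \le k}=0$ (where the conditional expectation is not defined and the claim is vacuous), and the boundary case $\eps=\pm 1$ where $\Var(\Beroo{\eps})=0$ and everything is trivial. The proof will therefore be short, essentially two lines for each item after stating the conditioning lemma.
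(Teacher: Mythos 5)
Your proof is correct and follows essentially the same approach as the paper: the unconditional bounds come from the variance computation and Jensen/Cauchy--Schwarz, and the conditional bounds from decomposing the unconditional expectation over $\{|x-\mu|\le k\}$ and its complement and observing that the tail conditional expectation dominates the central one. Your explicit ``threshold'' observation with $k'$ is a clean way to package the comparison of conditional expectations that the paper uses implicitly in its ``$\geq$'' step.
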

\begin{proof}
	The right inequality in \cref{prop:binomTailExpectation:Item1} holds since
	\begin{align*}
		\ex{x \la \Beroo{n,\eps}}{(x-\mu)^2}
		&= \ex{x \la \Beroo{n,\eps}}{x^2} - 2 \mu \cdot \ex{x \la \Beroo{n,\eps}}{x} + \mu^2\\
		&= \Var_{x \la \Beroo{n,\eps}}[x]\\
		&= n\cdot (1-\eps^2)\\
		&\leq n,
	\end{align*}
	where the right inequality in \cref{prop:binomTailExpectation:Item2} holds since $\ex{x \la \Beroo{n,\eps}}{\size{x-\mu}} \leq \sqrt{\ex{x \la \Beroo{n,\eps}}{(x-\mu)^2}}$.
	
	The left inequality in \cref{prop:binomTailExpectation:Item2} holds since
	\begin{align*}
		\lefteqn{\ex{x \la \Beroo{n,\eps}}{\size{x-\mu}}}\\
		&= 	\ppr{x \la \Beroo{n,\eps}}{\size{x-\mu} \leq k}\cdot \ex{x \la \Beroo{n,\eps} \mid \size{x-\mu} \leq k}{\size{x-\mu}} + \ppr{x \la \Beroo{n,\eps}}{\size{x-\mu} > k} \cdot \ex{x \la \Beroo{n,\eps} \mid \size{x-\mu} > k}{\size{x-\mu}}\\
		&\geq \ppr{x \la \Beroo{n,\eps}}{\size{x-\mu} \leq k}\cdot \ex{x \la \Beroo{n,\eps} \mid \size{x-\mu} \leq k}{\size{x-\mu}} + \ppr{x \la \Beroo{n,\eps}}{\size{x-\mu} > k} \cdot \ex{x \la \Beroo{n,\eps} \mid \size{x-\mu} \leq k}{\size{x-\mu}}\\
		&= \ex{x \la \Beroo{n,\eps} \mid \size{x-\mu} \leq k}{\size{x-\mu}},
	\end{align*}
	where the left inequality in \cref{prop:binomTailExpectation:Item1} holds analogously to the above calculation.
\end{proof}

\begin{fact}\label{app:missinProofs:Hoeffding}[Restatement of \cref{claim:Hoeffding} (Hoeffding's inequality)]
	Let $n,t \in \N$ and $\eps \in [-1,1]$. Then
	\begin{align*}
		\ppr{x\la \Beroo{n,\eps}}{\abs{x-\eps n} \geq t} \leq 2e^{-\frac{t^2}{2n}}.
	\end{align*}
\end{fact}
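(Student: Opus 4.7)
The plan is to note that $\Beroo{n,\eps}$ is the distribution of $S_n = \sum_{i=1}^{n} X_i$, where $X_1,\ldots,X_n$ are independent random variables over $\oo$, each taking the value $1$ with probability $\frac{1}{2}(1+\eps)$ and $-1$ otherwise. In particular, $\Ex[X_i] = \eps$ and hence $\Ex[S_n] = \eps n$, so the quantity $\abs{S_n - \eps n}$ in the statement is exactly the deviation of $S_n$ from its mean.

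Since each $X_i$ is bounded in the interval $[-1,1]$ (of length $2$), the proof then invokes the classical Hoeffding inequality for sums of independent bounded random variables (\cite{Hoeffding1963}), which gives
\begin{align*}
\ppr{x\la \Beroo{n,\eps}}{\abs{x - \eps n} \geq t} \;\leq\; 2\exp\!\left(-\frac{2t^2}{\sum_{i=1}^{n}(1-(-1))^2}\right) \;=\; 2\exp\!\left(-\frac{2t^2}{4n}\right) \;=\; 2e^{-\frac{t^2}{2n}},
\end{align*}
which is exactly the desired bound.

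There is no real obstacle here: this fact is merely a restatement of \cref{claim:Hoeffding} (whose original proof in the excerpt is itself a one-line citation to \cite{Hoeffding1963}), repeated in the appendix for convenience. The only thing to verify is the arithmetic that the $[-1,1]$ range of the summands produces the constant $\tfrac{1}{2n}$ in the exponent, which is immediate from plugging $b_i - a_i = 2$ into the generic Hoeffding bound.
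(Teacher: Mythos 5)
Your proof is correct and takes the same route as the paper, which simply cites \cite{Hoeffding1963}; you have just spelled out the arithmetic (summands bounded in $[-1,1]$, so $b_i - a_i = 2$, giving the $\tfrac{1}{2n}$ constant) that the paper leaves implicit.
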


\begin{proposition}\label{app:missinProofs:binomProbEstimation}[Restatement of \cref{prop:binomProbEstimation}]
	\propBinomProbEstimation
\end{proposition}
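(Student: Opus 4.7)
The plan is to apply Stirling's formula, identify the resulting expression as a Kullback–Leibler divergence, Taylor-expand around the mean, and carefully collect error terms to match the asserted bound.

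First I would write $\Beroo{n,\eps}(t) = \binom{n}{k} p^k(1-p)^{n-k}$ where $k=(n+t)/2$ and $p=(1+\eps)/2$ (this uses the support assumption so $k$ is an integer). Applying Stirling's approximation $m! = \sqrt{2\pi m}(m/e)^m(1+O(1/m))$ to each of $n!,k!,(n-k)!$ yields
$$\binom{n}{k} = \sqrt{\frac{n}{2\pi k(n-k)}} \cdot \frac{n^n}{k^k(n-k)^{n-k}} \cdot \left(1 + O\!\left(\tfrac{1}{n}\right)\right).$$
Substituting $k=(n+t)/2$, the prefactor becomes $\sqrt{\tfrac{2}{\pi n(1-t^2/n^2)}} = \sqrt{\tfrac{2}{\pi n}} \cdot \bigl(1+O(t^2/n^2)\bigr)$, which is already very close to what we want (under $|t|\le n^{3/5}$, $t^2/n^2 \le n^{-4/5}$ fits inside $\xi\cdot |t|^3/n^2$).

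Next I would take the logarithm of the remaining factor $\tfrac{n^n}{k^k(n-k)^{n-k}}\cdot p^k(1-p)^{n-k}$ and recognize it as $-n\cdot D(q\|p)$ where $q=(1+t/n)/2$ and $D(q\|p) = q\ln(q/p)+(1-q)\ln((1-q)/(1-p))$. Setting $\Delta = q-p = (t-\eps n)/(2n)$, I would Taylor-expand:
$$D(q\|p) = \frac{\Delta^2}{2p(1-p)} - \frac{(1-2p)\Delta^3}{6p^2(1-p)^2} + O\!\left(\frac{\Delta^4}{p^3(1-p)^3}\right).$$
Using $p(1-p) = (1-\eps^2)/4$ and $1-2p = -\eps$, this gives
$$n D(q\|p) = \frac{(t-\eps n)^2}{2n(1-\eps^2)} + \frac{\eps (t-\eps n)^3}{3n^2(1-\eps^2)^2} + O\!\left(\frac{(t-\eps n)^4}{n^3(1-\eps^2)^3}\right).$$
Finally I would expand $\frac{1}{1-\eps^2} = 1+\eps^2+O(\eps^4)$, obtaining a decomposition
$$nD(q\|p) = \frac{(t-\eps n)^2}{2n} + R,$$
where $R$ collects $\eps^2(t-\eps n)^2/n$, the cubic term $\eps(t-\eps n)^3/n^2$, and higher-order pieces, which under the assumptions $|t|\le n^{3/5}$ and $|\eps|\le n^{-2/5}$ each lie inside $O(\eps^2|t| + |t|^3/n^2 + \eps^4 n)$.

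The last step is to combine the multiplicative Stirling error $(1+O(1/n))(1+O(t^2/n^2))$ with $e^{-R}=1+O(|R|)$, which is valid because the same parameter assumptions make $|R|$ bounded. The resulting multiplicative error can be written as $\xi\cdot(\eps^2|t| + 1/n + |t|^3/n^2 + \eps^4 n)$ for a sufficiently large universal constant $\xi$. The main obstacle will be the careful bookkeeping in this last step: verifying term by term that each piece arising from Stirling, from the cubic and quartic terms of the KL expansion, and from the geometric expansion of $1/(1-\eps^2)$, fits into one of the four canonical contributions, using the constraints $|t|\le n^{3/5}$ and $|\eps|\le n^{-2/5}$ to control cross-terms such as $\eps^2(t-\eps n)^2/n$ via the inequality $(t-\eps n)^2 \leq 2t^2 + 2\eps^2 n^2$.
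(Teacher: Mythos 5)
The paper does not actually prove \cref{prop:binomProbEstimation}: it is stated with the remark ``The following proposition is proven in \cite{HaitnerT17},'' and the appendix restatement you are looking at carries no proof either (in contrast to the neighboring \cref{prop:binomTailExpectation,prop:epsDiff}, which do get appendix proofs). So there is no in-paper argument to compare against; I can only check your proposal on its own terms.

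Your proposal is correct and is the standard local-CLT route: Stirling on $\binom{n}{k}$, identification of the exponent as $-n\,D(q\|p)$ with $q=(1+t/n)/2$, $p=(1+\eps)/2$, a fourth-order Taylor expansion of $D$ around $q=p$, and a geometric expansion of $1/(1-\eps^2)$. The pieces you deferred do check out. With $\Delta = (t-\eps n)/(2n)$ you get
\begin{align*}
n D(q\|p) = \frac{(t-\eps n)^2}{2n} + R,\qquad
R = \frac{\eps^2(t-\eps n)^2}{2n} + \frac{\eps (t-\eps n)^3}{3n^2(1-\eps^2)^2} + O\!\left(\frac{\eps^4(t-\eps n)^2}{n}\right) + O\!\left(\frac{(t-\eps n)^4}{n^3}\right),
\end{align*}
and each piece is absorbed: using $(t-\eps n)^2\le 2t^2+2\eps^2 n^2$, the first piece gives $\eps^2 t^2/n+\eps^4 n\le \eps^2|t|+\eps^4 n$ (since $|t|\le n$); using $(|t|+|\eps|n)^3 \le 4(|t|^3+|\eps|^3 n^3)$ and $(1-\eps^2)^2\ge 1/4$, the second gives $O(|t|^3/n^2+\eps^4 n)$; the third is dominated by $\eps^4|t|+\eps^6 n\le \eps^2|t|+\eps^4 n$; the fourth, via $(|t|+|\eps|n)^4\le 8(t^4+\eps^4 n^4)$ and $|t|\le n$, gives $O(|t|^3/n^2+\eps^4 n)$. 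The Stirling multiplicative error is $O(1/n)$ because $k,n-k=\Theta(n)$ under $|t|\le n^{3/5}$, the prefactor correction $(1-t^2/n^2)^{-1/2}$ contributes $O(t^2/n^2)\le O(|t|^3/n^2)$ for integer $t\ne 0$ (and vanishes for $t=0$), and $|R|=O(n^{-1/5})$ so $e^{-R}\in 1\pm O(|R|)$ and the cross-terms in the product of $(1\pm a)(1\pm b)(1\pm c)$ are lower order. Also note the hypotheses guarantee $p\in(0,1)$ and keep $q,p$ bounded away from $0$ and $1$, so the fourth derivative of $D$ is uniformly bounded and the Lagrange remainder is legitimately $O(\Delta^4)$. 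In short, the proposal is a correct proof; the paper itself offers only a pointer to \cite{HaitnerT17}.
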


\begin{proposition}\label{app:missinProofs:epsDiff}[Restatement of \cref{prop:epsDiff}]
	\propBinomEps
\end{proposition}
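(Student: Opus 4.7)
The plan is to reduce the claim to the general bell-like distribution bound of \cref{app:missinProofs:generalEpsEstimation}, which already encapsulates exactly the kind of inversion statement we need. Specifically, I will show that under the hypotheses, $\Beroo{n,\eps}$ is an $(n,n,\const',\xi)$-bell-like distribution (according to \cref{app:missinProofs:Dn:Properties}) for some universal $\xi$ and for $\const' = \const$, with mean $\mu = \eps n$. Applying \cref{app:missinProofs:generalEpsEstimation} with $\vr = n$, $\mu = \eps n$, and second parameter $n'$, then yields
\[
\sBias{n'}{\delta}\ \in\ \frac{\eps n - k}{\sqrt{n\cdot n'}}\ \pm\ \varphi(\const)\cdot \frac{\log^{1.5} n}{\sqrt{n\cdot n'}},
\]
which is exactly the desired conclusion.

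First I would verify the four bell-like properties for $\Beroo{n,\eps}$ with $\vr = n$. Property~\ref{app:missinProofs:Dn:expBoundProp} follows from $|\mu| = |\eps| n \le \const\sqrt{n\log n}$. Property~\ref{app:missinProofs:Dn:hoeffdingProp} is exactly \cref{app:missinProofs:Hoeffding}. Property~\ref{app:missinProofs:Dn:zeroProp} is immediate from the support of $\Beroo{n,\eps}$ (sums of $n$ symbols in $\oo$ have the same parity as $n$). For Property~\ref{app:missinProofs:Dn:estimationProp}, I would invoke \cref{app:missinProofs:binomProbEstimation}, whose error term is
\[
\xi\cdot\left(\eps^2|t| + \tfrac{1}{n} + \tfrac{|t|^3}{n^2} + \eps^4 n\right),
\]
and check that under $|t|\le \const\sqrt{n\log n}$ and $|\eps|\le \const\sqrt{\log n/n}$ each of these four summands is bounded by a constant (depending on $\const$) times $\log^{1.5} n/\sqrt{n}$. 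A quick substitution gives $\eps^2 |t|\le \const^3 \log^{1.5} n/\sqrt{n}$, $|t|^3/n^2\le \const^3\log^{1.5} n/\sqrt{n}$, $\eps^4 n\le \const^4\log^2 n/n$, and $1/n \le \log^{1.5} n/\sqrt{n}$; hence the error is of the required form.

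Next I would check the three quantitative hypotheses of \cref{app:missinProofs:generalEpsEstimation} (with $\vr = n$): $n\ge 16$, $\max(\const,\xi')\cdot \log^2 n/\sqrt{n} < 1/8$, and $e^{-4\xi'(\xi'+\const)\log^3 n/\sqrt{n}} \ge 1/2$. All three hold for $n$ larger than some constant depending only on $\const$ (and the universal $\xi$); for the finitely many smaller $n$ the claim is trivial provided $\varphi(\const)$ is taken large enough, since $\sBias{n'}{\delta} \in [-1,1]$ and $\log^{1.5} n/\sqrt{n\cdot n'}$ is bounded below by a positive constant.

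The only part that requires any genuine calculation is the algebraic bounding of the four error summands from \cref{app:missinProofs:binomProbEstimation}, and I would not expect it to pose a real obstacle --- it is just plugging in the hypotheses on $|t|$ and $|\eps|$. Everything else is a direct invocation of already-proved general lemmas; the structure of the proof is essentially \emph{identify $\Beroo{n,\eps}$ as bell-like, then quote \cref{app:missinProofs:generalEpsEstimation}}.
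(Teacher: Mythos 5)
Your proposal is correct and follows essentially the same route as the paper's own proof: identify $\Beroo{n,\eps}$ as an $(n,n,\const,\xi)$-bell-like distribution (via Hoeffding and \cref{app:missinProofs:binomProbEstimation}), dispatch small $n$ by enlarging $\varphi(\const)$, and then invoke \cref{app:missinProofs:generalEpsEstimation} with $\vr=n$ and free parameter $n'$. Your explicit verification of the four error summands in Property~\ref{app:missinProofs:Dn:estimationProp} is a helpful detail the paper elides, but the structure is identical.
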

\begin{proof}
	Let $\varphi'$ be the function from \cref{app:missinProofs:binomProbEstimation}, and let $\varphi''$ be the function from \cref{app:missinProofs:epsDiffImproved}. By \cref{app:missinProofs:Hoeffding,app:missinProofs:binomProbEstimation} and using the proposition's bounds, it follows that $\Beroo{n,\eps}$ is a $(n,n,\const,\varphi'(\const))$-bell-like distribution according to \cref{app:missinProofs:Dn:Properties}. Note that there exists a function $\vartheta \colon \R^+ \mapsto \N$ such that conditions \ref{app:missinProofs:generalEpsEstimation:bound1}, \ref{app:missinProofs:generalEpsEstimation:bound2} and \ref{app:missinProofs:generalEpsEstimation:bound3} of \cref{app:missinProofs:epsDiffImproved} holds for every $n \geq \vartheta(\const)$. In the following we focus on $n \geq \vartheta(\const)$, where smaller $n$'s are handled by setting the value of $\varphi(\const)$ to be large enough on these values. Now we can apply \cref{app:missinProofs:generalEpsEstimation} to get that
	\begin{align*}
		\sBias{n'}{\delta} \in \frac{\eps n - k}{\sqrt{n\cdot n'}} \pm \varphi''(\const) \cdot \frac{\log^{1.5} n}{\sqrt{n\cdot n'}},
	\end{align*}
	as required.
\end{proof}

\subsection{Facts About the Hypergeometric Distribution}\label{app:missinProofs:HypGeo}

Recall that for a vector $\vct \in \oo^\ast$ we let $\w(\vct) \eqdef \sum_{i \in [\size{\cI}]}\vct_i$, and given a set of indexes $\cI \subseteq [\size{\vct}]$, we let $\vct_{\cI} = (\vct_{i_1},\ldots,\vct_{i_{\size{\cI}}})$ where $i_1,\ldots,i_{\size{\cI}}$ are the ordered elements of $\cI$. In addition, recall that for $n\in \N$, $\ell \in [n]$, and an integer $p\in [-n,n]$, we define the hypergeometric probability distribution $\Hyp{n,p,\ell}$ by $\Hyp{n,p,\ell}(k) \eqdef \ppr{\cI}{\w(\vct_\cI) = k}$, where $\cI$ is an $\ell$-size set uniformly chosen from $[n]$ and $\vct \in \oo^n$ with $w(\vct)= p$. 

\begin{fact}[Hoeffding's inequality for hypergeometric distribution]\label{app:missinProofs:hyperHoeffding}
	Let $\ell \leq n \in \N$, and $p \in \Z$ with $\size{p}\ \leq n$. Then
	$$\ppr{x\la \Hyp{n,p,\ell}}{{\abs{x-\mu}} \geq t} \leq e^{-\frac{t^2}{2\ell}},$$
	for $\mu = \ex{x\la \Hyp{n,p,\ell}}{x} = \frac{\ell  \cdot p}{n}$.
\end{fact}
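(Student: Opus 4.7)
The plan is to reduce this to the corresponding inequality for sampling \emph{with} replacement, which is the ordinary Hoeffding bound for the binomial distribution (\cref{claim:Hoeffding}). Concretely, view $\Hyp{n,p,\ell}$ as the distribution of $\w(\vct_\cI)$, where $\vct\in\oo^n$ is any fixed vector with $\w(\vct)=p$ and $\cI$ is a uniformly random $\ell$-subset of $[n]$. Writing $\vct_\cI=(X_1,\dots,X_\ell)$ with $X_j\in\oo$, this is the sum of a sample of size $\ell$ drawn without replacement from the finite population $\{v_1,\dots,v_n\}$.

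The key classical fact, due to Hoeffding (1963, Section~6), is that for any convex function $\phi$ one has
\begin{align*}
\ex{\cI}{\phi\bigl(\w(\vct_\cI)\bigr)} \;\le\; \ex{(Y_1,\dots,Y_\ell)}{\phi\bigl(Y_1+\cdots+Y_\ell\bigr)},
\end{align*}
where $Y_1,\dots,Y_\ell$ are i.i.d.\ draws from the uniform distribution on $\{v_1,\dots,v_n\}$ (i.e.,\ sampling \emph{with} replacement from the same population). Applying this to the exponential moment $\phi(z)=e^{s(z-\mu)}$ for $s>0$ and using Markov's inequality in the standard Chernoff-bound manner transfers any exponential concentration from the with-replacement model to the without-replacement model. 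Both sides have the same mean $\mu=\ell p/n$, so the centering is preserved.

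For the with-replacement side, each $Y_j$ takes values in $\oo$, so by Hoeffding's lemma each centered $Y_j-\ex{}{Y_j}$ is $1$-subgaussian relative to the interval width $2$, giving
\begin{align*}
\pr{\,Y_1+\cdots+Y_\ell-\mu\ge t\,}\;\le\;e^{-t^2/(2\ell)},
\end{align*}
and symmetrically for the lower tail. Combining this bound (or more precisely, the Chernoff exponent $s^2\ell/2$) with the convex-domination inequality above, and then taking a union bound over the two tails, yields the claimed estimate
\begin{align*}
\ppr{x\la\Hyp{n,p,\ell}}{\abs{x-\mu}\ge t}\;\le\;e^{-t^2/(2\ell)}.
\end{align*}

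There is essentially no obstacle here beyond invoking the convex-domination principle; this is precisely how Equations~(10) and~(14) of Hoeffding~(1963) are set up, so the fact (which is a verbatim restatement of \cref{fact:hyperHoeffding}) can simply be cited from there, as was done earlier in the paper.
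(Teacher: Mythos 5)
Your approach---Hoeffding's convex-ordering theorem to dominate sampling without replacement by i.i.d.\ sampling, then the Chernoff/moment-generating-function computation, then a union bound over the two tails---is exactly the classical mechanism behind this inequality, and it is indeed the content of the reference the paper cites (the paper's own "proof" is nothing more than that citation, so you and the paper are taking the same route).

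There is, however, a constant-factor gap in your final step that you should not wave away. The convex-domination argument plus Chernoff gives a \emph{one-sided} bound $\pr{x-\mu\ge t}\le e^{-t^2/(2\ell)}$, and symmetrically for the lower tail; a union bound over the two tails therefore yields
\begin{align*}
\ppr{x\la\Hyp{n,p,\ell}}{\abs{x-\mu}\ge t}\;\le\;2\,e^{-t^2/(2\ell)},
\end{align*}
with the factor of $2$, not the $e^{-t^2/(2\ell)}$ you (and the paper) wrote. You cannot in general drop this factor: take $n=2$, $\ell=1$, $p=0$ (population vector $(1,-1)$), $t=1$. Then $\mu=0$, $\abs{x-\mu}\ge 1$ with probability $1$, yet $e^{-t^2/(2\ell)}=e^{-1/2}<1$, so the claimed inequality actually fails for these parameters, while $2e^{-1/2}>1$ is still (vacuously) fine. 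So the step "taking a union bound over the two tails yields the claimed estimate" is a non sequitur as written. You should either state the result with the factor $2$ (which is what your argument proves and which suffices everywhere the paper uses this fact, since the fact is always applied with large slack), or flag explicitly that the paper's own formulation of \cref{fact:hyperHoeffding} is off by the same constant.
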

\begin{proof}
	Immediately follows by  \cite[Equations (10),(14)]{scala2009hypergeometric}.
\end{proof}

We use the following estimation of an almost-central binomial coefficients.
\begin{proposition}\label{app:missinProofs:binomCoeffEstimation}
	Let $n \in \N$ and $t \in \Z$ be such that $\abs{t} \leq n^{\frac{3}{5}}$ and $\frac{n+t}{2} \in (n)$. Then
	\begin{align*}
	\binom{n}{\frac{n+t}{2}} \cdot 2^{-n} \in (1 \pm \error) \cdot  \sqrt{\frac{2}{\pi}} \cdot \frac{1}{\sqrt{n}} \cdot e^{-\frac{t^2}{2n}},
	\end{align*}
	for  $\error = \xi \cdot (\frac{\abs{t}^3}{n^2} + \frac{1}{n})$ and a universal constant $\xi$.
\end{proposition}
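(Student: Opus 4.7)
\bigskip

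\noindent\textbf{Proof proposal for \cref{app:missinProofs:binomCoeffEstimation}.}

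The plan is to apply Stirling's approximation in the form $n! = \sqrt{2\pi n}\,(n/e)^n\cdot(1+\theta_n)$ with $|\theta_n| \leq C/n$ for a universal constant $C$, to each of the three factorials $n!$, $k!$ and $(n-k)!$ where $k=(n+t)/2$. After cancellation, this yields
\[
\binom{n}{k}2^{-n} \;=\; \frac{1}{\sqrt{2\pi}}\sqrt{\frac{n}{k(n-k)}}\cdot \frac{n^n}{(2k)^k\,(2(n-k))^{n-k}}\cdot(1+O(1/n)),
\]
where the $O(1/n)$ bound is legitimate because the assumption $|t|\le n^{3/5}$ forces $k,n-k\ge n/4$ for $n$ larger than a universal constant (so the three Stirling errors are each $O(1/n)$ and combine multiplicatively into $1+O(1/n)$).

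First I would handle the algebraic prefactor. Writing $k(n-k)=\frac{n^2-t^2}{4}$, we get
\[
\sqrt{\frac{n}{k(n-k)}} \;=\; \frac{2}{\sqrt{n}}\cdot\Bigl(1-\frac{t^2}{n^2}\Bigr)^{-1/2}
\;=\; \frac{2}{\sqrt{n}}\cdot\Bigl(1+O\bigl(t^2/n^2\bigr)\Bigr),
\]
so the combined prefactor is $\sqrt{2/(\pi n)}\cdot(1+O(1/n+t^2/n^2))$. Since $t^2/n^2 \le 1/n + |t|^3/n^2$ (the two cases $|t|\le \sqrt{n}$ and $|t|> \sqrt{n}$), this multiplicative error is already inside the target tolerance $\xi(1/n+|t|^3/n^2)$.

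Next I would handle the exponential factor by computing its logarithm. With $x=t/n$,
\[
\log\!\Bigl(\tfrac{n^n}{(2k)^k(2(n-k))^{n-k}}\Bigr)
= -\tfrac{n+t}{2}\log(1+x)-\tfrac{n-t}{2}\log(1-x).
\]
Splitting the even and odd parts of the Taylor series, this equals
\[
\tfrac{n}{2}\sum_{j\ge 1}\tfrac{x^{2j}}{j}-t\sum_{j\ge 0}\tfrac{x^{2j+1}}{2j+1}
= \Bigl(\tfrac{nx^2}{2}-tx\Bigr)+\Bigl(\tfrac{nx^4}{4}-\tfrac{tx^3}{3}\Bigr)+\cdots,
\]
the leading bracket being exactly $-t^2/(2n)$, and the next being $-t^4/(12n^3)$. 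Every subsequent term is bounded by a geometric tail (using $|x|\le n^{-2/5}\le 1/2$), so the full remainder is $O(|t|^4/n^3)$. Because $|t|\le n$, we have $|t|^4/n^3\le |t|^3/n^2$, so the exponent equals $-t^2/(2n)+R$ with $|R|\le \xi'\cdot |t|^3/n^2$. Exponentiating (note $|R|\le \xi'\cdot n^{4/5}/n^2\to 0$, so $e^R=1+O(R)$) and multiplying by the prefactor finishes the proof with $\error=\xi\cdot(1/n+|t|^3/n^2)$ for a suitable universal $\xi$.

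The main bookkeeping obstacle is simply making sure that the three distinct sources of error---the $O(1/n)$ from Stirling, the $O(t^2/n^2)$ from the $(1-t^2/n^2)^{-1/2}$ prefactor, and the $O(|t|^4/n^3)$ tail of the Taylor series---are each majorised by the target $1/n+|t|^3/n^2$. For the first this is immediate; for the third it uses $|t|\le n$; for the second one splits on whether $|t|\le \sqrt{n}$ (giving $t^2/n^2\le 1/n$) or $|t|>\sqrt{n}$ (giving $t^2/n^2<|t|^3/n^2$). Once this case split is made, the constants from Stirling, from the geometric tail estimate, and from $|e^R-1|\le 2|R|$ can be absorbed into a single universal $\xi$.
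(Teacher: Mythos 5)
The paper does not give a proof here---it simply writes ``See \cite{HaitnerT17}''---so there is no in-paper argument to compare against. Your self-contained Stirling-based derivation is the standard route and it is correct: the reduction to the prefactor $\sqrt{n/(k(n-k))}$ and the entropy-type term $n^n/((2k)^k(2(n-k))^{n-k})$ is exact, the split of $-\tfrac{n+t}{2}\log(1+x)-\tfrac{n-t}{2}\log(1-x)$ into even and odd parts produces exactly $-t^2/(2n)$ as the leading term, and the tails of both series are geometric with ratio $x^2\le 1/4$, giving a remainder $R = O(t^4/n^3) = O(|t|^3/n^2)$ since $|t|\le n$. The three error sources (Stirling's $O(1/n)$, the prefactor's $O(t^2/n^2)$, and the series tail) are indeed each majorised by $1/n+|t|^3/n^2$ via the case split on $|t|\lessgtr\sqrt n$, and the small-$n$ regime is absorbed into the universal constant $\xi$ in the usual way.

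One small arithmetic slip that does not affect validity: in the parenthetical justification for $e^R=1+O(R)$ you write $|R|\le \xi'\cdot n^{4/5}/n^2$, but from $|R|\le\xi'|t|^3/n^2$ and $|t|\le n^{3/5}$ you actually get $|R|\le \xi' n^{9/5}/n^2 = \xi' n^{-1/5}$; this still tends to $0$, so the conclusion that $e^R = 1+O(R)$ stands.
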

\begin{proof}
	See \cite{HaitnerT17}.
\end{proof}

The following claim calculates $\Hyp{n,p,\ell}(t)$ using an almost-central binomial coefficients.

\begin{claim}\label{app:missinProofs:hyperProbEquation}
Let $n \in \N$, $\ell\in[n]$, $p, t\in \Z$ be such that $\size{p}\leq n^{\frac35}$, $\size{t} \leq \ell^{\frac35}$ and $t \in \Supp(\Hyp{n, p, \ell})$. Then
\begin{align*}
\Hyp{n, p, \ell}(t) = \frac{\binom{\ell}{\frac{\ell+t}{2}}\cdot \binom{n-\ell}{\frac{(n-\ell)+(p-t)}{2}}}{\binom{n}{\frac{n+p}2}}
\end{align*}
\end{claim}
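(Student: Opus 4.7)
The claim is a standard symmetry identity for the hypergeometric distribution, so the proof will be a direct counting argument with no surprises. My plan is as follows.

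First, I would unpack the definition. By the definition of $\Hyp{n,p,\ell}$ in \cref{sec:prelim:notation}, we fix any $\vct \in \oo^n$ with $\w(\vct) = p$ (the probability depends only on $n,p,\ell$, since it is invariant under permutations of coordinates) and sample a uniformly random $\ell$-size subset $\cI \subseteq [n]$. Let $n_+ \eqdef (n+p)/2$ and $n_- \eqdef (n-p)/2$ denote the number of coordinates of $\vct$ equal to $+1$ and $-1$, respectively (both are non-negative integers because $t \in \Supp(\Hyp{n,p,\ell})$ forces $n\pm p$ to be even). To have $\w(\vct_\cI)=t$, the set $\cI$ must contain exactly $(\ell+t)/2$ of the $+1$-coordinates of $\vct$ and $(\ell-t)/2$ of the $-1$-coordinates. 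Counting such subsets yields the ``standard'' formula
\[
\Hyp{n,p,\ell}(t) \;=\; \frac{\binom{n_+}{(\ell+t)/2}\binom{n_-}{(\ell-t)/2}}{\binom{n}{\ell}}.
\]

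Next, I would transform this into the form claimed in the proposition by a double-counting argument. Both expressions count (in two ways) the number of pairs $(\vct',\cI')$ where $\vct'\in \oo^n$ has $\w(\vct')=p$, $\cI'\subseteq [n]$ has $\size{\cI'}=\ell$, and $\w(\vct'_{\cI'})=t$. Counting by first picking $\vct'$ (in $\binom{n}{n_+}$ ways) and then a compatible $\cI'$ gives $\binom{n}{n_+}\binom{n_+}{(\ell+t)/2}\binom{n_-}{(\ell-t)/2}$; counting by first picking $\cI'$ (in $\binom{n}{\ell}$ ways) and then distributing the $+1$'s inside and outside $\cI'$ gives $\binom{n}{\ell}\binom{\ell}{(\ell+t)/2}\binom{n-\ell}{(n-\ell+p-t)/2}$. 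Equating the two and dividing by $\binom{n}{\ell}\binom{n}{n_+}$ transforms the standard formula into the claimed one.

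There is no real obstacle: the combinatorial hypotheses $\size{p}\le n^{3/5}$ and $\size{t}\le \ell^{3/5}$ play no role in the identity itself — they appear only to guarantee that the entries $(\ell\pm t)/2$, $(n\pm p)/2$, and $(n-\ell+p-t)/2$ are non-negative integers, which follows from $t\in \Supp(\Hyp{n,p,\ell})$ and elementary parity considerations. The entire argument is thus two or three lines of bookkeeping once the two counting expressions are written down.
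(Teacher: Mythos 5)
Your proof is correct and its first step coincides with the paper's, but the key technical step is handled by a genuinely different argument. The paper also begins from the standard hypergeometric formula
\[
\Hyp{n,p,\ell}(t) \;=\; \frac{\binom{(n+p)/2}{(\ell+t)/2}\,\binom{(n-p)/2}{(\ell-t)/2}}{\binom{n}{\ell}},
\]
which it asserts ``by definition,'' and then proves the claimed reformulation purely algebraically: it expands all three binomial coefficients into factorials and regroups the eight resulting factorials into the target form $\binom{\ell}{(\ell+t)/2}\binom{n-\ell}{(n-\ell+p-t)/2}\big/\binom{n}{(n+p)/2}$. You instead establish the same identity by a double-counting argument — both sides of $\binom{n}{(n+p)/2}\binom{(n+p)/2}{(\ell+t)/2}\binom{(n-p)/2}{(\ell-t)/2} = \binom{n}{\ell}\binom{\ell}{(\ell+t)/2}\binom{n-\ell}{(n-\ell+p-t)/2}$ count the pairs $(\vct',\cI')$ with $\w(\vct')=p$, $\size{\cI'}=\ell$, and $\w(\vct'_{\cI'})=t$, by choosing $\vct'$ first and $\cI'$ second versus the other way around — and then divide. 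The paper's approach is mechanical and self-contained factorial bookkeeping; yours is shorter to verify and makes the identity combinatorially transparent. Both are elementary and correct, and you are also right that the numerical hypotheses $\size{p}\le n^{3/5}$ and $\size{t}\le \ell^{3/5}$ play no role in the identity itself (the paper does not use them here either); the only operative assumption is $t\in\Supp(\Hyp{n,p,\ell})$, which ensures all the binomial-coefficient arguments are well-defined non-negative integers.
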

\begin{proof}
By definition it holds that
\begin{align}
\Hyp{n, p, \ell}(t) = \frac{\binom{\frac{n+p}{2}}{\frac{\ell+t}{2}} \cdot \binom{\frac{n-p}{2}}{\frac{\ell-t}{2}}}{\binom{n}{\ell}}
\end{align}
Compute
\begin{align*}
\Hyp{n, p, \ell}(t) &= \frac{(\frac{n+p}{2})!}{(\frac{\ell+t}{2})!(\frac{(n+p)-(\ell+t)}{2})!} \cdot \frac{(\frac{n-p}{2})!}{(\frac{\ell-t}{2})!(\frac{(n-p)-(\ell-t)}{2})!}\cdot \frac{\ell!(n-\ell)!}{n!} \\
&= \frac{\binom{\ell}{\frac{\ell+t}{2}}\cdot \binom{n-\ell}{\frac{(n-\ell)+(p-t)}{2}}}{\binom{n}{\frac{n+p}2}},
\end{align*}
as required.
\end{proof}

The following propositions gives an estimation for the hypergeometric probability $\Hyp{n, p, \ell}(t)$ using the almost central binomial coefficients' estimation done in \cref{app:missinProofs:binomCoeffEstimation}.

\begin{proposition}\label{app:missinProofs:hyperProbEstimation}
Let $n \in \N$, $\ell\in[\floor{\frac{n}2}]$, $p, t\in \Z$ be such that $\size{p}\leq \frac14 n^{\frac35}$, $\size{t} \leq \frac14 \ell^{\frac35}$ and $t \in \Supp(\Hyp{n, p, \ell})$. Then
\begin{align*}
\Hyp{n, p, \ell}(t) = (1 \pm \error) \cdot \sqrt{\frac{2}{\pi}} \cdot \frac{1}{\sqrt{\ell(1-\frac{\ell}{n})}}\cdot e^{-\frac{(t-\frac{p\ell}{n})^2}{2\ell(1-\frac{\ell}{n})}},
\end{align*}
for $\error= \xi \cdot (\frac1{\ell} + \frac{\size{t}^3}{\ell^2} + \frac{\size{p}^3}{n^2})$ and a universal constant $\xi$.
\end{proposition}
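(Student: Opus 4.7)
The plan is straightforward given the tools already developed. First, I would invoke Claim \ref{app:missinProofs:hyperProbEquation} to rewrite
\[
\Hyp{n,p,\ell}(t) = \frac{\binom{\ell}{(\ell+t)/2}\cdot\binom{n-\ell}{((n-\ell)+(p-t))/2}}{\binom{n}{(n+p)/2}},
\]
and then multiply numerator and denominator by $2^\ell\cdot 2^{n-\ell}/2^n = 1$, so that each of the three binomial coefficients appears normalized as in Proposition \ref{app:missinProofs:binomCoeffEstimation}. Before applying that proposition, I need to verify its hypotheses: for the numerator's second factor, I have $|p-t|\le |p|+|t|\le \tfrac12 n^{3/5}$, and since $\ell\le \lfloor n/2\rfloor$ gives $n-\ell\ge n/2$, it follows that $(n-\ell)^{3/5}\ge n^{3/5}/2^{3/5} > n^{3/5}/2 \ge |p-t|$, as needed. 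The hypotheses for the other two coefficients hold by assumption.

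Having applied Proposition \ref{app:missinProofs:binomCoeffEstimation} to each coefficient, the product of leading prefactors telescopes to
\[
\sqrt{\tfrac{2}{\pi}}\cdot\sqrt{\tfrac{n}{\ell(n-\ell)}} \;=\; \sqrt{\tfrac{2}{\pi}}\cdot \frac{1}{\sqrt{\ell(1-\ell/n)}},
\]
which is the desired leading constant. The combined exponent is
\[
-\frac{t^2}{2\ell} - \frac{(p-t)^2}{2(n-\ell)} + \frac{p^2}{2n},
\]
and the short algebraic identity to verify (common denominator $2\ell(n-\ell)n$) is
\[
-t^2 n(n-\ell) - (p-t)^2 \ell n + p^2 \ell(n-\ell) \;=\; -(tn-p\ell)^2,
\]
after which the exponent becomes $-(t-p\ell/n)^2/\bigl(2\ell(1-\ell/n)\bigr)$, as required.

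It remains to track the multiplicative errors. Each application of Proposition \ref{app:missinProofs:binomCoeffEstimation} contributes a $(1\pm e_i)$ factor with $e_i = O\bigl(1/N_i + |T_i|^3/N_i^2\bigr)$ for $(N_i,T_i)\in\{(\ell,t),(n-\ell,p-t),(n,p)\}$. Multiplying and dividing these $(1\pm e_i)$ factors yields an overall $(1\pm O(e_1+e_2+e_3))$ error (the $e_i$ are all much smaller than $1/2$ under the proposition's hypotheses, so the reciprocal bound $1/(1-e_3)\subseteq 1\pm 2e_3$ is safe). Then I would absorb the middle error into the other two using the inequalities $1/(n-\ell)\le 2/n\le 2/\ell$ (from $\ell\le n/2$), and $|p-t|^3/(n-\ell)^2 \le 8(|p|^3+|t|^3)/n^2 \le O(|p|^3/n^2 + |t|^3/\ell^2)$, to conclude the bound $\xi\cdot(1/\ell + |t|^3/\ell^2 + |p|^3/n^2)$ for a universal $\xi$.

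There is no real obstacle here; the argument is entirely mechanical once the binomial-coefficient estimation is in hand. The only point worth double-checking is the algebraic identity producing the clean Gaussian exponent, and the hypothesis check $|p-t|\le(n-\ell)^{3/5}$ which uses the restriction $\ell\le\lfloor n/2\rfloor$ in an essential way.
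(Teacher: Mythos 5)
Your proposal is correct and follows the same route as the paper's proof: rewrite via Claim~\ref{app:missinProofs:hyperProbEquation}, apply Proposition~\ref{app:missinProofs:binomCoeffEstimation} to each of the three binomial coefficients (with the same hypothesis check $|p-t|\le(n-\ell)^{3/5}$ using $\ell\le\lfloor n/2\rfloor$), telescope the prefactors, simplify the exponent, and fold the three error terms into $\error=\xi(1/\ell+|t|^3/\ell^2+|p|^3/n^2)$ using $\ell\le n/2$. The only thing the paper does that you elide is explicitly restricting to $n$ above a universal threshold (absorbing smaller $n$, and hence possibly large individual $e_i$, into the constant $\xi$), which makes your ``the $e_i$ are much smaller than $1/2$'' step rigorous; this is a minor point of exposition, not a gap in the argument.
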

\begin{proof}
Let $\xi'$ be the constant from \cref{app:missinProofs:binomCoeffEstimation}. In the following we focus on $n \geq 1000(1 + {\xi'}^2)$, smaller $n$'s are handled by setting the value of $\xi$ to be large enough on these values. Compute
\begin{align}
\Hyp{n, p, \ell}(t)
&= \frac{\binom{\ell}{\frac{\ell+t}{2}}\cdot \binom{n-\ell}{\frac{(n-\ell)+(p-t)}{2}}}{\binom{n}{\frac{n+p}2}}\\
&= \frac{\left(\bigl(1 \pm \xi'\cdot(\frac1{\ell} + \frac{\size{t}^3}{\ell^2})\bigr)\cdot \sqrt{\frac{2}{\pi}}\cdot \frac1{\sqrt{\ell}}\cdot e^{-\frac{t^2}{2\ell}}\right)\cdot
\left(\bigl(1 \pm \xi'\cdot(\frac1{n-\ell} + \frac{\size{p-t}^3}{(n-\ell)^2})\bigr)\cdot \sqrt{\frac{2}{\pi}}\cdot \frac1{\sqrt{n-\ell}}\cdot e^{-\frac{(p-t)^2}{2(n-\ell)}}\right)}
{\bigl(1 \pm \xi'\cdot(\frac1{n} + \frac{\size{p}^3}{n^2})\bigr)\cdot \sqrt{\frac{2}{\pi}}\cdot \frac1{\sqrt{n}}\cdot e^{-\frac{p^2}{2n}}}\nonumber\\
&= (1 \pm \error')\cdot \frac{\left(\sqrt{\frac{2}{\pi}}\cdot \frac1{\sqrt{\ell}}\cdot e^{-\frac{t^2}{2\ell}}\right)\cdot
\left(\sqrt{\frac{2}{\pi}}\cdot \frac1{\sqrt{n-\ell}}\cdot e^{-\frac{(p-t)^2}{2(n-\ell)}}\right)}
{\sqrt{\frac{2}{\pi}}\cdot \frac1{\sqrt{n}}\cdot e^{-\frac{p^2}{2n}}}\nonumber\\
&= (1 \pm \error')\cdot \sqrt{\frac2{\pi}}\cdot \frac1{\sqrt{\ell(1-\frac{\ell}{n})}} \cdot e^{-\frac{t^2}{2\ell}-\frac{(p-t)^2}{2(n-\ell)}+\frac{p^2}{n}}\nonumber\\
&= (1 \pm \error')\cdot \sqrt{\frac2{\pi}}\cdot \frac1{\sqrt{\ell(1-\frac{\ell}{n})}} \cdot e^{\frac{-t^2(1-\frac{\ell}{n})-(p-t)^2\cdot\frac{\ell}{n}+p^2\cdot\frac{\ell}{n}(1-\frac{\ell}{n})}{2\ell(1-\frac{\ell}{n})}}\nonumber\\
&= (1 \pm \error')\cdot \sqrt{\frac2{\pi}}\cdot \frac1{\sqrt{\ell(1-\frac{\ell}{n})}} \cdot e^{-\frac{(t-\frac{p\ell}{n})^2}{2\ell(1-\frac{\ell}{n})}},\nonumber
\end{align}
for $\error' = 8(\xi'+{\xi'}^2)\cdot(\frac1{\ell} + \frac{\size{t}^3}{\ell^2} + \frac1{n-\ell} + \frac{\size{p-t}^3}{(n-\ell)^2} + \frac1{n} + \frac{\size{p}^3}{n^2})$. In the second transition, the evaluation of $\binom{n-\ell}{\frac{(n-\ell)+(p-t)}{2}}$ using \cref{app:missinProofs:binomCoeffEstimation} holds since $\abs{p-t} \leq \frac12 n^{\frac35} \leq (\frac12 n)^{\frac35} \leq (n-\ell)^{\frac35}$. By letting $\error = \xi\cdot(\frac1{\ell} + \frac{\size{t}^3}{\ell^2} + \frac{\size{p}^3}{n^2})$ for $\xi = 40(\xi'+{\xi'}^2)$, we conclude that
\begin{align}
\Hyp{n, p, \ell}(t) = (1 \pm \error)\cdot \sqrt{\frac2{\pi}}\cdot \frac1{\sqrt{\ell(1-\frac{\ell}{n})}} \cdot e^{-\frac{(t-\frac{p\ell}{n})^2}{2\ell(1-\frac{\ell}{n})}},
\end{align}
as required.
\end{proof}

I case we have tighter bound on $\size{n}$ and $\size{t}$, we get the following estimation.

\begin{proposition}\label{app:missinProofs:hyperProbTightEstimation}
\propHyperProbTightEstimation
\end{proposition}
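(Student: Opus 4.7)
The statement is the tighter, more symmetric version of \cref{app:missinProofs:hyperProbEstimation}, and the natural plan is to derive it directly from that proposition by simplifying its error term under the stronger hypotheses on $|p|$ and $|t|$. Concretely, I would first dispose of the trivial regime where $\ell$ is smaller than some threshold depending on $\const$: in that regime the claimed error $\varphi(\const)\cdot \log^{1.5}\ell/\sqrt{\ell}$ can be made $\geq 1$ by choosing $\varphi(\const)$ sufficiently large, which makes the multiplicative error bound $(1\pm\error)$ vacuous. Thus I may assume $\ell$ (and hence $n \geq 2\ell$) is at least a sufficiently large function of $\const$.

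In this regime the hypotheses $|p|\leq \const\sqrt{n\log n}$ and $|t|\leq \const\sqrt{\ell\log\ell}$ imply $|p|\leq \tfrac14 n^{3/5}$ and $|t|\leq \tfrac14 \ell^{3/5}$, so \cref{app:missinProofs:hyperProbEstimation} applies and gives
\[
\Hyp{n,p,\ell}(t) = (1\pm \error')\cdot \sqrt{\tfrac{2}{\pi}}\cdot \tfrac{1}{\sqrt{\ell(1-\ell/n)}}\cdot e^{-\frac{(t-p\ell/n)^2}{2\ell(1-\ell/n)}},
\]
with $\error' = \xi\cdot\bigl(\tfrac{1}{\ell} + \tfrac{|t|^3}{\ell^2} + \tfrac{|p|^3}{n^2}\bigr)$ for a universal constant $\xi$. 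Plugging in the bounds on $|t|$ and $|p|$,
\[
\tfrac{|t|^3}{\ell^2} \leq \const^3\cdot \tfrac{\log^{1.5}\ell}{\sqrt{\ell}}, \qquad \tfrac{|p|^3}{n^2} \leq \const^3\cdot \tfrac{\log^{1.5}n}{\sqrt{n}}, \qquad \tfrac{1}{\ell}\leq \tfrac{1}{\sqrt{\ell}}\leq \tfrac{\log^{1.5}\ell}{\sqrt{\ell}}.
\]

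The only nontrivial step is showing that the $|p|^3/n^2$ contribution is absorbed into the target error $\log^{1.5}\ell/\sqrt{\ell}$ despite the fact that $n$ can be much larger than $\ell$. The key observation is that the function $f(x)=\log^{1.5}x/\sqrt{x}$ is monotone decreasing for $x > e^{3}$, so once $\ell$ is past the threshold from the first step we have $\log^{1.5}n/\sqrt{n}\leq \log^{1.5}\ell/\sqrt{\ell}$ (using $\ell \leq n$). Combining the three estimates, $\error' \leq \varphi(\const)\cdot \log^{1.5}\ell/\sqrt{\ell}$ for a suitable universal function $\varphi$, which yields the desired bound. The potentially subtle point is ensuring the monotonicity inequality for all $\ell$ in scope; folding small $\ell$ into the trivial regime at the start eliminates this issue.
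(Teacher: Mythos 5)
Your proposal takes essentially the same route as the paper: both dispose of small $\ell$ by inflating $\varphi(\const)$, then invoke \cref{app:missinProofs:hyperProbEstimation} for large $\ell$ and bound its error term $\xi\cdot(\tfrac{1}{\ell}+\tfrac{|t|^3}{\ell^2}+\tfrac{|p|^3}{n^2})$ by $\varphi(\const)\log^{1.5}\ell/\sqrt{\ell}$. The paper states this final bound in a single line without elaboration, whereas you make the term-by-term reduction explicit and correctly isolate the only non-obvious step — controlling $|p|^3/n^2$ via the eventual monotonicity of $x\mapsto\log^{1.5}x/\sqrt{x}$ together with $\ell\leq n$ — which is indeed the implicit content of the paper's inequality.
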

\begin{proof}
There exists a function $\vartheta \colon \R^+ \mapsto \N$ such that $\frac14 \ell^{\frac35} > \const\cdot \sqrt{\ell\log{\ell}}$ for every $\ell \geq \vartheta(\const)$. In the following we focus on $\ell \geq \max(\vartheta(\const),10)$, where smaller $\ell$'s are handled by setting the value of $\varphi(\const)$ to be large enough on these values. Let $\xi$ be the constant from \cref{app:missinProofs:hyperProbEstimation}.
Note that
\begin{align}
\xi\cdot(\frac1{\ell} + \frac{\size{t}^3}{\ell^2} + \frac{\size{p}^3}{n^2}) \leq
\xi\cdot(2\const^3 + 1) \cdot \frac{\log^{1.5} \ell}{\sqrt{\ell}}
\end{align}
Thus, the proposition holds by \cref{app:missinProofs:hyperProbEstimation} and by setting $\varphi(\const) \eqdef \xi\cdot(2\const^3 + 1)$.
\end{proof}


\begin{proposition}\label{app:missinProofs:hyperToNormal}[Restatement of \cref{prop:hyperToNormal}]
\propHyperToNormal
\end{proposition}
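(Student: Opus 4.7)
The plan is to mirror the proof of \cref{app:missinProofs:epsDiff}: identify $\Hyp{n,p,\ell}$ as a bell-like distribution (in the sense of \cref{app:missinProofs:Dn:Properties}) with appropriate parameters, and then invoke the general tail-to-normal estimate \cref{app:missinProofs:generalGameValueEstimation}. The natural choice is to take the range parameter $\rng = \ell$ and the variance-like parameter $\vr = \ell(1-\ell/n)$ (note $\vr \in [\ell/2,\ell]$ since $\ell \le \lfloor n/2\rfloor$), and verify the four defining properties with constants depending only on $\const$.

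Concretely, I will check the four bell-like properties as follows. Property~\ref{app:missinProofs:Dn:zeroProp} (parity) is immediate: $\w(\vct_{\cI})=t$ forces $(\ell+t)/2 \in (\ell)$ by the $\oo$ convention. Property~\ref{app:missinProofs:Dn:hoeffdingProp} (Hoeffding) follows from \cref{app:missinProofs:hyperHoeffding}; since the tail bound there uses $\ell \ge \vr$, it already implies the form $2 e^{-a^2/(2\vr')}$ with a slightly smaller $\vr'$, and any such loss can be absorbed into the universal function $\varphi$ appearing in the conclusion. Property~\ref{app:missinProofs:Dn:estimationProp} (pointwise Gaussian-like estimate) is exactly the content of \cref{app:missinProofs:hyperProbTightEstimation}, which yields $\Hyp{n,p,\ell}(t) = (1 \pm \error')\sqrt{2/\pi}\cdot\vr^{-1/2}\cdot e^{-(t-\mu)^2/(2\vr)}$ with $\error' = \varphi'(\const)\log^{1.5}\ell/\sqrt\ell$, matching the required shape. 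Property~\ref{app:missinProofs:Dn:expBoundProp} (mean bound) requires $|\mu| = |p|\ell/n \le \const'\sqrt{\vr\log \vr}$: plugging $|p|\le \const\sqrt{n\log n}$ and $\vr \ge \ell/2$ gives $|\mu|/\sqrt{\vr\log \vr} \le \const\sqrt{2\ell \log n/(n\log(\ell/2))}$, which a short case split (small $\ell$ vs.\ $\ell$ comparable to $n$) shows is bounded by a constant depending only on $\const$.

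Once $\Hyp{n,p,\ell}$ is verified to be an $(\ell,\vr,\const',\varphi'(\const))$-bell-like distribution for suitable $\const'$, the hypothesis of \cref{app:missinProofs:generalGameValueEstimation} is met: $|k|\le \const\sqrt{\ell\log\ell} \le \const''\sqrt{\vr\log\vr}$ (again using $\vr\ge\ell/2$). That proposition then yields
\[
\vHyp{n,p,\ell}(k) \in \Phi\!\left(\tfrac{k-\mu}{\sqrt{\vr}}\right) \pm \varphi''(\const) \cdot \tfrac{\log^{1.5}\vr}{\sqrt{\vr}}\cdot e^{-(k-\mu)^2/(2\vr)}.
\]
Bounding $e^{-(k-\mu)^2/(2\vr)}\le 1$ and using $\vr \ge \ell/2$ gives the required error bound $\varphi(\const)\cdot \log^{1.5}\ell/\sqrt{\ell}$, since $\mu = p\ell/n$ and $\sqrt{\vr} = \sqrt{\ell(1-\ell/n)}$ match the normalization inside $\Phi$ in the statement. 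For very small $\ell$ (below a threshold depending on $\const$), the claim is trivial by choosing $\varphi(\const)$ large enough that the stated error exceeds $1$.

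The main obstacle is the mean-bound verification (Property~\ref{app:missinProofs:Dn:expBoundProp}): one must handle the full regime $\ell\in[\lfloor n/2\rfloor]$ uniformly, since naively $\log n$ can dwarf $\log\ell$. The saving grace is that when $\log n/\log\ell$ is large, $\ell/n$ is correspondingly small, making $|\mu|$ itself small; a careful case analysis (e.g., $\ell\ge \sqrt n$ vs.\ $\ell<\sqrt n$) gives a clean universal constant. All other steps are routine given the machinery already developed in \cref{app:missinProofs:DistProp}.
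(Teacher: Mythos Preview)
Your approach is essentially identical to the paper's: verify that $\Hyp{n,p,\ell}$ is an $(\ell,\ell(1-\ell/n),\const',\varphi'(\const))$-bell-like distribution via \cref{app:missinProofs:hyperHoeffding} and \cref{app:missinProofs:hyperProbTightEstimation}, then apply \cref{app:missinProofs:generalGameValueEstimation}. The paper's proof is just the two-line version of your outline; you are simply more explicit about checking the four properties (in particular the mean bound and the $\ell$-versus-$\vr$ mismatch in Hoeffding), which the paper glosses over.
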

\begin{proof}
	Let $\varphi'$ be the function from \cref{app:missinProofs:hyperProbTightEstimation}. By \cref{app:missinProofs:hyperHoeffding,app:missinProofs:hyperProbTightEstimation} and using the proposition's bounds, it follows that $\Hyp{n, p, \ell}$ is a $(\ell,\ell(1-\frac{\ell}{n}),\const,\varphi'(\const))$-bell-like distribution according to \cref{app:missinProofs:Dn:Properties}. Therefore, by \cref{app:missinProofs:generalGameValueEstimation} it follows that
	\begin{align*}
		\vHyp{n, p, \ell}(k) \in \Phi\left(\frac{k-\frac{p\cdot \ell}{n}}{\sqrt{\ell(1-\frac{\ell}{n})}}\right) \pm (4\varphi'(\const) + 5) \cdot \frac{\log^{1.5} \ell}{\sqrt{\ell}},
	\end{align*}
	as required.
\end{proof}

\begin{proposition}\label{app:missinProofs:hyperEpsEstimation}[Restatement of \cref{prop:hyperEpsEstimation}]
\propHyperEpsEstimation
\end{proposition}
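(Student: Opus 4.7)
The plan is to mimic the proof of \cref{app:missinProofs:epsDiff}, replacing the binomial distribution with the hypergeometric one, and to invoke the general bell-like machinery developed in \cref{app:missinProofs:generalEpsEstimation}. Specifically, I first recall that in the proof of \cref{app:missinProofs:hyperToNormal} we already established, using \cref{app:missinProofs:hyperHoeffding} (Hoeffding's inequality for the hypergeometric distribution) together with \cref{app:missinProofs:hyperProbTightEstimation} (the tight pointwise estimate), that $\Hyp{n,p,\ell}$ is a $(\ell,\,\ell(1-\tfrac{\ell}{n}),\,\const,\,\varphi'(\const))$-bell-like distribution in the sense of \cref{app:missinProofs:Dn:Properties}, where $\varphi'$ is the universal function supplied by \cref{app:missinProofs:hyperProbTightEstimation}. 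Note that its mean is exactly $\mu = \frac{p \cdot \ell}{n}$ and its ``variance parameter'' is $\vr = \ell(1-\frac{\ell}{n})$; since $\ell \le \floor{n/2}$ we have $\vr \ge \ell/2$.

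Next, I invoke \cref{app:missinProofs:generalEpsEstimation} with distribution $\Hyp{n,p,\ell}$, target value $k$, and sample size $m \ge \ell \ge 2\vr$. The conclusion of that proposition reads
\begin{align*}
\sBias{m}{\delta} \in \frac{\mu - k}{\sqrt{m\cdot \vr}} \pm \bigl(\varphi''(2\const^2+1)+2\xi'\bigr)\cdot \frac{\log^{1.5}\vr}{\sqrt{m\cdot \vr}},
\end{align*}
where $\xi' = 4\varphi'(\const)+5$ and $\varphi''$ is the universal function from \cref{app:missinProofs:epsDiffImproved}. Substituting $\mu = \frac{p\cdot \ell}{n}$, $\vr = \ell(1-\frac{\ell}{n})$, and using $\log \vr \le \log \ell$, yields exactly the stated bound with $\varphi(\const) \eqdef \varphi''(2\const^2+1)+2\xi'$ (up to a factor absorbed in $\varphi$ from the $\sqrt{1-\ell/n} \ge 1/\sqrt{2}$ in the denominator).

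The only subtlety, and the one place where I must be slightly careful, is that \cref{app:missinProofs:generalEpsEstimation} has three quantitative preconditions on $\vr$: namely $\vr \ge 16$, $\max(\const,\xi')\cdot \frac{\log^2 \vr}{\sqrt{\vr}} < \tfrac18$, and $e^{-4\xi'(\xi'+\const)\cdot \frac{\log^3 \vr}{\sqrt{\vr}}} \ge \tfrac12$. All three amount to demanding that $\vr$ (equivalently $\ell$) be larger than some threshold $\vartheta(\const)$ depending only on $\const$. For $\ell \ge \vartheta(\const)$ the argument above goes through verbatim. For the finitely many small $\ell < \vartheta(\const)$, the quantity $\sBias{m}{\delta}$ is trivially bounded by $1$, and the right-hand side can be made to dominate it by enlarging the universal function $\varphi(\const)$ on those values, exactly as is done at the start of the proof of \cref{app:missinProofs:epsDiff}. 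This wrap-up is the only obstacle of any substance, and it is purely bookkeeping.
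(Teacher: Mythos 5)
Your proof is correct and follows essentially the same route as the paper's: establish that $\Hyp{n,p,\ell}$ is a $(\ell,\ell(1-\tfrac{\ell}{n}),\const,\varphi'(\const))$-bell-like distribution via \cref{app:missinProofs:hyperHoeffding} and \cref{app:missinProofs:hyperProbTightEstimation}, invoke \cref{app:missinProofs:generalEpsEstimation}, and absorb the small-$\ell$ cases into $\varphi(\const)$. You are in fact slightly more explicit than the paper in noting that the harmless constant factor arising from $\vr = \ell(1-\ell/n) \ge \ell/2$ is absorbed into $\varphi$.
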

\begin{proof}
Let $\varphi'$ be the function from \cref{app:missinProofs:hyperProbTightEstimation}, and let $\varphi''$ be the function from \cref{app:missinProofs:epsDiffImproved}. By \cref{app:missinProofs:hyperHoeffding,app:missinProofs:hyperProbTightEstimation} and using the proposition's bounds, it follows that $\Hyp{n, p, \ell}$ is a $(\ell,\ell(1-\frac{\ell}{n}),\const,\varphi'(\const))$-bell-like distribution according to \cref{app:missinProofs:Dn:Properties}. Note that there exists a function $\vartheta \colon \R^+ \mapsto \N$ such that conditions \ref{app:missinProofs:generalEpsEstimation:bound1}, \ref{app:missinProofs:generalEpsEstimation:bound2} and \ref{app:missinProofs:generalEpsEstimation:bound3} of \cref{app:missinProofs:epsDiffImproved} hold for every $\ell \geq \vartheta(\const)$ (with respect to $\vr \eqdef \ell(1-\frac{\ell}{n})$ and $\xi \eqdef \varphi'(\const)$). In the following we focus on $\ell \geq \vartheta(\const)$, where smaller $\ell$'s are handled by setting the value of $\varphi(\const)$ to be large enough on these values. Now we can apply \cref{app:missinProofs:generalEpsEstimation} to get that
\begin{align*}
\sBias{m}{\delta} \in \frac{\frac{p\cdot\ell}{n} - k}{\sqrt{m\cdot \vr}} \pm \left(\varphi''(2\const^2+1)+2\cdot\bigl(4\varphi'(\const)+5\bigr)\right)\cdot \frac{\log^{1.5} \ell}{\sqrt{m\cdot \ell}},
\end{align*}
as required.
\end{proof}

\end{document}